
%



\documentclass[a4paper,twoside]{ociamthesis}




\usepackage{diagrams} 
\usepackage{amsthm} 
\usepackage{soul} 
\usepackage{bussproofs} 
\usepackage{cmll} 
\usepackage{wasysym} 
\usepackage{subcaption} 
\usepackage{pifont} 


\usepackage[style=numeric-comp, sorting=none, backend=biber, doi=false, isbn=false]{biblatex}
\newcommand*{\bibtitle}{References}



\addbibresource{references.bib}


\title{In Search of Effectful Dependent Types}
\author{Matthijs V\'ak\'ar}
\college{Magdalen College}


\degree{Doctor of Philosophy}
\degreedate{Trinity 2017}



\newcommand{\vect}[1]{\boldsymbol{#1}}

\newcommand{\txt}[1]{\quad\textnormal{#1}\quad}

\newcommand{\p}[0]{\mathcal{P}}

\newcommand{\R}[0]{\mathbb{R}}

\newcommand{\N}[0]{\mathbb{N}}
\newcommand{\B}[0]{\mathbb{B}}
\newcommand{\funim}[0]{\mathsf{im}}

\newcommand{\lbi}[3]{\mathsf{let}\;#1\;\mathsf{be}\;#2\;\mathsf{in}\;#3}
\newcommand{\nf}[1]{#1_{\mathsf{nf}}}
\newcommand{\nnf}[1]{#1_{!\mathsf{nf}}}

\long\def\symbolfootnote[#1]#2{\begingroup%
\def\thefootnote{\fnsymbol{footnote}}\footnote[#1]{#2}\endgroup}
\newcommand{\exeq}[0]{\stackrel{!}{=}}
\newcommand{\ra}[1]{\stackrel{#1}{\longrightarrow}}

\newarrow{Squiggle}{dash}{dash}{dash}{dash}{O}
\newcommand{\sem}[2][M\!,g]{ [\![ #2 ]\!]^{}}

\newcommand{\ct}[1]{\underline{#1}}
\newcommand{\return}[0]{\mathsf{return}\;}
\newcommand{\sipm}[4]{\mathsf{pm}\;#1\;\mathsf{as}\;\langle #2,#3\rangle\;\mathsf{in}\;#4}
\newcommand{\upm}[2]{\mathsf{pm}\;#1\;\mathsf{as}\;\langle \rangle\;\mathsf{in}\;#2}
\newcommand{\force}[0]{\mathsf{force}\;}
\newcommand{\thunk}[0]{\mathsf{thunk}\;}
\newcommand{\toin}[3]{#1\;\mathsf{to}\;#2\;\mathsf{in}\;#3}

\newcommand{\Bcat}[0]{\mathcal{B}}
\newcommand{\Ccat}[0]{\mathcal{C}}
\newcommand{\Dcat}[0]{\mathcal{D}}
\newcommand{\proj}[2]{\mathbf{p}_{#1,#2}}
\newcommand{\diag}[2]{\mathsf{diag}_{#1,#2}}
\newcommand{\qu}[2]{\mathbf{q}_{#1,#2}}
\newcommand{\Set}[0]{\mathsf{Set}}
\newcommand{\diagv}[2]{\mathbf{v}_{#1,#2}}
\newcommand{\id}[0]{\mathsf{id}}
\newcommand{\der}[0]{\mathsf{der}}
\newcommand{\substsf}[0]{\mathsf{subst}}
\newcommand{\Sub}[0]{\mathsf{Sub}}
\newcommand{\odd}[0]{\mathsf{odd}}
\newcommand{\even}[0]{\mathsf{even}}
\newcommand{\just}[0]{\mathsf{just}}
\newcommand{\Ty}[0]{\mathsf{Ty}}
\newcommand{\Tm}[0]{\mathsf{Tm}}

\newcommand{\Cat}[0]{\mathsf{Cat}}
\newcommand{\SMCat}[0]{\mathsf{SMCat}}
\newcommand{\Id}[0]{\mathsf{Id}}
\newcommand{\self}[0]{\mathsf{self}}
\newcommand{\frob}[1]{\mathsf{frob}(#1)}
\newcommand{\fst}[0]{\mathsf{fst}\;}
\newcommand{\snd}[0]{\mathsf{snd}\;}
\newcommand{\inl}[0]{\mathsf{inl}}
\newcommand{\inr}[0]{\mathsf{inr}}
\newcommand{\refl}[1]{\mathsf{refl}(#1)}
\newcommand{\elsetext}[0]{\mathsf{else}}
\newcommand{\case}[0]{\mathsf{case}}
\newcommand{\type}[0]{\;\;\mathsf{type}}
\newcommand{\vtype}[0]{\;\;\mathsf{vtype}}
\newcommand{\ctype}[0]{\;\;\mathsf{ctype}}
\newcommand{\idpm}[3]{\mathsf{pm}\;#1\;\mathsf{as}\;(\refl #2)\; \mathsf{in}\;#3}
\newcommand{\Fam}[0]{\mathsf{Fam}}
\newcommand{\nil}[0]{\mathsf{nil\;}}
\newcommand{\tr}[0]{\mathsf{tr}\;}
\newcommand{\ctxt}[0]{\;\;\mathsf{ctxt}}
\newcommand{\Ctxt}[0]{\mathsf{Ctxt}}
\newcommand{\fin}[0]{\mathsf{fin}}
\newcommand{\print}[1]{\mathsf{print}\; #1\;.\; }
\newcommand{\diverge}[0]{\mathsf{diverge}\;}
\newcommand{\nondet}[2]{\mathsf{choose}_{#1}(#2)}
\newcommand{\readcell}[2]{\mathsf{readto}_{#1}(#2)}
\newcommand{\writecell}[1]{\mathsf{write}\;#1\;.\;}
\newcommand{\error}[1]{\mathsf{error\;#1\;}}

\newcommand{\csigma}[2]{\Sigma_{F(#1)}^\otimes #2}
\newcommand{\cpi}[2]{\Pi_{F(#1)}^\multimap #2}
\newcommand{\homtype}[0]{\stackrel{U}{\multimap}}
\newcommand{\functype}[0]{{\;}_F\hspace{-3pt}\multimap}

\newcommand{\BInd}[0]{{\mathbb{B}_*}}
\newcommand{\NInd}[0]{{\mathbb{N}_*}}
\newcommand{\ttt}[0]{\mathsf{tt}}
\newcommand{\fff}[0]{\mathsf{ff}}
\newcommand{\xmark}{\ding{55}}
\newcommand{\cmark}{\ding{51}}
\newcommand{\Gamecat}{\mathsf{Game}}
\newcommand{\DGame}{\mathsf{DGame}}
\newcommand{\str}{\mathsf{str}}

\newcommand{\ob}{\mathsf{ob}}

\renewcommand{\emph}[1]{\textbf{#1}} 
\newcommand{\Osat}[0]{\textnormal{\textsf{O}-\textsf{sat}}}
\newcommand{\tot}[1]{\smiley (#1)}
\newcommand{\win}[1]{W_{#1}}
\newcommand{\lltensor}{\otimes}
\newcommand{\linimpl}{\multimap}
\newcommand{\llwith}{\&}

\newcommand{\Inf}[1]{P_{#1}^{\infty}}

\newcommand{\allwin}{{\tt AllWin}}
\newcommand{\nowin}{{\tt NoWin}}
\newcommand{\HoTT}{\textrm{HoTT}}
\newcommand{\PCF}{\textrm{PCF}}
\newcommand{\DTT}{\textrm{DTT}}
\newcommand{\STT}{\textrm{STT}}
\newcommand{\DTTGame}{\mathsf{DTT}_{\mathsf{CBN}}}
\newcommand{\STTGame}{\mathsf{STT}_{\mathsf{CBN}}}
\newcommand{\op}[1]{\mathsf{op}(#1)}
\newcommand{\Coh}[0]{\mathsf{Coh}}
\newcommand{\Stable}[0]{\mathsf{Stable}}
\newcommand{\cliques}[0]{\mathsf{cliques}}
\newcommand{\fun}[0]{\mathsf{fun}}
\newcommand{\trace}[0]{\mathsf{trace}}
\newcommand{\fincliques}[0]{\mathsf{cliques}_{\mathsf{fin}}}
\newcommand{\Ida}{\mathsf{Id}^{\&}}
\newcommand{\Idm}{\mathsf{Id}^{\otimes}}
\newcommand{\Sa}{\Sigma^{\&}}
\newcommand{\Sm}{\Sigma^{\otimes}}
\newcommand{\J}{\mathsf{j}}
\newcommand{\reflind}{\resizebox{0.85\width}{!}{$\mathtt{refl}$}}
\newcommand{\Ref}{\mathsf{Ref}}
\newcommand{\deref}{\mathsf{deref}}
\newcommand{\assign}{\mathsf{assign}}
\newcommand{\New}{\mathsf{new}}
\newcommand{\done}{\mathsf{done}}
\newcommand{\Write}{\mathsf{write}}
\newcommand{\Read}{\mathsf{read}}
\newcommand{\Or}{\mathsf{or}}
\newcommand{\cell}{\mathsf{cell}}
\newcommand{\oracle}{\mathsf{oracle}}
\newarrow{Partial}----{harpoon}

\theoremstyle{plain}
\newtheorem{theorem}{Theorem}[section]
\newtheorem{definition}[theorem]{Definition}

\newtheorem{remark}[theorem]{Remark}
\newtheorem*{claim*}{Claim}
\newtheorem{corollary}[theorem]{Corollary}

\newtheorem{lemma}[theorem]{Lemma}

\begin{document}

\setlength{\textbaselineskip}{22pt plus2pt}

\setlength{\frontmatterbaselineskip}{17pt plus1pt minus1pt}

\setlength{\baselineskip}{\textbaselineskip}


\setcounter{secnumdepth}{3}
\setcounter{tocdepth}{2}


\begin{romanpages}

\maketitle

\begin{dedication}
To my parents,\\
\textit{Hilde and L\'aszl\'o},\\
\hspace{10pt}\;\\
and grandparents,\\
\textit{G\'e, Dick, Imi and L\'aszl\'o},\\
\hspace{10pt}\;\\
for their extraordinary selflessness.\\
\begin{savequote}[8cm]
Time's fun when you're having flies.
  \qauthor{--- Kermit the Frog}
\end{savequote}
\end{dedication}

\begin{acknowledgements}
 	\subsection*{Personal}
This thesis is the result of many years of academic and social support from far more people than I could list here. 

I would like to thank Samson Abramsky for giving me the fantastic and unique opportunity of almost limitless freedom to pursue my academic interests under his guidance and encouragement for the past years and for sharing his experience and wisdom on how to navigate academic life whenever I needed advice. In addition to Samson, I've had the joy of having Radha Jagadeesan as a collaborator and, effectively but unofficially, cosupervisor. His optimism and excitement about our project as well as his kindness as a person were a huge source of support for me. I would like to point out that I (acting as primary author) was lucky to produce the work on coherence space and game semantics for dependent types included in this thesis in collaboration with both Radha and Samson.

I am very thankful to Nick Benton for taking me on as an intern at Microsoft Research Cambridge, despite my theoretical background, and for patiently and in a fun way teaching me so much about practical computer science and software engineering. My thanks go out also to my examiners, Aleks Kissinger, Kobi Kremnitzer, {Guy McCusker}, Luke Ong and Sam Staton\mccorrect{,} as well as the numerous anonymous conference and journal referees who read my work and provided impressively precise and useful feedback. I learned a lot from visiting Bath, Birmingham, Bristol and Paris, which I owe to Fanny He, Neel Krishnaswami, James Ladyman and Alexis Saurin. I would further like to thank Hongseok Yang and Paul Levy for the discussions on programming language theory and Urs Schreiber for explaining to me his thoughts on linear dependent type theory. Further, it's been lots of fun and a great learning experience to get to collaborate (and lift weights!) with my good friend Neil Dhir.

More broadly, I am thankful to the Departments of Computer Science -- particularly the Quantum Group --, of Statistics and of Engineering at the University of Oxford for providing such a fascinating and friendly academic environment. Particularly, my experience in Oxford would have been much less enjoyable and educational if it hadn't been for the many discussions of logic over coffee with Alex Kavvos, Kohei Kishida and Norihiro Yamada. I am grateful to Destiny Chen and Julie Sheppard for their spectacular administrative support and \mccorrect{for} making sure I never failed to leave their offices with a smile. I am highly indebted to my friendly colleagues at MSR Cambridge, like Jonathan Balkind, Tony Hoare and Claudio Russo, who made my time there a real treat.

Before my time in Oxford, I was very fortunate to be inspired and supported in my ambition to pursue a doctorate, by many of the excellent professors I've been lucky enough to have been taught by. In particular, I am very thankful to Heinz Han{\ss}mann, Jan Hogendijk, Peter Johnstone, Corry Samson and Paul Ziche for being such marvellous academic r\^ole models. Similarly, I am highly indebted to my friends Dejan Gajic and Joost Nuiten, whose intelligence and work ethic I always hugely admired during our time together in Amersfoort and Utrecht.

It's been an especially incredible gift to get to spend this period of academic and personal development in an inspiring environment like Oxford. I am very thankful for the stunning physical environment, all the interesting academic events, but most of all for the totally extraordinary people I've had the privilege to meet there. In particular, I feel I've been very blessed to have wonderful friends here like Carlos, Christoph, Claudia, Jerome, Karine, Marieke, Molly, Paul and Santhy, as well as the support of the Oxford Thich Nhat Hanh sangha, the Clarendon Scholars community and my friendly housemates at 20 Tyndale Road. During the end of my time in Oxford, it was so meaningful and uplifting to write up together with Jenna and to be inspired by her optimism and empathy.

Finally, I cannot imagine what my life would have looked like without my caring family and my long-time friends Bart, Carien, Ewout, Hambo, Julius, Meike, Pieter, Temple, Victor and my Descartes College chums. I am very grateful to Gina for her love during many of the past years. Thanks to all of you for putting up with me! I can only hope to have given you as much joy and support as you have given me over the years.

\subsection*{Institutional}
I am enormously grateful to the EPSRC, the Clarendon Fund and the Department of Computer Science at the University of Oxford for funding this endeavour. Many diagrams in this thesis were produced using Paul Taylor's commutative diagrams package.

\begin{savequote}[8cm]
There are only two kinds of programming languages: those people always bitch about and those nobody uses.
  \qauthor{--- Bjarne Stroustrup}
\end{savequote}
\end{acknowledgements}

\begin{abstract}
	Real world programming languages crucially depend on the availability of computational effects to achieve programming convenience and expressive power as well as program efficiency. Logical frameworks rely on predicates, or dependent types, to express detailed logical properties about entities. According to the Curry-Howard correspondence, programming languages and logical frameworks should be very closely related. However, a language that has both good support for real programming and serious proving is still missing from the programming languages zoo. We believe this is due to a fundamental lack of understanding of how dependent types should interact with computational effects. In this thesis, we make a contribution towards such an understanding, with a focus on semantic methods.

Our first line of work concerns a dependently typed version of linear logic (which can be seen as a calculus for commutative effects). We develop a dependently typed dual intuitionistic linear logic as well as a sound and complete categorical semantics using certain indexed monoidal categories satisfying a comprehension axiom. We present a range of models, based on monoidal families, commutative effects, a double gluing construction, domains and strict functions and coherence spaces.

Our second line of work  develops a game semantics for dependent type theory, which had so far been missing altogether. We show that, if we work with deterministic well-bracketed history-free winning strategies, the semantics satisfies a full and faithful completeness result with respect to call-by-name dependent type theory for a hierarchy of types built from certain finite inductive families. We show that by relaxing the notion of strategy, we can further model various effects rather than the pure type theory.

Our final line of work explores a generalisation of Levy's call-by-push-value (CBPV) to encompass dependent types. We show that the syntax of CBPV naturally extends to a calculus we call dCBPV- in which types are allowed to depend on values but not computations. We show it has an elegant categorical semantics and a well-behaved operational semantics and that it admits a wide range of models arising from indexed monads on models of pure dependent type theory and from models of linear dependent type theory.  By contrast with the simply typed situation, however, it does not suffice to encode call-by-value and call-by-name versions of dependent type theories with unrestricted effects. To obtain those, we need a richer calculus dCBPV+ with a Kleisli extension principle for dependent functions, which turns out to be less well-behaved from a semantic point of view.

\begin{savequote}[8cm]
\textlatin{(...) Livet maa forstaaes baglaends. Men (...) maa leves forlaends.}

Life can only be understood backwards; but it must be lived forwards.
  \qauthor{--- Søren Kierkegaard}
\end{savequote}
\end{abstract}

\dominitoc 

\flushbottom

\tableofcontents




\end{romanpages}

\flushbottom
\begin{savequote}[8cm]
Logic, like whiskey, loses its beneficial effect when taken in too large quantities.
	\qauthor{--- Lord Dunsany}
\end{savequote}

\chapter{\label{ch:1}Introduction} 

\section{Motivation}
\subsection{The Limits of Logic and the Conception of Computers}
\mccorrect{Logic} and computer science have been intimately related since the latter's early days \cite{kleene1981origins,gandy1995confluence,soare1999history}. Indeed, the precise modern concept of computability\footnote{Following centuries of more informal descriptions of special cases of algorithms and computing machines, dating back at least to Euclid.} was rapidly formalised in the early 1930s by a group of logicians, motivated, at least in part, by questions in foundations of mathematics like Hilbert's Entscheidungsproblem. Particularly notable is that a wide range of formalisations of the concept of computability were proposed in short succession, many of which were proven to be equivalent in the so-called Church-Turing thesis. This ``confluence of notions'' of computation included but was by no means limited to
\begin{itemize}
\item Herbrand-G\"odel computable functions (or general \emph{recursive} functions), a scheme for axiomatising effectively computable functions, introduced in the aftermath of G\"odel's study of his incompleteness theorems, which demonstrated the limits of axiomatic systems to formalise mathematics;
\item Church's $\lambda$-calculus, a formal language for defining functions that can now be seen as a failed attempt at providing a foundation of mathematics: it turned out to be inconsistent as a \emph{logic}; in hindsight, one could argue that this was one of the first real \emph{programming languages}, however, for writing \emph{algorithms} or \emph{programs} rather than \emph{proofs}; modern functional languages still closely resemble it;
\item Kleene's $\mu$-recursive functions, which clearly show how the expressive power of general computable functions can be obtained from the weaker previously studied scheme of primitive recursive functions: by adding a minimisation operator, closely related to the fixpoint combinators definable in \mccorrect{the (untyped)} $\lambda$-calculus;
\item Turing machines, giving a universal notion of \emph{hardware} on which computation can be performed.
\end{itemize}
Of course, it would still take more than a decade of clever engineering to transform this theoretical groundwork into a working practical computer. For excellent accounts of this fascinating history, we refer the reader to  \cite{kleene1981origins,gandy1995confluence,soare1999history}.

\subsection{Terms and Types} 
To restore the logical consistency of his system, Church introduced devices called \emph{types}\footnote{The inconsistency is caused by self-application which allows us to construct Russell's paradox in the untyped $\lambda$-calculus. Note that types had been previously introduced by Russell already to circumvent the same paradox in Cantor's naive set theory.}, to classify the \emph{terms} of his $\lambda$-calculus (the \emph{programs}, if we view the calculus as a programming language). From a modern point of view, we can think of types as providing guarantees about a term (algorithm), for instance by putting certain (\emph{extensional}) restrictions on the \emph{inputs} it takes and the \emph{outputs} it produces or (\emph{intensional}) restrictions on \emph{the manner} in which the outputs are computed from the inputs.

Types were originally introduced by Church for foundational  reasons to restore the consistency of the $\lambda$-calculus as a logic. We must remember that the untyped $\lambda$-calculus was already a fine (albeit primitive) programming language! It is perhaps surprising, therefore, that types have turned out to be of huge practical value in software development, the main reason being that simple type annotations happen to catch many of the most common bugs introduced by programmers before the program is run. Moreover, types provide a useful abstraction of programs that helps programmers think about their code and certainly make it much easier to read code written by others. It is not a coincidence that the top four programming languages in the TIOBE Index of popular programming languages (Java, C, C++ and C\#) all have a strongly enforced type system \cite{tiobe2016}.

The motivations for types outlined in the previous paragraph are rather pragmatic in nature. A more principled motivation for types comes from the \emph{Curry-Howard correspondence}, which suggests that some \emph{type theories}, the simply typed $\lambda$-calculus being the prime example, can be interpreted both as a programming language and as an (intuitionistic\footnote{In the sense that the principle of double negation elimination does not hold: not not $A$ does not imply $A$. Note that such a formalism is strictly more expressive than classical logic as the latter is precisely the fragment of the former consisting of the doubly negated propositions.}) formal logic.

Informally, a type theory is a calculus for constructing terms (the \emph{programs} or \emph{proofs}) in a compositional way, starting from certain basic building blocks, subject to the restrictions  on their inputs and outputs (or assumptions and conclusions) imposed by the types (or \emph{propositions}). It can also be used to reason about the equality and conversion (execution/evaluation behaviour or proof normalization) of these terms. This dual reading of a type theory as a programming language and a logic is very roughly summarised in figure \ref{fig:curryhoward}.

\begin{figure}[!tb]\resizebox{\linewidth}{!}{
\begin{tabular}{l||l|l}
\textbf{Type theory} & \textbf{Programming} & \textbf{Intuitionistic Logic}\\
\hline
Type $A$ & (Data) Type $A$ & Proposition $A$\\
Term $b:B$ & Program $b$ with output of type $B$& Proof $b$ with conclusion $B$\\
Typing context $x_1:A_1,\ldots, x_n:A_n$ & Inputs of type $A_1,\ldots, A_n$ & Assumptions $A_1,\ldots, A_n$\\
Conversion $b\leadsto b':B$ & Execution $b\leadsto b':B$ & Proof normalization $b\leadsto b':B$\\
Product type $A\times B$ & Type of pairs of type $A$ and $B$ & Conjunction $A\wedge B$\\
Sum type $A+B$ & Disjoint union of types $A$ and $B$ & Disjunction $A\vee B$ \\
Function type $A\Rightarrow B$&  Type of (first class) functions from $A$ to $B$ & Implication $A\Rightarrow B$\\
Singleton type $1$ & \texttt{void} (Type of returning commands) & True\\
Empty type $0$ & \texttt{error} (Type of non-returning commands) & False \\
Parametric polymorphism $\Pi_{A}$ & Generics & $2^{nd}$-order quantification $\forall_A$
\end{tabular}}
\caption{\label{fig:curryhoward} An informal sketch of some instances of the Curry-Howard correspondence.}
\end{figure}
In particular, simple type theory (or the simply-typed $\lambda$-calculus) can not only be viewed as a primitive programming language, but also as a formalism for writing (natural deduction style) proofs for intuitionistic (implicational\footnote{Of course, we can add product and sum types to the simple $\lambda$-calculus to get a correspondence with full intuitionistic propositional logic.}) propositional logic.

\subsection{Programming Requires More Terms: Effects}
The simply typed $\lambda$-calculus is a rather unexpressive programming language, even when enriched with\mccorrect{ }ground types for booleans and natural numbers (the so-called G\"odel system T). Indeed, it can only define primitive recursive functionals (a generalization of the class of primitive recursive functions to a system with higher types), in particular functions that always terminate, and we do not have the power of general recursion available: it is not Turing complete. In fact, just as important in practice as mere \emph{expressive power}\footnote{For instance, it is well-known that such a language of total functions cannot define its own interpreter. \cite{mcbridehaskelllist}} is the \emph{practical convenience} that general recursion schemes provide for programmers: some primitive recursive functionals can be defined more conveniently using general recursion, as we do not have the burden of proving termination\footnote{For instance, writing a program that computes the same function as the simplex algorithm for linear programming would be possible in a language without general recursion by looping over the (finite) number of vertices in the \mccorrect{polytope}. However, such a solution would be more effort to implement and less efficient than the usual general recursive solution using a while loop as it would involve at the very least computing from the problem specification the number of vertices in the simplex.}.

The reason the untyped $\lambda$-calculus was inconsistent as a logic turned out to be that the absence of types made so-called fixpoint combinators definable. We now know (as was already foreshadowed by Kleene's $\mu$-recursive functions) that these are the crucial ingredient on top of primitive recursion in defining general computable functions.

One can explicitly add such fixpoint combinators to the syntax of a simply typed $\lambda$-calculus to have the expressive power of general recursion in a typed setting. However, the resulting language, known as {\PCF}  if we start from system T, is again inconsistent as a logic, as programs involving a fixpoint combinator do not correspond to acceptable proofs. Such programs which do not correspond to logical proofs are often called \emph{effectful} and they are of crucial importance in real world software development. By contrast, programs corresponding to proofs are called \emph{(purely) functional}. As purely functional code tends to be less error-prone and easier to reason about, a lot can be said in its favour.

However, in software engineering practice, pure functionality is often too much of a restriction, for reasons of efficiency, expressivity or mere programmer convenience. We have already seen the example of general recursion, which is an important feature in real programming languages. Another example of a class of effects are those that are introduced to give the programmer the option of more explicit low-level manipulation of \mccorrect{the} way the program is executed on the available hardware. A prime concrete example is the explicit manipulation of memory (or \emph{state}). This can lead to a reduced (time and space) resource consumption. It can also make a certain algorithm easier to understand and implement for the programmer\footnote{
An example is given by matrix multiplication. Of course, one can give a purely functional implementation, representing matrices as lists of lists on which we recurse, but it would be complicated and inefficient compared to the obvious imperative definition.
}. In other cases, effectful behaviour is an essential part of the specification of a program: for instance, we may want a program to generate a random number, to process keyboard input provided by the user, to generate output to a display or we may want to write a program that never terminates, like an operating system or a server, or to implement a counter.

We conclude that type theories require various extra terms, called (computational) effects, in order to constitute a practical programming language. Since these extra terms do not correspond to logical proofs, this renders the type theory of a practical programming language inconsistent as a logic.

\subsection{Logic Requires More Types: Dependent Types}
At the same time, we may observe that while many real world programming languages implementing such \emph{effectful type theories} have a type system implementing the equivalent of the logical connectives of propositional logic (so-called simple types) and even the equivalents of some higher-order predicates and quantifiers (so-called parametric polymorphism or generics), the equivalents of first-order predicates and quantifiers (so-called dependent types) are missing. However, these first-order quantifiers are of crucial importance in logical frameworks that are sufficiently expressive to be useful to formalise mathematics. From a programming point of view, such dependent types allow us to assign more precise types to existing programs of the simple $\lambda$-calculus, expressing detailed and useful program properties which a program that type checks is guaranteed to satisfy.

This means we are faced with a choice, at the moment: either we choose a language with many programs (an effectful programming language) while accepting a type system missing dependent types or we choose to have many types (a dependently typed language) and accept the lack of effects. All practical programming languages are in the former camp (e.g. Java, C++, Python, OCaml, Go) while the languages in this camp are inconsistent as a logic. The languages in the latter camp (e.g. Coq, Agda) can be useful as a logic or proof-assistant, but the lack of effects usually renders them impractical as programming languages.

\subsection{Effects as Proofs of Modal Propositions}\label{sec:effmodalformulae}
A first important issue to address if one wants to close the gap between programming languages and logics is the logical inconsistency introduced by effects. Effects need to be excluded from proofs, in order to retain their logical consistency (otherwise, using for instance general recursion, we could trivially construct a proof of any proposition), but not from programs. One possible way of solving this issue is to introduce new types of which the possibly effectful programs will be inhabitants, while keeping the inhabitants of other types pure. In such a formalism, not all types are propositions, just the ones whose terms are pure computations, not involving effects. In addition to restoring the logical consistency of the type theory, such a typing discipline makes it easier to reason about programs, as it is immediately clear from the type system which effects may occur in terms.

A particularly pleasing such formalism is given by \emph{(strong) monads}, which can be used to encapsulate effects  \cite{moggi1988computational,wadler1995monads}. The idea is that code is by default pure, unless specified otherwise by the type system. For example, a program of type $\N\Rightarrow \N$ is a primitive recursive functional from natural numbers to natural numbers, but a program of type $\N\Rightarrow T_{\mathsf{rec}}\N$ may be a general recursive function, where $T_{\mathsf{rec}}$ is a (strong) monad that makes fixpoint combinators available.  

Particularly pleasing is that such (strong) monads $T$ can in fact be given a logical interpretation. Under the Curry-Howard correspondence, they correspond to certain \emph{diamond modalities} $\Diamond$ on the level of logic (sometimes called lax modalities and written $\bigcirc$) \cite{benton1998computational}. This means that we can interpret the effectful programs of type $A\Rightarrow T B$ as all the extra proofs of $A\Rightarrow \Diamond B$ that do not arise from proofs of $A\Rightarrow B$, if you will all the derivations starting from $A$ of ``possibly $B$'' that aren't also proofs of $B$.

A point that is often elaborated on is that many such modalities may already be definable in a pure type theory. For instance, global state can be emulated with a modality $S\Rightarrow ((-)\times S)$, errors with a modality $(-)+E$, non-determinism with a powerset modality $\mathcal{P}(-)$ (if our pure type theory is a higher-order logic) and printing with a modality $(-)\times M$ where $M$ is some internal monoid in the type theory. Another concrete example would be the double negation modality $\neg \neg (-)$ or, more generally, continuation modalities $((-)\Rightarrow R)\Rightarrow R$, which make the classical principle of Peirce's law (equivalent to double negation elimination) available from a logical point of view and the (universal) control operator $\mathtt{call/cc}$ from the point of view of programming languages \cite{griffin1989formulae}. In this sense, (constructive\footnote{Constructive in the sense that double negation elimination is not an isomorphism of types. This means that we avoid equating all terms of the same type, which would otherwise happen according to Joyal's lemma \cite{lambek1988introduction}.}) classical propositional logic is at the same time a simply typed programming language as well: not a purely functional one, but one enriched with constructs for non-local control flow. Another interesting modality  $\mathsf{Dist}(-)$ is that of probability distributions, which, when definable in a pure type theory, lets us emulate probabilistic choice. It is interesting to note that one would define $\mathsf{Dist}(X)$\mccorrect{, for a discrete type $X$,} as the type $\Sigma_{f:X\Rightarrow\R^+}\Id_{\R^+}(\int f,1)$ of pairs of a positive real valued function $f:X\Rightarrow \R^+$ and a proof $p:\Id_{\R^+}(\int f,1)$ that $f$ sums (or integrates) to $1$. We see that in order to define such modalities, we need, in particular, dependent type formers $\Sigma$ and $\Id$ corresponding to existential quantification and identity predicates. 

Such definable modalities allows us to \emph{emulate} certain computational effects in a pure language.  We would like to stress, however, that to treat effects \emph{natively} with their intended custom operational semantics, rather than the emulation inherited from the conversions of the pure type theory, modalities should explicitly \mccorrect{be} added to the type system as new type formers and effects as new term formers which inhabit these types.

\subsection{CBV, CBN and Half-Modalities}
Recall that in pure functional languages the choice of an \emph{evaluation strategy} does not effect the result of computations, merely their efficiency. This is why we call these languages declarative: they specify what should be computed, not how it should be computed. The user does not need to know about the how; this is left to the discretion of the compiler.

By contrast, the same is not generally true for languages with effects. Effects tend to bring us into the realm of imperative languages: the evaluation strategy (the order in which we evaluate the various parts of the program) can have a significant impact on the result of computations, so we need to think about language constructs not only in terms of what they compute but also in terms of how they compute it in time. Here, it is important for the user to know which strategy is being used.

Two strategies are particularly studied from a theoretical point of view: \emph{call-by-value (CBV)} and \emph{call-by-name (CBN)} evaluation. An important distinction between the two is that in CBN function arguments are only evaluated when they are needed, while in CBV they are always evaluated eagerly whether they are needed or not. CBN evaluation can sometimes be preferable from a performance or correctness point of view. On the other hand, in the presence of some effects, it can be difficult to reason about, which is why  CBV evaluation is often preferred as the default option in software engineering practice, with CBN being reserved for special situations.

An idea that we believe to be underemphasized in literature is the perspective that it is instructive to further decompose a monad $T$ into an \emph{adjunction} $F\dashv U$ (or the corresponding modality $\Diamond$ into a pair of ``\emph{half-modalities}'') between two type theories of \emph{values} on the one hand and \emph{computations} (and more generally \emph{stacks}) on the other, whose types we shall write $A,A',\ldots$ and $\ct{B},\ct{B}',\ldots$ respectively. This is the point of view taken by Levy's \emph{call-by-push-value (CBPV)}. The advantage is that we obtain an elegant language (simpler than a monadic language, in many ways) for proofs and effectful computations with a single intuitive canonical evaluation strategy that is expressive enough to encode both CBV and CBN and many things in between.

In particular, if we define the {monad} $T:=UF$ and \emph{comonad}\footnote{This corresponds to a certain \emph{box modality} if we try to give the type theory for computations a logical interpretation. As we shall see, it can be understood to define a certain generalization of linear logic, hence the notation $!$ for the comonad.} $!:=FU$, we recover (thunks of) CBV computations as the terms of type $x:A\vdash a : TA'$ and the CBN computations as those of type $x:!\ct{B} \vdash b : \ct{B}'$. We see that the type system now also provides guarantees about the evaluation strategy of programs. Meanwhile, proofs can still be interpreted as general terms $x: A\vdash a: A'$ (including the proofs of modal propositions which can also be read as thunked call-by-value computations).

\subsection{The Relationship Between Proving and Programming}
It is clear that a blunt statement of the Curry-Howard correspondence like ``a programming language is the same as a logic'' is far from the truth. In fact, even the weaker statement, which may be closer to the truth, that ``every logic extends to a programming language'' does not accurately reflect the reality of programming languages research at the moment, although it may be a possible future that the field is trying to realise.

The relationship between programming languages and logics may be more accurately summarised by figure \ref{fig:logicsandpls}, where we refer to pure languages without dependent types as \emph{pidgins} (e.g. the pure polymorphic $\lambda$-calculus) as they can be seen as simplified languages that can be interpreted both as a programming language and as a logic but are not entirely satisfactory as either. So far, however, it is not yet clear if there exists a Promised Land of genuine \emph{programming logics}, languages that can serve as both a useful programming language and logic by combining dependent types and effects.

What is clear is that there is an inherent tension between the extensions of a pidgin with effects and with dependent types. The extra terms introduced by the former allow for wilder kinds of program behaviour while the extra types introduced by the latter serve to tame the behaviour of a program.

\begin{figure}[!tb]\resizebox{\linewidth}{!}{
\begin{diagram}
\textnormal{Pidgins} & \rTo^{\textnormal{more types (dependent types)}} & \textnormal{Useful logics}\\
\dTo^{\textnormal{more terms (effects)}} & & \dTo^{\textnormal{more terms (effects)}} \\
\textnormal{Useful programming languages} & \rTo^{\textnormal{more types (dependent types)}} & ?
\end{diagram}
}
\caption{\label{fig:logicsandpls} The present relationship between type theories that can serve as a logic and as a programming language: it is not clear what sort of type theory would be satisfactory as both.}
\end{figure}

\subsection{Why Unify Proving and Programming?}
One might wonder why we should be looking for such a Promised Land at all. In fact, who promised such a land in the first place?

Firstly, it is of fundamental importance to both the disciplines of mathematical logic and programming language research to be very precise about the relationship between mathematical proofs and computer programs. The promise of a unification of proving and programming has been repeatedly made either implicitly or explicitly, given how intertwined both disciplines have been historically, how much cross fertilisation has taken place and how many parallels have been sketched (often under the name of a Curry-Howard correspondence). It is stunning that no precise result exists yet to either show how to unify the notions of logic (using dependent types) and practical programming (using effects) in a single language or to show that this cannot be done satisfactorily. Such a marriage or the demonstration of its impossibility would provide conceptual clarity about the foundations about two important disciplines that have been flirting with each other for eighty years.

Secondly, a combined system with dependent types and effects could provide a very useful practical framework for writing verified software. It may give us a single language to both write real world programs (making use of effects) and prove their correctness (using the expressive logic embedded in the type system, using dependent types). Currently, this often needs to happen in two separate systems: a programming language and a proof assistant or another verification tool. We are required to build a model of the program in the verification tool in order to prove its properties. This transcription results in an overhead of work as well as in an extra source of potential bugs\footnote{In fact, it turns out that the construction of such a model can in some cases be automated using game semantics \cite{abramsky2004applying}.}. It is both safer and more efficient to directly prove properties of the production code.

\section{Goals of This Thesis}
The distal goal that this thesis can be understood to be pursuing is an understanding of the precise relationship between logic and programming. The main motivating questions for this line of work are the following three.
\begin{itemize}
\item How should we understand the relationship between logic and programming?
\item Can we design languages that are simultaneously satisfactory as a programming language and as a logic and in which both aspects of the language interact in a meaningful way?
\item  Can we use such a language for certified real world programming?
\end{itemize} 

The desire to answer these questions leads us to the more proximal goal of understanding how dependent types (from the realm of logic) can be combined with computational effects (which define real world programming languages):
\begin{itemize}
\item Can we combine dependent types and computational effects in an elegant and meaningful way?
\end{itemize} 

We believe the goals and questions we are pursuing are of tremendous importance both from a fundamental academic point of view and from the concrete point of view of software engineering. If these hugely ambitious questions had a straightforward answer, the community would have found it a long time ago. This thesis only claims to make a small contribution to solving this difficult puzzle, while hoping to illustrate both its relevance and complexity.

Concretely, this thesis describes three closely related lines of work:
\begin{enumerate}
\item Studying a dependently typed version of \emph{linear logic}, in the sense of a dependent type theory in which terms cannot be copied or discarded freely;
\item Providing a \emph{game semantics} for dependent type theory, interpreting types as games and their terms as strategies on these games;
\item Studying a dependently typed version of Levy's \emph{call-by-push-value} in the presence of various effects.
\end{enumerate}
These are closely related to the goals of thesis. Indeed, firstly, CBPV is a very useful paradigm for understanding effectful languages and their relationship to logic as it gives us a fine-grained way of controlling where effects are allowed to occur (and in what order they should be evaluated) and what parts of a program should be pure.

Secondly, effectful computations and the stacks used to evaluate these behave linearly in some sense. To be precise, they cannot be discarded in the syntax of CBPV\footnote{This corresponds to the structural rule of weakening not being valid. We generally, for non-commutative effects, only consider contexts of at most one identifier of computation/stack type $\ct{B}$, meaning that the rule of contraction does not have any meaning. We shall later see, in theorems \ref{thm:lliscomm} and \ref{thm:commembedsinll}, that we can conservatively extend the syntax for stacks with a multiplicative conjunction $\otimes$, or, equivalently, with linear contexts of longer length, if we are dealing with only commutative effects.}. On a more conceptual level, we like to point out that effectful computations can be \emph{dynamic}, in the sense that their reductions generally break equality, for instance for a non-deterministic choice we can have a reduction $\nondet{}{\return\ttt,\return\fff}\leadsto \return \fff$ where the result, after the choice has been made, should clearly not be considered 'equal' to the initial computation before the program makes a non-deterministic choice. This should be contrasted with the \emph{static} nature of values or pure computations (whose normalization does not break equations). Dynamic objects, in particular, cannot be copied in the usual sense, as both copies might later cease to be equal, and are in that sense linear. In fact, we shall argue that linear logic can be seen as a type system for commutative effectful computations.

Thirdly, game semantics has been perhaps the most successful paradigm for providing a unified intuitive semantics for many effectful programming languages and pure logics. We can hope to gain useful semantic intuitions for the problem of how to relate effects to dependent types, here. Moreover, game semantics naturally arises from a model of linear logic. Recall that game semantics is naturally effectful in the sense that computational effects like state, non-termination and non-local control have to be explicitly excluded by putting conditions on the strategies we consider on games. Therefore, we believe, the current absence of a game semantics for dependent types reflects the same lack of fundamental understanding of how to relate logic to programming, particularly the question of how type dependency should interact with effectful computations. 

We encounter similar possibilities and obstacles in all three lines of work -- for instance, we need to decide if it makes sense to have types depend on dynamic or linear objects -- and the simultaneous study of these three topics has hugely helped us to see a bigger picture emerge of what the Promised Land of genuine programming logics might look like. We hope it will do the same for the reader.

\section{Key Contributions}
This thesis makes the following key contributions:
\begin{itemize}
\item Explaining the difficulty of combining dependent types with linear types, game semantics and effects  and presenting a way of still doing so in the following sense;
\item Developing a syntax for dependently typed dual intuitionistic linear logic (dDILL);
\item Developing a categorical semantics for dDILL and the dependently typed linear/non-linear (dLNL) calculus;
\item Developing a range of concrete models for dDILL and dLNL, including a coherence space semantics;
\item Explaining the relationship with commutative effects;
\item Presenting a game semantics for dependent type theory;
\item Showing it has strong (full and faithful) completeness properties with respect to CBN dependent type theory;
\item Examining effectful game models of dependent types;
\item Presenting a dependently typed call-by-push-value (dCBPV-) calculus;
\item Developing its categorical semantics;
\item Showing that dLNL models give models of dCBPV-, as do algebras for indexed monads on models of \mccorrect{pure} dependent type theory;
\item Showing that the operational semantics of dCBPV- is well-behaved;
\item Showing that we need to extend dCBPV- to dCBPV+ with dependent Kleisli extensions if we want CBV and CBN translations;
\item Showing that dCBPV+ is less straightforward than dCBPV- from the point of view of operational semantics and concrete categorical models...;
\item ... and that the same goes for dCBPV- extended with dependent projection products (additive $\Sigma$-types);
\item As an alternative, presenting a dependently typed enriched effect calculus (dEEC) and showing it to be very well-behaved.
\end{itemize}
In the course of his doctoral studies, the author has communicated the majority of the material included in this thesis in \cite{
vakar2014syntax,
vakar2015syntax,
abramsky2015games,
vakar2016gamsem,
vakar2015framework,
vakar2016effectful} and in various oral presentations.

\section{Thesis Outline}
We have chosen to present this work in the order in which the research was conducted, to best convey to the reader the author's motivations for studying the various topics. Chapter \ref{ch:2} provides background material on CBPV, linear logic and game semantics, all in simply typed form, as well as on\mccorrect{ }(cartesian) dependent type theory. Most material in these sections is not original, but we present it in a novel way, in order to ensure a smooth transition to the rest of this thesis. Our first pillar of original work is presented in chapter \ref{ch:4}, which discusses a dependently typed version of linear logic. This naturally leads us to chapter \ref{ch:5}, where we discuss our second line of work: a game semantics for dependent type theory. Our third and last topic, a discussion of dependently typed CBPV, can be found in chapter \ref{ch:3}. We end on a discussion of our conclusions and future work in chapter \ref{ch:6}.

We have tried to keep chapters \ref{ch:4}, \ref{ch:5} and \ref{ch:3} as self-contained as possible (apart from their dependence on the appropriate sections of chapter \ref{ch:2}). Historically, our three lines of work roughly relate to each other as follows. Following a question by Samson Abramsky, we set off to construct a game semantics for dependent type theory or to understand why none existed yet. As categories of games originate from models of linear logic (categories of cofree $!$-coalgebras), this pushed us to investigate the relationship between linearity and dependent types first. Later, we came to understand the tension between game semantics and dependent types as arising from the natural effectful character of unrestricted strategies. This understanding made clear to us our bias in studying type dependency only in CBN game semantics and generally focussing on Girard's first (CBN) translation into linear logic. This finally led to our study of dependently typed CBPV, which we now understand, after realising that linear logic can be read as a calculus for commutative effects, as giving a generalization of our work on linear dependent type theory to non-commutative effects, providing, additionally, an account of operational semantics.
\begin{savequote}[8cm]
Knowledge is knowing that a tomato is a fruit; wisdom is not putting it in a fruit salad.
	\qauthor{--- Miles Kington} 
\end{savequote}

\chapter{\label{ch:2}Preliminaries} 
In this chapter, we present our views on simply typed linear logic, game semantics and call-by-push-value, as well as on their relation to each other, in order to easily be able to extend all three with a notion of type dependency in later chapters. We start with a discussion of\mccorrect{ }cartesian type theory, however. \mccorrect{The material in this chapter mostly consists of definitions and results published by other authors as well as folklore results. In most cases, however, it is reformulated in a non-trivial way in order to make developments in further chapters go through as smoothly as possible. We hope that this novel presentation of known results can be of value in its own right.}

\section{Cartesian Type Theory}
\mccorrect{We} briefly recall the syntax and semantics of\mccorrect{ }simple (\STT) and dependent type theory (\DTT). We describe a general syntactic and semantic framework for both, in the context of which we can consider many theories (in the case of syntax) or models (in the case of semantics). We discuss, in particular, two CBN type theories {$\STTGame$} and {$\DTTGame$}, with respect to which the game theoretic models we discuss in section \ref{sec:backgame} and chapter \ref{ch:5} have full and faithful completeness properties\footnote{These are CBN type theories in the sense that we only demand a limited $\eta$-rule with some (but not all) commutative conversions for ground types. As we shall see in section \ref{sec:backcbpv}, the full $\eta$-law is typically broken in effectful settings under CBN evaluation. If we were to demand the fully general $\eta$-rule (which would automatically imply all commutative conversions), we would be modelling pure type theory. We briefly note that  the usual set theoretic semantics is fully and faithfully complete for this pure type theory.}.

\subsection{Syntax of Type Theories}\label{sec:DTT}
\subsubsection{Dependently Typed Equational Logic}
\mccorrect{In} this section, we briefly recall the framework of dependently typed equational logic (sometimes called generalised algebraic theories \cite{cartmell1986generalised}), which will serve as the structural core type theory and  on top of which we  later consider two theories in particular: a flavour of simple type theory ({$\STTGame$}) and a flavour of dependent type theory ({$\DTTGame$}). This framework puts both flavours of type theory on an equal footing and allows us to better study their relationship. We go into this level of precision in our specification of the syntax we are modelling, in order to accurately state the appropriate completeness results in sections \ref{sec:backgame} and \ref{sec:compl}. Although much more informal, our treatment is close in spirit to those of \cite{pitts1995categorical} and \cite{hofmann1997syntax}, to which we refer the interested reader for more background and where the reader can find details on delicate topics like pre-syntax, $\alpha$-conversion, identifier binding and capture-avoiding substitution.

The key feature of a dependent type system is that we allow types to refer to free identifiers from the context. The reader may want to keep in mind the analogy that dependent types are to predicates what non-dependent types are to propositions. One consequence is that order in the context becomes important as all free identifiers in a type need to be declared in the context to its left. As types can depend on terms in a dependently typed system and equations of terms can lead to equations of types which can lead to new typing judgements, we define all judgements together in one big inductive definition.

\subsubsection*{Judgements}
Figure \ref{fig:dttjudgements} presents the various kinds of \emph{judgements} of dependently typed equational logic and their intended meaning.
\begin{figure}[!tb]
\centering
\fbox{\resizebox{\textwidth}{!}{\footnotesize
\begin{tabular}{ll}
\textbf{Judgement} & \textbf{Intended meaning}\vspace{2pt}\\
$\vdash \Gamma \;\mathsf{ctxt}$ & $\Gamma$ is a valid context\\
$\Gamma \vdash A\;\mathsf{type}$ &  $A$ is a type in  context $\Gamma$\\
$\Gamma\vdash a:A$ & $a$ is a term of type $A$ in context $\Gamma$\\
$\vdash \Gamma = \Gamma'$\hspace{60pt} & $\Gamma$ and $\Gamma'$ are judgementally equal contexts\\
$\Gamma\vdash A= A'$ & $A$ and $A'$ are judgementally equal types in context $\Gamma$\\
$\Gamma\vdash a= a':A$ & $a$ and $a'$ are judgementally equal terms of type $A$ in context $\Gamma$
\end{tabular}\hspace{60pt}\;}}
\normalsize
\caption{\label{fig:dttjudgements} Judgements of dependently typed equational logic.}
\end{figure}
Here, $\Gamma, \Gamma', A,, A', a$ and $a'$ are all symbolic expressions from a set $\mathsf{Expr}$, built from an alphabet $\mathsf{Sym}$, in which we have countably infinite designated subsets $\mathsf{Idf}$ of identifiers and $\mathsf{Cons}$ of constants. As usual, we distinguish between the free and bound identifiers occurring in an expression $\mathcal{J}$ and we consider expressions $\mathcal{J}$ up to $\alpha$-equivalence, or up to permutations of $\mathsf{Idf}$ fixing the free identifiers of $\mathcal{J}$. We denote the syntactic metaoperation of capture-avoiding substitution of an expression $a$ for all occurrences of a free identifier $x$ in an expression $\mathcal{J}$ by $\mathcal{J}[a/x]$.

\subsubsection*{Structural Rules and Theories}Dependently typed equational logic has the structural rules presented in figure \ref{fig:dttstructural}, which will be shared in particular by {$\STTGame$} and {$\DTTGame$}.
\begin{figure}[!tb]
\begin{subfigure}[!t]{\textwidth}
\centering
\fbox{\resizebox{\textwidth}{!}{
\begin{tabular}{lll}
\AxiomC{$\Gamma,\Gamma'\vdash\mathcal{J}$}
\AxiomC{$\vdash \Gamma,x:A,\Gamma'\;\mathsf{ctxt}$}
\RightLabel{\textsf{Weak}}
\BinaryInfC{$\Gamma,x:A,\Gamma'\vdash \mathcal{J}$}
\DisplayProof\hspace{50pt}\;
&
\AxiomC{${\Gamma},x:A,\Gamma' \vdash \mathcal{J}$}
\AxiomC{$\Gamma \vdash a:A$}
\RightLabel{\textsf{Subst}}
\BinaryInfC{${\Gamma},\Gamma'[{a}/x] \vdash \mathcal{J}[{a}/x]$}
\DisplayProof\hspace{50pt}\;
\end{tabular}\hspace{25pt}\;}}
\caption{\label{fig:dttweak} Weakening and substitution rules. Here, $\mathcal{J}$ represents a statement of the form $B\;\mathsf{type}$, $B= B'$, $b:B$, or $b= b':B$. Note that these rules, additionally, make contraction and exchange rules derivable.}
\end{subfigure}
\begin{subfigure}[!t]{\textwidth}
\centering
\fbox{\resizebox{\textwidth}{!}{
\begin{tabular}{lcr}
\AxiomC{}
\RightLabel{\textsf{C-Emp}}
\UnaryInfC{$\vdash\cdot\;\mathsf{ctxt}$}
\DisplayProof\hspace{-20pt}\;
& 
\AxiomC{$\vdash\Gamma\; \mathsf{ctxt}$}
\AxiomC{$\Gamma \vdash A\;\mathsf{type}$}
\AxiomC{$x$ is fresh for $\Gamma$ and $A$}
\RightLabel{\textsf{C-Ext}}
\TrinaryInfC{$\vdash \Gamma,x:A \;\mathsf{ctxt}$}
\DisplayProof\hspace{-20pt}\;

&
\AxiomC{$\vdash \Gamma,x:A,\Gamma'\;\mathsf{ctxt}$}
\RightLabel{\textsf{Idf}}
\UnaryInfC{$\Gamma,x:A,\Gamma'\vdash x:A$}
\DisplayProof\\
& &
\\
&
\AxiomC{$\Gamma=\Gamma'\;\mathsf{ctxt}$}
\AxiomC{$\Gamma \vdash A= B$}
\AxiomC{$\vdash\Gamma,x:A\;\mathsf{ctxt}$}
\AxiomC{$\vdash\Gamma',x:B\;\mathsf{ctxt}$}
\RightLabel{\textsf{C-Ext-Eq}}
\QuaternaryInfC{$\vdash \Gamma,x:A=\Gamma',x:B$}
\DisplayProof&
\end{tabular}
}}
\normalsize
\caption{\label{fig:dttctxt} Context formation and identifier declaration rules.}
\end{subfigure}
\begin{subfigure}[!t]{\textwidth}
\centering
\fbox{\resizebox{\textwidth}{!}{
\begin{tabular}{l}
\begin{tabular}{lll}
\AxiomC{$\vdash \Gamma\;\mathsf{ctxt}$}
\RightLabel{\textsf{C-Eq-R}}
\UnaryInfC{$\vdash \Gamma= \Gamma$}
\DisplayProof\hspace{55pt}
&
\AxiomC{$\vdash \Gamma= \Gamma'$}
\RightLabel{\textsf{C-Eq-S}}
\UnaryInfC{$\vdash \Gamma'= \Gamma$}
\DisplayProof\hspace{55pt}
&
\AxiomC{$\vdash \Gamma= \Gamma'$}
\AxiomC{$\vdash \Gamma'= \Gamma''\;\mathsf{ctxt}$}
\RightLabel{\textsf{C-Eq-T}}
\BinaryInfC{$\vdash \Gamma= \Gamma''$}
\DisplayProof \\
&\\
\AxiomC{$\Gamma\vdash A\;\mathsf{type}$}
\RightLabel{\textsf{Ty-Eq-R}}
\UnaryInfC{$\Gamma\vdash A= A$}
\DisplayProof
&
\AxiomC{$\Gamma\vdash A= A'$}
\RightLabel{\textsf{Ty-Eq-S}}
\UnaryInfC{$\Gamma\vdash A'= A$}
\DisplayProof
&
\AxiomC{$\Gamma\vdash A= A'$}
\AxiomC{$\Gamma\vdash A'= A''$}
\RightLabel{\textsf{Ty-Eq-T}}
\BinaryInfC{$\Gamma\vdash A= A''$}
\DisplayProof
\\
&\\
\AxiomC{$\Gamma\vdash a:A$}
\RightLabel{\textsf{Tm-Eq-R}}
\UnaryInfC{$\Gamma\vdash a= a: A$}
\DisplayProof
&
\AxiomC{$\Gamma\vdash a= a':A$}
\RightLabel{\textsf{Tm-Eq-S}}
\UnaryInfC{$\Gamma\vdash a'= a: A$}
\DisplayProof

&
\AxiomC{$\Gamma\vdash a= a':A$}
\AxiomC{$\Gamma\vdash a'= a'':A$}
\RightLabel{\textsf{Tm-Eq-T}}
\BinaryInfC{$\Gamma\vdash a= a'': A$}
\DisplayProof
\end{tabular}
\\
\\
\begin{tabular}{ll}
\AxiomC{$\Gamma\vdash A\;\mathsf{type}$}
\AxiomC{$\vdash \Gamma= \Gamma'\;\mathsf{ctxt}$}
\RightLabel{\textsf{Ty-Conv}}
\BinaryInfC{$\Gamma'\vdash A\;\mathsf{type}$}
\DisplayProof\hspace{80pt}
&
\AxiomC{$\Gamma\vdash a:A$}
\AxiomC{\mccorrect{$\vdash \Gamma= \Gamma'\;\mathsf{ctxt}$}}
\AxiomC{$\Gamma;\cdot \vdash A= A'\;\mathsf{type}$}
\RightLabel{\textsf{Tm-Conv}}
\TrinaryInfC{$\Gamma'\vdash a:A'$}
\DisplayProof\\
&\\

\AxiomC{$\Gamma\vdash a=a':A$}
\AxiomC{$\Gamma,x:A,\Gamma'\vdash B\;\mathsf{type}$}
\RightLabel{\mccorrect{\textsf{Ty-Cong}}}
\BinaryInfC{$\Gamma,\Gamma'[a/x]\vdash B[a/x]=B[a'/x]\;\mathsf{type}$}
\DisplayProof
&
\AxiomC{$\Gamma\vdash a=a':A$}
\AxiomC{$\Gamma,x:A,\Gamma'\vdash b:B$}
\RightLabel{\mccorrect{\textsf{Tm-Cong}}}
\BinaryInfC{$\Gamma,\Gamma'[a/x]\vdash b[a/x]=b[a'/x]:B$}
\DisplayProof
\end{tabular}
\end{tabular}
\normalsize}}
\caption{\label{fig:dtteq} Rules for judgemental equality, making it a congruence relation, compatible with typing.}
\end{subfigure}
\caption{\label{fig:dttstructural} The structural rules of dependently typed equational logic.}
\end{figure}
We can use our framework to talk about various type theories. By a \emph{theory}, we mean a set $\mathbb{T}$ of judgements which is closed under the structural rules above, in the sense that their conclusions (written under   the horizontal line) are in $\mathbb{T}$ if their hypotheses (written above) are. Usually, we specify a theory by a set of \emph{axioms}, a set of judgements which can be inductively closed under the structural rules to obtain a theory. For our purposes, we only consider theories with no context symbols. That is, all our contexts consist of lists of type declarations for identifiers \mccorrect{and context equalities consist of type equalities and identifier equalities}.

\subsubsection{A Simple Type Theory, {$\STTGame$}}\label{sec:sttgame}
The simple type theory we use is a variant {$\STTGame$} of the simply typed $\lambda$-calculus with finite product types and finite inductive\footnote{\mccorrect{We use this terminology as we see them as a specific instance of general inductive types, to which one might want to generalise in future work.}} types $\{a_i\;|\; i\}$ for any finite set of \emph{distinct} constants $a_1,\ldots, a_n$, with $\beta$- and $\eta'$-rules\footnote{Note that we are using a restricted form of the $\eta$-rule for inductive types which we call $\eta'$. This is why we are left to impose certain commutative conversions, which (among other things) would be implied by the general $\eta$-rule $ \mathsf{case}_{\{a_i\;|\;i\},\{a_i\;|\;i\}}(x,\{b[a_i/x]\}_i)= b$. More discussion of the matter of commutative conversions and $\eta$-rules can be found in \cite{Ghani1995}. Our equational theory is easily seen to precisely correspond precisely to observational equivalence if we extend the syntax with some sufficiently evil computational effect (in fact, it can be shown using an embedding into CBPV that no other effect can weaken the equational theory of pure type theory further) like printing or state and use CBN evaluation. We present this equational theory as it will precisely correspond to equality in CBN game semantics.  As a rule of thumb, we would like to note that in the presence of effects (which can be modelled in game semantics) the general $\eta$-laws fail for positive connectives in CBN and for negative connectives in CBV. This is one of the mysteries that CBPV addresses.} and certain commutative conversions for the corresponding $\mathsf{case}$-constructs -- essentially the {\PCF} commutative conversions \cite{abramsky2000full} \mccorrect{(section 3.2)}. We are considering a total finitary \PCF, if you will. Specifically, with {$\STTGame$}, we are referring to the theory in dependently typed equational logic generated by the rules of figure \ref{fig:stt}\mccorrect{.} \mccorrect{Note that the rule that $\Gamma\vdash t=u:A$ implies that $\Gamma\vdash t:A$ is admissible.}
\begin{figure}[!tb]
\begin{subfigure}[!t]{\linewidth}
\centering
\fbox{
\resizebox{\textwidth}{!}{
\begin{tabular}{lccr}
&&&\\
\AxiomC{$\vdash\Gamma\;\mathsf{ctxt}$}
\RightLabel{\mccorrect{\textsf{$1$-F}}}
\UnaryInfC{$\Gamma\vdash 1\;\mathsf{type}$}
\DisplayProof
&
\AxiomC{$\vdash\Gamma\;\mathsf{ctxt} $}
\RightLabel{\textsf{$1$-I}}
\UnaryInfC{$\Gamma\vdash \langle\rangle:1$}
\DisplayProof
&
\AxiomC{$\Gamma\vdash t:1 $}
\RightLabel{\textsf{$1$-$\eta$}}
\UnaryInfC{$\Gamma\vdash t= \langle\rangle:1$}
\DisplayProof
&\\
&&&\\
&&&\\
\AxiomC{$\Gamma\vdash B\;\mathsf{type}$}
\AxiomC{$\Gamma \vdash C\;\mathsf{type}$}
\RightLabel{\textsf{$\times$-F}}
\BinaryInfC{$\Gamma \vdash B\times C\;\mathsf{type}$}
\DisplayProof
&
\AxiomC{$ \Gamma\vdash b:B$ }
\AxiomC{$\Gamma\vdash c:C$}
\RightLabel{\textsf{$\times$-I}}
\BinaryInfC{$\Gamma\vdash \langle b,c\rangle :B\times C$}
\DisplayProof &

\AxiomC{$\Gamma\vdash d:B\times C$}
\RightLabel{\textsf{$\times$-E1}}
\UnaryInfC{$\Gamma\vdash \mathsf{fst}(d):B$}
\DisplayProof
&
\AxiomC{$\Gamma\vdash d:B\times C$}
\RightLabel{\textsf{$\times$-E2}}
\UnaryInfC{$\Gamma\vdash \mathsf{snd}(d):C$}
\DisplayProof \\
\quad & & &\\
\quad & & &\\

\AxiomC{$\Gamma\vdash  \mathsf{fst}(\langle b,c\rangle):B$}
\RightLabel{\textsf{$\times$-$\beta1$}}
\UnaryInfC{$\Gamma\vdash \mathsf{fst}(\langle b,c\rangle)= b:B$}
\DisplayProof
&
\AxiomC{$\Gamma\vdash  \mathsf{snd}(\langle b,c\rangle):C$}
\RightLabel{\textsf{$\times$-$\beta2$}}
\UnaryInfC{$\Gamma\vdash  \mathsf{snd}(\langle b,c\rangle)= c:C$}
\DisplayProof
&
\AxiomC{$\Gamma\vdash  \langle \mathsf{fst}(d),\mathsf{snd}(d)\rangle:B\times C$}
\RightLabel{\textsf{$\times$-$\eta$}}
\UnaryInfC{$\Gamma\vdash  \langle \mathsf{fst}(d),\mathsf{snd}(d)\rangle= d:B\times C$}
\DisplayProof &
\\
&&&\\
&&&\\
\AxiomC{$\Gamma\vdash B\;\mathsf{type}$}
\AxiomC{$\Gamma\vdash C\;\mathsf{type}$}
\RightLabel{\textsf{$\Rightarrow$-F}}
\BinaryInfC{$\Gamma\vdash B\Rightarrow C\;\mathsf{type}$}
\DisplayProof
 &
\AxiomC{$\Gamma,x:B\vdash c:C$}
\RightLabel{\textsf{$\Rightarrow$-I}}
\UnaryInfC{$\Gamma\vdash \lambda_{x:B}c:B\Rightarrow C$}\DisplayProof
 & 
\AxiomC{$\Gamma\vdash f:B\Rightarrow C$}
\AxiomC{$\Gamma\vdash b:B$}
\RightLabel{\textsf{$\Rightarrow$-E}}
\BinaryInfC{$\Gamma\vdash f(b):C$}
\DisplayProof & \\
&& &\\
&&&\\
\AxiomC{$\Gamma\vdash (\lambda_{x:B}c)(b):C$}
\RightLabel{\textsf{$\Rightarrow$-$\beta$}}
\UnaryInfC{$\Gamma\vdash (\lambda_{x:B}c)(b)= c[b/x]:C$}
\DisplayProof
& \AxiomC{$\Gamma\vdash \lambda_{x:B} f(x):B\Rightarrow C$}
\RightLabel{\textsf{$\Rightarrow$-$\eta$}}
\UnaryInfC{$\Gamma\vdash \lambda_{x:B} f(x)= f:B\Rightarrow C$}
\DisplayProof & &\\
\end{tabular}}}
\caption{Formation (F), introduction (I), elimination (E) and $\beta$- and $\eta$-conversion rules for the usual connectives of simple type theory. For $\Rightarrow-\eta$, we demand the usual side condition that $x$ not free in $f$.}
\end{subfigure}
\begin{subfigure}[!t]{\linewidth}
\fbox{
\resizebox{\linewidth}{!}{
\begin{tabular}{l}
\\
\begin{tabular}{ll}
\AxiomC{$\vdash\Gamma\;\mathsf{ctxt}$}
\noLine
\UnaryInfC{$a_i, \quad 1\leq i\leq n,\quad\textnormal{distinct constants}$}
\RightLabel{\mccorrect{$\{a_i\;|\; i\}$-\textsf{F}}}
\UnaryInfC{$\Gamma\vdash \{a_i\;|\; i\}\;\mathsf{type}$}
\DisplayProof &\\
&\\
&\\
\AxiomC{$\vdash \Gamma\;\mathsf{ctxt}$}
\RightLabel{\mccorrect{$\{a_i\;|\;i\}$-\textsf{I}$_j$}}
\UnaryInfC{$\Gamma \vdash a_j:\{a_i\;|\; i\}$}
\DisplayProof
&
\AxiomC{$\{\Gamma\vdash c_i:C\}_{1\leq i\leq n}$}
\AxiomC{$\Gamma\vdash a:\{a_i\;|\; i\}$}
\RightLabel{\mccorrect{$\{a_i\;|\;i\}$-\textsf{E}}}
\BinaryInfC{$\Gamma\vdash \mathsf{case}_{\{a_i\;|\; i\},C}(a,\{c_i\}_i):C$}
\DisplayProof
\\
&\\
&\\
\AxiomC{$\Gamma\vdash \mathsf{case}_{\{a_i\;|\; i\},C}(a_j,\{c_i\}_i):C$}
\RightLabel{\mccorrect{$\{a_i\;|\;i\}$-$\beta_j$}}
\UnaryInfC{$\Gamma\vdash \mathsf{case}_{\{a_i\;|\; i\},C}(a_j,\{c_i\}_i)= c_j:C$}
\DisplayProof
&
\AxiomC{$\Gamma,x:\{a_i\;|\; i\}\vdash \mathsf{case}_{\{a_i\;|\;i\},\{a_i\;|\;i\}}(x,\{a_i\}_i):\{a_i\;|\; i\}$}
\RightLabel{\mccorrect{$\{a_i\;|\;i\}$-$\eta'$}}
\UnaryInfC{$\Gamma,x:\{a_i\;|\; i\}\vdash \mathsf{case}_{\{a_i\;|\;i\},\{a_i\;|\;i\}}(x,\{a_i\}_i)= x:\{a_i\;|\; i\}$}
\DisplayProof
\end{tabular}
\\
\\
\\
\AxiomC{$\Gamma\vdash \mathsf{case}_{\{a_i\;|\;i\},B\times C}(x,\{d_i\}_i):B\times C$}
\RightLabel{$\{a_i\;|\;i\}$-\textsf{Comm}-$\langle -,-\rangle$}
\UnaryInfC{
$\Gamma\vdash \mathsf{case}_{\{a_i\;|\;i\},B\times C}(x,\{d_i\}_i)=\langle \mathsf{case}_{\{a_i\;|\;i\},B}(x,\{\mathsf{fst}(d_i)\}_i), \mathsf{case}_{\{a_i\;|\;i\},C}(x,\{\mathsf{snd}(d_i)\}_i)\rangle:B\times C$
}
\DisplayProof
\\
\\
\\
\AxiomC{$\Gamma\vdash \mathsf{case}_{\{a_i\;|\;i\},B\Rightarrow C}(x,\{f_{i}\}_i):B\Rightarrow C$}
\RightLabel{$\{a_i\;|\;i\}$-\textsf{Comm}-$\lambda$}
\UnaryInfC{
$\Gamma\vdash \mathsf{case}_{\{a_i\;|\;i\},B\Rightarrow C}(x,\{f_{i}\}_i)= \lambda_{y:B}\mathsf{case}_{\{a_i\;|\;i\},C}(x,\{f_{i}(y)\}_i):B\Rightarrow C$
}
\DisplayProof\\
\\
\\
\AxiomC{$\Gamma\vdash \mathsf{case}_{\{b_j\;|\;j\},C}(\mathsf{case}_{\{a_i\;|\;i\},\{b_j\;|\;j\}}(x,\{b'_i\}),\{c_j\}_j):C$}
\RightLabel{$\{a_i\;|\;i\}$-\textsf{Comm}-$\mathsf{case}$}
\UnaryInfC{$\Gamma\vdash \mathsf{case}_{\{b_j\;|\;j\},C}(\mathsf{case}_{\{a_i\;|\;i\},\{b_j\;|\;j\}}(x,\{b'_i\}),\{c_j\}_j)=\mathsf{case}_{\{a_i\;|\;i\},\{b_j\;|\;j\}}(x,\{\mathsf{case}_{\{b_j\;|\;j\},C}(b'_i,c_j)\}_j):C $}
\DisplayProof
\end{tabular}}}
\caption{The rules for a notion of ground types for simple type theory: finite inductive types.}
\end{subfigure}
\caption{\label{fig:stt} The rules generating the axioms for {$\STTGame$}\mccorrect{.}}
\end{figure}

\subsubsection{A Dependent Type Theory, {$\DTTGame$}$-$}
Similarly, we can present our preferred variant {$\DTTGame$} of dependent type theory as a theory in dependently typed equational logic. First, we present a smaller theory {$\DTTGame$}$-$, which does not yet include the $\beta$- and $\eta$-rules and commutative conversions of {$\DTTGame$}, but rather only consists of its $F$-, $I$- and $E$-rules. Later, {$\DTTGame$} is obtained by adding to {$\DTTGame$}$-$ the equational theory that results from that of {$\STTGame$}, under a syntactic translation to {$\STTGame$}.

{$\DTTGame$}$-$ consists of the rules of figure \ref{fig:dtt-}.
\begin{figure}
\begin{subfigure}[!tb]{\linewidth}
\centering
\fbox{
\resizebox{\textwidth}{!}{
\begin{tabular}{llll}
\AxiomC{$\vdash\Gamma\;\mathsf{ctxt} $}
\RightLabel{\textsf{\mccorrect{$1$-F}}}
\UnaryInfC{$\vdash 1\;\mathsf{type}$}
\DisplayProof
&
\AxiomC{$\vdash\Gamma\;\mathsf{ctxt} $}
\RightLabel{\textsf{$1$-I}}
\UnaryInfC{$\Gamma\vdash \langle\rangle:1$}
\DisplayProof
&
&\\
&&&\\
\AxiomC{$\Gamma,x:A\vdash B\;\mathsf{type}$}
\RightLabel{\textsf{$\Sigma$-F}}
\UnaryInfC{$\Gamma\vdash \Sigma_{x:{A}}B\;\mathsf{type}$}
\DisplayProof &

\AxiomC{$\Gamma \vdash a:A$}
\AxiomC{$\Gamma \vdash b:B[{a}/x]$}
\RightLabel{\textsf{$\Sigma$-I}}
\BinaryInfC{$\Gamma \vdash  \langle {a} ,b\rangle :\Sigma_{x:{A}}B $}
\DisplayProof
&
\AxiomC{$\Gamma \vdash t:\Sigma_{x:{A}}B$}
\RightLabel{\textsf{$\Sigma$-E1}}
\UnaryInfC{$\Gamma \vdash \mathsf{fst}(t):A$}
\DisplayProof &
\hspace{-50pt}
\AxiomC{$\Gamma \vdash t:\Sigma_{x:{A}}B$}
\RightLabel{\textsf{$\Sigma$-E2}}
\UnaryInfC{$\Gamma \vdash \mathsf{snd}(t):B[\mathsf{fst}(t)/x]$}
\DisplayProof\\
&&&\\
&&&\\
\AxiomC{$\Gamma,x:A \vdash B\;\mathsf{type}$}
\RightLabel{\textsf{$\Pi$-F}}
\UnaryInfC{$\Gamma\vdash\Pi_{x:{A}}B\;\mathsf{type}$}
\DisplayProof
&
\AxiomC{$\Gamma,x:A\vdash b:B$}
\RightLabel{\textsf{$\Pi$-I}}
\UnaryInfC{$\Gamma\vdash \lambda_{x:{A}}b:\Pi_{x:{A}}B$}
\DisplayProof
&
\AxiomC{$\Gamma \vdash a:A$}
\AxiomC{$\Gamma\vdash f:\Pi_{x:{A}}B$}
\RightLabel{\textsf{$\Pi$-E}}
\BinaryInfC{$\Gamma\vdash f({a}):B[{a}/x]$}
\DisplayProof &\\
&&&\\
\AxiomC{$\Gamma \vdash a:A$}
\AxiomC{$\Gamma \vdash a':A$}
\RightLabel{\textsf{$\Id$-F}}
\BinaryInfC{$\Gamma \vdash \Id_{A}(a,a')\;\mathsf{type}$}
\DisplayProof
&
\AxiomC{$\Gamma \vdash a:A$}
\RightLabel{\textsf{$\Id$-I}}
\UnaryInfC{$\Gamma\vdash \refl{a}:\Id_{A}(a,a)$}
\DisplayProof
&\hspace{-5pt}\begin{tabular}{l}
\AxiomC{\begin{tabular}{ll}
$\Gamma \vdash a:A$ & \\
$\Gamma \vdash a':A$ &$\Gamma,x:A,x':A ,y:\Id_{A}(x,x')\vdash D\;\mathsf{type}$\\
$\Gamma  \vdash p:\Id_{A}(a,a')$ &  
$\Gamma,z:A \vdash d:D[{z}/x,{z}/x',\refl{z}/y]$
\end{tabular}
}
\RightLabel{\textsf{$\Id$-E}}
\UnaryInfC{$\Gamma \vdash \mathsf{let}\; p\;\mathsf{be}\;\refl{z}\;\mathsf{in}\; d:D[{a}/x,{a'}/x',p/y]$}
\DisplayProof
\end{tabular}\hspace{-130pt}\;

\end{tabular}
}}
\caption{Rules for $1$-, $\Sigma$-, $\Pi$-, and $\Id$-types. In case $x$ is not free in $B$, we sometimes write $A\Rightarrow B$ for $\Pi_{x:A}B$ and $A\times B$ for $\Sigma_{x:A}B$.}
\end{subfigure}

\begin{subfigure}[!tb]{\linewidth}
\centering
\fbox{\resizebox{\textwidth}{!}{
\begin{tabular}{l}
\begin{tabular}{ccc}\AxiomC{$\vdash \Gamma\;\mathsf{ctxt}\qquad \vdash a_1:A$}
\AxiomC{$\ldots$}
\AxiomC{$\vdash a_n:A$}
\noLine
\TrinaryInfC{$b_{i,j},\;\;\; 1\leq i\leq n,\; 1\leq j\leq m_i,\;$ distinct constants}
\RightLabel{\mccorrect{$(a_i\mapsto_i\{b_{i,j}\;|\;j\})(x)$-\textsf{F}}}
\UnaryInfC{$\Gamma,x:A\vdash (a_i\mapsto_i\{b_{i,j}\;|\;j\})(x)\;\mathsf{type}$}
\DisplayProof\hspace{-30pt}\;
&\qquad\qquad &
\AxiomC{$\vdash \Gamma\;\mathsf{ctxt}$}
\RightLabel{\mccorrect{$(a_i\mapsto_i\{b_{i,j}\;|\;j\})(x)$-\textsf{I}$_{i,j}$}}
\UnaryInfC{$\Gamma\vdash b_{i,j}:(a_i\mapsto_i\{b_{i,j}\;|\;j\})(a_i)$}
\DisplayProof\end{tabular}\\
\\
\AxiomC{
\begin{tabular}{ll}
$\Gamma\vdash a:A$ & 
$
x:A,y:(a_i\mapsto_i\{b_{i,j}\;|\;j\})(x)\vdash C\;\mathsf{type}$\\ 
$\Gamma\vdash b:(a_i\mapsto_i\{b_{i,j}\;|\;j\})(a)$
&
$\{\Gamma\vdash c_{i,j}:C[a_i/x,b_{i,j}/y]\}_{i,j}$
\end{tabular}
}
\RightLabel{\mccorrect{$(a_i\mapsto_i\{b_{i,j}\;|\;j\})(x)$-\textsf{E}}}
\UnaryInfC{
$\Gamma\vdash\mathsf{case}_{(a_i\mapsto_i\{b_{i,j}\;|\;j\})(a),C}(b,\{c_{i,j}\}_{i,j}):C[a/x,b/y]$}
\DisplayProof
\\
\\
\AxiomC{\begin{tabular}{ll}
$\Gamma\vdash a:A$ &$\Gamma,y:(a_i\mapsto_i\{b_{i,j}\;|\;j\})(a)\vdash C\;\mathsf{type}$ \\
$\Gamma\vdash b:(a_i\mapsto_i\{b_{i,j}\;|\;j\})(a)$ & $\{\Gamma,p_{i,j}:\Id_A(a_i,a),q_{i,j}:\Id_{(a_i\mapsto_i\{b_{i,j}\;|\;j\})(a)}(\mathsf{subst}(p,b_{i,j}),b)\vdash c_{ij}:C[b/y]\}_{i,j}$ 
\end{tabular}
}
\RightLabel{$(a_i\mapsto_i\{b_{i,j}\;|\;j\})(x)$-\textsf{E'}}
\UnaryInfC{$\Gamma\vdash \mathsf{case}^{p,q}_{(a_i\mapsto_i\{b_{i,j}\;|\;j\})(a),C}(b,\{c_{i,j}\}_{i,j}):C[b/y]$}
\DisplayProof
\end{tabular}
}}
\caption{\label{fig:case1} Rules for a finite inductive type family $x:A\vdash (a_i\mapsto_i\{b_{i,j}\;|\;j\})(x)\;\mathsf{type}$, generated by $\vdash b_{i,1},\ldots,b_{i,m_i}:(a_i\mapsto_i\{b_{i,j}\;|\;j\})(x)[a_{i}/x]$ for $\vdash a_1,\ldots,a_n:A$.}
\end{subfigure}
\caption{\label{fig:dtt-} The rules generating the axioms for {$\DTTGame$}$-$\mccorrect{.}}
\end{figure}
In addition to term and type formation rules for $\Sigma$-, $\Pi$- and $\Id$-types,  we have a mechanism for forming finite inductive type families, which play the r\^ole of ground types. Let $\vdash A\type$\footnote{Perhaps, it would be more elegant to allow the specification of an inductive type family depending on an arbitrary context $\vdash x_1:A_1,\ldots,x_n:A_n\ctxt$ rather than a single type. However, given that we consider a system with strong $\Sigma$-types, the two are equivalent and only letting inductive families depend on a single types allows us to keep notation more lightweight.}. Then, we can give a finite inductive definition of a type family $x:A\vdash (a_i\mapsto_i\{b_{i,j}\;|\;j\})(x)\;\mathsf{type}$ by specifying finitely many closed terms $a_1,\ldots, a_n :A$ and distinct symbols $b_{i,j}$, $1\leq i\leq n$, $1\leq j\leq m_i$. The idea is that $B=(a_i\mapsto_i\{b_{i,j}\;|\;j\})(x)$ is a type family, such that $(a_i\mapsto_i\{b_{i,j}\;|\;j\})(a_i)$ contains precisely the distinct closed terms $b_{i,1},\ldots,b_{i,m_i}$. These type families are more limited than general inductive definitions as they are freely generated by \emph{closed} terms, while one would allow open terms in the general case \cite{dybjer1994inductive}.  This means that we precisely get the inductive type families with finitely many non-empty fibres which are all finite types. An example the reader may want to keep in mind is given by calendars in format dd-mm (for the year 1984, for instance): here $A=\mathsf{mm}:=\{01,\ldots,12\}$ and $B=\mathsf{dd-mm}:=(i\mapsto_i \{01\textnormal{-}i,\ldots,N_i\textnormal{-}i\})(x)$, where $N_i$ is $29$, $30$, or $31$, depending on the number of days the month in question has.

We interpret such a definition as specifying $F$-, $I$- and $E$-rules for $(a_i\mapsto_i\{b_{i,j}\;|\;j\})(x)$.  In fact, instead of $(a_i\mapsto_i\{b_{i,j}\;|\;j\})(x)$-$E$, we may equivalently specify an alternative elimination rule $(a_i\mapsto_i\{b_{i,j}\;|\;j\})(x)$-$E'$. While the former is the usual elimination rule for finite inductive type families, the latter is closer, in a sense, to the intuition of our model and arises naturally in the completeness proofs in chapter \ref{ch:5}. Here, we write $\mathsf{subst}$ for the following principle of indiscernability of identicals.\\
\\
\resizebox{\linewidth}{!}{\begin{tabular}{l}
$$\footnotesize
\AxiomC{$\Gamma,x:A\vdash B\;\mathsf{type}$}
\UnaryInfC{$\Gamma,x:A\vdash \lambda_{y:{B}}y:\Pi_{y:B}B$}
\AxiomC{}
\UnaryInfC{$\Gamma,x,x':A,p:\Id_{A}(x,x')\vdash x,x':A$}
\AxiomC{}
\UnaryInfC{$\Gamma,x,x':A,p:\Id_{A}(x,x')\vdash p:\Id_{A}(x,x')$}
\RightLabel{\textsf{$\Id$-$E$}}
\TrinaryInfC{$\Gamma,x,x':A,p:\Id_{A}(x,x')\vdash \mathsf{subst}(p,-):\Pi_{B}B[x'/x]$}
\DisplayProof$$\end{tabular}\hspace{55pt}\;}\quad\\
\\
More generally, for a context $\Gamma,x:A,y_1:B_1,\ldots,y_n:B_n$, we can inductively define\\
\resizebox{\linewidth}{!}{$
\Gamma,x,x':A,p:\Id_A(x,x'), y_1:B_1,\ldots , y_{n-1}:B_{n-1}\vdash \mathsf{subst}(p,-): \Pi_{y_n:B_n}B_n[x'/x,\ldots,\mathsf{subst}(p,y_i)/y_i,\ldots].$}\\
We note that $(a_i\mapsto_i\{b_{i,j}\;|\;j\})(x)$-$E$ and $(a_i\mapsto_i\{b_{i,j}\;|\;j\})(x)$-$E'$ really are equivalent in a precise sense.

\begin{theorem}
We have translations between $(a_i\mapsto_i\{b_{i,j}\;|\;j\})(x)$-$E$ and $(a_i\mapsto_i\{b_{i,j}\;|\;j\})(x)$-$E'$. These become mutually inverse in the equational theory of {$\DTTGame$}.
\end{theorem}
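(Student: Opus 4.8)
The plan is to present each of the two translations as a definitional extension: assuming one of the two eliminators as a primitive, together with the $\Pi$- and $\Id$-rules already available in $\DTTGame$, I define a term former with exactly the typing of the other and check that its characteristic $\beta$-rule (the one inherited from $\STTGame$ along the syntactic translation) is derivable. Having done this in both directions, I then verify that composing the two translations, in either order, returns a term judgementally equal to the original eliminator, the equations used being precisely those of $\DTTGame$. Write $B:=(a_i\mapsto_i\{b_{i,j}\;|\;j\})$ so that $B(t)$ denotes the fibre over a term $t:A$; if one insists on the $\Gamma$-closed form of the motive in the $E$-rule, one first folds $\Gamma$ into the motive using the strong $\Sigma$-types, as discussed in the text.

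\emph{From $E'$ to $E$.} Given the data of an $E$-instance --- a motive $\Gamma,x:A,y:B(x)\vdash C\type$, branches $\Gamma\vdash c_{i,j}:C[a_i/x,b_{i,j}/y]$, and $\Gamma\vdash a:A$, $\Gamma\vdash b:B(a)$ --- I instantiate $x:=a$ to obtain an $E'$-motive $\Gamma,y:B(a)\vdash C[a/x]\type$, and I transport each branch along the identity proofs supplied by $E'$: put $c^\dagger_{i,j}:=\mathsf{subst}(q_{i,j},\mathsf{subst}(p_{i,j},c_{i,j}))$, which (using the generalised $\mathsf{subst}$ over the context $x:A,y:B(x)$) first moves $c_{i,j}$ from $C[a_i/x,b_{i,j}/y]$ to $C[a/x,\mathsf{subst}(p_{i,j},b_{i,j})/y]$ along $p_{i,j}:\Id_A(a_i,a)$, and then to $C[a/x,b/y]$ along $q_{i,j}:\Id_{B(a)}(\mathsf{subst}(p_{i,j},b_{i,j}),b)$. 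I then set $\mathsf{case}_{B(a),C}(b,\{c_{i,j}\}_{i,j}):=\mathsf{case}^{p,q}_{B(a),C[a/x]}(b,\{c^\dagger_{i,j}\}_{i,j})$; its $\beta$-rule at $a:=a_k$, $b:=b_{k,l}$ follows from the $\beta$-rule of $E'$ together with the computation rule $\mathsf{subst}(\refl{t},-)=\id$.

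\emph{From $E$ to $E'$.} Conversely, given the data of an $E'$-instance --- a motive $\Gamma,y:B(a)\vdash C\type$ and branches $\Gamma,p_{i,j}:\Id_A(a_i,a),q_{i,j}:\Id_{B(a)}(\mathsf{subst}(p_{i,j},b_{i,j}),b)\vdash c_{i,j}:C[b/y]$ --- I run $E$ with the iterated $\Pi$-motive
\[
C'\;:=\;\Pi_{p:\Id_A(x,a)}\Pi_{q:\Id_{B(a)}(\mathsf{subst}(p,y'),b)}C[b/y]
\]
in context $\Gamma,x:A,y':B(x)$, whose fibre $C'[a_i/x,b_{i,j}/y']$ is exactly the type of $\lambda_p\lambda_q c_{i,j}$. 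Then $E$ produces $e:=\mathsf{case}_{B(a),C'}(b,\{\lambda_p\lambda_q c_{i,j}\}_{i,j})$ of type $C'[a/x,b/y']=\Pi_{p:\Id_A(a,a)}\Pi_{q:\Id_{B(a)}(\mathsf{subst}(p,b),b)}C[b/y]$. Since $\mathsf{subst}(\refl{a},b)=b$ (by the $\Id$- and $\Pi$-computation rules), the term $\refl{b}$ inhabits $\Id_{B(a)}(\mathsf{subst}(\refl{a},b),b)$, and I set $\mathsf{case}^{p,q}_{B(a),C}(b,\{c_{i,j}\}_{i,j}):=e(\refl{a})(\refl{b})$; its $\beta$-rule follows from that of $E$, $\Pi$-$\beta$, and $\mathsf{subst}(\refl{t},-)=\id$.

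\emph{Round-trips and the main obstacle.} It remains to show that the two composites are the identity up to $\DTTGame$-equality. Unfolding the definitions, both checks reduce to applications of $\Pi$-$\beta$ and $\Pi$-$\eta$, the rule $\mathsf{subst}(\refl{t},-)=\id$, the commuting conversion $\{a_i\;|\;i\}$-\textsf{Comm}-$\lambda$, and the $\beta$-rule of the inductive family itself. The main obstacle is the bookkeeping in the $E\to E'\to E$ round-trip: the intermediate step repackages all branches into iterated $\Pi$-types, so one must first commute the outer $\mathsf{case}$ past the $\lambda$-abstractions of $C'$ (this is exactly where $\{a_i\;|\;i\}$-\textsf{Comm}-$\lambda$, in the form appropriate for $\DTTGame$, is invoked), then collapse the $\lambda_p\lambda_q$ against the trailing applications to $\refl{a}$ and $\refl{b}$, and finally check that the two nested layers of $\mathsf{subst}$ introduced by the other translation vanish once fed reflexivity proofs. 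A subsidiary point requiring care is that $\DTTGame$ provides only the restricted $\eta'$-rule and the explicitly listed commuting conversions for inductive families, so one must confirm that no appeal to the full inductive $\eta$-law is smuggled in; this should hold because in each composite the scrutinee is a genuine element of $B(a)$, so only $\beta$ at constructors together with \textsf{Comm}-$\lambda$ is ever needed.
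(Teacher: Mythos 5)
Your proposal is correct and takes essentially the same route as the paper: the same two translations (transporting branches along $\mathsf{subst}$ to derive $E$ from $E'$, and an iterated $\Pi$-motive applied to reflexivity proofs to derive $E'$ from $E$), with the round trips closed by the $\{a_i\;|\;i\}$-\textsf{Comm}-$\lambda$ conversion. The only presentational difference is that the paper verifies mutual inverseness by passing to the $\STTGame$ translation --- where the equational theory of $\DTTGame$ is defined and the $\mathsf{subst}$'s and identity proofs collapse to $\reflind$ --- rather than carrying out the bookkeeping inside $\DTTGame$ itself.
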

\begin{proof} Let us write $B$ for $(a_i\mapsto_i\{b_{i,j}\;|\;j\})(x)$. In the presence of $B$-$E'$, we can define $B$-$E$ by noting that\\
\\
\resizebox{\linewidth}{!}{
$$
\footnotesize
\AxiomC{$x:A,y:B,z_{1,1}:C[a_1/x,b_{1,1}/y],\ldots,z_{n,m_n}:C[a_n/x,b_{n,m_n}/y]\vdash z_{i,j}:C[a_i/x,b_{i,j}/y]$}
\UnaryInfC{$x:A,y:B,z_{1,1}:C[a_1/x,b_{1,1}/y],\ldots,z_{n,m_n}:C[a_n/x,b_{n,m_n}/y],p_{i,j}:\Id_A(a_i,x),q:\Id_{B}(\mathsf{subst}(p_{i,j},b_{i,j}),y)\vdash  z_{i,j}:C[a_i/x,b_{i,j}/y]$}
\UnaryInfC{$x:A,y:B,z_{1,1}:C[a_1/x,b_{1,1}/y],\ldots,z_{n,m_n}:C[a_n/x,b_{n,m_n}/y],p_{i,j}:\Id_A(a_i,x),q_{i,j}:\Id_{B}(\mathsf{subst}(p_{i,j},b_{i,j}),y)\vdash  \mathsf{subst}(q_{i,j},\mathsf{subst}(p_{i,j},z_{i,j})):C$}
\DisplayProof
$$}\\
\\
and applying $B$-$E'$ with\\$A'=\Sigma_{x:A}\Sigma_{y:B}\Sigma_{z_{1,1}:C[a_1/x,b_{1,1}/y]}\cdots\Sigma_{z_{n,m_n-1}:C[a_n/x,b_{n,m_n-1}/y]}C[a_n/x,b_{n,m_n}/y]$, $a=x$ (the projection to $A$), $b=y$ (the projection to $B$) and the from $z_{i,j}$ derived expression above for $c_{i,j}$ to derive $B$-$E'$.\\
\\
Conversely, in the presence of $B$-$E$, we derive $B$-$E'$:\\
\\
\resizebox{\linewidth}{!}{
\AxiomC{$\{x':A',p_{i,j}:\Id_A(a_i,a),q_{i,j}:\Id_{B[a/x]}(\mathsf{subst}(p_{i,j},b_{i,j}),b)\vdash c_{i,j}:C[b/y]\}_{i,j}$}
\UnaryInfC{$\{\vdash  \lambda_{x':A'}\lambda_{p_{i,j}:\Id_A(a_i,a)}\lambda_{q_{i,j}:\Id_{B[a/x]}(\mathsf{subst}(p_{i,j},b_{i,j}),b)}c_{i,j}:\Pi_{x':A'}\Pi_{p_{i,j}:\Id_A(a_i,a)}\Pi_{q_{i,j}:\Id_{B[a/x]}(\mathsf{subst}(p_{i,j},b_{i,j}),b)}C[b/y]\}_{i,j}$}
\UnaryInfC{$x:A,y':B\vdash \mathsf{case}_{B,\Pi_{x':A'}\Pi_{p:\Id_A(x,a)}\Pi_{q:\Id_{B[a/x]}(\mathsf{subst}(p,y'),b)}C[b/y]}(y',\{\lambda_{x':A'}\lambda_{p_{i,j}:\Id_A(a_i,a)}\lambda_{q_{i,j}:\Id_{B[a/x]}(\mathsf{subst}(p_{i,j},b_{i,j}),b)}c_{i,j}\}):\Pi_{x':A'}\Pi_{p:\Id_A(x,a)}\Pi_{q:\Id_{B[a/x]}(\mathsf{subst}(p,y'),b)}C[b/y]$}
\UnaryInfC{$x':A'\vdash \mathsf{case}_{B,\Pi_{x':A'}\Pi_{p:\Id_A(x,a)}\Pi_{q:\Id_{B[a/x]}(\mathsf{subst}(p,y'),b)}C[b/y]}(y',\{\lambda_{x':A'}\lambda_{p_{i,j}:\Id_A(a_i,a)}\lambda_{q_{i,j}:\Id_{B[a/x]}(\mathsf{subst}(p_{i,j},b_{i,j}),b)}c_{i,j}\})[a/x,b/y'](x',\refl{a},\refl{b}):C[b/y]$}
\DisplayProof
}
\quad\\
These translations are easily seen to be mutually inverse in their translation to {$\STTGame$}, which we define in the next section, due to the $\{b_{i,j}\;|\; i,j\}$-$\mathrm{Comm}$-$\lambda$-rule. Therefore, they are mutually inverse in {$\DTTGame$}.
\end{proof}

We conclude that $\mathsf{case}$ and $\mathsf{case}^{p,q}$ are equivalent. We prefer to use the latter as the default, as it naturally arises in the completeness proofs in chapter \ref{ch:5}. For the purposes of proof theory, however, the former may be the preferred choice, as the metatheory of the resulting system is known to be well-behaved (at least in absence of the commutative conversions).

\subsubsection{A Syntactic Translation from {$\DTTGame$} to {$\STTGame$}}
\label{sec:trans} Morally, {$\DTTGame$} should describe the same algorithms as {$\STTGame$} (at least at the type hierarchy over finite types), possibly assigning them a more precise type. Formally, this idea is captured by the existence of a syntactic translation from \mccorrect{{$\DTTGame$}$-$} into {$\STTGame$}. By noting that it is compositional and faithful on all term constructors, we note that we can add to {$\DTTGame$}$-$ the equational theory of {$\STTGame$} under this translation. We refer to the theory we obtain as {$\DTTGame$}. Some examples of equations this implies are $\beta$- and $\eta$-laws for $1$-, $\Sigma$- and $\Pi$-types and finite inductive type  families and commutative conversions for the $\mathsf{case}$-constructs, analogous to those for their simply typed equivalents, as well as $\beta$-laws for $\Id$-types which state that $\lbi{\refl{z}}{\refl{z}}{d}=d$. We note that we then have a faithful translation $(-)^T$ from {$\DTTGame$} to {$\STTGame$}.

The translation $(-)^T$ is inductively defined on types and terms through the schema of figure \ref{fig:translation}. This translation will later suggest our game theoretic interpretation of dependent type theory, by demanding that it agrees with (a total, finitary equivalent of) the usual \PCF{}\mccorrect{ }game semantics \cite{abramsky2000full} after translating the syntax. A semantically inclined reader may want to think about the translation we define as a faithful non-full functor $(-)^T$ from the syntactic category (or, category of contexts) of {$\DTTGame$} to the syntactic category of {$\STTGame$}.

\begin{figure}[!tb]
\fbox{
\resizebox{\linewidth}{!}{$
\begin{array}{lll}
\vdash b_{i,j}:(a_i\mapsto_i\{b_{i,j}\; |\; j\})(a_i)  & \mapsto & \vdash b_{i,j}:\{b_{i,j}\;|\; i,j\} \\ x:A,y:B,z_{1,1}:C[a_1/x,b_{1,1}/y],\ldots, & \mapsto & x:A^T,y:B^T,z_{1,1}:C^T,\ldots, z_{n,m_n}:C^T\\
z_{n,m_n}:C[a_n/x,b_{n,m_n}/y]\vdash \mathsf{case}_{B,C}(y,\{z_{i,j}\}_{i,j}):C  & & \vdash \mathsf{case}_{B^T,C^T}(y,\{z_{i,j}\}_{i,j}):C^T\\
 x':A'\vdash \mathsf{case}^{p,q}_{B[a/x],C}(b,\{c_{i,j}\}_{i,j}) :C[b/y]&\mapsto &x':(A')^T\vdash \mathsf{case}_{B^T,C^T}(b^T,\{c_{i,j}^T[\reflind/p_{i,j},\reflind/q_{i,j}]\}_{i,j}):C^T\\
x:A\vdash \langle\rangle:1 &\mapsto & x:A^T \vdash \langle\rangle:1\\
x:A\vdash \langle b,c\rangle :\Sigma_{y:B}C & \mapsto & x:A^T\vdash \langle b^T,c^T\rangle :B^T\times C^T\\
x:A\vdash \mathsf{fst}(d):B & \mapsto & x:A^T\vdash \mathsf{fst}(d^T):B^T\\
x:A\vdash \mathsf{snd}(d):C[\mathsf{fst}(d)/y] & \mapsto & x:A^T\vdash \mathsf{snd}(d^T):C^T\\
x:A\vdash \lambda_{y:B}c :\Pi_{y:B}C & \mapsto & x:A^T\vdash \lambda_{y:B^T}c^T :B^T\Rightarrow C^T\\
x:A\vdash f(b) :C[b/y] & \mapsto & x:A^T\vdash f^T(b^T) :C^T\\
x:A\vdash \refl{b}:\Id_B(b,b) & \mapsto & x:A^T\vdash \reflind:\{\reflind\}\\
x:A\vdash \mathsf{let}\; p\;\mathsf{be}\;\refl{z}\;\mathsf{in}\; d :D[b/y,b'/y',p/w] &\mapsto & x:A^T \vdash \mathsf{case}_{\{\reflind\},D^T}(p^T,\{d^T[b^T/z]\}):D^T\\
\Gamma, x:A,\Delta\vdash x:A &\mapsto & \Gamma^T,x:A^T,\Delta^T\vdash x:A^T
\end{array}$
}}
\caption{\label{fig:translation} A syntactic translation \mccorrect{on terms and types} from {$\DTTGame$}$-$ into {$\STTGame$}. Note that it is functorial in the sense that it respects identifiers and substitutions.}
\end{figure}

Finally, we define {$\DTTGame$} as the theory generated by the rules of {$\DTTGame$}$-$ together with the final rule, $\DTTGame$\textsf{-Eq}, of figure \ref{fig:dtteqfinal}, which says that {$\DTTGame$} inherits the judgemental equalities of {$\STTGame$}\mccorrect{.} We note that $\DTTGame$-\textsf{Eq} gives us a concise way of equipping {$\DTTGame$} with the appropriate $\beta$- and $\eta$-rules for its type formers as well as all necessary commutative conversions. We sometimes also consider the rule \textsf{Ty-Ext}, which expresses that types are extensional from the point of view of their sections\footnote{It remains to be verified if type checking remains decidable in the presence of this rule.}.

\begin{figure}[!tb]
\fbox{\resizebox{\linewidth}{!}{
\parbox{1.2\linewidth}{\begin{tabular}{l}
\AxiomC{$\Gamma\vdash_{\DTTGame} a:A$}
\AxiomC{$\Gamma\vdash_{\DTTGame}  b:A$}
\AxiomC{$\Gamma^T\vdash_{\STTGame} a^T= b^T:A^T$}
\RightLabel{{$\DTTGame$}-\textsf{Eq}}
\TrinaryInfC{$\Gamma\vdash_{\DTTGame} a= b:A$}
\DisplayProof\\
\\
\AxiomC{
\begin{tabular}{l}
$\forall_{1\leq i\leq 2}\;x_1:A_1,\ldots,x_n:A_n\vdash B_i\type$\\
$\vdash B_1^T=B_2^T$\\
$\forall_{\vdash t_1:A_1}\ldots \forall_{\vdash t_n:A_n[t_1/x_1,\ldots,t_{n-1}/x_{n-1}]}\vdash B_1[t_1/x_1,\ldots,t_n/x_n]=B_2[t_1/x_1,\ldots,t_n/x_n]$
\end{tabular}}
\RightLabel{\textsf{Ty-Ext}}
\UnaryInfC{$\vdash \Pi_{x_1:A_1}\cdots \Pi_{x_n:A_n} B_1=\Pi_{x_1:A_1}\cdots \Pi_{x_n:A_n} B_2$}
\DisplayProof
\end{tabular}
}\hspace{80pt}\;}
}
\caption{\label{fig:dtteqfinal} The final rule, $\DTTGame$-\textsf{Eq}, which {$\DTTGame$} has on top of {$\DTTGame$}$-$, letting it inherit the equational theory of {$\STTGame$}, as well as the type extensionality rule \textsf{Ty-Ext} which we sometimes consider.}
\end{figure}
We note that, by induction, $(-)^T$ respects the judgemental equalities introduced by the rule above, meaning that $(-)^T$ defines a translation from {$\DTTGame$} to {$\STTGame$}. This lets us conclude the following.

\begin{corollary}\label{thm:trans}
The translation $(-)^T$ defined above defines a faithful translation from {$\DTTGame$} to {$\STTGame$}.
\end{corollary}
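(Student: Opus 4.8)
The plan is to verify the two halves of the claim separately: that $(-)^T$ \emph{is} a translation (it sends derivable judgements of {$\DTTGame$} to derivable judgements of {$\STTGame$}, compatibly with the structure), and that it is \emph{faithful} (it reflects equality of morphisms). The guiding observation is that {$\DTTGame$} is obtained from {$\DTTGame$}$-$ purely by closing under the single extra rule $\DTTGame$-\textsf{Eq} of figure \ref{fig:dtteqfinal}, and that this rule is designed precisely so that faithfulness holds on the nose; so the work is mostly bookkeeping on top of the inductive definition of figure \ref{fig:translation}.

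For the first half I would argue by induction on derivations in {$\DTTGame$}. On the rules of {$\DTTGame$}$-$, the required commutation is exactly the content of the schema in figure \ref{fig:translation}: $(-)^T$ is defined so as to respect the $F$-, $I$- and $E$-rules, and (by the functoriality noted in the caption of figure \ref{fig:translation}) it respects identifiers, weakening and capture-avoiding substitution, hence commutes with all the structural rules of figure \ref{fig:dttstructural} — in particular with \textsf{Tm-Conv}, \textsf{Ty-Cong} and \textsf{Tm-Cong}, using $(B[a/x])^T = B^T[a^T/x]$ together with the matching congruence rules of {$\STTGame$}. The only genuinely new case is $\DTTGame$-\textsf{Eq} itself: its conclusion $\Gamma\vdash_{\DTTGame} a = b : A$ translates to $\Gamma^T\vdash_{\STTGame} a^T = b^T : A^T$, which is verbatim the third hypothesis of the rule, while the translations of the first two hypotheses supply $\Gamma^T\vdash_{\STTGame} a^T : A^T$ and $\Gamma^T\vdash_{\STTGame} b^T : A^T$; so there is nothing to check. (If one also wishes to include \textsf{Ty-Ext}, the same works: the two $\Pi$-types have a common translation up to the congruence generated by $\vdash B_1^T = B_2^T$.) This shows that $(-)^T$ descends to a well-defined functor from the syntactic category of {$\DTTGame$} to that of {$\STTGame$}.

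For faithfulness I would unwind what a morphism of syntactic categories is. A morphism $\Gamma \to \Delta$ with $\Delta = (x_1:A_1,\ldots,x_n:A_n)$ is a substitution, equivalently — using the strong $\Sigma$-types available in {$\DTTGame$} — a single term $\Gamma\vdash t : \Sigma_{x_1:A_1}\cdots\Sigma_{x_{n-1}:A_{n-1}}A_n$, taken up to {$\DTTGame$}-equality; since $\Sigma^T = \times$ and pairing and projections translate to pairing and projections, $(-)^T$ sends it to $\Gamma^T\vdash t^T : A_1^T\times\cdots\times A_n^T$ compatibly with this packing, and $(\Sigma_{x_1:A_1}\cdots A_n)^T = A_1^T\times\cdots\times A_n^T$. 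Hence, given two such morphisms $t,u$ with $\Gamma^T\vdash_{\STTGame} t^T = u^T$ derivable, the rule $\DTTGame$-\textsf{Eq} applied to the terms $t,u$ of this $\Sigma$-type immediately yields $\Gamma\vdash_{\DTTGame} t = u$. (Without the packing one instead inducts on $n$, converting the type of the $i$-th component of $u$ to that of the $i$-th component of $t$ via \textsf{Tm-Conv} and \textsf{Ty-Cong} once the first $i-1$ components have been equated, then applying $\DTTGame$-\textsf{Eq} componentwise.) Therefore $(-)^T$ is faithful, completing the corollary.

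I do not expect a serious obstacle: the rule $\DTTGame$-\textsf{Eq} was engineered precisely so that this statement falls out. The only points requiring genuine care are, first, confirming that the equational closure of {$\DTTGame$} under the shared structural and congruence rules does not produce equalities escaping the control of $\DTTGame$-\textsf{Eq} — i.e. that $(-)^T$ really does commute with \textsf{Weak}, \textsf{Subst}, \textsf{Tm-Conv}, \textsf{Ty-Cong}, \textsf{Tm-Cong} and \textsf{C-Ext-Eq}, which rests on a routine induction over pre-syntax establishing compositionality of $(-)^T$ with substitution; and second, the small amount of bookkeeping needed to reduce equality of substitutions to equality of single terms so that $\DTTGame$-\textsf{Eq} applies verbatim, which the strong $\Sigma$-types handle cleanly. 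Both are laborious only in the way all such syntactic inductions are; there is no conceptual difficulty.
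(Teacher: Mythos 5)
Your proposal is correct and follows essentially the same route as the paper, which treats the corollary as holding nearly by construction: $\DTTGame$ is defined by adding to $\DTTGame$$-$ exactly the equations reflected from $\STTGame$ via $(-)^T$, so well-definedness is the routine induction you describe and faithfulness is literally the content of the rule $\DTTGame$-\textsf{Eq}. The extra bookkeeping you supply (commutation with the structural rules, packing substitutions via $\Sigma$-types) is a harmless elaboration of the same argument.
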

We observe that we have defined a flavour of intensional type theory.
\begin{remark}
The reader might wonder how the equational theory of {$\DTTGame$} compares to the usual ones we use for dependent type theories. We note that it implies all the usual $\beta$- and $\eta$-rules \mccorrect{(weak $\eta'$ for inductive families)} for the type formers we consider (as well as some $\PCF$-like commutative conversions), with the exception of the $\eta$-rule $\lbi{z}{\refl{x}}{c[\refl{x}/z]}=c$ for $\Id$-types.

Indeed, we easily see that {$\DTTGame$} refutes one of the notorious consequences of $\Id-\eta$, the principle of equality reflection,\\
\\
\resizebox{\linewidth}{!}{\parbox{1.2\linewidth}{
$$\AxiomC{$\Gamma\vdash p:\Id_A(f,g)$}
\RightLabel{$\mathsf{Reflection}$}
\UnaryInfC{$\Gamma\vdash f= g:A$}
\DisplayProof,
$$}}\\
\\
as, for instance, for $\Gamma = x:\{a\}$ and $A= \{a\}$, $f=\mathsf{case}_{\{a\},\{a\}}(x,a)$ and $g=a$, we do not have that $x:\{a\}\vdash f= g:\{a\}$, while we do have $x:\{a\}\vdash \mathsf{case}_{\{a\},\Id_{\{a\}}(f,g)}(x,\refl{a}):\Id_{\{a\}}(f,g)$.
\end{remark}

\subsection{Categorical Semantics}
In this section, we briefly discuss a notion of categorical semantics for dependently typed equational logic.

It is clear that a type theory with dependent types should be modelled by some indexed category $\Bcat^{op}\ra{\Ccat}\Cat$. (This would be even more obvious if we represented context morphisms as first class objects in the syntax, in the style of Pitts \cite{pitts1995categorical}.) Indeed, we have the category $\Bcat$ which interprets the contexts and the morphisms between them. We have a type theory in each context $\Gamma$, which is modelled by some category $\Ccat(\sem{\Gamma})$ with structure to interpret the appropriate connectives. And, whenever two contexts are related by some context morphism $\Gamma' \vdash \gamma:\Gamma$, we have substitution operations going from the type theory in context $\Gamma$ to that in context $\Gamma'$, which are modelled by structure preserving functors $\Ccat(\sem{\Gamma})\ra{\Ccat(\sem{\gamma})}\Ccat(\sem{\Gamma'})$ (as substitution usually is compatible with all term and type formers). In this view, it is easily seen that existential quantifiers should get interpreted, \`a la Lawvere \cite{lawvere1970equality}, as left adjoints to these substitution functors while universal quantifiers are their right adjoints.

The missing ingredient is that, in dependent type theory, quantification is not external but internal: the entities (in $\Bcat$) we are quantifying over are of the same nature as the proofs of the predicates (in $\Ccat$) that we quantify over. The idea is that objects in $\Bcat$ can be built as lists of objects in the fibres of $\Ccat$ and that the morphisms in $\Bcat$ (the interpretation of context morphisms) then arise as corresponding lists of morphisms in the fibres of $\Ccat$ (the interpretation of terms). This intuition is formalised by the so-called comprehension axiom.

\begin{definition}[Comprehension Axiom]\label{def:comprehension} Let $\Bcat^{op}\ra{\Ccat}\Cat$ be a strict\footnote{For brevity, from now on we shall often drop the modifier ``strict'' for indexed structures. For instance, if we mention an indexed honey badger, we shall really mean a strict indexed honey badger.} indexed category (writing $\Cat$ for  the category of small categories and functors). Given $B'\ra{f} B$ in $\Bcat$, let us write $-\{f\}$ for the change of base functor \mccorrect{$\Ccat(f):\Ccat(B)\ra{}\Ccat(B')$}. Recall that $\Ccat$ is said to satisfy the \emph{comprehension axiom} if
\begin{itemize}
\item $\Bcat$ has a terminal object $\cdot$;
\item all fibres $\Ccat(B)$ have terminal objects $1_B$ which are stable under change of base (for which we just write $1$);
\item the presheaves (writing $\Set$ for  the category of small sets and functions)
\begin{diagram}
(\Bcat/B)^{op} & \rTo & \Set\\
(B'\ra{f}B) & \rMapsto  & \Ccat(B')(1,C\{f\})
 \end{diagram}
are representable. That is, we have representing objects $B.C\ra{\proj{B}{C}}B$ and natural bijections
\begin{diagram}
\Ccat(B')(1,C\{f\})& \rTo^{\cong} &\Bcat/B(f,\proj{B}{C})\\
c & \rMapsto & \langle f,c\rangle .
\end{diagram}
\end{itemize}
\end{definition}
\mccorrect{We} write $\diagv{B}{C}$ for the element of $\Ccat(B.C)(1,C\{\proj{B}{C}\})$ corresponding to $\id_{\proj{B}{C}}$ (the universal elements of the representation). We define the morphisms
\begin{diagram} B.C & \rTo^{\diag{B}{C}:=\langle\id_{B.C},\diagv{B}{C} \rangle} & B.C.C\{\proj{B}{C}\};\\
B'.C\{f\} & \rTo^{\qu{f}{C}:=\langle\proj{B'}{C\{f\}};f,\diagv{B'}{C\{f\}}\rangle} & B.C.\end{diagram}
We have maps (defining the \emph{comprehension functor})
\begin{diagram}
\Ccat(B)(C',C)& \rTo^{\proj{B}{-}} &\Bcat/B(\proj{B}{C'},\proj{B}{C})\\
c & \rMapsto &\proj{B}{c}:= \langle \proj{B}{C'},\diagv{B}{C'};c\{\proj{B}{C'}\}\rangle .
\end{diagram}
When these are full and faithful, we call the comprehension \emph{full and faithful}, respectively. When it induces an equivalence \mccorrect{$\Ccat(\cdot)\cong \Bcat/\cdot\cong \Bcat$}, we call the comprehension \emph{democratic}.

Note that the comprehension axiom\mccorrect{ }says that we build as lists of closed terms the morphisms into objects that arise as lists of types in our category of contexts $\Bcat$. Demanding the comprehension functor to be fully faithful means that also the\mccorrect{ }terms in $\Ccat(\Gamma)(A,B)$ correspond precisely with the\mccorrect{ }terms \mccorrect{in} $\Ccat(\Gamma.A)(1,B\{\proj{\Gamma}{A}\})$.  This is essential to get a precise fit with the syntax for cartesian dependent type theory. The notion of democracy corresponds to the syntactic condition that all contexts are formed from the empty context by adjoining types.
\begin{remark}[Correspondence with Comprehension Categories]
The definition of an indexed category with comprehension is easily seen to be equivalent to Jacobs' notion of a split comprehension category with unit \cite{jacobs1993comprehension}. We prefer this formulation in terms of indexed categories as strictness is important in computer science \mccorrect{(syntactic substitution is strict)}, in which case the fibrational perspective is needlessly abstract. Jacobs' notion of fullness of a comprehension category corresponds -- confusingly -- to our demand of the comprehension both being full and faithful. We believe it is useful to use this more fine-grained terminology.

\begin{mccorrection}
Let us make the correspondence a bit more precise -- the reader can find all details in \cite{jacobs1993comprehension}. There is a well-known correspondence between strict indexed categories and split fibrations:
\begin{itemize}
\item given a strict indexed category $\Bcat^{op}\ra{\Ccat}\Cat$, we can define a split fibration $\int\Ccat\ra{\mathbf{p}_\Ccat}\Bcat$ by using the Grothendieck construction: we take $\int\Ccat$ to have objects $\ob(\int\Ccat):=\Sigma_{B\in\ob(\Bcat)}\ob(\Ccat(B))$ and morphisms $$\left(\int\Ccat\right)(\langle B,C\rangle,\langle B' , C'\rangle):=\Sigma_{b\in \Bcat(B,B')}\Ccat(C,C'\{b\})$$
and we take $\mathbf{p}_\Ccat$ to be the projection to the first component;
\item given a split fibration $\mathcal{E}\ra{\mathbf{p}}\Bcat$, we can define a strict indexed category $\Bcat\ra{\Ccat}\Cat$ as $\Ccat(B):=\mathbf{p}^{-1}(B)$ where the functors $-\{f\}:\Ccat(B')\ra{}\Ccat(B)$ arise as the inverse image functors of the split fibration $\mathcal{E}$ along $B\ra{f}B'\in\Bcat$. 
\end{itemize}
Now, our formulation of the comprehension axiom in definition \ref{def:comprehension} for a strict indexed category $\Bcat^{op}\ra{\Ccat}\Cat$ then is easily seen to correspond to putting the following conditions on $\int\Ccat\ra{\mathbf{p}_\Ccat}\Bcat$:
\begin{itemize}
\item $\Bcat$ has a terminal object $\cdot$;
\item $\mathbf{p}_\Ccat\ra{\mathbf{p}_\Ccat}\id_\Bcat$ has a fibred right adjoint $\id_\Bcat\ra{1}\mathbf{p}_\Ccat$ (fibred terminal objects);
\item this functor $\Bcat\ra{1}\int\Ccat$ has a further right adjoint $\int\Ccat\ra{-.-}\Bcat; \langle B,C\rangle \mapsto B.C$.
\end{itemize}
Jacobs calls this structure a \emph{split comprehension category with unit}.
\end{mccorrection}
\end{remark}

Strict indexed categories with \mccorrect{ \emph{full and faithful} comprehension admit a more minimalistic presentation in the form of} Dybjer's notion of categories with families \mccorrect{with unit}. Recall that these categories are another standard notion of  model of dependently typed equational logic~\cite{dybjer1995internal,hofmann1997syntax}.

\begin{definition}[Category with Families] \label{def:cwf} A category with families (CwF) is a category $\Bcat$ with a terminal object $\cdot$, for all objects $\Gamma$ a set $\mathsf{Ty}(\Gamma)$, for all $A\in \mathsf{Ty}(\Gamma)$ a set $\mathsf{Tm}(\Gamma,A)$, for all $\Gamma'\ra{f}~\Gamma$ in $\Bcat$ functions $\mathsf{Ty}(\Gamma)\ra{-\{f\}}\mathsf{Ty}(\Gamma')$ and $\mathsf{Tm}(\Gamma,A)\ra{-\{f\}}\mathsf{Tm}(\Gamma',A\{f\})$, such that
$$\begin{array}{llll}
A\{\mathsf{id}_\Gamma\}=A & \textnormal{(Ty-Id)}\hspace{45pt}\; & A\{f;g\}=A\{g\}\{f\}\hspace{23pt} & \textnormal{(Ty-Comp)}\\
t\{\mathsf{id}_\Gamma\}=t \hspace{23pt} & \textnormal{(Tm-Id)}& t\{f;g\}=t\{g\}\{f\} & \textnormal{(Tm-Comp)},
\end{array}$$
for $A\in\mathsf{Ty}(\Gamma)$ a morphism $\Gamma.A\ra{\proj{\Gamma}{A}}\Gamma$ of $\Bcat$ and $\diagv{\Gamma}{A}\in\mathsf{Tm}(\Gamma.A,A\{\proj{\Gamma}{A}\})$ and, finally, for all $t\in \mathsf{Tm}(\Gamma',A\{f\})$ a morphism $\Gamma'\ra{\langle f,t\rangle}\Gamma.A$ such that
$$
\begin{array}{llll}
 \langle f,t\rangle;\proj{\Gamma}{A}=f & \textnormal{(Cons-L)}\hspace{45pt} \;& 
\diagv{\Gamma}{A}\{\langle f,t\rangle\}=t &\textnormal{(Cons-R)}  \\
\langle \proj{\Gamma}{A},\diagv{\Gamma}{A}\rangle=\mathsf{id}_{\Gamma.A} \hspace{7pt} &\textnormal{(Cons-Id)} &
 g;\langle f,t\rangle=\langle g;f,t\{g\}\rangle \hspace{7pt} &\textnormal{(Cons-Nat)}.
\end{array}
$$
\mccorrect{A CwF is said to have a \emph{unit} if we have $1\in \Ty(\Gamma)$, for all $\Gamma\in\ob(\Bcat)$, such that $\Tm(\Gamma,1)\cong \{*\}$ and $1\{f\}=1$ for all $f\in\Bcat$.}
\end{definition}

\begin{mccorrection}
The correspondence works as follows. Every strict indexed category with comprehension is easily seen to define a CwF with unit, if we define $\Ty(\Gamma)$ and $\Tm(\Gamma, A)$ as $\ob(\Ccat(\Gamma))$ and $\Ccat(\Gamma)(1,A)$, respectively. Conversely, we can define a strict indexed category with comprehension from a CwF with unit by defining $\ob(\Ccat(\Gamma)):=\Ty(\Gamma)$ and $\Ccat(\Gamma)(A,B):=\Tm(\Gamma.A,B\{\proj{\Gamma}{A}\})$. We see that the resulting comprehension is full and faithful. Starting from a CwF with unit, defining the corresponding strict indexed category with comprehension and then defining the CwF from that again gives us back the CwF with unit we started with (up to equivalence) as $\Tm(\Gamma.1,B\{\proj{\Gamma}{1}\})\cong\Tm(\Gamma,B)$. Starting from a strict indexed category $\Ccat$ with comprehension, defining the CwF with unit and from that again a strict indexed category $\Ccat'$, gives us the strict indexed category with full and faithful comprehension where we have redefined $\Ccat'(\Gamma)(A,B):=\Ccat(\Gamma.A)(1,B\{\proj{\Gamma}{A}\})$. We see that this restricts to a bijective correspondence between CwFs with unit and strict indexed categories with full and faithful comprehension (up to equivalence).
\end{mccorrection}

The advantage we see for formulating models as indexed categories is that various connectives get an elegant interpretation. We state some of these interpretations below, where we make use of the usual equational theory for extensional dependent type theory, using $\beta$- and $\eta$-rules for all type formers (including $\Id$-types, unlike in section \ref{sec:trans}).

\begin{theorem}[Pure DTT Semantics, \cite{jacobs1993comprehension,hofmann1997syntax}] 
\mccorrect{We} have a sound interpretation of pure dependent type theory with $1$-types in any  indexed category $\Bcat^{op}\ra{\Ccat}\Cat$ with full and faithful comprehension. We list necessary and sufficient conditions for the model to support various type formers:
\begin{itemize}
\item strong $\Sigma$-types\footnote{That is, $\Sigma$-types with a dependent elimination rule. We call $\Sigma$-types $\Sigma_{x:A}B$ weak if the type we are eliminating into is not allowed to depend on $\Sigma_{x:A}B$ in the elimination rule. We use a similar terminology for other positive connectives. We note that as soon as we have strong $\Sigma$-types, the strong and weak elimination rules for other positive connectives become equivalent.} -- objects $\Sigma_CD$ of $\Ccat(B)$ such that $\proj{B}{\Sigma_CD}=\proj{B.C}{D};\proj{B}{C}$;
\item weak $\Sigma$-types -- left adjoint functors $\Sigma_C\dashv -\{\proj{B}{C}\}$ satisfying the left Beck-Chevalley condition\footnote{Remember that the (left) Beck-Chevalley condition for a left adjoint functor $f_!$ to $f^*:=\Ccat(f)$ for a pullback square \begin{diagram}
A & \rTo^h & B\\
\dTo^f & & \dTo_k\\
C & \rTo_g & D
\end{diagram}
corresponds to the statement that the obvious morphism (from commuting of pullback square, unit, and counit) $f_!h^*\ra{}f_!h^*k^*k_!\ra{\cong}f_!f^*g^*k_!\ra{}g^*k_!$ is an isomorphism. Similarly, by the (right) Beck-Chevalley condition for a right adjoint $f_*$ to $f^*$ we mean that the obvious morphism $g^*k_*\ra{}g^*k_*h_*h^*\ra{\cong}g^*g_*f_*h^*\ra{}f_*h^*$ is an iso. The reader is encouraged to think of this condition as the equivalent for $\Sigma$-, $\Pi$- and $\Id$-types of the condition on the \mccorrect{substitution} functors preserving the appropriate categorical structure for other type formers. It says that, in a sense, $\Sigma$-, $\Pi$- and $\Id$-types are preserved under substitution.} for pullback squares in $\Bcat$ of the following form, which we shall later refer to as $\mathbf{p}$-squares,
\begin{diagram}
B'.{C\{f\}} & \rTo^{\qu{f}{{C}}} & B.{C}\\
\dTo^{\proj{B'}{{C\{f\}}}} &  & \dTo_{\proj{B}{{C}}}\\
B' & \rTo_f & B;
\end{diagram}
\item strong extensional\footnote{These are identity types, which, in addition to the $\beta$-rule $\lbi{\refl{x}}{\refl{x}}{d}=d$ also satisfy the $\eta$-rule $d=\lbi{x}{\refl{x}}{d[x/x',\refl{x}/p]}$. This $\eta$-rule is known to make type checking undecidable in the presence of the strong elimination rule, hence it is often omitted \cite{streicher1993investigations}. \mccorrect{We include it to obtain a more elegant categorical semantics; we could also easily omit it.}} $\Id$-types -- objects $\Id_C$ of $\Ccat(B.C.C)$ such that $\proj{B.C.C}{\Id_C}=\diag{B}{C}$;
\item weak extensional $\Id$-types -- left adjoints $\Id_C\dashv -\{\diag{B}{C}\}$ satisfying the left Beck-Chevalley condition for pullback squares in $\Bcat$ of the following form, which we shall later refer to as $\mathsf{diag}$-squares,
\begin{diagram}
B'.{C\{f\}} & \rTo^{\qu{f}{{C}}} & B.{C}\\
\dTo^{\diag{B'}{C\{f\}}} &  & \dTo_{\diag{B}{C}}\\
B'.C\{f\}.C\{f\}\{\proj{B'}{C\{f\}}\} & \rTo_{\qu{\qu{f}{C}}{C\{\proj{B}{C}\}}} & B.C.C\{\proj{B}{C}\};
\end{diagram}
\item weak $0,+$-types\footnote{The syntactic rules for $0,+$-types can be found in \cite{jacobs1999categorical}. We are mostly interested in stating the semantic condition here, as we shall need it to describe the semantics for sum types in linear and effectful settings, where we shall also treat the syntax.} -- finite indexed coproducts (i.e. finite coproducts in all fibres that are stable under change of base);
\item strong $0,+$-types -- if additionally the following canonical morphisms are bijections
$$\Ccat(C.\Sigma_{1\leq i\leq n}C_i)(C',C'')\ra{}\Pi_{1\leq i\leq n}\Ccat(C.C_i)(C'\{\proj{C}{\langle i,\id_{C_i}\rangle }\},C''\{\proj{C}{\langle i,\id_{C_i}\rangle }\});
$$
\item $\Pi$-types -- right adjoint functors $ -\{\proj{B}{C}\}\dashv \Pi_C$ satisfying the right Beck-Chevalley condition for $\mathbf{p}$-squares.
\end{itemize}
In fact, the interpretation in such categories is complete in the sense that an equality holds in all interpretations iff it is provable in the syntax of dependent type theory where we use both $\beta$- and $\eta$-equality rules for all type formers.
\end{theorem}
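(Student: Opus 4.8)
The plan is the standard two-step argument: establish \textbf{soundness} of a categorical interpretation by defining it by recursion on the syntax and checking that every rule is validated, and then establish \textbf{completeness} by exhibiting the syntactic (classifying) model, in which validity of an equation coincides with provability. For soundness I would first define $\sem{-}$ simultaneously on contexts, types, terms and substitutions: a context is sent to an object of $\Bcat$ built by iterated comprehension, $\sem{\cdot}:=\cdot$ and $\sem{\Gamma,x:A}:=\sem{\Gamma}.\sem{A}$; a type $\Gamma\vdash A\type$ to an object $\sem{A}\in\ob(\Ccat(\sem{\Gamma}))$; and a term $\Gamma\vdash a:A$ to a global element $\sem{a}\in\Ccat(\sem{\Gamma})(1,\sem{A})$, which by full-and-faithfulness of the comprehension may be repackaged at will as a section of the projection $\proj{\sem{\Gamma}}{\sem{A}}$ or as a vertical morphism. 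Weakening along $x:A$ is interpreted by change of base along $\proj{\sem{\Gamma}}{\sem{A}}$, and substitution of a term by change of base along the corresponding section; the crucial auxiliary result, proved by mutual induction with the definition of $\sem{-}$, is the substitution lemma $\sem{\mathcal{J}[a/x]}=\sem{\mathcal{J}}\{\langle\id,\sem{a}\rangle\}$, whose holding on the nose relies essentially on the splitness (strictness) of $\Ccat$. Granting this, each structural rule of Figure~\ref{fig:dttstructural} is validated (weakening and \textsf{Subst} by the change-of-base description and the substitution lemma; the remaining equality, conversion and congruence rules by the concomitant induction showing that judgementally equal types and terms receive equal interpretations, using functoriality of $-\{f\}$), so $\sem{-}$ factors through judgemental equality and depends only on the derived judgement, not on its derivation.

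Next I would treat the type formers one at a time and verify that the listed categorical structure suffices to interpret their formation, introduction and elimination rules and to validate their $\beta$- and $\eta$-rules. The $1$-type uses the fibred terminal objects directly. For $\Pi$-types one puts $\sem{\Pi_{x:A}B}:=\Pi_{\sem{A}}\sem{B}$ and interprets $\lambda$ and application as the two directions of the transposition $\Ccat(\sem{\Gamma})(C,\Pi_{\sem{A}}\sem{B})\cong\Ccat(\sem{\Gamma}.\sem{A})(C\{\proj{B}{C}\},\sem{B})$, reading $\Pi$-$\beta$ and $\Pi$-$\eta$ off the two triangle identities, while the right Beck--Chevalley condition for $\mathbf{p}$-squares is exactly the statement that $\Pi$ commutes with substitution. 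Dually, weak $\Sigma$- and weak extensional $\Id$-types are obtained from the left adjoints $\Sigma_C\dashv-\{\proj{B}{C}\}$ and $\Id_C\dashv-\{\diag{B}{C}\}$ together with the left Beck--Chevalley conditions for $\mathbf{p}$-squares and $\mathsf{diag}$-squares respectively; the extra conditions for the \emph{strong} versions --- the equation $\proj{B}{\Sigma_CD}=\proj{B.C}{D};\proj{B}{C}$ for strong $\Sigma$, the equation $\proj{B.C.C}{\Id_C}=\diag{B}{C}$ for strong $\Id$, and the bijectivity of the canonical maps out of fibrewise coproducts for strong $0,+$ --- are precisely what is needed to validate the dependent elimination rules. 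Each of these verifications is a bounded diagram chase that I would not spell out.

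For completeness I would build the classifying split indexed category $\Ccat_{\Syntax}$ of $\DTT$: objects of $\Bcat_{\Syntax}$ are contexts modulo judgemental equality, morphisms are substitution tuples taken up to judgemental equality, the fibre $\Ccat_{\Syntax}(\Gamma)$ has types $\Gamma\vdash A\type$ as objects and, following the recipe recalled just above for turning a CwF with unit into an indexed category, $\Tm(\Gamma.A,B\{\proj{\Gamma}{A}\})$ as hom-sets, so the comprehension is full and faithful by construction; change of base is capture-avoiding syntactic substitution, which composes strictly, so $\Ccat_{\Syntax}$ is indeed split. The empty context is terminal, context extension supplies the comprehension with $\proj{\Gamma}{A}$ the weakening substitution and $\diagv{\Gamma}{A}$ the last variable, and the formation, introduction, elimination, $\beta$- and $\eta$-rules for each connective translate directly into the corresponding categorical structure on $\Ccat_{\Syntax}$ --- which at once exhibits $\Ccat_{\Syntax}$ as a model of the relevant fragment and, transported into an arbitrary model supporting that connective (by interpreting the same rules there), yields the \emph{necessity} half of each ``necessary and sufficient'' claim. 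The identity interpretation of $\DTT$ into $\Ccat_{\Syntax}$ sends a judgement essentially to its own equivalence class, so $\sem{a}=\sem{b}$ holds there iff $\Gamma\vdash a=b:A$ is provable; hence an equation valid in every model is in particular valid in $\Ccat_{\Syntax}$ and therefore provable, the converse being soundness, which gives the completeness claim.

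The genuinely delicate point throughout is coherence: showing that $\sem{-}$ is well defined, i.e. independent of the chosen derivation, and that it strictly commutes with substitution. This is where splitness of the indexed category is used in an essential way, and where one must be careful about $\alpha$-equivalence and capture-avoiding substitution and about the exact shapes of the $\mathbf{p}$-squares and $\mathsf{diag}$-squares so that the Beck--Chevalley conditions line up on the nose with the syntactic substitution lemmas for $\Pi$, $\Sigma$ and $\Id$. Everything else is essentially bookkeeping.
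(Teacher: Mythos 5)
Your proposal is correct and is essentially the standard argument from the cited sources (and the same pattern the thesis itself follows for the analogous dDILL soundness/co-soundness/completeness results): soundness by a recursively defined interpretation with a strict substitution lemma, the adjoint-plus-Beck--Chevalley characterisations of each connective, and completeness via the syntactic indexed category with fibre hom-sets $\Tm(\Gamma.A,B\{\proj{\Gamma}{A}\})$. You also correctly identify splitness/coherence as the only genuinely delicate point, so there is nothing to add.
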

\begin{remark}
Note that (weak) $\Sigma$-types and $\Pi$-types in particular allow us to interpret $\times$-types and $\Rightarrow$-types as their existence makes $\Ccat$ into an indexed cartesian closed category (that is, equips the fibres of $\Ccat$ with a cartesian closed structure that is stable under change of base).
\end{remark}

In particular, we can use such categories to model pure simple type theory with $0,+,1,\times,\Rightarrow$-types as a special case, rather than using the usual notion of model of a bicartesian closed category $\Ccat$. Indeed, starting from such a bicartesian closed category $\Ccat$, we can produce an indexed category $\Ccat^{op}\ra{\self(\Ccat)}\Cat$ where $\self(\Ccat)(A)$ has the same objects as $\Ccat$ and $\self(\Ccat)(A)(B,C)=\Ccat(A\times B,C)$ with the obvious identities and composition and with the change of base functors defined to be the identity on objects and to act on morphisms in the obvious way through precomposition. We see that every model of simple type theory gives, in particular, rise to a (rather degenerate) model of dependent type theory. 

\begin{theorem}
For a bicartesian closed category $\Ccat$, $\self(\Ccat)$ is an indexed category with full and faithful democratic comprehension, which supports $0$-, $+$-, $\Sigma$- and $\Pi$-types. In this case, $\proj{A}{B}$ is the usual product projection from $A\times B\ra{}A$. It does not usually support extensional $\Id$-types as these correspond to objects $1/A$ such that $1/A\times A\cong 1$.
\end{theorem}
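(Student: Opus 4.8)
The plan is to check, clause by clause, that $\self(\Ccat)$ satisfies the conditions listed in the Pure DTT Semantics theorem; almost everything reduces to the universal properties of the product, coproduct and exponential of $\Ccat$, and is streamlined by the fact that the reindexing functors of $\self(\Ccat)$ are the identity on objects (so $\self(\Ccat)$ is a \emph{strict} indexed category and ``stability under change of base'' is mostly automatic). That $\self(\Ccat)$ is a strict indexed category at all -- each fibre a category, reindexing functorial on the nose -- is an immediate diagram chase with the product's universal property. For comprehension: $\Bcat=\Ccat$ has a terminal object $1$; each fibre $\self(\Ccat)(A)$ has terminal object $1$ since $\self(\Ccat)(A)(B,1)=\Ccat(A\times B,1)$ is a singleton, visibly stable under reindexing; and the presheaf $(f\colon A'\to A)\mapsto \self(\Ccat)(A')(1,C\{f\})=\Ccat(A',C)$ is represented by $A\times C\ra{\pi_1}A$, because a morphism $A'\to A\times C$ over $A$ is, by the product's universal property, the same as a morphism $A'\to C$, naturally in $f$. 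This gives the comprehension axiom with $A.C=A\times C$ and $\proj{A}{C}=\pi_1$, settling the claim about $\proj{A}{C}$. Unwinding the comprehension functor, $\proj{B}{c}$ for $c\in\self(\Ccat)(B)(C',C)=\Ccat(B\times C',C)$ is $\langle\pi_1,c\rangle\colon B\times C'\to B\times C$ over $B$; since every morphism $B\times C'\to B\times C$ over $B$ has this form for a unique $c$, the comprehension is full and faithful. Democracy follows, since $\self(\Ccat)(\cdot)$ has object set $\ob(\Ccat)$ and hom-sets $\Ccat(1\times B,C)\cong\Ccat(B,C)$, so the comprehension functor is an equivalence $\self(\Ccat)(\cdot)\cong\Ccat\cong\Bcat/\cdot$.

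Next I would put $\Sigma_C D:=C\times D$ and $\Pi_C D:=C\Rightarrow D$. The adjunctions $\Sigma_C\dashv -\{\proj{B}{C}\}\dashv\Pi_C$ unwind to the evident natural bijections $\Ccat(B\times(C\times D),E)\cong\Ccat((B\times C)\times D,E)\cong\Ccat(B\times D,C\Rightarrow E)$, that is, reassociation and currying. A $\mathbf{p}$-square in $\self(\Ccat)$ is the genuine pullback with top $\qu{f}{C}=f\times\id_C\colon B'\times C\to B\times C$ and legs the first projections, so the left and right Beck--Chevalley conditions become the statement that comparison maps between identity-on-objects composites are identities, true by naturality. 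Choosing $\Sigma_C D$ so that $\proj{B}{\Sigma_C D}=\proj{B.C}{D};\proj{B}{C}$ holds on the nose upgrades the $\Sigma$-types to strong ones -- the only wrinkle being the standard coherence bookkeeping for associativity of $\times$, which one circumvents by strictifying $\Ccat$.

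For $0$- and $+$-types I would use that, $\Ccat$ being cartesian closed, products distribute over coproducts (the functor $A\times(-)$, being left adjoint to $A\Rightarrow(-)$, preserves colimits). Thus $\self(\Ccat)(B)$ has an initial object $0$ (since $\Ccat(B\times 0,D)\cong\Ccat(0,D)$ is a singleton, using $B\times 0\cong 0$) and binary coproducts $C_1+C_2$ (since $\Ccat(B\times(C_1+C_2),D)\cong\Ccat(B\times C_1,D)\times\Ccat(B\times C_2,D)$), computed exactly as in $\Ccat$, with injections $B\times C_i\to C_i\to C_1+C_2$ (second projection then the coproduct injection of $\Ccat$); these are stable under reindexing because reindexing is the identity on objects and fixes the projections. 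Since $\self(\Ccat)$ already carries strong $\Sigma$-types these weak $0,+$-types are automatically strong, and the displayed bijection for strong $+$-types also falls out of the same distributivity.

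Finally, for extensional $\Id$-types: unwinding the defining condition in $\self(\Ccat)$ -- using $B.C.C=(B\times C)\times C$, $\diag{B}{C}=\langle\id_{B\times C},\pi_2\rangle$, and that comprehensions are first projections -- one finds that an extensional $\Id$-type for an object $A$ of $\Ccat$ would require an object $1/A$ with $1/A\times A\cong 1$, a two-sided multiplicative inverse of $A$. But then $\Ccat(X,1/A)\times\Ccat(X,A)\cong\Ccat(X,1)=\{*\}$ for every $X$, forcing $\Ccat(X,A)$ to be a singleton for all $X$, i.e.\ $A$ to be terminal; so in any $\Ccat$ with a non-terminal object (e.g.\ $\Set$) no such $1/A$ exists and $\self(\Ccat)$ does not support extensional $\Id$-types, as claimed. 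I expect the main obstacle to be precisely this $\Id$-type clause -- carefully unwinding what structure an extensional $\Id$-type is in $\self(\Ccat)$ and isolating the resulting condition -- together with the coherence/strictness bookkeeping needed for the strong (rather than merely weak) forms of the positive type formers; the remaining clauses are routine diagram chases with the universal properties of $\times$, $+$ and $\Rightarrow$.
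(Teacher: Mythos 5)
Your proof is correct. The paper states this theorem without proof, and your argument is exactly the routine verification it leaves implicit: the comprehension is $A.C=A\times C$ with $\proj{A}{C}=\pi_1$, full and faithfulness comes from the bijection $c\mapsto\langle\pi_1,c\rangle$ between $\Ccat(B\times C',C)$ and morphisms over $B$, the $\Sigma$/$\Pi$ adjunctions are reassociation and currying, $0,+$ follow from distributivity (automatic in a ccc), and the $\Id$-type obstruction unwinds to the nonexistence of a multiplicative inverse $1/A$ for non-terminal $A$. You are also right to flag, and correct to dismiss via strictification, the only genuine subtlety — that $B\times(C\times D)$ and $(B\times C)\times D$ agree only up to coherent isomorphism, so the strong $\Sigma$-type equation $\proj{B}{\Sigma_CD}=\proj{B.C}{D};\proj{B}{C}$ holds on the nose only after splitting/strictifying; the paper is silently cavalier about this point throughout.
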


Inductive types and type families, in particular finite ones, can be given a pretty categorical semantics as initial algebras for certain endofunctors. We refer the interested reader to \cite{hermida1998structural}, as we shall not need those details in our development.

\section{Call-By-Push-Value and Effectful Simple Type Theory}\label{sec:backcbpv}
We believe Levy's call-by-push-value (CBPV) is an excellent setting for studying effectful type theories \cite{levy2012call}. It unifies the CBV and CBN paradigms as follows.

Recall that one origin of the CBV-CBN-distinction is the fact that, in an effectful type theory, we cannot usually have both coproduct types and function types with their general $\eta$-laws: the $\eta$-law has to fail either for the former type formers, leading to CBN, or for the latter, leading to CBV. For a particular instantiation of this idea, we would \mccorrect{like} to remind the reader of the folklore theorem that a cartesian closed category with coproducts -- just an initial object is enough, in fact -- degenerates to the trivial category if it has fixpoints \cite{huwig1990note}. CBPV unifies CBV and CBN type theories by having two distinct classes of types: those for which we have connectives like coproduct types and those for which we have ones like function types. This allows us to retain the general $\eta$-laws for all connectives and lets us encode traditional CBV and CBN type theories.

In this section, we present a slight reformulation and simplification of CBPV's simply typed version, with the purpose of extending it with dependent types later. We start by discussing a syntax in section \ref{sec:scbpvsyn} which is almost identical to Levy's CBPV except that computations are treated as special stacks/homomorphisms. In section \ref{sec:scbpvsem}, we discuss a modified but equivalent (in the categorical sense) presentation of Levy's categorical semantics of simple CBPV that makes the transition to dependent types more natural, after which we give a few examples of models in section \ref{sec:simplmod}. Next, we briefly discuss the small-step operational semantics for CBPV in section \ref{sec:simplop}. Finally, in section \ref{sec:simpleff}, we sketch how one proceeds to add effects to the pure CBPV calculus, which is, after all, the point of our  endeavour. For this, we mostly take an operational point of view.

\subsection{Syntax} \label{sec:scbpvsyn}
We encourage the reader to look at the syntax of call-by-push-value (CBPV) in the following slightly simplified way: as providing an adjunction decomposition of Moggi's monadic metalanguage \cite{moggi1991notions}, similar (dual) to the one that Benton's linear/non-linear (LNL) calculus \cite{benton1995mixed} gives of (the comonadic) dual intuitionistic linear logic (DILL) \cite{barber1996dual}, but in the more general setting of possibly non-commutative effects. Roughly, CBPV consists of two type theories, related by an adjunction $F\dashv U$: one for defining \emph{values and their types}, to be thought of as \emph{static} objects which behave like a pure cartesian type theory, and one for defining effectful \emph{computations/stacks and their types}, to be thought of as \emph{dynamic} objects which behave linearly.

Therefore, CBPV distinguishes between two classes of types: \emph{value types} and \emph{computation types}. These can be similarly read as, respectively, positive and negative types or as types of data and codata. The linear types of the LNL calculus should be thought of as analogous to computation types, while its cartesian types correspond to value types. The idea is that in natural deduction, for some connectives, the positive/value connectives, the introduction rule involves a choice, while the elimination rule is invertible (works through pattern matching) and for others, the negative/computation connectives, the opposite is true. As a rule of thumb, connectives that operate on \mccorrect{value} types arise as left adjoint functors in the categorical semantics, while connectives that operate on \mccorrect{computation} types are right adjoint functors.

Call-by-push-value (and polarised logic) chooses to keep the classes of types formed from both classes of connectives separate and adds two extra connectives $F$, which turns a value type into a type of computations that \emph{return} a result of the original value type, and $U$, which turns a computation type into a value type of \emph{thunks} of computations of the original computation type. This allows us to use the full $\beta\eta$-equational theory for all connectives, even in the presence of effects. Importantly, we have CBV and CBN embeddings of (effectful) type theory into (effectful) CBPV, that give rise to the usual equational theories.

CBPV has two classes of types (we sometimes underline types to emphasize that we mean a computation type):
\begin{align*}&\textnormal{value types \quad}A \qquad\qquad\qquad\textnormal{computation types \quad}\ct{B}.\end{align*}
In this thesis, simple value and computation types are formed using the connectives of figure \ref{fig:vctypes}, excluding general inductive and coinductive types. Here, $1,\times$ will denote pattern-matching products, while $\top,\&$ are projection products\footnote{Note that these correspond to the two ways of defining products in the categorical semantics: as left adjoints to the internal hom or as right adjoints to the diagonal functor, as positive and negative connectives, respectively.}. More generally, following Levy, we include primitives $\Pi_{1\leq i\leq n}\ct{B_i}$ for $n$-ary projection products and $\Sigma_{1\leq i\leq n}A_i$ for $n$-ary sum (we write nullary and binary sum as $0$ and $A+A'$). We do this to emphasize their similarity to $\cpi{-}{}$- and $\Sigma$-types in the dependently typed version of CBPV. We write $A \functype \ct{B}$ for the type of computations that take an input of type $A$ and return a computation of type $\ct{B}$. (These are conventionally written $A\Rightarrow \ct{B}$. We choose our notation to be reminiscent of the LNL calculus expression $F(A)\multimap\ct{B}$\mccorrect{, which it should generalise}.)

\begin{figure}[!tb]
\begin{tabular}{c|c}
\textbf{value/positive types} $A$& \textbf{computation/negative types} $\ct{B}$\\
\hline
$0$, $A + A'$, $\Sigma_{1\leq i\leq n}A_i$ & $A \functype \ct{B}$\\
$1$, $A \times A'$ & $\top$, $\ct{B} \& \ct{B'}$, $\Pi_{1\leq i\leq n}\ct{B_i}$\\
$U\ct{B}$ & $FA$\\
(inductive types) & (coinductive types)
\end{tabular}
\caption{\label{fig:vctypes} An overview of the simple value and computation types we consider with exception of general inductive and coinductive types which we shall not attempt to incorporate.}
\end{figure}

Similarly, CBPV has separate typing judgements for terms representing values and computations, respectively,
\begin{align*}
\Gamma\vdash^v a:A\qquad\qquad\qquad
\Gamma\vdash^c b:\ct{B}.
\end{align*}
Here, $\Gamma$ is a context, or list $x_1:A_1,\ldots,x_n:A_n$ of declarations of distinct identifiers $x_i$ of value type $A_i$. Additionally, Levy considers stacks (sometimes called homomorphisms, as many effects equip computation types with an algebraic structure which stacks preserve), which are represented as typed terms
$$
\Gamma ; {\nil}: \ct{B} \vdash^k c:\ct{C},
$$
where $\Gamma$, as before, is a context of identifier declarations of value type and ${\nil}$ is an identifier of computation type $\ct{B}$. For notational convenience, and unlike Levy, we unify the computation and stack judgements as a single judgement $$\Gamma;\Delta\vdash b:\ct{B},$$ where $\Gamma$ is as before and $\Delta$ is a context of identifier declarations of computation type. For now, $\Delta$ will have at most length 1 and in that case is often referred to as a \emph{stoup}. The case that $\Delta$ has length 0 corresponds to Levy's computation judgement and the case of length 1 to his stack judgement. To keep the notation light, we also omit the annotation $v$ on the sequent in the value judgement and simply write
$$\Gamma\vdash a:A.$$ We encourage the reader to think of the dual context $\Gamma;\Delta$ to consist of cartesian region $\Gamma$ in which the usual structural rules of weakening and contraction are valid and of $\Delta$ as a linear region in which they are not. These typing judgements are defined through the rules of figure   \ref{fig:vcterms} and the obvious (admissible) two substitution rules and weakening rule for identifiers of value type.

As usual, we distinguish between free and bound (i.e. non-free) identifiers and consider terms up to $\alpha$-equivalence, or permutation of their bound identifiers. The rules of the type theory force the free identifiers of a well-typed term to be declared in the context. For notational convenience, we treat indices $i$ of (terms of) a sum $\Sigma_{1\leq i\leq n}A_i$ or product $\Pi_{1\leq i\leq n}\ct{B_i}$ similarly to bound identifiers. A proper formal treatment would involve including the indices and their range in the context, to distinguish between bound and free indices and to consider freshness of the appropriate indices in various $\eta$-rules. We prefer to avoid this extra formality and keep their treatment informal as we are convinced\mccorrect{ that} the intended meaning will be clear to the reader and \mccorrect{that} anyone so inclined can fill in the technical details. 

\begin{figure}[!tb]
\centering
\fbox{\resizebox{\linewidth}{!}{
\begin{tabular}{ll}
\AxiomC{}
\UnaryInfC{$\Gamma,x:A,\Gamma'\vdash x:A$}
\DisplayProof\hspace{50pt} & \AxiomC{$\Gamma\vdash V:A$}
\AxiomC{$\Gamma,x:A,\Gamma'\vdash^{} W:{A'}$}
\BinaryInfC{$\Gamma,\Gamma'\vdash^{} \lbi{x}{V}{W} :{A'}$}
\DisplayProof\\
&\\

& \AxiomC{$\Gamma\vdash V:A$}
\AxiomC{$\Gamma,x:A,\Gamma';\Delta\vdash^{} K:\ct{B}$}
\BinaryInfC{$\Gamma,\Gamma';\Delta\vdash^{} \lbi{x}{V}{K} :\ct{B}$}
\DisplayProof
\\
&\\

\AxiomC{}
\UnaryInfC{$\Gamma;{\nil}:\ct{B}\vdash {\nil}:\ct{B}$}
\DisplayProof &
\AxiomC{$\Gamma;\Delta\vdash^{} K:\ct{B}$}
\AxiomC{$\Gamma;{\nil}:\ct{B}\vdash L:\ct{C}$}
\BinaryInfC{$\Gamma;\Delta\vdash^{} \lbi{{\nil}}{K}{L}:\ct{B}$}
\DisplayProof\\
&\\
\AxiomC{$\Gamma\vdash V:A$}
\UnaryInfC{$\Gamma;\cdot\vdash \return\; V:FA$}
\DisplayProof &
\AxiomC{$\Gamma;\Delta\vdash K:FA$}
\AxiomC{$\Gamma,x:A,\Gamma';\cdot\vdash N:\ct{B}$}
\BinaryInfC{$\Gamma ,\Gamma';\Delta\vdash \toin{K}{x}{N}:\ct{B}$}
\DisplayProof\\
&\\
\AxiomC{$\Gamma;\cdot\vdash M:\ct{B}$}
\UnaryInfC{$\Gamma\vdash \thunk M:U\ct{B}$}
\DisplayProof
&
\AxiomC{$\Gamma\vdash V: U\ct{B}$}
\UnaryInfC{$\Gamma;\cdot\vdash \force V: \ct{B}$}
\DisplayProof\\
&\\
\AxiomC{$\Gamma\vdash V_i: A_i$}
\UnaryInfC{$\Gamma\vdash \langle i,V_i\rangle : \Sigma_{1\leq i\leq n}A_i$}
\DisplayProof
&
\AxiomC{$\Gamma\vdash V: \Sigma_{1\leq i\leq n}A_i$}
\AxiomC{$\{\Gamma,x:A_i\vdash W_i : A'\}_{1\leq i\leq n}$}
\BinaryInfC{$\Gamma\vdash \sipm{V}{i}{x}{W_i} : A'$}
\DisplayProof\\
&\\

 &
\AxiomC{$\Gamma\vdash V: \Sigma_{1\leq i\leq n}A_i$}
\AxiomC{$\{\Gamma,x:A_i;\Delta\vdash K_i : \ct{B}\}_{1\leq i\leq n}$}
\BinaryInfC{$\Gamma;\Delta\vdash \sipm{V}{i}{x}{K_i} : \ct{B}$}
\DisplayProof
\\
&\\
\AxiomC{}
\UnaryInfC{$\Gamma\vdash\langle\rangle :1$}
\DisplayProof&
\AxiomC{$\Gamma\vdash V:1$}
\AxiomC{$\Gamma\vdash W:A'$}
\BinaryInfC{$\Gamma\vdash \upm{V}{W}:A'$}
\DisplayProof
\\
&\\
&\AxiomC{$\Gamma\vdash V:1$}
\AxiomC{$\Gamma;\Delta\vdash K:\ct{B}$}
\BinaryInfC{$\Gamma;\Delta\vdash \upm{V}{K}:\ct{B}$}
\DisplayProof\\
&\\
\AxiomC{$\Gamma\vdash V_1:A_1$}
\AxiomC{$\Gamma\vdash V_2:A_2$}
\BinaryInfC{$\Gamma\vdash \langle V_1,V_2\rangle :A_1\times A_2$}
\DisplayProof\hspace{30pt}\;
&
\AxiomC{$\Gamma\vdash V: A_1\times A_2$}
\AxiomC{$\Gamma,x:A_1,y:A_2\vdash W:A'$}
\BinaryInfC{$\Gamma\vdash \sipm{V}{x}{y}{W}:A'$}
\DisplayProof\\
&\\
&
\AxiomC{$\Gamma\vdash V: A_1\times A_2$}
\AxiomC{$\Gamma,x:A_1,y:A_2;\Delta\vdash K:\ct{B}$}
\BinaryInfC{$\Gamma;\Delta\vdash \sipm{V}{x}{y}{K}:\ct{B}$}
\DisplayProof\hspace{30pt}\;
\\
&\\
\AxiomC{$\{\Gamma;\Delta\vdash K_i :\ct{B_i}\}_{1\leq i\leq n}$}
\UnaryInfC{$\Gamma;\Delta\vdash \lambda_i K_i : \Pi_{1\leq i\leq n}\ct{B_i}$}
\DisplayProof
&
\AxiomC{$\Gamma;\Delta\vdash K: \Pi_{1\leq i\leq n}\ct{B_i}$}
\UnaryInfC{$\Gamma;\Delta\vdash i\textquoteleft K : \ct{B_i}$}
\DisplayProof\\
&\\
\AxiomC{$\Gamma,x:A;\Delta\vdash K:\ct{B}$}
\UnaryInfC{$\Gamma;\Delta\vdash \lambda_xK:A\functype\ct{B}$}
\DisplayProof
&
\AxiomC{$\Gamma\vdash V:A$}
\AxiomC{$\Gamma;\Delta\vdash K:A\functype\ct{B}$}
\BinaryInfC{$\Gamma;\Delta\vdash V\textquoteleft K : \ct{B}$}
\DisplayProof
\end{tabular}
}
}
\caption{\label{fig:vcterms} Values, computations and stacks of simple CBPV.}
\end{figure}
We can consider these terms up to $\alpha$-equivalence and, as such, define an operational semantics for them in section \ref{sec:simplop}. We frequently also consider the terms up to the \mccorrect{additional} equational theory of figure \ref{fig:vceqs} together with the rules which state that all term formers respect equality and that equality is an equivalence relation, where we write $M[V/x]$ for the syntactic metaoperation of capture avoiding substitution of $V$ for $x$ in $M$. We shall see that this equational theory naturally arises from the categorical semantics of simple CBPV.
\begin{figure}[!tb]
\fbox{
\resizebox{\linewidth}{!}
{
\begin{tabular}{ll}
$\lbi{w}{S}{R}=R[S/w]$ &\\
$\toin{(\return V)}{x}{M} = M[V/x]$ & $L[K/ {\nil}] \stackrel{\# x}{=}\toin{K}{x}{L[\return x/{\nil}]}$\\
$\force\thunk M=M$ & $V=\thunk\force V$\\
$\sipm{\langle i, V\rangle }{i}{x}{R_i}=R_i[V/x]$ & $R[V/z]\stackrel{\#i,x}{=}\sipm{V}{i}{x}{R[\langle i,x\rangle/z]}$\\
$\upm{\langle\rangle}{R}=R$ & $R[V/z]=\upm{V}{R[\langle\rangle/z]}$\\
$\sipm{\langle V,V'\rangle }{x}{y}{R}=R[V/x,V'/y]$ \hspace{30pt}\;& $R[V/z]\stackrel{\#x,y}{=}\sipm{V}{x}{y}{R[\langle x,y\rangle/z]}$\hspace{30pt}\;\\
$i\textquoteleft \lambda_j K_j=K_i$ & $K\stackrel{\# i}{=}\lambda_i i\textquoteleft K$\\
$V\textquoteleft \lambda_x K=K[V/x]$ & $K\stackrel{\#x}{=}\lambda_x x\textquoteleft K$
\end{tabular}
}
}
\caption{\label{fig:vceqs} Equations of simple CBPV. These should be read as equations of typed terms: we impose them if we can derive that both sides of the equation are well-typed terms of the same type in the same context. We write $\stackrel{\#x_1,\ldots,x_n}{=}$ to indicate that for the equation to hold, the identifiers or indices $x_1,\ldots,x_n$ should, in both terms being equated, be replaced by fresh ones, in order to avoid unwanted identifier bindings. Note that in the first equation, $w$ might either be an identifier of value type or of computation type.}
\end{figure}

\begin{figure}[!tb]
\fbox{
\resizebox{\linewidth}{!}{
\begin{tabular}{l|l||l|l}
\textbf{CBV type}  & \textbf{CBPV type} & \textbf{CBV term } & \textbf{CBPV term}\\
\hline
$A$ & $A^v$ & $x_1:A_1,\ldots,x_m:A_m\vdash M:A$ & $x_1:A_1^v,\ldots,x_m:A_m^v;\cdot\vdash M^v:F(A^v)$\\
 && $x$& $\return x$\\
  &&$\lbi{x}{M}{N}$ & $\toin{M^v}{x}{N^v}$ \\
$\Sigma_{1\leq i\leq n }A_i$ & $\Sigma_{1\leq i\leq n }A_i^v$& $\langle i,M\rangle $&$\toin{M^v}{x}{\return \langle i, x\rangle }$ \\
 &&$\sipm{M}{i}{x}{N_i}$& $\toin{M^v}{z}{(\sipm{z}{i}{x}{N_i^v})}$\\
$\Pi_{1\leq i\leq n}A_i $ & $U\Pi_{1\leq i \leq n} FA_i^v$ &$\lambda_iM_i$ &$\return \thunk (\lambda_i  M_i^v)$\\
&&$i\textquoteleft  N $&$\toin{N^v}{z}{(i\textquoteleft \force z)}$\\
$A\Rightarrow A'$ & $U(A^v \functype F A'^v)$ & $\lambda_x M$&$\return \thunk \lambda_x M^v$\\
&&$M\textquoteleft N$ &$\toin{M^v}{x}{(\toin{N^v}{z}{(x\textquoteleft \force z)})}$\\
$1$ & $1$ & $\langle\rangle$ & $ \return \langle\rangle$  \\
&&$\upm{M}{N}$&$\toin{M^v}{z}{(\upm{z}{N^v})}$\\
$A \times A'$ & $A^v \times A'^v$ & $ \langle M, N\rangle $  & $\toin{M^v}{x}{(\toin{N^v}{y}{\return \langle x,y\rangle})}$\\
&&$\sipm{M}{x}{y}{N}$&$\toin{M^v}{z}{(\sipm{z}{x}{y}{N^v})}$\\
\end{tabular}
}
}
\caption{\label{fig:cbvtrans} A CBV translation of a simple $\lambda$-calculus into CBPV.}
\end{figure}
\begin{figure}[!tb]
\fbox{
\resizebox{\linewidth}{!}{
\begin{tabular}{l|l||l|l}
\textbf{CBN type}  & \textbf{CBPV type} & \textbf{CBN term } & \textbf{CBPV term }\\
\hline
$\ct{B}$ &  $\ct{B}^n$& $x_1:\ct{B}_1,\ldots,x_m:\ct{B}_m\vdash M:\ct{B}$& $x_1:U\ct{B}_1^n,\ldots,x_m:U\ct{B}_m^n;\cdot\vdash M^n:\ct{B}^n$ \\
 && $x$& $\force x$\\
  & & $\lbi{x}{M}{N}$ & $\lbi{x}{(\thunk M^n)}{N^n}$ \\
$\Sigma_{1\leq i\leq n }\ct{B}_i$ & $F\Sigma_{1\leq i\leq n }U\ct{B}_i^n$& $\langle i,M\rangle $&$\return \langle i,\thunk M^n\rangle $ \\
 & &$\sipm{M}{i}{x}{N_i}$&$\toin{M^n}{z}{(\sipm{z}{i}{x}{N_i^n})}$ \\
$\Pi_{1\leq i\leq n}\ct{B}_i $ & $\Pi_{1\leq i \leq n} \ct{B}_i^n$ & $\lambda_iM_i$& $\lambda_iM_i^n$\\
 && $i\textquoteleft M$ & $i\textquoteleft M^n$\\
$\ct{B}\Rightarrow \ct{B'}$ & $(U\ct{B}^n)\functype \ct{B'}^n$ & $\lambda_x M $& $\lambda_xM^n$\\
 &&$N\textquoteleft M$ & $(\thunk N^n) \textquoteleft M^n$ \\
$1$ & $F1$ & $\langle\rangle$ & $\return \langle\rangle$  \\
&&$\upm{M}{N}$&$\toin{M^n}{z}{(\upm{z}{N^n})}$\\
$\ct{B} \times \ct{B'}$ & $F(U\ct{B}^n \times U\ct{B'}^n)$ & $\langle M, N\rangle $  & $\return \langle \thunk M^n,\thunk N^n\rangle$\\
&& $\sipm{M}{x}{y}{N}$& $\toin{M^n}{z}{(\sipm{z}{x}{y}{N^n})}$ \\
\end{tabular}
}
}
\caption{\label{fig:cbntrans} A CBN translation of a simple $\lambda$-calculus into CBPV.}
\end{figure}

Recall that a call-by-value (CBV) and call-by-name (CBN) evaluation strategy on the $\lambda$-calculus generally give rise to different equational theories (in the presence of effects) \cite{plotkin1975call}. For instance, the $\eta$-rule for function types typically fails in the former and that for sum types in the latter. CBPV gives rise to both of these equational theories by embedding an (impure) $\lambda$-calculus either with a CBV or with a CBN translation. 

In the presence of effects, the usual pure connectives of products, coproducts and function types bifurcate into many variants due to the distinction of versions of different arities and the distinction between projection and pattern matching products. These are nicely and uniformly treated in Levy's Jumbo $\lambda$-calculus \cite{levy2006jumbo}.

There are fully faithful translations $(-)^v$ and $(-)^n$, respectively, from CBV and CBN versions of this whole calculus into CBPV \cite{levy2012call} and, in fact, the same is true if we consider arbitrary theories rather than the pure calculi. To convey the intuition without getting stuck on technicalities, we present some special cases of the translations in figures \ref{fig:cbvtrans} and \ref{fig:cbntrans}.

\subsection{Categorical Semantics}
\label{sec:scbpvsem}
CBPV admits a simple notion of a categorical model. We present a variation of that of \cite{levy2005adjunction} to allow a smooth transition to dependent types. The philosophy is to add to a model $\self(\Ccat)$ of pure simple type theory an extra (locally) indexed category $\Dcat$ to model computations and stacks separately from values and to demand all appropriate negative (right adjoint) connectives in $\Dcat$ and all positive (left adjoint) ones in $\self(\Ccat)$. The idea will be that values $\Gamma\vdash V:A$ denote elements of $\self(\Ccat)(\sem{\Gamma})(1,\sem{A})$ and that computations and stacks $\Gamma;\Delta\vdash M:\ct{B}$  denote elements of $\Dcat(\sem{\Gamma})(\sem{\Gamma;\Delta},\sem{\ct{B}})$.

\begin{definition}[Simple CBPV Model] By a categorical model of simple CBPV, we shall mean the following data.
\begin{itemize}
\item A cartesian category $(\Ccat,1,\times)$ of \emph{values};
\item a locally indexed category $\Ccat^{op}\ra{\Dcat}\Cat$ of \emph{stacks} (and \emph{computations}, in particular), \mccorrect{that is, an indexed category such that the change of base functors are identity on objects};
\item $0,+$-types in $\self(\Ccat)$\footnote{This amounts to having distributive finite coproducts in $\Ccat$.} such that, additionally, the following obvious induced maps are bijections: $$\Dcat(C.\Sigma_{1\leq i\leq n} C_i )(\ct{D},\ct{D'})\ra{}\Pi_{1\leq i \leq n}\Dcat(C.C_i)(\ct{D},\ct{D'});$$
\item an indexed adjunction\footnote{As Plotkin pointed out at the time of Moggi's original work on the monadic metalanguage, this gives a strong monad $T=UF$ on $\Ccat$ \cite{moggi1988computational}.}\mbox{
\begin{diagram}
\Dcat & \pile{\lTo^F\\\bot\\\rTo_U} & \self(\Ccat);
\end{diagram}}
\item $\cpi{-}{}$-types in $\Dcat$ in the sense of having right adjoint functors $-\{\proj{A}{B}\}\dashv \cpi{B}{}:\Dcat(A)\ra{}\Dcat(A.B)$ satisfying the right Beck-Chevalley condition for $\mathbf{p}$-squares;
\item Finite indexed products $(\top,\&)$ in $\Dcat$ (finite products, stable under change of base);
\end{itemize}
Note that $\self(\Ccat)$ automatically has $1$- and $\Sigma$-types.
\end{definition}

\begin{theorem}[Simple CBPV Semantics] We have a sound interpretation of CBPV in a CBPV model:\\
\\
\resizebox{\linewidth}{!}{
$
\begin{array}{ll}
\sem{\cdot}  = 1 & \sem{\Gamma;\cdot} = F1\\
\sem{\Gamma,x:A} = \sem{\Gamma}.\sem{A} &\sem{\Gamma;{\nil}:\ct{B}} = \sem{\ct{B}} \\
\sem{\Gamma\vdash A}=\self(\Ccat)(\sem{\Gamma})(1,\sem{A}) & \sem{\Gamma;\Delta\vdash \ct{B}} = \Dcat(\sem{\Gamma})(\sem{\Gamma;\Delta},\sem{\ct{B}})\\
\sem{U\ct{B}} = U\sem{\ct{B}} & \sem{FA}=F\sem{A} \\
\sem{\Sigma_{1\leq i\leq n}A_i}=(\cdot(\sem{A_1}+\sem{A_2})+\cdots)+\sem{A_n}) & \sem{\Pi_{1\leq i\leq n}\ct{B}_i}=(\cdot(\sem{\ct{B}_1}\&\sem{\ct{B}_2})\&\cdots)\&\sem{\ct{B}_n})\\
\sem{A\times A'} = \sem{A}\times \sem{A'}\cong\Sigma_{\sem{A}}\sem{A'}\{\proj{\sem{\Gamma}}{\sem{A}}\}& \sem{A\functype \ct{B}}  = \cpi{\sem{A}}{\sem{\ct{B}}\{\proj{\sem{\Gamma}}{\sem{A}}\}}\\
\sem{1}=1,\\
\end{array}$}\\
\\
together with the obvious interpretation of terms. The interpretation in such categories is complete in the sense that an equality of values or computations holds in all interpretations iff it is provable in the syntax of CBPV. In fact, we have a 1-1 relationship between models and theories which satisfy mutual soundness and completeness results.
\end{theorem}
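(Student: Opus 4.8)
The plan is to proceed in three stages: first well-definedness and soundness of $\sem{-}$ by induction on derivations; then completeness via a syntactic (term) model; finally the upgrade of this to the claimed bijection between models and theories.

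For soundness, I would fix once and for all the canonical choices in a CBPV model (the terminal object and binary products of $\Ccat$, the indexed coproducts in $\self(\Ccat)$, the adjoint pair $F\dashv U$, the $\cpi{-}{}$- and additive-product structure in $\Dcat$) so that the clauses displayed in the theorem pin down $\sem{-}$ on types, and then extend $\sem{-}$ to raw typing derivations by structural recursion: a value derivation $\Gamma\vdash V:A$ goes to an element of $\self(\Ccat)(\sem{\Gamma})(1,\sem{A})$ and a stoup derivation $\Gamma;\Delta\vdash M:\ct{B}$ to an element of $\Dcat(\sem{\Gamma})(\sem{\Gamma;\Delta},\sem{\ct{B}})$, with $\sem{\Gamma;\cdot}=F1$ and $\sem{\Gamma;\nil:\ct{B}}=\sem{\ct{B}}$; the adjunction $F\dashv U$ is precisely what lets the length-$0$ and length-$1$ cases of $\Delta$ be treated uniformly (a computation is a stack out of $F1$, and $\self(\Ccat)(\sem{\Gamma})(1,\sem{FA})\cong\Dcat(\sem{\Gamma})(F1,F\sem{A})$). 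The key auxiliary result is a \emph{substitution lemma}: $\sem{M[V/x]}$ is obtained from $\sem{M}$ by reindexing along $\langle\id,\sem{V}\rangle$, with a companion statement for the stack substitution $L[K/\nil]$ expressed through the unit/counit of $F\dashv U$ and naturality of the adjunction bijection. With this in hand, soundness reduces to checking the finitely many equations of figure~\ref{fig:vceqs} individually: the $\beta$-rules become triangle/counit identities and mediating-map equations, and the $\eta$-rules become the uniqueness clauses of the relevant universal properties (the adjunction bijection, the (co)product mediating maps, the displayed bijection for indexed coproducts). One also verifies, by the same induction, that $\sem{-}$ is independent of the chosen derivation, which is routine because the rules are syntax-directed up to the structural rules, whose interpretation is forced by the substitution lemma.

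For completeness I would build the classifying model $\Ccat_{\mathbb{T}}$ of a theory $\mathbb{T}$: objects of $\Ccat$ are value contexts $\Gamma$, a morphism $\Gamma'\to\Gamma$ is a context morphism (a tuple of values) taken up to $\mathbb{T}$-provable equality, with the evident cartesian structure; $\Dcat(\Gamma)$ has computation types as objects and a morphism $\ct{B}\to\ct{C}$ is a stack $\Gamma;\nil:\ct{B}\vdash K:\ct{C}$ up to provable equality, with change of base along a context morphism given by value substitution — strictly functorial, so $\Dcat$ is genuinely a (split) locally indexed category. One then reads off the structure from the term formers: $FA$ and $U\ct{B}$ give the adjunction via the bijection $\Dcat(\Gamma)(FA,\ct{B})\cong\self(\Ccat)(\Gamma)(A,U\ct{B})$ mediated by $\return$, $\toin{-}{x}{-}$, $\thunk$, $\force$, with the CBPV equations supplying exactly naturality and the two triangle laws; $\Sigma$-, $+$-, $\times$-, $1$-types come from the pattern-matching formers and their $\eta$-rules (the displayed isomorphism for indexed coproducts is exactly the $\eta$/commuting-conversion rule for $\sipm{-}{i}{x}{-}$ into a stack); and the $\cpi{-}{}$- and additive-product structure comes from the $\lambda$-abstraction and (co)projection formers, with the required right Beck--Chevalley condition for $\mathbf{p}$-squares holding because substitution commutes with abstraction. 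Since the generic value of type $A$ in context $\Gamma$, and the generic computation, are interpreted in $\Ccat_{\mathbb{T}}$ by their own equivalence classes, two terms get the same interpretation in $\Ccat_{\mathbb{T}}$ iff they are $\mathbb{T}$-provably equal; combined with soundness this gives completeness, instantiated at the pure CBPV theory.

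For the $1$--$1$ correspondence I would organise a category $\mathbf{Mod}$ of CBPV models (morphisms: functors of value categories with compatible indexed functors of stack categories, preserving all listed structure up to canonical iso) and a category $\mathbf{Th}$ of CBPV theories and translations. The assignment $\mathbb{T}\mapsto\Ccat_{\mathbb{T}}$ is functorial, and any model $\mathcal{M}$ has an internal language $\mathrm{Th}(\mathcal{M})$ whose derivable equations are those $\mathcal{M}$ validates via $\sem{-}$; soundness gives $\mathbb{T}\subseteq\mathrm{Th}(\Ccat_{\mathbb{T}})$ and completeness the reverse, so $\mathrm{Th}(\Ccat_{\mathbb{T}})=\mathbb{T}$, while for a model the interpretation functor $\Ccat_{\mathrm{Th}(\mathcal{M})}\to\mathcal{M}$ is essentially surjective and fully faithful by the same universal-property bookkeeping as above, yielding $\Ccat_{\mathrm{Th}(\mathcal{M})}\simeq\mathcal{M}$ and hence the claimed equivalence. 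The step I expect to be the main obstacle is the coherence/strictness bookkeeping in the last two stages: making the term-model construction literally functorial requires syntactic substitution to be strictly associative and unital on the nose (true, but it must be said), and making $\Ccat_{\mathrm{Th}(\mathcal{M})}\to\mathcal{M}$ an honest equivalence — together with checking that the unified stoup judgement is interpreted coherently through $F\dashv U$ and that $\sem{-}$ is derivation-independent — needs care; by contrast the substitution lemma and the equation-by-equation soundness verification are routine if tedious.
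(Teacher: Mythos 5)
The paper gives no proof of this theorem: it is presented as a folklore reformulation of Levy's adjunction semantics for CBPV, and the closest thing to an argument in the thesis is the proof sketch for the dependently typed generalisation (the dCBPV$-$ semantics theorem), which proceeds exactly as you propose — soundness by induction on derivations with a substitution lemma, completeness by constructing a syntactic (term) model in which types are interpreted by their own equivalence classes. Your first two stages therefore match the intended argument, and the details you single out (the uniform treatment of the empty and singleton stoup via $F\dashv U$, the $\beta$-rules as triangle identities, the $\eta$-rules and commuting conversions as uniqueness clauses of universal properties, strict functoriality of change of base given by syntactic substitution) are exactly the points that need checking.

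The one place where you overclaim is the final stage. You assert that the comparison functor $\Ccat_{\mathrm{Th}(\mathcal{M})}\to\mathcal{M}$ is essentially surjective and hence an equivalence, but this fails for a general CBPV model as defined here: the objects of the term model's value category are contexts, so their images are finite products of interpretations of types, and nothing in the definition of a model forces every object of $\Ccat$ to be isomorphic to such a product (take $\Ccat=\Set$ with the identity adjunction and no base types — the image is only the finite sets). So the round trip model $\mapsto$ theory $\mapsto$ term model does not recover the model up to equivalence in general. This is precisely the failure of ``co-completeness'' that the thesis discusses explicitly in the linear dependent setting, where the fixed points of the round trip are characterised as the \emph{democratic} models. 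The honest version of the ``1-1 relationship'' clause either restricts to models generated by the types (democratic in the appropriate sense), or is phrased as a correspondence between theories and models equipped with an interpretation of the base types, rather than a literal equivalence of categories of all models and all theories. Soundness and completeness themselves — the substantive content of the theorem — are unaffected by this.
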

Let us write $T$ for the indexed monad $UF$ on $\self(\Ccat)$ and $!$ for the indexed comonad $FU$ on $\Dcat$. We note that the translations from CBV and CBN into CBPV correspond to interpreting CBV and CBN in the Kleisli and co-Kleisli categories for $T$ and $!$ respectively. More generally, we can note that the translations 
of figures \ref{fig:cbvtrans} and \ref{fig:cbntrans} can be transformed into semantic translations which means that any CBPV model gives rise to models of the CBV and CBN\mccorrect{ }$\lambda$-calculus.

\begin{theorem}[Simple CBV Semantics] We obtain a sound interpretation of the CBV\mccorrect{ }$\lambda$-calculus with $1,\times,\Rightarrow,\Sigma_{1\leq i\leq n},\Pi_{1\leq i\leq n}$-types in the Kleisli category for $T$:\\
\resizebox{\linewidth}{!}{\parbox{\linewidth}{
\begin{align*}
\sem{A_1,\cdots,A_n\vdash A}=\Dcat(\sem{A_1}. \cdots.\sem{A_n})(F1,F\sem{A})&\cong\self(\Ccat)_T(\sem{A_1}. \cdots.\sem{A_n})(1,\sem{A})\\
&\cong \self(\Ccat)(\cdot)_T(\sem{A_1}\times \cdots\times\sem{A_n},\sem{A}).
\end{align*}}}\\
The interpretation is complete with respect to this class of models.
\end{theorem}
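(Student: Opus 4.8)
The plan is to derive both halves of the statement from the sound and complete categorical semantics of simple CBPV together with the fully faithful CBV translation $(-)^v$ of figure \ref{fig:cbvtrans}, which serves as the bridge. The only genuinely new work is to check that the interpretation of the CBV $\lambda$-calculus obtained by composing $(-)^v$ with the CBPV interpretation $\sem{-}$ in a given CBPV model coincides with the intrinsic Kleisli-category interpretation described in the statement; once that identification is in place, both soundness and completeness drop out.

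First, for soundness, fix a CBPV model $(\Ccat,\Dcat,F\dashv U,\ldots)$. Composing $(-)^v$ with $\sem{-}$ yields an interpretation of the CBV $\lambda$-calculus, and since $(-)^v$ is a translation of theories — CBV-provable equalities of terms are sent to CBPV-provable equalities of computations — soundness of the CBPV interpretation at once validates all CBV-provable equalities. To see that this composite is the Kleisli interpretation, one unwinds definitions. On contexts, democracy of $\self(\Ccat)$ (with $\proj{\Gamma}{A}$ the product projection $\Gamma\times A\to\Gamma$) gives $\sem{A_1,\ldots,A_n}\cong\sem{A_1}\times\cdots\times\sem{A_n}$. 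On hom-sets, the indexed adjunction $F\dashv U$ supplies the natural bijection
\[
\Dcat(\sem{A_1}.\cdots.\sem{A_n})(F1,F\sem{A})\;\cong\;\self(\Ccat)(\sem{A_1}.\cdots.\sem{A_n})(1,UF\sem{A})\;=\;\self(\Ccat)_T(\sem{A_1}.\cdots.\sem{A_n})(1,\sem{A}),
\]
and one more use of democracy and the evident associativity isomorphisms identifies this with $\self(\Ccat)(\cdot)_T(\sem{A_1}\times\cdots\times\sem{A_n},\sem{A})$. Then one checks, connective by connective, that $\sem{(-)^v}$ reproduces the Kleisli structure on the CBV connectives: $\sem{(A\Rightarrow A')^v}=U\bigl(\cpi{\sem{A^v}}{(F\sem{A'^v})\{\proj{\sem{\Gamma}}{\sem{A^v}}\}}\bigr)$ is the Kleisli exponential, $\sem{(\Pi_{1\leq i\leq n}A_i)^v}=U\Pi_{1\leq i\leq n}F\sem{A_i^v}$ is the Kleisli $n$-ary product, and the positive connectives $1$, $\times$, $\Sigma_{1\leq i\leq n}$ are inherited from $\self(\Ccat)$, the coproducts surviving the passage to the Kleisli category precisely because they are distributive and satisfy the CBPV bijection condition. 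Note that the negative connectives are built via $U$, $F$, $\cpi{-}{}$ and the products of $\Dcat$ rather than directly in $\self(\Ccat)$ — which is exactly why the translation takes the form it does — and the right Beck-Chevalley condition for $\cpi{-}{}$-types over $\mathbf{p}$-squares is what makes the relevant bijections natural.

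Second, for completeness, suppose $M$ and $N$ are CBV terms of the same type in the same context that are validated in the Kleisli category of every CBPV model. By the identification above this says $\sem{M^v}=\sem{N^v}$ in every CBPV model, so completeness of CBPV semantics gives that $M^v=N^v$ is provable in CBPV, and faithfulness of $(-)^v$ then gives that $M=N$ is provable in the CBV $\lambda$-calculus. (Equivalently: if $M=N$ is unprovable in CBV, faithfulness makes $M^v=N^v$ unprovable in CBPV, so some CBPV model separates $M^v$ from $N^v$, and hence its Kleisli category separates $M$ from $N$.) Combined with soundness this is the asserted completeness with respect to the class of Kleisli categories of CBPV models.

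The step I expect to be the main obstacle is the bookkeeping in the soundness half: verifying that the semantic image of $(-)^v$ genuinely reassembles into the intrinsic Kleisli interpretation — that $\self(\Ccat)_T$ really carries $1,\times,\Rightarrow,\Sigma_{1\leq i\leq n},\Pi_{1\leq i\leq n}$-types and that these agree, naturally in the context, with what $U$, $F$, $\cpi{-}{}$ and the coproducts of $\self(\Ccat)$ produce under the translation. Each individual check is routine, but it is precisely here that the exact shape of the CBPV model axioms is needed, so it warrants care.
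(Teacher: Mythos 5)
Your proposal is correct and follows essentially the same route the paper intends: the paper does not spell out a proof but derives this theorem by turning the syntactic CBV translation of figure \ref{fig:cbvtrans} into a semantic translation, so that soundness and completeness are inherited from the CBPV semantics theorem via the full faithfulness of $(-)^v$, with the adjunction $F\dashv U$ and democracy of $\self(\Ccat)$ identifying the composite with the intrinsic Kleisli interpretation exactly as you describe. Your connective-by-connective bookkeeping is the right elaboration of the paper's one-line remark.
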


\begin{theorem}[Simple CBN Semantics] We obtain a sound interpretation of the CBN\mccorrect{ }$\lambda$-calculus with $1,\times,\Rightarrow,\Sigma_{1\leq i\leq n},\Pi_{1\leq i\leq n}$-types in the co-Kleisli category for $!$:\\
\resizebox{\linewidth}{!}{\parbox{\linewidth}{
\begin{align*}
\sem{\ct{B_1},\cdots,\ct{B_n}\vdash \ct{B}}=\Dcat(U\sem{\ct{B_1}}.\cdots.U\sem{\ct{B_n}})(F1,\sem{\ct{B}})&\cong\Dcat_{!}(U\sem{\ct{B_1}}.\cdots.U\sem{\ct{B_n}})(\top,\sem{\ct{B}})\\
&\cong \Dcat(\cdot)_!(\sem{\ct{B_1}}\&\cdots\&\sem{\ct{B_n}},\sem{\ct{B}}).
\end{align*}}}\\
The interpretation is complete with respect to this class of models.
\end{theorem}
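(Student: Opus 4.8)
The plan is to reduce the statement to the Simple CBPV Semantics theorem by way of the call-by-name translation $(-)^n$ of Figure~\ref{fig:cbntrans}. Recall that $(-)^n$ sends a CBN judgement $x_1:\ct{B}_1,\ldots,x_n:\ct{B}_n\vdash M:\ct{B}$ to the CBPV computation judgement $x_1:U\ct{B}_1^n,\ldots,x_n:U\ct{B}_n^n;\cdot\vdash M^n:\ct{B}^n$, whose interpretation in a CBPV model is an element of $\Dcat(U\sem{\ct{B}_1^n}.\cdots.U\sem{\ct{B}_n^n})(\sem{\Gamma;\cdot},\sem{\ct{B}^n})=\Dcat(U\sem{\ct{B}_1^n}.\cdots.U\sem{\ct{B}_n^n})(F1,\sem{\ct{B}^n})$. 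Thus both soundness and completeness of the claimed CBN interpretation follow once we establish: (i) the displayed chain of natural isomorphisms; (ii) that the $1,\times,\Rightarrow,\Sigma_{1\leq i\leq n},\Pi_{1\leq i\leq n}$-structure of the co-Kleisli category $\Dcat_!$ is computed by the CBPV interpretation of the types produced by $(-)^n$; and (iii) that $(-)^n$ is sound and faithful as a translation of equational theories (here I write $\sem{\ct{B}}$ for $\sem{\ct{B}^n}$, as the CBN connectives are built from the same formers).

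For (i): the adjunction $F\dashv U$ is indexed, so each $U$ preserves fibrewise terminal objects, giving $U\top\cong 1$ and hence $F1\cong FU\top=\,!\top$; this yields the first isomorphism $\Dcat(X)(F1,\ct{B})\cong\Dcat(X)(!\top,\ct{B})=\Dcat_!(X)(\top,\ct{B})$ directly from the definition of the co-Kleisli category. For the second, unfold the co-Kleisli hom-set and use the fibrewise adjunction together with the facts that $U$ preserves finite products and that in $\self(\Ccat)$ comprehension is cartesian product: on the one hand $\Dcat(X)(!\top,\ct{B})\cong\self(\Ccat)(X)(U\top,U\ct{B})\cong\Ccat(X,U\ct{B})$, and on the other $\Dcat(\cdot)_!(\ct{B}_1\&\cdots\&\ct{B}_n,\ct{B})=\Dcat(\cdot)(FU(\ct{B}_1\&\cdots\&\ct{B}_n),\ct{B})\cong\Ccat(U\ct{B}_1\times\cdots\times U\ct{B}_n,U\ct{B})$; the two agree since $X=U\sem{\ct{B}_1}.\cdots.U\sem{\ct{B}_n}$ is $U\sem{\ct{B}_1}\times\cdots\times U\sem{\ct{B}_n}$ in $\Ccat$. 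Each step is natural in $\ct{B}$ and in the $\ct{B}_i$, so this is an isomorphism of the relevant hom-functors.

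For (ii), one checks by routine unfolding that the co-Kleisli internal hom and product coincide with the CBPV reading of $(-)^n$ on $\Rightarrow$ and $\times$: using the $\cpi{-}{}$-adjunction, $\Dcat_!(X)(\ct{E},U\ct{C}\functype\ct{D})\cong\Dcat(X.U\ct{C})(!\ct{E},\ct{D}\{\proj{X}{U\ct{C}}\})$, and both this and $\Dcat_!(X)(\ct{C}\&\ct{E},\ct{D})=\Dcat(X)(F(U\ct{C}\times U\ct{E}),\ct{D})$ reduce to $\self(\Ccat)(X)(U\ct{C}\times U\ct{E},U\ct{D})$ after applying $F\dashv U$ and associativity of $\times$ in $\Ccat$; the $1$-, $\Sigma_{1\leq i\leq n}$- and $\Pi_{1\leq i\leq n}$-cases are analogous. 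That all of this is stable under change of base -- so that the interpretation of substitution is well-defined -- is exactly what the Beck--Chevalley conditions in the definition of a CBPV model buy us. (Steps (i) and (ii) can equivalently be read as the assertion that the CBN translation factors through the co-Kleisli interpretation, as in \cite{levy2012call}.)

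Finally, (iii). By the Simple CBPV Semantics theorem, an equation of CBPV terms holds in all CBPV models iff it is provable, and the syntactic theory of CBPV yields a term model. Soundness is the claim that $(-)^n$ maps the $\beta\eta$-equations and commuting conversions of the CBN $\lambda$-calculus to CBPV-provable equations, which one verifies equation by equation from Figure~\ref{fig:cbntrans}; together with (i), (ii) and CBPV soundness this gives that $M=N$ in CBN implies $\sem{M}=\sem{N}$ in every co-Kleisli interpretation. For completeness, if $\sem{M}=\sem{N}$ in the co-Kleisli interpretation of every CBPV model, then, instantiating at the term model and invoking (i), (ii) and CBPV completeness, $M^n=N^n$ is provable in CBPV; faithfulness of $(-)^n$ then gives $M=N$ in CBN. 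The step I expect to be the real work rather than bookkeeping is faithfulness of $(-)^n$ -- equivalently, producing a back-translation into CBN that is a retraction up to the CBN equational theory; here one genuinely uses the good properties of CBPV, and we lean on \cite{levy2012call}, so that the theorem is in effect obtained by transporting Levy's full-and-faithfulness result along the isomorphisms of (i)--(ii).
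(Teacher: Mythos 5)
Your proposal is correct and follows essentially the route the paper intends: the paper states this theorem without an explicit proof, relying on the preceding remark that the CBN translation $(-)^n$ of figure \ref{fig:cbntrans} becomes a semantic translation into the co-Kleisli category for $!$, with soundness/completeness inherited from the Simple CBPV Semantics theorem and full faithfulness of $(-)^n$ taken from \cite{levy2012call}. Your write-up merely makes explicit the bookkeeping (the chain of isomorphisms via $FU\top\cong F1$ and $F\dashv U$, and the identification of the co-Kleisli cartesian closed structure with $\&$ and $U(-)\functype(-)$) that the paper leaves implicit.
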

Again, both of these results could be strengthened to the statement that we have a 1-1 relationship between models and theories which satisfy mutual soundness and completeness results.

\subsection{A Few Words about Models}\label{sec:simplmod}
An extensive discussion of particular models as well as comparisons between CBPV models and other notion\mccorrect{s} of categorical models of effects can be found in \cite{levy2012call}. Here, we shall be very brief and just recall the following two results and provide some context for the relationship between effects and linear logic.

\begin{theorem}\label{thm:lliscomm}
Let $\Dcat'$ be a model of intuitionistic exponential additive multiplicative linear logic (see section \ref{sec:backll}) in the sense of a symmetric monoidal closed category $(\Dcat',I,\otimes)$ with finite products $(\top,\&)$, finite coproducts $(0,\oplus)$ that distribute over $\otimes$ and a comonad $!$ that is induced by some adjunction
\begin{diagram}
\Ccat' & \pile{\rTo^{F'}\\\bot\\\lTo_{U'}} & \Dcat'
\end{diagram}
 to a cartesian monoidal category $(\Ccat',1,\times)$ with strong monoidal left adjoint $F'$. In that case, $\Dcat'$ gives rise to a canonical model $F\dashv U:\self(\Ccat)\leftrightarrows \Dcat$ of simple CBPV where $UF$ is a commutative monad \cite{benton1996linear}.
\end{theorem}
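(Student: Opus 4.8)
The plan is to construct the CBPV model directly and then identify $T=UF$. For the cartesian category of values we take $\Ccat:=\Ccat'$ and form $\self(\Ccat)$ as in the preliminaries; the distributive finite coproducts needed so that $\self(\Ccat)$ supports $0,+$-types are part of the data of a model of intuitionistic linear logic (section \ref{sec:backll}). For the locally indexed category of stacks we define $\Dcat$ to have, at each index $A\in\ob(\Ccat')$, the same objects as $\Dcat'$, with
$$\Dcat(A)(\ct{B},\ct{C}) := \Dcat'(F'A\otimes\ct{B},\ct{C}),$$
composition using the canonical comonoid structure $F'A\to F'A\otimes F'A$, $F'A\to I$ obtained by applying the strong monoidal functor $F'$ to the diagonal and terminal map of $A$ in the cartesian category $\Ccat'$, and change of base along $g:A'\to A$ given by precomposition with $F'g\otimes\ct{B}$. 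This is identity-on-objects, so $\Dcat$ is genuinely a locally $\Ccat$-indexed category, with $\Dcat(\cdot)=\Dcat'$.

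First I would supply the indexed adjunction $F\dashv U$. I set $F_A(B):=F'B$ (acting on $f:A\times B\to C$ via $F'f$ and the strong-monoidal iso $F'A\otimes F'B\cong F'(A\times B)$) and $U_A(\ct{C}):=U'\ct{C}$, and obtain
\begin{align*}
\Dcat(A)(F'B,\ct{C})&=\Dcat'(F'A\otimes F'B,\ct{C})\cong\Dcat'(F'(A\times B),\ct{C})\\
&\cong\Ccat'(A\times B,U'\ct{C})=\self(\Ccat)(A)(B,U'\ct{C})
\end{align*}
from strong monoidality of $F'$ and $F'\dashv U'$; naturality in the index is a routine chase with the lax structure of $F'$. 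Hence $F\dashv U$ is an indexed adjunction and the induced monad $T=UF$ on $\Ccat$ is precisely $U'F'$.

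Next I would equip $\Dcat$ with the remaining connectives. For $\cpi{-}{}$-types, $\proj{A}{B}$ is the product projection $A\times B\to A$, so $-\{\proj{A}{B}\}:\Dcat(A)\to\Dcat(A.B)$ is identity-on-objects with $\Dcat(A.B)(\ct{C},\ct{D})=\Dcat'(F'A\otimes F'B\otimes\ct{C},\ct{D})$; setting $\cpi{B}{\ct{D}}:=F'B\linimpl\ct{D}$ and using the symmetry and closed structure of $\Dcat'$ yields the adjunction $-\{\proj{A}{B}\}\dashv\cpi{B}{-}$, and the right Beck--Chevalley condition for $\mathbf{p}$-squares drops out of naturality of these isomorphisms. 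The finite indexed products $(\top,\&)$ of $\Dcat$ are inherited from $\Dcat'$ (as $\Dcat'(F'A\otimes\ct{C},-)$ preserves $\top,\&$) and are stable under change of base because change of base is precomposition. Finally, the bijection required of $0,+$-types, namely $\Dcat(C.\Sigma_{i}C_i)(\ct{D},\ct{D}')\cong\prod_{i}\Dcat(C.C_i)(\ct{D},\ct{D}')$, follows by chaining distributivity of $\times$ over finite coproducts in $\Ccat'$, preservation of coproducts by the left adjoint $F'$, distributivity of $\otimes$ over $\oplus$ in $\Dcat'$, and the fact that a hom out of a finite coproduct is the product of the homs. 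These choices recover exactly the interpretation $\sem{A\functype\ct{B}}=F\sem{A}\linimpl\sem{\ct{B}}$ anticipated in the preliminaries.

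It remains to see that $T=UF=U'F'$ is commutative, which is where the real content lies. Since $F'$ is strong symmetric monoidal and a left adjoint, doctrinal adjunction makes $F'\dashv U'$ into a symmetric monoidal adjunction (with $U'$ lax symmetric monoidal), so $T=U'F'$ is a symmetric lax monoidal monad on $\Ccat'$; on a cartesian base this is exactly a commutative strong monad, as recorded in \cite{benton1996linear}. I expect the main obstacle to be bookkeeping rather than depth: one must check that the comonoid coherence governing composition in $\Dcat$, the naturality in the index of the adjunction and the Beck--Chevalley conditions all fit together coherently. The conceptual point is simply that the ingredients of the linear-logic model reassemble, via $F'$, into precisely the data of a CBPV model.
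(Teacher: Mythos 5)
There is a genuine gap at the very first step: you set $\Ccat:=\Ccat'$ and assert that "the distributive finite coproducts needed so that $\self(\Ccat)$ supports $0,+$-types are part of the data of a model of intuitionistic linear logic." They are not. In the definition of section \ref{sec:backll} (and in the hypotheses of the theorem), the finite distributive coproducts $(0,\oplus)$ live in $\Dcat'$; the category $\Ccat'$ is only assumed to be cartesian monoidal, and nothing forces it to have any coproducts at all, let alone distributive ones preserved by $F'$. Since a simple CBPV model requires distributive finite coproducts in the category of values, your $\self(\Ccat')$ need not be a CBPV model, and your later verification of the $0,+$-type bijection — which invokes "distributivity of $\times$ over finite coproducts in $\Ccat'$" and "preservation of coproducts by the left adjoint $F'$" — rests on structure that is simply absent from the hypotheses. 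The paper's proof handles exactly this point by first replacing $\Ccat'$ with its free completion $\Ccat$ under finite distributive coproducts and then extending $F'\dashv U'$ to $F\dashv U:\Ccat\leftrightarrows\Dcat'$ by $F(\Sigma_i\iota(C_i)):=\Sigma_iF'C_i$ and $U:=\iota\circ U'$; the extension works precisely because $\Dcat'$ \emph{does} have the distributive coproducts and $F'$ is strong monoidal. Your argument needs this (or an equivalent) step inserted before anything else.

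The remainder of your proposal is essentially sound and close to the paper's route. Your fibre homsets $\Dcat(A)(\ct{B},\ct{C}):=\Dcat'(F'A\otimes\ct{B},\ct{C})$ are naturally isomorphic, via the closed structure, to the paper's $\Dcat'(\ct{B},FA\multimap\ct{C})$, so the two presentations of the stack category agree; and your commutativity argument (doctrinal adjunction makes $F'\dashv U'$ symmetric monoidal, hence $T=U'F'$ is lax symmetric monoidal, hence commutative) is exactly the one the paper uses via \cite{kock1972strong}. Once the coproduct completion is in place, the rest goes through as you describe — but note that the adjunction and the $\functype$-, $\top$-, $\&$-structure must then be checked over the completed base $\Ccat$, not over $\Ccat'$, which is a routine but not entirely free extension of your computations.
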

\begin{proof}[Proof] First note that we can replace $\Ccat'$ with its completion $\Ccat$ under finite distributive coproducts. (Think of this as the completion under the notion of finite coproducts in the $2$-category of symmetric monoidal categories and lax symmetric monoidal functors. Similarly, we should think of an adjunction with strong monoidal left adjoint as corresponding with an adjunction in this $2$-category and a commutative monad as a monad in this $2$-category.) Indeed, we have a full and faithful embedding $\iota:\Ccat'\hookrightarrow \Ccat$ and because $\Dcat$ has finite distributive coproducts and $F'$ is strong monoidal, we can extend $F'\dashv U'$ to an adjunction $F\dashv U:\Ccat\leftrightarrows \Dcat'$ with strong monoidal $F$, where we define $F(\Sigma_{1\leq i\leq n} \iota(C_i)):=\Sigma_{1\leq i\leq n}F'C_i$ and $UD:=\iota(U'D)$:
\begin{align*}
\Ccat(\Sigma_{1\leq i\leq n}\iota (C_i),UD) )& :=
\Ccat(\Sigma_{1\leq i\leq n}\iota (C_i),\iota(U'(D)) ) \\
&\cong \Pi_{1\leq i\leq n} \Ccat(\iota(C_i),\iota(U'(D)))\\
&\cong \Pi_{1\leq i\leq n}\Ccat'(C_i,U'D)\\
&\cong \Pi_{1\leq i\leq n}\Dcat'(F'C_i,D)\\
&\cong \Dcat'(\Sigma_{1\leq i\leq n} F'C_i,D)\\
&=: \Dcat'(F \Sigma_{1\leq i\leq n} \iota(C_i),D).
\end{align*}

We define the indexed category $\Ccat^{op}\ra{\Dcat}\Cat$ as having the same objects as $\Dcat'$ in each fibre and morphisms $\Dcat(A)(B,C):=\Dcat'(B,FA\multimap  C)$. To see that the monad is commutative, we note that a commutative monad is the same as a lax symmetric monoidal monad \cite{kock1972strong}.
\end{proof}
In this way, we can see that linear logic describes certain commutative effects. CBPV models for possibly non-commutative effects can be obtained from any monad model \cite{moggi1991notions} of the monadic metalanguage \cite{levy2012call}.
\begin{theorem}Any bicartesian closed category $\Ccat$ with a strong monad $T$ gives rise to a  CBPV model.
\end{theorem}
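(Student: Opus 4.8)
The plan is to exhibit the Eilenberg--Moore model of $T$, recast in the indexed-category language used above. Take $\Ccat$ itself as the cartesian category of values; it automatically carries $1$- and $\Sigma$-types, and, being cartesian \emph{closed}, it has distributive finite coproducts, which is exactly the $0,+$-structure required in $\self(\Ccat)$ (the bijectivity of $\self(\Ccat)(C.\Sigma_i C_i)(C',C'')\to\prod_i\self(\Ccat)(C.C_i)(\ldots)$ is just distributivity spelled out). So the work is to build the locally indexed category $\Dcat$ of stacks, the indexed adjunction $F\dashv U$, the $\cpi{-}{}$-types and the indexed finite products $(\top,\&)$.

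The key device is the strength. On a cartesian closed $\Ccat$, a strong monad is the same thing as a $\Ccat$-enriched monad (Kock), so the Eilenberg--Moore category $\Ccat^T$ is $\Ccat$-enriched and, crucially, \emph{powered}: for an algebra $\ct{C}$ and $A\in\Ccat$ there is an algebra $\ct{C}^A$ with carrier $(U\ct{C})^A$, whose structure map is assembled from the strength $A\times T(-)\to T(A\times -)$, from $T$ applied to evaluation, and from the structure map of $\ct{C}$, and one has $\Ccat^T(\ct{B},\ct{C}^A)\cong\Ccat^T(\ct{B},\ct{C})^A$ naturally. I then define $\Dcat$ by: objects in every fibre are $T$-algebras, change of base is the identity on objects, and
\[\Dcat(\Gamma)(\ct{B},\ct{C}):=\Ccat^T(\ct{B},\ct{C}^{\Gamma}),\]
with reindexing along $f:\Gamma'\to\Gamma$ given by postcomposition with $\ct{C}^f:\ct{C}^{\Gamma}\to\ct{C}^{\Gamma'}$, and composition defined from that of $\Ccat^T$, the powers, and the diagonals $\Gamma\to\Gamma\times\Gamma$ of $\Ccat$ (as is standard for hom-presheaves presented via cotensors). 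This is a locally $\Ccat$-indexed category.

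Now the remaining structure. For the adjunction, $U:\Dcat\to\self(\Ccat)$ is the (indexed) forgetful functor, $U\ct{C}$ being the carrier, and $F:\self(\Ccat)\to\Dcat$ sends $A$ to the free algebra $TA$; then $\Dcat(\Gamma)(FA,\ct{C})=\Ccat^T(TA,\ct{C}^{\Gamma})\cong\Ccat(A,(U\ct{C})^{\Gamma})\cong\Ccat(\Gamma\times A,U\ct{C})=\self(\Ccat)(\Gamma)(A,U\ct{C})$, naturally in $\Gamma$. For $\cpi{-}{}$-types, set $A\functype\ct{C}:=\ct{C}^A$; then $\Dcat(\Gamma)(\ct{B},\ct{C}^A)\cong\Ccat^T(\ct{B},\ct{C}^{\Gamma\times A})=\Dcat(\Gamma.A)(\ct{B},\ct{C})$, exhibiting $\ct{C}^A$ as the value of the right adjoint to weakening along $\proj{\Gamma}{A}$, and the right Beck--Chevalley condition for $\mathbf{p}$-squares is immediate because in the simply typed setting such a square over $f:\Gamma'\to\Gamma$ is just $f\times A$, so both composites reduce to postcomposition with $\ct{C}^{f\times A}$. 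For $(\top,\&)$, $U:\Ccat^T\to\Ccat$ creates limits, so $\Ccat^T$ has the finite products $\top$ and $\ct{B}\&\ct{C}$ needed; they are stable under reindexing because the power $(-)^{\Gamma}$, being a right adjoint, preserves them; and the auxiliary bijection $\Dcat(C.\Sigma_i C_i)(\ct{D},\ct{D'})\to\prod_i\Dcat(C.C_i)(\ct{D},\ct{D'})$ follows from distributivity of coproducts in $\Ccat$ together with $\ct{D'}^{C\times\Sigma_i C_i}\cong\prod_i\ct{D'}^{C\times C_i}$. Soundness of the resulting interpretation is then inherited from the general Simple CBPV Semantics theorem above.

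\textbf{Main obstacle.} Essentially all of this is diagram-chasing once the powers $\ct{C}^A$ are available; the one place that genuinely needs care is constructing those powers from the strength and verifying the coherences relating the strength, the multiplication of $T$ and the algebra structure maps — i.e.\ that $\Ccat^T$ really is $\Ccat$-powered and that the induced $\Dcat$ is a bona fide locally indexed category with the stated adjoints. This is, however, precisely the content of Kock's correspondence between strong and enriched monads and of Levy's Eilenberg--Moore CBPV model, so the difficulty is repackaging known facts in the present formalism rather than anything new.
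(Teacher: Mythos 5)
Your proposal is correct and follows essentially the same route as the paper: the paper also takes $\Dcat(A)(k,l):=\Ccat^T(k,A\functype l)$ where $A\functype l$ is exactly your power $l^A$ (the algebra structure on $A\Rightarrow Ul$ inherited from $l$ via the strength), interprets $F\dashv U$ by the Eilenberg--Moore adjunction, and obtains $(\top,\&)$ from the forgetful functor creating limits. The only difference is that the paper states this in three sentences, citing Moggi for the Kleisli exponentials, whereas you spell out the enrichment, the Beck--Chevalley condition and the distributivity checks explicitly.
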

\begin{proof} \mccorrect{It is well-known that the forgetful functor $\Ccat^T\to \Ccat{}$ creates finite products (limits).} Recall that in this setting the Eilenberg-Moore category $\Ccat^T$ has Kleisli exponentials, in the sense of algebras $A\functype k$ of homomorphisms from free algebras $\mu_A$ to general algebras $k$ ($A\Rightarrow Uk$ inherits a $T$-algebra structure from $k$) \cite{moggi1991notions}. We define the indexed category $\Ccat^{op}\ra{\Dcat}\Cat$ to have the same objects as the Eilenberg-Moore category $\Ccat^T$ in each fibre and morphisms $\Dcat(A)(k,l):=\Ccat^T( k,A\functype l)$. $F\dashv U$ is interpreted by the usual Eilenberg-Moore adjunction.
\end{proof}

\subsection{Operational Semantics}\label{sec:simplop}
Importantly, CBPV admits a natural operational semantics that, for terms of ground type, reproduces the usual operational semantics of CBV and CBN under the specified translations into CBPV \cite{levy2012call} and that can easily be extended to incorporate various effects that we may choose to add to pure CBPV. We very briefly discuss this.

First, we note that Levy chooses to only provide an operational semantics for computations without \emph{complex values}. Complex values are defined to be values containing $\mathsf{pm}\;\;\mathsf{as}\;\;\mathsf{in}\;\;$- and $\lbi{}{}{}$-constructs. He does this as complex values introduce arbitrary choices into the operational semantics, as we need to decide  when to evaluate them. As the normalization of  values does not produce effects (in particular, values are equal to their normal form; they are static), all reasonable evaluation strategies for them are observationally indistinguishable and we could choose our favourite.

While excluding complex values from computations is not a terrible restriction (one can show that any computation is judgementally equal to one not having any complex values as subterms and the CBV and CBN translations do not produce any complex values), we do not see the need to introduce this restriction. Indeed, complex values will turn out to be useful in a dependently typed CBPV, when we want to substitute them in dependent types. For instance, we might want to define a dependent type through a case distinction.

For that purpose, let us point out that the $\beta$-reductions for complex values are directed versions (left-to-right) of their equations in the left hand column of figure \ref{fig:vceqs}. We use the parallel nested closure of $\beta$-reductions as our notion of reduction for values. Following the usual argument of logical relations \cite{tait1967intensional}, this gives us a strong normalization result for values. Let us write $\nf{V}$ for the normal form of a value $V$. We write $\nnf{V}$ to indicate a value  which is not in normal form.

We present a small-step operational semantics for CBPV computations in terms of a simple abstract machine that Levy calls the CK-machine. The configuration of such a machine consists of a pair $M,K$ where $\Gamma;\cdot\vdash M:\ct{B}$ is a computation and $\Gamma;{\nil}:\ct{B}\vdash K:\ct{C}$ is a compatible {stack}. We call $\ct{C}$ the type of the configuration. The idea is that transitions are defined on a pair of a computation and a stack, rather than simply on computations, to be able to correctly model the operational behaviour of sequencing and function application: we push parts of a computation to the stack if other parts need to be executed first before we can pop the stack and resume.

The initial configurations, transitions (which embody left-to-right-directed versions of the $\beta$-rules of our equational theory) and terminal configurations in the evaluation of a computation $\Gamma;\cdot\vdash M:\ct{C}$ on the CK-machine are specified by figure \ref{fig:ckmachine} where we use the following abbreviations for stacks\\
\resizebox{\linewidth}{!}{\parbox{1.1\linewidth}{ \begin{align*}
V::K &:= \lbi{{\nil}}{V\textquoteleft {\nil}}{K}\\ 
j::K &:= \lbi{{\nil}}{j\textquoteleft {\nil}}{K}\\
\toin{[\cdot]}{x}{M}::K &:= \lbi{{\nil_1}}{(\toin{{\nil_2}}{x}{M})}{K}.
\end{align*}}}
We recall the following basic results about this operational semantics from \cite{levy2012call,levy2006call}.
\begin{theorem}[Determinism, Strong Normalization and Subject Reduction] For every configuration of the CK-machine, at most one transition applies. No transition applies precisely when the configuration is terminal. Every configuration of type $\ct{C}$ reduces, in a finite number of transitions, to a unique terminal configuration of type $\ct{C}$.
\end{theorem}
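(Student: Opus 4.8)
The plan is to dispatch the three assertions in order of increasing difficulty, isolating the normalisation statement as the real work.

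\emph{Determinism and the characterisation of terminal configurations.} I would establish the first two sentences by a routine case analysis on a configuration $M,K$. Inspecting figure~\ref{fig:ckmachine}, every transition is triggered by the outermost term former of the computation $M$ alone, except that when $M$ is an introduction form for a computation type ($\return V$, $\lambda_xK$, or $\lambda_iK_i$) the rule that fires is determined by the outermost former of the stack $K$. A canonical forms lemma — a closed value in normal form of type $U\ct B$ is a $\thunk$, of type $\Sigma_{1\leq i\leq n}A_i$ a tagged value, of type $A\times A'$ a pair, of type $1$ the unit — together with the analogous inspection of the typing rules for stacks on $FA$, $A\functype\ct B$ and $\Pi_{1\leq i\leq n}\ct B_i$, shows that the rules are mutually exclusive and, for closed well-typed $M,K$, jointly exhaustive unless $M,K$ is one of $\return V,\nil$, $\lambda_xK,\nil$, $\lambda_iK_i,\nil$ — precisely the terminal configurations. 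The one point worth flagging is that no arbitrary choice creeps in, because the only complex values occurring have already been reduced to $\nf{V}$ before a transition fires, so there is never a pending decision about whether to evaluate them.

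\emph{Subject reduction.} For each transition $M,K\leadsto M',K'$ I would check that the type of the configuration is preserved. The only steps with content are the $\beta$-steps that perform a substitution — $\return V,\,\toin{[\cdot]}{x}{N}::K\leadsto N[\nf V/x],K$, the $\mathsf{pm}$-steps, $\lambda_xK,\,V::K'\leadsto K[\nf V/x],K'$, $\lbi{x}{V}{K}$, and so on — and for these one invokes the admissibility of the substitution rule for value identifiers, noting also that passing from $V$ to $\nf V$ preserves typing, being an instance of the equational theory, which is compatible with typing.

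\emph{Uniqueness, and reduction to a termination statement.} Uniqueness of the terminal configuration is then immediate: by the first part at most one transition applies and a configuration is terminal exactly when none does, so the CK-reduction sequence issuing from a configuration is unique; hence if it reaches a terminal configuration that configuration is unique, and it carries type $\ct C$ by iterating subject reduction. What remains is \emph{termination}: every configuration of type $\ct C$ reaches a terminal one after finitely many transitions. Since CK-reduction is deterministic, this weak normalisation property coincides with strong normalisation.

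\emph{Termination — the main obstacle.} I would prove this by a Tait-style reducibility argument in biorthogonal ($\top\top$-lifting) form, taking as ``pole'' $\perp$ the set of configurations that do reach a terminal configuration (note $\perp$ is closed under CK-anti-reduction). One defines by induction on value types sets $\mathrm{Red}_A$ of reducible closed values ($\mathrm{Red}_{U\ct B}$ the thunks of reducible computations, $\mathrm{Red}_{\Sigma_{i}A_{i}}$ the tagged reducible values, $\mathrm{Red}_{A\times A'}$ the pairs of reducible values, $\mathrm{Red}_1=\{\langle\rangle\}$); then, for each computation type $\ct B$, a set $\mathrm{Red}^k_{\ct B}$ of reducible stacks — grounded at $FA$ as the $\perp$-orthogonal of the frames $\toin{[\cdot]}{x}{N}::\nil$ with $N$ reducible, and propagated to $\&$-, $\Pi$- and $\functype$-types in the evident way — and $\mathrm{Red}^c_{\ct B}:=(\mathrm{Red}^k_{\ct B})^{\perp}$, the computations orthogonal to every reducible stack. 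After the usual closure-under-anti-reduction and saturation lemmas for these sets, the heart of the argument is the \emph{fundamental lemma}: every well-typed value, computation or stack of figure~\ref{fig:vcterms} becomes reducible once reducible closed values are substituted for its free value identifiers (and, in the stack case, a reducible stack for $\nil$), proved by induction on the typing derivation with one case per term former; the $\mathsf{pm}$- and $\lbi{}{}{}$-cases are exactly where the already-established strong normalisation of values is used, to bring the scrutinee to its canonical form before matching. Since the identity stack $\nil$ is reducible at every computation type, applying the fundamental lemma to a closed $M:\ct C$ and a closed compatible $K$ yields $M,K\in\perp$, i.e. termination; the general case follows by a closing substitution. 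The delicate point, as always for two-sorted reducibility over an adjunction, is arranging the mutually recursive definitions of $\mathrm{Red}_A$, $\mathrm{Red}^k_{\ct B}$, $\mathrm{Red}^c_{\ct B}$ and the accompanying anti-reduction lemmas so that the fundamental lemma goes through uniformly for the $F\dashv U$ pair and all the positive and negative connectives at once.
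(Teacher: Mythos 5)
Your proposal is correct in substance, but it takes a genuinely different route from the paper. The paper's proof is essentially a two-line reduction to prior work: it observes that the CK-machine here is Levy's machine for simply typed CBPV, for which determinism, subject reduction and strong normalization are already established in \cite{levy2012call,levy2006call}, and that the only new ingredient is the presence of complex values, whose normalization is covered by the standard strong-normalization and subject-reduction results for pure simple type theory with products and coproducts; determinism survives because the transition rules explicitly distinguish $\nf{V}$ from $\nnf{V}$. You instead rebuild the whole result from scratch: a case analysis for determinism and the characterisation of terminal configurations, a per-transition check for subject reduction, and a Tait-style $\top\top$-lifting (biorthogonality) argument for termination with mutually defined reducibility predicates for values, stacks and computations. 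This is essentially a re-derivation of the cited result (Levy's own termination proof for CBPV is a reducibility argument of just this shape), so your version buys self-containedness at the cost of considerable length, while the paper's buys brevity at the cost of leaning on the literature. Two small points to tidy if you write yours out in full: (i) the theorem quantifies over configurations in an arbitrary context $\Gamma$, and the machine's terminal configurations include the stuck forms such as $\force\nf{V}^{x'},K$ where a normal-form value still has a free identifier, so your exhaustiveness analysis and your closing-substitution step need an explicit remark that the open reduction sequence embeds as a prefix of the closed one; (ii) as stated, "the $\perp$-orthogonal of the frames" would yield a set of computations rather than stacks — you want the reducible stacks at $FA$ to be generated by $\nil$ together with frames $\toin{[\cdot]}{x}{N}::K$ with $N$ reducible and $K$ reducible, and then take computations to be their orthogonal.
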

\begin{proof}
The only real modification from \cite{levy2012call,levy2006call} is that our terms include complex values. It is well-known that in a pure simple type theory with projection products and coproducts, the reductions are strongly normalizing and satisfy subject reduction. This shows that the transitions for complex values do not break strong normalization or subject reduction. The distinction between values in normal form and those not in normal form ensures that determinism still applies.
\end{proof}
\begin{figure}[!tb]
\fbox{
\resizebox{\linewidth}{!}{
\parbox{1.2\linewidth}{
\textbf{Initial Configuration}\\
\begin{tabular}{lll}
$M$ &,& ${\nil}$
\end{tabular}\\
\\
\textbf{Transitions}\\
\begin{tabular}{lllllll}
$\lbi{\nnf{V}}{x}{M}$ &,& $K$& $\leadsto\hspace{20pt}$ & $\lbi{\nf{V}}{x}{M}$ &,& $K$ \\
$\lbi{\nf{V}}{x}{M}$ &,& $K$& $\leadsto\hspace{20pt}$ & $M[\nf{V}/x]$ &,& $K$ \\
$\lbi{M}{{\nil}}{L}$&,& $K$& $\leadsto\hspace{20pt}$ & $L[M/{\nil}]$ &,& $K$ \\
$ \toin{M}{x}{N} $ &,& $K$& $\leadsto$ & $ M$ &,& $\toin{[\cdot]}{x}{N}::K$ \hspace{10pt}\;\\
$   \return \nnf{V}     $ \hspace{20pt}&,& $K$& $\leadsto$ & $   \return \nf{V}       $ &,& $K$ \\
$   \return \nf{V}     $ \hspace{20pt}&,& $\toin{[\cdot]}{x}{N}::K$& $\leadsto$ & $   N[\nf{V}/x]       $ &,& $K$ \\
$\force\nnf{V} $ & , & $K$ & $\leadsto$ & $\force \nf{V}$ & , & $K$\\
$   \force\thunk M     $ &,& $K$& $\leadsto$ & $   M       $ &,& $K$ \\
$ \sipm{\nnf{V}}{i}{x}{M_i}      $ &,& $K$& $\leadsto$ & $  \sipm{\nf{V}}{i}{x}{M_i}       $ &,& $K$ \\
$ \sipm{\langle j,V\rangle}{i}{x}{M_i}      $ &,& $K$& $\leadsto$ & $   M_j[\nf{V}/x]       $ &,& $K$ \\
$ \upm{\nnf{V}}{M}       $ &,& $K$& $\leadsto$ & $\upm{\nf{V}}{M}         $ &,& $K$ \\
$ \upm{\langle\rangle}{M}       $ &,& $K$& $\leadsto$ & $ M         $ &,& $K$ \\
$  \sipm{\nnf{V}}{x}{y}{M}      $ &,& $K$& $\leadsto$ & $ \sipm{\nf{V}}{x}{y}{M}          $ &,& $K$ \\
$  \sipm{\langle V,W\rangle }{x}{y}{M}      $ &,& $K$& $\leadsto$ & $  M[V/x,W/y]        $ &,& $K$ \\
$ j\textquoteleft M       $ &,& $K$& $\leadsto$ & $       M   $ &,& $j::K$ \\
$ \lambda_i M_i       $ &,& $j::K$& $\leadsto$ & $    M_j      $ &,& $K$ \\
$ \nnf{V}\textquoteleft M       $ &,& $K$& $\leadsto$ & $     \nf{V}\textquoteleft M    $ &,& $K$ \\
$ \nf{V}\textquoteleft M       $ &,& $K$& $\leadsto$ & $      M   $ &,& $\nf{V}::K$ \\
$ \lambda_x M       $ &,& $V::K$& $\leadsto$ & $   M[V/x]       $ &,& $K$ 
\end{tabular}\\
\\
\textbf{Terminal Configurations}\\
\begin{tabular}{lll}
$\return \nf{V}$ &,& ${\nil}$\\
$\lambda_i M_i$ &,& ${\nil}$\\
$\lambda_x M$ &,& ${\nil}$\\
$\force\nf{V}^{x'}$ &,& $ K$\\
$\sipm{\nf{V}^{x'}}{i}{x}{M_i}$ &,& $K$\\
$\upm{\nf{V}^{x'}}{M}$ &,& $K$\\
$\sipm{\nf{V}^{x'}}{x}{y}{M}$ &,& $K$
\end{tabular}
}
}
}
\caption{\label{fig:ckmachine} The behaviour of the CK-machine in the evaluation of a computation $\Gamma;\cdot\vdash M:\ct{C}$. We write $\nf{V}^{x'}$ for a non-canonical normal form of a value which has at least one free identifier $x'$. Every time we encounter a computation term former taking a value as an argument, we first normalize the value before proceeding to the corresponding transition for the term former. We leave out type annotations.}
\end{figure}

\subsection{Adding Effects}\label{sec:simpleff}
So far, we have considered pure CBPV computations. Next, we add effects to them, making them into real dynamic objects in the sense that their reductions might not respect equality. We recall by example how one adds effects to CBPV. Figure \ref{fig:effects} gives some examples of effects one could consider, from left to right, top to bottom:  divergence, recursion, printing an element $m$ of some monoid $\mathcal{M}$, erratic choice from finitely many alternatives, errors $e$ from some set $E$, writing a global state $s\in S$ and reading a global state to $s$. We note that the framework fits many more examples like probabilistic erratic choice, local references and control operators \cite{levy2012call}.
\begin{figure}[!tb]
\fbox{
\resizebox{\linewidth}{!}{
\begin{tabular}{llll}
\AxiomC{}
\UnaryInfC{$\Gamma;\cdot\vdash \diverge :\ct{B}$}
\DisplayProof
&\AxiomC{$\Gamma,z:U\ct{B};\cdot\vdash M : \ct{B}$}
\UnaryInfC{$\Gamma;\cdot\vdash \mu_z M : \ct{B}$}
\DisplayProof &

\AxiomC{$\Gamma;\cdot\vdash M:\ct{B}$}
\UnaryInfC{$\Gamma;\cdot\vdash \print{m}M:\ct{B}$}
\DisplayProof
&
\AxiomC{$\{\Gamma;\cdot\vdash M_i:\ct{B}\}_{1\leq i\leq n}$}
\UnaryInfC{$\Gamma;\cdot\vdash \nondet{i}{M_i}:\ct{B}$}
\DisplayProof  \\
&
&
&\\
&&&\\
\AxiomC{}
\UnaryInfC{$\Gamma;\cdot\vdash\error{e} :\ct{B}$}
\DisplayProof
& &
\AxiomC{$\Gamma;\cdot\vdash M:\ct{B}$}
\UnaryInfC{$\Gamma;\cdot\vdash\writecell{s}M:\ct{B}$}
\DisplayProof
&
\AxiomC{$\{\Gamma;\cdot\vdash M_s:\ct{B}\}_{s\in S}$}
\UnaryInfC{$\Gamma;\cdot\vdash \readcell{s}{M_s}:\ct{B}$}
\DisplayProof

\end{tabular}
}
}
\caption{\label{fig:effects} Some examples of effects we could add to CBPV. $\mu_z$ is a name binding operation that binds the identifier $z$ and $\nondet{i}{}$ and $\readcell{s}{}$ bind the indices $i$ and $s$ respectively.}
\end{figure}

\begin{figure}[!tb]
\fbox{
\resizebox{\linewidth}{!}{
\begin{tabular}{l}
\textbf{Transitions}\\
\begin{tabular}{lllllll}
$\diverge$ &,& $K$& $\leadsto\hspace{20pt}$ & $\diverge$ &,& $K$\\
$\mu_z M $ & , & $K$ & $ \leadsto\hspace{20pt}$ & $M[\thunk \mu_z M /z]$ & , & $K$\\
$\nondet{i}{M_i}$ & , & $K$ & $ \leadsto\hspace{20pt}$ & $M_j$ & , & $K$\\
\end{tabular}\\
\\
\textbf{Terminal Configurations}\\
\begin{tabular}{lll}
$\error e$ & , & $K$ 
\end{tabular}
\end{tabular}\hspace{220pt}\;
}
}
\caption{\label{fig:opsemdivs} The operational semantics for divergence, recursion, erratic choice and errors.}
\end{figure}
The small-step semantics of divergence, recursion, erratic choice and errors can easily be explained on our CK-machine as it is. This is summed up in figure \ref{fig:opsemdivs}. For the operational semantics of printing and state, we need to add some hardware to our machine. For that purpose, a configuration of our machine will now consist of a quadruple $M,K,m,s$ where $M,K$ are as before, $m$ is an element of our printing monoid $(\mathcal{M},\epsilon,*)$ which models some channel for output and $s$ is an element of our finite pointed set of states $(S,s_0)$ which is the current value of our storage cell. We lift the operational semantics of all existing language constructs to this setting by specifying that they do not modify $m$ and $s$, that terminal configurations can have any value of $m$ and $s$ and that initial configurations always have value $m=\epsilon$ and $s=s_0$ for the fixed initial state $s_0$. Printing and writing and reading the state can now be given the operational semantics of figure \ref{fig:opsemprint}.
 
\begin{figure}[!tb]
\fbox{
\resizebox{\linewidth}{!}{
\begin{tabular}{l}
\textbf{Transitions}\\
\begin{tabular}{lllllllllllllll}
$\print n M$ &,& $K$& ,& $m$& , & $s$ & $\leadsto\hspace{20pt}$ & $M$ &,& $K$ & ,& $m*n$& , & $s$\\
$\writecell {s'} M$ &,& $K$& ,& $m$& , & $s$ & $\leadsto\hspace{20pt}$ & $M$ &,& $K$ & ,& $m$& , & $s'$\\
$\readcell {s'} {M_{s'}}$ &,& $K$& ,& $m$& , & $s$ & $\leadsto\hspace{20pt}$ & $M_s$ &,& $K$ & ,& $m$& , & $s$\\
\end{tabular}
\end{tabular}\hspace{80pt}\;
}
}
\caption{\label{fig:opsemprint} The operational semantics for printing and writing and reading global state.}
\end{figure} 
We can try to extend the results of the previous section to this effectful setting and indicate when they break \cite{levy2012call}.
\begin{theorem}[Determinism, Strong Normalization and Subject Reduction] Every transition respects the type of the configuration. No transition occurs precisely if we are in a terminal configuration. In absence of erratic choice, at most one transition applies to each configuration. In absence of divergence and recursion, every configuration reduces to a terminal configuration in a finite number of steps.
\end{theorem}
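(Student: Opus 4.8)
The plan is to reduce each of the four claims to the corresponding theorem for pure CBPV --- whose proof combines Levy's reducibility argument for the CK-machine with Tait's strong normalisation of the complex-value reductions, both of which we take as given --- by inspecting the finitely many new transition rules of figures~\ref{fig:opsemdivs} and~\ref{fig:opsemprint} one at a time. Note first that the extra components $m\in\mathcal{M}$ and $s\in S$ of an effectful configuration $M,K,m,s$ carry no typing information and do not enable or disable any transition, apart from $\readcell{s'}{M_{s'}}$ reading the \emph{current} value $s$; moreover $m$ only ever grows and $S$ is finite, so neither component interferes with any of the four claims and I suppress them below. Subject reduction is then a routine case check: for $\diverge$, $\nondet{i}{M_i}$, $\print{n}M$, $\writecell{s'}M$ and $\readcell{s'}{M_{s'}}$ the right-hand computation is $M$ itself or one of the branches $M_i,M_s$, each typed at the same $\ct{B}$ as the redex by the rules of figure~\ref{fig:effects}, and the stack $K$ is unchanged; for $\mu_z M,K\leadsto M[\thunk\mu_z M/z],K$ one uses $\Gamma;\cdot\vdash\thunk\mu_z M:U\ct{B}$ and the admissible substitution rule for value identifiers to obtain $\Gamma;\cdot\vdash M[\thunk\mu_z M/z]:\ct{B}$.

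For the characterisation of terminal configurations, one re-runs the case analysis on the head computation $M$: for each of $\diverge$, $\mu_z$, $\nondet{i}{}$, $\print{}{}$, $\writecell{}{}$, $\readcell{}{}$ exactly one transition of figures~\ref{fig:opsemdivs}, \ref{fig:opsemprint} applies; $\error{e},K$ matches no rule and is declared terminal; and for every remaining shape of $M$ we are back in the pure analysis. Hence the configurations admitting no transition are precisely the pure terminal configurations of figure~\ref{fig:ckmachine} together with $\error{e},K$, which is exactly the declared terminal set. For determinism, the only new rule with more than one instance at a given configuration is the erratic-choice rule $\nondet{i}{M_i},K\leadsto M_j,K$ (one per branch $j$); all the others --- including $\readcell{s'}{M_{s'}}$, which reads the unique current state deterministically --- have a single instance, and the $\nnf{V}$-versus-$\nf{V}$ bookkeeping that forced determinism of value normalisation in the pure case is untouched, since none of the effect constructs takes a value as a term argument. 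So in the absence of erratic choice at most one transition applies.

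Strong normalisation in the absence of divergence and recursion is the step I expect to be the real obstacle, because printing, erratic choice, errors and global state still interact with the pure reductions --- a $\lbi{}{}{}$ redex can duplicate a $\thunk$ of an effectful computation, and so duplicate effect operations --- so one cannot treat the computation as an inert payload carried along by pure reduction. The plan is to extend the pure reducibility predicate with the clauses ``$\print{n}M$ and $\writecell{s'}M$ are reducible iff $M$ is'', ``$\readcell{s'}{M_{s'}}$ and $\nondet{i}{M_i}$ are reducible iff every branch is'' and ``$\error{e}$ is reducible'', and to check that the resulting family is still a family of reducibility candidates (each member strongly normalising, and paired with a reducible stack yielding a terminating configuration); each clause is a purely ``negative'' closure condition with no self-reference, so the inductive construction of the candidates goes through and the fundamental lemma is then routine. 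An essentially equivalent but more combinatorial route is to erase the benign effects to pure CBPV augmented with a fresh stuck constant at each computation type (to absorb $\error{e}$) and to exhibit a well-founded measure combining, lexicographically, the pure CK-machine's reduction rank with the multiset of effect-operation nodes. Either way the exclusion of divergence and recursion is exactly what makes the measure well-founded: the self-looping transitions $\diverge,K\leadsto\diverge,K$ and $\mu_z M,K\leadsto M[\thunk\mu_z M/z],K$ would require the reducibility clause for those constructs to be an impossible fixpoint, and indeed neither yields a strongly normalising configuration.
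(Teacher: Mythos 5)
Your proof is correct, and it supplies an argument the paper itself omits: for this effectful version of the theorem the thesis gives no proof at all, simply deferring to Levy's treatment, and your case analysis for subject reduction, terminality and determinism together with the reducibility-candidate extension for strong normalisation is exactly the standard argument one would import from there. The only weak spot is your ``essentially equivalent but more combinatorial'' alternative for strong normalisation: a lexicographic measure pairing the pure reduction rank with a multiset of effect nodes does not obviously survive $\readcell{s'}{M_{s'}}$ and $\nondet{i}{M_i}$, whose erasure must pick (or merge) branches, so the reducibility route should be regarded as the proof and the measure only as a heuristic. Since that route is sound and self-contained, there is no genuine gap.
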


We can again translate effectful CBV and CBN$\lambda$-calculi into CBPV with the appropriate effects as is indicated in figure \ref{fig:transeff}.

\begin{figure}[!tb]
\fbox{
\resizebox{\linewidth}{!}{
\begin{tabular}{l|ll||ll|l}
\textbf{CBV Term} $M$ & \textbf{CBPV Term} $M^v$&\qquad\qquad\qquad && \textbf{CBN Term} $M$ & \textbf{CBPV Term} $M^n$\\
\hline
$\op{M} $& $\op{M^v}$ &&& $\op{M} $& $\op{M^n}$\\
$\mu_x M $ & $\mu_z (\toin{\force z}{x}{M^v})$ &&&$\mu_z M $ & $\mu_z M^n$ 
\end{tabular}}}
\caption{\label{fig:transeff} The CBV and CBN translations for effectful terms. $z$ is assumed to be fresh in the CBV translation $\mu_x M$. For our examples, $\op{-}$ ranges over ${\diverge}$, $\error{e}$, $\nondet{i}{-}$, $\print{m}{(-)}$, $\readcell{s}{-}$ and $\writecell{s}{(-)}$.}
\end{figure}

Let us write $M\Downarrow N,m,s$ for a closed term $\cdot;\cdot\vdash M:\ct{B}$ if $M,{\nil},\epsilon,s_0$ reduces to the terminal configuration $N,{\nil},m,s$. We call this the \emph{big-step semantics} of CBPV. Recall that, at least for terms of ground type, CBPV induces the usual operational semantics via the CBV and CBN translations \cite{levy2006call}.
\begin{theorem}
The big-step semantics for CBPV induces the usual CBV and CBN big-step semantics for terms of ground  type, via the respective translations.\end{theorem}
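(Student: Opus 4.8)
The plan is to reduce the statement to two \emph{simulation lemmas}: one relating the usual small-step CBV operational semantics of the source $\lambda$-calculus to the transitions of the CK-machine on the $(-)^v$-translated terms, and one doing the same for CBN and $(-)^n$. Throughout, fix a closed source term $M$ of a ground type, i.e.\ of a finite inductive type $\{a_i\mid i\}$; then the only closed values of that type are the constants $a_i$, and $a_i^v = a_i = \nf{a_i}$ while $a_i^n$ runs to $\return a_i$. It therefore suffices to prove that $M$ evaluates, in the source CBV big-step semantics, to $a_i$ with output $m$ and final state $s$ if and only if $M^v\Downarrow \return a_i, m, s$ on the CK-machine, and symmetrically for CBN. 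Since the source big-step semantics and the CBPV big-step semantics (Section~\ref{sec:simpleff}) are both defined as reflexive-transitive closures of small-step relations, it is enough to match up these small-step relations.

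For the forward direction I would induct on the length of the source reduction sequence $M\leadsto^* a_i$ (equivalently, on the derivation of the source big-step judgement). The key auxiliary observation is that $(-)^v$ sends a CBV evaluation context $E[-]$ to a CK-stack $E^v$, built from the abbreviations $V::K$, $j::K$ and $\toin{[\cdot]}{x}{M}::K$ of Figure~\ref{fig:ckmachine}; more precisely, $(E[N])^v$ started against the empty stack runs, by a fixed finite number of \emph{administrative} transitions (the $\mathsf{let}$-binding, $\mathsf{return}$-plumbing, $\force\thunk$- and stack push/pop transitions that the translation introduces and that correspond to no source step), to the configuration $N^v, E^v$. A source redex contraction $E[R]\leadsto E[R']$ is then matched by: administrative transitions reaching $R^v, E^v$; the principal CK-transition(s) implementing the relevant $\beta$-rule; and further administrative transitions. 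Effects are handled uniformly via Figure~\ref{fig:transeff}: an $\op{-}$-step is matched by the homonymous CK-transition with the identical modification of the output component $m$ and state component $s$, and $\mu_x M\leadsto M[\mu_x M/x]$ is matched by the $\mu$-transition for $\mu_z(\toin{\force z}{x}{M^v})$ followed by $\force\thunk$- and $\mathsf{return}$/$\mathsf{to}$-administrative steps. Since administrative transitions are deterministic, terminating, and neither emit output nor touch the state, the observable outcome and the pair $(m,s)$ are unaffected.

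For the converse (adequacy) direction I would show that a terminating CK-run on $M^v$ is \emph{reflected} into a source reduction sequence, by induction on the length of the CK-run, using the same correspondence: every CK-configuration reachable from $M^v$ is, up to administrative transitions, of the form $N^v, E^v, m, s$ with $E[N]$ the source term reached from $M$ after emitting $m$ and setting the state to $s$; when the run halts in $\return a_i, \nil, m, s$ the associated source configuration is $a_i$ in the empty context, whence $M$ evaluates to $a_i$ with output $m$ and state $s$. Determinism of the CK-machine (outside erratic choice) together with the subject-reduction and strong-normalization results already established make the two directions fit; in the presence of erratic choice one instead matches each CK-branch against the corresponding source branch. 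The CBN case is entirely analogous with $(-)^n$, CBN evaluation contexts, $\force$-plumbing in place of $\mathsf{return}$-plumbing, and the remark that at ground type $\ct{B}^n = F\{a_i\mid i\}$, so its terminal configurations are again exactly $\return a_i, \nil$.

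The main obstacle is the bookkeeping of these \emph{administrative} transitions. The translations insert $\thunk$/$\force$ pairs and $\mathsf{to}$-sequencing that correspond to no source step, so a single source reduction maps to a block of CK-transitions, and the simulation relation must be phrased modulo administrative steps — the familiar subtlety one meets with CPS-style translations. One must therefore verify once and for all that the administrative transitions form a strongly normalizing, confluent rewrite subsystem that emits no output and does not alter the state, so that the ``up to administrative transitions'' quotient is well-defined and ground-type behaviour is genuinely preserved and reflected. As essentially this analysis is carried out in \cite{levy2006call,levy2012call}, the proof in practice reduces to checking that the minor reformulation adopted here — treating computations as length-$0$ stacks, and admitting complex values with the extra value-normalization transitions — leaves the argument intact, which is immediate from the determinism and strong normalization already proved for our CK-machine.
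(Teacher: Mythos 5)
Your proposal is correct and is essentially the standard argument: a simulation between source reductions and CK-machine runs modulo the administrative transitions introduced by the translations, with evaluation contexts mapped to stacks — which is exactly the analysis of Levy that the paper relies on here (the thesis itself offers no proof beyond the citation to \cite{levy2006call,levy2012call}). The only point worth double-checking, as you note yourself, is that the paper's minor reformulations (computations as stacks of length $0$, complex values with their extra normalization transitions) do not disturb the administrative-reduction bookkeeping, and the determinism and strong-normalization results already established for values make that immediate.
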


We list the basic equations we would typically demand for the effects we consider in figure \ref{fig:effeqn}. In addition to these general equations, we could include the usual specific equations from the algebraic theory for $\op{-}$ (like the lookup-update algebra equations for global state of Plotkin and Power \cite{plotkin2002notions}). In a dependently typed setting, we have to decide which effect specific equations to include as judgemental equalities, such that the type checker has to be able to decide them, and which to include as propositional equalities for manual reasoning by the user.

Although one could write down an equational theory for these effects and a corresponding categorical semantics, in which case one would obtain soundness and completeness properties for the CBV and CBN translations, we  choose not to do so here for reasons of space. For this, we refer the reader, for instance, to \cite{plotkin2002notions,levy2012call}. The important thing to note is that the CBV and CBN translations for effectful CBPV typically result in a broken $\eta$-law for function types and sum types respectively as is well-known from traditional CBV and CBN semantics of effectful type theories.

\begin{figure}
[!tb]
\fbox{
\resizebox{\linewidth}{!}{
\begin{tabular}{ll}\parbox{.55\linewidth}{
$K[\op{M}/{\nil}]=\op{K[M/{\nil}]}$} &
\parbox{.55\linewidth}{
$\mu_z M = M[\thunk\mu_zM/z]$
}
\end{tabular}}}
\caption{\label{fig:effeqn} For effects, we demand the basic equation defining the fixpoint combinator $\mu_z$ as well as algebraicity equations for all effects $\op{-}$ (in addition to the usual equational theory for the specific operations $\op{-}$, like the Plotkin-Power equations for global state). These algebraicity equations state that a stack $K$ is a homomorphism of the algebra defined by the operations $\op{-}$. For our examples, $\op{-}$ ranges over ${\diverge}$, $\error{e}$, $\nondet{i}{-}$, $\print{m}{(-)}$, $\readcell{s}{-}$ and $\writecell{s}{(-)}$.}
\end{figure}

\section{Linear Types}\label{sec:backll}
Linear logic was introduced by Girard in \cite{girard1987linear} as a resource sensitive refinement of intuitionistic logic, which was inspired by the structure present in certain models for system F. From a modern perspective, we can see the essence of linear logic, or rather that of its proof term calculus, the linear $\lambda$-calculus, to already be present in \cite{eilenberg1966closed}. Put simply, the linear $\lambda$-calculus provides an internal language for symmetric monoidal closed categories in the same way that the ordinary (simply-typed) $\lambda$-calculus does for cartesian closed categories. The system is resource sensitive in the sense that a possibly non-cartesian monoidal structure does not generally admit copying and deleting morphisms. This means that, in the corresponding logic or $\lambda$-calculus, we lose the structural rules of contraction and weakening. This results in an exposure of the frequency with which assumptions are used in proofs in logic and gives us a better grip on complexity in the $\lambda$-calculus.

To be precise, the logic that is arises from this linear $\lambda$-calculus via a Curry-Howard correspondence is referred to as \emph{(multiplicative) intuitionistic linear logic}. This system is strictly more general than the \emph{(multiplicative-additive-exponential) classical linear logic} studied by Girard. This latter system differs from the former in three significant ways.
\begin{enumerate}
\item It admits a \emph{classical} duality in the sense that there is a dualising object\footnote{That is, an object $\bot$ such that the canonical evaluation morphism $A\ra{}(A\Rightarrow \bot)\Rightarrow \bot$ is an isomorphism for all objects $A$.} $\bot$ for the implication $\multimap$. At the same time it still admits a non-trivial term calculus. This is one of the historically surprising aspects of the system, in the light of the Joyal lemma (see e.g. \cite{abramsky2009no}), which states that a cartesian closed category with a dualising object is a preorder.
\item It comes equipped with a comodality (that is, a $\Box$-modality) $!$, called the \emph{exponential}, which recovers the structural rules.
\item It comes equipped with an additional notion of conjunction, called the \emph{additive conjunction}, written $\&$, to be contrasted with the multiplicative conjunction $\otimes$ from multiplicative intuitionistic linear logic. It represents an internal choice, rather than a simultaneous occurence of resources. Classical duality also gives us an \emph{additive disjunction} $\oplus$, which represents an external choice, and which, in absence of classical duality, we might choose to include in our linear logic as a primitive.
\end{enumerate}
It will be the level of greater generality of (multiplicative) intuitionistic linear logic, including the more specific cases of systems \`a la Girard, that we think of when we refer to \emph{linear logic}.

\subsection{Categorical Semantics}\label{sec:backllsem}
Intuitionistic linear logic admits a relatively simple, though not historically uncontroversial, sound and complete categorical semantics which we  describe here briefly. Our principal reference will be \cite{barber1996dual}. Some more background is provided in \cite{mellies2009categorical} and \cite{bierman1994intuitionistic}.

There are several notions of model in use that are equivalent, which only differ in their interpretation of $!$. It is clear that $!$ should be interpreted as a comonad, but the exact properties of those comonad can be stated in several equivalent ways. For our purposes, the notion of a linear/non-linear model of \cite{benton1995mixed} is the best fit.

\begin{definition}[Linear/Non-Linear Adjunction] By a linear/non-linear adjunction, we shall mean a lax symmetric monoidal adjunction (i.e. an adjunction in the 2-category of symmetric monoidal categories and lax symmetric monoidal functors)
\begin{diagram}
(\mathcal{C},1,\times) & \pile{\rTo^F \\ \bot \\ \lTo_U} & (\mathcal{D},I,\otimes)
\end{diagram}
from a symmetric monoidal category $\Dcat$ to a cartesian monoidal category $\mathcal{C}$. An equivalent condition for the adjunction $F\dashv U$ to be lax symmetric monoidal is for the functor $F$ to be strongly symmetric monoidal, in which case the symmetric oplax structure on $F$ transfers along the adjunction to a symmetric lax structure on $U$. 
\end{definition}

\begin{definition}[Model of Linear Logic] \label{def:llmodel}
A model of \mccorrect{intuitionistic} linear logic \mccorrect{with $I$- and $\otimes$-types} consists of a symmetric monoidal category $\Dcat$. The model supports...
\begin{itemize}
\item $\multimap$- types iff $\Dcat$ is closed (as a multicategory or as a symmetric monoidal category in case we have $I$- and $\otimes$-types);
\item $\top$- and $\&$-types iff $\Dcat$ has finite products;
\item $0$- and $\oplus$-types iff $\Dcat$ has distributive coproducts (or coproducts in the multicategorical sense);
\item $!$-types iff $\Dcat$ is equipped with a comonad $!$ which arises as $FU$ for a linear/non-linear adjunction $F\dashv U: (\Ccat,1,\times)\leftrightarrows (\Dcat,I,\otimes)$.
\end{itemize}
\end{definition}

\subsection{Syntax}
As is the case for its categorical semantics, there are many different roughly equivalent syntactic proof calculi for linear logic. In order to allow a natural generalization to dependent types (which most naturally come in a natural deduction formulation), we choose a calculus in natural deduction style, rather than a sequent calculus. Of the natural deduction formalisms for linear logic, the two most mature options are Barber and Plotkin's \emph{dual intuitionistic linear logic (DILL)} \cite{barber1996dual} and Benton's \emph{linear/non-linear (LNL) calculus} \cite{benton1995mixed}.

DILL chooses to work with a single typing judgement $\Gamma;\Delta\vdash b:B$ and is closer to Girard's original formulations of linear logic. It uses a dual context $\Gamma;\Delta$, however, which consist of a cartesian region $\Gamma$, in which the structural rules of weakening and contraction are valid, and a linear region $\Delta$, in which they are not. This separation of context should be seen as a metaoperation internalising $!$, which was missing from Girard's formulations, just as $\otimes$ internalises the comma in the context and $\multimap$ internalises the turnstyle $\vdash$. We should see DILL as an internal language for $\Dcat$ of definition \ref{def:llmodel}.

The LNL calculus, by contrast, should be seen as providing an internal language for both $\Ccat$ and $\Dcat$ (including their relationship through $F\dashv U$) of definition \ref{def:llmodel}. It adds on top of the linear typing judgement $\Gamma;\Delta\vdash b:B$ of DILL (which models the morphisms $\Dcat$) a cartesian typing jugement $\Gamma\vdash a:A$ to model the morphisms of $\Ccat$. This means we -- in particular -- have two kinds of types: linear types $B$ and cartesian types $A$. Here, $\Gamma$ consists of cartesian types and $\Delta$ of linear types\footnote{We can read a DILL context $\Gamma;\Delta$ as the LNL context $U\Gamma;\Delta$, where we apply $U$ to all types in $\Gamma$.}. In a model of the LNL calculus, $\Ccat$ is considered part of the structure of the model, while, for a model of DILL, we only demand the existence of a linear/non-linear adjunction to some cartesian category $\Ccat$.

We have chosen to work with DILL in this thesis and generalise it with a notion of type dependency (see chapter \ref{ch:4}), mostly because it is closer to what we believe most people understand to be the essence of linear logic and \mccorrect{because it} seems to be more widely used. However, the LNL calculus can be useful to keep in mind in order to see better how CBPV generalises linear logic to non-commutative effects in a sense. We choose not to elaborate on the syntax of linear logic, here, as  \cite{barber1996dual,benton1995mixed} are excellent references for the syntax of DILL and the LNL calculus, respectively.

\subsection{Girard Translations}\label{sec:girardtrans}
An important aspect of linear type theory is that we have two translations of cartesian type theory \mccorrect{(with commutative effects)} into it, called the Girard translations. It turns out that, from the  point of view of CBPV, these are simply the CBN and CBV translations, in disguise. Indeed, the point we like to stress in this thesis is that the LNL calculus is essentially the same as CBPV for commutative effects with the extra connectives of $\otimes$ and $\multimap$. It might seem as if DILL is a serious restriction in expressive power compared to the LNL calculus. In particular, from the point of view of CBPV, it seems as if DILL only allows the CBN translation of cartesian type theory (known as the first Girard translation in the context of linear logic) as we are missing  value types. However, the linear connectives $\otimes$ and $\multimap$ allow us to express the CBV translation purely in terms of computation types. Altough this was already known to Girard, he thought this second translation was ``not of much interest'' and stressed the importance of his first (CBN) translation \cite{benton1996linear}. It is precisely the absence of the connectives $\otimes $ and $\multimap$ for non-commutative effects (particularly the latter) that forces CBPV to consider two separate typing judgements, while linear logic can be formulated equally well using only one.

For completeness sake, figure \ref{fig:girardtranslations} shows the CBN and CBV translations of cartesian type theory into DILL, at least at the level of types. \mccorrect{Note that as we might be working with an effectful cartesian type theory, in general, we distinguish between product types with a pattern matching eliminator, which we denote $A_1\times \cdots \times A_n$, and product types with a projection eliminator, which we denote $\Pi_{1\leq i\leq n}A_i$.} We trust that the reader can fill in the definitions at the level of terms, from the corresponding translations for CBPV. While $(-)^f$ corresponds precisely to the CBN translation $(-)^n$ of CBPV, $(-)^s$ can be read as the adjoint of $(-)^v$ in a sense. Indeed, while $(-)^v$ sends a term $x_1:A_1,\ldots,x_n:A_n\vdash M:A$ to a term \mccorrect{$x_1:A_1^v,\ldots,x_n:A_n^v;\cdot \vdash M^v:FA^v$}, $(-)^s$ sends it to the equivalent term \mccorrect{$\cdot;x_1:A_1^s,\ldots,x_n:A_n^s\vdash   M^s:A^s$, where $A^s=FA^v$}. This equivalence follows because $F(A_1\times \cdots \times A_n)\cong FA_1\otimes \cdots \otimes FA_n$.

We would like to point out that the conventional Girard translations choose to use projection products for the CBN translation and pattern matching products for the CBV translation, to make sure that $\eta$\mccorrect{ }survives in either case. Note that there is no analogous way of salvaging the $\eta$-law for sum types in CBN. We also note that we only need additive conjunctions for the CBN translation of projection products.

Note that the usual practice of constructing models of cartesian type theory out of models of linear type theory $\Dcat$ by taking the co-Kleisli category $\Dcat_!$ for $!$ precisely is the semantic equivalent of Girard's first translation. As far as we are aware, there is no well-known categorical construction corresponding to Girard's second translatio\mccorrect{n}, perhaps because it relies on the specific properties of $!$ as a comonad (its compatibility with the monoidal structure on $\Dcat$).

\begin{figure}[!tb]
\fbox{
\resizebox{\linewidth}{!}{
\begin{tabular}{l||l|l}
\textbf{Cartesian Type} $A$ \hspace{20pt}\;& \textbf{CBN Translation} $A^f$ \hspace{20pt}\;& \textbf{CBV Translation} $A^s$\hspace{20pt}\;\\
\hline
$1$ & $I$ & $I$\\
$A_1\times A_2$ & $!A_1^f\otimes !A_2^f$ & $A_1^s\otimes A_2^s$\\ 
$\Pi_{1\leq i \leq n}A_i$& $A_1^f\&\ldots \& A_n^f$ & \mccorrect{$!A_1^s\otimes \ldots\otimes !A_n^s$}\\
$A_1\Rightarrow A_2$ & $ !A_1^f\multimap A_2^f$ & $!(A_1^s\multimap A_2^s)$\\
$0$ & $0$ & $0$\\
$A_1+A_2 $ & $!A_1^f\oplus !A_2^f$ & $A_1^s \oplus A_2^s$
\end{tabular}
}
}
\caption{\label{fig:girardtranslations} The definitions of the CBN and CBV translations of cartesian type theory into DILL, also known as the first and second Girard translation, respectively.}
\end{figure}

\subsection{Concrete Models}
\subsubsection{Commutative Computational Effects}In section \ref{sec:simplmod}, we saw that linear logic gives rise to certain models for CBPV for commutative effects. In fact, a following partial converse result can be obtained. A similar result was stated without proof or attribution\footnote{We believe the result on closure should be attributed to \cite{kock1971closed} while the construction of the symmetric monoidal structure might be inspired by the results of \cite{linton1969coequalizers}.} in \cite{benton1996linear}, but we provide a construction here, in order to generalize it later.

\begin{theorem}\label{thm:commtolinear}Let $\Ccat$ be a cartesian closed category with a commutative monad $T$, where $\Ccat$ additionally has equalisers and the Eilenberg-Moore category $\Ccat^T$ has reflexive coequalisers\footnote{In fact, \cite{keigher1978symmetric} provides an alternative construction for $\otimes$-types for which we demand instead all coequalisers in $\Ccat$.}. Then, $\Ccat^T$ is symmetric monoidal closed and has finite products to interpret additive conjunctions and the Eilenberg-Moore adjunction $F\dashv U$ defines a linear/non-linear adjunction.
\end{theorem}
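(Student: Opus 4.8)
The plan is to reconstruct the classical tensor product of algebras over a commutative monad. First I would invoke Kock's theorem that a commutative monad on a symmetric monoidal category is the same thing as a symmetric monoidal monad \cite{kock1972strong}; applied to $(\Ccat,1,\times)$ this equips $T$ with a coherent lax symmetric monoidal structure, i.e.\ natural maps $TA\times TB\to T(A\times B)$ and $1\to T1$ compatible with $\eta$, $\mu$, the symmetry, and the associativity and unit constraints. This is the only place the commutativity hypothesis is used, and it is what will ultimately force the free functor $F$ to be \emph{strong} monoidal rather than merely oplax.

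Next I would build the symmetric monoidal structure $(I,\otimes)$ on $\Ccat^T$. On free algebras one sets $FA\otimes FB:=F(A\times B)$ and $I:=F1=T1$, with coherence isomorphisms obtained by applying $F$ to those of $(\Ccat,1,\times)$ and gluing in the monoidal structure of $T$. For arbitrary algebras $A,B$ one uses that each algebra is canonically a reflexive coequaliser in $\Ccat^T$ of a parallel pair of free algebras $FTUA\to FUA$ (built from the algebra action and the monad multiplication, with common section $F\eta_{UA}$), and defines $A\otimes B$ as the reflexive coequaliser of the induced parallel pair $F(TUA\times UB)\to F(UA\times UB)$; this coequaliser exists by the standing hypothesis on $\Ccat^T$. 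One then checks that $\otimes$ is functorial, that the construction is independent of whether one resolves the first or the second argument first, and that $-\otimes B$ preserves such reflexive coequalisers in each variable --- using that $F$ is a left adjoint and that $X\times(-)$ preserves all colimits in the cartesian closed category $\Ccat$ --- so that the associativity, symmetry and unit constraints extend uniquely from the free algebras by the usual density argument. This is the bookkeeping-heavy part; its content is essentially that of \cite{linton1969coequalizers,kock1971closed}.

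With $\otimes$ in hand, $F$ is strong symmetric monoidal essentially by construction ($F(A\times B)\cong FA\otimes FB$ and $F1\cong I$, compatibly with all constraints), so the Eilenberg--Moore adjunction $F\dashv U$ is a lax symmetric monoidal adjunction, i.e.\ a linear/non-linear adjunction in the sense of the definition above. For the closed structure I would construct the internal hom as an equaliser: since $\Ccat$ is cartesian closed and $T$ is strong, for $X\in\Ccat$ and an algebra $B$ the exponential $X\Rightarrow UB$ carries a canonical $T$-algebra structure $[X,B]$ (the cotensor of $B$ by $X$), and using the equalisers that $\Ccat^T$ inherits from $\Ccat$ (the forgetful functor creates limits, and $\Ccat$ has equalisers by hypothesis) one defines $A\multimap B$ as the equaliser of a parallel pair of homomorphisms $[UA,B]\to[TUA,B]$ cutting out the maps that are $T$-homomorphic in the $A$-argument. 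The required natural isomorphism $\Ccat^T(C\otimes A,B)\cong\Ccat^T(C,A\multimap B)$ is then verified by reducing $C$ along its reflexive-coequaliser presentation to the free case $C=FX$, where both sides unwind to the set of morphisms $X\times UA\to UB$ that are $T$-homomorphic in the second argument. Finally, $\Ccat^T$ has finite products because $U$ creates limits and $\Ccat$ has them, and these interpret the additive conjunction.

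The step I expect to be the main obstacle is the second one: showing that the coequaliser defining $A\otimes B$ is well enough behaved --- functorial, order-independent, and preserved in each variable by the relevant reflexive coequalisers --- to carry a \emph{coherent} symmetric monoidal structure for which $F$ is strong monoidal, and then dovetailing this presentation of $\otimes$ with the equaliser presentation of $\multimap$ to obtain the tensor--hom adjunction. The coherence verifications are routine but long; everything else is essentially formal, and a version of the statement without proof is already recorded in \cite{benton1996linear}.
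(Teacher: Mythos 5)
Your proposal is correct and follows essentially the same route as the paper: finite products created by $U$, the internal hom as an equaliser carved out of the exponential $UA\Rightarrow UB$ with commutativity supplying the algebra structure, and the tensor as a Linton-style reflexive coequaliser of free algebras with $F$ strong monoidal by construction. The only cosmetic differences are that the paper presents $A\otimes B$ via a single two-sided fork $F(TUA\times TUB)\rightrightarrows F(UA\times UB)$ (avoiding your order-independence check at the cost of using commutativity for bifunctoriality) and takes the hom-equaliser in $\Ccat$ before descending the algebra structure, rather than in $\Ccat^T$.
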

\begin{proof}The statement about additive conjunctions follows from the well-known result that the forget functor from the Eilenberg-Moore category creates limits.

For two algebras $k,l\in\Ccat^T$, we define an object $k\homtype l$ of $\Ccat$ as the equaliser (which represents the subobject of morphisms satisfying the homomorphism equations)
\begin{diagram}
k\homtype l & \rInto^m & Uk\Rightarrow Ul & \pile{\rTo^{\lambda_{f:Uk\Rightarrow Ul} Tf;l} \\ \rTo_{\lambda_{f:Uk\Rightarrow Ul} k;f}} & TUk \Rightarrow Ul.
\end{diagram}
We note that we can equip $k\homtype l$ with a $T$-algebra structure if $T$ is a commutative monad. Indeed, using the morphism $T(A\Rightarrow B)  \ra{\phi_{A,B}} A\Rightarrow TB $ which exists for every strong monad, we can define the map 
\begin{diagram}
T(k\homtype l) & \rTo^{Tm} & T(Uk \Rightarrow Ul) & \rTo^{\phi_{Uk,Ul}} & Uk\Rightarrow TUl & \rTo^{\lambda_ff;l} & Uk\Rightarrow Ul.
\end{diagram}
Commutativity of the monad gives us that this map is equalising, so we obtain a unique factorisation of the map over $k\homtype l$, which gives us our algebra structure $k\multimap l$.

For $\otimes$-types note that we have the following reflexive fork in $\Ccat^T$, where we write $TA\times TB\ra{t_{A,B}}T(A\times B)$ for the left or right pairing for the strong monad $T$ (it doesn't matter which, as $T$ is commutative):
\begin{diagram}
F(Uk\times Ul) & \rTo^{F\langle \eta_{Uk},\eta_{Ul}\rangle} & F(UFUk\times UFUl) & \pile{\rTo^{F\langle k,l\rangle}\\\rTo_{F(t_{Uk,Ul});\epsilon_{F(Uk\times Ul)}} } & F(Uk\times Ul).
\end{diagram}
Taking the coequaliser of this fork gives us our interpretation of $k\otimes l$. Given morphisms $k\ra{\phi}k'$ and $l\ra{\psi}l'$, we easily see that we get natural transformations between the respective coequaliser diagrams (using the homomorphism laws and the naturality of $\epsilon$ and $t$), which, therefore, give us morphisms $k\otimes l\ra{\phi\otimes \psi}k'\otimes l'$. This is easily seen to make $\otimes$ a functor in each argument. Using the commutativity of $T$, we can show that it is, in fact, a bifunctor. Using the strength and pairing, we can always define a cocone on the coequaliser diagram which gives rise to an associator $k\otimes (l\otimes m)\ra{}(k\otimes l)\otimes m$. Next, observe that $FA\otimes FB\cong F(A\times B)$. To see this, note that for $k=FA$ and $l=FB$, we have that $F(UFA\times UFB)\ra{F(t_{A,B}); \epsilon_{F(A\times B)}}F(A\times B)$ is a cocone for the diagram above. Moreover, it is easily seen to have a section $F\langle \eta_A,\eta_B\rangle$, making it into a split epi. Given another cocone $\psi$ for the diagram, we can now define a factorisation over $F(t_{A,B});\epsilon_{F(A\times B)}$ by $F(\langle \eta_A,\eta_B\rangle);\psi$, which is unique as our cocone is an epi. Similarly, we can see that $F1\otimes k\cong k\cong k\otimes F1$, showing that $I:=F1$ makes $\otimes$ into a monoidal structure on $\Ccat^T$. We can further note that commutativity of the monad means that the braiding of $\times$ gives us a cocone on the coequaliser diagram which, gives rise to a braiding for $\otimes$, which inherits to property of being involutive from the braiding of $\times$. The triangle, pentagon and hexagon identities all follow from the universal property of the coequaliser defining $\otimes$. We conclude that $\otimes$ is a symmetric monoidal structure.

Using the universal property of the coequaliser defining $\otimes$ as well as the naturality of $t$ and $\epsilon$, the definition of a $T$-homomorphism and the universal property of the equaliser defining $\multimap$, it is a straightforward calculation to establish that $B \otimes (-)\dashv B \multimap (-)$.
\end{proof}

The condition that $\Ccat^T $ has reflexive coequalisers can, of course, be reduced to $\Ccat$ having such coequalisers and $T$ preserving them. This happens, for instance, when $T$ is induced by a finitary algebraic theory, as finite powers in a cartesian closed category preserve \mccorrect{reflexive coequalisers}  \cite{johnstone2002sketches} \mccorrect{(section D5.3)}.

\begin{remark}[A Linear Logic for Non-Commutative Effects?] In the light of theorem \ref{thm:commtolinear}, it is tempting to wonder if we can define a similar, perhaps non-commutative, linear logic to describe non-commutative computational effects. It is clear that theorem \ref{thm:commtolinear} would not straightforwardly generalise to a non-commutative setting, however, as it has been shown in \cite{foltz1980algebraic} that none of the categories of magmas, monoids, groups and rings admit a monoidal biclosed structure. At the same time, they arise as Eilenberg-Moore categories for a (strong) monad on a complete cocomplete cartesian closed category ($\Set$). In fact, \cite{kock2012commutative} shows that for a strong monad $T$, $k\homtype l$ above is a subalgebra of $Uk \Rightarrow Ul$ (which is the carrier of the Kleisli exponential $Uk\functype l$, which is well-known to exist as a $T$-algebra) for all $T$-algebras $k$ and $l$ if and only if $T$ is commutative. The construction above, inspired by \cite{linton1969coequalizers}, however, does yield a suitable (not necessarily symmetric, non-biclosed) premonoidal  structure (see \cite{power1997premonoidal}) on on categories of algebras (with reflexive coequalisers) for arbitrary strong monads.
\end{remark}

In fact, if we do not have the appropriate limits and colimits, we can always extend our model to incorporate them.
\begin{theorem}\label{thm:commembedsinll}
Every cartesian closed category $\Ccat$ with a commutative monad $T$ embeds fully and faithfully into a model of intuitionistic exponential additive multiplicative linear logic inducing the monad $T$.
\end{theorem}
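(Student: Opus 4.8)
The plan is to reduce the statement to Theorem~\ref{thm:commtolinear} by first embedding $\Ccat$ into a well-behaved cartesian closed category. Concretely, I would take the presheaf category $\widehat{\Ccat}:=[\Ccat^{\mathrm{op}},\Set]$ (replacing $\Set$ by a sufficiently large universe of sets if $\Ccat$ is not small; this size manoeuvre is routine), which is complete, cocomplete and cartesian closed. The Yoneda embedding $y:\Ccat\hookrightarrow\widehat{\Ccat}$ is full and faithful, preserves finite products (it preserves all limits) and preserves whatever exponentials exist in $\Ccat$ (by the standard computation $(yB)^{yA}(c)=\widehat{\Ccat}(yc\times yA,yB)\cong\Ccat(c\times A,B)\cong\Ccat(c,B^A)$), so it is a full and faithful cartesian closed functor. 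If I can equip $\widehat{\Ccat}$ with a commutative monad $\widehat{T}$ satisfying $\widehat{T}\circ y\cong y\circ T$ as strong monads, then Theorem~\ref{thm:commtolinear} applied to $\widehat{\Ccat}$ produces a model of intuitionistic exponential additive multiplicative linear logic — its Eilenberg-Moore category $(\widehat{\Ccat})^{\widehat{T}}$ with the Eilenberg-Moore adjunction — inside whose cartesian part $\widehat{\Ccat}$ the category $\Ccat$ sits via $y$, with the induced monad $UF=\widehat{T}$ restricting along $y$ to $T$.

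The second step is the construction of $\widehat{T}$. I would define it as the left Kan extension $\widehat{T}:=\mathrm{Lan}_y(y\circ T):\widehat{\Ccat}\to\widehat{\Ccat}$; since $\widehat{\Ccat}$ is cocomplete this is cocontinuous and satisfies $\widehat{T}\circ y\cong y\circ T$. Because a cocontinuous endofunctor $F$ of a presheaf category is a left Kan extension of $F\circ y$ along $y$, one has $\mathrm{Nat}(F,G)\cong\mathrm{Nat}(F\circ y,G\circ y)$ for arbitrary $G$; applying this to $F=\id_{\widehat{\Ccat}}$, $F=\widehat{T}$ and $F=\widehat{T}\widehat{T}$ (all cocontinuous), the unit $\eta$ and multiplication $\mu$ of $T$ transport to natural transformations $\widehat{\eta}:\id_{\widehat{\Ccat}}\to\widehat{T}$ and $\widehat{\mu}:\widehat{T}\widehat{T}\to\widehat{T}$, and the monad laws hold because they do after restriction along $y$, where they reduce to the monad laws for $T$.

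For commutativity, the key observation is that $\widehat{\Ccat}$ with its cartesian product is exactly the Day convolution of the cartesian monoidal structure on $\Ccat$: both make $y$ strong monoidal and are biclosed (cocontinuous in each variable), and such a structure is unique by density. Since a commutative monad is the same as a symmetric monoidal monad, it suffices to extend the monoidal comparison maps $Ta\times Tb\to T(a\times b)$ and $1\to T1$ of $T$ to $\widehat{T}$. I would do this by density: the components $\widehat{T}A\times\widehat{T}B\to\widehat{T}(A\times B)$ are required to be natural and cocontinuous in each variable (using that $A\times(-)$ and $\widehat{T}$ are cocontinuous), hence are determined by their values on representables, where $\widehat{T}(ya)\times\widehat{T}(yb)\cong y(Ta\times Tb)$ and $\widehat{T}(ya\times yb)\cong yT(a\times b)$, so one simply takes $y$ applied to the comparison map of $T$; the coherence, symmetry and compatibility-with-$\widehat{\eta},\widehat{\mu}$ axioms then all follow from those of $T$ by the same density principle (alternatively one may cite the known fact that symmetric monoidal monads lift along Day convolution). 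This is the step I expect to be the main obstacle: not conceptually hard, but a nontrivial amount of coherence bookkeeping.

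Finally I would verify that Theorem~\ref{thm:commtolinear} applies to $(\widehat{\Ccat},\widehat{T})$: $\widehat{\Ccat}$ has equalisers (it is complete), and since $\widehat{T}$ is cocontinuous it preserves reflexive coequalisers, so by the discussion following Theorem~\ref{thm:commtolinear} the category $(\widehat{\Ccat})^{\widehat{T}}$ has reflexive coequalisers. Hence $(\widehat{\Ccat})^{\widehat{T}}$ is symmetric monoidal closed with finite products, and the Eilenberg-Moore adjunction $F\dashv U$ is a linear/non-linear adjunction with cartesian part $\widehat{\Ccat}$. Moreover $(\widehat{\Ccat})^{\widehat{T}}$ is cocomplete (reflexive coequalisers together with the cocompleteness of $\widehat{\Ccat}$ build all colimits) and its coproducts distribute over $\otimes$ since $(-)\otimes B$ has the right adjoint $B\multimap(-)$, supplying the $0$- and $\oplus$-types. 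Thus $(\widehat{\Ccat})^{\widehat{T}}$ is a model of intuitionistic exponential additive multiplicative linear logic, $\Ccat$ embeds fully and faithfully — indeed as a cartesian closed subcategory — into its cartesian part via $y$, and the induced monad $UF=\widehat{T}$ restricts along $y$ to $T$, as required.
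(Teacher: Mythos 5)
Your proposal is correct and follows essentially the same route as the paper's proof: pass to the presheaf category, extend $T$ to a cocontinuous commutative monad $\widehat{T}$ using the Day-convolution description of the cartesian structure on $\widehat{\Ccat}$ (the paper packages this as "the Yoneda/Day construction is a 2-functor on symmetric monoidal categories, and 2-functors preserve monads, where commutative monads are monads in that 2-category"), and then apply Theorem~\ref{thm:commtolinear} to $(\widehat{\Ccat},\widehat{T})$. Your explicit Kan-extension and density arguments are just an unpacking of that 2-functoriality claim, so no further comparison is needed.
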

\begin{proof}[Proof]Let $\widehat{\Ccat}$ be the category of presheaves on $\Ccat$ (its cocompletion). We note that the Yoneda embedding defines a strict 2-functor from the 2-category of categories to the 2-category of cocomplete categories (computing its action on morphisms by taking Yoneda extensions). In fact, using the Day convolution \cite{day1970closed}, it defines a strict 2-functor from the 2-category of symmetric monoidal categories (with lax symmetric monoidal functors and symmetric monoidal natural transformations) \mccorrect{$\mathfrak{SMCat}$} to the (sub-) 2-category of cocomplete symmetric monoidal categories \mccorrect{$\mathfrak{cCSMCat}$}. Noting that a commutative monad is precisely the same as a monad in \mccorrect{$\mathfrak{SMCat}$} \mccorrect{(Proposition 20 in} \cite{benton1995mixed}) and that 2-functors preserve monads, we get a \mccorrect{monad in $\mathfrak{cCSMCat}$ which is a} cocontinuous commutative monad $\widehat{T}$ on $\widehat{\Ccat}$, which restricts to $T$ on $\Ccat$. Note that $\widehat{\Ccat}$ is bicartesian closed with equalisers and coequalisers (a topos even) and that $\widehat{T}$ preserves colimits (in particular, reflexive coequalisers), so $\widehat{\Ccat}^{\widehat{T}}$ has reflexive coequalisers. Therefore, we can apply theorem \ref{thm:commtolinear} for the result that $\widehat{\Ccat}^{\widehat{T}}\leftrightarrows \widehat{\Ccat}$ defines a model of intuitionistic exponential additive multiplicative linear logic.\end{proof}

\begin{remark}We see that intuitionistic linear logic almost precisely describes all commutative effects. Still, this point of view does not seem to be widely held. Perhaps this is due to the fact that in the (initial) syntactic model of linear logic, a so-called principle of uniformity of threads (called such because it implies the usual principle $\Dcat(!A,B)\cong \Dcat(!A,!B)$) holds \cite{Abramsky00axiomsfor}: the unit of the adjunction $F\dashv U$ is an isomorphism $\id_\Ccat\cong UF=:T$. In this sense, the free linear logic model does not describe any effects from the monadic point of view. All its interesting information is contained in the comonad $!:=FU$ of the adjunction\mccorrect{.}
\end{remark}

\subsubsection{Scott Domains and Strict Functions}\label{sec:scottdom}
A simple model of linear type theory (with recursion, in the sense of a model of CBPV with recursion) can be built from Scott domains and strict functions. Here, $\Dcat$ has as objects Scott domains (i.e. bounded complete, directed complete, algebraic cpos) and as morphisms strict (preserving the bottom element $\bot$) continuous (preserving directed colimits) functions between them. If we define $\Ccat$ to be the category of Scott predomains (i.e. countable disjoint unions of Scott domains) and continuous functions, we can note that the inclusion $U$ of $\Dcat$ into $\Ccat$ has a left adjoint $F$ which adjoins a new bottom element. $\Dcat$\mccorrect{ }is easily seen to have a terminal object $\top$ (the one-element domain) and binary products $A\& B$ (the set-theoretic product, equipped with the product order $\langle a,b\rangle\leq \langle a',b\rangle := a\leq a \;\wedge\; b\leq b'$). The same is true for $\Ccat$ where we write the cartesian structure $(1,\times)$. $\Dcat$ also supports $\multimap$-types, where $A\multimap B$ is the set of strict continuous function from $A$ to $B$ under the pointwise order ($f\leq g:= \forall_{x\in A} f(x)\leq g(x)$). We can note that $A\multimap -$ has a left adjoint $A\otimes -$ which gives rise to a symmetric monoidal closed structure on $\Dcat$: $A\otimes B$ is defined as the smash product $\{\langle a,b\rangle \in UA \& UB \;|\;a\neq \bot\;\wedge \;b\neq \bot\}\cup\{\bot\}$. This has a unit $I:=\{\bot\leq \top\}$.  Note that this monoidal structure makes $F$ into a strong symmetric monoidal functor. We see that we have a model of linear logic with $\top,\&,I,\otimes,\multimap$ and $!$-types.

\subsubsection{Coherence Spaces}
One of the most canonical kinds of denotational semantics of linear logic -- and, in fact, the original motivation for Girard to introduce linear connectives -- is found in stable domain theory and its linear decomposition through coherence spaces. Imposing the property of stability on top of continuity can be seen as taking a step closer to (what is definable in) the syntax of a functional language with recursion.

We briefly recall some of the definitions in Girard's coherence space model of (classical) linear logic. The model is given by the category $\Coh$ of coherence spaces and cliques. Its objects are coherence spaces $(A,\coh_A)$ (or, undirected graphs): a set $A$ of tokens with a reflexive relation $\coh_A$, called the coherence relation. We write $\scoh_A$ for the irreflexive part of $\coh_A$, $\sincoh_A$ for the complement of $\coh_A$, and $\incoh_A$ for the complement of $\scoh_A$. Before we define the morphisms of the category, we describe a few operations on objects.

Given a coherence space $A$, we define its {linear negation} $A^\bot$ as the space with the same underlying set $A$ of tokens and coherence relation $\incoh_A$:
$$a\coh_{A^\bot}a':=\neg(a\scoh_A a').$$

Given coherence spaces $A$ and $B$, we define their {multiplicative conjunction} $A\otimes B$ as having underlying set the product $A\times B$ of the underlying sets of $A$ and $B$ and coherence relation
$$(a,b)\coh_{A\otimes B}(a',b'):=a\coh_A a'\wedge b\coh_B b'.$$
We can then define their {multiplicative disjunction} $A\parr B$, through De Morgan duality,
$$A\parr B:=(A^\bot\otimes B^\bot)^\bot, $$
and their {(multiplicative) linear implication} $A\multimap B$,
$$A\multimap B:= A^\bot\parr B.$$
(Explicitly, $(a,b)\scoh_{A\parr B}(a',b'):=a\scoh_A a'\vee b\scoh_B b'$ and $(a,b)\scoh_{A\multimap B}(a',b'):=a\coh_A a'\Rightarrow b\scoh_B b'$.)

We can define their {additive disjunction} $A\oplus B$ as the disjoint union of coherence spaces, where never $a\coh_{A\oplus B} b$ if $a\in A$ and $b\in B$ and $\coh_{A\oplus B}$ restricts to $\coh_A$ and $\coh_B$, and we can define their {additive conjunction} $A\& B$, through De Morgan duality, as
$$A\& B:=(A^\bot\oplus B^\bot)^\bot.$$
(Explicitly, always $a\coh_{A\&B}b$ for $a\in A$, $b\in B$ and $\coh_{A\& B}$ restricts to $\coh_A$ and $\coh_B$.)

The operations $\otimes$, $\parr$, $\oplus$, and $\&$ also have neutral elements which we shall denote by $I$, $\bot$, $0$, and $\top$, respectively. Indeed, $I=\bot=\{*\}$ and $0=\top=\emptyset$ are easily seen to do the trick. (These identities between the units can be seen as degeneracies of this model of linear type theory, as they do not follow from the syntax of classical linear logic.)

We now define the morphisms: $$\Coh(A,B):=\cliques(A\multimap B),$$
where a clique $\sigma$ in $A$ is a subset $\sigma\subseteq A$ such that $a,a'\in\sigma\Rightarrow a\coh_A a'$. We compose cliques as relations, which gives us the identity relations (which are cliques!) as the identities of our category.

We note that $(I,\otimes,\multimap)$ make $\Coh$ into a symmetric monoidal closed category, that $\top$ and $\&$ are our nullary and binary products, and that $0$ and $\oplus$ are our nullary and binary (distributive) coproducts. We note that we have obtained a model of linear type theory with $I$-, $\otimes$-, $\multimap$-, $\top$-, $\&$-, $0$-, and $\oplus$-types. In fact, as $((-)^{\bot})^\bot\cong \id_{\Coh}$, we even have a model of classical linear type theory. \cite{mellies2009categorical}

We have a linear/non-linear adjunction between the category $\Stable$ of Scott predomains with pullbacks\mccorrect{ }and continuous stable functions\footnote{Recall that a function $D'\ra{f}D$ is called continuous if it preserves directed suprema and that it is called stable if it preserves all pullbacks: $d_0,d_1\leq d_\top$ and $d_0\wedge d_1$ exists implies that $f(d_0\wedge d_1)=f(d_0)\wedge f(d_1)$.} and the category $\Coh$ of coherence spaces, $F\dashv U$:
\begin{diagram}(\Stable,1,\times) & \pile{\rTo^F\\\bot\\ \lTo_U} & (\Coh,I,\otimes).
\end{diagram}
$U$ takes the domain of cliques on objects and sends a clique $\sigma$ in $A\multimap B$ to the continuous stable function $d\mapsto \{b\;|\; \exists_{a\in d}(a,b)\in\sigma\}$. $F$ sends a predomain $D$ to the coherence space with set of tokens the compact elements of $D$ and coherence relation $s\coh_{FD}t:=\exists_{u\in D}(s\leq u)\wedge (t\leq u)$ and sends a continuous stable function $D'\ra{f}D$ to the clique $\{(x,y)\;| \; y\leq f(x)\wedge \forall_{x'\leq x}y\leq f(x')\Rightarrow x=x'\}$. (Note that $F$ is a strong monoidal functor.) We have the following bijection of homsets, which demonstrates the adjunction,
\begin{diagram}
\sigma & \rMapsto & d\mapsto  \{c \; |\exists_{d'\leq d}(d',c)\in\sigma  \;\} \\
\Coh(FD,C) &\pile{\rTo^{\fun}\\ \cong \\ \lTo_{\trace}} & \Stable(D,UC)\\
\{(d,c)\; |\; c\in f(d)\wedge \forall_{d'\leq d}c\in f(d')\Rightarrow d'=d\} & \lMapsto & f.
\end{diagram}
This induces a comonad $!:=FU$ on $\Coh$. Explicitly, $!A$ has set of tokens $\fincliques(A)$ and coherence relation
$$s\coh_{!A} s':= (s\cup s'\in !A).$$
This shows that that our model $\Coh$ of linear logic additionally supports $!$-types.

\subsubsection{AJM-Games}
In section \ref{sec:backgame}, we introduce another important class of models of linear logic: categories of games and strategies. We encourage the reader to think of game semantics as giving a further decomposition of coherence space semantics, which itself gave a decomposition of domain semantics by interpreting domain elements as sets of tokens. Indeed, it replaces tokens with (even length) plays in a game which are built up as a sequence of moves. Strategies (certain sets of plays) will then play the r\^ole of cliques. This can be formalised as the statement that there is a (faithful) forgetful functor from the category of CBN AJM-games\footnote{\begin{mccorrection}
In this thesis, we work with AJM-games because, in order to interpret dependent types, we need the extra option of explicitly restricting plays by defining the set $P_A$, rather than being forced to work with all legal positions, as we would be in HO-games. Indeed, \cite{levy2012call} has shown that all HO-games can be defined from a simple type system with product and coproduct types as well as type level recursion. This shows that HO-games do not suffice to interpret more expressive types like dependent function types. 
\end{mccorrection}} to the category of coherence spaces, which sends a game to the coherence spac\mccorrect{e} with even length plays as tokens, where tokens are called coherent if they agree on $P$-moves. Strategies $\sigma:A\multimap B$ are then interpreted as the clique $\{(s\upharpoonright_A,s\upharpoonright_B)\;|\; s\in \sigma\}$. In \cite{calderon2010understanding}, it is shown that this functor can be made full if we work with a suitable category of coherence spaces with a partial order on the tokens (representing the idea that some plays extend others in time). In that sense, game semantics \mccorrect{is} coherence space semantics extended in time. In turns out this more fine-grained description provided by game semantics is enough to precisely pin down the functions that are definable in a functional language with recursion (and with various other effects).

\section{AJM Game Semantics}\label{sec:backgame}
The idea behind game semantics is to model a computation by an alternating sequence of interactions (the play) between a program (Player) and its environment (Opponent), following some rules specified by its (data)type (the game). In this translation, programs become Player strategies, while termination corresponds to a strategy being winning or beating all Opponents. The charm of this interpretation is that it not only fully captures the intensional aspects of a program but that it combines this with the structural clarity of a categorical model, thus interpolating between traditional operational and denotational semantics.

If we view a type theory as a logic rather than as a programming language, its game semantics formalises the idea of Socratic dialogues. The interpretation of a proposition can be thought of as the game of all formal debates about its validity, where Player argues in its favour and Opponent argues against it. In this view, a proof of a proposition gets interpreted as a winning strategy for Player. We see that proofs get interpreted by winning strategies, when giving a game semantics of a logic, while partial strategies are of interest too, for the game semantics of a programming language, as these model programs that do not always terminate.

We assume the reader has some familiarity with the basics of categories of AJM-games (contrasted with the other style of HO game semantics \cite{hyland2000full}) and ($\approx$-saturated\footnote{Note that this is a mild technical difference from the formalism of \cite{abramsky2000full}, where strategies are what we call skeletons, here, which are considered up to a partial equivalence relation induced by $\approx$. Both formalisms are equivalent as a class of skeletons up to this partial equivalence relation can precisely be identified with the unique strategy obtained by closing the plays of the skeleton under $\approx$.}) strategies, as described in \cite{abramsky2009game}, and will only briefly recall the definitions. We define a category $\Gamecat$ which has as objects AJM-games.

\begin{mccorrection}Let us fix some universal set of moves $\mathcal{M}$ with injective functions
\begin{align*}\mathcal{M}+ \mathcal{M}&\ra{+}\mathcal{M}\\
\mathcal{M}\times \mathcal{M}&\ra{\times}\mathcal{M}\\
\mathcal{M}\times \mathbb{N}&\ra{\times}\mathcal{M},
\end{align*}
say the set of ASCII strings with $[x\mapsto ``\mathtt{inl}(\textrm{''}++x++``)\textrm{''}, x\mapsto ``\mathtt{inr}(\textrm{''}++x++``)\textrm{''}]$, $\langle x,y\rangle \mapsto ``\langle\textrm{''}++x++``,\textrm{''}++y++``\rangle\textrm{''}$ and $\langle x,y\rangle \mapsto ``\langle\textrm{''}++x++``,\textrm{''}++y++``\rangle\textrm{''}$, to make sure that AJM-games (and later games with dependency) form a set.
\end{mccorrection}
\begin{definition}[Game] A \emph{game} $A$ is a tuple $(M_A,\lambda_A,\J_A,P_A,\approx_A,W_A)$, where
\begin{itemize}
\item \mccorrect{$M_A\subseteq \mathcal{M}$ is set of \emph{moves}};
\item \mbox{\begin{diagram}M_A & \rTo^{\textnormal{\scriptsize$\lambda_A=\langle\lambda_A^{OP},\lambda_A^{QA}\rangle$}} & \{O,P\}\times \{Q,A\}\end{diagram}} is a function which indicates if a move is made by \emph{Opponent ($O$) or Player ($P$)} and if it is a \emph{Question} ($Q$) or an \emph{Answer} ($A$), for which we write $\overline{O}=P$, $\overline{P}=O$ and $M_A^O:={\lambda_A^{OP}}^{-1}(O)$, $M_A^P:={\lambda_A^{OP}}^{-1}(P)$, $M_A^Q:={\lambda_A^{QA}}^{-1}(Q)$ and $M_A^A:={\lambda_A^{QA}}^{-1}(A)$;
\item  \mbox{\begin{diagram}M_A & \rPartial^{\J_A} & M_A\end{diagram}} is a partial function which indicates the \emph{justifier} of a move, with the properties
\begin{itemize}
\item[] (Well-Foundedness): $\J_A$ defines a well-founded forest in the sense that for each move $m\in M_A$ there is some number $k$ such that $\J_A^k(m)$ is undefined; such a move with an undefined justifier is called an \emph{initial move};
\item[] (Compatibility with $\lambda_A$): $P$-moves are justified by $O$-moves and vice-versa; answers $m$ are justified by questions $n$ (but not necessarily vice-versa); in this case, we say that $m$ \emph{answers} $n$.
\end{itemize}
 $\J_{A}$ will be used to enforce \emph{stack discipline} in strategies.
\item  $P_A\subseteq M_A^\circledast$ is a non-empty prefix-closed set of \emph{plays}, where $M_A^\circledast$ is the set of finite sequences of moves, with the properties
\begin{itemize}
\item[] (Initial Move): Opponent moves first;
\item[] (Alternation): Player and Opponent alternate in making a move;
\item[] (Linearity): Every move occurs at most once in a play;
\item[] (Justification): A move can only be played after its justifier.
\end{itemize}
\item $\approx_A$ is an equivalence relation on $P_A$, satisfying
\begin{enumerate}
\item[] (Compatibility with $\lambda_A$): $s\approx_A t\Rightarrow \lambda^*_A(s)=\lambda^*_A(t)$;
\item[] (Prefix-Closure): $s\approx_A t\; \wedge \;s'\leq s\; \wedge \;t'\leq t \; \wedge \; |s'|=|t'|\; \Rightarrow s'\approx_A t'$;
\item[] (Completeness): $s\approx_A t\wedge sa\in P_A\Rightarrow \exists_bsa\approx_A tb$.
\end{enumerate}
Here, $\lambda_A^*$ is the extension of $\lambda_A$ to sequences. The intuition is that $\approx_A$-equivalent plays represent the same computation performed using different threads. 
\item $W_A\subseteq P_A^\infty$ is a set of \emph{winning plays}, where $P_A^\infty$ is the set of infinite plays, i.e. infinite sequences of moves such that all their finite prefixes are in $P_A$, such that $W_A$ is closed under $\approx_A$ in the sense that
$$\left( s\in W_A\wedge t\notin W_A \right)\Rightarrow \exists_{s_0\leq s,t_0\leq t}|s_0|=|t_0|\wedge s_0\not\approx_A t_0. $$
The intuition is that Opponent is the one who caused interactions in $W_A$  to be infinite.
\end{itemize}
\end{definition}

Our notion of morphism will be defined in terms of strategies on games.

\begin{definition}[Strategy] A (Player) \emph{strategy on $A$} is a non-empty subset $\sigma\subseteq P_A^{\mathsf{even}}$ satisfying
\begin{itemize}
\item[] (Causal Consistency): $sab\in \sigma\Rightarrow s\in \sigma$;
\item[] (Representation Independence): $ s\in\sigma\;\wedge\; s\approx_A t\Rightarrow t\in\sigma$.
\end{itemize}
\end{definition}
We sometimes identify $\sigma$ with the subset of $P_A$ that is obtained as its prefix closure. Generally, we impose some more conditions on strategies.
\begin{definition}[Conditions on Strategies]
We call a strategy $\sigma$ on $A$ \emph{deterministic} if it satisfies
\begin{itemize}
\item[] (Determinacy): $sab,ta'b'\in \sigma\;\wedge\; sa\approx_A ta'\Rightarrow sab\approx_A ta'b'$.
\end{itemize}
We call it \emph{well-bracketed} if it satisfies
\begin{itemize}
\item[] (Well-Bracketing): If an answer is played, it is in response to (i.e. justified by) the pending question (i.e. the last unanswered question).
\end{itemize}
We call $\sigma$ \emph{history-free}, if there exists a non-empty causally consistent subset $\phi\subseteq \sigma$ (called a \emph{history-free skeleton}) such that
\begin{itemize}
\item[] (Uniformization): $\forall_{sab\in\sigma}s\in\phi\Rightarrow\exists !_{b'}sab'\in\phi$;
\item[] (History-Freeness 1): $sab,tac\in\phi\Rightarrow b=c$;
\item[] (History-Freeness 2): $\left(sab,t\in\phi\;\wedge\;ta\in P_A\right)\Rightarrow tab\in\phi$.
\end{itemize}
We call $\sigma$ \emph{winning} if it satisfies
\begin{itemize}
\item[] (Finite Wins): If $s$ is $\leq$-maximal in $\sigma$, then $s$ is $\leq$-maximal in $P_A$.
\item[] (Infinite Wins): If $s_0\leq s_1\leq \ldots$ is an infinite chain in $\sigma$, then $\bigcup_i s_i\in W_A$.
\end{itemize}
\end{definition}
The idea is that game semantics naturally models various effects (and indeed does so very precisely in the sense that full-abstraction results can be obtained): non-determinism \cite{harmer1999fully}, non-local control flow\footnote{If we drop the well-bracketing condition altogether, Laird showed that we allow for very wild kinds of control flow, which are customary in CBV but not CBN. To obtain a precise correspondence with a CBN language with the control operator $\mathsf{call/cc}$ we would still impose the weaker condition on strategy of being \emph{weakly well-bracketed}: we are allowed to answer any open question, where by open question we mean a question for which no more recent question has been answered already. This corresponds to a stack discipline in which we can not just pop the top element, but we can pop an element that is deeper in the stack with the rule that we have to discard all elements on top of it.} \cite{laird1997full,laird1999semantic}, local references of ground type\footnote{While in HO-games naturally model general references, it is not clear to the author if these can be modelled in AJM-style. Indeed, strategies on AJM-games (at least the simply typed hierarchy) automatically satisfy the so-called visibility condition as a consequence of the restrictions on valid plays (specifically the switching conditions). Visibility is known to be the semantic condition which corresponds to the exclusion of higher-order references. \cite{abramsky1998fully}} \cite{abramsky1996linearity} and recursion/non-termination \cite{abramsky2000full}. These four conditions on strategies respectively serve to exclude these four classes of effects. This idea has been dubbed the ``semantic cube'' by Abramsky \cite{abramsky1997game}, where the axes of the (hyper)cube correspond to various conditions one could impose on strategies.

We write $\str(A)$ for the cpo of strategies (satisfying our favourite selection of the four conditions above) on $A$ ordered under inclusion and write $\bot_A$ or simply $\bot$ for the strategy $\{\epsilon\}$. In the rest of this thesis, we assume strategies to satisfy all four conditions, unless specified otherwise explicitly. (However, all constructions and results, with the exception of completeness results, go through for any of these classes of strategies.)

We note that a history-free skeleton $\phi$ for a strategy $\sigma$ is induced by a partial function on moves and that it satisfies $\sigma=\{t \;|\; \exists_{s\in\phi}t\approx_A s\}$. A winning strategy is the semantic equivalent of a normalising or total term. It always has a response to any valid $O$-move. Furthermore, if the result of the interaction between a winning strategy and any (possibly history-sensitive) Opponent is an infinite play, then this is a member of the set of winning plays, capturing the idea that the infinite interaction is Opponent's fault.

Next, we define some constructions on games, starting with their symmetric monoidal closed structure.

\begin{definition}[Tensor Unit] We define the game $I:=(\emptyset,\emptyset,\emptyset,\{\epsilon\},\{(\epsilon,\epsilon)\},\emptyset)$.\end{definition}

\begin{definition}[Tensor] Given games $A$ and $B$, we define the game $A\otimes B$ by
\begin{itemize}
\item $M_{A\otimes B}:=M_A+M_B$;
\item $\lambda_{A\otimes B}:=[\lambda_A,\lambda_B]$;
\item $\J_{A\otimes B}:= \J_A + \J_B$
\item $P_{A\otimes B}:=\{s\in M_{A\otimes B}^\circledast\;|\;  s\upharpoonright_A \in P_A \wedge s\upharpoonright_B \in P_B  \}$;
\item $s\approx_{A\otimes B} t:= s\upharpoonright_A \approx_A t\upharpoonright_A \;\wedge \; s\upharpoonright_B\approx_B t\upharpoonright_B\;\wedge\; \forall_{1\leq i\leq |s|} (s_i\in M_B\Leftrightarrow t_i\in M_B)$;
\item $W_{A\otimes B} :=\left\{s\in P_{A\otimes B}^\infty\;|\;\left(s\upharpoonright_A\in P_A^\infty\Rightarrow s\upharpoonright_A\in W_A\right)\wedge\left( s\upharpoonright_B\in P_B^\infty \Rightarrow s\upharpoonright_B\in W_B\right)\right\}$.
\end{itemize}
\end{definition}

\begin{definition}[Linear Implication] Given games $A$ and $B$, and writing $\mathsf{init}_B\subseteq M_B$ for the set where $\J_B$ is undefined, we define the game $A\multimap B$ by
\begin{itemize}
\item $M_{A\multimap B}:=M_A\times \mathsf{init}_B + M_B$; we write $s\upharpoonright_A$ for the subsequence of moves in $M_A\times\mathsf{init}_B$ of the play $s$ where we further project to a sequence in $M_A$; 
\item $\lambda_{A\multimap B}:=[\overline{\lambda_A},\lambda_B]$, where $\overline{\lambda_A}(m,n):=\overline{\lambda_A(m)}$;
\item $\begin{array}{ll}\J_{A\multimap B}(m,n):=n & \mathsf{if}\;m\in \mathsf{init}_A\\
\J_{A\multimap B}(m,n):=\J_A(m) & \mathsf{if}\;m\in M_A\setminus \mathsf{init}_A\\ 
\J_{A\multimap B}(n):=\J_B(n) & \mathsf{if}\; n\in M_B 
\end{array}$;
\item $P_{A\multimap B}:=\{s\in M_{A\multimap B}^\circledast\;|\; s\upharpoonright_A \in P_A \wedge s\upharpoonright_B \in P_B \}$;
\item $s\approx_{A\multimap B} t:= s\upharpoonright_A \approx_A t\upharpoonright_A \;\wedge \; s\upharpoonright_B\approx_B t\upharpoonright_B\;\wedge\; \forall_{1\leq i\leq |s|} (s_i\in M_B\Leftrightarrow t_i\in M_B)$;
\item $W_{A\multimap B}:=\{s\in P_{A\multimap B}^\infty\;|\; s\upharpoonright_A\in W_A \Rightarrow s\upharpoonright_B\in W_B\}$.
\end{itemize}
\end{definition}

$I$ is the unique game whose only play has length $0$. Both $A\otimes B$ and $A\multimap B$ are obtained by playing $A$ and $B$ in parallel by interleaving. Note that the definitions of $P_{-}$ and $\lambda_{-}$ imply that in $A\otimes B$ only Opponent can switch between $A$ and $B$, while in $A\multimap B$ only Player can, and that in $A\otimes B$ Opponent can start the play in either $A$ or $B$, while in $A\multimap B$ the play must commence in $B$. In both cases, a question is answered in the game where it was asked.

These definitions on objects extend to strategies, e.g. for  strategies $\sigma\in\str(A), \tau\in\str(B)$, we can define a  strategy $\sigma\otimes \tau=\{s\in P_{A\otimes B}^\mathsf{even}\;|\; s\upharpoonright_A\in \sigma \;\wedge\; s\upharpoonright_B\in \tau\}\in\str(A\otimes B)$. This gives us a model of multiplicative intuitionistic linear logic, with all structural morphisms consisting of appropriate variants of copycat strategies, which are introduced next.

\begin{theorem}[Linear Category of Games] We define a category $\Gamecat$ by
\begin{itemize}
\item $\mathsf{ob}(\Gamecat):=\{A\;|\; A\;\textnormal{ is an AJM-game}\}$;
\item $\Gamecat(A,B):=\str(A\multimap B)$;
\item $\mathsf{id}_A:=\{s\in P_{A\multimap A}^\mathsf{even}\;|\; \forall_{s' \in P_{A\multimap A}^\mathsf{even}}s'\leq s \Rightarrow s'\upharpoonright_{A^{(1)}}\approx_A s'\upharpoonright_{A^{(2)}}\}$, the \emph{copycat strategy} on $A$;
\item for $A\ra{\sigma}B\ra{\tau}C$, the composition (or \emph{interaction}) $A\ra{\sigma;\tau}C$ is defined from parallel composition $\sigma||\tau:=\{s\in M_{(A\multimap B)\multimap C}^\circledast \;|\; s\upharpoonright_{A,B}\;\in\sigma\;\wedge \; s\upharpoonright_{B,C}\;\in\tau\}$ plus hiding: $\sigma;\tau:=\{s\upharpoonright_{A,C}\;|\; s\in \sigma||\tau\}$.
\end{itemize}
Then, $(\Gamecat,I,\otimes,\multimap)$ is, in fact, a symmetric monoidal closed category.
\end{theorem}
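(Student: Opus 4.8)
The plan is to verify in turn that $\Gamecat$ is a category, that $\otimes$ makes it monoidal, that the braiding makes it symmetric, and that $A\multimap(-)$ is right adjoint to $(-)\otimes A$, arranging all structural morphisms to be (reindexings of) copycat strategies so that naturality and coherence reduce to formal manipulations.

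First I would check that $\Gamecat$ is well-defined as a category. The non-trivial points are: (i) for $\sigma\in\str(A\multimap B)$ and $\tau\in\str(B\multimap C)$, the parallel composition $\sigma\|\tau$ consists of sequences on the four-component arena whose restriction to $(A,B)$ lies in $\sigma$ and to $(B,C)$ in $\tau$; one must show that hiding the $B$-moves yields a set $\sigma;\tau$ which is again a strategy — prefix- and even-closure and causal consistency are routine, while $\approx_{A\multimap C}$-saturation follows from saturation of $\sigma$ and $\tau$ together with the fact that hidden $B$-moves can be permuted coherently; (ii) composition preserves whichever of the four conditions (determinacy, well-bracketing, history-freeness, winning) we have imposed — determinacy and well-bracketing are local checks on the interaction, history-freeness is obtained by composing history-free skeletons, and the winning condition uses the standard argument that an infinite interaction must already be infinite in $A\multimap B$ or in $B\multimap C$, so by winning of $\sigma$ and $\tau$ it lies in $W_{A\multimap C}$ (the possibility of infinite hidden activity in $B$ being excluded by finiteness of wins); (iii) the copycat strategy $\id_A$ is a deterministic, well-bracketed, history-free, winning strategy satisfying the unit laws $\id_A;\sigma=\sigma=\sigma;\id_B$, which I would prove by showing that in the interaction of $\sigma$ with a copycat the copycat component merely relays moves, so that after hiding one recovers $\sigma$ exactly, by induction on the length of plays.

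The main obstacle is associativity of composition. Given $\sigma\in\str(A\multimap B)$, $\tau\in\str(B\multimap C)$ and $\upsilon\in\str(C\multimap D)$, I would introduce the simultaneous parallel composition $\sigma\|\tau\|\upsilon$ on the arena built from $A,B,C,D$ and establish the \emph{interaction of interactions} lemma: a sequence on this arena lies in $\sigma\|\tau\|\upsilon$ iff its three adjacent restrictions lie in $\sigma,\tau,\upsilon$, and such a sequence can be reconstructed either from a sequence in $(\sigma;\tau)\|\upsilon$ or from one in $\sigma\|(\tau;\upsilon)$. Hiding $B$ and $C$ in either order then yields the same set, giving $(\sigma;\tau);\upsilon=\sigma;(\tau;\upsilon)$. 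This is the step requiring the most careful bookkeeping — checking that the switching and justification discipline keeps all intermediate restrictions well-defined, and that $\approx$-classes and winning plays match up under the reconstruction — but it is the standard argument of AJM game semantics.

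With $\Gamecat$ established as a category, the remaining structure is essentially copycat-chasing. Functoriality of $\otimes$ on morphisms, $(\sigma\otimes\sigma');(\tau\otimes\tau')=(\sigma;\tau)\otimes(\sigma';\tau')$ and $\id_A\otimes\id_B=\id_{A\otimes B}$, follows by the same relaying argument used for the unit laws, since both sides have the same interaction sequences modulo the relabelling $M_A+M_B$. The unitors $I\otimes A\cong A$, the associator $(A\otimes B)\otimes C\cong A\otimes(B\otimes C)$ and the braiding $A\otimes B\cong B\otimes A$ are copycat strategies between arenas differing only by a bijective retagging of moves (for the associator, $(M_A+M_B)+M_C\cong M_A+(M_B+M_C)$), with inverses the copycats in the opposite direction and naturality again the relaying argument; one checks the retaggings respect $\lambda$, $\J$, $P$, $\approx$ and $W$. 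The triangle, pentagon and hexagon diagrams then reduce to equalities of copycats along composites of these retaggings, which hold because a copycat composed with a copycat along compatible reindexings is the copycat implementing the composite reindexing, and the reindexings satisfy the corresponding set-level identities. Finally, for closedness I would use that $M_{(A\otimes B)\multimap C}$ and $M_{A\multimap(B\multimap C)}$ are in canonical bijection compatible with all the game structure, so $\str((A\otimes B)\multimap C)\cong\str(A\multimap(B\multimap C))$; pre- and post-composition with the copycat reindexings show this bijection is natural in $A$ and in $C$, exhibiting $(-)\otimes B\dashv B\multimap(-)$ and completing the proof that $(\Gamecat,I,\otimes,\multimap)$ is symmetric monoidal closed.
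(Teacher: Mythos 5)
The paper states this theorem without proof: it is recalled as a standard result of AJM game semantics (the cited references \cite{abramsky2009game,abramsky2000full} contain the full argument), so there is no in-paper proof to compare against. Your outline is the standard route — well-definedness and associativity of composition via the zipping/``interaction of interactions'' lemma, unit laws and all structural morphisms as copycats over retaggings of move sets, coherence by composing copycats, and closedness from the canonical bijection $M_{(A\otimes B)\multimap C}\cong M_{A\multimap(B\multimap C)}$ — and it is essentially correct as a proof sketch.

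One point where your parenthetical justification is too quick: the claim that ``the possibility of infinite hidden activity in $B$ [is] excluded by finiteness of wins.'' The \textsf{Finite Wins} clause says nothing directly about infinite interactions; what one actually does is apply the \textsf{Infinite Wins} clause to the two restrictions of an infinite interaction sequence $u$ with finite visible part, obtaining $u\upharpoonright_{A,B}\in W_{A\multimap B}$ and $u\upharpoonright_{B,C}\in W_{B\multimap C}$, and then derive a contradiction from the specific implicational form of $W_{-\multimap-}$ together with the finiteness of $u\upharpoonright_A$ and $u\upharpoonright_C$. With the paper's bare definition $W_{A\multimap B}=\{s\mid s\upharpoonright_A\in W_A\Rightarrow s\upharpoonright_B\in W_B\}$ the implication from the $\sigma$-side is vacuous when $u\upharpoonright_A$ is finite, so the contradiction is not immediate; this is exactly the delicate bookkeeping that the literature handles with a polarity-sensitive reading of the winning condition, and it deserves to be spelled out rather than waved at. The rest of your sketch (composition of skeletons for history-freeness, the relaying induction for copycats, and the switching-condition check underlying the currying bijection, which is not a pure retagging because $M_{A\multimap B}$ tags $A$-moves with initial $B$-moves) is the right argument at the right level of detail.
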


To make this into a model of intuitionistic logic, a cartesian closed category (ccc), through the first Girard translation (CBN translations), we need two more constructions on games, to interpret the additive conjunction and exponential, respectively.

\begin{definition}[With] Given games $A$ and $B$, we define the game $A\& B$ by
\begin{itemize}
\item $M_{A\& B}:=M_A+M_B$;
\item $\lambda_{A\& B}:=[\lambda_A,\lambda_B]$;
\item $\J_{A\& B}:= \J_A + \J_B$;
\item $P_{A\& B}:=P_A+P_B$;
\item $\approx_{A\& B} :=\approx_A+\approx_B$;
\item $W_{A\& B} :=W_A + W_B$.
\end{itemize}
\end{definition}

\begin{definition}[Bang] Given a game $A$, we define the game $!A$ by
\begin{itemize}
\item $M_{!A}:=\mathbb{N}\times M_A$;
\item $\lambda_{!A}(i,a):=\lambda_A(a)$;
\item $\J_{!A}(i,a):=(i,\J_A(a))$;
\item $P_{!A}:=\{s\in M_{!A}^\circledast\;|\; \forall_{i\in\mathbb{N}}s\upharpoonright_i\in P_A\}$; 
\item $s\approx_{!A} t:=\exists_{\pi\in S(\mathbb{N})}\forall_{i\in \mathbb{N}}s\upharpoonright_i\approx_A t\upharpoonright_{\pi(i)}\; \wedge \; ( \mathsf{fst};\pi)^*(s)=\mathsf{fst}^*(t)$, writing $S(\mathbb{N})$ for the set of permutations of $\mathbb{N}$;
\item $W_{!A}:= \{s\in P_{!A}^\infty\;|\; \forall_i s\upharpoonright_i\in P_A^\infty \Rightarrow s\upharpoonright_i\in W_A\}$.
\end{itemize}
\end{definition}

A play in $A\& B$ consists of either a play in $A$ or in $B$, where Opponent chooses which as by our convention Opponent always makes the initial move. A play in $!A$ consists of any number of interleaved \emph{threads} of plays in $A$. Because of the definition of $\approx_A$, $!A$ behaves as a countably infinite symmetric $\otimes$-product of $A$ with itself. As before, the definition of $\lambda_{!A}$ assures only Opponent can switch games, while the definition of justification ensures that a question is answered in the thread in which it was asked.

Next, we note that $!$ can be made into a comonad by defining, for $A\ra{\sigma}B$, 
$$!\sigma:=\{s\in P_{!A\multimap !B}^\mathsf{even}\;|\; \exists_{\pi\in S(\mathbb{N})} \forall_{i\in \mathbb{N}} s\upharpoonright_{(\pi(i),A), (i,B)}\in \sigma\},$$
and natural transformations $!A\ra{\mathsf{der}_A}A$ and $!A\ra{\delta_A}!!A$ where $$\mathsf{der}_A:=\{s\in P_{!A\multimap A}^\mathsf{even}\;|\; \forall_{s' \in P_{!A\multimap A}^\mathsf{even}}s'\leq s \Rightarrow \exists_{i\in\mathbb{N}} s'\upharpoonright_{!A}\upharpoonright_i\approx_A s'\upharpoonright_A\}\txt{and}$$ $$\delta_A:=\{s\in P_{!A\multimap !!A}^\mathsf{even}\;|\; \forall_{s' \in P_{!A\multimap !! A}^\mathsf{even}}s'\leq s \Rightarrow \exists_{p:\mathbb{N}\times\mathbb{N}\hookrightarrow\mathbb{N}} \forall_{i,j\in \mathbb{N}} s'\upharpoonright_{!A}\upharpoonright_{p(i,j)}\approx_A s'\upharpoonright_{!!A}\upharpoonright_i\upharpoonright_j \}.$$

This allows us to define the co-Kleisli category $\Gamecat_!$ of CBN games, which has the same objects as $\Gamecat$, while $\Gamecat_!(A,B):=\Gamecat(!A,B)$. Let us write $\mathsf{dom}(f)$ for the domain of a morphism $f$. We have a composition $(f,g)\mapsto f^\dagger;g$, where we write $f^\dagger:=\delta_{\mathsf{dom}(f)};!(f)$, for which the strategies $\mathsf{der}_A$ serve as identities. We can define finite products in $\Gamecat_!$ by $I$ and $\&$ and write 
\begin{align*}
\mathsf{diag}_A:=\Big\{s\in P_{!A\multimap (A\& A)}^\mathsf{even}\;|\; &\forall_{s' \in P_{!A\multimap (A\& A)}^\mathsf{even}}s'\leq s \Rightarrow \exists_{i\in\mathbb{N}}(s'=\epsilon)\; \vee \\ & (s'\upharpoonright_{!A}\upharpoonright_i\approx_A s'\upharpoonright_{A^{(1)}}\neq \epsilon)\vee (s'\upharpoonright_{!A}\upharpoonright_i\approx_A s'\upharpoonright_{A^{(2)}}\neq \epsilon)\Big\}
\end{align*}
for the diagonal $A\ra{\mathsf{diag}_A} A\& A$ in $\Gamecat_!$. Moreover, we have Seely-isomophisms $!I\cong I$ and $!(A\&B)\cong !A\otimes !B,$ so we obtain a linear/non-linear adjunction $\Gamecat\leftrightarrows\Gamecat_!$, hence a model of multiplicative exponential intuitionistic linear logic. In particular, by defining $A\Rightarrow B:=!A\multimap B$, we make $\Gamecat_!$ into a ccc. We write $\mathsf{comp}_{A,B,C}$ for the internal composition $((A\Rightarrow B) \;\&\; (B\Rightarrow 
C))\ra{} A\Rightarrow C$ in $\Gamecat_!$.

\begin{theorem}[Intuitionist Category of Games]$(\Gamecat_!,I,\&,\Rightarrow)$ is a ccc.\end{theorem}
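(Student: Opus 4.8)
The plan is to recognise $(\Gamecat,I,\otimes,\multimap)$ together with the comonad $!$ and its Seely isomorphisms as a linear/non-linear model, and then to invoke the standard fact that the co-Kleisli category of such a model is cartesian closed with internal hom $A\Rightarrow B:={!A}\multimap B$ (see e.g. \cite{benton1995mixed,mellies2009categorical}). Since the preceding development has already shown that $\Gamecat$ is symmetric monoidal closed, that $!$ is a comonad (with $\der$ and $\delta$) participating in a linear/non-linear adjunction $\Gamecat\leftrightarrows\Gamecat_!$, and that $!I\cong I$ and $!(A\& B)\cong{!A}\otimes{!B}$, the work that remains is to assemble these ingredients into (i) finite products in $\Gamecat_!$ and (ii) the exponential adjunction, and to check the relevant naturality.

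For finite products, first note that $I$ is terminal in $\Gamecat$: since $M_I=\emptyset$ we have $M_{A\multimap I}=\emptyset$, so $\str(A\multimap I)=\{\{\epsilon\}\}$, and hence $\Gamecat_!(A,I)=\Gamecat({!A},I)$ is a singleton, naturally. For binary products, $A\& B$ is already the categorical product in $\Gamecat$: a strategy on $C\multimap(A\& B)$ decomposes uniquely, along Opponent's opening choice of component, into a pair of strategies on $C\multimap A$ and $C\multimap B$. This universal property is preserved by precomposition with the co-Kleisli lifts $g^\dagger=\delta_{\mathsf{dom}(g)};{!g}$, since each $g^\dagger$ is an ordinary $\Gamecat$-morphism and the projections of $\&$ are natural; so $\Gamecat_!(C,A\& B)\cong\Gamecat_!(C,A)\times\Gamecat_!(C,B)$ naturally in $C$, with projections obtained by postcomposing $\der_{A\& B}$ with the evident inclusion copycats and pairing given through the diagonals $\mathsf{diag}_C$.

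For the exponential, set $A\Rightarrow B:={!A}\multimap B$ and define the transposition as the composite
\begin{align*}
\Gamecat_!(C\& A,B)=\Gamecat({!(C\& A)},B)&\xrightarrow{\ \cong\ }\Gamecat({!C}\otimes{!A},B)\\
&\xrightarrow{\ \cong\ }\Gamecat({!C},{!A}\multimap B)=\Gamecat_!(C,A\Rightarrow B),
\end{align*}
where the first bijection is induced by the Seely isomorphism ${!(C\& A)}\cong{!C}\otimes{!A}$ and the second is currying in the symmetric monoidal closed category $\Gamecat$. Naturality in $B$ (via ordinary postcomposition) is immediate from naturality of currying. Naturality in $C$ — compatibility with co-Kleisli precomposition — is the only substantive point: it amounts to the Seely isomorphism being compatible with $\delta$ and the functorial action of $!$, i.e.\ to $(\Gamecat,!)$ being a genuine linear/non-linear model rather than one with the Seely maps present merely on objects.

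I expect this last coherence to be the main obstacle, and I would discharge it in one of two ways. The clean route is to appeal directly to the general theorem that the co-Kleisli category of a linear exponential comonad (equivalently, of a linear/non-linear adjunction) is cartesian closed; once the earlier paragraphs have verified the defining axioms for the game model, this applies verbatim and yields the result. The self-contained route is a direct but bookkeeping-heavy verification on strategies: one writes out the two copycat-style strategies obtained by traversing either side of the square expressing naturality in $C$ and checks that they agree, by tracking the thread re-indexings — the permutations $\pi\in S(\N)$ and the injections $\N\times\N\hookrightarrow\N$ appearing in $\der$, $\delta$ and the Seely maps — that witness the equality of the resulting plays up to $\approx$. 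I would present the former, relegating the latter to a remark.
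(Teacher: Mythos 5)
Your proposal is correct and follows essentially the same route as the paper, which obtains the result by observing that the Seely isomorphisms $!I\cong I$ and $!(A\& B)\cong{!A}\otimes{!B}$ make $\Gamecat\leftrightarrows\Gamecat_!$ a linear/non-linear adjunction and then invoking the standard fact that the co-Kleisli category of such a model is cartesian closed with $A\Rightarrow B:={!A}\multimap B$. The paper leaves the assembly implicit; your version simply spells out the same ingredients (terminal object $I$, products via $\&$, currying through the Seely isomorphism) in more detail.
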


\begin{remark}Note that for the hierarchy of cartesian types $A$ that are formed by operations $I$, $\&$ and $\Rightarrow$ from finite games (games $A$ with finite $P_A$),  winning strategies are the total strategies -- strategies which respond to any $O$-move -- for which infinite plays can only occur because Opponent opens infinitely many threads of the same game.
\end{remark}

So far, we have shown that $\Gamecat_!$ provides a model of {$\STTGame$} without finite inductive types. It is not clear that inductive types and (weak) coproducts are  supported in our large world $\Gamecat_!$ of all games if we work with history-free strategies\footnote{In fact, the condition of history-freeness is replaced with innocence  in \cite{mccusker1996games} which mends  this defect. \mccorrect{Note that history-free and innocent strategies coincide on simple types. This is no longer true in the presence of coproducts, as we can no longer encode the $P$-view in a thread index. In this case, innocent strategies give the right notion to obtain definability and full abstraction results} \cite{mccusker2012games}.}. To support these -- to be precise, in order to define the appropriate eliminators --, we restrict the games we consider.

We can in fact construct a model of all of {$\STTGame$} in a full subcategory $\Gamecat_!^{\mathrm{fin}1\times\Rightarrow}$ of $\Gamecat_!$ by giving a suitable interpretation to finite inductive types, which serve as the ground types for a type hierarchy built with $1$, $\times$ and $\Rightarrow$. For a set $X$, let us define ${X_*}$ to be a so-called \emph{flat game} with Player moves $X$ which are all justified by a single initial Opponent move $*$, $P_{{X_*}}=\{\epsilon,*\}\cup\{ *x\;| x\in X\}$ and $\approx_X=\{(s,s)\in P_X\times P_X\}$ where the initial move $*$ is a question and the moves from $X$ are answers. Let us interpret a finite inductive type $\{a_i\;|\; i\}$ as a finite flat game ${\{a_i\;|\; i\}_*}$. Let us write $\Gamecat_!{}^{\mathrm{fin}1\times\Rightarrow}$ for the full subcategory of $\Gamecat_!$ on the objects formed from finite flat games by $1$, $\times$ and $\Rightarrow$. Then, $\Gamecat_!^{\mathrm{fin}1\times\Rightarrow}$ is a model of {$\STTGame$}. Indeed, the interpretation of the introduction rule for $a_i$ is the strategy which answers $a_i$ to $*$, while the $\mathsf{case}$-eliminators for finite inductive types are interpreted inductively on the structure of the type $C$ we are eliminating into: the cases where $C$ is not a finite inductive type are defined through equations $1-\eta$, $\{a_i\;|\;i\}-\mathsf{Comm}-\langle -,-\rangle$ and $\{a_i\;|\;i\}-\mathsf{Comm}-\lambda$ of figure \ref{fig:stt}. In case $C$ is a finite inductive type $\{c_j\;|\;j\}$, we interpret $\mathsf{case}_{\{a_i\;|\;i\},C}$ as the winning history-free strategy on
 $ {\{a_i\;|\;i\}_*}\Rightarrow { {\{c_j\;|\;j\}_*}{}^{(1)}}\Rightarrow \ldots\Rightarrow {  {\{c_j\;|\;j\}_*}{}^{(n)}}\Rightarrow { {\{c_j\;|\;j\}_*}}$ which is given \mccorrect{by} the $\approx$-closure of the set of traces defined by the following partial function $f$ on moves:
$$*^{({ C})}\mapsto (0,*)^{(!{{\{a_i\;|\;i\}_*}})}\quad (0,a_i)\mapsto (0,*)^{(!{ C^{(i)}})}\quad (0,c_j)^{(!{ C^{(i)}})}\mapsto c_j^{({ C})}.
$$

One of the interesting aspects of game semantics are the strong correspondences that can often be established with the syntax we are modelling. In this case, we have the following very strong completeness result. Note that fullness of the interpretation is strictly stronger than completeness: it is a notion of completeness with respect to proofs rather than mere provability.
\begin{theorem}\label{thm:sttcompl}
The interpretation functor ${\STTGame}\ra{\sem{-}}\Gamecat_!$ is full and faithful and hence is an equivalence of categories to $\Gamecat_!^{\mathrm{fin}1\times\Rightarrow}$\mccorrect{, where we write $\STTGame$ for the syntactic category corresponding to the eponymous theory of section} \ref{sec:sttgame}.
\end{theorem}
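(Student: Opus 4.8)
The plan is to break the statement into \emph{soundness} (that $\sem{-}$ is a well-defined functor), \emph{faithfulness} (which is precisely completeness of the interpretation with respect to provable equality in $\STTGame$) and \emph{fullness} (which is precisely definability: every suitable strategy is the denotation of a term). Essential surjectivity onto $\Gamecat_!^{\mathrm{fin}1\times\Rightarrow}$ is immediate, since the object interpretation was set up so that finite inductive types go to finite flat games and $1,\times,\Rightarrow$ are modelled on the nose; so once the functor is shown full and faithful, the claimed equivalence of categories follows formally.

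\textbf{Soundness.} First I would check that $\Gamma\vdash t=u:A$ in $\STTGame$ implies $\sem{t}=\sem{u}$. This reduces to verifying, in $\Gamecat_!^{\mathrm{fin}1\times\Rightarrow}$, the $\beta$- and restricted $\eta'$-laws together with the $\PCF$ commutative conversions of figure \ref{fig:stt}. The cartesian closed structure of $\Gamecat_!$ gives the $\beta\eta$-laws for $1,\times,\Rightarrow$; the laws for finite inductive types and the commutative conversions hold by the very way $\mathsf{case}$ was interpreted, namely inductively on the type being eliminated into (forcing the $1$-$\eta$, $\mathsf{Comm}$-$\langle -,-\rangle$ and $\mathsf{Comm}$-$\lambda$ equations) and bottoming out at ground type in the explicit copycat-style strategy. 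This part is routine, but it is exactly where the precise calibration of the equational theory of $\STTGame$ to CBN game semantics is exploited.

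\textbf{Fullness (definability).} This is the heart of the proof and follows the $\PCF$-definability technique of \cite{abramsky2000full}, simplified because we are in a total finitary setting, so that no approximation or limit arguments are needed: a winning, deterministic, well-bracketed, history-free strategy on a game in our hierarchy is determined by a finite amount of data. Given $\sigma\in\Gamecat_!^{\mathrm{fin}1\times\Rightarrow}(\sem{A},\sem{B})$, i.e.\ a strategy on $!\sem{A}\multimap\sem{B}$, I would induct on the size of $\sem{B}$ together with the lengths of plays in $\sigma$. By $\eta$-expansion on the syntactic side (currying and pairing) I reduce to the case where $\sem{B}$ is a finite flat game $\{c_j\;|\;j\}_*$, absorbing the new Opponent-visible moves into the context. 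Then Opponent's opening move is the unique initial question $*$; consider Player's response, which is unique by determinacy. Either it answers $*$ by some $c_j$, in which case by well-bracketing $\sigma$ is the constant strategy answering $c_j$, denoted by the constant term; or it opens a thread $(i,*)$ in some argument game $\sem{A_k}$, which (using that $A_k$ lies in the hierarchy) forces a $\mathsf{case}$-decomposition: for each Opponent answer in that game the residual strategy is strictly smaller, hence definable by the induction hypothesis, and assembling these residuals with $\mathsf{case}$ (and the $\lambda$'s and pairs introduced at the reduction step) yields a term whose denotation is $\sigma$. The $\mathbb{N}$-indexed thread symmetry of $!$ and the Seely isomorphisms ensure that Opponent's freedom to open several threads of the same $\sem{A_k}$ corresponds exactly to repeated use of $x_k$ in the co-Kleisli category, so the construction loses no information.

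\textbf{Faithfulness, and the main obstacle.} I would then prove a normal-form theorem for $\STTGame$: the $\beta$-rules and commutative conversions push all $\mathsf{case}$'s outward, and restricted $\eta'$ together with the $\eta$-laws for $1,\times,\Rightarrow$ pin down a unique normal form for each term. Combining this with fullness, the definability argument above actually produces a normal-form term for each strategy, and one checks that $\sem{-}$ is injective on normal forms (distinct normal forms differ in their finite evaluation trees, hence in some play of their denotations); so $\sem{-}$ restricts to a bijection between normal forms and strategies, giving full and faithful simultaneously. The step I expect to be the real obstacle is the definability induction: ensuring the case analysis on Player's first response is genuinely exhaustive and the residual strategies genuinely smaller in the presence of the exponential, and keeping the bookkeeping of ambient contexts versus peeled-off $\Rightarrow$'s straight, so that the synthesised term really has the intended type and denotation.
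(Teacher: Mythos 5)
Your proposal is correct and takes essentially the same route as the paper: the paper's proof simply cites the decomposition lemma of \cite{abramsky2000full} (and \cite{Abramsky00axiomsfor}), noting that the winning condition eliminates the $\bot$ case and that the iterated decomposition terminates because a finite norm on winning strategies (the content of lemma \ref{lem:norm}) shrinks at each step — which is exactly your induction measure and your observation that totality removes the need for approximation arguments. Your additional detail on soundness, the normal-form argument for faithfulness, and essential surjectivity fills in what the paper leaves to the cited works, but does not constitute a different approach.
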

\begin{proof}
This is a straightforward finitary total variation on the results of \cite{abramsky2000full} that can, in particular, be obtained as a special case of the results in \cite{Abramsky00axiomsfor}. We note that winning strategies are total and therefore the case of $\bot$ in the decomposition lemma does not occur. Moreover, we note that the iterated decomposition terminates if we start with a winning strategy, for reasons outlined in the proof of lemma \ref{lem:norm} (essentially because infinite plays are always Opponent's responsibility, we can assign a finite size to a strategy which shrinks under the decomposition).
\end{proof}

\begin{savequote}[8cm]
``Well! I've often seen a cat without a grin,'' thought Alice; ``but a grin
without a cat! It's the most curious thing I ever saw in all my life.''. 
  \qauthor{--- Lewis Carroll}
\end{savequote}

\chapter{\label{ch:4}Linear Dependent Type Theory}

Starting from Church's simply typed $\lambda$-calculus (or cartesian propositional type theory), two extensions \mccorrect{depart} in perpendicular directions: \vspace{-5pt}
\begin{itemize}
\item[$\bullet$] following the Curry-Howard propositions-as-types interpretation\mccorrect{,} \emph{dependent type theory} (DTT)  \cite{martin1998intuitionistic} extends the simply typed $\lambda$-calculus from a proof calculus of intuitionistic propositional logic to one for predicate logic;
\item[$\bullet$] \emph{linear logic} \cite{girard1987linear} gives a more detailed resource sensitive analysis, exposing precisely how many times each assumption is used in proofs.
\end{itemize}
\vspace{-5pt}

A combined \emph{linear dependent type theory} is one of the interesting directions to explore to gain a more fine-grained understanding of dependent type theory from a computer science point of view, explaining its flow of information. Indeed, many of the usual settings for computational semantics are naturally linear in character, either because they arise from a model of linear logic as $!$-co-Kleisli categories (coherence space and game semantics) or for more fundamental reasons (quantum computation). Relatedly, as we have seen, linear types naturally arise in the semantics of commutative effects.

Combining dependent types and linear types is a non-trivial task, however, and despite some work by various authors that we shall discuss, the precise relationship between the two systems remains poorly understood. The discrepancy between linear and dependent types is the following.\vspace{-5pt}
\begin{itemize}
\item[$\bullet$] The lack of structural rules in \emph{linear type theory} forces us to refer to each identifier precisely once -- for a sequent $x:A\vdash t:B$, $x$ occurs uniquely in $t$.
\item[$\bullet$] In \emph{dependent type theory}, types can have free identifiers -- $x:A\vdash B\;\type$, where $x$ is free in $B$. Crucially, if $x:A\vdash t:B$, $x$ may also be free in $t$.
\end{itemize}
\vspace{-7pt}\nopagebreak
What does it mean for $x$ to occur uniquely in $t$ in a dependent setting? Do we not count its occurrence in $B$? This point of view seems incompatible with universes, which play an important r\^ole in dependent type theory. If we do, however, the language seems to lose much of its expressive power. In particular, it prevents us from talking about constant types, it seems. 

The usual way out, which we shall follow too, is to restrict type dependency on cartesian terms, which can be copied and deleted freely. Although this seems very limiting -- for instance, we do not obtain full equivalents of the Girard translations, embedding DTT in the resulting system --, it is not clear that there is a reasonable alternative. Moreover, as even this limited scenario has not been studied extensively, we hope that a semantic analysis, which was so far missing entirely, may shed new light on the old mystery of linear type dependency.

Historically, Girard's early work in linear logic already makes movements to extend a linear analysis to predicate logic. Although it talks about first-order quantifiers, the analysis appears to have stayed rather superficial, omitting the identity predicates which, in a way, are what make first-order logic tick. Closely related is that an account of internal quantification, or a linear variant of Martin-L\"of's type theory, was missing, let alone a Curry-Howard correspondence.

Later, linear types and dependent types were first combined in a Linear Logical Framework \cite{cervesato1996linear}, where a syntax was presented that extends a Logical Framework with linear types (that depend on terms of cartesian types). This has given rise to a line of work in the computer science community \cite{dal2011linear,petit2012linear,gaboardi2013linear}. All the work seems to be syntactic in nature, however, and seems to be mostly restricted to the asynchronous fragment in which we only have $\multimap$-, $\Pi_!^\multimap$-, $\top$-, and $\&$-types. An exception is the Concurrent Logical Framework \cite{watkins2003concurrent}, which treats synchronous connectives resembling our $I$-, $\otimes$-, $\Sigma_!^\otimes$-, and $!$-types. An account of additive disjunctions and identity types is missing entirely.

On the other hand, similar ideas, this time at the level of categorical semantics and specific models (from homotopy theory, algebra, and physics), have emerged in the mathematical community  \cite{may2006parametrized,shulman2013enriched,ponto2012duality,schreiber2014quantization}. In these models, as with Girard, a notion of comprehension was missing and, with that, a notion of identity type. Although in the last while some suggestions have been made on the nLab and nForum of possible connections between the syntactic and semantic work, no account of the correspondence was published.

The point of this chapter is to close this gap between syntax and semantics and to pave the way for a proper semantic analysis of linear type dependency, treating a range of type formers including $\Id_!^\otimes$-types. Firstly, in section \ref{sec:syn}, we present a syntax, dependently typed dual intuitionistic linear logic (dDILL), a natural blend of the dual intuitionistic linear logic (DILL) \cite{barber1996dual} and dependent type theory (DTT) \cite{hofmann1997syntax} which generalises both. Secondly, in section \ref{sec:sem}, we present a complete categorical semantics, an obvious combination of linear/non-linear adjunctions \cite{barber1996dual} and comprehension categories  \cite{jacobs1993comprehension}. Thirdly, we discuss how our semantics applies to the dependently typed LNL calculus \cite{krishnaswami2015integrating} in section \ref{sec:deplnl} and discuss dependently typed Girard translations in section \ref{sec:depgirard}. Finally, in sections \ref{sec:dismod}, \ref{sec:ddillcommeffects}, \ref{sec:doublegluing}, \ref{sec:lindepscott} and \ref{sec:depcohsp} we present various concrete models, including a class of models arising from commutative effects and a coherence space semantics.

\begin{remark}[Related Publications] This chapter is largely based on \cite{vakar2015syntax,vakar2015framework}. The material on many (the exception being the monoidal families) of the concrete models is new as is the material on dependent Girard translations and the dependent LNL calculus semantics. Around the same time that the author published his study \cite{vakar2014syntax,
vakar2015syntax} of dDILL, \cite{krishnaswami2015integrating} independently developed a syntax (but not a denotational semantics) and applications for dLNL.
\end{remark}

\section{Syntax of dDILL}
\label{sec:syn}
We next present the formal syntax of dDILL\mccorrect{, c.f. section} \ref{sec:DTT}. We start with a presentation of its judgements and then discuss its rules of inference: first its structural core, then the logical rules for a series of optional type formers. We conclude this section with a few basic results about the syntax.

\subsubsection*{Judgements}
We adopt a notation $\Gamma;\Delta$ for contexts, where $\Gamma$ is `a cartesian region' and $\Delta$ is `a linear region', similarly to \cite{barber1996dual}. The idea will be that we have an empty context and can extend an existing context $\Gamma;\Delta$ with both cartesian and linear types that are allowed to depend on $\Gamma$. Our language will express judgements of the six forms of figure \ref{fig:ildttjudgements}.

\subsubsection*{Structural Rules}
We use the structural rules of figures \ref{fig:ildttstruct1}, \ref{fig:ildttstruct2} and \ref{fig:ildttstruct3}, which are essentially the structural rules of dependent type theory where some rules appear in both a cartesian and a linear form. We present the rules per group, with their names, from left-to-right, top-to-bottom.

\subsubsection*{Logical Rules}We introduce some basic (optional) type and term formers, for which we give type formation (denoted -F), term introduction (-I), term elimination (-E), term computation rules (-$\beta$), and (judgemental) term uniqueness principles (-$\eta$), in figure \ref{fig:ildttlogical1}, \ref{fig:ildttlogical2} and \ref{fig:ildtteqns}\mccorrect{.} Moreover, $\Sigma^\otimes_{!(x:A)}$, $\Pi^\multimap_{!(x:A)}$, $\lambda_{!(x:A)}$, and $\lambda_{x:A}$ are name binding operators, binding free occurences of $x$ within their scope. Preempting some theorems of the calculus, we overload some of the notation for -I and -E rules of various type formers, in order to avoid unnecessary syntactic clutter. Needless to say, uniqueness of typing can easily be restored by carrying around enough type information on the term formers corresponding to the various -I and -E rules.

Note that we are working with weak (non-dependent) elimination rules for positive connectives. This is forced on us by the requirement that types do not depend on linear assumptions. As an alternative, we could demand strong elimination rules, but only for terms without linear assumptions.

\begin{figure}[tb]
\centering
\fbox{\resizebox{\linewidth}{!}{
\begin{tabular}{ll}
\textbf{dDILL judgement} & \textbf{Intended meaning}\vspace{2pt}\\
$\vdash \Gamma;\Delta \ctxt$ & $\Gamma;\Delta$ is a valid context\\
$\Gamma;\cdot \vdash A\type$ &  $A$ is a type in (cartesian) context $\Gamma$\\
$\Gamma;\Delta\vdash a:A$ & $a$ is a term of type $A$ in context $\Gamma;\Delta$\\
$\vdash \Gamma;\Delta = \Gamma';\Delta'$\hspace{40pt} & $\Gamma;\Delta$ and $\Gamma';\Delta'$ are judgementally equal contexts\\
$\Gamma;\cdot\vdash A= A'$ & $A$ and $A'$ are judgementally equal types in (cartesian) context $\Gamma$\\
$\Gamma;\Delta\vdash a= a':A$ & $a$ and $a'$ are judgementally equal terms of type $A$ in context $\Gamma;\Delta$
\end{tabular}\hspace{40pt}\;}}
\caption{\label{fig:ildttjudgements}Judgements of dDILL.}
\end{figure}
\begin{figure}[!tb]
\centering
\fbox{\resizebox{\linewidth}{!}{
\begin{tabular}{ll}
&\\
\AxiomC{}
\RightLabel{\textsf{C-Emp}}
\UnaryInfC{$\cdot;\cdot \ctxt$}
\DisplayProof
& \\
& \\
\AxiomC{$\vdash\Gamma;\cdot\; \ctxt$}
\AxiomC{$\Gamma;\cdot \vdash A\type$}
\RightLabel{\textsf{Cart-C-Ext}}
\BinaryInfC{$\vdash \Gamma,x:A;\cdot\ctxt$}
\DisplayProof\hspace{50pt}\;

&
\AxiomC{$\Gamma;\Delta=\Gamma';\Delta'$}
\AxiomC{$\Gamma;\cdot \vdash A= B$}
\RightLabel{\textsf{Cart-C-Ext-Eq}}
\BinaryInfC{$\vdash \Gamma,x:A;\Delta=\Gamma',y:B;\Delta'$}
\DisplayProof\hspace{50pt}\;\\
& \\
\AxiomC{$\vdash \Gamma;\Delta\ctxt$}
\AxiomC{$\Gamma;\cdot \vdash A\type$}
\RightLabel{\textsf{Lin-C-Ext}}
\BinaryInfC{$\vdash \Gamma;\Delta,x:A\ctxt$}
\DisplayProof
&

\AxiomC{$\Gamma;\Delta=\Gamma';\Delta'$}
\AxiomC{$\Gamma;\cdot\vdash A= B$}
\RightLabel{\textsf{Lin-C-Ext-Eq}}
\BinaryInfC{$\vdash \Gamma;\Delta,x:A=\Gamma';\Delta',y:B$}
\DisplayProof
\\
&\\
&\\
\AxiomC{$\Gamma,x:A,\Gamma';\cdot\ctxt$}
\RightLabel{\textsf{Cart-Idf}}
\UnaryInfC{$\Gamma,x:A,\Gamma';\cdot\vdash x:A$}
\DisplayProof
&
\AxiomC{$\Gamma;x:A\ctxt$}
\RightLabel{\textsf{Lin-Idf}}
\UnaryInfC{$\Gamma;x:A\vdash x:A$}
\DisplayProof
\end{tabular}}}
\caption{\label{fig:ildttstruct1} Context formation and identifier declaration rules.}
\end{figure}

\begin{figure}[!tb]
\centering
\fbox{\resizebox{\linewidth}{!}{
\begin{tabular}{ll}
&\\
\AxiomC{$\vdash \Gamma;\Delta\ctxt$}
\RightLabel{\textsf{C-Eq-R}}
\UnaryInfC{$\vdash \Gamma;\Delta= \Gamma;\Delta$}
\DisplayProof
&
\AxiomC{$\vdash \Gamma;\Delta= \Gamma';\Delta'$}
\RightLabel{\textsf{C-Eq-S}}
\UnaryInfC{$\vdash \Gamma';\Delta'= \Gamma;\Delta$}
\DisplayProof\\
&\\
\AxiomC{$\vdash \Gamma;\Delta= \Gamma';\Delta'$}
\AxiomC{$\vdash \Gamma';\Delta'= \Gamma'';\Delta''$}
\RightLabel{\textsf{C-Eq-T}}
\BinaryInfC{$\vdash \Gamma;\Delta= \Gamma'';\Delta''$}
\DisplayProof &\\
&\\
\AxiomC{$\Gamma;\cdot\vdash A\type$}
\RightLabel{\textsf{Ty-Eq-R}}
\UnaryInfC{$\Gamma;\cdot\vdash A= A$}
\DisplayProof
&
\AxiomC{$\Gamma;\cdot\vdash A= A'$}
\RightLabel{\textsf{Ty-Eq-S}}
\UnaryInfC{$\Gamma;\cdot\vdash A'= A$}
\DisplayProof\hspace{50pt}\;\\
&\\
\AxiomC{$\Gamma;\cdot\vdash A= A'$}
\AxiomC{$\Gamma;\cdot\vdash A'= A''$}
\RightLabel{\textsf{Ty-Eq-T}}
\BinaryInfC{$\Gamma;\cdot\vdash A= A''$}
\DisplayProof
&\\
&\\
\AxiomC{$\Gamma;\Delta\vdash a:A$}
\RightLabel{\textsf{Tm-Eq-R}}
\UnaryInfC{$\Gamma;\Delta\vdash a= a: A$}
\DisplayProof
&
\AxiomC{$\Gamma;\Delta\vdash a= a':A$}
\RightLabel{\textsf{Tm-Eq-S}}
\UnaryInfC{$\Gamma;\Delta\vdash a'= a: A$}
\DisplayProof
\\
&\\
\AxiomC{$\Gamma;\Delta\vdash a= a':A$}
\AxiomC{$\Gamma;\Delta\vdash a'= a'':A$}
\RightLabel{\textsf{Tm-Eq-T}}
\BinaryInfC{$\Gamma;\Delta\vdash a= a'': A$}
\DisplayProof
&\\
&\\
\AxiomC{$\Gamma;\Delta\vdash a:A$}
\AxiomC{$\vdash \Gamma;\Delta= \Gamma;\Delta'$}
\AxiomC{$\Gamma;\cdot \vdash A= A'$}
\RightLabel{\textsf{Tm-Conv}}
\TrinaryInfC{$\Gamma';\Delta'\vdash a:A'$}
\DisplayProof &
\AxiomC{$\Gamma';\cdot\vdash A\type$}
\AxiomC{$\vdash \Gamma;\cdot= \Gamma';\cdot$}
\RightLabel{\textsf{Ty-Conv}}
\BinaryInfC{$\Gamma';\cdot\vdash A\type$}
\DisplayProof
\end{tabular}\hspace{35pt}\;}}
\caption{\label{fig:ildttstruct2} A few standard rules for judgemental equality, saying that it is an equivalence relation and is compatible with typing.}
\end{figure}
\begin{figure}[!tb]
\centering
\fbox{\resizebox{\linewidth}{!}{
\begin{tabular}{ll}
&\\
\AxiomC{$\Gamma,\Gamma';\Delta\vdash\mathcal{J}$}
\AxiomC{$\Gamma;\cdot\vdash A\type$}
\RightLabel{\textsf{Cart-Weak}}
\BinaryInfC{$\Gamma,x:A,\Gamma';\Delta\vdash \mathcal{J}$}
\DisplayProof
&\\
&\\
&\\
\AxiomC{${\Gamma},x:A,\Gamma';\cdot \vdash B\type$}
\AxiomC{$\Gamma;\cdot \vdash a:A$}
\RightLabel{\textsf{Cart-Ty-Subst}}
\BinaryInfC{${\Gamma},\Gamma'[{a}/x];\cdot \vdash B[{a}/x]\type$}
\DisplayProof\hspace{-5pt}
&
\AxiomC{${\Gamma},x:A,\Gamma';\cdot \vdash B= B'$}
\AxiomC{$\Gamma;\cdot \vdash a:A$}
\RightLabel{\textsf{Cart-Ty-Subst-Eq}}
\BinaryInfC{${\Gamma},\Gamma'[{a}/x];\cdot \vdash B[{a}/x]= B'[{a}/x]$}
\DisplayProof\\
&\\
&\\
\AxiomC{${\Gamma},x:A,\Gamma';\Delta \vdash b:B$}
\AxiomC{$\Gamma;\cdot \vdash a:A$}
\RightLabel{\textsf{Cart-Tm-Subst}}
\BinaryInfC{${\Gamma},\Gamma'[{a}/x];\Delta[{a}/x] \vdash b[{a}/x]:B[{a}/x]$}
\DisplayProof
&
\AxiomC{${\Gamma},x:A,\Gamma';\Delta \vdash b= b':B$}
\AxiomC{$\Gamma;\cdot \vdash a:A$}
\RightLabel{\textsf{Cart-Tm-Subst-Eq}}
\BinaryInfC{${\Gamma},\Gamma'[{a}/x];\Delta \vdash b[{a}/x]= b'[{a}/x]:B[{a}/x]$}
\DisplayProof
\\
&\\
&\\
\AxiomC{$\Gamma;\Delta,x:A,\Delta'\vdash b:B$}
\AxiomC{$\Gamma;\Delta''\vdash a:A$}
\RightLabel{\textsf{Lin-Tm-Subst}}
\BinaryInfC{\mccorrect{$\Gamma;\Delta,\Delta',\Delta''\vdash b[a/x]:B$}}
\DisplayProof
&
\AxiomC{$\Gamma;\Delta,x:A,\Delta'\vdash b= b':B$}
\AxiomC{$\Gamma;\Delta''\vdash a:A$}
\RightLabel{\textsf{Lin-Tm-Subst-Eq}}
\BinaryInfC{\mccorrect{$\Gamma;\Delta,\Delta',\Delta''\vdash b[a/x]= b'[a/x]:B$}}
\DisplayProof\\
&\\
&\\
\AxiomC{$\Gamma;\cdot\vdash a=a':A$}
\AxiomC{$\Gamma,x:A,\Gamma';\Delta\vdash b:B$}
\RightLabel{\mccorrect{\textsf{Cart-Tm-Cong}}}
\BinaryInfC{$\Gamma,\Gamma'[a/x];\Delta[a/x]\vdash b[a/x]=b[a'/x]:B$}
\DisplayProof
&
\AxiomC{$\Gamma;\cdot\vdash a=a':A$}
\AxiomC{$\Gamma,x:A,\Gamma';\Delta\vdash B\;\mathsf{type}$}
\RightLabel{\mccorrect{\textsf{Cart-Ty-Cong}}}
\BinaryInfC{$\Gamma,\Gamma'[a/x];\Delta[a/x]\vdash B[a/x]=B[a'/x]$}
\DisplayProof\\
&\\
&\\
\AxiomC{$\Gamma;\Delta''\vdash a=a':A$}
\AxiomC{$\Gamma;\Delta,x:A,\Delta'\vdash b:B$}
\RightLabel{\mccorrect{\textsf{Lin-Tm-Cong}}}
\BinaryInfC{$\Gamma;\Delta,\Delta',\Delta''\vdash b[a/x]=b[a'/x]:B$}
\DisplayProof&\\
\end{tabular}
}}
\caption{\label{fig:ildttstruct3} Weakening, substitution and \mccorrect{congruence} rules. Here, $\mathcal{J}$ represents a statement of the form $B\type$, $B= B'$, $b:B$, or $b= b':B$, such that all judgements are well-formed. \mccorrect{Note that these imply exchange rules for both linear and cartesian identifiers as well as a contraction rule for cartesian identifiers.}}
\end{figure}

\begin{figure}[!tb]
\centering
\fbox{
\resizebox{\linewidth}{!}{
\begin{tabular}{ll}
&\\
\AxiomC{}
\RightLabel{\textsf{$I$-F}}
\UnaryInfC{$\Gamma;\cdot\vdash I\type$}
\DisplayProof
&\\
&\\
\AxiomC{$\Gamma;\cdot \vdash A\type$}
\AxiomC{$\Gamma;\cdot \vdash B\type$}
\RightLabel{\textsf{$\otimes$-F}}
\BinaryInfC{$\Gamma;\cdot \vdash A\otimes B\type$}
\DisplayProof \hspace{72pt}\;
&
\AxiomC{$\Gamma;\cdot \vdash A\type$}
\AxiomC{$\Gamma;\cdot \vdash B\type$}
\RightLabel{\textsf{$\multimap$-F}}
\BinaryInfC{$\Gamma;\cdot \vdash A\multimap B\type$}
\DisplayProof\hspace{72pt}\;\\
&\\
\AxiomC{$\Gamma,x:A;\cdot\vdash B\type$}
\RightLabel{\textsf{$\Sigma_!^\otimes$-F}}
\UnaryInfC{$\Gamma;\cdot\vdash \Sigma^\otimes_{!(x:A)}B\type$}
\DisplayProof
&
\AxiomC{$\Gamma,x:A;\cdot \vdash B\type$}
\RightLabel{\textsf{$\Pi_!^\multimap$-F}}
\UnaryInfC{$\Gamma;\cdot\vdash\Pi^\multimap_{!(x:A)}B\type$}
\DisplayProof\\
&\\
\AxiomC{}
\RightLabel{\textsf{$\top$-F}}
\UnaryInfC{$\Gamma;\cdot\vdash \top\type$}
\DisplayProof
&
\AxiomC{$\Gamma;\cdot \vdash A\type$}
\AxiomC{$\Gamma;\cdot \vdash B\type$}
\RightLabel{\textsf{$\&$-F}}
\BinaryInfC{$\Gamma;\cdot \vdash A\& B\type$}
\DisplayProof\\
&\\
\AxiomC{}
\RightLabel{\textsf{$0$-F}}
\UnaryInfC{$\Gamma ; \cdot\vdash 0\type$}
\DisplayProof
&
\AxiomC{$\Gamma;\cdot \vdash A\type$}
\AxiomC{$\Gamma;\cdot \vdash B\type$}
\RightLabel{\textsf{$\oplus$-F}}
\BinaryInfC{$\Gamma;\cdot \vdash A\oplus B\type$}
\DisplayProof\\
&\\
\AxiomC{$\Gamma;\cdot\vdash A\type$}
\RightLabel{\textsf{$!$-F}}
\UnaryInfC{$\Gamma;\cdot \vdash !A\type$}
\DisplayProof 
&
\AxiomC{$\Gamma;\cdot \vdash a:A$}
\AxiomC{$\Gamma;\cdot \vdash a':A$}
\RightLabel{\textsf{$\Id_!^\otimes$-F}}
\BinaryInfC{$\Gamma;\cdot \vdash \Id^\otimes_{!A}(a,a')\type$}
\DisplayProof
\end{tabular}
}}
\caption{\label{fig:ildttlogical1} Type formation rules for the various connectives.}
\end{figure}

\begin{figure}[!tb]
\centering
\fbox{\resizebox{\linewidth}{!}{
\begin{tabular}{ll}
&\\
\AxiomC{}
\RightLabel{\textsf{$I$-I}}
\UnaryInfC{$\Gamma;\cdot\vdash *:I$}
\DisplayProof
&
\AxiomC{$\Gamma;\Delta'\vdash t:I$}
\AxiomC{$\Gamma;\Delta\vdash a:A$}
\RightLabel{\textsf{$I$-E}}
\BinaryInfC{$\Gamma;\Delta,\Delta'\vdash \mathsf{let}\;t\;\mathsf{be}\;*\;\mathsf{in}\;a:A$}
\DisplayProof 
 \\
&\\
\AxiomC{$\Gamma;\Delta\vdash a:A$}
\AxiomC{$\Gamma;\Delta'\vdash b:B$}
\RightLabel{\textsf{$\otimes$-I}}
\BinaryInfC{$\Gamma;\Delta,\Delta'\vdash a\otimes b:A\otimes B$}
\DisplayProof
&
\AxiomC{$\Gamma;\Delta\vdash t:A\otimes B$}
\AxiomC{$\Gamma;\Delta',x:A,y:B\vdash c:C$}
\RightLabel{\textsf{$\otimes$-E}}
\BinaryInfC{$\Gamma;\Delta,\Delta'\vdash \mathsf{let}\; t\;\mathsf{be}\;x\otimes y\;\mathsf{in} \; c:C$}
\DisplayProof
\\
&\\
\AxiomC{$\Gamma;\Delta,x:A\vdash b:B$}
\RightLabel{\textsf{$\multimap$-I}}
\UnaryInfC{$\Gamma;\Delta\vdash \lambda_{x:A}b:A\multimap B$}
\DisplayProof
&
\AxiomC{$\Gamma;\Delta\vdash f:A\multimap B$}
\AxiomC{$\Gamma;\Delta'\vdash a:A$}
\RightLabel{\textsf{$\multimap$-E}}
\BinaryInfC{$\Gamma;\Delta,\Delta'\vdash f(a):B$}
\DisplayProof\\
&\\
\AxiomC{$\Gamma;\cdot \vdash a:A$}
\AxiomC{$\Gamma;\Delta \vdash b:B[{a}/x]$}
\RightLabel{\textsf{$\Sigma_!^\otimes$-I}}
\BinaryInfC{$\Gamma ; \Delta \vdash ! {a} \otimes b:\Sigma^\otimes_{!(x:A)}B $}
\DisplayProof
&
\AxiomC{\begin{tabular}{l}
$\Gamma;\cdot \vdash C\type$\\
$\Gamma;\Delta \vdash t:\Sigma^\otimes_{!(x:A)}B$\\
$\Gamma,x:A;\Delta',y:B\vdash c:C$
\end{tabular}}
\RightLabel{\textsf{$\Sigma_!^\otimes$-E}}
\UnaryInfC{$\Gamma;\Delta,\Delta' \vdash \mathsf{let}\;t\;\mathsf{be}\;  !{x} \otimes y \;\mathsf{in}\;c:C$}
\DisplayProof\\
&\\
\AxiomC{$\vdash \Gamma;\Delta\ctxt$}
\AxiomC{$\Gamma,x:A;\Delta\vdash b:B$}
\RightLabel{\textsf{$\Pi_!^\multimap$-I}}
\BinaryInfC{$\Gamma;\Delta\vdash \lambda_{!(x:A)}b:\Pi^\multimap_{!(x:A)}B$}
\DisplayProof
&
\AxiomC{$\Gamma;\cdot \vdash a:A$}
\AxiomC{$\Gamma;\Delta\vdash f:\Pi^\multimap_{!(x:A)}B$}
\RightLabel{\textsf{$\Pi_!^\multimap$-E}}
\BinaryInfC{$\Gamma;\Delta\vdash f(!{a}):B[{a}/x]$}
\DisplayProof\\
&\\
\AxiomC{$\vdash\Gamma;\Delta\ctxt$}
\RightLabel{\textsf{$\top$-I}}
\UnaryInfC{$\Gamma;\Delta\vdash \langle\rangle:\top$}
\DisplayProof &\\
&\\
\AxiomC{$\Gamma;\Delta\vdash a:A$}
\AxiomC{$\Gamma;\Delta\vdash b:B$}
\RightLabel{\textsf{$\&$-I}}
\BinaryInfC{$\Gamma;\Delta\vdash \langle a, b\rangle:A\& B$}
\DisplayProof
&
\begin{tabular}{ll}
\AxiomC{$\Gamma;\Delta\vdash t:A\& B$}
\RightLabel{\textsf{$\&$-E1}}
\UnaryInfC{$\Gamma;\Delta\vdash \mathsf{fst}(t):A$}
\DisplayProof
&
\AxiomC{$\Gamma;\Delta\vdash t:A\& B$}
\RightLabel{\textsf{$\&$-E2}}
\UnaryInfC{$\Gamma;\Delta\vdash \mathsf{snd}(t):B$}
\DisplayProof
\end{tabular}\\
&\\
&\AxiomC{$\Gamma;\Delta\vdash t:0$}
\RightLabel{\textsf{$0$-E}}
\UnaryInfC{$\Gamma;\Delta,\Delta'\vdash \mathsf{false}(t) :B$}
\DisplayProof
\\
&\\
\begin{tabular}{l}
\AxiomC{$\Gamma ;\Delta\vdash a: A$}
\RightLabel{\textsf{$\oplus$-I1}}
\UnaryInfC{$\Gamma;\Delta\vdash \mathsf{inl}(a): A\oplus B$}
\DisplayProof
\\
\\
\AxiomC{$\Gamma ;\Delta\vdash b: B$}
\RightLabel{\textsf{$\oplus$-I2}}
\UnaryInfC{$\Gamma;\Delta\vdash \mathsf{inr}(b): A\oplus B$}
\DisplayProof\end{tabular}
&\AxiomC{$\Gamma ;\Delta,x:A\vdash c: C$}
\AxiomC{$\Gamma ;\Delta,y:B\vdash d: C$}
\AxiomC{$\Gamma ;\Delta'\vdash t:A\oplus B$}
\RightLabel{\textsf{$\oplus$-E}}
\TrinaryInfC{$\Gamma;\Delta,\Delta'\vdash \mathsf{case}\; t\;\mathsf{of}\;\mathsf{inl}(x)\rightarrow c\;||\;\mathsf{inr}(y)\rightarrow d :C$}
\DisplayProof\\
&\\
\AxiomC{$\Gamma;\cdot \vdash a:A$}
\RightLabel{\textsf{$!$-I}}
\UnaryInfC{$\Gamma;\cdot\vdash !a:!A$}
\DisplayProof
&
\AxiomC{$\Gamma;\Delta\vdash t:!A$}
\AxiomC{$\Gamma,x:A;\Delta'\vdash b:B$}
\RightLabel{\textsf{$!$-E}}
\BinaryInfC{$\Gamma;\Delta,\Delta'\vdash \mathsf{let}\; t\;\mathsf{be}\;!x\; \mathsf{in}\; b:B$}
\DisplayProof\\
&\\
\AxiomC{$\Gamma;\cdot \vdash a:A$}
\RightLabel{\textsf{$\Id_!^\otimes$-I}}
\UnaryInfC{$\Gamma;\cdot \vdash \refl{!a}:\Id^\otimes_{!A}(a,a)$}
\DisplayProof
&
\AxiomC{
\begin{tabular}{ll}
$\Gamma ;\cdot \vdash a:A$ &\\
$\Gamma;\cdot \vdash a':A$ & $\Gamma,x:A,x':A;\cdot \vdash D\type$\\
$\Gamma ;\Delta' \vdash p:\Id^\otimes_{!A}(a,a')$ & $\Gamma,z:A;\Delta \vdash d:D[{z}/x,{z}/x']$
\end{tabular}}
\RightLabel{\textsf{$\Id_!^\otimes$-E}}
\UnaryInfC{$\Gamma;\Delta[a/z],\Delta' \vdash \mathsf{let}\; (a,a',p)\;\mathsf{be}\;(z,z,\refl{!z})\;\mathsf{in}\; d:D[{a}/x,{a'}/x']$}
\DisplayProof
\end{tabular}}}
\caption{\label{fig:ildttlogical2} Term introduction and elimination rules for the various connectives.}
\end{figure}

\begin{figure}
\fbox{
\resizebox{\linewidth}{!}{
\begin{tabular}{ll}
$\mathsf{let}\;*\;\mathsf{be}\;*\;\mathsf{in}\;a= a$ & $c[d/z]=\mathsf{let}\;d\;\mathsf{be}\;*\;\mathsf{in}\;c[*/z]$\\
$\mathsf{let}\; a\otimes b \;\mathsf{be}\;x\otimes y\;\mathsf{in} \; c = c[a/x,b/y]$&$c[d/z]\stackrel{\#x,y}{=}\mathsf{let}\; d \;\mathsf{be}\;x\otimes y\;\mathsf{in} \; c[x\otimes y/z]$\\
$(\lambda_{x:A}b)(a)= b[a/x]$& $f\stackrel{\#x}{=}\lambda_{x:A}f(x)$ \\
$\mathsf{let}\; !{a} \otimes b\;\mathsf{be}\; ! {x} \otimes y \;\mathsf{in}\;c= c[{a}/x,b/y]$&$c[d/z]\stackrel{\#x,y}{=} \mathsf{let}\;d\;\mathsf{be}\;  !{x} \otimes y \;\mathsf{in}\; c[!{x} \otimes y/z]$\\
$(\lambda_{!(x:A)}b)(!{a})= b[{a}/x]$& $f\stackrel{\# x}{=} \lambda_{!(x:A)}f(!x)$ \\
 $c=\langle\rangle$ &\\
$\mathsf{fst}(\langle a,b\rangle)= a$&$c=\langle\mathsf{fst}(c),\mathsf{snd}(c) \rangle$\\
$\mathsf{snd}(\langle a,b\rangle)= b$&\\
& $c[d/z]=\mathsf{false}(d)$\\
$\mathsf{case}\; \mathsf{inl}(a)\;\mathsf{of}\;\mathsf{inl}(x)\rightarrow c\;||\;\mathsf{inr}(y)\rightarrow d = c[a/x]$& $c[d/z]\stackrel{\#x,y}{=}\mathsf{case}\; d\;\mathsf{of}\;\mathsf{inl}(x)\rightarrow c[\mathsf{inl}(x)/z]\;||\;\mathsf{inr}(y)\rightarrow c[\mathsf{inr}(y)/z]$\\
$\mathsf{case}\; \mathsf{inr}(b)\;\mathsf{of}\;\mathsf{inl}(x)\rightarrow c\;||\;\mathsf{inr}(y)\rightarrow d = d[b/y]$ & \\
$\mathsf{let}\; !a\;\mathsf{be}\;!x\; \mathsf{in}\; b= b[{a}/x]$ & $c[d/z]\stackrel{\# x}{=}\mathsf{let}\;d\;\mathsf{be}\;!x\; \mathsf{in}\; c[!x/z] $\\
$ \mathsf{let}\; (a,a,\refl{!a})\;\mathsf{be}\;(z,z,\refl{!z})\;\mathsf{in}\; d= d[{a}/z]$ & $c[d/x,d'/y,e/z]\stackrel{\#w}{=}\mathsf{let}\;(d,d',e)\;\mathsf{be}\;(w,w,\refl{!w})\;\mathsf{in}\;c[w/x,w/y,\refl{!w}/z]$\\
\end{tabular}
}
}
\caption{\label{fig:ildtteqns} $\beta$- and $\eta$-equations for the various connectives. These should be read as equations of typed terms in context: we impose them if we can derive that both terms being equated are well-typed of equal type in equal context. We write $\stackrel{\#x_1,\ldots,x_n}{=}$ to indicate that for the equation to hold, the identifiers $x_1,\ldots, x_n$ should, in both terms being equated, be replaced by fresh ones, in order to avoid unwanted identifier bindings.}
\end{figure}

\begin{remark}Note that all type formers that are defined context-wise ($I$, $\otimes$, $\multimap$, $\top$, $\&$, $0$, $\oplus$, and $!$) are automatically preserved under the substitutions from Cart-Ty-Subst (up to canonical isomorphism\footnote{By an isomorphism of types $\Gamma;\cdot\vdash A\type$ and $\Gamma;\cdot\vdash B\type$ in context $\Gamma$, we here mean a pair of terms $\Gamma;x:A\vdash f:B$ and $\Gamma;y:B\vdash g:A$ together with a pair of judgemental equalities $\Gamma;x:A\vdash g[f/y]= x:A$ and $\Gamma;y:B\vdash f[g/x]= y:B$.}), in the sense that $F(A_1,\ldots, A_n)[{a}/x]$ is isomorphic to $F(A_1[{a}/x],\ldots,A_n[{a}/x])$ for an $n$-ary type former $F$. Similarly, for \mccorrect{$T=\Sigma^\otimes$ or $\Pi^\multimap$}, we have that $(T_{!(y:B)}C)[{a}/x]$ is isomorphic to $T_{!(y:B[{a}/x])}C[{a}/x]$ and $(\Id_{!B}(b,b'))[a/x]$ is isomorphic to $\Id_{!B[a/x]}(b[a/x],b'[a/x])$. (This gives us Beck-Chevalley conditions in the categorical semantics.\mccorrect{)}\end{remark}

\begin{remark}
The reader can note that the usual formulation of universes for DTT transfers very naturally to dDILL, giving us a notion of universes for linear types, where terms of the universes without linear assumptions code for types. This allows us to write rules for forming types as rules for forming terms, as usual. We do not choose this approach and define the various type formers in the setting without universes, as this will give a cleaner categorical semantics. \mccorrect{As we shall argue in remark} \ref{rmk:universes}\mccorrect{, it is more natural to consider a universe as a cartesian type.}
\end{remark}
\subsubsection*{Some Basic Results} As the focus of this chapter is the syntax-semantics correspondence, we only briefly mention some syntactic results. For some metatheoretic properties for the $\multimap,\Pi_!^\multimap,\top,\&$-fragment of our syntax, like confluence, Church-Rosser, subject reduction and strong normalisation for the (parallel nested, transitive closure of) $\beta$-reductions, we refer the reader to \cite{cervesato1996linear}. Standard techniques \cite{martin1998intuitionistic} and some small adaptations of the system should be enough to extend the results to all of dDILL. \mccorrect{As we discuss a wide range of non-trivial models in section} \ref{sec:concretemodels}\mccorrect{, consistency of dDILL follows immediately, both in the sense that not all terms are equated and in the sense that not all types are inhabited.}
\begin{theorem}[Consistency] dDILL with all its type formers is consistent.\end{theorem}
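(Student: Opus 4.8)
The plan is to derive both halves of the statement from soundness of the categorical semantics of section~\ref{sec:sem} together with the existence of a single sufficiently non-degenerate model, for which I would take the monoidal families model $\Fam(\Set)$ of section~\ref{sec:dismod}, i.e.\ the families construction applied to the cartesian closed category $(\Set,1,\times)$. By the (sound) interpretation $\sem{-}$ established in section~\ref{sec:sem}, any derivable judgemental equality $\Gamma;\Delta\vdash a=a':A$ is validated in $\Fam(\Set)$, and any derivable closed term judgement $\cdot;\cdot\vdash t:A$ yields a morphism from the monoidal unit $I$ to $\sem{A}$ in the fibre of $\Fam(\Set)$ over the terminal object. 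Since this model supports all of $I,\otimes,\multimap,\Sigma_!^\otimes,\Pi_!^\multimap,\top,\&,0,\oplus,!,\Id_!^\otimes$ — with $!$ the identity comonad, the chosen linear/non-linear adjunction being $\mathrm{id}:\Set\leftrightarrows\Set$, and $\Id_!^\otimes$-types given fibrewise by subsingletons — one model suffices for everything.

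For the first sense, that not all terms are equated, consider $\cdot;\cdot\vdash I\oplus I\type$ together with the closed terms $\cdot;\cdot\vdash\mathsf{inl}(*):I\oplus I$ and $\cdot;\cdot\vdash\mathsf{inr}(*):I\oplus I$ obtained from $\cdot;\cdot\vdash *:I$ by the $\oplus$-I rules. In $\Fam(\Set)$ one has $\sem{I}\cong 1$ and $\sem{I\oplus I}\cong 1+1$, and $\sem{\mathsf{inl}(*)}$, $\sem{\mathsf{inr}(*)}$ name the two distinct elements of $1+1$; hence $\sem{\mathsf{inl}(*)}\neq\sem{\mathsf{inr}(*)}$, so by soundness $\cdot;\cdot\vdash\mathsf{inl}(*)=\mathsf{inr}(*):I\oplus I$ is not derivable. (The coherence space model of section~\ref{sec:depcohsp} or the Scott-domain model of section~\ref{sec:lindepscott} would witness this half equally well.) For the second sense, that not all types are inhabited, consider $\cdot;\cdot\vdash 0\type$. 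A closed term $\cdot;\cdot\vdash t:0$ would denote a morphism $I\to\sem{0}$ in the fibre of $\Fam(\Set)$ over the terminal object, i.e.\ a function $1\to\emptyset$ in $\Set$; as no such function exists, soundness rules out any such $t$.

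The verification is almost entirely bookkeeping, and the one genuinely load-bearing step is already carried out in section~\ref{sec:dismod}: checking that $\Fam(\Set)$ really does model every type former of dDILL. The point I would flag is the choice of model for the uninhabitedness claim: the more computational models (coherence spaces, Scott domains, games) each equip every object with a trivial ``undefined'' element — the empty clique, or $\bot$ — so in them $0$ is spuriously inhabited and they cannot witness that half of consistency; one genuinely needs a ``total'' $\Set$-like model, for which the families model over $(\Set,1,\times)$ is the simplest choice. A purely syntactic route (there is no $0$-introduction rule, together with a normalisation argument for the $\multimap,\Pi_!^\multimap,\top,\&$-fragment in the style of \cite{cervesato1996linear}) is also available, but it is less uniform since it does not directly address the equational half.
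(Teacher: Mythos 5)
Your proposal is correct and matches the paper's own argument, which simply observes that consistency in both senses follows immediately from the existence of the non-trivial models of section~\ref{sec:concretemodels}. Your additional observation --- that the domain, coherence-space and game models spuriously inhabit $0$ via $\bot$, so the uninhabitedness half genuinely requires a $\Set$-like families model --- is a worthwhile refinement the paper leaves implicit.
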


To give the reader some intuition for the novel connectives $\Pi_!^\multimap$- and $\Sigma_!^\otimes$, we suggest the following two interpretations.

\begin{theorem}[$\Pi_!^\multimap$ and $\Sigma_!^\otimes$ as Dependent $!(-)\multimap(-)$ and $!(-)\otimes(-)$] Suppose we have $!$-types. Let $\Gamma,x:A;\cdot \vdash B\type$, where $x$ does not occur freely in $B$. Then, for the purposes of the type theory,
\begin{enumerate} 
\item $\Pi^\multimap_{!(x:A)}B$ is isomorphic to $!A\multimap B$, if we have $\Pi_!^\multimap$-types and $\multimap$-types;
\item $\Sigma^\otimes_{!(x:A)}B$ is isomorphic to $!A\otimes B$, if we have $\Sigma_!^\otimes$-types and $\otimes$-types.
\end{enumerate}
\end{theorem}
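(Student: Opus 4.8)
The plan is to construct explicit terms witnessing the isomorphisms in the sense of the footnote, namely a pair of mutually inverse maps together with the two judgemental equalities establishing that they compose to identities. I would treat the two statements entirely in parallel, since $\Pi^\multimap_{!(x:A)}$ and $\Sigma^\otimes_{!(x:A)}$ are the negative and positive versions of the same phenomenon, and the key observation in both cases is that the premise ``$x$ does not occur freely in $B$'' means $B$ is (the weakening of) a type in context $\Gamma;\cdot$, so $\Pi_!^\multimap$-E and $\Sigma_!^\otimes$-I can be fed a term built by $!$-I without $x$ appearing spuriously.

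For statement (1), I would define the forward map $\Gamma;\cdot\vdash \lambda_{f:\Pi^\multimap_{!(x:A)}B}\lambda_{z:!A}(\mathsf{let}\; z\;\mathsf{be}\;!x\;\mathsf{in}\;f(!x)):\Pi^\multimap_{!(x:A)}B\multimap(!A\multimap B)$, using $!$-E to unpack the thunked argument and then applying $f$ via $\Pi_!^\multimap$-E; here it is crucial that the result type $B$ does not depend on $x$, so that $!$-E is well-formed (its eliminand type $B$ lives in $\Gamma;\cdot$). The backward map is $\Gamma;\cdot\vdash \lambda_{g:!A\multimap B}\lambda_{!(x:A)}(g(!x)):(!A\multimap B)\multimap \Pi^\multimap_{!(x:A)}B$, using $!$-I to form $!x$ and then $\multimap$-E. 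The two round-trip equalities then follow by a direct calculation: one composite reduces using $\multimap$-$\beta$, $\Pi_!^\multimap$-$\beta$ and the $!$-$\beta$ rule $\mathsf{let}\;!a\;\mathsf{be}\;!x\;\mathsf{in}\;b = b[a/x]$, and the converse direction additionally needs the $!$-$\eta$ rule $c[d/z]\stackrel{\#x}{=}\mathsf{let}\;d\;\mathsf{be}\;!x\;\mathsf{in}\;c[!x/z]$ together with the $\multimap$- and $\Pi_!^\multimap$-$\eta$ rules to re-expand the $\lambda$-abstractions.

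For statement (2), dually, the forward map uses $\Sigma_!^\otimes$-E to pattern-match $t:\Sigma^\otimes_{!(x:A)}B$ as $!x\otimes y$ and returns $!x\otimes y:!A\otimes B$ (legitimate because the target $!A\otimes B$ is a type in $\Gamma;\cdot$, as required by the side condition $\Gamma;\cdot\vdash C\type$ in $\Sigma_!^\otimes$-E, and $B$ does not mention $x$), and the backward map uses $\otimes$-E to deconstruct $s:!A\otimes B$ as $w\otimes y$ with $w:!A$, then $!$-E on $w$ to expose $x:A$, and forms $!x\otimes y:\Sigma^\otimes_{!(x:A)}B$ via $\Sigma_!^\otimes$-I. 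The round-trip identities follow from $\Sigma_!^\otimes$-$\beta$, $\otimes$-$\beta$, $!$-$\beta$ in one direction and the corresponding $\eta$-rules (the commuting-conversion style $\eta$-rules for $\otimes$, $!$ and $\Sigma_!^\otimes$ in figure \ref{fig:ildtteqns}) in the other.

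The routine-but-delicate part — and the only place I expect any real friction — is bookkeeping the $\eta$/commuting-conversion equalities for the positive connectives in statement (2): verifying that $\mathsf{let}\;s\;\mathsf{be}\;w\otimes y\;\mathsf{in}\;(\mathsf{let}\;w\;\mathsf{be}\;!x\;\mathsf{in}\;(!x\otimes y)) = s$ requires chaining the $\eta$-rule for $\otimes$, then the $\eta$-rule for $!$ under the binder, and keeping the freshness side conditions ($\#x,y$ and $\#w$) consistent so no unwanted capture occurs; similarly one must check the linear-context splittings $\Delta,\Delta'$ line up, since $\otimes$-E and $\Sigma_!^\otimes$-E both split the linear zone. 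None of this is conceptually hard, but it is the step most likely to contain a slip, so I would write it out carefully rather than leaving it to the reader.
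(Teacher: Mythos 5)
Your proposal is correct and follows essentially the same route as the paper's proof: the same explicit terms (unpack the $!$-thunk with $!$-E and apply via $\Pi_!^\multimap$-E / $\multimap$-E in one direction, repackage with $!$-I and $\Pi_!^\multimap$-I / $\Sigma_!^\otimes$-I in the other), and the same division of labour between the $\beta$-rules for one round-trip and the $\eta$-rules for the other. The only cosmetic difference is that you wrap the witnesses in outer $\lambda$-abstractions where the paper presents them as open terms with a single linear identifier, which is equivalent given the paper's definition of type isomorphism.
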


\begin{proof}\begin{enumerate}
\item We construct terms
$$\Gamma;y:\Pi^\multimap_{!(x:A)}B\vdash f:!A\multimap B\txt{and} \Gamma;y':!A\multimap B\vdash g:\Pi^\multimap_{!(x:A)}B
$$
s.t.
\\
\resizebox{\linewidth}{!}{$\Gamma;y:\Pi^\multimap_{!(x:A)}B\vdash g[f/y']= y:\Pi^\multimap_{!(x:A)}B\txt{and}\Gamma;y':!A\multimap B\vdash f[g/y]= y':!A\multimap B.$}\\
First, we construct $f$.\\
\\
\resizebox{\linewidth}{!}{
\AxiomC{}
\RightLabel{\textsf{Cart-Idf}}
\UnaryInfC{$\Gamma,x:A;\cdot \vdash x:A$}
\AxiomC{}
\RightLabel{\textsf{Lin-Idf}}
\UnaryInfC{$\Gamma,x:A;y:\Pi^\multimap_{!(x:A)}B\vdash y:\Pi^\multimap_{!(x:A)}B$}
\RightLabel{\textsf{$\Pi_!^\multimap$-E}}
\BinaryInfC{$\Gamma,x:A;y:\Pi^\multimap_{!(x:A)}B\vdash y(!x):B$}
\AxiomC{}
\RightLabel{\textsf{Lin-Idf}}
\UnaryInfC{$\Gamma;x':!A\vdash x':!A$}
\RightLabel{\textsf{$!$-E}}
\BinaryInfC{$\Gamma;y:\Pi^\multimap_{!(x:A)}B,x':!A\vdash \mathsf{let}\; x'\;\mathsf{be}\;!x\;\mathsf{in}\;y(!x) :B$}
\RightLabel{\textsf{$\multimap$-I}}
\UnaryInfC{$\Gamma;y:\Pi^\multimap_{!(x:A)}B\vdash f:!A\multimap B$}
\DisplayProof\hspace{50pt}\;}
\quad\\
\\
Then, we construct $g$.\\
\\
\resizebox{\linewidth}{!}{
\AxiomC{}
\RightLabel{\textsf{Cart-Idf}}
\UnaryInfC{$\Gamma,x:A;\cdot \vdash x:A$}
\RightLabel{\textsf{$!$-I}}
\UnaryInfC{$\Gamma,x:A;\cdot \vdash !x:!A$}
\AxiomC{}
\RightLabel{\textsf{Lin-Idf}}
\UnaryInfC{$\Gamma,x:A;y':!A\multimap B\vdash y':!A\multimap B$}
\RightLabel{\textsf{$\multimap$-E}}
\BinaryInfC{$\Gamma,x:A;y':!A\multimap B\vdash y'(!x):B$}
\RightLabel{\textsf{$\Pi_!^\multimap$-I}}
\UnaryInfC{$\Gamma;y':!A\multimap B\vdash g:\Pi^\multimap_{!(x:A)}B$}
\DisplayProof\hspace{200pt}\;}
\quad\\
\\
It is easily verified that $\multimap$-$\beta$, $!$-$\beta$, and $\Pi_!^\multimap$-$\eta$ imply the first judgemental equality:\\
 $g[f/y']= \lambda_{!(x:A)}(\lambda_{x':!A}\mathsf{let}\;x'\;\mathsf{be}\;!x\;\mathsf{in}\;y(!x))(!x)= \lambda_{!(x:A)}\mathsf{let}\;!x\;\mathsf{be}\;!x\;\mathsf{in}\;y(!x)= \lambda_{!(x:A)} y(!x)= y$.\\
\\
Similarly, $\Pi_!^\multimap$-$\beta$, $!$-$\eta$, and $\multimap$-$\eta$ imply the second judgemental equality:\\ $f[g/y]= \lambda_{x':!A}\mathsf{let}\;x'\;\mathsf{be}\;!x\;\mathsf{in}\;(\lambda_{!(x:A)}y'(!x))(!x)= \lambda_{x':!A}\mathsf{let}\;x'\;\mathsf{be}\;!x\;\mathsf{in}\;y'(!x)= \lambda_{x':!A}y'(\mathsf{let}\;x'\;\mathsf{be}\;!x\;\mathsf{in}\;!x) = \lambda_{x':!A}y'(x')= y'$.

\item We construct terms
\\
$$\Gamma;y:\Sigma^\otimes_{!(x:A)}B\vdash f:!A\otimes B\txt{and} \Gamma;y':!A\otimes B\vdash g:\Sigma^\otimes_{!(x:A)}B
$$
s.t.
\\
\resizebox{\linewidth}{!}{$\Gamma;y:\Sigma^\otimes_{!(x:A)}B\vdash g[f/y']= y:\Sigma^\otimes_{!(x:A)}B\txt{and}\Gamma;y':!A\otimes B\vdash f[g/y]= y':!A\otimes B.$}\\
First, we construct $f$.\\
\\
\resizebox{\linewidth}{!}{
\AxiomC{}
\RightLabel{\textsf{Lin-Idf}}
\UnaryInfC{$\Gamma;y:\Sigma^\otimes_{!(x:A)}B\vdash y:\Sigma^\otimes_{!(x:A)}B$}
\AxiomC{}
\RightLabel{\textsf{Lin-Idf}}
\UnaryInfC{$\Gamma;x':!A\vdash x':!A$}
\AxiomC{}
\RightLabel{\textsf{Lin-Idf}}
\UnaryInfC{$\Gamma;z:B\vdash z:B$}
\RightLabel{\textsf{$\otimes$-I}}
\BinaryInfC{$\Gamma;x':!A,z:B\vdash x'\otimes z:!A\otimes B$}
\RightLabel{\textsf{Cart-Weak}}
\UnaryInfC{$\Gamma,x:A;x':!A,z:B\vdash x'\otimes z:!A\otimes B$}
\AxiomC{}
\RightLabel{\textsf{Cart-Idf}}
\UnaryInfC{$\Gamma,x:A;\cdot \vdash x:A$}
\RightLabel{\textsf{$!$-I}}
\UnaryInfC{$\Gamma,x:A;\cdot \vdash !x:!A$}
\RightLabel{\textsf{Lin-Subst}}
\BinaryInfC{$\Gamma,x:A;z:B\vdash !x\otimes z:!A\otimes B$}
\RightLabel{\textsf{$\Sigma_!^\otimes$-E}}
\BinaryInfC{$\Gamma;y:\Sigma^\otimes_{!(x:A)}B\vdash f:!A\otimes B$}
\DisplayProof}\\
\\
Then, we construct $g$.\\
\\
\resizebox{\linewidth}{!}{
\AxiomC{}
\RightLabel{\textsf{Lin-Idf}}
\UnaryInfC{$\Gamma;y':!A\otimes B\vdash y':!A\otimes B$}
\AxiomC{}
\RightLabel{\textsf{Cart-Idf}}
\UnaryInfC{$\Gamma,x:A;\cdot\vdash x:A$}
\AxiomC{}
\RightLabel{\textsf{Lin-Idf}}
\UnaryInfC{$\Gamma,x:A;y:B\vdash y:B$}
\RightLabel{\textsf{$\Sigma_!^\otimes$-I}}
\BinaryInfC{$\Gamma,x:A;y:B\vdash  !{x} \otimes y :\Sigma^\otimes_{!(x:A)}B$}
\AxiomC{}
\RightLabel{\textsf{Lin-Idf}}
\UnaryInfC{$\Gamma;x':!A\vdash x':!A$}
\RightLabel{\textsf{$!$-E}}
\BinaryInfC{$\Gamma;x':!A,y: B\vdash \mathsf{let}\;x'\;\mathsf{be}\;!x\;\mathsf{in}\; ! {x} \otimes y:\Sigma^\otimes_{!(x:A)}B$}
\RightLabel{\textsf{$\otimes$-E}}
\BinaryInfC{$\Gamma;y':!A\otimes B\vdash g:\Sigma^\otimes_{!(x:A)}B$}
\DisplayProof}
\\
\\
Here, the first judgemental equality follows from $\otimes$-$\beta$, $!$-$\beta$, and $\Sigma_!^\otimes$-$\eta$:\\
$g[f/y']=\lbi{(\lbi{y}{!{x} \otimes z}{!x\otimes z})}{x'\otimes y}{(\lbi{x'}{!x}{! {x} \otimes y})}= \lbi{y}{!x \otimes z}{\lbi{!x\otimes z}}{x'\otimes y}{(\lbi{x'}{!x}{ !{x} \otimes y})}=  \lbi{y}{!{x} \otimes z}{(\lbi{x'}{!x}{ !{x} \otimes y})[!x/x'][z/y]} = \lbi{y}{!{x} \otimes z}{(\lbi{!x}{!x}{ !{x} \otimes z})} 
= \lbi{y}{!{x} \otimes z}{ !{x} \otimes z} = y$.

The second judgemental equality follows from $\Sigma_!^\otimes$-$\beta$, $!$-$\eta$, and $\otimes$-$\eta$:\\
$f[g/y]= \lbi{(\lbi{y'}{x'\otimes y}{(\lbi{x'}{!x}{ !{x} \otimes y})})}{ !{x} \otimes z}{!x\otimes z}= \lbi{y'}{x'\otimes y}{\lbi{x'}{!x}{\lbi{ !{x} \otimes y}{ !{x} \otimes z}{!x\otimes z}}}= \lbi{y'}{x'\otimes y}{\lbi{x'}{!x}{(!x\otimes y)}}= \lbi{y'}{x'\otimes y}{(\lbi{x'}{!x}{!x})\otimes y}= \lbi{y'}{x'\otimes y}{x'\otimes y}= y'$.
\end{enumerate}
\end{proof}

In particular, we have the following stronger version of a special case.
\begin{theorem}[$!$ as $\Sigma I$]\label{thm:!fromsigma}
Suppose we have $\Sigma_!^\otimes$- and $I$-types. Let $\Gamma;\cdot \vdash A\type$. Then, $\Sigma^\otimes_{!(x:A)}I$ satisfies the rules for $!A$. Conversely, if we  have $!$- and $I$-types, then $!A$ satisfies the rules for $\Sigma^\otimes_{!(x:A)}I$. 
\end{theorem}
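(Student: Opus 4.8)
The plan is to prove the two halves separately, constructing the relevant introduction/elimination/$\beta$/$\eta$ behaviour of $\Sigma^\otimes_{!(x:A)}I$ out of that of $!A$ (and $I$), and conversely. Since we are in a context $\Gamma;\cdot$ with $\Gamma;\cdot\vdash A\type$, the type $\Sigma^\otimes_{!(x:A)}I$ is well-formed by $\Sigma_!^\otimes$-F applied to $\Gamma,x:A;\cdot\vdash I\type$ (which holds by $I$-F and weakening), so there is nothing to check on the formation side. I would first show that $\Sigma^\otimes_{!(x:A)}I$ satisfies the rules for $!A$. The introduction rule $!$-I is modelled by sending $\Gamma;\cdot\vdash a:A$ to $!a\otimes * : \Sigma^\otimes_{!(x:A)}I$, using $\Sigma_!^\otimes$-I with $b:=*$ (well-typed since $I[a/x]=I$). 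For the elimination rule $!$-E, given $\Gamma;\Delta\vdash t:\Sigma^\otimes_{!(x:A)}I$ and $\Gamma,x:A;\Delta'\vdash b:B$, I would set
$$\mathsf{let}\;t\;\mathsf{be}\;!x\;\mathsf{in}\;b \;:=\; \lbi{t}{!x\otimes y}{\lbi{y}{*}{b}},$$
i.e. use $\Sigma_!^\otimes$-E to decompose $t$ as $!x\otimes y$ with $y:I$, then use $I$-E to discard $y$, so that $\Gamma;\Delta,\Delta'\vdash \mathsf{let}\;t\;\mathsf{be}\;!x\;\mathsf{in}\;b:B$. The $!$-$\beta$ equation then follows from $\Sigma_!^\otimes$-$\beta$ followed by $I$-$\beta$: $\mathsf{let}\;(!a\otimes *)\;\mathsf{be}\;!x\;\mathsf{in}\;b = \lbi{*}{*}{b[a/x]} = b[a/x]$. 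The $!$-$\eta$ (commuting conversion) equation $c[d/z]\stackrel{\#x}{=}\lbi{d}{!x}{c[!x/z]}$ unfolds, via our definition, to an instance that I would derive by combining the $\Sigma_!^\otimes$-$\eta$ and $I$-$\eta$ commuting conversions — this is the step that requires the most care, since one has to check that the nested $\mathsf{let}$-bindings can be re-associated and that the freshness side conditions line up; concretely one rewrites $c[d/z]$ first using $\Sigma_!^\otimes$-$\eta$ applied to the $\Sigma^\otimes_{!(x:A)}I$-typed term $d$, then uses $I$-$\eta$ on the resulting $I$-typed subterm.

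For the converse, assume we have $!$- and $I$-types and $\Gamma;\cdot\vdash A\type$; I claim $!A$ satisfies the rules for $\Sigma^\otimes_{!(x:A)}I$. The formation rule is just $!$-F. For $\Sigma_!^\otimes$-I, given $\Gamma;\cdot\vdash a:A$ and $\Gamma;\Delta\vdash b:I[a/x]=I$, I would define $!a\otimes b := \lbi{b}{*}{!a}$, which has type $!A$ in context $\Gamma;\Delta$ by $I$-E and $!$-I. For $\Sigma_!^\otimes$-E, given $\Gamma;\cdot\vdash C\type$, $\Gamma;\Delta\vdash t:!A$, and $\Gamma,x:A;\Delta',y:I\vdash c:C$, I would define $\mathsf{let}\;t\;\mathsf{be}\;!x\otimes y\;\mathsf{in}\;c := \lbi{t}{!x}{c[*/y]}$, using $!$-E and the fact that, since $y:I$, $I$-$\eta$ lets us replace $y$ by $*$ inside $c$ (one uses $c = \lbi{y}{*}{c[*/y]}$); this yields a term of type $C$ in context $\Gamma;\Delta,\Delta'$, as required. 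The $\Sigma_!^\otimes$-$\beta$ law then reduces to: $\mathsf{let}\;(!a\otimes b)\;\mathsf{be}\;!x\otimes y\;\mathsf{in}\;c = \lbi{(\lbi{b}{*}{!a})}{!x}{c[*/y]}$, which by the $I$-$\eta$/commuting conversion for $I$ equals $\lbi{b}{*}{\lbi{!a}{!x}{c[*/y]}} = \lbi{b}{*}{c[a/x,*/y]}$, and this equals $c[a/x,b/y]$ by $I$-$\eta$ applied in reverse to the $I$-typed term $b$. Finally the $\Sigma_!^\otimes$-$\eta$ commuting conversion follows from the $!$-$\eta$ commuting conversion together with $I$-$\eta$.

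The routine parts here are the straightforward $\beta$-computations, which just chain the relevant $\beta$-rules. The genuine obstacle is checking the $\eta$/uniqueness (commuting conversion) equations in both directions: these involve terms with both cartesian and linear free identifiers being substituted into an arbitrary term $c$, and one must be scrupulous about (i) the freshness annotations $\stackrel{\#x}{=}$ and $\stackrel{\#x,y}{=}$, making sure the bound identifiers introduced by the encodings do not clash, and (ii) the order in which $\Sigma_!^\otimes$-$\eta$ and $I$-$\eta$ are applied, since the encodings nest a $!$-binding (resp. $\Sigma$-binding) around an $I$-binding. I would handle this by peeling off the outer binder's $\eta$-law first and then the inner $I$-$\eta$, relying throughout on the remark in the excerpt that all these context-wise type formers ($I$, $\otimes$, $!$, and hence the isomorphisms relating them) are stable under the cartesian substitutions, which guarantees the Beck--Chevalley-style bookkeeping goes through. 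One subtlety worth flagging is that $\Sigma_!^\otimes$-$\eta$ as stated allows substituting the decomposed term into a term $c$ with a free identifier $z$ of type $\Sigma^\otimes_{!(x:A)}I$, and we must apply it with $c$ chosen so that the resulting equation is exactly the $!$-$\eta$ commuting conversion instance we want — i.e. we instantiate the generic $\eta$-law at the specific eliminator we constructed, rather than trying to prove a more general statement.
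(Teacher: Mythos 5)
Your proposal is correct and follows essentially the same route as the paper: the encodings $!a := {!a}\otimes *$ and $\mathsf{let}\;t\;\mathsf{be}\;{!x}\;\mathsf{in}\;c := \mathsf{let}\;t\;\mathsf{be}\;{!x}\otimes y\;\mathsf{in}\;\mathsf{let}\;y\;\mathsf{be}\;*\;\mathsf{in}\;c$ are exactly the paper's, and the $\beta$- and $\eta$-verifications chain the same rewrites ($\Sigma_!^\otimes$-$\beta$ with $I$-$\beta$, and $\Sigma_!^\otimes$-$\eta$ with $I$-$\eta$). The only difference is that you spell out the converse direction explicitly, which the paper dismisses as a "similarly trivial argument" using $I[a/x]\cong I$; your constructions there are sound.
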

\begin{proof}We obtain the $!$-$\mathsf{I}$ rule as follows.\\
\\
\resizebox{\linewidth}{!}{\AxiomC{$\Gamma;\cdot\vdash a:A$}
\AxiomC{}
\RightLabel{\textsf{$I$-I}}
\UnaryInfC{$\Gamma,x:A; \cdot \vdash *:I$}
\RightLabel{\textsf{$\Sigma_!^\otimes$-I}}
\BinaryInfC{$\Gamma;\cdot\vdash  !{a} \otimes *:\Sigma^\otimes_{!(x:A)}I$}
\DisplayProof\hspace{380pt}\;}
\\
\\
We obtain the $!$-E rule as follows.\\
\\
\resizebox{\linewidth}{!}{\AxiomC{$\Gamma;\Delta\vdash t:\Sigma^\otimes_{!(x:A)}I$}
\AxiomC{$\Gamma,x:A;\Delta'\vdash c:C$}
\AxiomC{}
\RightLabel{\textsf{Lin-Idf}}
\UnaryInfC{$\Gamma;y:I\vdash y:I$}
\RightLabel{\textsf{$I$-E}}
\BinaryInfC{$\Gamma,x:A;\Delta',y:I\vdash \mathsf{let}\; y\;\mathsf{be}\; *\;\mathsf{in}\;c:C$}
\RightLabel{\textsf{$\Sigma_!^\otimes$-E}}
\BinaryInfC{$\Gamma;\Delta,\Delta'\vdash \mathsf{let}\; t\;\mathsf{be}\; !{x} \otimes y\;\mathsf{in}\;\mathsf{let}\; y\;\mathsf{be}\; *\;\mathsf{in}\;c:C$.}
\DisplayProof\hspace{250pt}\;}\\
\\
It is easily seen that $\Sigma_!^\otimes$-$\beta$ and $I$-$\beta$ imply $!$-$\beta$ ($\mathsf{let}\;!{a} \otimes *\;\mathsf{be}\;{!x} \otimes y\;\mathsf{in}\; \mathsf{let}\; y\;\mathsf{be}\; *\;\mathsf{in}\;c=(\mathsf{let}\; y\;\mathsf{be}\; *\;\mathsf{in}\;c)[{a}/x][*/y]= \mathsf{let}\; *\;\mathsf{be}\; *\;\mathsf{in}\;c[{a}/x]  = c[{a}/x]$) and that $I$-$\eta$ and $\Sigma_!^\otimes$-$\eta$ imply $!$-$\eta$ ($\mathsf{let}\;t\;\mathsf{be}\;!{x} \otimes y\;\mathsf{in}\;\mathsf{let}\; y\;\mathsf{be}\; *\;\mathsf{in}\;c[!{x} \otimes */z]= \mathsf{let}\;t\;\mathsf{be}\;!{x} \otimes y\;\mathsf{in}\;c[!{x} \otimes \mathsf{let}\; y\;\mathsf{be}\; *\;\mathsf{in}\;*/z] = \mathsf{let}\;t\;\mathsf{be}\;!{x} \otimes y\;\mathsf{in}\;c[!{x} \otimes y/z]=c[ t/z]$).\\
\\
The converse statement follows through a similarly trivial argument, noting that $I[{a}/x]$ is isomorphic to $I$.
\end{proof}

A second interpretation is that $\Pi_!^\multimap$ and $\Sigma_!^\otimes$ generalise $\&$ and $\oplus$. Indeed, the idea is that that (or their infinitary equivalents) is what they reduce to when taken over discrete types. The subtlety in this result will be the definition of a discrete type. The same phenomenon is observed in a different context in section \ref{sec:dismod}.

For our purposes, a discrete type is a strong sum of $I$ (a sum with a dependent -E-rule). Let us for simplicity limit ourselves to the binary case. For us, the discrete type with two elements will be $2=I\oplus I$, where $\oplus$ has a strong/dependent -E-rule (note that this is not our $\oplus$-E). Explicitly, $2$ is a type with the rules of figure \ref{fig:ildttdiscretetype}.
\begin{figure}[!tb]
\quad\\
\fbox{\resizebox{\linewidth}{!}{
\begin{tabular}{l}
\\
\begin{tabular}{lcr}
\AxiomC{}
\RightLabel{\textsf{$2$-F}}
\UnaryInfC{$\Gamma;\cdot\vdash 2\type$}
\DisplayProof
\hspace{110pt}
&
\AxiomC{}
\RightLabel{\textsf{$2$-I1}}
\UnaryInfC{$\Gamma;\cdot\vdash \mathsf{tt}:2$}
\DisplayProof
\hspace{110pt}
&
\AxiomC{}
\RightLabel{\textsf{$2$-I2}}
\UnaryInfC{$\Gamma;\cdot\vdash \mathsf{ff}:2$}
\DisplayProof
\end{tabular}
\\
\\
\\
\begin{tabular}{c}
\AxiomC{$\Gamma,x:2;\cdot \vdash A\type$}
\AxiomC{$\Gamma;\cdot\vdash t:2$}
\AxiomC{$\Gamma;\Delta[\ttt/x]\vdash a_{\mathsf{tt}}:A[{\mathsf{tt}}/x]$}
\AxiomC{$\Gamma;\Delta[\fff/x]\vdash a_{\mathsf{ff}}:A[{\mathsf{ff}}/x]$}
\RightLabel{\textsf{$2$-E}}
\QuaternaryInfC{$\Gamma;\Delta[t/x]\vdash \mathsf{if}\; t\;\mathsf{then}\; a_{\mathsf{tt}}\;\mathsf{else}\;a_{\mathsf{ff}}:A[{t}/x]$}
\DisplayProof
\end{tabular}
\\
\\
\\
\begin{tabular}{lr}
\AxiomC{$\Gamma;\Delta\vdash \mathsf{if}\; \mathsf{tt}\;\mathsf{then}\; a_{\mathsf{tt}}\;\mathsf{else}\;a_{\mathsf{ff}}:A[\mathsf{tt}/x]$}
\RightLabel{\textsf{$2$-$\beta$1}}
\UnaryInfC{$\Gamma;\Delta\vdash \mathsf{if}\; \mathsf{tt}\;\mathsf{then}\; a_{\mathsf{tt}}\;\mathsf{else}\;a_{\mathsf{ff}}= a_{\mathsf{tt}}:A[\mathsf{tt}/x]$}
\DisplayProof
&\hspace{68pt}
\AxiomC{$\Gamma;\Delta\vdash \mathsf{if}\; \mathsf{ff}\;\mathsf{then}\; a_{\mathsf{tt}}\;\mathsf{else}\;a_{\mathsf{ff}}:A[\mathsf{ff}/x]$}
\RightLabel{\textsf{$2$-$\beta$2}}
\UnaryInfC{$\Gamma;\Delta\vdash \mathsf{if}\; \mathsf{ff}\;\mathsf{then}\; a_{\mathsf{tt}}\;\mathsf{else}\;a_{\mathsf{ff}}= a_{\mathsf{ff}}:A[\mathsf{ff}/x]$}
\DisplayProof\\
&\\

\AxiomC{$\Gamma;\Delta\vdash \mathsf{if}\; t\;\mathsf{then}\; c[\mathsf{tt}/x]\;\mathsf{else}\;c[\mathsf{ff}/x]:C$}
\RightLabel{\textsf{$2$-$\eta$}}
\UnaryInfC{$\Gamma;\Delta\vdash c[t/x]= \mathsf{if}\; t\;\mathsf{then}\;  c[\mathsf{tt}/x]\;\mathsf{else}\;c[\mathsf{ff}/x]:C$}
\DisplayProof
\end{tabular}
\end{tabular}
}}
\caption{\label{fig:ildttdiscretetype} Rules for a discrete type $2$.}
\end{figure}
\begin{theorem}[$\Pi_!^\multimap$ and $\Sigma_!^\otimes$ as Infinitary Non-Discrete $\&$ and $\oplus$]\label{thm:pisigmainf} If we have a discrete type $2$ and a type family $\Gamma,x: 2;\cdot\vdash A$, then
\begin{enumerate}
\item $\Pi^\multimap_{!(x:{2})}A$ satisfies the rules for $A[{\mathsf{tt}}/x]\& A[{\mathsf{ff}}/x]$;
\item $\Sigma^\otimes_{!(x:{2})}A$ satisfies the rules for $A[{\mathsf{tt}}/x]\oplus A[{\mathsf{ff}}/x]$.
\end{enumerate}
\end{theorem}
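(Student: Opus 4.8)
The plan is to show that $\Pi^\multimap_{!(x:2)}A$ together with the evident introduction and elimination terms built from the discrete type $2$ satisfies the formation, introduction, elimination, $\beta$- and $\eta$-rules of $A[\ttt/x]\,\&\,A[\ttt/x]$ (and dually for $\Sigma^\otimes$ and $\oplus$), exactly in the style of the proofs of Theorem \ref{thm:!fromsigma} and its predecessor. Concretely, for part (1) I would first give the pairing term: from $\Gamma;\Delta\vdash a:A[\ttt/x]$ and $\Gamma;\Delta\vdash b:A[\fff/x]$, I form $\Gamma,x:2;\Delta\vdash \mathsf{if}\;x\;\mathsf{then}\;a\;\mathsf{else}\;b:A$ by $2$-$\mathsf{E}$ (here $\Delta[t/x]=\Delta$ since $\Delta$ does not depend on $x$, and $A[\ttt/x],A[\fff/x]$ are the two fibres), and then apply $\Pi_!^\multimap$-$\mathsf{I}$ to obtain a term of $\Pi^\multimap_{!(x:2)}A$; this plays the role of $\&$-$\mathsf{I}$. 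For the two projections I would instantiate $\Pi_!^\multimap$-$\mathsf{E}$ at the closed terms $\ttt,\fff:2$, sending $t:\Pi^\multimap_{!(x:2)}A$ to $t(!\ttt):A[\ttt/x]$ and $t(!\fff):A[\fff/x]$, matching $\&$-$\mathsf{E1}$ and $\&$-$\mathsf{E2}$.

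The $\beta$-rules then follow by a short computation: $(\lambda_{!(x:2)}\mathsf{if}\;x\;\mathsf{then}\;a\;\mathsf{else}\;b)(!\ttt) = (\mathsf{if}\;x\;\mathsf{then}\;a\;\mathsf{else}\;b)[\ttt/x] = \mathsf{if}\;\ttt\;\mathsf{then}\;a\;\mathsf{else}\;b = a$, using $\Pi_!^\multimap$-$\beta$ followed by $2$-$\beta 1$, and symmetrically $2$-$\beta 2$ for the second projection. For the $\eta$-rule $t = \langle \mathsf{fst}(t),\mathsf{snd}(t)\rangle$, i.e. $t = \lambda_{!(x:2)}\mathsf{if}\;x\;\mathsf{then}\;t(!\ttt)\;\mathsf{else}\;t(!\fff)$, I would start from $\Pi_!^\multimap$-$\eta$, which gives $t = \lambda_{!(x:2)}t(!x)$, and then apply $2$-$\eta$ to the subterm $t(!x)$ in context $\Gamma,x:2$ (taking $c[x/x'] := t(!x')$ with $x'$ the eliminated identifier and $C := A$) to rewrite it as $\mathsf{if}\;x\;\mathsf{then}\;t(!\ttt)\;\mathsf{else}\;t(!\fff)$, which is exactly what is needed. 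Part (2) is entirely dual: the $\oplus$-injections are $t\mapsto !\ttt\otimes t$ and $t\mapsto !\fff\otimes t$ (after suitable retyping), $\oplus$-$\mathsf{E}$ is obtained from $\Sigma_!^\otimes$-$\mathsf{E}$ by letting the bound cartesian identifier range over $2$ and then case-splitting it with $2$-$\mathsf{E}$ inside, and the $\beta/\eta$-equations reduce to $\Sigma_!^\otimes$-$\beta$, $\Sigma_!^\otimes$-$\eta$ combined with $2$-$\beta 1$, $2$-$\beta 2$, $2$-$\eta$, in the same manner as the $!$-from-$\Sigma_!^\otimes$ argument of Theorem \ref{thm:!fromsigma}.

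The one genuinely delicate point — and what I expect to be the main obstacle — is bookkeeping around substitution and dependency in the elimination rules: one has to check that $\Delta$ and the target type interact correctly with the substitutions $[\ttt/x]$, $[\fff/x]$, $[t/x]$ demanded by $2$-$\mathsf{E}$, and in particular that the weak (non-dependent) $\oplus$-$\mathsf{E}$ we are trying to reconstruct really does arise as the composite of the strong $2$-$\mathsf{E}$ with $\Sigma_!^\otimes$-$\mathsf{E}$, since the latter has a non-dependent motive $C$ while $2$-$\mathsf{E}$ has a dependent one. Once the correct instantiations are pinned down, each verification is a routine application of exactly one $\beta$- or $\eta$-rule for $\Pi_!^\multimap$/$\Sigma_!^\otimes$ followed by one for $2$, so I would present the terms explicitly in derivation-tree form (as in the excerpt) and leave the equational checks as one-line calculations.
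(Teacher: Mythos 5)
Your proposal is correct and follows essentially the same route as the paper's proof: the pairing/injection terms, the projections via $t(!\ttt)$ and $t(!\fff)$, the composite of $2$-$\mathsf{E}$ with $\Sigma_!^\otimes$-$\mathsf{E}$ for $\oplus$-$\mathsf{E}$, and the one-step $\beta$/$\eta$ calculations are all exactly what the paper does. The "delicate point" you flag is handled in the paper precisely as you anticipate, by letting the linear context $\Delta,y:A$ carry the dependency on $x:2$ in the instance of $2$-$\mathsf{E}$ while the motive $C$ stays non-dependent.
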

\begin{proof}\begin{enumerate}
\item We obtain $\&$-I as follows.\\
\quad\\
\resizebox{\linewidth}{!}{
\AxiomC{$\Gamma,x:2;\Delta\vdash a:A[{\mathsf{tt}}/x]$}
\AxiomC{$\Gamma,x:2;\Delta\vdash b:A[{\mathsf{ff}}/x]$}
\AxiomC{}
\RightLabel{\textsf{Cart-Idf}}
\UnaryInfC{$\Gamma,x:2;\cdot\vdash x:2$}
\AxiomC{}
\RightLabel{\textsf{Assumption}}
\UnaryInfC{$\Gamma,x:2;\cdot \vdash A\type$}
\RightLabel{\textsf{2-E-dep}}
\QuaternaryInfC{$\Gamma,x:2;\Delta\vdash \mathsf{if}\; x\;\mathsf{then}\; a\;\mathsf{else}\; b:A$}
\RightLabel{\textsf{$\Pi_!^\multimap$-I}}
\UnaryInfC{$\Gamma;\Delta\vdash \lambda_{!(x:{2})}\mathsf{if}\; x\;\mathsf{then}\; a\;\mathsf{else}\; b:\Pi^\multimap_{!(x:{2})}A$}
\DisplayProof}\\
\\
Moreover, we obtain $\&$-E1 as follows (similarly, we obtain $\&$-E2).\\
\\
\resizebox{\linewidth}{!}{
\AxiomC{$\Gamma;\Delta\vdash t:\Pi^\multimap_{!(x:2)}A$}
\AxiomC{}
\RightLabel{\textsf{2-I1}}
\UnaryInfC{$\Gamma;\cdot \vdash \mathsf{tt}:2$}
\RightLabel{\textsf{$\Pi_!^\multimap$-E}}
\BinaryInfC{$t(!\mathsf{tt})$}
\DisplayProof\hspace{350pt}\;}\\
\\
The $\&$-$\beta$-rules follow from $\Pi_!^\multimap$-$\beta$ and $2$-$\beta$, e.g.\\ $\mathsf{fst}\langle a,b\rangle := (\lambda_{!(x:{2})}\mathsf{if}\; x\;\mathsf{then}\; a\;\mathsf{else}\; b)(!\mathsf{tt})
= \mathsf{if}\; \mathsf{tt}\;\mathsf{then}\; a\;\mathsf{else}\; b= a$.

The $\&$-$\eta$-rules follow from $\Pi_!^\multimap$-$\eta$ and $2$-$\eta$:\\ $\langle \mathsf{fst}(t),\mathsf{snd}(t)\rangle:= \lambda_{!(x:{2})}\mathsf{if}\; x\;\mathsf{then}\; t(!\mathsf{tt})\;\mathsf{else}\; t(!\mathsf{ff})= \lambda_{!(x:{2})}t(!x)= t$.

\item We obtain $\oplus$-I1 as follows (and similarly, we obtain $\oplus$-I2):\\
\\
\resizebox{\linewidth}{!}{
\AxiomC{}
\RightLabel{\textsf{2-I1}}
\UnaryInfC{$\Gamma;\cdot \vdash \mathsf{tt}:2$}
\AxiomC{$\Gamma;\Delta\vdash a:A[{\mathsf{tt}}/x]$}
\RightLabel{\textsf{$\Sigma_!^\otimes$-I}}
\BinaryInfC{$\Gamma;\Delta\vdash  {!\mathsf{tt}} \otimes a:\Sigma^\otimes_{!(x:{2})}A$}
\DisplayProof\hspace{330pt}\;}
\\
\\
Moreover, we obtain $\oplus$-E as follows.\\
\\
\resizebox{\linewidth}{!}{
\AxiomC{$\Gamma;\Delta'\vdash t:\Sigma^\otimes_{!(x:{2})}A$\hspace{-20pt}\;}
\AxiomC{$\Gamma;\Delta,z:A[\ttt/x]\vdash c:C$}
\AxiomC{$\Gamma;\Delta,w:A[\fff/x]\vdash d:C$}
\AxiomC{}
\RightLabel{\textsf{}}
\UnaryInfC{$\Gamma,x:2;\cdot\vdash x:2$}
\AxiomC{}
\UnaryInfC{$\Gamma,x:2;\cdot\vdash A\type$}
\LeftLabel{\textsf{2-E}}
\QuaternaryInfC{$\Gamma,x:2;\Delta,y:A\vdash \mathsf{if}\;x\;\mathsf{then}\;c[y/z]\;\mathsf{else}\;d[y/w]:C$}
\RightLabel{\textsf{$\Sigma_!^\otimes$-E}}
\BinaryInfC{$\Gamma;\Delta,\Delta'\vdash \mathsf{let}\;t \;\mathsf{be}\; !{x} \otimes y \;\mathsf{in}\; \mathsf{if}\;x\;\mathsf{then}\;c[y/z]\;\mathsf{else}\;d[y/w] :C$}
\DisplayProof}\\
\\
The $\oplus$-$\beta$-rules follow from $\Sigma_!^\otimes$-$\beta$ and 2-$\beta$, e.g.
\begin{align*} 
&\mathsf{case}\; \mathsf{inl}(a)\;\mathsf{of}\;\mathsf{inl}(z)\rightarrow c||\;\mathsf{inr}(w)\rightarrow d:  =\\
&\mathsf{let}\; {!\mathsf{tt}} \otimes a \;\mathsf{be}\; !{x} \otimes y \;\mathsf{in}\; \mathsf{if}\;x\;\mathsf{then}\;c[y/z]\;\mathsf{else}\;d[y/w]
=\\
&\mathsf{if}\;\mathsf{tt}\;\mathsf{then}\;c[a/z]\;\mathsf{else}\;d[a/w]=c[a/z].
\end{align*}

The $\oplus$-$\eta$-rules follow from $\Sigma_!^\otimes$-$\eta$ and 2-$\eta$:
\begin{align*}
&\mathsf{case}\; t\;\mathsf{of}\;\mathsf{inl}(z)\rightarrow c[\mathsf{inl}(z)/u]||\;\mathsf{inr}(w)\rightarrow c[\mathsf{inr}(w)/u]:  =\\
& \mathsf{let}\;t \;\mathsf{be}\; !{x} \otimes y \;\mathsf{in}\; \mathsf{if}\;x\;\mathsf{then}\;c[\mathsf{inl}(z)/u][y/z]\;\mathsf{else}\;c[\mathsf{inr}(w)/u][y/w]
=\\ 
&\mathsf{let}\;t \;\mathsf{be}\;! {x} \otimes y \;\mathsf{in}\; \mathsf{if}\;x\;\mathsf{then}\; c[{!\mathsf{tt}} \otimes z/u][y/z]\;\mathsf{else}\; c[{!\mathsf{ff}} \otimes w/u][y/w]
=\\ 
&\mathsf{let}\;t \;\mathsf{be}\; !{x} \otimes y \;\mathsf{in}\; \mathsf{if}\;x\;\mathsf{then}\; c[{!\mathsf{tt}} \otimes y/u]\;\mathsf{else}\; c[{!\mathsf{ff}} \otimes y/u]
=\\
& \mathsf{let}\;t \;\mathsf{be}\; !{x} \otimes y \;\mathsf{in}\; c[{!(\mathsf{if}\;x\;\mathsf{then}\; \mathsf{tt}\;\mathsf{else}\;\mathsf{ff})} \otimes y/u]
=\\
& \mathsf{let}\;t \;\mathsf{be}\;! {x} \otimes y \;\mathsf{in}\;  [!{x} \otimes y/u]
= c[t/u].
\end{align*}

\end{enumerate}
\end{proof}
We see that we can also view $\Pi_!^\multimap$ and $\Sigma_!^\otimes$ as generalisations of $\&$ and $\oplus$, respectively.

\section{Semantics of dDILL}
\label{sec:sem}
The idea behind the categorical semantics we present for the structural core of our syntax (with $I$- and $\otimes$-types) will be to take our suggested categorical semantics for the structural core of DTT (with $1$- and $\times$-types) and relax the assumption of the cartesian character of its fibres to them only being (possibly non-cartesian) symmetric \mccorrect{monoidal}. This entirely reflects the relation between the conventional semantics of non-dependent cartesian and linear type systems. The structure we obtain is that of a strict indexed symmetric monoidal\footnote{\mccorrect{It is plausible that w}e could obtain a sound and complete semantics for only the structural core, possibly without $I$- and $\otimes$-types, by considering strict indexed symmetric multicategories with comprehension.} category with comprehension.

The $\Sigma_!^\otimes$- and $\Pi_!^\multimap$-types arise as left and right adjoints of substitution functors along projections in the base-category and the $\Id_!^\otimes$-types arise as left adjoints to substitution along diagonals, all satisfying Beck-Chevalley (and Frobenius) conditions, as is the case in the semantics for DTT. The $!$-types boil down to having a left adjoint to the comprehension (which can be made a functor), giving a linear/non-linear adjunction as in the conventional semantics for linear logic. Finally, additive connectives arise as compatible cartesian and distributive cocartesian structures on the fibres, as would be expected from the semantics of linear logic.

\subsection{Models of dDILL (Tautologically)}
First, we translate the structural core of our syntax to the tautological notion of model. We shall later prove this to be equivalent to the more intuitive notion of categorical model we referred to above.
\begin{mccorrection}
\begin{definition}[Model of dDILL] By a model $\widetilde{\mathbb{T}}$ of dDILL, we shall mean the following data.\\
\\
\begin{tabular}{ll}
(\textsf{Contexts}) & A set $\mathsf{CCtxt}$;\\&\\
(\textsf{Types}) & A map $\mathsf{CCtxt}\ra{\mathsf{LType}} \mathsf{Set}$;\\&\\
\parbox{0.20\linewidth}{(\textsf{Terms},\\ \textsf{C-Emp1},\\ \textsf{Lin-C-Ext})}& \parbox{0.75\linewidth}{A map $\Sigma_{\Gamma\in\mathsf{CCtxt}}\mathsf{LCtxt}(\Gamma)\times\mathsf{LType}(\Gamma)\ra{\mathsf{LTerm}}\mathsf{Set}$, where we use the syntactic sugar $\mathsf{LCtxt}(\Gamma)$ for the free monoid on $\mathsf{LType}(\Gamma)$ whose unit and multiplication we shall write $\cdot$ and $-.-$;}\\&\\
(\textsf{C-Emp2})& An element $\cdot\in\mathsf{CCtxt}$;\\&\\
(\textsf{Cart-C-Ext}) & A map $\Sigma_{\Gamma\in\mathsf{CCtxt}}\mathsf{LType}(\Gamma)\ra{-.-}\mathsf{CCtxt}$;\\&\\
\parbox{0.20\linewidth}{(\textsf{Cart-Weak})\\
\\}& \parbox{0.75\linewidth}{Maps $\mathsf{LType}(\Gamma.\Gamma')\ra{\mathsf{weak}}\mathsf{LType}(\Gamma.A.\mathsf{weak}(\Gamma'))$ and $\mathsf{LTerm}(\Gamma.\Gamma',\Delta,B)\ra{\mathsf{weak}}\mathsf{LTerm}(\Gamma.A.\mathsf{weak}(\Gamma'),\mathsf{weak}(\Delta),\mathsf{weak}(B))$ (where we slightly abuse notation);}\\&\\
(\textsf{Cart-Idf})& Elements $\mathsf{der}\in \mathsf{LTerm}(\Gamma.A.\Gamma',\cdot, \mathsf{weak}(A))$;\\&\\
(\textsf{Lin-Idf})& Elements $\mathsf{id}\in \mathsf{LTerm}(\Gamma,A,A)$;\\&\\
\parbox{0.20\linewidth}{(\textsf{Cart-Ty-Subst})\\
}& \parbox{0.75\linewidth}{For $B\in\mathsf{LType}(\Gamma.{A}.\Gamma')$ and $a\in\mathsf{LTerm}(\Gamma,\cdot,A)$, we have  $B\{\Gamma.{a}.\Gamma'\}\in \mathsf{LType}(\Gamma.\Gamma'\{\Gamma.{a}\})$;}\\&\\
\parbox{0.20\linewidth}{(\textsf{Cart-Tm-Subst})\\}& \parbox{0.75\linewidth}{For $b\in\mathsf{LTerm}(\Gamma.{A}.\Gamma',\Delta,B)$ and $a\in\mathsf{LTerm}(\Gamma,\cdot,A)$, we have $b\{\Gamma.{a}.\Gamma'\}\in \mathsf{LTerm}(\Gamma.\Gamma'\{\Gamma.{a}\},\Delta\{\Gamma.{a}\},B\{\Gamma.{a}\})$;}\\&\\
\parbox{0.20\linewidth}{(\textsf{Lin-Tm-Subst})\\} & \parbox{0.75\linewidth}{For $b\in\mathsf{LTerm}(\Gamma,\Delta.A.\Delta',B)$ and $a\in\mathsf{LTerm}(\Gamma,\Delta'',A)$, we have $(\Delta.a.\Delta');b\in\mathsf{LTerm}(\Gamma,\Delta.\Delta'.\Delta'',B)$,}
\end{tabular}\\
\\
such that
\begin{itemize}
\item $\mathsf{weak}$ preserves $\mathsf{id}$, $\mathsf{der}$, $-;-$ and $-\{-\}$ in the obvious sense;
\item $-\{-\}$ commutes with $-;-$ in the obvious sense;
\item $-;-$ is associative;
\item $-;-$-substitutions in disjoint parts of the context commute: if $j<i$ (set $N=0$) or $j>i+m-1$ (set $N=m$) then\\
$(C_1.\ldots .C_{j-1}. a'.C_{j+1}.\ldots.C_{n+m-1});(A_1.\ldots . A_{i-1}. a.A_{i+1}.\ldots.A_n);b\\
=(C_1.\ldots . C_{j-1}. a.C_{j+1}.\ldots.C_{n+m-1});(A_1.\ldots . A_{j+N-1}. a'.A_{j+N+1}.\ldots.A_n);b$;
\item $-\{-\}$ on terms is associative;
\item $-\{-\}$-term substitutions in disjoint parts of the context commute (as for $-;-$ substitutions);
\item $(\Delta.\mathsf{id}.\Delta');b=b$ for all $b\in\mathsf{LTerm}(\Gamma,\Delta.A.\Delta',B)$;
\item the actions on both $\mathsf{LType}$ and $\mathsf{LTerm}$ of $-\{\Gamma.\mathsf{der}.\Gamma'\}$ (``substituting a diagonal'') and $\mathsf{weak}$ (``substituting a projection'') satisfy all equations induced by the theory of cartesian products (see \cite{jacobs1994semantics} for these precise equations).
\end{itemize}
\end{definition}
The equations we demand in this definition are all the standard equations that are implicit for syntactic substitution. The point of these laws is that we can form context morphisms as lists of compatible terms, which we can then substitute into terms (and types) in an associative way, using the operations $-\{-\}$ and $\mathsf{weak}$ in the case of cartesian contexts and using $-;-$ in the case of linear contexts. Note that commutativity of disjoint substitutions and the fact that $\mathsf{weak}$ preserves $-\{-\}$  imply that this parallel substitution is well-defined.

We interpret $\sem{\vdash \mathsf{ctxt}}:=\mathsf{Ctxt}$, $\sem{\Gamma\vdash \mathsf{type}}:=\mathsf{LType}(\Gamma)$ and $\sem{\Gamma;\Delta\vdash A}:=\mathsf{LTerm}(\Gamma,\Delta,A)$. We interpret judgemental equality of contexts, types and terms as the equality on the sets $\mathsf{Ctxt}$, $\mathsf{LType}(\Gamma)$ and $\mathsf{LTerm}(\Gamma,\Delta,A)$. Note that all rules for judgemental equality (the rules with \textsf{Eq}, \textsf{Conv} and \textsf{Cong} in their name) then automatically follow.
\end{mccorrection}
It is tautological that there is a one to one correspondence between theories $\mathbb{T}$ in dDILL and models $\widetilde{\mathbb{T}}$ of this sort.\\
\\
We now define what it means for the model to support various type formers.
\begin{definition}[Semantic $I$- and $\otimes$-types] We say a model $\widetilde{\mathbb{T}}$ supports $I$-types, if for all $\Gamma\in\mathsf{CCtxt}$, we have an $I\in\mathsf{LType}(\Gamma)$ and $*\in\mathsf{LTerm}(\Gamma,\cdot, I)$    and whenever $t\in\mathsf{LTerm}(\Gamma,\Delta,I)$ and $a\in\mathsf{LTerm}(\Gamma,\Delta',A)$, we have $\mathsf{let}\; t\;\mathsf{be}\;*\;\mathsf{in}\; a\in\mathsf{LTerm}(\Gamma,\Delta.\Delta',A)$, such that $\mathsf{let}\; *\;\mathsf{be}\;*\;\mathsf{in}\;a=a$ and $(\Delta'.a);t=\mathsf{let}\;a\;\mathsf{be}\;*\;\mathsf{in}\;((\Delta'.*);t)$.

Similarly, we say it admits $\otimes$-types, if for all $A,B\in\mathsf{LType}(\Gamma)$, we have a $A\otimes B\in\mathsf{LType}(\Gamma)$, for all $a\in \mathsf{LTerm}(\Gamma,\Delta,A), b\in\mathsf{LTerm}(\Gamma,\Delta',B)$, we have $a\otimes b\in\mathsf{LType}(\Gamma,\Delta.\Delta',A\otimes B)$, and if $t\in \mathsf{LTerm}(\Gamma,\Delta,A\otimes B)$ and $c\in\mathsf{LTerm}(\Gamma,\Delta'.A.B,C)$, we have $\mathsf{let}\; t\;\mathsf{be}\; \id_A\otimes \id_B\;\mathsf{in}\;c\in\mathsf{LTerm}(\Gamma,\Delta.\Delta',C)$, such that $\mathsf{let}\;a\otimes b\;\mathsf{be}\;\id_A\otimes \id_B\;\mathsf{in}\;c=c$ and $(\Delta'.d);t=\mathsf{let}\;t\;\mathsf{be}\;\id_A\otimes \id_B\;\mathsf{in}\;(\Delta'.(\id_A\otimes \id_B);t)$.

Note that this defines a function $\mathsf{LCtxt}(\Gamma)\ra{\bigotimes} \mathsf{LType}$. The $\beta$-rule precisely says that from the point of view of the (terms of the) type theory this map is an injection, while the $\eta$-rule says it is a surjection\footnote{The precise statement that we are alluding to here would be that the multicategory of linear contexts is equivalent to the (monoidal) multicategory of linear types. Really, $\bigotimes$ is only part of an equivalence of categories rather than an isomorphism, i.e. it is injective on objects up to isomorphism rather than on the nose.}. We conclude that in the presence of $I$- and $\otimes$-types, we can faithfully describe the type theory without mentioning linear contexts, replacing them by the linear type that is their $\otimes$-product.
\end{definition}
We shall henceforth assume that our type theory has $I$- and $\otimes$-types, as this simplifies the categorical semantics\footnote{To be precise, it allows us to give a categorical semantics in terms of monoidal categories rather than multicategories.} and is appropriate for the examples we are interested in.

For the other type formers, one can give a similar, almost tautological, translation from the syntax into a model. We leave this to the reader when we discuss the semantic equivalent of various type formers in the categorical semantics we present next.

\subsection{Categorical Semantics of dDILL}
\subsubsection*{Strict Indexed Symmetric Monoidal Categories with Comprehension}
We now introduce a notion of categorical model for which soundness and completeness results hold with respect to the syntax of dDILL in the presence of $I$- and $\otimes$-types\footnote{In case we are interested in the case without $I$- and $\otimes$-types, the semantics easily generalises to strict indexed symmetric multicategories with comprehension.}. This notion of model will prove to be particularly useful when thinking about various  type formers.

\begin{definition}By a \emph{strict indexed symmetric monoidal category with comprehension}, we mean the following data.
\begin{enumerate}
\item A category $\Bcat$ with a terminal object $\cdot$.
\item A strict indexed symmetric monoidal category $\Dcat$ over $\Bcat$, i.e. a contravariant functor $\Dcat$ into the category $\mathsf{SMCat}$ of (small) symmetric monoidal categories and \mccorrect{strict} monoidal functors $\Bcat^{op}\ra{\Dcat}\mathsf{SMCat}.$
We  also write $-\{f\}:=\Dcat(f)$ for the action of $\Dcat$ on a morphism $f$ of $\Bcat$.
\item A \emph{comprehension schema}, i.e. for each $\Gamma\in\mathsf{ob}(\Bcat)$ and $A\in\mathsf{ob}(\Dcat(\Gamma))$ a representation for the functor $$x\mapsto\Dcat(\mathsf{dom}(x))(I,A\{x\}):(\Bcat/\Gamma)^{op}\ra{}\Set.$$ We  write its representing object\footnote{Really, $\Gamma.UA\ra{\proj{\Gamma}{UA}}\Gamma$ would be a better notation, where we think of $F\dashv U$ as an adjunction inducing $!$, but it would be very verbose.} $\Gamma.{A}\ra{\mathbf{p}_{\Gamma,{A}}}\Gamma\in\mathsf{ob}(\Bcat/\Gamma)$ and universal element $\mathbf{v}_{\Gamma,{A}}\in\Dcat(\Gamma.{A})(I,A\{\mathbf{p}_{\Gamma,{A}}\})$. We  write $a  \mapsto  \langle f,a\rangle$ for the isomorphism $\Dcat(\Gamma')(I,A\{f\}) \cong  \Bcat/\Gamma(f,\mathbf{p}_{\Gamma,{A}})$.
\end{enumerate}
\end{definition}
Again, the comprehension schema means that the morphisms in our category of contexts $\Bcat$, into a context built by adjoining types, arise as lists of closed linear terms. Here, there is the crucial identification with cartesian terms of linear terms without linear assumptions: they can be freely copied and discarded.

We note that the definition of comprehension for an indexed symmetric monoidal category is almost identical to that of definition \ref{def:comprehension} for an indexed cartesian monoidal category. The only difference is that the tensor unit now plays the r\^ole of the terminal object. We again use the same definitions for $\mathsf{diag}$, $\mathbf{q}$ and the comprehension functors $\proj{\Gamma}{-}$.

\begin{theorem}[Comprehension functor]\label{thm:comprfunc} A comprehension schema $(\mathbf{p},\mathbf{v})$ on a strict indexed symmetric monoidal category $(\Bcat,\Dcat)$ defines a  morphism $\Dcat\ra{U}\Ccat$ of indexed symmetric monoidal categories, which lax-ly sends the monoidal structure of $\Dcat$ to products in $\Ccat$ (where they exist), where $\Ccat$ is the full subindexed\footnote{Here, we use the axiom of choice to make a choice of pullback and make $\Ccat$ really into a (non-strict) indexed category (or cloven fibration). Alternatively, we can avoid the axiom of choice and treat it as a more general fibration.} category of $\Bcat/-$ on the objects of the form $\mathbf{p}_{\Gamma,{A}}$.\end{theorem}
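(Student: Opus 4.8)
The plan is to produce the promised indexed functor $\Dcat \ra{U} \Ccat$ by imitating the construction of the comprehension functor already recorded above (the maps $\proj{\Gamma}{-}$ acting on morphisms of each fibre $\Dcat(\Gamma)$), and then to check that this assignment is in fact functorial, indexed, and lax monoidal in the stated sense. First I would fix notation: $\Ccat$ is the indexed category over $\Bcat$ whose fibre $\Ccat(\Gamma)$ is the full subcategory of $\Bcat/\Gamma$ on the objects $\proj{\Gamma}{A}$ for $A \in \ob(\Dcat(\Gamma))$, with reindexing given by a chosen pullback (invoking choice as the footnote allows) — this makes sense because the comprehension squares (the $\mathbf{p}$-squares in the syntax of definition \ref{def:comprehension}) are pullbacks, so $\proj{\Gamma}{A}\{f\}$ can be taken to be $\proj{\Gamma'}{A\{f\}}$. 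On objects, $U$ sends $A$ to $\proj{\Gamma}{A}$. On morphisms, $U$ sends $c \in \Dcat(\Gamma)(A,B)$, equivalently (by the comprehension schema) an element of $\Dcat(\Gamma.A)(I, B\{\proj{\Gamma}{A}\})$ via $c \mapsto \diagv{\Gamma}{A};c\{\proj{\Gamma}{A}\}$, to the morphism $\proj{\Gamma}{c} := \langle \proj{\Gamma}{A}, \diagv{\Gamma}{A};c\{\proj{\Gamma}{A}\}\rangle : \proj{\Gamma}{A} \ra{} \proj{\Gamma}{B}$ in $\Ccat(\Gamma)$, exactly as displayed before theorem \ref{thm:comprfunc}.

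Next I would verify the algebraic properties, in order: (i) functoriality in each fibre — $\proj{\Gamma}{\id_A} = \id_{\proj{\Gamma}{A}}$ and $\proj{\Gamma}{c;c'} = \proj{\Gamma}{c};\proj{\Gamma}{c'}$ — which is a routine diagram chase using the universal property of $\langle -, - \rangle$ (the bijection $\Dcat(\Gamma')(I, B\{f\}) \cong \Bcat/\Gamma(f, \proj{\Gamma}{B})$) together with the naturality of $\diagv{}{}$; (ii) that $U$ respects reindexing, i.e. for $f : \Gamma' \ra{} \Gamma$ in $\Bcat$ the square relating $\proj{\Gamma}{c}$ and $\proj{\Gamma'}{c\{f\}}$ commutes — here the key input is the Beck--Chevalley-style compatibility of the chosen pullbacks with the universal elements, which is precisely what the comprehension schema guarantees via the $\mathbf{q}_{f,A}$ morphisms; (iii) that $U$ is lax monoidal into products: the tensor unit $I \in \Dcat(\Gamma)$ goes to $\proj{\Gamma}{I}$, and I would exhibit the comparison map $\proj{\Gamma}{A} \times \proj{\Gamma}{B} \ra{} \proj{\Gamma}{A\otimes B}$ (when the product on the left exists in $\Ccat(\Gamma) \subseteq \Bcat/\Gamma$) induced by pairing the two "projection'' terms $\diagv{}{}$-style, checking it is compatible with associators, unitors and the symmetry. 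Points (i) and (ii) are essentially the same calculation already implicit in the earlier discussion of the comprehension functor, just now organised as a single indexed functor.

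The main obstacle I anticipate is point (iii), the lax monoidal structure, together with the bookkeeping of strictness. For (iii) one has to decide what the products in $\Ccat(\Gamma)$ actually are — they need not exist in general, which is why the statement hedges with "(where they exist)'' — and when they do, to produce the canonical comparison morphism $U(A) \times U(B) \ra{} U(A \otimes B)$ from the universal property of $\langle -, -\rangle$ applied to the term $\mathbf{v}_{\Gamma,A}\otimes \mathbf{v}_{\Gamma,B}$ (suitably reindexed along the two projections of the product). Verifying that this comparison is natural and satisfies the coherence laws for a lax monoidal functor is where most of the genuine work lies; it parallels the classical fact that a comprehension category's comprehension functor is lax monoidal / preserves the relevant structure, but the non-cartesian fibres mean one cannot simply invoke that $\otimes$ is a product. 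A secondary, more cosmetic nuisance is keeping track of the fact that $\Dcat$ is \emph{strict} indexed while $\Ccat = \Bcat/-$ restricted is only a cloven (pseudo) fibration, so "morphism of indexed symmetric monoidal categories'' here should be read up to the coherent isomorphisms coming from the chosen pullbacks; I would state this explicitly and then the remaining checks are routine.
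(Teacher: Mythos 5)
Your proposal is correct and follows essentially the same route as the paper: $U$ acts on a morphism $c$ by $c \mapsto \langle \proj{\Gamma}{A}, \diagv{\Gamma}{A};c\{\proj{\Gamma}{A}\}\rangle$, functoriality and compatibility with reindexing come from the uniqueness clause of the representing bijection and from the comprehension squares being pullbacks, and the lax structure map $m_\Gamma^{A,B}$ is exactly the pairing of $\diagv{\Gamma}{A}\{\proj{\Gamma.A}{B\{\proj{\Gamma}{A}\}}\}\otimes \diagv{\Gamma.A}{B\{\proj{\Gamma}{A}\}}$ that you describe. The one point the paper settles that you leave hanging in (iii) is that the product of $\proj{\Gamma}{A}$ and $\proj{\Gamma}{B}$ always exists in $\Bcat/\Gamma$ as the iterated comprehension $\proj{\Gamma.A}{B\{\proj{\Gamma}{A}\}};\proj{\Gamma}{A}$ (again because the $\mathbf{q}_{f,A}$-squares are pullbacks), so the comparison map has a canonical domain and no existence hypothesis is really needed there.
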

\begin{proof}First note that a morphism $U$ of indexed symmetric monoidal categories consists of lax monoidal functors $U_\Gamma$ in each context $\Gamma\in\Bcat$ such that
\begin{diagram}
\Dcat(\Gamma) & \rTo^{U_\Gamma} & \Ccat(\Gamma)\\
\dTo^{\Dcat(f)} &\quad \cong  & \dTo_{\Ccat(f)=\textnormal{``pullback along $f$''}}\\
\Dcat(\Gamma') & \rTo^{U_{\Gamma'}} & \Ccat(\Gamma').
\end{diagram}
We define
\begin{diagram}
U_\Gamma(A\ra{a}B):=\mathbf{p}_{\Gamma,{A}} & \rTo^{\langle\mathbf{p}_{\Gamma,{A}},\mathbf{v}_{\Gamma,{A}};a\{\mathbf{p}_{\Gamma,{A}}\} \rangle} & \mathbf{p}_{\Gamma,{B}}.
\end{diagram}
Functoriality follows from the uniqueness property of \mccorrect{$\langle\mathbf{p}_{\Gamma,A},\mathbf{v}_{\Gamma,A};a\{\mathbf{p}_{\Gamma,A}\}\rangle$}.

We define the lax monoidal structure
\begin{diagram}
\id_\Gamma & \rTo^{m_\Gamma^I} & U_\Gamma(I)=\mathbf{p}_{\Gamma,{I}}\\
\mathbf{p}_{\Gamma.{A},{B\{\proj{\Gamma}{A}\}}};\mathbf{p}_{\Gamma,{A}} =U_\Gamma(A)\times U_\Gamma(B)& \rTo^{m_\Gamma^{A,B}} & U_\Gamma(A\otimes B)=\mathbf{p}_{\Gamma,{A\otimes B}},
\end{diagram}
where $m_\Gamma^{A,B}:=\langle \mathbf{p}_{\Gamma.{A}, B\{\proj{\Gamma}{A}\}};\mathbf{p}_{\Gamma,{A}},\mathbf{v}_{\Gamma,{A}}\{\mathbf{p}_{\Gamma.{A}, B\{\proj{\Gamma}{A}\}}\}\otimes\mathbf{v}_{\Gamma.{A},{B\{\proj{\Gamma}{A}\}}} \rangle$ and $m_\Gamma^I:=\langle \id_\Gamma,\id_{I} \rangle$.

Finally, we verify that $\Ccat(f)U_\Gamma=U_{\Gamma'}\Dcat(f)$. This follows directly from the fact that the following square is a pullback square:
\begin{diagram}
\Gamma'.{A\{f\}} & \rTo^{\mathbf{q}_{f,{A}}} & \Gamma.{A}\\
\dTo^{\mathbf{p}_{\Gamma',{A\{f\}}}} & & \dTo_{\mathbf{p}_{\Gamma,{A}}}\\
\Gamma' & \rTo_f & \Gamma,
\end{diagram}
where $\mathbf{q}_{f,{A}}:=\langle f\mathbf{p}_{\Gamma',{A\{f\}}},\mathbf{v}_{\Gamma',{A\{f\}}}\rangle$. We leave this verification to the reader as an exercise. Alternatively, a proof for this fact in DTT, that will transfer to our setting in its entirety, can be found in \cite{hofmann1997syntax}.
\end{proof}
\begin{remark}
Note that $\Ccat$ is a display map category (or, less specifically, a full comprehension category) and, using the axiom of choice to make a choice of pullbacks, can be viewed as a (non-strict) indexed category with full and faithful comprehension, another, slightly weaker, commonly used notion of model of dependent types. We shall see that, in many ways, we can regard $\Ccat$ as the cartesian content of $\Dcat$.
\end{remark}
\begin{remark}We shall see that this functor will give us a unique candidate for $!$-types: $!:=FU$, where $F\dashv U$. We conclude that, in dDILL, the $!$-modality is uniquely determined by the indexing. This is worth noting, because, in propositional linear type theory, we might have many different candidates for $!$-types.

Moreover, it explains why we do not demand $U$ to be fully faithful in the case of linear types. Indeed, although we have a map $\Dcat(\Gamma)(A,B)\ra{U_\Gamma}\Ccat(\Gamma)(\mathbf{p}_{\Gamma,{A}},\mathbf{p}_{\Gamma,{B}})\cong \Dcat(\Gamma.{A})(I,B\{\mathbf{p}_{\Gamma,{A}}\})$, this is not generally an isomorphism. In fact, in the presence of $!$-types, we shall see that the right hand side is precisely isomorphic to $\Dcat(\Gamma)(!A,B)$ and the map is precomposition with dereliction.
\end{remark}

Next, we prove that we have a sound interpretation of dDILL in such categories.

\begin{theorem}[Soundness] A strict indexed symmetric monoidal category with comprehension $(\Bcat,\Dcat,\mathbf{p},\mathbf{v})$ defines a model $\widetilde{\mathbb{T}}^{(\Bcat,\Dcat,\mathbf{p},\mathbf{v})}$ of dDILL with $I$- and $\otimes$-types.
\end{theorem}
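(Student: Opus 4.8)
The plan is to define the model $\widetilde{\mathbb{T}}^{(\Bcat,\Dcat,\mathbf{p},\mathbf{v})}$ by reading off all of its components from the categorical structure, and then to verify that the required equations hold. Concretely, I would set $\mathsf{CCtxt} := \mathsf{ob}(\Bcat)$, $\mathsf{LType}(\Gamma) := \mathsf{ob}(\Dcat(\Gamma))$, and $\mathsf{LTerm}(\Gamma,\Delta,A) := \Dcat(\Gamma)(\bigotimes\Delta, A)$, where $\bigotimes\Delta$ denotes the $\otimes$-product (in $\Dcat(\Gamma)$, using $I$ for the empty list) of the types appearing in the linear context $\Delta$; since $\Dcat$ takes values in $\SMCat$, this is well-defined up to the coherence isomorphisms, which is exactly the slack the footnote about ``equivalence of categories rather than isomorphism'' anticipates. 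The distinguished empty context is the terminal object $\cdot$ of $\Bcat$; context extension $\Gamma.A$ is the representing object $\mathbf{p}_{\Gamma,A}$; linear context extension is $\otimes$ in the fibre. Identity terms $\mathsf{id}_A$ are identity morphisms in $\Dcat(\Gamma)$, and $\mathsf{der}\in\mathsf{LTerm}(\Gamma.A.\Gamma',\cdot,\mathsf{weak}(A))$ is $\mathbf{v}_{\Gamma,A}$ transported along the projection $\Gamma.A.\Gamma' \to \Gamma.A$. Cartesian substitution $-\{-\}$ is the reindexing functor $\Dcat(f)$ applied along the morphism $\langle\mathbf{p},\ldots\rangle$ built from the term being substituted (exactly the $\mathbf{q}$-maps from the comprehension structure), and linear substitution $-;-$ is composition in $\Dcat(\Gamma)$ precomposed with $\otimes$ of a term.

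Having fixed the interpretation, I would then run through the list of equations demanded in the definition of a model of dDILL and check each one. The preservation of $\mathsf{id}$, $\mathsf{der}$, $-;-$ and $-\{-\}$ by $\mathsf{weak}$ follows from strict functoriality of $\Dcat$ on the projection morphisms; commutativity of $-\{-\}$ with $-;-$ follows because each $\Dcat(f)$ is a (strict) monoidal functor, hence preserves composition and $\otimes$; associativity of $-;-$ is associativity of composition in each fibre; associativity of cartesian substitution is functoriality of $\Dcat$ together with the pullback-lemma for the $\mathbf{q}$-squares; and the unit law $(\Delta.\mathsf{id}.\Delta');b = b$ is the left/right unit law for composition. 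The two genuinely structural conditions are (i) that disjoint substitutions commute — this I would reduce to the pullback property of the $\mathbf{p}$-squares established in Theorem~\ref{thm:comprfunc} and the compatibility of reindexing with $\otimes$ — and (ii) that $-\{\Gamma.\mathsf{der}.\Gamma'\}$ and $\mathsf{weak}$ satisfy all the equations induced by the theory of cartesian products; for the latter I would invoke Theorem~\ref{thm:comprfunc}, which already exhibits the comprehension functor $U$ landing in a display-map (full comprehension) category $\Ccat$, so that the needed equations are precisely the standard ones for comprehension categories and can be cited from \cite{jacobs1994semantics,hofmann1997syntax} verbatim, since that part of the argument is identical to the cartesian case.

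The remaining task is to check that this model supports $I$- and $\otimes$-types in the sense of the semantic definitions given just before the theorem. Here I would interpret $I\in\mathsf{LType}(\Gamma)$ by the monoidal unit of $\Dcat(\Gamma)$ and $A\otimes B$ by the monoidal product; the introduction terms $*$ and $a\otimes b$ are $\id_I$ and (the image under $\otimes$ of) the pair $(a,b)$; and the elimination terms $\mathsf{let}\;t\;\mathsf{be}\;*\;\mathsf{in}\;a$ and $\mathsf{let}\;t\;\mathsf{be}\;\id_A\otimes\id_B\;\mathsf{in}\;c$ are defined by composing $c$ (resp.\ $a$) with $t$ after suitable reassociation, using the left-unit isomorphism $I\otimes X\cong X$ and associativity/symmetry of $\otimes$. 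The $\beta$-rules then reduce to the triangle and pentagon coherence identities plus the interchange law, and the $\eta$-rules to naturality of the coherence isomorphisms; stability of $I$ and $\otimes$ under cartesian substitution is strict monoidality of the reindexing functors. The main obstacle I anticipate is purely bookkeeping: keeping the coherence isomorphisms for $\otimes$ under control when $\bigotimes\Delta$ is used in place of $\Delta$ and threading them correctly through the definitions of the $\mathsf{let}$-eliminators and the two substitution operations, so that the $\beta\eta$-equations come out on the nose. Everything else is a routine transcription of the cartesian (DTT) soundness proof, with ``terminal object'' replaced by ``monoidal unit'' and ``product'' replaced by ``tensor'', exactly as the section's introductory remarks promise.
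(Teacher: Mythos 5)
Your proposal is correct and follows essentially the same route as the paper's own proof: define $\mathsf{CCtxt}$, $\mathsf{LType}$ and $\mathsf{LTerm}(\Gamma,\Delta,A):=\Dcat(\Gamma)(\bigotimes\Delta,A)$ from the indexed category, take $\mathsf{der}$ from $\mathbf{v}$, interpret cartesian substitution by reindexing along $\langle\id_\Gamma,a\rangle$ and linear substitution by fibrewise composition, reduce the diagonal/projection equations to the fact that the comprehension objects give cartesian products in $\Bcat/\Gamma$, and interpret $I$ and $\otimes$ by the fibrewise monoidal structure with the $\mathsf{let}$-eliminators as (pre)compositions with tensors. The paper is equally brief about the coherence bookkeeping for $\bigotimes\Delta$ (it explicitly leaves out associators and unitors), so your anticipated "main obstacle" is handled there in exactly the same informal way.
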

\begin{proof}
We define
\begin{mccorrection}
\begin{enumerate}
\item \textsf{Contexts}: $\mathsf{CCtxt}:=\mathsf{ob}(\Bcat)$
\item \textsf{Types}: $\mathsf{LType}(\Gamma):=\mathsf{ob}(\Dcat(\Gamma))$
\item $\mathsf{LCtxt}(\Gamma):=\mathsf{free-monoid}(\mathsf{LType}(\Gamma))$ (where we  write $[]$ and $++$ for the monoid operations)\\
\textsf{C-Emp1}: $\cdot_{\mathsf{LCtxt}(\Gamma)}:=[]_{\mathsf{LCtxt}(\Gamma)}$\\
\textsf{Lin-C-Ext}: $\Delta._{\mathsf{LCtxt}}A:=\Delta++ A$\\ 
\textsf{Terms}:  $\mathsf{LTerm}(\Gamma,\Delta,A):=\Dcat(\Gamma)(\bigotimes\Delta,A)$
\item \textsf{C-Emp2}: $\cdot_\mathsf{CCtxt}:=\cdot_{\Bcat}$
\item \textsf{Cart-C-Ext}: $\Gamma._\mathsf{CCtxt}{A}:=\Gamma._{\Bcat}{A}$.\\

\item \textsf{Cart-Weak}: The required morphisms are interpreted as follows. Suppose we are given $A,\Gamma'\in\mathsf{ob}(\Dcat(\Gamma))$. We  define a weakening functor
\begin{diagram}
\Dcat(\Gamma.\Gamma') & \rTo^{\Dcat(\langle f,a\rangle)} & \Dcat(\Gamma.{A}.\Gamma'\{\mathbf{p}_{\Gamma,{A}}\}),
\end{diagram}
where $f$ and $a$ are defined as follows.
\begin{diagram}
\Gamma.{A}.\Gamma'\{\mathbf{p}_{\Gamma,{A}}\} & \rTo^{f:=\mathbf{p}_{\Gamma.{A},\Gamma'\{\mathbf{p}_{\Gamma,{A}}\}};\mathbf{p}_{\Gamma,{A}}} & \Gamma
\end{diagram}
and 
\begin{diagram}
I & \rTo^{a=\mathbf{v}_{\Gamma,{A}.\Gamma'\{\mathbf{p}_{\Gamma,{A}}\}}} & \Gamma'\{f\}=\Gamma'\{\mathbf{p}_{\Gamma,{A}.\Gamma'\{\mathbf{p}_{\Gamma,{A}}\}}\} \;\in \Dcat(\Gamma.{A}.\Gamma'\{\mathbf{p}_{\Gamma,{A}}\}).
\end{diagram}
Note that this interpretation of weakening preserves $\der$ (by definition) and $\id$ (as it is a functor) and commutes with the three substitution operations (by functoriality of $-\{\mathbf{p}\}$ and by functoriality of $-\{-\}$ in the second argument).

\item \textsf{Cart-Idf}: $\mathsf{der}_{\Gamma,A,\Gamma'}\in\mathsf{LTerm}(\Gamma.{A}.\Gamma',\cdot,A)$ is defined as $$\mathbf{v}_{\Gamma,{A}}\{\mathbf{p}_{\Gamma.{A},\Gamma'}\}:I\ra{}A\{\mathbf{p}_{\Gamma.{A},\Gamma'};\mathbf{p}_{\Gamma,{A}}\} \in \Dcat(\Gamma.{A}.\Gamma')$$ Note that $\mathsf{der}_{\Gamma,A,\Gamma'}$ defines a morphism\\ $\Gamma.{A}.\Gamma'\ra{\mathsf{diag}_{\Gamma,{A},\Gamma'}}\Gamma.{A}.\Gamma'.{A}\{\mathbf{p}_{\Gamma.{A},\Gamma'};\mathbf{p}_{\Gamma,{A}}\}:=\langle \id_{\Gamma.{A}.\Gamma'},\mathsf{der}_{\Gamma,A,\Gamma'}\rangle $.\\ We shall later show that this in fact behaves as a diagonal morphism on ${A}$.

\item \textsf{Lin-Idf}: $\id_A\in \mathsf{LTerm}(\Gamma,A,A)$ is taken to be $\mathsf{id_A}\in\Dcat(\Gamma)(A,A)$. Note that this is indeed the neutral element for our semantic linear term substitution operation that we shall define shortly.

\item \textsf{Cart-Ty-Subst} and \textsf{Cart-Tm-Subst}: substitution along a term $\Gamma;\cdot \vdash a:A$, are interpreted by the functors $\Dcat(\langle \id_\Gamma,a\rangle)=-\{\langle \id_\Gamma,a\rangle\}$. Indeed, let $B\in\Dcat(\Gamma.{A}.\Gamma')$ and $a\in\Dcat(\Gamma)(I,A)$. Then, we define the context $\Gamma.\Gamma'\{\Gamma.{a}/x\}$  as  $\Gamma.(\Gamma'\{\langle \id_\Gamma,a\rangle\})$ and the type $B\{\Gamma.{a}.\Gamma'\}$ as $B\{\langle f,a'\rangle\}$, where
\begin{diagram}
\Gamma.\Gamma'\{\langle \id_\Gamma,a\rangle\}.{I} & \rTo^{\langle f,a'\rangle} & \Gamma.{A}.\Gamma'
\end{diagram}
is defined from
\begin{diagram}
\Gamma.\Gamma'\{\langle \id_\Gamma,a\rangle\} & \rTo^{\mathbf{p}_{\Gamma,\Gamma'\{\langle \id_\Gamma,a\rangle\}}}  &\Gamma\\
&\rdTo^f &\dTo_{\langle \id_\Gamma,a\rangle}\\
& &  \Gamma.{A}
\end{diagram} 
and
\begin{diagram}
I & \rTo^{a':=\mathbf{v}_{\Gamma,\Gamma'\{\langle \id_\Gamma,a\rangle\}}} & \Gamma'\{f\}=(\Gamma'\{\langle \id_\Gamma,a\rangle\})\{\mathbf{p}_{\Gamma,\Gamma'\{\langle \id_\Gamma,a\rangle\}}\}.
\end{diagram}
\item \textsf{Lin-Tm-Subst}: interpreted by composition in $\Dcat(\Gamma)$. To be precise, given $b\in \Dcat(\Gamma)((\bigotimes\Delta)\otimes A\otimes(\bigotimes \Delta'),B)$ and $a\in \Dcat(\Gamma)(\bigotimes\Delta'',A)$, we define $b[a/x]\in$ $\Dcat(\Gamma)((\bigotimes\Delta)\otimes (\bigotimes \Delta')\otimes(\bigotimes\Delta''),B)$ as $( \id_{\bigotimes\Delta}\otimes a\otimes \id_{\bigotimes\Delta'});\mathsf{braid}_{\bigotimes\Delta',\bigotimes\Delta''};b$.
\end{enumerate}
\end{mccorrection}
Note that \textsf{Cart-Ty-Subst} and \textsf{Cart-Tm-Subst} are interpreted by functors and therefore preserve identities and compositions and are  associative in their composition. \textsf{Lin-Tm-Subst} is interpreted by composition in the fibre categories, hence is also associative.

The fact that \textsf{Cart-Idf} and \textsf{Cart-Weak} define compatible diagonals and projections follows from the fact that $\Gamma.A.B\{\proj{\Gamma}{A}\}\ra{\proj{\Gamma.A}{B\{\proj{\Gamma}{A}\}}}\Gamma.A\ra{\proj{\Gamma}{A}}\Gamma$ defines the cartesian product of $\proj{\Gamma}{A}$ and $\proj{\Gamma}{B}$ in $\Bcat/\Gamma$.

Finally, the model clearly supports $I$- and $\otimes$-types. We interpret $I\in\mathsf{LType}(\Gamma)$ as the unit object in $\Dcat(\Gamma)$ while its term $*$ is interpreted as the identity morphism. Similarly, we interpret $\otimes$ by the monoidal product on the fibres: $*:=\id_I\in\Dcat(\Gamma)$, $\mathsf{let}\; t\;\mathsf{be}\; *\;\mathsf{in} \; a:=t\otimes a$, $a\otimes b$ is defined as the tensor product of morphisms in $\Dcat(\Gamma)$, and $\mathsf{let}\;t\;\mathsf{be}\; \id_A\otimes \id_B\;\mathsf{in}\;c:= (\id_{\Delta'}\otimes t);c$ (\mccorrect{leaving out} associatiators and unitors, here). The $\beta$- and $\eta$-rules are immediate.
\end{proof}
In fact, the converse is also true: we can build a category of this sort from the syntax of dDILL.
\begin{theorem}[Co-Soundness] A model $\widetilde{\mathbb{T}}$ of dDILL with $I$ and $\otimes$-types defines a strict indexed symmetric monoidal category with comprehension $(\Bcat^\mathbb{T},\Dcat^\mathbb{T},\mathbf{p}^\mathbb{T},\mathbf{v}^\mathbb{T})$.\end{theorem}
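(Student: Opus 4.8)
This is the standard ``term model'' construction that is the converse direction of the soundness theorem, so the plan is to mirror the data of a strict indexed symmetric monoidal category with comprehension, reading off each component from the syntax of dDILL quotiented by judgemental equality. First I would define the base category $\Bcat^{\mathbb{T}}$ to have as objects the (well-formed) cartesian contexts $\vdash \Gamma;\cdot\ctxt$ of $\mathbb{T}$, and as morphisms $\Gamma'\ra{}\Gamma$ the context morphisms, i.e. tuples of terms $\Gamma';\cdot\vdash a_i:A_i[a_1/x_1,\ldots,a_{i-1}/x_{i-1}]$ realising $\Gamma=x_1:A_1,\ldots,x_n:A_n$, all taken up to judgemental equality (pointwise). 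Identities are the lists of identifiers (via Cart-Idf) and composition is simultaneous capture-avoiding substitution; associativity and unitality are exactly the substitution equations that a model $\widetilde{\mathbb{T}}$ is required to satisfy (the ``equations induced by the theory of cartesian products'' clause, plus associativity of $-\{-\}$). The empty context $\cdot;\cdot$ is terminal because for each $\Gamma$ there is a unique empty list of terms into it.

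Next I would define the indexed category $\Dcat^{\mathbb{T}}:(\Bcat^{\mathbb{T}})^{op}\ra{}\mathsf{SMCat}$. On an object $\Gamma$, the fibre $\Dcat^{\mathbb{T}}(\Gamma)$ has as objects the linear types $\Gamma;\cdot\vdash A\type$ up to judgemental equality of types, and as morphisms $A\ra{}B$ the linear terms $\Gamma;x:A\vdash b:B$ up to judgemental equality, with identities given by Lin-Idf and composition by Lin-Tm-Subst; associativity and the unit laws are again exactly the equations imposed on $\widetilde{\mathbb{T}}$. The symmetric monoidal structure on each fibre is $(I,\otimes)$ on objects, with the structural isomorphisms (associator, unitors, braiding) built from the $I$- and $\otimes$-term formers and shown to be isomorphisms and to satisfy the coherence (triangle, pentagon, hexagon) axioms by the $\beta$- and $\eta$-equations of figure \ref{fig:ildtteqns}; this is the step where the standard ``linear $\lambda$-calculus presents symmetric monoidal categories'' argument (as in \cite{barber1996dual}) is invoked, now fibrewise and relative to a fixed cartesian context. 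For a context morphism $f=\langle a_1,\ldots,a_n\rangle:\Gamma'\ra{}\Gamma$, the reindexing functor $-\{f\}:=\Dcat^{\mathbb{T}}(f)$ is simultaneous substitution of the $a_i$ into linear types and terms; functoriality in $f$ follows from associativity of substitution, and strict monoidality of $-\{f\}$ follows from the remark that all context-wise type formers (in particular $I$ and $\otimes$) are preserved \emph{on the nose} by Cart-Ty-Subst together with the corresponding term-level equalities.

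Finally I would exhibit the comprehension schema. For $\Gamma\in\Bcat^{\mathbb{T}}$ and $A\in\Dcat^{\mathbb{T}}(\Gamma)$, set $\proj{\Gamma}{A}^{\mathbb{T}}:\Gamma.A\ra{}\Gamma$ to be the evident weakening context morphism out of the extended cartesian context $\Gamma,x:A;\cdot$ (well-formed by Cart-C-Ext), and take the universal element $\mathbf{v}^{\mathbb{T}}_{\Gamma,A}\in\Dcat^{\mathbb{T}}(\Gamma.A)(I,A\{\proj{\Gamma}{A}^{\mathbb{T}}\})$ to be (the equivalence class of) the term $\Gamma,x:A;*:I\vdash x:A$ obtained from Cart-Idf (absorbing the $I$ via $I$-$\eta$). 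One then checks that the map $a\mapsto\langle f,a\rangle$ sending a linear term $\Gamma';\cdot\vdash a:A\{f\}$ to the context morphism $\langle f,a\rangle:\Gamma'\ra{}\Gamma.A$ (append $a$ to the list $f$) is a natural bijection $\Dcat^{\mathbb{T}}(\Gamma')(I,A\{f\})\cong \Bcat^{\mathbb{T}}/\Gamma(f,\proj{\Gamma}{A}^{\mathbb{T}})$: injectivity and surjectivity are immediate from the definition of context morphism into an extended context, and naturality in $f$ is associativity of substitution. I expect the main obstacle (really the only non-bookkeeping point) to be the fibrewise coherence argument: verifying carefully that the equational theory of figure \ref{fig:ildtteqns} for $I$ and $\otimes$ forces the structural maps to satisfy Mac Lane's coherence diagrams and hence that each $\Dcat^{\mathbb{T}}(\Gamma)$ is genuinely symmetric monoidal, and that reindexing is \emph{strictly} monoidal rather than merely strong; everything else is a routine transcription of syntax into the definitional data, exactly dual to the soundness proof just given. (The treatment of the remaining optional type formers is deferred, as in the soundness direction, to the later discussion of the categorical semantics of each connective.)
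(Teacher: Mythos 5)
Your proposal is correct and follows the same term-model strategy as the paper: contexts-with-context-morphisms as the base, substitution as reindexing, and the appending bijection $a\mapsto\langle f,a\rangle$ as the comprehension, with naturality reduced to associativity of substitution. The one place you genuinely diverge is the choice of objects in the fibres: you take $\ob(\Dcat^{\mathbb{T}}(\Gamma))$ to be linear \emph{types} with morphisms $\Gamma;x:A\vdash b:B$, whereas the paper takes linear \emph{contexts} $\Delta$ with morphisms $\mathsf{LTerm}(\Gamma,\Delta,\bigotimes\Delta')$, so that the monoidal product on objects is literally context concatenation. The two fibres are equivalent in the presence of $I$- and $\otimes$-types (this is exactly the remark that $\bigotimes$ is essentially surjective and full and faithful), but the paper's choice makes the object-level monoid strictly associative and unital, so the associators, unitors and braiding come almost for free from the monoid of contexts plus $\otimes$-$\beta$/$\eta$, and the comprehension's universal element is just $\mathsf{der}_A\in\mathsf{LTerm}(\Gamma.A,\cdot,A)$ with no unit to discharge. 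Your version instead has to run the full Mac Lane coherence verification for the structural maps built from $\mathsf{let}$-bindings --- which you rightly flag as the main remaining obstacle --- and your universal element needs the linear hypothesis of type $I$ to be discharged by $I$-E (with $I$-$\eta$ then ensuring this is harmless), not by $I$-$\eta$ alone. Neither point is a gap, but the paper's packaging is the reason its proof can dismiss the coherence step in a line while yours cannot.
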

\begin{proof}
The main technical difficulty in this proof will be that our \mccorrect{syntactic category} has context morphisms as morphisms (corresponding to lists of terms of the type theory) while the type theory only talks about individual terms. This exact difficulty is also encountered when proving completeness of the categories with families semantics for ordinary DTT. It is sometimes fixed by (conservatively) extending the the type theory to also talk about context morphisms explicitly. See e.g. \cite{pitts1995categorical}.

\begin{enumerate}
\item[1.] We define $\mathsf{ob}(\Bcat^\mathbb{T}):=\mathsf{CCtxt}$, modulo $\alpha$-equivalence, and write $\Gamma.{A}$ for the equivalence class of $\Gamma,x:A$. The designated object $\cdot$ of $\Bcat^\mathbb{T}$ will be the (equivalence class of) $\cdot$ (from \textsf{C-Emp}), which will automatically become a terminal object because of our definition of a morphism of $\Bcat^\mathbb{T}$ (context morphism). Indeed, we define morphisms in $\Bcat^\mathbb{T}$, as follows, by induction.

We start out by defining $\Bcat^\mathbb{T}(\Gamma',\cdot):=\{\langle\rangle\}$ and for $\Gamma\in\mathsf{CCtxt}$ that are not of the form $\Gamma''.{A}$, define $\Bcat^\mathbb{T}(\Gamma',\Gamma)=\{\id_\Gamma\}$ if $\Gamma'=\Gamma$ and $\Bcat^\mathbb{T}(\Gamma',\Gamma)=\emptyset$ otherwise.

Then, by induction on the length $n$ of $\Gamma=x_1:A_1,\ldots,x_n:A_n$, we define 
$$\Bcat^\mathbb{T}(\Gamma',\Gamma.{A_{n+1}}):=\Sigma_{f\in\Bcat^\mathbb{T}(\Gamma',\Gamma)}\mathsf{LTerm}(\Gamma',\cdot,A_{n+1}[f/x]),$$
where $A_{n+1}[f/x]$ is defined, using \textsf{Cart-Ty-Subst}, to be the (syntactic operation of) parallel substitution (see \cite{hofmann1997syntax}, section 2.4) of the list $f_1,\ldots, f_n$ of linear terms $\Gamma';\cdot \vdash f_i:A_i[f_1/x_1,\ldots,f_{i-1}/x_{i-1}]$ that $f$ is made up out of, for the identifiers $x_1,\ldots, x_n$ in $\Gamma$.

Note that, in particular, according to \textsf{Cart-Idf}, $\mathsf{LTerm}({A_1}.\ldots .{A_n}.,\cdot,{A_i})$ contains a term $\mathsf{der}_{{A_1}.\ldots.{A_{i-1}},{A_i},{A_{i+1}}.\ldots.{A_n}}$, which allows us to define, inductively,
$$\mathbf{p}_{{A_1}.\ldots.{A_n}}^n:=\langle\rangle\in\Bcat^\mathbb{T}({A_1}.\ldots.{A_n},\cdot)$$
$$\mathbf{p}_{{A_1}.\ldots.{A_n}}^{n-i}:=$$
$$ \mathbf{p}_{{A_1}.\ldots.{A_n}}^{n-i+1},\mathsf{der}_{{A_1}.\ldots.{A_{i-1}},{A_i},{A_{i+1}}.\ldots.{A_n}}\in\Bcat^\mathbb{T}({A_1}.\ldots.{A_n},{A_1}.\ldots.{A_i})$$
In particular, we define identities in $\Bcat^\mathbb{T}$ from these: $\id_{{A_1}.\ldots.{A_n}}:=\mathbf{p}_{{A_1}.\ldots.{A_n}}^0$. We shall also use these \mccorrect{`}projections' in 3. to define  the comprehension schema. In all cases, projections, identities and diagonals defined using $\mathsf{der}$ behave as such via substitutions because we have demanded that the actions of \textsf{Cart-Idf} and \textsf{Cart-Weak} interact via the laws induced from the theory of cartesian products.

We define composition in $\Bcat^\mathbb{T}$ by induction. Let ${B_1}.\ldots.{B_m}=\Gamma'\ra{f=f_1,\ldots,f_n}\Gamma={A_1}.\ldots.{A_n}$ and $\Gamma''\ra{g=g_1,\ldots,g_m}\Gamma'$. Then, we define\mccorrect{, by induction, $g;():=()$ and}\\ $ g;(f_1,\ldots,f_{n-1},f_n):= g;(f_1,\ldots, f_{n-1}),f_n[g/x]$, where $f_n[g/x]$ denotes the parallel substitution of $g=g_1,\dots,g_m$ for the free identifiers $x_1,\ldots,x_m$ in $f_n$, using \textsf{Cart-Tm-Subst}. Note that associativity of composition comes from the associativity of substitution that is implicit in the syntax as well as the compatibility of substitution with weakening while the identity morphism we defined clearly acts as a neutral element for our composition.

\item[2.] Define $\mathsf{ob}(\Dcat^\mathbb{T}(\Gamma)):=\mathsf{LCtxt}(\Gamma)$ and $\Dcat^\mathbb{T}(\Gamma)(\Delta,\Delta'):=\mathsf{LTerm}(\Gamma,\Delta,\bigotimes \Delta')$. Composition is defined through \textsf{Lin-Tm-Subst} and $\otimes$-E. Identities are given by \textsf{Lin-Idf}. The monoidal unit is given by $\cdot\in\mathsf{LCtxt}(\Gamma)$, while the monoidal product $\otimes$ on objects is given by context concatenation. The monoidal product $\otimes$ on morphisms is given by $\otimes$-I. Note that the associators and unitors follow from the associative and unital laws for the commutative monoid of contexts together with $\otimes$-$\beta$ and $\otimes$-$\eta$ \mccorrect{and that the symmetry/braid comes from the commutativity of the monoid}. (Note that the rules for $\otimes$ give us an isomorphism between an arbitrary context $\Delta$ and the one-type-context $\bigotimes \Delta$, while the rules for $I$ do the same for $\cdot$ and $I$.)

We define $\Dcat^\mathbb{T}(f)$ on objects by parallel substitution and weakening in each type in a linear context, via \textsf{Cart-Ty-Subst} and \textsf{Cart-Weak}, and on morphisms by parallel substitution \mccorrect{and weakening}, via \textsf{Cart-Tm-Subst} and \textsf{Cart-Weak}. Note that functoriality is given by implicit properties of the syntax like associativity of substitution. Note that this defines a \mccorrect{strict} symmetric monoidal functor. We conclude that $\Dcat^\mathbb{T}$ is a functor $\Bcat^\mathbb{T}{}^{op}\ra{}\mathsf{SMCat}$.

\item[3.] We define following comprehension schema on $\Dcat^\mathbb{T}$. Suppose $\Gamma\in\Bcat^\mathbb{T}$ and $A\in\Dcat^\mathbb{T}(\Gamma)$.

Define $\Gamma.{A}\ra{\mathbf{p^\mathbb{T}}_{\Gamma,A}}\Gamma$ as $\mathbf{p}_{\Gamma.A}^1$ from 1. and $I\ra{\mathbf{v^\mathbb{T}}_{\Gamma,{A}}} A\{\mathbf{p}_{\Gamma,{A}}^\mathbb{T}\}$ (through \textsf{Cart-Idf}) as\\ $\mathsf{der}_{A}\in\mathsf{LTerm}(\Gamma.{A},\cdot,A)=\Dcat^\mathbb{T}(\Gamma.{A})(I,A\{\mathbf{p^\mathbb{T}}_{\Gamma,{A}}\})$.

Suppose we are given $\Gamma'\ra{f}\Gamma$ and $a\in\Dcat^\mathbb{T}(\Gamma')(I,A\{f\})=\mathsf{LTerm}(\Gamma',\cdot,A[f/c])$. Then, by definition of the morphisms in $\Bcat^\mathbb{T}$, there is a unique morphism\\ $\langle f,a\rangle:= f,a\in\Bcat^\mathbb{T}(\Gamma',\Gamma.{A}):=\Sigma_{f\in\Bcat^\mathbb{T}(\Gamma',\Gamma)}\mathsf{LTerm}(\Gamma',\cdot,A[f/x])$ such that\linebreak $\langle f,a\rangle ;\mathbf{p}^\mathbb{T}_{\Gamma,{A}}=f$ and $\mathbf{v}^\mathbb{T}_{\Gamma,{A}}\{\langle f,a\rangle \}=a$. The uniqueness follows from the fact that $ -;\mathbf{p}^\mathbb{T}_{\Gamma,{A}} $ and $\mathbf{v^\mathbb{T}}_{\Gamma,{A}}\{-\}$ are the two (dependent) projections of the $\Sigma$-type (in Set) that defines this homset. We can note this bijection is natural in the sense that $g;\langle f,a\rangle =\langle g;f,a\{g\}\rangle$ because of the associativity of the substitution \textsf{Cart-Tm-Subst} in the syntax.
\end{enumerate}
\end{proof}

\begin{theorem}[Completeness] The construction described in   \mccorrect{`}Co-Soundness' followed by the one described in \mccorrect{`}Soundness' is the identity (up to categorical equivalence\mccorrect{)}: i.e. strict indexed symmetric monoidal categories with comprehension provide a complete semantics for dDILL with $I$- and $\otimes$-types\footnote{It is easy to see that, similarly, indexed symmetric multicategories with comprehension form a complete semantics for dDILL, possibly without $I$- and $\otimes$-types.}.
\end{theorem}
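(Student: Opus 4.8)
The plan is to lean on the tautological bijection between dDILL theories $\mathbb{T}$ and models $\widetilde{\mathbb{T}}$, and then to identify the round trip $\widetilde{\mathbb{T}}\mapsto(\Bcat^\mathbb{T},\Dcat^\mathbb{T},\mathbf{p}^\mathbb{T},\mathbf{v}^\mathbb{T})\mapsto \widetilde{\mathbb{T}}^{(\Bcat^\mathbb{T},\Dcat^\mathbb{T},\mathbf{p}^\mathbb{T},\mathbf{v}^\mathbb{T})}$ with $\widetilde{\mathbb{T}}$ itself, up to the equivalence of fibres supplied by $I$- and $\otimes$-types. First I would unfold the two constructions: going through Co-Soundness and then Soundness, one reads off that the output model has the same context set $\mathsf{ob}(\Bcat^\mathbb{T})=\mathsf{CCtxt}$, has types $\mathsf{ob}(\Dcat^\mathbb{T}(\Gamma))=\mathsf{LCtxt}(\Gamma)$ (so a type of the round-trip model is a \emph{linear context} of $\widetilde{\mathbb{T}}$), and has hom-sets $\Dcat^\mathbb{T}(\Gamma)(\bigotimes\Theta,\Delta)$, which by definition are the sets $\mathsf{LTerm}(\Gamma,\text{--},\text{--})$ of $\widetilde{\mathbb{T}}$ after folding the list $\Theta$ with context concatenation; the weakening functor, the two cartesian substitutions and linear substitution are transported along the pairing $\langle f,a\rangle$ and the $\mathbf{q}$-maps of $\Bcat^\mathbb{T}$ defined inside the Co-Soundness proof.

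Next I would exhibit the comparison. On contexts it is the identity. On the fibre over $\Gamma$, the maps $A\mapsto [A]$ (singleton list) from $\mathsf{LType}(\Gamma)$ to $\mathsf{LCtxt}(\Gamma)$ and $\Delta\mapsto\bigotimes\Delta$ back are mutually quasi-inverse: the $I$- and $\otimes$-$\beta,\eta$ rules, in the form recorded in the discussion of semantic $I$- and $\otimes$-types, say precisely that $\bigotimes$ is surjective and injective up to canonical isomorphism at the level of the type theory, and that the natural bijections $\mathsf{LTerm}(\Gamma,\Delta,A)\cong\mathsf{LTerm}(\Gamma,[\bigotimes\Delta],A)$ hold. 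One then checks this comparison is monoidal (concatenation $\mapsto\otimes$, empty context $\mapsto I$), natural in $\Gamma$ — i.e. it commutes with $\text{--}\{f\}$, which is exactly where the remark that $I$- and $\otimes$-types are preserved by \textsf{Cart-Ty-Subst} up to canonical iso, together with the induced Beck--Chevalley squares, is used — and, crucially, respects comprehension: by construction the representing object $\mathbf{p}'\colon\Gamma.[A]\to\Gamma$ of the round-trip model is the projection $\Gamma.A\to\Gamma$ and its universal element is $\mathsf{der}_A$, so comprehension is carried across on the nose. Assembling these facts yields an equivalence of strict indexed symmetric monoidal categories with comprehension between the structure generated by $\widetilde{\mathbb{T}}^{(\Bcat^\mathbb{T},\ldots)}$ and that generated by $\widetilde{\mathbb{T}}$, which is the precise content of ``identity up to categorical equivalence''. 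Completeness then follows by instantiating this at the syntactic theory: an equation of dDILL types or terms valid under the interpretation in every strict indexed symmetric monoidal category with comprehension holds in particular in the one built from the syntax via Co-Soundness, but the round-trip identity says the Soundness interpretation sends each type and each term to (a representative of) itself, so the equation is already derivable in $\widetilde{\mathbb{T}}$. Only this direction of the round trip is needed for completeness; the other composite is genuinely not the identity, a general model being far from syntactic.

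I expect the main obstacle to be exactly the bookkeeping flagged inside the Co-Soundness proof: $\Bcat^\mathbb{T}$ has context morphisms — finite lists of compatible terms — as morphisms, with composition, identities, projections and diagonals all defined by induction on context length out of $\mathsf{der}$ and parallel substitution, whereas dDILL syntax manipulates only individual terms. One must verify that the weakening functors $\Dcat(\langle f,a\rangle)$, the substitution functors $\text{--}\{\langle\id_\Gamma,a\rangle\}$ and the pullback $\mathbf{q}$-maps of the Soundness construction, evaluated at the syntactic $(\Bcat^\mathbb{T},\Dcat^\mathbb{T})$, unwind exactly to the syntactic operations \textsf{Cart-Weak}, \textsf{Cart-Ty-Subst}, \textsf{Cart-Tm-Subst} and \textsf{Lin-Tm-Subst} of $\widetilde{\mathbb{T}}$, modulo the $\bigotimes$-equivalence and the coherence isomorphisms of the monoidal fibres. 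This is a lengthy but essentially routine diagram chase of the kind carried out for categories with families in the treatments of pure DTT; linearity contributes nothing essential beyond replacing ``fibrewise terminal object and finite products'' by ``tensor unit and tensor'', which the $I$- and $\otimes$-rules already absorb.
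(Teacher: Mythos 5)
Your proposal is correct and fills in exactly the verification the paper dismisses as ``a trivial exercise'': the round trip replaces each fibre $\mathsf{LType}(\Gamma)$ by $\mathsf{LCtxt}(\Gamma)$, and the $I$- and $\otimes$-rules supply the quasi-inverse $\bigotimes\dashv[-]$ identifying the two, with terms, substitution and comprehension carried across as you describe. Your closing remarks — that only one direction of the round trip is needed for completeness and that the other composite genuinely fails to be the identity — also match the paper's subsequent ``Failure of Co-Completeness'' discussion.
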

\begin{proof}This is a trivial exercise.\end{proof}
\begin{theorem}[Failure of Co-Completeness] The construction described in \mccorrect{`}Soundness' followed by the one described in \mccorrect{`}Co-Soundness' may not be equivalent to the identity: i.e. Co-Completeness can fail (as for the categories with families semantics for DTT). Its fixed-points (up to equivalence) are precisely the models for which the comprehension is democratic.
\end{theorem}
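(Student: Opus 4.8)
The claim has two parts: first, that the round trip "Soundness $\circ$ Co-Soundness" need not return (up to equivalence) the category we started with, and second, that it does so precisely when the comprehension is democratic. I would structure the argument around tracking what happens to the base category $\Bcat$ under the round trip, since the fibres $\Dcat$ are essentially untouched.

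First I would unwind the two constructions composed in this order. Starting from $(\Bcat,\Dcat,\mathbf{p},\mathbf{v})$, the "Co-Soundness" direction is vacuous here — that went model-to-category — so what I actually want is: start from a strict indexed symmetric monoidal category with comprehension, pass through "Soundness" to get the tautological model $\widetilde{\mathbb{T}}^{(\Bcat,\Dcat,\mathbf{p},\mathbf{v})}$, then apply "Co-Soundness" to get a new category $(\Bcat',\Dcat',\mathbf{p}',\mathbf{v}')$. The key observation is that in "Co-Soundness" the new base category $\Bcat'$ is built \emph{synthetically from contexts}: its objects are lists of types $A_1.\cdots.A_n$ (equivalently, iterated comprehensions $\cdot.A_1.\cdots.A_n$), and its morphisms are lists of terms, i.e. built out of the comprehension structure. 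Meanwhile $\mathsf{LType}(\Gamma) = \ob(\Dcat(\Gamma))$ and $\mathsf{LTerm}(\Gamma,\Delta,A) = \Dcat(\Gamma)(\bigotimes\Delta,A)$ are read off faithfully from $\Dcat$, so $\Dcat'$ is equivalent to $\Dcat$ (pulled back along the comparison functor on bases). The upshot: $\Bcat'$ is equivalent to the full subcategory of $\Bcat$ on objects of the form $\cdot.A_1.\cdots.A_n$ — call this the subcategory of \emph{comprehensible contexts} — and the round trip is the identity up to equivalence iff the inclusion of this subcategory into $\Bcat$ is an equivalence.

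Next I would record that every object of $\Bcat'$ of the form $\cdot.A_1.\cdots.A_n$ sits, via the comprehension schema, over $\cdot$ via the composite projection $\proj{}{} $, so it lies in the image of the comprehension functor $\Ccat(\cdot)\to\Bcat/\cdot\cong\Bcat$. Conversely, democracy is exactly the statement (definition after Definition~\ref{def:comprehension}) that this comprehension functor induces an equivalence $\Ccat(\cdot)\cong\Bcat$; combined with the fact that $\Ccat(\cdot)$'s objects are the $\Dcat(\cdot)$-objects and these are recovered as iterated comprehensions over $\cdot$, democracy gives precisely that every object of $\Bcat$ is isomorphic to one of the form $\cdot.A_1.\cdots.A_n$ with morphisms matching. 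Hence: democratic $\Rightarrow$ the comprehensible-contexts subcategory is all of $\Bcat$ up to equivalence $\Rightarrow$ round trip is the identity. For the converse, if the round trip is the identity up to equivalence then $\Bcat\simeq\Bcat'$ and $\Bcat'$ is by construction generated by comprehensions starting from $\cdot$, which forces $\Ccat(\cdot)\to\Bcat$ to be essentially surjective; fullness and faithfulness of this comparison come from the full-and-faithfulness of the comprehension functor built in Theorem~\ref{thm:comprfunc} together with the identification $\Dcat(\cdot)(I,A)\cong\mathsf{LTerm}(\cdot,\cdot,A)$. For the failure part, I would simply exhibit one non-democratic example: e.g.\ take any $(\Bcat,\Dcat,\mathbf{p},\mathbf{v})$ where $\Bcat$ has an object not isomorphic to any iterated comprehension over $\cdot$ — the self-indexed model $\self(\Ccat)$ of a bicartesian closed $\Ccat$ is democratic, so instead one should take something like $\Bcat$ with a "spurious" object adjoined, or note that any model whose base has, say, a nontrivial groupoid of objects not realised by contexts will do.

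The main obstacle I anticipate is bookkeeping rather than conceptual: making precise the sense of "up to equivalence" for the round trip (it is an equivalence of the 2-categorical structure: an equivalence $\Bcat'\simeq\Bcat$ compatible with the indexed categories and with the comprehension data, not just an equivalence of underlying base categories), and checking that the comparison functor $\Bcat'\to\Bcat$ sending $\cdot.A_1.\cdots.A_n$ to the actual iterated comprehension is well-defined on the synthetically-defined morphisms (lists of terms) and is automatically full and faithful by the universal property of comprehension. Once that comparison functor is in hand, the equivalence "round trip $\simeq$ identity $\iff$ the functor is essentially surjective $\iff$ democracy" is essentially formal, exactly mirroring the analogous fact for categories with families versus full split comprehension categories, which is why the parenthetical "(as for the categories with families semantics for DTT)" appears in the statement — I would lean on that analogy and cite \cite{hofmann1997syntax} for the routine parts.
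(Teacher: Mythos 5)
Your proposal is correct and follows essentially the same route as the paper's (much terser) proof: the round trip retains only the comprehension-generated part of the base category, and democracy is by definition exactly the condition that this is everything, so it characterises the fixed points up to equivalence. One tiny discrepancy worth noting: the Co-Soundness construction as written keeps all objects of $\Bcat$ but discards the non-identity morphisms into objects not of the form $\Gamma.A$ or $\cdot$, rather than literally restricting to the full subcategory on comprehensible contexts — but this does not affect the conclusion.
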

\begin{proof}
Indeed, if we start with a strict indexed symmetric monoidal category with comprehension, construct the corresponding model $\widetilde{\mathbb{T}}$ and then construct its syntactic category, we effectively have thrown away all the non-trivial morphisms into objects that are not of the form $\Gamma.{A}$ or $\cdot$. The definition of a democratic comprehension is precisely that every object is of that form.

Of course, we can easily obtain a co-complete model theory by putting this extra restriction on our models. Alternatively -- this may be nicer from a categorical point of view --, we can take the obvious (see e.g. \cite{pitts1995categorical}) conservative extension of our syntax by also talking about context morphisms (corresponding to morphisms in our base category). In that case, we would obtain an actual internal language for strict indexed symmetric monoidal categories with comprehension. This also has the advantage that we can easily obtain an internal language for strict indexed monoidal categories by dropping the axioms Cart-C-Ext, Cart-C-Ext-Eq, Cart-Idf and Cart-Weak, which correspond to the comprehension schema. We have not chosen this route as it would mean that the syntax would not fit as well with what has been considered so far in the syntactic tradition.\end{proof}
\begin{corollary}[Relation to DTT and ILTT] \mccorrect{As we have seen, a} model $(\Bcat,\Dcat,\mathbf{p},\mathbf{v})$ of dDILL with $I$- and $\otimes$-types defines a model $\Ccat$ of DTT,  that should be thought of the cartesian content of the linear type theory. This will become even more clear through our treatment of $!$-types and in the examples we treat.

Moreover, it clearly defines a model of ILTT with $I$- and $\otimes$-types (i.e. a symmetric monoidal category) in every context.

Conversely, it is easily seen that every  model of DTT can be obtained this way (up to equivalence), by noting that it is in particular a model of dDILL and that every model of ILTT can be embedded in a model of dDILL. (As we shall see in section \ref{sec:dismod}, we can cofreely add type dependency on $\Set$.)\end{corollary}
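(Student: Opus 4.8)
The plan is to dispatch the three assertions separately, each by unfolding structure already established. For the first, I would invoke Theorem \ref{thm:comprfunc}: the comprehension schema $(\mathbf{p},\mathbf{v})$ on $(\Bcat,\Dcat)$ yields the full subindexed category $\Ccat$ of $\Bcat/-$ spanned by the display maps $\proj{\Gamma}{A}$, together with the comprehension functor $\Dcat\ra{U}\Ccat$. It then suffices to observe that $\Ccat$ is a display map category — equivalently, after choosing pullbacks (the axiom of choice, as in the remark after Theorem \ref{thm:comprfunc}), a strict indexed category with full and faithful comprehension in the sense of definition \ref{def:comprehension}: $\Bcat$ has terminal object $\cdot$; each fibre has terminal object $\proj{\Gamma}{I}=\id_\Gamma$, stable under reindexing since $I$ is; and the dDILL comprehension schema restricts to one on $\Ccat$, namely $\Gamma.A\ra{\proj{\Gamma}{A}}\Gamma$ again, because $\Ccat(\Gamma')(1,\proj{\Gamma}{A}\{f\})\cong\Dcat(\Gamma')(I,A\{f\})$. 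By the standard categorical semantics of cartesian dependent type theory \cite{jacobs1993comprehension,hofmann1997syntax} this is a model of DTT with $1$-types; the slogan that $\Ccat$ is the ``cartesian content'' of $\Dcat$ becomes precise once $!$-types enter, as $!:=FU$ for the linear/non-linear adjunction $F\dashv U$ induced by $U$.

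For the second assertion there is nothing beyond definitions to check: a strict indexed symmetric monoidal category is, fibrewise, a symmetric monoidal category $\Dcat(\Gamma)$ with unit $I$ and tensor $\otimes$, which by definition \ref{def:llmodel} is exactly a model of multiplicative intuitionistic linear type theory with $I$- and $\otimes$-types.

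For the converse I would first note that a model of DTT — an indexed category $\Ccat$ over $\Bcat$ with full and faithful comprehension — is a fortiori a strict indexed symmetric monoidal category with comprehension, since a cartesian monoidal category is symmetric monoidal with its terminal object as tensor unit; thus it is a model of dDILL with $I=1$ and $\otimes=\times$, and feeding it to the construction of the first part returns the display map category of $\Ccat$, which by fullness and faithfulness of the original comprehension is equivalent to $\Ccat$ (this is the CwF-with-unit versus strict-indexed-category-with-full-and-faithful-comprehension correspondence recorded in the preliminaries, read up to equivalence). For the embedding of an arbitrary model of ILTT, i.e. a symmetric monoidal category $\Dcat_0$: take $\Bcat=\Set$ and let $\Dcat$ send a set $\Gamma$ to the category of $\Gamma$-indexed families of objects of $\Dcat_0$, with morphisms, monoidal structure and reindexing all pointwise; then $\Gamma.(A_x)_{x\in\Gamma}:=\Sigma_{x\in\Gamma}\Dcat_0(I,A_x)$ with projection $(x,a)\mapsto x$ and universal element $(x,a)\mapsto a$ gives a comprehension schema, and the fibre over a one-element set is $\Dcat_0$. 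This is the monoidal-families model treated in detail in section \ref{sec:dismod}, which I would forward-reference rather than reproduce.

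The whole statement is essentially bookkeeping, so I do not expect a single genuinely hard step; the one place that demands care is the clause ``up to equivalence'' in the converse. Passing from $(\Bcat,\Dcat,\mathbf{p},\mathbf{v})$ to $\Ccat$ and back discards every object of $\Bcat$ not of the form $\Gamma.A$ or $\cdot$ and makes an arbitrary choice of pullbacks; reconciling this with a model of DTT presented non-strictly or non-democratically is precisely the phenomenon isolated by the ``Failure of Co-Completeness'' theorem above and by the CwF/indexed-category comparison recalled earlier, and is only settled at the level of equivalence of categories, not on the nose.
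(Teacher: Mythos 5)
Your route is the paper's own: the corollary carries no separate proof and leans on Theorem \ref{thm:comprfunc}, the remark following it, and the forward reference to the monoidal families construction, exactly as you do. But one step in your first paragraph fails. You claim each fibre of $\Ccat$ has terminal object $\proj{\Gamma}{I}=\id_\Gamma$. By construction $\Gamma.{I}$ represents $f\mapsto \Dcat(\mathsf{dom}(f))(I,I\{f\})=\Dcat(\mathsf{dom}(f))(I,I)$, so $\proj{\Gamma}{I}\cong\id_\Gamma$ only when $I$ has a trivial endomorphism monoid in every fibre. This fails in the paper's own models: in $\Fam(\mathsf{Vect}_F)$ one computes $\Dcat(S)(I,I)\cong F^S$, and in the coherence space model $\Dcat(D)(I,I)$ contains both the identity and the empty clique. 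Worse, with only $I$- and $\otimes$-types assumed, $\Ccat(\Gamma)$ need not have a terminal object at all: a terminal object of the full subcategory of $\Bcat/\Gamma$ on the $\proj{\Gamma}{A}$ is precisely some $\proj{\Gamma}{T}$ with $\Dcat(\Gamma')(I,T\{f\})$ always a singleton, i.e.\ a $\top$-type in $\Dcat$ — it is $\proj{\Gamma}{\top}$, not $\proj{\Gamma}{I}$ (the zero space in $\mathsf{Vect}_F$, the empty coherence space in $\Coh$). The same slip infects your representability check, which you phrase as $\Ccat(\Gamma')(1,\proj{\Gamma}{A}\{f\})\cong\Dcat(\Gamma')(I,A\{f\})$ with $1=\id_{\Gamma'}$; the correct statement concerns maps out of $\id_{\Gamma'}$ in $\Bcat/\Gamma'$, and $\id_{\Gamma'}$ need not lie in $\Ccat(\Gamma')$.

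The repair is either to read ``model of DTT'' as the paper's remark does — a display map category, or full comprehension category without unit (note the parenthetical ``where they exist'' in Theorem \ref{thm:comprfunc}) — or to assume $\top$-types in $\Dcat$ and take $\proj{\Gamma}{\top}$ as the fibred terminal object. Everything else is sound and matches the paper: the fibrewise symmetric monoidal observation, the inclusion of DTT models (with $1$- and $\times$-types, so that $I:=1$, $\otimes:=\times$ makes sense) into dDILL models together with the fact that full and faithful comprehension makes the round trip an equivalence on fibres, the $\Fam$ construction over $\Set$ for embedding a bare symmetric monoidal category, and your flagging of the ``up to equivalence''/co-completeness caveat.
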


\subsubsection*{Semantic Type Formers}
Next, we discuss the interpretation of various type formers in models of dDILL.
\begin{theorem}[Semantic type formers]\label{thm:semtype} For the other type formers, we have the following. A model of dDILL with $I$- and $\otimes$-types (a strict indexed symmetric monoidal category with comprehension)...
\begin{enumerate}
\item ...supports $\Sigma_!^\otimes$-types iff all the change of base functors $\Dcat(\mathbf{p}_{\Gamma,{A}})$ have left adjoints $\Sigma^\otimes_{!{A}}$ that satisfy the left Beck-Chevalley condition for $\mathbf{p}$-squares and that satisfy Frobenius reciprocity\footnote{Frobenius reciprocity expresses compatibility of $\Sigma_!^\otimes$ and $\otimes$, which is reasonable if we want a reading of $\Sigma_!^\otimes$ as a generalisation of $\otimes$. If one wants to drop Frobenius reciprocity in the semantics, it is easy to see that the equivalent in the syntax is setting $\Delta'=\cdot$ in the $\Sigma_!^\otimes$-E-rule. Therefore, Frobenius reciprocity automatically follows if we have $\multimap$-types.} in the sense that the canonical morphism $$\Sigma^\otimes_{!{A}}(\Delta'\{\mathbf{p}_{\Gamma,{A}}\}\otimes B)\ra{} \Delta'\otimes \Sigma^\otimes_{!{A}}B$$ is an isomorphism , for all $\Delta'\in\Dcat(\Gamma)$, $B\in\Dcat(\Gamma.{A})$ .
\item ...supports $\Pi_!^\multimap$-types iff all the change of base functors $\Dcat(\mathbf{p}_{\Gamma,{A}})$ have right adjoints $\Pi^\multimap_{!{A}}$ that satisfy the right Beck-Chevalley condition for $\mathbf{p}$-squares.
\item ...supports $\multimap$-types iff $\Dcat$ factors over the category $\mathsf{SMCCat}$\mccorrect{ of symmetric monoidal categories and (strict) symmetric monoidal functors.}
\item ...supports $\top$-types and $\&$-types iff $\Dcat$ factors over the category \mccorrect{$\mathsf{SMcCat}$} of cartesian categories with a symmetric monoidal structure and their \mccorrect{(strict)} homomorphisms.
\item ...supports $0$-types and $\oplus$-types iff $\Dcat$ factors over the category $\mathsf{dSMcCCat}$ of cocartesian categories with a distributive\footnote{Note that in the light of theorem \ref{thm:pisigmainf}, the demand of distributivity here is essentially the same phenomenon as the demand of Frobenius reciprocity for $\Sigma_!^\otimes$-types.} symmetric monoidal structure and their \mccorrect{(strict)} homomorphisms.
\item ...that supports $\multimap$-types\footnote{Actually, we only need this for the \mccorrect{`}if'. The \mccorrect{`}only if' always holds. To make the \mccorrect{`}if' work, as well, in absence of $\multimap$-types, we have to restrict $!$-E to the case where $\Delta'=\cdot$. Alternatively, we could note that the semantic condition that precisely corresponds to having $!$-types (even in absence of $\multimap$-types) is to have a natural isomorphism $\Dcat(\Gamma.A)(\Delta\{\proj{\Gamma}{A}\},B\{\proj{\Gamma}{A}\})\cong \Dcat(\Gamma)(!A\otimes \Delta,B)$ (which we immediately recognise as a specific case of $\Sigma_!^\otimes$-types).}, supports $!$-types iff all the comprehension functors $\Dcat(\Gamma)\ra{U_\Gamma}\Ccat(\Gamma)$ have a strong monoidal left adjoint $\Ccat(\Gamma)\ra{F_\Gamma}\Dcat(\Gamma)$ in the 2-category $\mathsf{SMCat}$ of symmetric monoidal categories, lax symmetric monoidal functors, and monoidal natural transformations\footnote{i.e. a symmetric lax monoidal left adjoint functor $F_\Gamma$ such that an inverse for its lax structure is given by the oplax structure on $F_\Gamma$ coming from the lax structure on $U_\Gamma$. Put differently, $F_\Gamma$ is a left adjoint functor to $U_\Gamma$ and is a strong monoidal functor in a way that is compatible with the lax structure on $U_\Gamma$.} and (compatibility with substitution) for all $\Gamma'\ra{f}\Gamma\in \Bcat$ we have that $F_\Gamma ;\Dcat(f)= \Ccat(f);F_{\Gamma'}$ (which makes $F_-$ into a morphism of indexed categories). Then the linear exponential comonad $!_\Gamma:=U_\Gamma;F_\Gamma:\Dcat(\Gamma)\ra{}\Dcat(\Gamma)$ will be our interpretation of the comodality $!$ in the context $\Gamma$.
\item ... supports $\Id_!^\otimes$-types iff for all $A\in\mathsf{ob}\;\Dcat(\Gamma)$, we have left adjoints $\Id^\otimes_{!A}\dashv -\{\mathsf{diag}_{\Gamma,{A}}\}$ that satisfy the left Beck-Chevalley condition for $\mathsf{diag}$-squares and Frobenius reciprocity in the sense that the canonical morphisms
$$\Id^\otimes_{!A}(B)\ra{}\Id^\otimes_{!A}(I)\otimes B\{\proj{\Gamma.A}{A\{\proj{\Gamma}{A}\}}\}$$
are isomorphisms.
\end{enumerate}\end{theorem}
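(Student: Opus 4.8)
The plan is to prove Theorem \ref{thm:semtype} clause by clause, in each case translating the syntactic rules of the corresponding type former (figures \ref{fig:ildttlogical1}, \ref{fig:ildttlogical2}, \ref{fig:ildtteqns}) into the claimed categorical structure, exactly as one does in the well-established semantics of DTT (following \cite{jacobs1993comprehension,hofmann1997syntax}) and of linear logic (following \cite{barber1996dual,benton1995mixed,bierman1994intuitionistic}). The overall strategy is: given a model in the sense of a strict indexed symmetric monoidal category with comprehension $(\Bcat,\Dcat,\mathbf{p},\mathbf{v})$, I would show each categorical condition is equivalent to the existence of the corresponding type former (in the tautological sense of the ``Model of dDILL'' definition) via the soundness/co-soundness correspondence already established. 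Concretely, for each clause I would: (i) assume the categorical structure and interpret the -F, -I, -E rules, checking the $\beta$- and $\eta$-equations hold; (ii) conversely, assume the type former in the syntax and read off the adjunction/closed/(co)cartesian structure together with its Beck--Chevalley and Frobenius conditions from the (co)universal properties of the -I/-E rules.

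For the Lawvere-style clauses (1, 2, 7 for $\Sigma_!^\otimes$, $\Pi_!^\multimap$, $\Id_!^\otimes$), I would follow the standard recipe: the -I rule gives the unit (or counit) of the adjunction, the -E rule gives the universal property, $\beta$ gives one triangle identity / the computation rule, $\eta$ gives the other, and the stability of the type formers under Cart-Ty-Subst noted in the remark after figure \ref{fig:ildtteqns} yields the Beck--Chevalley conditions for $\mathbf{p}$-squares (resp.\ $\mathsf{diag}$-squares). Frobenius reciprocity for $\Sigma_!^\otimes$ and $\Id_!^\otimes$ corresponds exactly to allowing the extra linear context $\Delta'$ in the -E rule, as the footnotes indicate; I would make the canonical comparison map explicit and use $\otimes$-$\beta,\eta$ to check it is invertible. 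For clauses 3--5 (the $\multimap$, $\top\&$, $0\oplus$ fibrewise connectives), these are essentially the non-dependent facts that the linear $\lambda$-calculus is the internal language of symmetric monoidal closed categories and that additive connectives are fibrewise finite (co)products distributing over $\otimes$, applied in each fibre $\Dcat(\Gamma)$; here the only thing to check beyond the propositional case is naturality in $\Gamma$, i.e.\ that the structure is strictly preserved by the change-of-base functors $\Dcat(f)$ — which is automatic since these are already (strict) monoidal functors, so ``factoring over'' the relevant subcategory of $\mathsf{SMCat}$ is exactly the right formulation.

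Clause 6 ($!$-types) is the one I expect to be the main obstacle, and the one deserving the most care. The claim is that, in the presence of $\multimap$-types, $!$-types correspond to a strong monoidal left adjoint $F_\Gamma$ to each comprehension functor $U_\Gamma$ (from Theorem \ref{thm:comprfunc}), compatible with substitution, yielding the linear exponential comonad $!_\Gamma := U_\Gamma F_\Gamma$. Theorem \ref{thm:!fromsigma} already shows $!A$ behaves like $\Sigma^\otimes_{!(x:A)}I$, so one direction reduces to clause 1 specialised to $\Delta' = \cdot$ and $B = I$; the subtlety is that this specialisation requires exactly the restriction of $!$-E to empty linear $\Delta'$, which is why $\multimap$-types (or equivalently the full $\Sigma_!^\otimes$-style isomorphism in the footnote) are needed to recover the general $!$-E rule. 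Going the other way, I would verify that $F_\Gamma$ being strong monoidal in $\mathsf{SMCat}$ — with its oplax structure inverse to the lax structure transported from $U_\Gamma$ — is precisely what makes $!_\Gamma$ a \emph{linear exponential} comonad (a commutative comonoid comonad compatible with $\otimes$, c.f. \cite{benton1995mixed,bierman1994intuitionistic}), so that the comonoid structure $!A \to I$, $!A \to {!A}\otimes{!A}$ interprets weakening and contraction on cartesian assumptions and dereliction $!A \to A$ interprets Cart-Idf; the substitution-compatibility of $F_-$ is what makes the Seely isomorphisms and the whole structure stable under Cart-Ty-Subst. The routine but lengthy bookkeeping — checking all coherence diagrams for the (op)lax structures and that the induced comonoid laws match the syntactic equations for $!$-$\beta$, $!$-$\eta$ and the structural rules — is what I would delegate to references rather than spell out, citing \cite{benton1995mixed,bierman1994intuitionistic,melliesRef} and noting it is entirely parallel to the non-dependent case, now indexed over $\Bcat$.
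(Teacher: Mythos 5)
Your overall strategy coincides with the paper's: clause by clause, the -I and -E rules supply the two directions of the adjunction (or the (co)universal structure), the $\beta$- and $\eta$-equations make them mutually inverse, compatibility of the type formers with Cart-Ty-Subst yields the Beck--Chevalley conditions, and the extra linear context $\Delta'$ in the -E rules is exactly Frobenius reciprocity. For clauses 1, 2, 4, 5 and 7 this is precisely what the paper does, and your plan would go through.

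Two steps need repair. First, in clause 6 you propose to reduce one direction to ``clause 1 specialised to $\Delta'=\cdot$, $B=I$'' via Theorem \ref{thm:!fromsigma}. But clause 6 does not assume the model supports $\Sigma_!^\otimes$-types, so clause 1 is not available: having $!$-types only gives the single object $\Sigma^\otimes_{!A}I$ with its universal property, not the left adjoint functor $\Sigma^\otimes_{!A}$ on the whole fibre, and the iff of clause 1 cannot be ``specialised'' to one value of $B$. What is needed --- and what the paper does --- is a direct construction: define $F_\Gamma(\proj{\Gamma}{A}):=!A$ on the objects of $\Ccat(\Gamma)$ (all of which have this form), read the adjunction $F_\Gamma\dashv U_\Gamma$ off from $!$-I and $!$-E (the latter requiring $\multimap$ to absorb the parameter $\Delta'$), and obtain strong monoidality of $F_\Gamma$ from the bijection between terms $\Gamma;x':!A,y':!B\vdash t':C$ and $\Gamma,x:A,y:B;\cdot\vdash t:C$ together with the Yoneda lemma. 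The relation between $F_\Gamma$ and $\Sigma^\otimes_{!A}I$ is the content of the separate Theorem \ref{altcompr}, which does assume the stronger $\Sigma$-structure. (Your alternative of verifying the linear-exponential-comonad axioms is sound but proves more than the statement asks; only the strong monoidal adjunction compatible with substitution is required.) Second, in clauses 3--5 the preservation of $\multimap$, $\&$ and $\oplus$ by the change-of-base functors is \emph{not} automatic from their being strict monoidal functors --- a strict monoidal functor need not preserve internal homs or fibrewise (co)products. It is automatic in the syntactic model, where these connectives are defined context-wise and hence commute with substitution, but for a general model it is exactly the additional condition that ``factoring over $\mathsf{SMCCat}$'' (etc.) imposes; the iff would fail without it.
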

\begin{proof}
\begin{enumerate}
\item Assume our model supports $\Sigma_!^\otimes$-types. We exhibit the claimed adjunction. The morphism from left to right is provided by $\Sigma_!^\otimes$-I. The morphism from right to left is provided by $\Sigma_!^\otimes$-E. $\Sigma_!^\otimes$-$\beta$ and $\Sigma_!^\otimes$-$\eta$ say exactly that these are mutually inverse. Naturality corresponds to the compatibility of $\Sigma_!^\otimes$-I and $\Sigma_!^\otimes$-E with substitution.
\begin{diagram}
c'&\rMapsto &(!\mathbf{v}_{\Gamma,{A},\cdot}\otimes\id_B);(c'\{\mathbf{p}_{\Gamma,{A}}\}) \\
\Dcat(\Gamma)(\Sigma^\otimes_{!{A}}B,C)
& \pile{\rTo^{}\\\cong \\ \lTo_{}} & \Dcat(\Gamma.{A})(B,C\{\mathbf{p}_{\Gamma,{A}}\})\\
\mathsf{let}\;z\;\mathsf{be}\;  !{x} \otimes y\;\mathsf{in}\;c &\lMapsto & c 
\end{diagram}

We show how the morphism from left to right arises from $\Sigma_!^\otimes$-I.\\
\\
\resizebox{\linewidth}{!}{
\AxiomC{}
\RightLabel{\textsf{Cart-Idf}}
\UnaryInfC{$\Gamma,x:A;\cdot \vdash x:A$}
\AxiomC{}
\RightLabel{\textsf{Lin-Idf}}
\UnaryInfC{$\Gamma,x:A;w:B\vdash w:B$}
\RightLabel{\textsf{$\Sigma_!^\otimes$-I}}
\BinaryInfC{$\Gamma,x:A;w:B\vdash  !{x} \otimes w:\Sigma^\otimes_{!(x:A)}B$}
\AxiomC{$\Gamma;z:\Sigma^\otimes_{!(x:A)}B\vdash c': C$}
\RightLabel{\textsf{Cart-Weak}}
\UnaryInfC{$\Gamma,x:A;z:\Sigma^\otimes_{!(x:A)}B\vdash c': C$}
\RightLabel{\textsf{Lin-Tm-Subst}}
\BinaryInfC{$\Gamma,x:A;w:B\vdash c'[ !{x}\otimes w/z]:C$}
\DisplayProof\hspace{20pt}\;}
\quad\\
\\
We show how the morphism from right to left is exactly $\Sigma_!^\otimes$-E (with $\Delta'=\cdot$, $\Delta= z:\Sigma^\otimes_{!(x:A)}B$, $t= z$).\\
\\
\resizebox{\linewidth}{!}{\AxiomC{$\Gamma;\cdot \vdash C\type$}
\AxiomC{}
\RightLabel{\textsf{Lin-Idf}}
\UnaryInfC{$\Gamma;z:\Sigma^\otimes_{!(x:A)}B \vdash z:\Sigma^\otimes_{!(x:A)}B$}
\AxiomC{$\Gamma,x:A;y:B\vdash c:C$}
\RightLabel{\textsf{$\Sigma_!^\otimes$-E}}
\TrinaryInfC{$\Gamma;z:\Sigma^\otimes_{!(x:A)}B\vdash \mathsf{let}\;z\;\mathsf{be}\;  !{x} \otimes y \;\mathsf{in}\;c:C$}
\DisplayProof\hspace{100pt}\;}
\\
\\
We show how Frobenius reciprocity can be proved in our type system (particularly relying on the form of the $\Sigma_!^\otimes$-E-rule\footnote{To be precise, we shall see Frobenius reciprocity is validated because we allow dependency on $\Delta'$ in the $\Sigma_!^\otimes$-E-rule. Conversely, it is easy to see we can prove Frobenius reciprocity holds in our model if we have (semantic) $\multimap$-types, as this allows us to remove the dependency on $\Delta'$ in $\Sigma_!^\otimes$-E.}).
\begin{claim*}[Frobenius reciprocity] The canonical morphism $$\Sigma^\otimes_{!{A}}(\Delta'\{\mathbf{p}_{\Gamma,{A}}\}\otimes B)\ra{f} \Delta'\otimes \Sigma^\otimes_{!{A}}B$$ is an isomorphism, for all $\Delta'\in\Dcat(\Gamma)$, $B\in\Dcat(\Gamma.{A})$.
\end{claim*}
\begin{proof} We first show how to construct the morphism $f$ we mean.\\
\\
\resizebox{\linewidth}{!}{
\AxiomC{}
\RightLabel{\textsf{Lin-Idf}}
\UnaryInfC{$\Gamma;x':\Sigma^\otimes_{!(x:A)}(\Delta'\otimes B)\vdash x':\Sigma^\otimes_{!(x:A)}(\Delta'\otimes B)$}
\AxiomC{}
\RightLabel{\textsf{Lin-Idf}}
\UnaryInfC{$\Gamma,x:A;z:\Delta'\vdash z:\Delta'$}
\AxiomC{}
\RightLabel{\textsf{Cart-Idf}}
\UnaryInfC{$\Gamma,x:A;\cdot \vdash x:A$}
\AxiomC{}
\RightLabel{\textsf{Lin-Idf}}
\UnaryInfC{$\Gamma;y:B\vdash y:B$}
\RightLabel{\textsf{$\Sigma_!^\otimes$-I}}
\BinaryInfC{$\Gamma,x:A;y:B\vdash  !{x} \otimes y:\Sigma^\otimes_{!(x:A)}B$}
\RightLabel{\textsf{$\otimes$-I}}
\BinaryInfC{$\Gamma,x:A;z:\Delta', y:B\vdash z\otimes  ! {x} \otimes y:  \Delta'\otimes \Sigma^\otimes_{!(x:A)}B$}
\RightLabel{\textsf{$\otimes$-E}}
\UnaryInfC{$\Gamma,x:A;w: \Delta'\otimes B\vdash \mathsf{let}\; w\;\mathsf{be}\;z\otimes y\;\mathsf{in}\;z\otimes  ! {x} \otimes y :\Delta'\otimes \Sigma^\otimes_{!(x:A)}B$}
\RightLabel{\textsf{$\Sigma_!^\otimes$-E}}
\BinaryInfC{$\Gamma;x':\Sigma^\otimes_{!(x:A)}(\Delta'\otimes B)\vdash f: \Delta'\otimes \Sigma^\otimes_{!(x:A)}B$}
\DisplayProof}\\
\\
We now construct its inverse. Call it $g$\footnote{Frobenius reciprocity really comes in where $\Sigma_!^\otimes$-E is used, because of the factor $\Delta'$ in the $\Sigma_!^\otimes$-E-rule.}.\\
\\
\resizebox{\linewidth}{!}{
\AxiomC{}
\RightLabel{\textsf{Lin-Idf}}
\UnaryInfC{$\Gamma;y_2:\Sigma^\otimes_{!(x:A)}B\vdash y_2:\Sigma^\otimes_{!(x:A)}B$}
\AxiomC{}
\RightLabel{\textsf{Cart-Idf}}
\UnaryInfC{$\Gamma,x:A;\cdot \vdash x:A$}
\AxiomC{}
\RightLabel{\textsf{Lin-Idf}}
\UnaryInfC{$\Gamma;y_1:\Delta'\vdash y_1:\Delta'$}
\AxiomC{}
\RightLabel{\textsf{Lin-Idf}}
\UnaryInfC{$\Gamma;y:B\vdash y:B$}
\RightLabel{\textsf{$\otimes$-I}}
\BinaryInfC{$\Gamma,x:A;y_1:\Delta',y:B\vdash y_1\otimes y:\Delta'\otimes B$}
\RightLabel{\textsf{$\Sigma_!^\otimes$-I}}
\BinaryInfC{$\Gamma,x:A;y_1:\Delta',y:B\vdash  !{x} \otimes  y_1\otimes y:\Sigma^\otimes_{!(x:A)}(\Delta'\otimes B)$}
\RightLabel{\textsf{$\Sigma_!^\otimes$-E}}
\BinaryInfC{$\Gamma;y_1:\Delta',y_2:\Sigma^\otimes_{!(x:A)}B\vdash \mathsf{let}\; y_2\;\mathsf{be}\; !{x} \otimes y\; \mathsf{in}\;  !{x} \otimes  y_1\otimes y:\Sigma^\otimes_{!(x:A)}(\Delta'\otimes B)$}
\RightLabel{\textsf{$\otimes$-E}}
\UnaryInfC{$\Gamma;y':\Delta'\otimes \Sigma^\otimes_{!(x:A)}B\vdash g:\Sigma^\otimes_{!(x:A)}(\Delta'\otimes B)$}
\DisplayProof}
\\
\\
We leave it to the reader to verify that these morphisms are mutually inverse in the sense that $$\Gamma;x':\Sigma^\otimes_{!(x:A)}(\Delta'\otimes B)\vdash g[f/y']= x':\Sigma^\otimes_{!(x:A)}(\Delta'\otimes B)$$ and $$\Gamma;y':\Delta'\otimes \Sigma^\otimes_{!(x:A)}B \vdash f[g/x']= y':\Delta'\otimes \Sigma^\otimes_{!(x:A)}B .$$\end{proof}
For the converse, we show how to obtain $\Sigma_!^\otimes$-I from our morphism from left to right:\\
\\
\resizebox{\linewidth}{!}{
\AxiomC{}
\RightLabel{\textsf{Lin-Idf}}
\UnaryInfC{$\Gamma;z:\Sigma^\otimes_{!(x:A)}B\vdash z:\Sigma^\otimes_{!(x:A)}B$}
\RightLabel{\textsf{``left to right''}}
\UnaryInfC{$\Gamma,x:A;w:B\vdash  !{x}\otimes w:\Sigma^\otimes_{!(x:A)}B$}
\AxiomC{$\Gamma;\cdot \vdash a:A$}
\RightLabel{\textsf{Cart-Tm-Subst}}
\BinaryInfC{$\Gamma;w:B\vdash  !{a}\otimes w:\Sigma^\otimes_{!(x:A)}B$}
\AxiomC{$\Gamma;\Delta\vdash b:B[{a}/x]$}
\RightLabel{\textsf{Lin-Tm-Subst}}
\BinaryInfC{$\Gamma;\Delta\vdash !{a}\otimes b:\Sigma^\otimes_{!(x:A)}B$}
\DisplayProof\hspace{10pt}\;}
\\
\\
We show how to obtain $\Sigma_!^\otimes$-E from our morphism from right to left, using Frobenius reciprocity.\\
\\
\resizebox{\linewidth}{!}{
\AxiomC{$\Gamma;\cdot\vdash C\type$}
\AxiomC{$\Gamma,x:A;y:\Delta', B\vdash c:C$}
\RightLabel{\textsf{$\otimes$-E}}
\UnaryInfC{$\Gamma,x:A;y:\Delta'\otimes B\vdash c:C$}
\RightLabel{\textsf{``right to left''}}
\BinaryInfC{$\Gamma;z:\Sigma^\otimes_{!(x:A)}(\Delta'\otimes B)\vdash \mathsf{let}\;z\;\mathsf{be}\;  !{x} \otimes y \;\mathsf{in}\;c:C$}
\RightLabel{\textsf{Frobenius reciprocity}}
\UnaryInfC{$\Gamma;z:(\Delta'\otimes \Sigma^\otimes_{!(x:A)}B)\vdash \mathsf{let}\;\frob{z}\;\mathsf{be}\;  !{x} \otimes y \;\mathsf{in}\;c:C$}
\RightLabel{\textsf{Lin-Tm-Subst,$\otimes$-I,2$\times$Lin-Idf}}
\UnaryInfC{$\Gamma;z_1:\Delta', z_2:\Sigma^\otimes_{!(x:A)}B\vdash \mathsf{let}\;\frob{z_1\otimes z_2}\;\mathsf{be}\;  !{x} \otimes y \;\mathsf{in}\;c:C$}
\AxiomC{$\Gamma;\Delta\vdash t: \Sigma^\otimes_{!(x:A)}B$}
\RightLabel{\textsf{Lin-Tm-Subst}}
\BinaryInfC{$\Gamma;z_1:\Delta',\Delta\vdash (\mathsf{let}\;\frob{z_1\otimes z_2}\;\mathsf{be}\;  !{x} \otimes y \;\mathsf{in}\;c)[t/z_2]:C$}
\DisplayProof}
\\
\\
As usual, the left Beck-Chevalley condition says precisely that $\Sigma_!^\otimes$-types commute with substitution, as dictated by the type theory.

\item Assume our model supports $\Pi_!^\multimap$-types. We exhibit the claimed adjunction. The morphism from left to right is provided by $\Pi_!^\multimap$-I -- in fact, it is exactly the I-rule -- and the one from right to left by $\Pi_!^\multimap$-E. $\Pi_!^\multimap$-$\beta$ and $\Pi_!^\multimap$-$\eta$ say exactly that these are mutually inverse. Naturality corresponds to the compatibility of $\Pi_!^\multimap$-I and $\Pi_!^\multimap$-E with substitution.
\begin{diagram}
b & \rMapsto & \lambda_{!(x:A)} b \\
\Dcat(\Gamma.{A})(\Delta\{\mathbf{p}_{\Gamma.{A}}\},B) & \pile{\rTo\\\cong \\ \lTo} & \Dcat(\Gamma)(\Delta,\Pi^\multimap_{!(x:A)} B)\\
f(!x) & \lMapsto & f.
\end{diagram}
We show how we obtain the definition of $f(!x)$ from $\Pi_!^\multimap$-E.\\
\\
\resizebox{\linewidth}{!}{
\AxiomC{}
\RightLabel{\textsf{Cart-Idf}}
\UnaryInfC{$\Gamma,x:A;\cdot \vdash x:A$}
\AxiomC{$\Gamma;\Delta \vdash f:\Pi^\multimap_{!(x:A)}B$}
\RightLabel{\textsf{Cart-Weak}}
\UnaryInfC{$\Gamma,x:A;\Delta \vdash f:(\Pi^\multimap_{!(x:A)}B)$}
\RightLabel{\textsf{$\Pi_!^\multimap$-E}}
\BinaryInfC{$\Gamma,x:A;\Delta\vdash f(!x):B$}
\DisplayProof
\hspace{145pt}\;}\\
\\
For the converse, we have to show that we can recover $\Pi_!^\multimap$-E from the definition of $f(!x)$.\\
\\
\resizebox{\linewidth}{!}{
\AxiomC{$\Gamma;\cdot\vdash a:A$}
\AxiomC{$\Gamma;\Delta \vdash f:\Pi^\multimap_{!(x:A)}B$}
\RightLabel{\textsf{Definition $f(!x)$}}
\UnaryInfC{$\Gamma,x:A;\Delta\vdash f(!x):B$}
\RightLabel{\textsf{Cart-Tm-Subst}}
\BinaryInfC{$\Gamma;\Delta\vdash f(!x)[{a}/x]:B[{a}/x]$}
\UnaryInfC{$\Gamma;\Delta\vdash f(!{a}):B[{a}/x]$}
\DisplayProof
\hspace{220pt}\;}\\
\\
This shows that individual $\Pi_!^\multimap$-types correspond to right adjoint functors to substitution along projections. The type theory dictates that $\Pi_!^\multimap$-types interact well with substitution. This corresponds to the right Beck-Chevalley condition, as usual.

\item From the categorical semantics of (non-dependent) linear type theory (see e.g. \cite{bierman1994intuitionistic} for a very complete account) we know that $\multimap$-types correspond to monoidal closure of the category of contexts. The extra feature in dependent linear type theory is that the syntax dictates that the type formers are compatible with substitution. This means that we also have to restrict the functors $\Dcat(f)$ to preserve the relevant categorical structure.

\item Idem.
\item Idem.
\item Assume that we have $!$-types. We define a left adjoint $F_\Gamma\dashv U_\Gamma$ as $F_\Gamma\mathbf{p}_{\Gamma,{A}}:=!A$ (this is easily seen to be well-defined up to isomorphism, so we can use AC for a definition on the nose) and, noting that every morphism $\mathbf{p}_{\Gamma,{A}}\ra{}\mathbf{p}_{\Gamma,{B}}$ in $\Bcat/\Gamma$ is of the form $\langle \mathbf{p}_{\Gamma,{A}},b\rangle$ for some unique $I\ra{b}B\{\mathbf{p}_{\Gamma,{A}}\}\in\Dcat(\Gamma.{A})$, we define $F_\Gamma$ as acting on $b$ as the map obtained from
\\
\\
\resizebox{\linewidth}{!}{
\AxiomC{$\Gamma,x:A;\cdot \vdash b:B$}
\RightLabel{\textsf{!-I}}
\UnaryInfC{$\Gamma,x:A;\cdot \vdash !b:!B$}
\AxiomC{}
\RightLabel{\textsf{Lin-Idf}}
\UnaryInfC{$\Gamma;y:!A\vdash y:!A$}
\RightLabel{\textsf{!-E}}
\BinaryInfC{$\Gamma;y:!A \vdash\mathsf{let}\; y\;\mathsf{be}\;!x\;\mathsf{in}\; !b:!B$}
\DisplayProof\hspace{250pt}\;}
\\
\\
which indeed gives us $F_\Gamma(\langle \mathbf{p}_{\Gamma,{A}},b\rangle)\in\Dcat(\Gamma)(!A,!B)$.

We exhibit the adjunction by the following isomorphism of hom-sets, where the morphism from left to right comes from $!$-I and the one from right to left comes from $!$-E.\\
\\
\resizebox{\linewidth}{!}{
\mbox{
\begin{diagram}
b & \rMapsto & b[!x/x']  \\
\Dcat(\Gamma)(F_\Gamma\mathbf{p}_{\Gamma,{A}},B)=\Dcat(\Gamma)(!A,B) &\pile{\rTo\\\cong\\ \lTo} &\Dcat(\Gamma.A)(I,B\{\proj{\Gamma}{A}\})\cong \Bcat/\Gamma(\mathbf{p}_{\Gamma,{A}},\mathbf{p}_{\Gamma,{B}})=\Ccat(\Gamma)(\mathbf{p}_{\Gamma,{A}},U_\Gamma B)\\
\mathsf{let}\; y\;\mathsf{be}\; !x\;\mathsf{in}\; b' & \lMapsto & b' 
\end{diagram}}}\\
\\
We show how to construct the morphism from left to right, using $!$-I.\\
\\
\resizebox{\linewidth}{!}{
\AxiomC{$\Gamma;x':!A\vdash b:B$}
\RightLabel{\textsf{Cart-Weak}}
\UnaryInfC{$\Gamma,x:A;x':!A\vdash b:B$}
\AxiomC{}
\RightLabel{\textsf{Cart-Idf}}
\UnaryInfC{$\Gamma,x:A;\cdot\vdash x:A$}
\RightLabel{\textsf{$!$-I}}
\UnaryInfC{$\Gamma,x:A;\cdot \vdash !x:!A$}
\RightLabel{\textsf{Lin-Tm-Subst}}
\BinaryInfC{$\Gamma,x:A;\cdot\vdash b[!x/x']:B$}
\DisplayProof\hspace{130pt}\;}\\
\\
We show to construct the morphism from right to left, using $!$-E. Suppose we're given $b'\in\Dcat(\Gamma.{A})(I,B\{\proj{\Gamma}{A}\})$. From this, we produce a morphism in $\Dcat(\Gamma)(!A,B)$ as follows.\\
\\
\resizebox{\linewidth}{!}{
\AxiomC{}
\RightLabel{\textsf{Lin-Idf}}
\UnaryInfC{$\Gamma;y:!A\vdash y:!A$}
\AxiomC{$\Gamma,x:A;\cdot \vdash b':B$}
\RightLabel{\textsf{!E}}
\BinaryInfC{$\Gamma;y:!A\vdash \mathsf{let}\; y\;\mathsf{be}\; !x\;\mathsf{in}\; b':B$}
\DisplayProof\hspace{230pt}\;}\\
\\
We leave it up to the reader to verify that these morphisms are mutually inverse, according to $!$-$\beta$ and $!$-$\eta$.

Note that $F_\Gamma$ is strong monoidal, as the rules for $!$ define a natural bijection between terms  $\Gamma;x':!A,y':!B\vdash t':C$ and $\Gamma,x:A,y:B;\cdot \vdash t:C$ if $\Gamma\vdash C\type$. In semantic terms, this gives a natural bijection
\begin{align*}
\Dcat(\Gamma)(!A\otimes !B,C)  &\cong \Dcat(\Gamma.A.B)(1,C\{\proj{\Gamma.A}{B};\proj{\Gamma}{A}\})\\
& \cong \Bcat/\Gamma(\proj{\Gamma.A}{B};\proj{\Gamma}{A}, \proj{\Gamma}{C})\\
&=  \Bcat/\Gamma(U_\Gamma A \times U_\Gamma B ,U_\Gamma C) \\
&\cong \Dcat(\Gamma)(F_\Gamma(U_\Gamma A \times U_\Gamma B),C) ,\end{align*} 
so strong monoidality follows by the Yoneda lemma. (A keen reader can verify that the oplax structure on $F_\Gamma$ corresponds with the lax structure on $U_\Gamma$.)

Conversely, suppose we have a strong monoidal left adjoint $F_\Gamma\dashv U_\Gamma$. We define, for $A\in\mathsf{ob}(\Dcat(\Gamma))$, $!A:=F_\Gamma U_\Gamma(A)$.

We verify that $!$-I can be derived from the homset morphism from left to right:\\
\\
\resizebox{\linewidth}{!}{
\AxiomC{}
\RightLabel{\textsf{Lin-Idf}}
\UnaryInfC{$\Gamma;x':!A\vdash x':!A$}
\RightLabel{\textsf{``left to right''}}
\UnaryInfC{$\Gamma,x:A;\cdot \vdash !x:!A$}
\AxiomC{$\Gamma;\cdot \vdash a:A$}
\RightLabel{\textsf{Cart-Tm-Subst}}
\BinaryInfC{$\Gamma;\cdot \vdash !x[{a}/x]:!A$}
\DisplayProof\hspace{280pt}\;}
\quad\\
\\
We verify that, in the presence of $\multimap$-types, $!$-E can be derived from the homset morphism from right to left:\\
\\
\resizebox{\linewidth}{!}{
\AxiomC{$\Gamma;\Delta\vdash t:!A$}
\AxiomC{}
\RightLabel{\textsf{Lin-Idf}}
\UnaryInfC{$\Gamma;w:\Delta'\vdash w:\Delta'$}
\AxiomC{$\Gamma,x:A;y:\Delta'\vdash b :B$}
\RightLabel{\textsf{$\multimap$-I}}
\UnaryInfC{$\Gamma,x:A;\cdot \vdash \lambda_{y:\Delta'}b :\Delta'\multimap B$}
\RightLabel{\textsf{``right to left''}}
\UnaryInfC{$\Gamma;z:!A\vdash  \mathsf{let}\; z\;\mathsf{be}\; !x\;\mathsf{in}\; \lambda_{y:\Delta'}b : \Delta'\multimap B$}
\RightLabel{\textsf{$\multimap$-E}}
\BinaryInfC{$\Gamma;z:!A,\Delta'\vdash  \mathsf{let}\; z\;\mathsf{be}\; !x\;\mathsf{in}\; b[w/y]:B$}
\RightLabel{\textsf{Lin-Tm-Subst}}
\BinaryInfC{$\Gamma;\Delta,\Delta'\vdash \mathsf{let}\; t\;\mathsf{be}\; !x\;\mathsf{in}\; b[w/y]: B$}
\DisplayProof\hspace{50pt}\;}
\quad\\
\\
Note that the $!$-$\beta$- and $!$-$\eta$-rules correspond precisely to the fact that our morphisms from left to right and from right to left define a homset isomorphism.

Finally, it is easily verified that the condition that $F_\Gamma;\Dcat(f)\cong \Dcat(f);F_{\Gamma'}$ corresponds exactly to the compatibility of $!$ with substitution.

\item Suppose we have $\Id^\otimes_{!A}\dashv -\{\mathsf{diag}_{\Gamma,A}\}$ (satisfying the appropriate Frobenius and Beck-Chevalley conditions). Then, we have a (natural) homset isomorphism
\begin{diagram}
\Dcat(\Gamma.A.A\{\proj{\Gamma}{A}\})(\Id^\otimes_{!A}(B),C) & \pile{\rTo\\ \cong \\ \lTo} & \Dcat(\Gamma.A)(B,C\{\mathsf{diag}_{\Gamma,A}\}).
\end{diagram}
The claim is that $\Id^\otimes_{!A}(I)$ satisfies the rules for the $\Id_!^\otimes$-type of $A$. Indeed, we have $\Id_!^\otimes$-I as follows.\\
\quad\\
\resizebox{\linewidth}{!}{
\AxiomC{}
\RightLabel{\textsf{Lin-Idf}}
\UnaryInfC{$\Gamma,x:A,x':A;w:\Id^\otimes_{!A}(I)(x,x')\vdash w:\Id^\otimes_{!A}(I)(x,x')$}
\RightLabel{\textsf{``left to right''}}
\UnaryInfC{$\Gamma,x:A;y:I\vdash \refl{!x}^y:\Id^\otimes_{!A}(I)(x,x)$}
\AxiomC{}
\RightLabel{\textsf{$I$-I}}
\UnaryInfC{$\Gamma,x:A;\cdot\vdash *:I$}
\RightLabel{\textsf{Lin-Tm-Subst}}
\BinaryInfC{$\Gamma,x:A;\cdot\vdash \refl{!x}:\Id^\otimes_{!A}(I)(x,x)$}
\AxiomC{$\Gamma;\cdot\vdash a:A$}
\LeftLabel{Cart-Tm-Subst}
\BinaryInfC{$\Gamma;\cdot\vdash \refl{!x}:\Id^\otimes_{!A}(I)(a,a)$}
\DisplayProof}
\quad\\
\\
We obtain $\Id_!^\otimes$-E as follows. Let $\Gamma,x:A,x':A;\cdot\vdash C\type$.\\
\quad\\
\resizebox{\linewidth}{!}{
\AxiomC{$\Gamma,x:A;B\vdash c:C[x/x']$}
\RightLabel{\textsf{``right to left''}}
\UnaryInfC{$\Gamma,x:A,x':A;\Id^\otimes_{!A}(B)\vdash c':C$}
\AxiomC{$\Gamma;\cdot\vdash a:A$}
\AxiomC{$\Gamma;\cdot\vdash a':A$}
\RightLabel{\textsf{Cart-Tm-Subst}}
\TrinaryInfC{$\Gamma;\Id^\otimes_{!A}(B)[a/x,a'/x']\vdash c'[a/x,a'/x']:C[a/x,a'/x']$}
\AxiomC{$\Gamma;B'\vdash p:\Id^\otimes_{!A}(I)[a/x,a'/x']$}
\LeftLabel{\textsf{Frobenius}}
\UnaryInfC{$\Gamma;B[a/x],B'\vdash p':\Id^\otimes_{!A}(B)[a/x,a'/x']$}
\LeftLabel{Lin-Tm-Subst}
\BinaryInfC{$\Gamma;B[a/x],B'\vdash \mathsf{let}\; (a,a',p)\;\mathsf{be}\;(z,z,\refl{!z})\;\mathsf{in}\; c:C[a/x,a'/x]$}
\DisplayProof}
\\
\\
Conversely, suppose we have $\Id_!^\otimes$-types. Then, define $\Id^\otimes_{!A}(B):=\Id^\otimes_{!A}\otimes B\{\proj{\Gamma.A}{A\{\proj{\Gamma}{A}\}}\}$, with the obvious extension on morphisms. (This immediately implies Frobenius reciprocity, clearly.) Then, we obtain the morphism ``left to right'' as follows.\\
\quad\\
\resizebox{\linewidth}{!}{
\AxiomC{$\Gamma,x:A,x':A;z:\Id^\otimes_{!A},y:B\vdash c:C$}
\RightLabel{\textsf{}}
\AxiomC{}
\RightLabel{\textsf{Cart-Idf}}
\UnaryInfC{$\Gamma,x:A;\cdot\vdash x:A$}
\RightLabel{\textsf{Cart-Tm-Subst}}
\BinaryInfC{$\Gamma,x:A;z:\Id^\otimes_{!A}[x/x'],y:B\vdash c[x/x']:C[x/x']$}
\AxiomC{}
\RightLabel{\textsf{Cart-Idf}}
\UnaryInfC{$\Gamma,x:A;\cdot\vdash x:A$}
\RightLabel{\textsf{$\Id_!^\otimes$-I}}
\UnaryInfC{$\Gamma,x:A;\cdot\vdash \refl{!x}:\Id^\otimes_{!A}(x,x)$}
\RightLabel{\textsf{Lin-Tm-Subst}}
\BinaryInfC{$\Gamma,x:A;y:B\vdash c': C[x/x']$}
\DisplayProof}
\quad\\
\\
The morphism ``right to left'' is obtained as follows.\\
\\
\resizebox{\linewidth}{!}{
\AxiomC{$\Gamma,x_0:A; y:B\vdash c:C[x_0/x_1]$}
\AxiomC{}
\RightLabel{\textsf{Lin-Idf}}
\UnaryInfC{$\Gamma,x_0:A,x_1:A;w:\Id^\otimes_{!A}\vdash w:\Id^\otimes_{!A}$}
\AxiomC{}
\RightLabel{\textsf{Cart-Idf}}
\UnaryInfC{$\Gamma,x_0:A,x_1:A;\cdot\vdash x_i:A$}
\RightLabel{\textsf{$\Id_!^\otimes$-E}}
\TrinaryInfC{$\Gamma,x_0:A,x_1:A; w:\Id^\otimes_{!A},y:B\vdash c':C$}
\DisplayProof}
\quad\\
\\
We leave it to the reader to verify that the $\Id_!^\otimes$-$\beta$- and $\Id_!^\otimes$-$\eta$-rules translate precisely into the ``right to left'' and ``left to right'' morphisms being inverse. As usual, the Beck-Chevalley condition corresponds to the compatibility of $\Id_!^\otimes$-types with substitution, while the Frobenius condition says that $\Id_{!A}^\otimes$-functors are entirely determined by the object $\Id_{!A}^\otimes(I)$.
\end{enumerate}
\end{proof}

The semantics of $!$ suggests an alternative definition for the notion of a comprehension: if we have $\Sigma_!^\otimes$-types in a strong sense, it is a derived notion!

\begin{theorem}[Lawvere Comprehension]\label{altcompr} Given a strict indexed monoidal category $(\Bcat,\Dcat)$ with left adjoints $\csigma{f}{}$ to $\Dcat(f)$ for arbitrary $\Gamma'\ra{f}\Gamma\in\Bcat$\mccorrect{,} satisfying the left Beck-Chevalley condition for all pullback squares, then we can define $\Bcat/\Gamma\ra{F_\Gamma}\Dcat(\Gamma)$ by
$$F_\Gamma(-):=\csigma{-}{I}.$$
In that case, $(\Bcat,\Dcat)$ has a comprehension schema iff $F_\Gamma$ has a right adjoint $U_\Gamma$ (which then automatically satisfies $ \Dcat(f);U_{\Gamma'}=U_\Gamma;\Dcat(f)$ for all $\Gamma'\ra{f}\Gamma\in\Bcat$). That is, our notion of comprehension generalises that of \cite{lawvere1970equality}.

In particular, if either condition is satisfied, it supports $!$-types iff $\Sigma^\otimes_{!}$ satisfies Frobenius reciprocity.
\end{theorem}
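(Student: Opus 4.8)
The plan is to recognise this as the monoidal analogue of Lawvere's comprehension for hyperdoctrines, and to organise the argument around the adjunction $\csigma{f}{}\dashv\Dcat(f)$ and its hom-set transpose, in four steps. The first three are formal; the genuine work is in the last.

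\emph{That $F_\Gamma$ is a well-defined indexed functor.} First I would verify that $F_\Gamma:=\csigma{-}{I}$ is a functor $\Bcat/\Gamma\ra{}\Dcat(\Gamma)$. On objects, $F_\Gamma(\Gamma'\ra{f}\Gamma):=\csigma{f}{I}$, with $I$ the unit of $\Dcat(\Gamma')$ (using that $\Dcat(h)$ is \emph{strict} monoidal, so $\Dcat(h)(I)=I$). For a morphism $h$ from $f$ to $g$ in $\Bcat/\Gamma$ (so $g\circ h=f$), transpose $\id:I=\Dcat(h)(I)$ across $\csigma{h}{}\dashv\Dcat(h)$ to a map $\csigma{h}{I}\ra{}I$, apply $\csigma{g}{}$, and compose with the canonical iso $\csigma{g}{}\csigma{h}{I}\cong\csigma{g\circ h}{I}$ (valid because $\Dcat$ is strict, so $\Dcat(g\circ h)=\Dcat(h)\Dcat(g)$ and left adjoints compose); functoriality is a routine diagram chase. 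Then, feeding the standing left Beck--Chevalley condition the pullback square that computes $\Dcat(f)$ on an object of $\Bcat/\Gamma$ shows that $\Dcat(f)(F_\Gamma(g))$ is $F_{\Gamma'}$ applied to the pullback of $g$ along $f$; i.e.\ $F_{(-)}$ is a morphism of indexed categories. This is exactly Lawvere's construction with the monoidal unit in place of the fibrewise terminal object.

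\emph{The equivalence, and the compatibility parenthetical.} For fixed $\Gamma$, $A\in\ob(\Dcat(\Gamma))$, and $\Gamma'\ra{x}\Gamma$ in $\Bcat/\Gamma$, transposing along $\csigma{x}{}\dashv\Dcat(x)$ gives a bijection, natural in $x$ and $A$,
\[\Dcat(\mathsf{dom}(x))(I,A\{x\})=\Dcat(\Gamma')(I,\Dcat(x)A)\;\cong\;\Dcat(\Gamma)(\csigma{x}{I},A)=\Dcat(\Gamma)(F_\Gamma(x),A),\]
so the presheaf on $\Bcat/\Gamma$ whose representability \emph{is} the comprehension schema at $A$ is precisely $\Dcat(\Gamma)(F_\Gamma(-),A)$. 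Since a functor has a right adjoint iff every such hom-out-of-the-image presheaf is representable (the representing objects then assembling canonically into the adjoint, naturally in $A$), $\Dcat$ has a comprehension schema iff each $F_\Gamma$ has a right adjoint $U_\Gamma$, with $\proj{\Gamma}{A}=U_\Gamma A$ and $\diagv{\Gamma}{A}$ the transpose of the counit of $F_\Gamma\dashv U_\Gamma$. For the parenthetical I would run a short Yoneda computation: for $g'\in\Bcat/\Gamma'$, the pullback universal property, the two transpositions, and pseudofunctoriality $\csigma{f\circ g'}{}\cong\csigma{f}{}\csigma{g'}{}$ give $\Bcat/\Gamma'(g',\Ccat(f)(U_\Gamma A))\cong\Dcat(\Gamma)(\csigma{f\circ g'}{I},A)\cong\Bcat/\Gamma'(g',U_{\Gamma'}(\Dcat(f)A))$ naturally in $g'$, so $\Ccat(f)\circ U_\Gamma\cong U_{\Gamma'}\circ\Dcat(f)$, which is the stated equation $\Dcat(f);U_{\Gamma'}=U_\Gamma;\Dcat(f)$.

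\emph{The ``$!$-types'' clause, and the main obstacle.} By Theorem~\ref{thm:semtype}(6), supporting $!$-types means each $U_\Gamma$ has a strong monoidal left adjoint compatible with substitution; the previous steps already supply a substitution-compatible left adjoint $F_\Gamma$, and by Theorem~\ref{thm:comprfunc} $U_\Gamma$ is lax monoidal (sending $\otimes$ laxly to $\times$, with $U_\Gamma A\times U_\Gamma B=\proj{\Gamma.A}{B\{\proj{\Gamma}{A}\}};\proj{\Gamma}{A}$), so $F_\Gamma$ carries a canonical \emph{op}lax monoidal structure and the only remaining point is that its comparison maps are invertible. For $(\Leftarrow)$ I would expand $F_\Gamma(\proj{\Gamma}{A}\times_\Gamma\proj{\Gamma}{B})$ by pseudofunctoriality and the Beck--Chevalley isomorphism for the pullback square of Theorem~\ref{thm:comprfunc} (which rewrites it as $\Sigma^\otimes_{!A}$ applied to $\Dcat(\proj{\Gamma}{A})(\Sigma^\otimes_{!B}I)$), then apply Frobenius for $\Sigma^\otimes_{!A}$ with $\Delta'=\Sigma^\otimes_{!B}I$ to reach $F_\Gamma(\proj{\Gamma}{A})\otimes F_\Gamma(\proj{\Gamma}{B})$, giving $F_\Gamma$ strong monoidal; with Theorem~\ref{thm:semtype}(6) this yields $!$-types. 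The direction $(\Rightarrow)$ runs the same Beck--Chevalley computation in reverse. I expect the bookkeeping here --- matching the oplax comparison of $F_\Gamma$ with the Frobenius isomorphism of $\Sigma^\otimes_{!}$, and in particular showing that strong monoidality of $F_\Gamma$ (which directly only controls the instances with $\Delta'$ a free object and $B=I$) is enough to force the full Frobenius condition of Theorem~\ref{thm:semtype}(1) --- to be the only part with genuine content; Steps~1--3 are formal once the transpose in the main equivalence is noticed.
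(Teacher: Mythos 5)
Your Steps 1--3 (the construction of $F_\Gamma$, the equivalence between comprehension and the right adjoint, and the compatibility equation) are correct and follow essentially the paper's own route: the whole content is the single transposition $\Dcat(\Gamma')(I,A\{f\})\cong\Dcat(\Gamma)(\csigma{f}{I},A)=\Dcat(\Gamma)(F_\Gamma f,A)$, which identifies the comprehension presheaf with $\Dcat(\Gamma)(F_\Gamma(-),A)$, so that representability of the former for every $A$ is exactly a pointwise right adjoint to $F_\Gamma$. The extra bookkeeping you supply (functoriality of $F_\Gamma$ via mates, the Beck--Chevalley argument making $F$ an indexed functor, the Yoneda computation for $\Dcat(f);U_{\Gamma'}=U_\Gamma;\Dcat(f)$) is left implicit in the paper but is sound.

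The gap is in the $!$-types clause, and it is the one you flag yourself but do not close. You reduce ``supports $!$-types'' to ``$F_\Gamma$ is strong monoidal'' by appeal to Theorem~\ref{thm:semtype}(6); but that equivalence is only asserted there \emph{in the presence of $\multimap$-types} (its footnote is explicit that the ``if'' direction needs $\multimap$ to discharge the ambient context $\Delta'$ in $!$-E), and $\multimap$-types are not among the hypotheses here. So your $(\Leftarrow)$ proves strong monoidality of $F_\Gamma$, which without $\multimap$ is too weak to conclude $!$-types; and your $(\Rightarrow)$, as you note, only yields the Frobenius isomorphism for $\Delta'$ of the form $\csigma{g}{I}$ and $B=I$, not the condition of Theorem~\ref{thm:semtype}(1). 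The paper sidesteps strong monoidality entirely by using the hom-set characterisation of $!$-types that holds without $\multimap$: the natural isomorphism $\Dcat(\Gamma.A)(\Delta\{\proj{\Gamma}{A}\},B\{\proj{\Gamma}{A}\})\cong\Dcat(\Gamma)(!A\otimes\Delta,B)$ (natural in $\Delta$ and $B$, with $!A:=F_\Gamma U_\Gamma A=\Sigma^\otimes_{!A}I$). Composing with the always-available adjunction isomorphism $\Dcat(\Gamma)(\Sigma^\otimes_{!A}(\Delta\{\proj{\Gamma}{A}\}),B)\cong\Dcat(\Gamma.A)(\Delta\{\proj{\Gamma}{A}\},B\{\proj{\Gamma}{A}\})$ turns the $!$-types isomorphism into a natural isomorphism $\Dcat(\Gamma)(\Sigma^\otimes_{!A}(\Delta\{\proj{\Gamma}{A}\}),B)\cong\Dcat(\Gamma)(!A\otimes\Delta,B)$, which by Yoneda is precisely the Frobenius isomorphism $\Sigma^\otimes_{!A}(\Delta\{\proj{\Gamma}{A}\})\cong{!A}\otimes\Delta$, and conversely. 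This one triangle of isomorphisms gives both directions at once and avoids any detour through the monoidal structure of $F_\Gamma$. To repair your argument, either add $\multimap$-types to the hypotheses (at which point the full Frobenius condition also follows from the adjunction alone) or replace the appeal to Theorem~\ref{thm:semtype}(6) by this direct hom-set characterisation.
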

\begin{proof}Suppose that we have said right adjoints $U_\Gamma$. We construct a comprehension schema. 

This allows us to define $\mathbf{p}_{\Gamma,{A}}:=U_\Gamma(A)$ and note that we have natural isomorphisms\\
\begin{diagram} \Dcat(\Gamma')(I,A\{f\})& \rTo^\cong & \Dcat(\Gamma)(
\csigma{f}{} I_{\Gamma'},A)=\Dcat(\Gamma)(F_\Gamma f,A)& \rTo^\cong & \Bcat/\Gamma(f,U_\Gamma A) \\
a & \rMapsto & a_f& \rMapsto & \langle f,a\rangle, \end{diagram}
where the first natural isomorphism comes from the adjunction $\csigma{f}{}\dashv -\{f\}$ and the second one comes from the adjunction $F_\Gamma\dashv U_\Gamma$. This defines a comprehension for $\Dcat$.\\
\\
Conversely, suppose $\Dcat$ satisfies the comprehension schema. Then, we know, by theorem \ref{thm:comprfunc}, that we can define a comprehension functor $U_\Gamma$ such that $ \Dcat(f);U_{\Gamma'}= U_\Gamma;\Dcat(f)$. Then we have the following natural isomorphisms:
\begin{diagram} \Bcat/\Gamma(f,U_\Gamma A) & \rTo^\cong & \Dcat(\Gamma')(I,A\{f\})& \rTo^\cong & \Dcat(\Gamma)(\csigma{f}{} I_{\Gamma'},A)=\Dcat(\Gamma)(F_\Gamma f,A) \\
  \langle f,a\rangle & \rMapsto  & a & \rMapsto & a_f, \end{diagram}
where the first isomorphism is precisely the representation defined by our comprehension and the second isomorphism comes from the fact that $\csigma{f}{}\dashv -\{f\}$. We see that $F_\Gamma\vdash U_\Gamma$.

Finally, note that we have the following commutative triangle of natural isomorphisms
\begin{diagram}
\Dcat(\Gamma.A)(\Delta\{\proj{\Gamma}{A}\},B\{\proj{\Gamma}{A}\})  & \rTo^{\textnormal{$!$-types}}_\cong &  \Dcat(\Gamma)(!A\otimes \Delta,B)\\
 &\rdTo^\cong_{\textnormal{Definition $\Sigma_!^\otimes$}} &\dTo^\cong_{\textnormal{Frobenius}} \\
 & & \Dcat(\Gamma.A)(\Sigma_{!A}^\otimes \Delta\{\proj{\Gamma}{A}\},B).
\end{diagram}
Note that the Beck-Chevalley condition for \mccorrect{$\Sigma_F^\otimes$} takes care of the substitution condition for $!$-types. Therefore, the existence of $!$-types boils down to the top isomorphism. Meanwhile, the Frobenius condition is by the Yoneda lemma equivalent to the right isomorphism. Noting that the diagonal always holds if we have $\Sigma_!^\otimes$-types, it follows that we have $!$-types iff we have Frobenius reciprocity.
\end{proof}
\begin{theorem}[Type Formers in $\Ccat$]\label{thm:inttyp} $\Ccat$ supports $\Sigma$-types iff $\mathsf{ob}(\Ccat)$ is closed under compositions (as morphisms in $\Bcat$). It supports $\Id$-types iff $\mathsf{ob}(\Ccat)$ is closed under postcomposition with maps $\mathsf{diag}_{\Gamma,A}$. If $\Dcat$ supports $!$- and $\Pi_!^\multimap$-types, then $\Ccat$ supports $\Pi$-types. Moreover, we have that\\
\resizebox{\linewidth}{!}{
$\Sigma^\otimes_{!A}! B\cong F(\Sigma_{U A}UB) \qquad\qquad \Id^\otimes_{!A}(!B)\cong F\Id_{UA}(UB)\qquad\qquad U\Pi^\multimap_{!B}C\cong \Pi_{UB}UC.
$}\end{theorem}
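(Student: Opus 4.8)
The statement is really four separate claims that all follow the same template: a property holds in $\Ccat$ exactly when a closure condition holds on the display maps, and the $!$-image of a linear type former agrees with the corresponding cartesian type former (up to $F$ and $U$). I would prove each of the four pieces in turn, in each case unwinding the definition of $\Ccat$ as the full subcategory of $\Bcat/-$ on the objects $\proj{\Gamma}{A}$ and using the adjunction characterisations of the linear type formers from Theorem \ref{thm:semtype}.

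First, for $\Sigma$-types in $\Ccat$: recall from the pure DTT semantics (Theorem ``Pure DTT Semantics'') that $\Ccat$ supports strong $\Sigma$-types iff $\proj{\Gamma}{\Sigma_CD}=\proj{\Gamma.C}{D};\proj{\Gamma}{C}$, i.e.\ iff the class $\mathsf{ob}(\Ccat)$ of display maps is closed under composition. Then I would establish the isomorphism $\Sigma^\otimes_{!A}!B\cong F(\Sigma_{UA}UB)$ by the Yoneda lemma: for $\Delta'\in\Dcat(\Gamma)$ one computes, using the adjunction $\Sigma^\otimes_{!A}\dashv -\{\proj{\Gamma}{A}\}$, then $!$-types, then the comprehension isomorphism, then the composite display map characterisation of $\Sigma$, and finally $F\dashv U$ once more, a chain of natural bijections
\begin{align*}
\Dcat(\Gamma)(\Sigma^\otimes_{!A}!B,\Delta')&\cong \Dcat(\Gamma.A)(!B,\Delta'\{\proj{\Gamma}{A}\})\\
&\cong \Bcat/(\Gamma.A)(\proj{\Gamma.A}{B},U\Delta'\{\proj{\Gamma}{A}\})\\
&\cong \Bcat/\Gamma(\proj{\Gamma.A}{B};\proj{\Gamma}{A},U\Delta')\\
&\cong \Dcat(\Gamma)(F(\Sigma_{UA}UB),\Delta').
\end{align*}
The third step is where closure of display maps under composition is used (to know $\proj{\Gamma.A}{B};\proj{\Gamma}{A}$ is again a display map $\proj{\Gamma}{\Sigma_{UA}UB}$). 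An entirely parallel argument, with $\proj{\Gamma}{A}$ replaced by $\mathsf{diag}_{\Gamma,A}$ and ``composition'' by ``postcomposition with $\mathsf{diag}_{\Gamma,A}$'', handles $\Id$-types and the isomorphism $\Id^\otimes_{!A}(!B)\cong F\Id_{UA}(UB)$, using the adjunction $\Id^\otimes_{!A}\dashv -\{\mathsf{diag}_{\Gamma,A}\}$ from Theorem \ref{thm:semtype}(7) together with the $\mathsf{diag}$-square pullback and the weak extensional $\Id$-type characterisation.

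For $\Pi$-types the direction is: assuming $\Dcat$ has $!$- and $\Pi_!^\multimap$-types, I would show $U\Pi^\multimap_{!B}C$ is right adjoint to $-\{\proj{\Gamma}{B}\}$ on $\Ccat$, i.e.\ represents $\Pi_{UB}UC$. Again by Yoneda: for $f:\Gamma'\to\Gamma$ and a display map $\proj{\Gamma}{A}$, using $F\dashv U$, compatibility of $F$ with substitution, the comprehension iso, the adjunction $-\{\proj{\Gamma}{B}\}\dashv\Pi^\multimap_{!B}$, and $F\dashv U$ once more,
\begin{align*}
\Bcat/\Gamma(\proj{\Gamma.B}{A\{\proj{\Gamma}{B}\}};\proj{\Gamma}{B},\,U\Pi^\multimap_{!B}C)&\cong \Dcat(\Gamma)(F(\text{that composite}),\Pi^\multimap_{!B}C)\\
&\cong \Dcat(\Gamma.B)(FU A\{\proj{\Gamma}{B}\}, C)\\
&\cong \Bcat/(\Gamma.B)(\proj{\Gamma.B}{A\{\proj{\Gamma}{B}\}},UC),
\end{align*}
which exhibits $U\Pi^\multimap_{!B}C$ with the universal property of $\Pi_{UB}UC$; the right Beck--Chevalley condition on $\Pi^\multimap_{!B}$ furnishes the right Beck--Chevalley condition needed for $\Pi$-types in $\Ccat$.

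\textbf{Main obstacle.} The routine parts are the Yoneda chases; the delicate point is bookkeeping the \emph{strictness and naturality}: one must check that each bijection above is natural in the ambient context (so that Beck--Chevalley on the linear side transfers correctly), and that the identifications $\proj{\Gamma.A}{B};\proj{\Gamma}{A}=\proj{\Gamma}{\Sigma_{UA}UB}$ really are equalities of display maps rather than mere isomorphisms in $\Bcat/\Gamma$ --- this is exactly the content of ``$\mathsf{ob}(\Ccat)$ is closed under composition'' and requires care because $\Ccat$ was built using the axiom of choice to pick pullbacks, so one should fix conventions (or, as the paper notes, work up to the evident equivalence) to make the composite display map canonical. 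The $\Id$-case has the extra subtlety that $\mathsf{diag}_{\Gamma,A}$ is itself built from $\der$, so one should double-check that postcomposing a display map with $\mathsf{diag}_{\Gamma,A}$ lands in $\mathsf{ob}(\Ccat)$ on the nose, using the cartesian-product equations assumed of $\der$ and $\mathsf{weak}$. Beyond that, everything reduces to the adjunction characterisations already proved, so I expect no further difficulty.
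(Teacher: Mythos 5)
Your overall strategy is the same as the paper's: unwind the comprehension and the adjunction characterisations from Theorem \ref{thm:semtype} and chase homsets until the Yoneda lemma identifies the relevant object. The $\Sigma$- and $\Id$-parts and the two isomorphisms $\Sigma^\otimes_{!A}!B\cong F\Sigma_{UA}UB$ and $\Id^\otimes_{!A}(!B)\cong F\Id_{UA}(UB)$ are correct and essentially the paper's computation (the paper first pins down $\Sigma_{\proj{\Gamma}{B}}f=f;\proj{\Gamma}{B}$ and $\Id_{\proj{\Gamma}{A}}(f)=f;\diag{\Gamma}{A}$ by Yoneda and then reuses that; you go through the pullback property of the $\mathbf{p}$-squares directly, which amounts to the same thing). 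Your closing remarks about strictness versus isomorphism are reasonable but not load-bearing.

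The $\Pi$-chain, however, does not work as displayed. You test $U\Pi^\multimap_{!B}C$ against the object $\proj{\Gamma.B}{A\{\proj{\Gamma}{B}\}};\proj{\Gamma}{B}$, i.e.\ against $\Sigma_{UB}(A\{\proj{\Gamma}{B}\})$. That is the wrong test object twice over: first, the universal property of $\Pi_{UB}UC$ is $\Ccat(\Gamma)(D,\Pi_{UB}UC)\cong\Ccat(\Gamma.B)(D\{\proj{\Gamma}{B}\},UC)$ for \emph{arbitrary} $D\in\Ccat(\Gamma)$, and objects of the form $\Sigma_{UB}(A\{\proj{\Gamma}{B}\})$ do not range over all of $\Ccat(\Gamma)$, so Yoneda does not apply; second, the middle step fails, since $F$ of that composite is $\Sigma^\otimes_{!B}\bigl((!A)\{\proj{\Gamma}{B}\}\bigr)$ and transposing along $\Sigma^\otimes_{!B}\dashv-\{\proj{\Gamma}{B}\}$ lands you in $\Dcat(\Gamma.B)\bigl((!A)\{\proj{\Gamma}{B}\},(\Pi^\multimap_{!B}C)\{\proj{\Gamma}{B}\}\bigr)$, not in $\Dcat(\Gamma.B)(FUA\{\proj{\Gamma}{B}\},C)$; the map $(\Pi^\multimap_{!B}C)\{\proj{\Gamma}{B}\}\to C$ is a counit, not an isomorphism. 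Moreover your version smuggles in closure of display maps under composition, which is not a hypothesis of the $\Pi$-claim. The correct chain (the paper's) takes an arbitrary $D=U_\Gamma D'$ and uses only $F\dashv U$, the substitution-compatibility of $F$, and $-\{\proj{\Gamma}{B}\}\dashv\Pi^\multimap_{!B}$:
\begin{align*}
\Ccat(\Gamma)(D,U\Pi^\multimap_{!B}C)&\cong\Dcat(\Gamma)(FD,\Pi^\multimap_{!B}C)\cong\Dcat(\Gamma.B)\bigl((FD)\{\proj{\Gamma}{B}\},C\bigr)\\
&=\Dcat(\Gamma.B)\bigl(F(D\{\proj{\Gamma}{B}\}),C\bigr)\cong\Ccat(\Gamma.B)(D\{\proj{\Gamma}{B}\},UC).
\end{align*}
With this substitution the rest of your argument goes through.
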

\begin{proof}We write out the adjointness condition
\begin{align*}\Ccat(\Gamma)(\Sigma_{\proj{\Gamma}{B}}f,\proj{\Gamma}{D})&\stackrel{!}{\cong } \Ccat(\Gamma.B)(f,\proj{\Gamma}{D}\{\proj{\Gamma}{B}\})\\
&\cong \Ccat(\Gamma.B)(f,\mathbf{p}_{\Gamma,D\{\proj{\Gamma}{B}\}})\\
&\cong \Dcat(\Gamma.B.C)(I,D\{\proj{\Gamma}{B}\}\{f\})\\
&\cong \Dcat(\Gamma.B.C)(I,D\{ f;\proj{\Gamma}{B}\})\\
&\cong \Ccat(\Gamma)(f;\proj{\Gamma}{B},\proj{\Gamma}{D}).
\end{align*}
Now, the Yoneda lemma gives us that $\Sigma_{\proj{\Gamma}{B}}f= f;\proj{\Gamma}{B}$.\\\\
Similarly,
\begin{align*}
\Ccat(\Gamma.A.A)(\Id_{\proj{\Gamma}{A}}(f),\proj{\Gamma.A.A}{C})&\stackrel{!}{\cong} \Ccat(\Gamma.A)(f,\proj{\Gamma.A.A}{C}\{\mathsf{diag}_{\Gamma,A}\})\\
&\cong \Dcat(\Gamma.A.B)(I,C\{\mathsf{diag}_{\Gamma,A}\}\{f\})\\
&\cong \Dcat(\Gamma.A.B)(I,C\{f;\mathsf{diag}_{\Gamma,A} \})\\
&\cong \Ccat(\Gamma.A.A)(f;\mathsf{diag}_{\Gamma,A}, \proj{\Gamma.A.A}{C}),
\end{align*}
so $f;\mathsf{diag}_{\Gamma,A}$ models $\Id_{\proj{\Gamma}{A}}(f)$.\\
\\
Finally,
\begin{align*}
\Ccat(\Gamma)(U_\Gamma D,\Pi_{\proj{\Gamma}{B}}\proj{\Gamma.B}{C})&\stackrel{!}{\cong} \Ccat(\Gamma.B)((U_\Gamma D)\{\proj{\Gamma}{B}\},\proj{\Gamma.B}{C})\\
&\cong \Ccat(\Gamma.B)((U_\Gamma D)\{\proj{\Gamma}{B}\},U_{\Gamma.B}C)\\
&\cong \Dcat(\Gamma.B)(F_{\Gamma.B}((U_\Gamma D)\{\proj{\Gamma}{B}\}),C)\\
&\cong \Dcat(\Gamma.B)((F_{\Gamma}U_\Gamma D)\{\proj{\Gamma}{B}\},C)\\
&\cong \Dcat(\Gamma)(F_{\Gamma}U_\Gamma D,\Pi^\multimap_{!B}C)\\
&\cong \Ccat(\Gamma)(U_\Gamma D,U_{\Gamma}\Pi^\multimap_{!B}C).
\end{align*}
Again, using the Yoneda lemma, we conclude that $U_{\Gamma}\Pi^\multimap_{!B}C$ models $\Pi_{U_{\Gamma}B}U_{\Gamma.B}C$.\\
\\
In all cases, we have not worried about Beck-Chevalley (and Frobenius reciprocity for $\Sigma_!^\otimes$-types) as they are trivially seen to hold.\\
\\
Note that if $\Dcat$ has $\Sigma_!^\otimes$-types (and, therefore, $!$-types), then
\begin{align*}\Dcat(\Gamma)(F_\Gamma(\Sigma_{U_\Gamma A}U_{\Gamma.A}B),C)&\cong\Ccat(\Gamma)(\Sigma_{U_\Gamma A}U_{\Gamma.A} B,U_\Gamma C)\\
&\cong \Ccat(\Gamma.A)(U_{\Gamma.A} B,(U_\Gamma C)\{\proj{\Gamma}{A}\})\\
&\cong \Ccat(\Gamma.A)(U_{\Gamma.A} B,U_{\Gamma.A} (C\{\proj{\Gamma}{A}\}))\\
&\cong \Dcat(\Gamma.A)(! B,C\{\proj{\Gamma}{A}\})\\
&\cong \Dcat(\Gamma)(\Sigma^\otimes_{!A}! B,C).
\end{align*}
By the Yoneda lemma, conclude that $\Sigma^\otimes_{!A}! B\cong F_\Gamma (\Sigma_{U_\Gamma A}U_{\Gamma.A}B)$.\\
\\
Note that, in case $\Dcat$ admits $!$- and $\Id_!^\otimes$-types,
\begin{align*}\Dcat(\Gamma.A.A)(\Id^\otimes_{!A}(!B),C)&\cong \Dcat(\Gamma.A)(!B,C\{\mathsf{diag}_{\Gamma,A}\})\\
&\cong \Ccat(\Gamma.A)(U_{\Gamma.A} B,U_{\Gamma.A}(C\{\mathsf{diag}_{\Gamma,A}\}))\\
&\cong \Ccat(\Gamma.A)(U_{\Gamma.A} B,U_{\Gamma.A.A}(C)\{\mathsf{diag}_{\Gamma,A}\})\\
&\cong \Ccat(\Gamma.A.A)(( U_{\Gamma.A} B);\mathsf{diag}_{\Gamma,A},U_{\Gamma.A}(C))\\
&\cong \Ccat(\Gamma.A.A)(\Id_{U_\Gamma A}(U_{\Gamma.A} B),U_{\Gamma.A}(C))\\
&\cong \Dcat(\Gamma.A.A)(F_{\Gamma.A.A}\Id_{U_{\Gamma}A}(U_{\Gamma.A}B),C).
\end{align*}
We conclude that $\Id^\otimes_{!A}(!B)\cong F_{\Gamma.A.A}\Id_{U_\Gamma A}(U_{\Gamma.A}B)$ and in particular $\Id^\otimes_{!A}(I)\cong F_{\Gamma.A.A}\Id_{U_{\Gamma}A}(\id_{\Gamma.A})$. (The last statement is easily seen to also be valid in absence of $\top$-types.)
\end{proof}
\begin{remark}[Dependent Seely Isomorphisms?] Note that, in our setup, we have a version of the simply typed Seely isomorphisms in each fibre. Indeed, suppose $\Dcat$ supports $\top$-, $\&$\mccorrect{-}, and $!$-types. Then, $U_\Gamma(\top)=\id_\Gamma$ and $U_\Gamma(A\& B)=U_\Gamma(A)\times U_\Gamma(B)$, as $U_\Gamma$ has a left adjoint and therefore preserves products. Now, $F_\Gamma$ is strong monoidal and $!_\Gamma=F_\Gamma U_\Gamma$, so it follows that $!_\Gamma\top=I$ and $!_\Gamma(A\& B)=!_\Gamma A\otimes !_\Gamma B$.

Now, theorem \ref{thm:inttyp} suggests the possibility of similar Seely isomorphisms  for $\Sigma_!^\otimes$-types and $\Id_!^\otimes$-types. Indeed, $\Ccat$ supports $\Sigma$-types iff we have additive $\Sigma$-types in $\Dcat$ in the sense of objects $\Sigma_A^{\&} B$ such that
$$U\Sigma_A^{\&} B\cong \Sigma_{UA}UB\txt{and hence} !\Sigma_A^{\&} B\cong \Sigma^\otimes_{!A} !B.$$
In an ideal world, one would hope that $\Sigma_A^{\&} B$ generalise\mccorrect{s} $A\& B$ in a similar way as how $\Sigma^\otimes_{!A} B$ is a dependent generalisation of $!A\otimes B$. In fact, it is easily seen that such categorical $\Sigma^{\&}$-types precisely (soundly and completely) correspond with the syntactic rules of figure \ref{fig:additivesigma}, where we see a slight mismatch with $\&$-types in the sense that the introduction and elimination rules only apply for cartesian contexts (without linear assumptions), here.

\begin{figure}
\fbox{
\resizebox{\linewidth}{!}{
\begin{tabular}{ll}
\AxiomC{$\vdash \Gamma,x:A,y:B;\cdot\ctxt$}
\RightLabel{$\Sigma^{\&}$-\textsf{F}}
\UnaryInfC{$\Gamma\vdash \Sigma_{x:A}^{\&}B\type$}
\DisplayProof\hspace{70pt}\;
& \AxiomC{$ \Gamma;\cdot \vdash a:A$}
\AxiomC{$ \Gamma;\cdot \vdash b:B[a/x]$}
\RightLabel{$\Sigma^{\&}$-\textsf{I}}
\BinaryInfC{$\Gamma;\cdot\vdash \langle a,b\rangle: \Sigma_{x:A}^{\&}B$}
\DisplayProof\hspace{70pt}\;\\
&\\
\AxiomC{$\Gamma;\cdot\vdash t: \Sigma_{x:A}^{\&}B$}
\RightLabel{$\Sigma^{\&}$-\textsf{E1}}
\UnaryInfC{$\Gamma;\cdot\vdash \fst(t):A$}
\DisplayProof
&
\AxiomC{$\Gamma;\cdot\vdash t: \Sigma_{x:A}^{\&}B$}
\RightLabel{$\Sigma^{\&}$-\textsf{E2}}
\UnaryInfC{$\Gamma;\cdot\vdash \snd(t):B[\fst(t)/x]$}
\DisplayProof
\end{tabular}
}
}
\caption{\label{fig:additivesigma} Rules for additive $\Sigma$-types. We also demand the obvious $\beta$- and $\eta$-equations.}
\end{figure}

Similarly, we get a notion of additive $\Id$-types: $\Ccat$ supports $\Id$-types iff we have objects $\Id_A^{\&}(B)$ in $\Dcat$ such that
$$U\Id_A^{\&}(B)\cong \Id_{UA}(UB)\txt{and hence} !\Id_A^{\&}(B)\cong \Id_{!A}^\otimes(!B).$$
 Note that this suggests that, in the same way that $\Id^\otimes_{!A}(B)\cong \Id^\otimes_{!A}(I)\otimes B$ (a sense in which usual $\Id_!^\otimes$-types are multiplicative connectives), $\Id_A^{\&}(B)\cong \Id_A^{\&}(\top)\& B$. In fact, if we have $\top$- and $\&$-types, we only have to give $\Id_A^{\&}(\top)$ and can then {define} $\Id_A^{\&}(B):=\Id_A^{\&}(\top)\& B$ to obtain additive $\Id$-types in generality.

In the light of theorem \ref{thm:inttyp}, we obtain such additive $\Sigma$- and $\Id$-types in the fibre over $\Gamma$ if some $U_\Gamma$ is essentially surjective. In particular, we are in this situation if $F_\cdot \dashv U_\cdot$ is the usual co-Kleisli adjunction of $!_\cdot$, where $\Ccat(\cdot)\cong\Bcat$. This shows that if we are hoping to obtain a model of dDILL indexed over the co-Kleisli category, in the natural way, we need to support these additive connectives.

From experience, it seems like the natural models of dDILL do not generally support them\mccorrect{, meaning that co-Kleisli categories often fail to give models of dependent types}. Similarly, it is difficult to come up with an intuitive interpretation of the meaning of such connectives, in the sense of a resource interpretation.

To get some intuition of why such objects may be problematic, note that the usual resource interpretation $A\& B$ is as follows: we either have (a resource of type) $A$ or $B$. This means that we would expect a $\Sigma_A^{\&} B$-type, which should be a dependent generalisation of the ordinary $\&$-type, to have an additive reading too. However, $B$ represents a predicate on $A$, so, if we have an object $c$ of type $\Sigma^{\&}_AB$, we are in the situation that we can either produce an object $\fst c$ of type $A$ or an object $\snd c$ embodying a property $B$ of $\fst c$.

\begin{mccorrection}
This is like the Cheshire cat of Alice in Wonderland of figure \ref{fig:cat}: let $A$ be the type of cats and let $B$ be the predicate ``is grinning''. Then, $\fst c$ corresponds to the cat and $\snd c$ may be thought to embody having a grin (of a cat) without having the cat.
\end{mccorrection}

In section \ref{sec:depprojprod}, we shall see a similar problem in the operational semantics of terms of such types. Terms of linear types generally represent dynamic objects. We shall see that the term $\fst c$, like the Cheshire cat, can lose information (e.g. the cat disappears; the term $\fst c$, for instance, could be a computation which proceeds to make a non-deterministic choice or print to console) after which properties $\snd c$ which held true of $\fst c$ before the change no longer make sense (e.g. the cat is grinning; the program $\fst c$ is going to make a non-deterministic choice or prints $\mathtt{hello\; world}$ to console). 
\end{remark}
\begin{figure}[!tb]\centering \reflectbox{
\includegraphics[scale=.4]{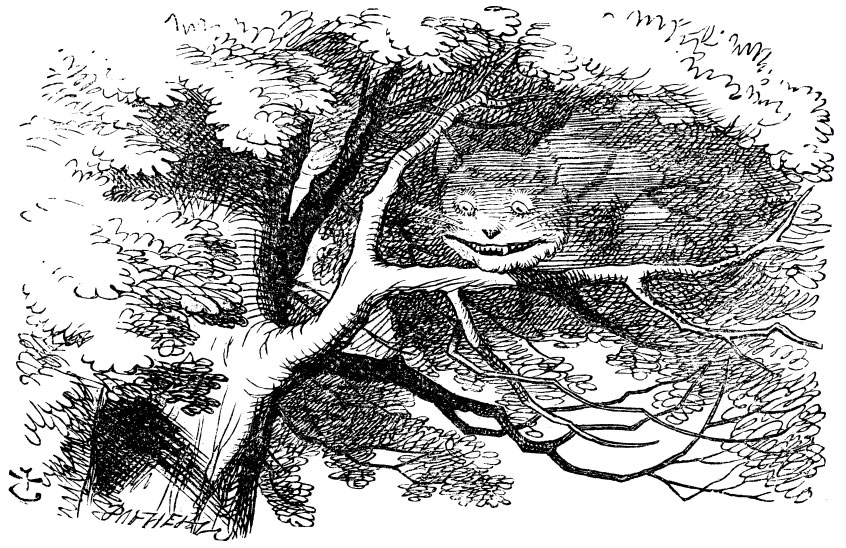}}
\caption{\label{fig:cat} We encourage the reader to compare the idea of additive $\Sigma$-types with Lewis Carroll's invention of the Cheshire cat. \\ \textit{``Well! I've often seen a cat without a grin,'' thought Alice; ``but a grin
without a cat! It's the most curious thing I ever saw in all my life.''.}~\cite{carroll1965annotated}}
\end{figure}
\section{dLNL Calculus}\label{sec:deplnl}
Independently from the author, Krishnaswami, Pradic and Benton developed a syntax for a dependently typed version of the LNL calculus in \cite{krishnaswami2015integrating}, which we refer to as the dLNL calculus. It is a system with both cartesian types and linear types both of which are allowed to depend on terms of cartesian types, but not linear types. The cartesian type formers they consider are $1$-, (strong) $\Sigma$-, $\Pi$- and extensional $\Id$-types as well as universes \mccorrect{(as cartesian types)} that code for both linear and cartesian types and, finally, $U$-types which map linear types to cartesian types. The linear type formers they consider are $I$-, $\otimes$-, $\multimap$-, $\top$-, $\&$-, $\csigma{-}{}$- and $\cpi{-}{}$-types and, finally, $F$-types which map cartesian types to linear types. In their work, they discuss an operational semantics but do not provide a denotational semantics. Therefore, we believe it might be useful to point out that our categorical framework can easily be adapted to model their dependent LNL calculus (minus universes, which can be given their usual awkward categorical semantics \cite{mendler1991predictive}). 

\begin{theorem}[Dependent LNL Calculus Semantics] A sound and complete categorical semantics for the universe-free fragment of the dependent LNL calculus of \cite{krishnaswami2015integrating} is given by the following:
\begin{itemize}
\item a model $\Bcat^{op}\ra{\Ccat}\Cat$ of pure cartesian dependent type theory in the sense of an indexed category (with an indexed terminal object) with full and faithful comprehension $(\mathbf{p},\mathbf{v},\langle -- ,- \rangle)$;
\item strong $\Sigma$-types in $\Ccat$;
\item strong \mccorrect{(extensional)} $\Id$-types in $\Ccat$;
\item $\Pi$-types in $\Ccat$;
\item an indexed symmetric monoidal closed category $\Bcat^{op}\ra{\Dcat}\mathsf{SMCCat}$;
\item indexed finite products $(\top,\&)$ in $\Dcat$;
\item a linear/non-linear indexed adjunction\footnote{Note that it is, in fact, enough to merely ask for an indexed functor $\Dcat\ra{U}\Ccat$, as we then automatically obtain a linear/non-linear adjunction by defining $F A:=\csigma{A}{I}$. } $F\dashv U:\Ccat\leftrightarrows \Dcat$;
\item $\csigma{-}{}$-types in $\Dcat$ in the sense of left adjoints $\csigma{A}{}\dashv \Dcat(\proj{\Gamma}{A})$, for all display maps $\Gamma.A\ra{\proj{\Gamma}{A}}\Gamma$, satisfying the left Beck-Chevalley condition for all $\mathbf{p}$-squares and Frobenius reciprocity in the sense that the canonical maps 
$$
\csigma{A}{(\Delta'\{\mathbf{p}_{\Gamma,{A}}\}\otimes B)}\ra{} \Delta'\otimes \csigma{{A}}{B}
$$
are isomorphisms;
\item $\cpi{-}{}$-types in $\Dcat$ in the sense of right adjoints $\Dcat(\proj{\Gamma}{A})\dashv \cpi{A}{}$ satisfying the right Beck-Chevalley condition for all $\mathbf{p}$-squares.
\end{itemize}
\end{theorem}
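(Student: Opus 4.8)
The plan is to establish soundness and completeness separately, both by the same strategy already used for dDILL earlier in the chapter: translate the syntax of the dLNL calculus into a ``tautological'' notion of model, and then show this tautological notion is equivalent to the categorical one listed in the statement. For soundness, I would interpret a cartesian context $\Gamma$ as an object $\sem{\Gamma}\in\Bcat$ (built by iterated comprehension, using that $\Ccat$ has full and faithful comprehension so the interpretation of cartesian terms is forced), a cartesian type in context $\Gamma$ as an object of $\Ccat(\sem{\Gamma})$, a linear type as an object of $\Dcat(\sem{\Gamma})$, a linear context $\Gamma;\Delta$ as the $\otimes$-product $\bigotimes\sem{\Delta}\in\Dcat(\sem{\Gamma})$, and a linear term $\Gamma;\Delta\vdash b:B$ as a morphism $\bigotimes\sem{\Delta}\ra{}\sem{B}$ in $\Dcat(\sem{\Gamma})$. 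The cartesian type formers $1,\Sigma,\Pi,\Id$ are interpreted exactly as in Theorem (Pure DTT Semantics) using the stated structure on $\Ccat$; the linear type formers $I,\otimes,\multimap,\top,\&$ are interpreted fibrewise using that $\Dcat$ factors through $\mathsf{SMCCat}$ with fibrewise finite products (compatibility with substitution is exactly the requirement that the change-of-base functors are strict symmetric monoidal and preserve products); and $\csigma{A}{}$, $\cpi{A}{}$ are interpreted by the adjoints to $\Dcat(\proj{\Gamma}{A})$, with Beck--Chevalley giving stability under substitution and Frobenius giving the version of the $\csigma{-}{}$-elimination rule that allows a non-empty linear context $\Delta'$ (just as in Theorem \ref{thm:semtype}(1)). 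The $U$- and $F$-types are interpreted by the indexed adjunction $F\dashv U$; here I would reuse the argument of Theorem \ref{thm:comprfunc} to see that $U$ is automatically a morphism of indexed symmetric monoidal categories, lax monoidal with $U_\Gamma$ preserving the cartesian structure, and the footnote remark that it suffices to ask only for $U$ since $FA:=\csigma{A}{I}$ recovers a left adjoint — so the $F$-introduction/elimination rules and their $\beta\eta$-laws become exactly the unit/counit triangle identities of $F_\Gamma\dashv U_\Gamma$.

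\textbf{Completeness.} For the converse I would run the standard term-model construction: from a theory $\mathbb{T}$ of the dLNL calculus, build $\Bcat^{\mathbb{T}}$ to have cartesian contexts as objects and context morphisms (lists of cartesian terms, modulo judgemental equality) as morphisms — terminal object the empty context — exactly as in the Co-Soundness proof for dDILL. Put $\Ccat^{\mathbb{T}}(\Gamma)$ to have cartesian types in context $\Gamma$ as objects and cartesian terms $\Gamma.A\vdash b:B\{\proj{\Gamma}{A}\}$ as morphisms (this is where full and faithful comprehension comes from), and $\Dcat^{\mathbb{T}}(\Gamma)$ to have linear types as objects and linear terms $\Gamma;A\vdash b:B$ as morphisms, with $\otimes$ given by the (multiplicative) conjunction and $I$ the unit, $\multimap$ giving closure, $\&,\top$ the fibrewise products. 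The $U$-rules give a functor $U_\Gamma:\Dcat^{\mathbb{T}}(\Gamma)\ra{}\Ccat^{\mathbb{T}}(\Gamma)$, the $F$-rules its left adjoint; $\csigma{-}{}$ and $\cpi{-}{}$ give the adjoints to substitution along projections, with Beck--Chevalley and Frobenius read off from the substitution-commutation and the shape of the elimination rules, and the cartesian $\Sigma,\Pi,\Id$ rules give the corresponding adjoints on $\Ccat^{\mathbb{T}}$ as in Theorem \ref{thm:inttyp}. One then checks that these two constructions are mutually inverse up to equivalence (with co-completeness failing exactly for non-democratic models, as before), which yields the claimed biequivalence between theories and models and hence soundness together with completeness.

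\textbf{Main obstacle.} The genuinely delicate point is not any single type former but the bookkeeping at the interface of the two type theories: making sure the interpretation is well-defined on \emph{judgemental-equality classes} across the $F$/$U$ boundary, i.e. that the Seely-style isomorphisms $U(\top)\cong\id$, $U(A\&B)\cong UA\times UB$, $F\,\csigma{A}{I}$-style identities, and the compatibility $F_\Gamma;\Dcat(f)=\Ccat(f);F_{\Gamma'}$ all hold strictly enough that substitution (which is strict in the syntax) is modelled on the nose. Concretely, one must verify that $U_\Gamma$ strictly commutes with change of base — which follows from the pullback-square argument of Theorem \ref{thm:comprfunc} — and that the linear contexts are interpreted coherently as $\otimes$-products (the equivalence of the multicategory of linear contexts with the monoidal category of linear types, noted after the definition of semantic $I$/$\otimes$-types), so that parallel substitution into a linear context is well-defined. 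Once these coherence checks are in place, every rule of the dLNL calculus corresponds to exactly one piece of the listed categorical structure, and the soundness/completeness argument is a routine adaptation of the dDILL case; I would state it as such rather than reproving the analogues of Theorems \ref{thm:comprfunc}, \ref{thm:semtype} and \ref{thm:inttyp} from scratch.
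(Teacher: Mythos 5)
Your proposal is correct and follows exactly the route the paper takes: the paper's own proof is the single sentence that the argument is ``entirely analogous to that for the categorical semantics of dependently typed DILL'', and your sketch is precisely that analogy spelled out (soundness by interpretation, completeness by the syntactic/term-model construction, with the type formers handled as in the dDILL semantic-type-former theorem). Your identification of the strictness and coherence bookkeeping at the $F$/$U$ boundary as the only delicate point is a useful addition the paper leaves implicit.
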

\begin{proof}
The proof is entirely analogous to that for the categorical semantics of dependently typed DILL.
\end{proof}
Again, we see that the main distinction between DILL and the LNL calculus is that the latter considers the cartesian category $\Ccat$ part of the structure while the DILL only assumes its existence. Note, in particular, that a model of the dependent LNL calculus in the sense described above defines a model of dependently typed DILL with $I,\otimes,\multimap,\top,\&,!,\Sigma_!^\otimes$ and $\Pi_!^\otimes$-types. Indeed, the comprehension on $\Ccat$ together with the adjunction $F\dashv U$ is easily seen to give a comprehension for $\Dcat$. However, we see that as $U$ may not be full and faithful, a full and faithful comprehension for $\Ccat$ may not give a full and faithful comprehension for $\Dcat$ (just like in our semantics for dependently typed DILL).

\section{Girard Translations}\label{sec:depgirard}
By contrast with the simply typed situation, I believe there are reasons to prefer an LNL calculus over a DILL-style system when working with dependent types. This is because the Girard translations fail for dDILL, meaning that it does not suffice to encode cartesian dependent type theory. Meanwhile, the LNL calculus is a proper extension of cartesian type theory, so, in particular, has more expressive power. Later, in chapter \ref{ch:3}, we shall see another reason to prefer a LNL-style calculus, motivated by separating proving and programming. The idea is that cartesian types are for pure proofs while linear types are assigned to (commutative) effectful programs.

Let us explain why the Girard translations of section \ref{sec:girardtrans} do not generalise well to dependent type theory in their conventional form. While we can define without any problems $(\Pi_A B)^f:=\Pi_{!A^f}^\multimap B^f$, we run into problems with the first Girard translation of $\Sigma$-types. One would expect the first Girard translation of $\Sigma_{A}B$ to take the form of $\Sigma$-types $\Sigma_{A^f}^{\&} B^f$. We can indeed define this, but we know such connectives are often problematic from a denotational and operational point of view. Similarly, we would expect $(\Id_A)^f:=\Id_{A^f}^{\&}$.

The second Girard translation is even more problematic. We would expect to define $(\Pi_AB)^s:=!\Pi_{A^s}^\multimap {B^s}$ for some linear dependent function type $\Pi_A^\multimap B$ which generalises $A\multimap B$. We encounter a similar problem when defining $(\Sigma_AB)^s$ and $(\Id_A)^s$ which we would expect to be $\Sigma_{A^s}^\otimes B^s$ and $\Id_{A^s}^\otimes$ for some dependent connectives $\Sigma^\otimes$ generalising $\otimes$ and a connective $\Id^\otimes$ which takes the multiplicative identity type of linear terms. Such $\Pi^\multimap$, $\Sigma^\otimes$ and $\Id^\otimes$ types are even more problematic, in a sense, than additive $\Sigma$-type\mccorrect{s}, as we simply cannot formulate natural deduction rules for such connectives in a system with types depending only on cartesian assumptions and not linear ones.

One can wonder if a satisfactory translation $(-)^t$ can be obtained by mixing the first and second Girard translations: using $(\Pi_{A}B)^t:=\Pi_{!A^t}B^t$, $(\Sigma_{A}B)^t:=\Sigma_{!A^t}^\otimes !B^t$ and $(\Id_A)^t:=\Id_{!A}^\otimes$. It is easily seen that this leads to violation of the $\eta$-rules for $\Sigma$-types (as we are effectively modelling pattern matching $\Sigma$-types in CBN) and $\Id$-types. It is at present unclear to us if such a translation still has value.

At the level of categorical semantics, the first Girard translation takes the form of the idea of modelling cartesian type theory in the co-Kleisli category $\Dcat_!$ for $!$ of a model $\Dcat$ of linear type theory. If we start with a model $\Bcat^{op}\ra{\Dcat}\Cat$ of dDILL with $!$-types and $\top$-types, it can easily be seen that the fibrewise co-Kleisli category $\Dcat_!$ is a model of cartesian dependent type theory (with full and faithful comprehension). Indeed, $\Dcat_!(\Gamma')(\top,A\{f\})\cong\Dcat(I,A\{f\})\cong\Bcat/\Gamma(f,\proj{\Gamma}{A})$ and
\begin{align*}
\Dcat_!(\Gamma')(A,B)&\cong \Dcat(\Gamma)(!A,B)\\
&\cong \Dcat(\Gamma.A)(I,B\{\proj{\Gamma}{A}\})\\
&\cong \Bcat/\Gamma(\proj{\Gamma}{A},\proj{\Gamma}{B}).
\end{align*}
$\Sigma$-types in $\Dcat_!$ are easily seen to correspond precisely to $\Sa$-types in $\Dcat$, which do not generally exist.

We can gain more insight into what is going on, by embedding the co-Kleisli category in the co-Eilenberg-Moore category, effectively closing it under certain equalisers (as every coalgebra has a presentation in terms of cofree coalgebras). Under that embedding, $A\& B$ is mapped to the cofree coalgebra $!A\otimes !B \cong !(A\&B)\ra{\delta_{A\& B}}!!(A\& B)\cong !(!A\otimes !B)$. Hence, we would expect $\Sa_A B$ to embed as a coalgebra $\Sigma_{!A}^\otimes !B\ra{}! \Sigma_{!A}^\otimes !B$. On closer inspection, it turns out that while (as in the simply typed case \cite{benton1995mixed}) we can always define a canonical coalgebra structure\footnote{Indeed, given a coalgebra $B\ra{k}!B$ in $\Dcat(\Gamma.A)$, we can define the coalgebra\\
\\
\resizebox{\linewidth}{!}{
\AxiomC{}
\UnaryInfC{$\Gamma,x:A,y:B;\cdot\vdash x:A$}
\AxiomC{}
\UnaryInfC{$\Gamma,x:A,y:B;\cdot\vdash y:B$}
\BinaryInfC{$\Gamma,x:A,y:B;\cdot\vdash !x\otimes y:\Sigma^\otimes_{!A}B$}
\UnaryInfC{$\Gamma,x:A,y:B;\cdot\vdash !(!x\otimes y):!\Sigma^\otimes_{!A}B$}
\UnaryInfC{$\Gamma,x:A;z:!B\vdash \lbi{z}{!y}{!(!x\otimes y)}:!\Sigma^\otimes_{!A}B$}
\AxiomC{}
\UnaryInfC{$\Gamma,x:A;w:B\vdash k:!B$}
\BinaryInfC{$\Gamma,x:A;w:B\vdash \lbi{k}{!y}{!(!x\otimes y)}:!\Sigma^\otimes_{!A}B$}
\UnaryInfC{$\Gamma;v:\Sigma_{!A}^\otimes B\vdash \lbi{v}{!x\otimes w}{\lbi{k}{!y}{!(!x\otimes y)}}:!\Sigma^\otimes_{!A}B$.}
\DisplayProof\hspace{120pt}\;
}\\
\\
We can, in particular, do this for the case that $k=\delta_C$.} on $\Sigma_{!A}^\otimes !B$, this coalgebra structure is not always a cofree one $\delta_{\Sa_AB}$.

However, while we can always define $\Sigma$-types in the co-Eilenberg-Moore category, we have no guarantee that $\Pi$-types exist, by contrast with the co-Kleisli category where $\Pi_{!A}^\multimap B$ is the $\Pi$-type of $A$ and $B$. Recalling that the co-Kleisli adjunction is the initial adjunction giving rise to $!$ and the co-Eilenberg-Moore one the terminal, we can hope to find a sweet spot $\Ccat$ in between $\Dcat_!$ and $\Dcat^!$ which is closed under both $\Sigma$- and $\Pi$-types. If no natural candidate for $\Ccat$ is available, we could, for instance,  try to inductively close $\Dcat_!$ under $\Sigma$-types in $\Dcat^!$ (we work with formal $\Sigma$-types of cofree coalgebras) or coinductively close $\Dcat^!$ under the $\Pi$-types of $\Dcat_!$ (we work with a category of exponentiable coalgebras). We choose to employ the former technique in the setting of game semantics. As we have an isomorphism of types $\Pi_{x:A}\Sigma_{y:B}C\cong \Sigma_{f:\Pi_{x:A}B}\Pi_{x:A}C[f(x)/y]$, we would expect these (co)inductively constructed categories to be  closed under both $\Sigma$- and $\Pi$-types. 

\section{Concrete Models}\label{sec:concretemodels}

\subsection{Some Discrete Models: Monoidal Families}
\label{sec:dismod}
We discuss a simple class of models in terms of families with values in a symmetric monoidal category. On a logical level, what the construction boils down to is starting with a model $\mathcal{V}$ of a linear propositional logic and taking the cofree linear predicate logic on $\Set$ with values in this propositional logic. This important example illustrates how $\Sigma_!^\otimes$- and $\Pi_!^\multimap$-types can represent infinitary additive disjunctions and conjunctions. The model is discrete in nature, however, and in that respect not representative for the type theory.

Suppose $\mathcal{V}$ is a symmetric monoidal category. We can then consider a strict $\Set$-indexed category, defined through the following enriched Yoneda embedding $\Fam(\mathcal{V}):={\mathcal{V}}^{-}:=\mathsf{SMCat}(-,\mathcal{V})$:
\begin{diagram}
\Set^{op} & \rTo^{\Fam(\mathcal{V})} &\mathsf{SMCat} & & & S\ra{f}S' & & \rMapsto & \mathcal{V}^S\stackrel{ f;-}{\longleftarrow} \mathcal{V}^{S'}.
\end{diagram}
Note that this definition naturally extends to a functor $\Fam$.
\begin{theorem}[Families Model dDILL] \label{thm:familiesmodelddill}The construction $\Fam$ adds type dependency on $\Set$ cofreely in the sense that it is right adjoint to the forgetful functor $\mathsf{ev}_1$ that evaluates a model of dDILL at the empty context to obtain a model of linear propositional type theory (where $\mathsf{SMCat}_{\mathsf{compr}}^{\Set^{op}}$ is the full subcategory of $\mathsf{SMCat}^{\Set^{op}}$ on the objects with comprehension):
\begin{diagram} 
\mathsf{SMCat} & \pile{\lTo^{\mathsf{ev}_1} \\ \bot \\ \rInto_{\Fam}} & \mathsf{SMCat}_{\mathsf{compr}}^{\Set^{op}}.
\end{diagram}
\end{theorem}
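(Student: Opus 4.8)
The statement is an adjunction between the forgetful functor $\mathsf{ev}_1$ and the families functor $\Fam$. Since both are functors between (essentially) 1-categories of structured categories, the plan is to exhibit the unit and counit and check the triangle identities, but it is cleanest to prove the universal property directly: for a model $\Bcat^{op}\ra{\Dcat}\mathsf{SMCat}$ of dDILL (with comprehension) and a symmetric monoidal category $\mathcal{V}$, I would establish a natural bijection between strict monoidal functors $\mathsf{ev}_1(\Dcat) = \Dcat(\cdot)\ra{}\mathcal{V}$ in $\mathsf{SMCat}$ and morphisms of indexed categories with comprehension $\Dcat\ra{}\Fam(\mathcal{V})$. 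First I would unpack what a morphism $\Dcat\ra{G}\Fam(\mathcal{V})$ over a functor $\Bcat\ra{g}\Set$ amounts to: it is a choice, for each $\Gamma\in\Bcat$, of a strict monoidal functor $G_\Gamma:\Dcat(\Gamma)\ra{}\mathcal{V}^{g(\Gamma)}$, natural in $\Gamma$, compatible with comprehension. The key observation is that $\mathcal{V}^{g(\Gamma)}$ is a product of copies of $\mathcal{V}$ indexed by the set $g(\Gamma)$, so $G_\Gamma$ is determined by its components $G_\Gamma^{(s)}:\Dcat(\Gamma)\ra{}\mathcal{V}$ for $s\in g(\Gamma)$.

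The heart of the argument is to show that naturality plus the comprehension compatibility forces $g$ to be (naturally isomorphic to) the functor $\Gamma\mapsto \Dcat(\Gamma)(I,-)$-represented \emph{global sections} functor, i.e. $g(\Gamma) \cong \Bcat(\cdot,\Gamma)$, and that each component $G_\Gamma^{(s)}$ is recovered by reindexing along the point $s:\cdot\ra{}\Gamma$ followed by the single functor $G_\cdot:\Dcat(\cdot)\ra{}\mathcal{V}^{g(\cdot)}\cong\mathcal{V}^{\{*\}}\cong\mathcal{V}$. Concretely: given $F:\Dcat(\cdot)\ra{}\mathcal{V}$, I would define $\widehat{F}_\Gamma:\Dcat(\Gamma)\ra{}\mathcal{V}^{\Bcat(\cdot,\Gamma)}$ by $\widehat{F}_\Gamma(A) = \big(s\mapsto F(A\{s\})\big)$, check this is strict monoidal (since reindexing $-\{s\}$ is strict monoidal and $F$ is), check naturality in $\Gamma$ (immediate: $(A\{f\})\{s\} = A\{s;f\}$, using the contravariant reindexing of $\Fam(\mathcal{V})$ along the postcomposition-by-$g(f)$ map), and check that it preserves comprehension — here I use that in the families model the comprehension structure on $\Fam(\mathcal{V})$ is the tautological one where $\Gamma.A$ over a set $g(\Gamma)=S$ together with a family $A\in\mathcal{V}^S$ has underlying set $\Sigma_{s\in S}\,\mathcal{V}(I,A_s)$, which is exactly tracked by $\Bcat(\cdot,\Gamma.A)\cong \Sigma_{s:\cdot\to\Gamma}\Dcat(\cdot)(I,A\{s\})$ via the comprehension isomorphism of $\Dcat$. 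Conversely, from $G:\Dcat\ra{}\Fam(\mathcal{V})$ I extract $\mathsf{ev}_1(G) = G_\cdot:\Dcat(\cdot)\ra{}\mathcal{V}$ (using $g(\cdot)\cong\{*\}$, which follows because $\cdot$ is terminal in $\Bcat$ and $g$ must send it to a terminal-ish object compatibly with comprehension — this is the point where the comprehension axiom on the source is essential). Then I would verify the two round-trips are the identity, the non-trivial direction being $G \mapsto G_\cdot \mapsto \widehat{G_\cdot}$, which reduces to showing $G_\Gamma(A)_s = G_\cdot(A\{s\})$; this is precisely the content of naturality of $G$ applied to the morphism $s:\cdot\ra{}\Gamma$ in $\Bcat$, combined with identifying the reindexing $\Fam(\mathcal{V})(s)$ with the projection $\mathcal{V}^{g(\Gamma)}\ra{}\mathcal{V}^{g(\cdot)}=\mathcal{V}$ at the coordinate $s$.

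The main obstacle I anticipate is the bookkeeping around the base category $\Bcat$ and the functor $g = \mathsf{ev}_1$-adjoint's base part: strictly speaking a morphism of indexed categories with comprehension carries a base functor, and I must show that the base functor underlying any $G:\Dcat\ra{}\Fam(\mathcal{V})$ is forced to be the global-points functor $\Bcat(\cdot,-)$ up to canonical iso (so that $\Fam(\mathcal{V})$, as an object of $\mathsf{SMCat}_{\mathsf{compr}}^{\Set^{op}}$, is indexed over $\Set$ in the "right" way). This rigidity comes from three facts that I would isolate as small lemmas: (i) comprehension functors preserve the terminal object, so $g(\cdot)$ is a singleton; (ii) comprehension is preserved, so $g(\Gamma.A) \cong \Sigma_{s\in g(\Gamma)}\mathsf{ev}_1\text{-value}(A\{s\})$; (iii) by democracy-like reasoning every object of $\Bcat$ is built by iterated comprehension from $\cdot$ — wait, this last point is \emph{not} assumed (models need not be democratic), so instead I would avoid needing it by only ever using the representability of $\Bcat/\Gamma(-,\proj{\Gamma}{A})$, which holds in full generality, to pin down $g(\Gamma)$ as $\Bcat(\cdot,\Gamma)$ via: a point of $g(\Gamma)$ in $\Fam(\mathcal{V})$ corresponds to a section-selecting map, and the comprehension-preservation square forces these to correspond to maps $\cdot\ra{}\Gamma$ in $\Bcat$. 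Once this rigidity is nailed down, the rest is the routine (but lengthy) verification of strict monoidality, naturality, comprehension-preservation and the two triangle identities, none of which I would spell out in full — they follow mechanically from $-\{-\}$ being strict monoidal, functorial, and compatible with the comprehension isomorphisms. Finally I would remark that the image of $\Fam$ really does land in $\mathsf{SMCat}_{\mathsf{compr}}^{\Set^{op}}$, i.e. $\Fam(\mathcal{V})$ genuinely has comprehension (with representing objects the disjoint-union sets described above), which is a direct check and was effectively already used above.
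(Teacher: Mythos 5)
Your proposal contains the right computational core, but it is organised around a misreading of the statement, and the step you lean on most heavily is the one that fails. The codomain of $\Fam$ is defined as the \emph{full} subcategory $\mathsf{SMCat}_{\mathsf{compr}}^{\Set^{op}}$ of the functor category $\mathsf{SMCat}^{\Set^{op}}$: every object is a functor $\Set^{op}\ra{}\mathsf{SMCat}$ (so the base is literally $\Set$, not an arbitrary $\Bcat$), and a morphism is just a natural transformation over the identity on $\Set$ — it carries no base functor $g:\Bcat\ra{}\Set$ and is subject to no comprehension-preservation condition. Consequently the entire "rigidity" apparatus you build (pinning down $g(\Gamma)$ as $\Bcat(\cdot,\Gamma)$, lemmas (i)–(iii), the worry about democracy) addresses a question the statement does not ask. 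Worse, the rigidity claim is false in the generality you attempt it: for a non-democratic base nothing forces the base component of a morphism into $\Fam(\mathcal{V})$ to be the global-points functor. Taking $\mathcal{V}$ terminal, $\Fam(\mathcal{V})$ has comprehension and essentially any functor $\Bcat\ra{}\Set$ underlies a morphism into it, so your proposed workaround ("a point of $g(\Gamma)$ corresponds to a section-selecting map") cannot recover $\Bcat(\cdot,\Gamma)$ from the comprehension-preservation square alone. Since you explicitly make the rest of the proof conditional on "once this rigidity is nailed down", the argument as structured does not go through.

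Once the codomain is read correctly, what remains of your proposal is exactly the paper's proof, and it is short: (a) $\Fam(\mathcal{V})$ has comprehension, with representing object $\mathbf{p}_{S,B}=\Sigma_{s\in S}\mathcal{V}(I,B(s))\ra{\fst}S$, via $\Fam(\mathcal{V})(S')(I,B\{f\})=\Pi_{s\in S'}\mathcal{V}(I,B(f(s)))\cong\Set/S(f,\mathbf{p}_{S,B})$; and (b) for the adjunction, a natural transformation $\phi:\Dcat\ra{}\Fam(\mathcal{V})$ in $\mathsf{SMCat}^{\Set^{op}}$ is uniquely determined by $\phi_1:\Dcat(1)\ra{}\mathcal{V}$, because the naturality squares at the elements $1\ra{s}S$ force $\phi_S(A)(s)=\phi_1(A\{s\})$ and these elements are jointly surjective; conversely $\phi_S(A):=(s\mapsto\phi_1(A\{s\}))$ is strict monoidal and natural since reindexing is. Your formula $\widehat{F}_\Gamma(A)=(s\mapsto F(A\{s\}))$ and your observation that the round trip is "precisely naturality applied to $s:\cdot\ra{}\Gamma$" are this argument verbatim; you should strip out everything else.
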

\begin{proof}$\Fam(\mathcal{V})$ admits a comprehension, by the following isomorphism
\begin{align*}
\Fam(\mathcal{V})(S)(I,B\{f\})&=\mathcal{V}^S(I, f;B)\\
&=\Pi_{s\in S}\mathcal{V}(I,B(f(s)))\\
&\cong \Set/S(S\ra{\id_S}S, \Sigma_{s\in S}\mathcal{V}(I,B(f(s)))\ra{\mathsf{fst}}S)\\
&\cong \Set/S'(S\ra{f}S',\Sigma_{s'\in S'}\mathcal{V}(I,B(s'))\ra{\mathsf{fst}}S')\\
&=\Set/S'(f,\mathbf{p}_{S',{B}}),
\end{align*}
where $\mathbf{p}_{S',{B}}:=\Sigma_{s'\in S'}\mathcal{V}(I,B(s'))\ra{\mathsf{fst}}S'$. ($\mathbf{v}_{S',{B}}$ is obtained as the image of $\id_{S'}\in \Set/S'$ under this isomorphism.)
To see that $\mathsf{ev}_1\dashv \Fam$, note that we have the following naturality diagrams for elements $1\ra{s}S$
\begin{diagram}
1& \quad \mathsf{ev}_1(\Dcat)=\Dcat(1) & \rTo^{\quad\phi_1\quad} & \mathcal{V}=\Fam(\mathcal{V})(1)\\
\dTo^s & \uTo^{-\{s\}} & & \uTo_{ s;-}\\
S & \Dcat(S) & \rTo_{\phi_S\quad} & \mathcal{V}^S=\Fam(\mathcal{V})(S)
\end{diagram}
and that all $1\ra{s}S$ are jointly surjective and therefore all $ s;-$ are jointly injective, meaning that a natural transformation $\phi\in\mathsf{SMCat}^{\Set^{op}}(\Dcat,\Fam(\mathcal{V}))$ is uniquely determined by $\phi_1\in\SMCat(\mathsf{ev}_1(\Dcat),\mathcal{V})$.
\end{proof}

We have the following results for type formers.

\begin{theorem}[Type Formers for Families]$\mathcal{V}$ has small coproducts that distribute over $\otimes$ iff $\Fam(\mathcal{V})$ supports $\Sigma_!^\otimes$-types. In that case, $\Fam(\mathcal{V})$ also supports $0$- and $\oplus$-types (which correspond precisely to finite distributive coproducts).

$\mathcal{V}$ has small products iff $\Fam(\mathcal{V})$ supports $\Pi_!^\multimap$-types. In that case, $\Fam(\mathcal{V})$ also supports $\top$- and $\&$-types (which correspond precisely to finite products).

$\Fam(\mathcal{V})$ supports $\multimap$-types iff $\mathcal{V}$ is monoidal closed. 

$\Fam(\mathcal{V})$ supports $!$-types iff $\mathcal{V}$ has small coproducts of $I$ that are preserved by $\otimes$ in the sense that the canonical morphism $\bigoplus_S(\Delta'\otimes I)\ra{}\Delta'\otimes \bigoplus_S I$ is an isomorphism for any $\Delta'\in\mathsf{ob}\;\mathcal{V}$ and $S\in\mathsf{ob}\;\Set$. In particular, if $\Fam(\mathcal{V})$ supports $\Sigma_!^\otimes$-types, then it also supports $!$-types.

$\Fam(\mathcal{V})$ supports $\Id_!^\otimes$-types if $\mathcal{V}$ has a distributive initial object. Supposing that $\mathcal{V}$ has a terminal object, the only if also holds.
\end{theorem}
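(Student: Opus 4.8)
The plan is to check each type former in turn by unwinding the definition of $\Fam(\mathcal{V})$ and applying theorem \ref{thm:semtype}, which translates each syntactic type former into a condition on the indexed category. Throughout, the key observation is that for $S\in\Set$ and a display map $\mathbf{p}_{S',B}:\Sigma_{s'\in S'}\mathcal{V}(I,B(s'))\ra{}S'$ (as constructed in the proof of theorem \ref{thm:familiesmodelddill}), a morphism $S\ra{f}S'$ in $\Set$ decomposes $S$ as the $S'$-indexed coproduct $\Sigma_{s'\in S'}f^{-1}(s')$ in $\Set$, so change of base along $\mathbf{p}_{S',B}$, followed by taking left or right adjoints, amounts fibrewise to taking an $f^{-1}(s')$-indexed (co)product in $\mathcal{V}$. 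The Beck--Chevalley and Frobenius conditions will be immediate once the (co)products in $\mathcal{V}$ are shown to interact correctly with $\otimes$, because everything is computed pointwise over $\Set$.

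First I would treat $\Sigma_!^\otimes$-types: by theorem \ref{thm:semtype}(1) we need left adjoints to $\Dcat(\mathbf{p}_{S',B})$ satisfying left Beck--Chevalley for $\mathbf{p}$-squares and Frobenius reciprocity. Computing $\Dcat(\mathbf{p}_{S',B})$ pointwise shows its left adjoint exists iff $\mathcal{V}$ has $S$-indexed coproducts for all sets $S$ arising as fibres $f^{-1}(s')$, i.e. all small coproducts; Frobenius reciprocity, $\Sigma^\otimes_{!A}(\Delta'\{\mathbf{p}\}\otimes B)\cong \Delta'\otimes\Sigma^\otimes_{!A}B$, unwinds pointwise to the statement that these coproducts distribute over $\otimes$. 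The nullary and binary cases give $0$- and $\oplus$-types (by the semantic conditions of theorem \ref{thm:semtype}(5), distributive finite coproducts in each fibre, which are again computed pointwise), explaining the parenthetical remarks. The $\Pi_!^\multimap$ case is formally dual: theorem \ref{thm:semtype}(2) asks for right adjoints to $\Dcat(\mathbf{p}_{S',B})$ with right Beck--Chevalley, which pointwise is exactly the existence of small products in $\mathcal{V}$; $\top$- and $\&$-types are the finite case by theorem \ref{thm:semtype}(4). The $\multimap$ case is the easiest: theorem \ref{thm:semtype}(3) requires $\Dcat$ to factor through $\mathsf{SMCCat}$, and since the monoidal structure on $\mathcal{V}^S$ is pointwise, $\mathcal{V}^S$ is closed iff $\mathcal{V}$ is, with change of base preserving closed structure automatically.

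For $!$-types I would invoke theorem \ref{thm:semtype}(6): we need a strong monoidal left adjoint $F_\Gamma$ to the comprehension functor $U_\Gamma$, compatible with substitution. Unwinding $U_{S'}$ for $\Fam(\mathcal{V})$ and using that $F_{S'}$ is determined on objects by $F_{S'}(\mathbf{p}_{S',A})\cong {!}A$, one finds that $!A$ over $S'$ must have $s'$-component $\bigoplus_{\mathcal{V}(I,A(s'))} I$, so $F$ exists precisely when $\mathcal{V}$ has all small copowers of $I$, and strong monoidality of $F$ (equivalently, by the argument in the proof of theorem \ref{thm:semtype}(6), compatibility of its oplax structure with the lax structure on $U$) unwinds to the condition that $\otimes$ preserves these copowers, i.e. $\bigoplus_S(\Delta'\otimes I)\ra{}\Delta'\otimes\bigoplus_S I$ is iso; when $\Sigma_!^\otimes$-types are present this is a special case of Frobenius, giving the final sentence of that clause. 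Finally, for $\Id_!^\otimes$-types theorem \ref{thm:semtype}(7) requires left adjoints to $-\{\mathsf{diag}_{S,A}\}$ with the Beck--Chevalley condition for $\mathsf{diag}$-squares and Frobenius reciprocity. The diagonal $\mathsf{diag}_{S,A}$ is fibrewise an inclusion of a singleton fibre into a two-point fibre (the off-diagonal fibre being empty), so its change of base has a left adjoint iff $\mathcal{V}$ has an initial object $0$ (to fill the empty fibres), and Frobenius forces that initial object to be distributive; the converse, that a terminal object in $\mathcal{V}$ plus $\Id_!^\otimes$-types forces $0$ distributive, comes from evaluating $\Id^\otimes_{!A}(I)$ on an off-diagonal point and comparing with the terminal object to pin down the shape. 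The main obstacle I anticipate is bookkeeping the $\Id_!^\otimes$ converse cleanly — making precise, without a terminal object getting in the way, exactly which colimit in $\mathcal{V}$ is being asserted by the existence of $\Id^\otimes_{!A}$ over off-diagonal points — since all the other clauses are routine pointwise computations once the decomposition of $\Set$-morphisms into fibrewise coproducts is set up.
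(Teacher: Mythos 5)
Your proposal is correct and follows essentially the same route as the paper: pointwise computation over the fibres $f^{-1}(s')$, with $\Sigma^\otimes_f(A)(s')=\bigoplus_{s\in f^{-1}(s')}A(s)$ and dually for $\Pi$, Frobenius unwinding to distributivity, Beck--Chevalley free because substitution is precomposition, $!A$ given by the copower $s'\mapsto\bigoplus_{\mathcal{V}(I,A(s'))}I$, and $\Id^\otimes_{!A}$ filled in by a distributive initial object on off-diagonal points, with the converse obtained exactly as you sketch by testing against a two-element fibre and the terminal object. The only cosmetic difference is that the paper reaches the $!$-types clause via the identification $!A\cong\Sigma^\otimes_{!(x:A)}I$ of theorem \ref{thm:!fromsigma} rather than directly through the strong monoidal left adjoint of theorem \ref{thm:semtype}(6), but both yield the same formula and the same preservation condition.
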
\begin{proof}The statement about $0$-, $\oplus$-, $\top$-, and $\&$-types should be clear from the previous sections, as products and coproducts in $\mathcal{V}^S$ are pointwise (and hence automatically preserved under substitution).

We  denote coproducts in $\mathcal{V}$ with $\bigoplus$. Then,
\begin{align*}
\Pi_{s'\in S'}\mathcal{V}(\bigoplus_{s\in f^{-1}(s')}A(s),B(s'))
&\cong \Pi_{s'\in S'}\Pi_{s\in f^{-1}(s')}\mathcal{V}(A(s),B(s'))\\
&\cong \Pi_{s\in \Sigma_{s'\in S'}f^{-1}(s')}\mathcal{V}(A(s),B(f(s)))\\
&\cong \Pi_{s\in S}\mathcal{V}(A(s),B(f(s))\\
&=\mathcal{V}^S(A, f;B).
\end{align*}
So, we see that we can define $\csigma{f}{}(A)(s'):=\bigoplus_{s\in f^{-1}(s')}A(s)$ to get a left adjoint $\csigma{f}\dashv -\{f\}$, if we have coproducts. (With the obvious definition on morphisms coming from the cocartesian monoidal structure on $\mathcal{V}$.) Conversely, we can clearly use $\csigma{f}{}$ to define any coproduct by using, for instance, an identity function for $f$ on the set we want to take a coproduct over and a family $A$ that denotes the objects we want to sum. The Beck-Chevalley condition is taken care of by the fact that our substitution morphisms are given by precomposition. Frobenius reciprocity precisely corresponds to distributivity of the coproducts over $\otimes$.

Similarly, if $\mathcal{V}$ has products, we  denote them with $\bigwith$ to suggest the connections with linear type theory. In that case, we can define $\cpi{f}{}(A)(s'):=\bigwith_{s\in f^{-1}(s')}A(s)$ to get a right adjoint $-\{f\}\dashv \cpi{f}{}$. (With the obvious definition on morphisms coming from the cartesian monoidal structure on $\mathcal{V}$.) Indeed,
\begin{align*}
\Pi_{s'\in S'}\mathcal{V}(B(s'),\bigwith_{s\in f^{-1}(s')}A(s))&\cong \Pi_{s'\in S'}\Pi_{s\in f^{-1}(s')}\mathcal{V}(B(s'),A(s))\\
&\cong \Pi_{s\in \Sigma_{s'\in S'}f^{-1}(s')}\mathcal{V}(B(f(s)),A(s))\\
&\cong \Pi_{s\in S}\mathcal{V}(B(f(s)),A(s))\\
&=\mathcal{V}^S( f;B,A).
\end{align*}
Again, in the same way as before, we can construct any product using $\cpi{f}{}$. The right Beck-Chevalley condition comes for free as our substitution morphisms are precomposition.

The claim about $\multimap$-types follows immediately from the previous section: $\Fam(\mathcal{V})$ supports $\multimap$-types iff all its fibres have a monoidal closed structure that is preserved by the substitution functors. Seeing that our monoidal structure is pointwise, the same will hold for any monoidal closed structure. Seeing that substitution is given by precomposition, the preservation requirement comes for free.

The characterisation of $!$-types is given by theorem \ref{thm:!fromsigma}, which tells us we can define $!A:=\Sigma_{F(\mathbf{p}_{S',{A}})}^\otimes I=s'\mapsto\bigoplus_{\mathcal{V}(I,A(s'))}I$ and conversely.

Finally, for $\Id_!^\otimes$-types, note that the adjointness condition $\Id^\otimes_{!A}\dashv -\{\mathsf{diag}_{\Gamma,A}\}$ boils down to the requirement (*)\\
\resizebox{\linewidth}{!}{
\begin{tabular}{l}\parbox{\linewidth}{
\begin{align*}\Pi_{s\in S}\Pi_{a\in A(s)}\mathcal{V}(B(s,a),C(s,a,a))&\cong\mathcal{V}^{\Sigma_{s\in S}A(s)}(B,C\{\mathsf{diag}_{S,A}\})\\
&\stackrel{!}{\cong} \mathcal{V}^{\Sigma_{s\in S}A(s)\times A(s)}(\Id^\otimes_{!A}(B),C)\\
&\cong \Pi_{s\in S}\Pi_{a\in A(s)}\Pi_{a'\in A(s)}\mathcal{V}(\Id^\otimes_{!A}(B)(s,a,a'),C(s,a,a')).
\end{align*}}\end{tabular}
}\\
We see that if we have an initial object $0\in\mathsf{ob}(\mathcal{V})$, we can define $$\Id^\otimes_{!A}(B)(s,a,a'):=\left\{\begin{array}{l}B(s,a)\textnormal{ if $a=a'$}\\ 0\textnormal{ else} \end{array}\right.$$
Distributivity of the initial object then gives us Frobenius reciprocity. For a partial converse, suppose we have a terminal object $\top\in \mathcal{V}$. Let $V\in\mathsf{ob}(\mathcal{V})$. Let $S:=\{*\}$, $A:=\{0,1\}$ and $C$ s.t. $C(0,0)=C(1,1)=C(0,1)=\top$ and $C(1,0)=V$. Then, (*) becomes the condition that $\{*\}\cong \mathcal{V}(\Id^\otimes_{!A}(B)(1,0),V)$. We conclude that $\Id^\otimes_{!A}(B)(1,0)$ is initial in $\mathcal{V}$.
\end{proof}

\begin{remark}Note that an obvious way to guarantee distributivity of coproducts over $\otimes$ is by demanding that $\mathcal{V}$ is monoidal closed.
\end{remark}
\begin{remark}It is easily seen that $\Sigma$-types in $\Ccat$, or additive $\Sigma$-types in $\Dcat=\Fam(\mathcal{V})$, boil down to having an object $\mathsf{or}_{s\in S}C(s)\in\mathsf{ob}(\mathcal{V})$ for a family $(C(s)\in \mathsf{ob}(\mathcal{V}))_{s\in S}$ such that $\Sigma_{s\in S}\mathcal{V}(I,C(s))\cong \mathcal{V}(I,\mathsf{or}_{s\in S} C(s))$. Similarly, $\Id$-types in $\Ccat$, or additive $\Id$-types in $\Dcat$, boil down to having objects $\mathsf{one},\mathsf{zero}\in\mathsf{ob}(\mathcal{V})$ such that $\mathcal{V}(I,\mathsf{one})\cong 1$ and $\mathcal{V}(I,\mathsf{zero})=0$.
\end{remark}
Two particularly simple concrete examples of $\mathcal{V}$ come to mind that can accommodate all type formers (except additive $\Sigma$- and $\Id$-types, which are easily seen not to be supported) and form a nice illustration: a category $\mathcal{V}=\mathsf{Vect}_F$ of vector spaces over a field $F$, with the tensor product, and the category $\mathcal{V}=\Set_*$ of pointed sets, with the smash product. All type formers get their obvious interpretation, but let us stop to think about $!$ for a second as it is a novelty of dDILL that it gets uniquely determined by the indexing, while in propositional linear type theory we might have several different choices. In the first example, $!$ boils down to the following: $(!B)(s')=\bigoplus_{ \mathsf{Vect}_F(F,B(s'))}F\cong   \bigoplus_{ B(s')}F$, i.e. taking the vector space freely spanned by all the vectors. In the second example, $(!B)(s')=\bigoplus_{\Set_*(2_*,B(s'))}2_*=\bigvee_{B(s')}2_*=B(s')+\{*\}$, i.e. $!$ freely adds a new basepoint.

We note the following consequence of theorem \ref{thm:familiesmodelddill}.
\begin{theorem}[DTT, DILL$\subsetneq$dDILL] dDILL is a proper generalisation of DTT and DILL: we have inclusions of the classes of models DTT,DILL$\subsetneq$dDILL.\end{theorem}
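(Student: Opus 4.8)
The plan is to exhibit two full subcategories of the category of models of dDILL: one consisting of (degenerate) models that amount precisely to models of DTT, and one consisting of models that amount precisely to models of DILL, and then to argue that both inclusions are proper, i.e. that there exist models of dDILL lying outside each. The general-to-special direction in each case is the easy half: a model of dDILL is a strict indexed symmetric monoidal category with comprehension $(\Bcat,\Dcat,\mathbf{p},\mathbf{v})$, and we recover DTT and DILL as the special cases where the symmetric monoidal structure on the fibres of $\Dcat$ is cartesian (giving DTT after the comprehension collapses to the usual one, c.f. the comprehension-category discussion) and where the base $\Bcat$ is trivial, i.e. $\Bcat = \{\cdot\}$ (giving a single symmetric monoidal category, which is exactly a model of ILTT/DILL as per Definition \ref{def:llmodel}). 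Strictness and the fact that $\mathsf{SMCat}$ contains cartesian categories as a (non-full) subcategory make both of these genuine inclusions of model classes.

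The substantive content is the word \emph{proper}: we must produce models of dDILL that are neither (equivalent to) a model of DTT nor (equivalent to) a model of DILL. For this I would invoke the families construction of Section \ref{sec:dismod}. Take $\mathcal{V}$ to be a symmetric monoidal category that is genuinely non-cartesian, for instance $\mathcal{V}=\mathsf{Vect}_F$ with the tensor product, or $\mathcal{V}=\Set_*$ with the smash product; both were shown above to support $\multimap$, $\Sigma_!^\otimes$, $\Pi_!^\multimap$, $!$, $\Id_!^\otimes$, $\top$, $\&$, $0$, $\oplus$. Then $\Fam(\mathcal{V})$ is a model of dDILL by Theorem \ref{thm:familiesmodelddill}. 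It is not (equivalent to) a model of DILL because its base $\Set^{op}$ is not trivial — there are genuinely distinct contexts $S \neq S'$ with non-trivial substitution functors between the fibres, and the comprehension is non-degenerate. It is not (equivalent to) a model of DTT because the tensor on a fibre $\mathcal{V}^S$ is the pointwise tensor of $\mathcal{V}$, which is not cartesian (e.g.\ $\otimes$ has no diagonals, or: the tensor unit is not terminal — in $\mathsf{Vect}_F$ the field $F$ is not a terminal object), so $\Dcat$ does not factor through cartesian categories. Thus $\Fam(\mathsf{Vect}_F)$ witnesses $\mathrm{DTT} \subsetneq \mathrm{dDILL}$ and $\mathrm{DILL} \subsetneq \mathrm{dDILL}$ simultaneously.

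Concretely the steps I would carry out are: (1) make precise what ``inclusion of classes of models'' means here — a fully faithful embedding of the respective model categories, or just the observation that every DTT-model and every DILL-model \emph{is} a dDILL-model after the evident reindexing; I would take the latter, lighter reading, since the corollary only claims inclusions of \emph{classes}. (2) Spell out the DTT $\hookrightarrow$ dDILL inclusion: a model of DTT in the sense of an indexed category with full and faithful comprehension whose fibres are cartesian is, verbatim, a strict indexed symmetric monoidal category with comprehension (cartesian monoidal structure is a special symmetric monoidal structure, terminal object $=$ tensor unit), so it is a dDILL-model — and here one cites the body of the chapter, since DTT-models-with-$1$-and-$\times$ were already observed to be dDILL-models-with-$I$-and-$\otimes$. (3) Spell out the DILL $\hookrightarrow$ dDILL inclusion via the trivial base $\Bcat = \{\cdot\}$, where the comprehension schema is vacuous and $\Dcat(\cdot)$ is an arbitrary symmetric monoidal category, i.e.\ a model of linear logic in the sense of Definition \ref{def:llmodel}. (4) Produce the separating example $\Fam(\mathsf{Vect}_F)$ (or $\Fam(\Set_*)$) and verify the two non-equivalences as above. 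The main obstacle is really only step (4)'s two negative claims — in particular, ruling out that $\Fam(\mathcal{V})$ happens to be \emph{equivalent} to a DTT-model despite not being cartesian on the nose; here the clean argument is that equivalence of models preserves whether the tensor unit is terminal in each fibre (an invariant property), and in $\Fam(\mathsf{Vect}_F)$ it is not, whereas in any DTT-model it is. Everything else is bookkeeping already done in Sections \ref{sec:sem} and \ref{sec:dismod}.
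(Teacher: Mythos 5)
Your treatment of $\mathrm{DTT}\subseteq\mathrm{dDILL}$ matches the paper's (a model of DTT is verbatim a strict indexed symmetric monoidal category with comprehension), and your separating example $\Fam(\mathsf{Vect}_F)$ for the properness of that inclusion is exactly the one the paper uses. The problem is your step (3).

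You claim that a DILL model includes into dDILL by taking the trivial base $\Bcat=\{\cdot\}$, "where the comprehension schema is vacuous". It is not vacuous — it is maximally restrictive. Over a one-object, one-morphism base, $\Bcat/\cdot$ is the terminal category, so the presheaf $x\mapsto\Dcat(\mathsf{dom}(x))(I,A\{x\})$ is just the set $\Dcat(\cdot)(I,A)$, and representability forces $\Dcat(\cdot)(I,A)\cong\Bcat/\cdot(\id,\id)=\{*\}$ for \emph{every} object $A$. Almost no DILL model of interest satisfies this (in $\Coh$, for instance, $\Coh(I,A)=\cliques(A)$ is rarely a singleton), so the trivial-base construction does not produce a dDILL model from a DILL model. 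The paper instead obtains the inclusion $\mathrm{DILL}\hookrightarrow\mathrm{dDILL}$ via the $\Fam$ construction of Theorem \ref{thm:familiesmodelddill}: $\mathcal{V}\mapsto\Fam(\mathcal{V})$ cofreely adds type dependency over $\Set$ and \emph{does} admit a comprehension; the adjunction $\mathsf{ev}_1\dashv\Fam$ exhibits DILL models as a reflective subcategory of dDILL models.

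This error propagates to your properness argument for the DILL side. Once the inclusion is correctly taken to be $\Fam$, the model $\Fam(\mathsf{Vect}_F)$ lies \emph{in the image} of the inclusion, so it cannot witness $\mathrm{DILL}\subsetneq\mathrm{dDILL}$; your argument that it is "not a DILL model because the base is non-trivial" only made sense against the broken trivial-base inclusion. What you need instead is a dDILL model that is not of the form $\Fam(\mathcal{V})$ — the paper points to non-trivial models indexed over a base other than $\Set$ (e.g.\ a DTT model over such a base, or the coherence-space model indexed over Scott predomains from Section \ref{sec:depcohsp}).
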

\begin{proof}Models of DTT with $1$- and $\times$-types, i.e. indexed cartesian monoidal categories with full and faithful comprehension, clearly, are a special case of our notion of model of dDILL. Moreover, in such cases, we easily see that $!A\cong A$. From their categorical descriptions, it is also clear that the other connectives of dDILL reduce to those of DTT. This proves the inclusion DDT$\subseteq$dDILL.

The dDILL models described above based on symmetric monoidal families are clearly more general than those of DTT, as we are dealing with a non-cartesian symmetric monoidal structure on the fibre categories. This proves that the inclusion is proper.

We have seen that the $\Fam$-construction realises the category of models of DILL as a reflective subcategory of the category of models of dDILL. Moreover, from various non-trivial models of DTT indexed over other categories than $\Set$ it is clear that this inclusion is proper as well.

Finally, we note that these inclusions still remain valid in the sub-algebraic setting where we do not have $I$- and $\otimes$-types. A simple variation of the argument using multicategories rather than monoidal categories does the trick.\end{proof}
Although this class of families models is important, it is clear that it only represents a very limited part of the generality of dDILL: not every model of dDILL is either a model of DTT or of DILL. Hence, we are in need of models that are less discrete in nature but still linear, if we are hoping to observe interesting new phenomena arising from the connectives of dDILL.

\subsection{Commutative Effects}\label{sec:ddillcommeffects}
As in the simply typed situation, commutative effects in dependent type theory give rise to linear types (under mild completeness and cocompleteness assumptions).
\begin{theorem}
Suppose we are given a model $\Ccat$ of pure dependent type  theory with $1,\Sigma,0,+,\Pi$-types which is equipped with an indexed commutative monad $T$, where $\Ccat$ additionally has equalisers and $\Ccat^T$ has reflexive coequalisers. Then, $\Ccat^T$ is a model of dDILL with $I,\otimes, \multimap, \top,\&,!,\Sigma^\otimes_!,\Pi_!^\multimap$-types.
\end{theorem}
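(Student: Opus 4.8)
The plan is to reduce the statement, fibre by fibre, to the simply typed result we already proved as Theorem \ref{thm:commtolinear}, and then to check that the resulting structure assembles into a model of dDILL in the sense of Theorem \ref{thm:semtype}. Concretely, I would first recall that $\Ccat^T$ denotes the fibrewise Eilenberg--Moore construction: over each context $\Gamma\in\Bcat$ we have the cartesian closed category $\Ccat(\Gamma)$ (it is cartesian closed since $\Ccat$ has $\Pi$-types and $1,\Sigma$-types, so its fibres are indexed cartesian closed, cf.\ the Remark following Theorem ``Pure DTT Semantics''), equipped with the commutative monad $T_\Gamma$, and we take $\Ccat^T(\Gamma):=\Ccat(\Gamma)^{T_\Gamma}$. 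For a context morphism $\Gamma'\ra{f}\Gamma$ the change-of-base functor $\Ccat(f)$ is a finite-product-preserving functor commuting with $T$ (this is what ``indexed commutative monad'' buys us), so it lifts to a functor $\Ccat^T(f)$ between Eilenberg--Moore categories, and functoriality of this lifting is automatic; this gives the strict indexed category $\Bcat^{op}\ra{\Ccat^T}\Cat$.

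Next I would invoke Theorem \ref{thm:commtolinear} in each fibre to equip $\Ccat^T(\Gamma)$ with its symmetric monoidal closed structure $(I_\Gamma,\otimes_\Gamma,\multimap_\Gamma)$ and finite products $(\top_\Gamma,\&_\Gamma)$, and with the linear/non-linear adjunction $F_\Gamma\dashv U_\Gamma:\Ccat(\Gamma)\leftrightarrows\Ccat^T(\Gamma)$ (the Eilenberg--Moore adjunction, whose $F_\Gamma$ is strong monoidal for the $\otimes$ built in that theorem). The key coherence point is that all of this is \emph{stable under change of base}: one must check that $\Ccat(f)$ sends the equalisers defining $k\multimap l$ and the reflexive coequalisers defining $k\otimes l$ to the corresponding (co)limits in the fibre over $\Gamma'$. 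The hypothesis that $\Ccat$ has equalisers should be read indexed (equalisers in each fibre, stable under $\Ccat(f)$), and reflexive coequalisers in $\Ccat^T$ likewise; granting that, $\Ccat(f)$ preserves the relevant diagrams (it preserves finite products hence the ambient $Uk\Rightarrow Ul$ and $Uk\times Ul$, and commutes with $T$) and the universal properties transfer, so $F_-,U_-,\otimes,\multimap,\&$ all form indexed structures. That $\Ccat^T$ has a comprehension schema then follows from Theorem \ref{altcompr} (Lawvere comprehension): define $F_\Gamma:\Bcat/\Gamma\to\Ccat^T(\Gamma)$ by $\csigma{-}{I}$, which exists once we have the $\Sigma_!^\otimes$-types below and a terminal object $\cdot$ in $\Bcat$ (we have it, since $\Ccat$ models pure DTT); its right adjoint is exactly the Eilenberg--Moore $U_\Gamma$ composed with the comprehension of $\Ccat$, using that $\Ccat$ already has full and faithful comprehension as a model of pure DTT and that $U_\Gamma$ has a left adjoint.

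It then remains to produce the dependent connectives $\Sigma_!^\otimes,\Pi_!^\multimap,!$ and to verify their Beck--Chevalley and Frobenius conditions, using Theorem \ref{thm:semtype}. For $\Pi_!^\multimap$: the change-of-base functor $\Ccat^T(\proj{\Gamma}{A})$ has a right adjoint because $\Ccat(\proj{\Gamma}{A})$ does (the $\Pi$-type of $\Ccat$) and right adjoints lift to Eilenberg--Moore categories when the monad is preserved --- concretely $\Pi^\multimap_{!A}$ is computed as $U$ of the pointwise $\Pi$ of the carriers with its induced $T$-algebra structure --- and the right Beck--Chevalley condition for $\mathbf{p}$-squares is inherited from that of $\Ccat$. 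For $\Sigma_!^\otimes$: the left adjoint $\Sigma^\otimes_{!A}$ is obtained as the composite ``$F_{\Gamma.A}$, then $\Sigma$ in $\Ccat$ along $\proj{\Gamma}{A}$, then'' --- more cleanly, since $F$ is strong monoidal and $U$ creates the relevant limits, one shows $\Sigma^\otimes_{!A}B$ exists by the same reflexive-coequaliser argument as in Theorem \ref{thm:commtolinear} relativised along $\proj{\Gamma}{A}$, i.e.\ $\Sigma^\otimes_{!A}B$ is the coequaliser in $\Ccat^T(\Gamma)$ presenting $\csigma{\proj{\Gamma}{A}}{(UB)}$ as a $T$-algebra; left Beck--Chevalley follows from that of $\Ccat$ together with preservation of the coequalisers, and Frobenius reciprocity is automatic once we have $\multimap$-types (as noted in the footnote to Theorem \ref{thm:semtype}). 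Finally $!$-types come for free: by Theorem \ref{altcompr}, since we have $\Sigma_!^\otimes$-types, it suffices that $\Sigma^\otimes_!$ satisfies Frobenius reciprocity, which we have just established. I expect the \textbf{main obstacle} to be precisely the bookkeeping around stability under change of base of the equalisers and reflexive coequalisers --- making sure the (co)limit formulas for $\otimes$, $\multimap$ and $\Sigma^\otimes_{!A}$ genuinely commute with substitution so that the Beck--Chevalley squares go through --- rather than the construction of any single piece of structure, each of which is a routine fibrewise application of Theorem \ref{thm:commtolinear} and Theorem \ref{altcompr}.
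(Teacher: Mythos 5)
Your proposal is correct and is essentially the paper's own argument unfolded: the paper's proof is a three-line citation (the indexed Eilenberg--Moore adjunction supplies $\top,\&,\Pi_!^\multimap,!$ as in the dCBPV$-$ model theorem, an indexed variant of Theorem \ref{thm:commtolinear} supplies $I,\otimes,\multimap$, and the relativised reflexive-coequaliser construction of Theorem \ref{thm:dcbpvmodels} supplies $\Sigma_!^\otimes$), and you have spelled out exactly those constructions together with the change-of-base bookkeeping the paper compresses into the phrase ``an indexed variation''. Your minor rerouting of $!$ through Theorem \ref{altcompr} rather than directly through the Eilenberg--Moore adjunction is equivalent by that theorem and does not constitute a different approach.
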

\begin{proof}
The interpretation of $\top,\&,\Pi_!^\multimap,!$-types follows from  theorem \ref{thm:dcbpv-modelsfrommonads}. Meanwhile, an indexed variation of theorem \ref{thm:commtolinear} gives us the interpretation of $I,\otimes,\multimap$-types. Finally, theorem \ref{thm:dcbpvmodels} lets us interpret $\Sigma_!^\otimes$-types.
\end{proof}

\subsection{A Double Glueing Construction}\label{sec:doublegluing}
Of course, any model $\Ccat$ of cartesian type theory is a degenerate model $\Dcat=\Ccat$ of linear type theory in which the additive and multiplicative connectives coincide and where we can define $!$ to be the identity to obtain $\Ccat=\Dcat_!$. This shows us that every model of dependent type theory is trivially obtained through a co-Kleisli construction on a model of dDILL. This shows, in particular, in a rather boring way, that $\Sa$- and $\Ida$-types are consistent.

A more interesting construction is the following, which arises as a simple case of the double gluing construction of \cite{hyland2003glueing}, saying that every model of propositional intuitionistic logic arises from a model of classical linear logic, as a category of cofree $!$-coalgebras. Note that this is a properly linear model in the sense that its symmetric monoidal structure is not cartesian, i.e. there is a real difference between additive and multiplicative connectives. This follows from Joyal's lemma which says that cartesian $*$-autonomous categories are preorders \cite{lambek1988introduction}.
\begin{theorem}
Let $(\Ccat,1,\times,\Rightarrow)$ be a cartesian closed category. Then, $\Dcat:= \Ccat\times \Ccat^{op}$ can be given the structure of a $*$-autonomous category equipped with a linear exponential comonad $!$, such that $\Ccat\cong \Dcat_!$.
\end{theorem}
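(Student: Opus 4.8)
The claim is a known instance of the Chu/double-glueing construction, and the cleanest route is to recognise $\Dcat := \Ccat \times \Ccat^{op}$ as the Chu construction $\mathsf{Chu}(\Ccat,1)$ relative to the chosen closed structure on $\Ccat$ --- or, equivalently, to build the $*$-autonomous structure by hand from the cartesian closed structure of $\Ccat$. I would carry out the construction explicitly rather than invoking $\mathsf{Chu}$ as a black box, since we need the comonad $!$ and the identification $\Ccat \cong \Dcat_!$ at the end. So the first step is to fix notation: write objects of $\Dcat$ as pairs $(A^+, A^-)$ with $A^+, A^- \in \mathsf{ob}(\Ccat)$, thinking of $A^+$ as the ``positive part'' and $A^-$ as the ``negative part'', and define the dualising involution by $(A^+,A^-)^\bot := (A^-,A^+)$, which is strictly involutive.

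\textbf{Key steps in order.} First I would define the monoidal structure: the tensor unit is $I := (1, 1)$ (noting $1$ is also the terminal object, and using $1 \cong 1 \Rightarrow 1$ will matter), and for the tensor $(A^+,A^-) \otimes (B^+,B^-)$ one takes positive part $A^+ \times B^+$ and negative part $(A^+ \Rightarrow B^-) \times (B^+ \Rightarrow A^-)$ --- this is forced by the requirement that $\multimap$ be definable via $X \multimap Y := (X \otimes Y^\bot)^\bot$ and that $\Dcat((A^+,A^-),(B^+,B^-))$ be, say, $\Ccat(A^+,B^+)$ (with the negative components coming along for the ride in a Chu-morphism). Second, I would verify that this $\otimes$ is symmetric monoidal (associator, unitors, symmetry all built from the corresponding cartesian data and currying in $\Ccat$) and that the canonical map $\Dcat(X \otimes Y, Z^\bot) \to \Dcat(X, (Y \otimes Z)^\bot)$ is a bijection natural in all three variables, which is exactly $*$-autonomy; this reduces to repeated use of the adjunction $\Ccat(A \times B, C) \cong \Ccat(A, B \Rightarrow C)$ and of $(\,\cdot\,)^\bot$ being a strict involution. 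Third, I would exhibit the linear exponential comonad: the cofree coalgebra functor is $F := {-} \mapsto (\,\cdot\,, \top\text{-part})$, but more precisely the linear-non-linear adjunction here is the obvious one $\Ccat \leftrightarrows \Dcat$ where $U(A^+,A^-) := A^+$ and $F(C) := (C, 1)$; then $! := FU$ sends $(A^+,A^-)$ to $(A^+,1)$, and one checks $F \dashv U$ is a lax symmetric monoidal adjunction with $F$ strong monoidal (this uses $F(C \times D) = (C \times D, 1) \cong (C,1)\otimes(D,1)$, which holds because the negative part of $(C,1)\otimes(D,1)$ is $(C \Rightarrow 1)\times(D \Rightarrow 1) \cong 1 \times 1 \cong 1$ --- here the terminality of $1$ does the work). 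By the standard theory recalled in definition \ref{def:llmodel} this makes $!$ a linear exponential comonad. Finally, the co-Kleisli category $\Dcat_!$ has the same objects as $\Dcat$, and $\Dcat_!((A^+,A^-),(B^+,B^-)) = \Dcat(!(A^+,A^-), (B^+,B^-)) = \Dcat((A^+,1),(B^+,B^-)) \cong \Ccat(A^+, B^+)$, so the functor $\Ccat \to \Dcat_!$, $C \mapsto (C,1)$, is full and faithful; essential surjectivity (hence the equivalence $\Ccat \cong \Dcat_!$) follows because every object $(A^+,A^-)$ is isomorphic in $\Dcat_!$ to $(A^+, 1)$, the two having the same positive part.

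\textbf{Main obstacle.} The genuinely fiddly part is pinning down the negative component of $\otimes$ so that the $*$-autonomous triangle identities and the MacLane coherence conditions all check out simultaneously; getting the positive part right is trivial, but the negative part must be chosen so that $X \multimap Y$, $X \parr Y$ and $(\,\cdot\,)^\bot$ interlock correctly, and a naive guess will fail associativity of $\otimes$ up to the required coherent isomorphism. I expect to spend most of the work there --- essentially rederiving the $\mathsf{Chu}(\Ccat,1)$ structure --- after which the comonad and the co-Kleisli identification are routine, modulo the observation that terminality of the unit $1$ is exactly what forces $F$ to be strong monoidal and the additive/multiplicative distinction to be non-degenerate (by Joyal's lemma, $\Dcat$ cannot be a preorder unless $\Ccat$ is). I would also remark in passing that this is the linear decomposition underlying Girard's first translation, consistent with the discussion in section \ref{sec:depgirard}.
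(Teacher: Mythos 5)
Your proposal is correct and follows essentially the same route as the paper: both recognise $\Dcat$ as the Chu construction over the terminal object, use the same tensor $(a,x)\otimes(b,y)=(a\times b,(a\Rightarrow y)\times(b\Rightarrow x))$ with duality $(a,x)^*=(x,a)$, and obtain $!$ from the linear/non-linear adjunction $F(c)=(c,1)$, $U(a,x)=a$, with $\Dcat_!\cong\Ccat$ via the positive component. The only minor slip is your passing remark that $\Dcat(X,Y)$ is "$\Ccat(A^+,B^+)$" — it is $\Ccat(A^+,B^+)\times\Ccat(B^-,A^-)$, as your own co-Kleisli computation implicitly uses — but this does not affect the argument.
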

\begin{proof}
We see $\Dcat$ as a special case of the Chu-construction, where the pairing takes values in the terminal object $1$: we define the duality $(a,x)^*:=(x,a)$. We obtain the usual formula for the symmetric monoidal structure on $\Dcat$:
\begin{align*}
I&:=(1,1)\\
(a,x)\otimes(b,y)&:= (a\times b, (a\Rightarrow y)\times (b\Rightarrow x)).
\end{align*}
This allows us to define $$(a,x)\multimap (b,y):=((a,x) \otimes (b,y)^*)^*=((a\Rightarrow b)\times (y\Rightarrow x),a\times y).$$

Note that we have a linear/non-linear adjunction
\begin{diagram}
(a,1) & \lMapsto & a\\
\Dcat & \pile{\lTo^F\\\bot\\\rTo_U} & \Ccat\\
(a,x) & \rMapsto & a
\end{diagram}
meaning that we obtain a linear exponential comonad $!:=FU$ on $\Dcat$. Finally, note that we have an equivalence of categories $\Dcat_!\ra{}\Ccat$, $(a,x)\mapsto a$, being a full and faithful and essentially surjective functor.
\end{proof}
\begin{remark}
Note that $\Dcat$ in the previous theorem supports finite products (or, equivalently, finite coproducts), if and only if $\Ccat$ supports finite coproducts. Indeed, $\top:=(1,0)$ and $(a,x)\&(b,y):=(a\times b,x+y)$. $\Dcat$ having finite products (additive conjunctions) is a sufficient but not necessary condition for $\Dcat_!$ to have finite products. Indeed, in the above example, $\Dcat_!\cong \Ccat$ always has finite products. It is not clear what additive versions of the dependent connectives $\Sigma$ and $\Id$ should be, except in the weaker sense of objects in $\Dcat$ that in $\Dcat_!$ give a sound interpretation of ordinary cartesian $\Sigma$-types and $\Id$-types. In particular, it is not clear what a dependently typed generalisation should be of the binary coproduct, in the same sense that $\Sigma$- and $\Pi$-types, respectively, provide dependently typed generalisations of the binary product and the internal hom: the idea of ``having one of two types of objects, where the type of the second depends on the first'' sounds puzzling at best.
\end{remark}

This result extends to the dependently typed setting, as follows.
\begin{theorem}
Let $\Bcat^{op}\ra{\Ccat}\Cat$ be a strict indexed cartesian closed category with full and faithful comprehension (i.e. a model of cartesian dependent type theory). Write $\Dcat:=\Ccat\times \Ccat^{op}$, where we take the cartesian product of the fibre categories. Then, $\Dcat$ is a strict indexed $*$-autonomous category with comprehension (i.e. a model of classical linear dependent type theory). Moreover,
\begin{itemize}
\item $\Dcat$ supports $!$-types and we have an (indexed) equivalence $\Dcat_!\cong \Ccat$.
\item Therefore, $\Dcat $ supports $\Sa$- and $\Ida$-types, respectively, iff $\Ccat$ supports (strong) $\Sigma$- and $\Id$-types.
\item $\Dcat$ supports $\Sm_!$ and $\Pi^\multimap_!$-types iff $\Ccat$ supports both (weak) $\Sigma$- and $\Pi$-types.
\item $\Dcat$ supports (extensional, resp. intensional) $\Id^\otimes_!$-types iff $\Ccat$ supports (weak) (extensional, resp. intensional) $\Id$-types.
\end{itemize}
\end{theorem}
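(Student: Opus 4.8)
The plan is to construct, from a model $\Bcat^{op}\ra{\Ccat}\Cat$ of cartesian dependent type theory, the indexed category $\Dcat = \Ccat \times \Ccat^{op}$ (fibrewise product) and to equip each fibre with the Chu-style $*$-autonomous structure exactly as in the preceding simply typed theorem, but now checking that all constructions are stable under the change-of-base functors $\Ccat(f) \times \Ccat(f)^{op}$. Concretely, in the fibre over $\Gamma$ I would set $I_\Gamma := (1,1)$, $(a,x)\otimes(b,y) := (a\times b,\, (a\Rightarrow y)\times(b\Rightarrow x))$, $(a,x)^* := (x,a)$, and derive $\multimap$ as $(a,x)\multimap(b,y) := ((a,x)\otimes(b,y)^*)^*$. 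Since products, exponentials and the terminal object in $\Ccat$ are all (by hypothesis, being a model of cartesian DTT with $\Pi$-types) stable under substitution up to canonical iso, these formulas manifestly commute with $\Ccat(f)\times\Ccat(f)^{op}$, so $\Dcat$ is a strict indexed $*$-autonomous category. The comprehension schema on $\Dcat$ is inherited from that of $\Ccat$: I would take $\proj{\Gamma}{(a,x)} := \proj{\Gamma}{a}$ and $\mathbf{v}_{\Gamma,(a,x)} := (\mathbf{v}_{\Gamma,a}, \text{!`to $0$'})$, using that $\Dcat(\Gamma.(a,x))(I,(a,x)\{\proj{\Gamma}{(a,x)}\}) = \Ccat(\Gamma.a)(1,a\{\proj{\Gamma}{a}\})\times \Ccat(\Gamma.a)^{op}(1, x\{\proj{\Gamma}{a}\}) \cong \Ccat(\Gamma.a)(1,a\{\proj{\Gamma}{a}\})$ because the second factor, being a hom in $\Ccat^{op}$ out of a terminal-in-$\Ccat^{op}$ (i.e. initial-in-$\Ccat$?) — here I must be careful: $1$ is not initial in $\Ccat$ in general, so the correct representing object uses $U$ below. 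The cleanest route is to first establish the linear/non-linear adjunction.

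The key step is the adjunction $F \dashv U : \Ccat \leftrightarrows \Dcat$ with $U(a,x) := a$ and $F(a) := (a,1)$, which is an indexed adjunction precisely because $U$ and $F$ are defined fibrewise by formulas ($-$ and $(-,1)$) that are visibly natural in $\Gamma$. Then $! := FU$ is the induced indexed linear exponential comonad, and the functor $\Dcat_! \to \Ccat$, $(a,x)\mapsto a$, is fibrewise full, faithful and essentially surjective (it is split by $a \mapsto (a,1)$, which lands in the cofree coalgebras), hence gives an indexed equivalence $\Dcat_! \cong \Ccat$; naturality of this equivalence in $\Gamma$ is again immediate from the fibrewise formulas. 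The comprehension of $\Dcat$ can then be taken to be the one making $U$ into the Lawvere-style comprehension functor of theorem \ref{altcompr} — or, more directly, since $\Ccat$ has full and faithful comprehension and $U$ has a left adjoint $F$ with $F\dashv U$ commuting with substitution, the composite gives a (not necessarily full-and-faithful) comprehension on $\Dcat$, exactly as in the dLNL semantics discussion of section \ref{sec:deplnl}.

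For the enumerated equivalences I would argue as follows. First, $\Sa$- and $\Ida$-types: by definition these are objects of $\Dcat$ whose image under $U$ models $\Sigma$- resp.\ (strong, extensional) $\Id$-types in $\Ccat$; since $\Dcat_!\cong\Ccat$ and $U$ is (up to the equivalence) the forgetful functor to $\Ccat$, $\Dcat$ supports $\Sa$/$\Ida$-types iff $\Ccat$ supports the corresponding cartesian type formers — this is essentially a tautology once the equivalence is in place, invoking theorem \ref{thm:inttyp} and the remark on dependent Seely isomorphisms. Second, $\Sm_!$- and $\Pi^\multimap_!$-types: by theorem \ref{thm:semtype}(1),(2) these amount to left, resp.\ right, adjoints to $\Dcat(\proj{\Gamma}{(a,x)})$ satisfying Beck--Chevalley (and Frobenius for $\Sm_!$). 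Because $\Dcat(\proj{\Gamma}{(a,x)}) = \Ccat(\proj{\Gamma}{a})\times\Ccat(\proj{\Gamma}{a})^{op}$ is a product of a functor with the opposite of a functor, it has a left adjoint iff $\Ccat(\proj{\Gamma}{a})$ has both a left and a right adjoint (the left adjoint of the opposite factor being the right adjoint of the original), i.e.\ iff $\Ccat$ has both $\Sigma$- and $\Pi$-types; and dually for the right adjoint. Frobenius reciprocity and Beck--Chevalley transfer componentwise. Third, $\Id^\otimes_!$-types follow the same pattern via theorem \ref{thm:semtype}(7): a left adjoint to $-\{\mathsf{diag}\}$ on $\Dcat$ exists iff $\Ccat(\mathsf{diag})$ has both adjoints, i.e.\ iff $\Ccat$ has weak $\Id$-types (and extensionality on one side corresponds to intensionality/extensionality of the $\Ccat$-identity types as recorded in the pure DTT semantics theorem).

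The main obstacle I expect is bookkeeping around the variance of the $\Ccat^{op}$-factor: every adjunction statement for $\Dcat$ must be translated into a pair of adjunction statements for $\Ccat$ with the roles of left and right swapped on the second component, and one must verify that the Beck--Chevalley and Frobenius conditions, which are asymmetric in their handling of domain and codomain, still decompose cleanly across the product-with-an-opposite. A secondary subtlety is that the comprehension on $\Dcat$ inherited via $U$ is genuinely not full and faithful even though the one on $\Ccat$ is (just as in the dLNL semantics), so the phrase ``model of classical linear dependent type theory'' should be read in the weaker sense where comprehension need not be full and faithful; I would flag this explicitly. Beyond that, the verifications are the routine Chu-construction calculations, which I would not reproduce in detail, citing the simply typed theorem immediately above and checking only the substitution-stability clauses that are new in the indexed setting.
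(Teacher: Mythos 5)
Your construction coincides with the paper's: the fibrewise Chu-style $*$-autonomous structure, the indexed adjunction $F(a)=(a,1)\dashv U(a,x)=a$ giving $\Dcat_!\cong\Ccat$, and, for the dependent connectives, the observation that an adjoint to a change-of-base functor on $\Ccat\times\Ccat^{op}$ is a pair of an adjoint on the first factor and an adjoint of the opposite handedness on the second. The paper simply writes these pairs out, $\Sm_{!(a,x)}(b,y)=(\Sigma_a b,\Pi_a y)$ and $\Pi^\multimap_{!(a,x)}(b,y)=(\Pi_a b,\Sigma_a y)$, and notes that they are interchanged by $(-)^*$; your caveat that the inherited comprehension on $\Dcat$ is not full and faithful is also consistent with how the paper sets up models of dDILL.

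The one place where your uniform ``both adjoints'' heuristic overshoots is the $\Idm_!$ bullet. The theorem asks only for \emph{weak} $\Id$-types in $\Ccat$, i.e.\ a left adjoint to $-\{\diag{\Gamma}{a}\}$; it does not ask for a right adjoint to $\Ccat(\diag{\Gamma}{a})$, and under the stated hypotheses (no $\Pi$-types assumed in this bullet) such a right adjoint need not exist. The paper avoids needing one by exploiting the Frobenius condition of theorem \ref{thm:semtype}: with Frobenius, the functor $\Idm_{!(a,x)}$ is determined by its value at the unit, namely the object $(\Id_a,1)$, and adjointness at the unit is the representability of $(b,y)\mapsto\Dcat(\Gamma.a)((1,1),(b,y)\{\diag{\Gamma}{a}\})$. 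In that hom-set the contravariant component is $\Ccat(\Gamma.a)(y\{\diag{\Gamma}{a}\},1)\cong 1$, so it contributes nothing and the Yoneda computation closes using only the universal property of $\Id_a$ in $\Ccat$; the second component of the general formula, $\Id_a\Rightarrow y$, is then manufactured from cartesian closure via $(\Id_a,1)\otimes(b,y)$ rather than assumed as a separate adjoint. You should make this Frobenius reduction explicit, since otherwise your ``iff'' proves a condition stronger than the one stated. Your treatment of $\Sm_!$ and $\Pi^\multimap_!$ is fine as it stands, because there Frobenius does not collapse the functor to its value at the unit and both $\Sigma_a$ and $\Pi_a$ genuinely are needed, exactly as the theorem asserts.
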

\begin{proof}
These are straightforward verifications. By analogy with $!(-)\otimes(-)$ and $!(-)\multimap (-)$, we can define $\Sm_{!(a,x)}(b,y):=(\Sigma_ab,\Pi_ay)$, $\Pi^\multimap_{!(a,x)}(b,y):=(\Pi_ab,\Sigma_a y)$. Note they are dual, in the sense that $(\Sigma_{!(a,x)}(b,y))^*=\Pi_{!(a,x)}(b,y)^*$. $\Idm_!$-types, we can interpret by $\Idm_{!(a,x)}:=(\Id_a,1)$. (More generally, we define the functor $\Idm_{!(a,x)}(b,y):=\Idm_{!(a,x)}\otimes (b,y)=(\Id_a\times b,(\Id_a)\Rightarrow y)$.) Indeed, this definition of $\Idm_{!(a,x)}$ follows from applying the Yoneda lemma to the following sequence of natural isomorphisms:
\begin{align*}
\Ccat\times \Ccat^{op}(\Gamma.a.a)(\Idm_{!(a,x)},(b,y))&\cong \Ccat\times \Ccat^{op}(\Gamma.a)((1,1),(b,y)\{\diag{\Gamma}{a}\})\\
&\cong \Ccat(\Gamma.a)(1,b\{\diag{\Gamma}{a}\})\\
&\cong\Ccat(\Gamma.a.a)(\Id_a,b)\\
&\cong\Ccat\times\Ccat^{op}(\Gamma.a.a)((\Id_a,1),(b,y)).
\end{align*}
\end{proof}

\subsection{Scott Domains and Strict Functions}\label{sec:lindepscott}
We can extend the model of section \ref{sec:scottdom} to a model of dDILL, following \cite{palmgren1990domain}. All constructions and proofs are exactly as in \cite{palmgren1990domain} with the only difference that in some cases we have to replace the word domain with predomain.

We take $\Bcat$ to be the category of Scott predomains and continuous functions. For a preorder-enriched category like $\Bcat$, we call a pair of morphisms $e:A\leftrightarrows B:p$ an embedding-projection pair if $e;p=\id_A$ and $p;e\leq \id_B$. We write $\Bcat_{ep}$ for the lluf subcategory of $\Bcat$ of the embedding-projection pairs.  We can make this into a model $\Ccat$ of DTT in the same way as we can for the category of Scott domains and continuous functions \cite{palmgren1990domain}: we define $\Ccat(A)$ to be the category of Scott predomains parametrised over $A$ (continuous families of predomains), which we define to be \mccorrect{directed colimit preserving} functors from $A$ into $\Bcat_{ep}$. Change of base is given by precomposition. This supports $1$-, $\Sigma$-, $\Sigma_{1\leq i\leq n}$- and (intensional) $\Id$-types. Briefly, $1$ is the one-point predomain, $\Sigma$-types are just the set-theoretic $\Sigma$-types equipped with the product order, $\Sigma_{1\leq i\leq n}$-types are given by disjoint unions and $\Id_A(x,y):=\{z\in A\;|\; z\leq x \;\wedge z\leq y\}$ with the induced order from $A$. For predomains $A$ and a predomain $B$ parametrised over $A$, we can define a poset (which generally will not be a predomain, for size reasons) $\Pi_{x\in A}B$ as the set of continuous\footnote{\begin{mccorrection}Here, I am referring to the appropriate generalisation of continuity to dependent functions (called $p$-continuity in \cite{palmgren1990domain}): functions $f:\Pi_{x:A}B$ such that, for all $x\in A$, for all compact elements $b\leq_{B(x)} f(x)$, there exists a compact element $a\leq_A x$ such that $b\leq_{B(x)} f(a)$.\end{mccorrection}} dependent functions from $A$ to $B$ under the pointwise order. This allows us to define $\Ccat(A)(B,C):=\Pi_{x\in A}\Pi_{y\in B(x)}C(x)$ on which we have the obvious identities, composition and change of base functions.

We can define $\Dcat(A)$ to be the category of Scott domains parametrised over $A$ (continuous families of Scott domains) with strict continuous (families of) functions as morphisms. We do this by extending the operations $\top,\&,I,\otimes$ and $ \multimap$ to $\Dcat(A)$ in a pointwise way and defining $\Dcat(A)(B,C):=\Pi_{x\in UA}U(B(x)\multimap C(x))$. This extends to give an indexed category of parametrised Scott domains indexed over Scott predomains. We note that we have an indexed adjunction $F\dashv U$ to $\Ccat$ (pointwise) where $F$ is strong monoidal. We see that we have a model of the dependently typed LNL calculus.

We can note that $-\Dcat\{\proj{A}{B}\}$ have both left and right adjoints $\csigma{B}{}$ and $\cpi{B}{}$ satisfying (Frobenius and) Beck-Chevalley conditions. Here, $\csigma{B}{C}(x):=\{\langle b,c \rangle\in \Sigma_{y\in B(x)}{UC}(x,y)\;|\; c\neq \bot \}\cup\{\bot\}$ and $\cpi{B}{C}(x):=\Pi_{y\in B(x)} UC(x,y)$. In fact, we have additive $\Sigma$- and $\Id$-types as well by noting that $\Sigma_{y\in UB} UC(y)$ and $\Id_{UB}$ are  (parametrised) Scott domains.  Moreover, we can define $\Id^\otimes_{FA}$ as $F \Id_{A}$.

\subsection{Coherence Spaces}\label{sec:depcohsp}
The usual coherence space model of linear type theory can be extended with a notion of dependent types, which gives us a non-trivial model of  dDILL (of classical linear dependent type theory). We  define it as a strict indexed symmetric monoidal cloed category with comprehension
\begin{diagram}
\Stable^{op} & \rTo^{\Dcat} & \mathsf{SMCCat}.
\end{diagram}
For our category of cartesian contexts we take the category $\Stable$ of Scott predomains with pullbacks and continuous stable functions. Note that we have a large Scott (pre)domain $\mathcal{U}$ with pullbacks (given by intersection) of small coherence spaces, using the following ordering on coherence spaces:
$$X\unlhd Y := X\subseteq Y \;\wedge\; \coh_X=\;\coh_Y\big |_{X\times X}.$$
$\mathcal{U}$ will play the r\^ole of a cartesian universe of linear types. For a predomain $D\in\ob(\Stable)$, we define $\Dcat(D)$ to be a category with set of objects stable functions from $D$ to $\mathcal{U}$: $\ob( \Dcat(D)):=\Stable(D,\mathcal{U})$. We shall define its morphisms shortly, but, first, we define a few operations on the objects: $I (=\bot),\otimes, \multimap,\&,0(=\top),$ and $\oplus$ are defined pointwise, as on coherence spaces. Given $G'\in\Stable(D,\mathcal{U})$, we define two coherence spaces with the same underlying set $$\csigma{x:D}G'(x):=\cpi{x:D}G'(x):=\{(x,u)\; |\; x\in FD, u\in G'(x)\}$$
but different coherence relations
$$(s,u)\coh_{\csigma{x:D}G'(x)} (t,v):=s\coh_{FD} t\; \wedge\; u\coh_{G'(s\vee t)} v,
$$
and
$$(s,u)\scoh_{\cpi{x:D}G'(x)} (t,v):=s\coh_{FD} t\; \Rightarrow\; u\scoh_{G'(s\vee t)} v.
$$
This defines two coherence spaces $\csigma{x:D}G'(x)$ and $\cpi{x:D} G'(x)$.

We define (as the type theory dictates\footnote{Indeed, $\cliques(\cpi{x:D}(G'\multimap G)(x))=\Dcat(\cdot)(I,\cpi{x:D}(G'\multimap G)(x))
\cong \Dcat(D)(I,G'\multimap G)
\cong \Dcat(D)(G',G)$.}
, if $\cpi{x:D}{}$ is to give the $\cpi{-}{}$-type) the morphisms in the fibres of $\Dcat$ as cliques in the appropriate $\cpi{-}{}$-type:
$$\Dcat(D)(G',G):=\cliques(\cpi{x:D}(G'\multimap G)(x)).$$

Composition and identities are defined pointwise (where it is left to the reader to verify that these are indeed cliques):
\begin{mccorrection}
$$G'\ra{\sigma}G\ra{\tau}H:=\{(x,f,h)\;|\; \left(\exists_{y,z\leq x}\exists_{g\in G(x)}(y,f,g)\in \sigma\wedge (z,g,h)\in\tau\right)\;\wedge $$
$$\forall_{x'\leq x}  \left(\exists_{y',z'\leq x'}\exists_{g\in G(x')}(y',f,g)\in \sigma\wedge (z',g,h)\in\tau\right)\Rightarrow x'=x\}$$
\end{mccorrection}
$$G'\ra{\id_{G'}}G':=\{(x,f,f)\;|\; f\in G'(x)\wedge \forall_{x'\leq x}f\in G'(x')\Rightarrow x'=x\}.$$

The reader can check that the pointwise operations $I$ and $\otimes$ make $\Dcat(D)$ into a symmetric monoidal category.
\begin{theorem}
\mccorrect{Stable continuous families of coherence spaces, indexed over Scott predomains with pullbacks and stable continuous functions} define a strict indexed symmetric monoidal category with comprehension\mccorrect{, hence a model of dDILL}.
\end{theorem}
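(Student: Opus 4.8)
The plan is to verify, one piece at a time, that the data $(\Stable, \Dcat, \mathbf{p}, \mathbf{v})$ described above constitutes a strict indexed symmetric monoidal category with comprehension, and then to invoke the Soundness theorem to conclude it is a model of dDILL. I would organise the verification into four blocks. First, the base category: I would check that $\Stable$ (Scott predomains with pullbacks, stable continuous functions) is a genuine category with a terminal object. This is essentially standard domain theory; the one subtlety is that pullbacks in $\Stable$ are computed as intersections and that one must confirm stable continuous maps compose — this is exactly the material of section \ref{sec:depcohsp} and of \cite{palmgren1990domain}, so I would cite it rather than redo it.

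Second, the fibrewise structure: for each $D \in \ob(\Stable)$, I would show $\Dcat(D)$ is a well-defined category. The key routine checks are (i) that the pointwise-defined composition and identities given above are actually cliques in the relevant $\cpi{-}{}$-types — this is the bracketed remark left to the reader, and I would spell out the coherence-relation computation: if $(x,f,g),(x,f',g') \in \sigma$ and $(x,g,h),(x,g',h')\in\tau$ with the minimality conditions, one must trace through the definition of $\coh$ on $\cpi{x:D}(G'\multimap H)(x)$ to see $f \coh f' \Rightarrow h \scoh h'$ or $h = h'$ fails gracefully; (ii) associativity and unit laws, which follow from the minimality (stability) clauses in the definitions much as composition of stable functions is associative; (iii) that the pointwise $I$ and $\otimes$ make each $\Dcat(D)$ symmetric monoidal, inheriting all coherence isos from the ordinary coherence space model recalled in section \ref{sec:depcohsp}. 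I would also note $\Dcat(D)$ is monoidal \emph{closed} via pointwise $\multimap$, giving the promised $\mathsf{SMCCat}$ factorisation, though the bare statement only asks for monoidal.

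Third, functoriality of $\Dcat$: I would check that change of base along $f: D' \to D$ in $\Stable$ is precomposition $G \mapsto f;G$ on objects and the evident precomposition on cliques, that this lands in $\mathsf{SMCCat}$ (strict, since $I$, $\otimes$, $\multimap$ are all pointwise), and that it is strictly functorial in $f$ — all immediate from associativity of composition in $\Stable$ and the pointwise definitions. Fourth, and this is where the real work lies, the comprehension schema. I would set $\mathbf{p}_{D,G} := \bigl(\csigma{x:D}G(x) \to D\bigr)$ projecting to the first coordinate, check this is a morphism in $\Stable$ (its domain is a predomain with pullbacks because $FD$ and each $G(x)$ are, and the projection is stable continuous), take $\mathbf{v}_{D,G} \in \Dcat(D.G)(I, G\{\mathbf{p}_{D,G}\}) = \cliques(\cdots)$ to be the diagonal-like clique picking out, at a token $(x,u)$ of $\csigma{x:D}G(x)$, the token $u$ of $G(x)$, and then establish the representing bijection
\[
\Dcat(D')(I, G\{f\}) \;\cong\; \Stable/D\,(f, \mathbf{p}_{D,G}),
\]
natural in $f: D' \to D$. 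Unwinding the left side, a clique in $\cpi{x:D'}(I \multimap G(f(x)))(x)$ is essentially a stable continuous section $x \mapsto (x, \text{element of } G(f(x)))$, which is exactly the data of a stable map $D' \to \csigma{x:D}G(x)$ over $D$; I would check both directions are mutually inverse and natural, mirroring the coherence-space adjunction $\mathsf{fun}/\mathsf{trace}$ recalled earlier.

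The main obstacle I anticipate is the comprehension step, specifically proving that the evident maps in the representing bijection genuinely preserve \emph{stability} (preservation of pullbacks / the minimality conditions built into cliques), and that $\csigma{x:D}G(x)$ really is an object of $\Stable$ — i.e. has all pullbacks. One must verify pullbacks in the total space are computed coordinatewise from pullbacks in $FD$ and the fibres, which requires the coherence relation on $\csigma{x:D}G(x)$ to interact correctly with meets. Once this is pinned down, the rest is bookkeeping, and the conclusion follows by Soundness.
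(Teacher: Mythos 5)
Your four-block architecture (base category, fibres, strict functoriality of change of base by precomposition, then comprehension as the one non-routine step) is exactly how the paper organises its verification, and your anticipated difficulties are the right ones. There is, however, a genuine error in the comprehension step: you take the total space of $\mathbf{p}_{D,G}$ to be (the domain of cliques of) the multiplicative sum $\csigma{x:D}{G(x)}$, whose tokens are pairs $(x,u)$ of a compact element $x$ of $D$ and a \emph{token} $u$ of $G(x)$. The correct comprehension object is the Scott predomain $\Sigma_{x:D}UG(x)$ of pairs of an element of $D$ and a \emph{clique} of the fibre, with the componentwise order; this is not isomorphic over $D$ to $U(\csigma{x:D}{G(x)})$, because a clique of $\csigma{x:D}{G(x)}$ can only "remember" as much of the first coordinate as is witnessed by tokens in the fibres. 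Concretely, let $D$ be the Sierpi\'nski domain and $G(x)=\top=\emptyset$ for all $x$: then $\csigma{x:D}{G(x)}$ has no tokens, its domain of cliques is a single point whose only map to $D$ is constant at $\bot$, and there is no section over $\id_D$; yet $\Dcat(D)(I,G)=\cliques(\emptyset)=\{\emptyset\}$ has exactly one element, so your representing bijection $\Dcat(D')(I,G\{f\})\cong\Stable/D(f,\mathbf{p}_{D,G})$ fails. The underlying confusion is between the linear connective $\csigma{-}{-}$, which lives in the fibres and is left adjoint to weakening, and the comprehension, which lives in the cartesian base and necessarily passes through $U$; the two are related by the adjunction $\csigma{(-)}{I}\dashv U_\Gamma$ of theorem \ref{altcompr}, but they are different objects. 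The same token/clique conflation shows up when you unwind the left-hand side: a clique of $\cpi{x:D'}{G(f(x))}$ corresponds under $\fun$ to a stable continuous assignment of a \emph{clique} of $G(f(x))$ to each $x$, not of a single token.

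With the total space corrected to $\Sigma_{x:D}UG(x)\ra{\fst}D$ the rest of your plan goes through essentially as in the paper: $\fun$ and the trace give the mutually inverse natural bijections, and $\mathbf{v}_{D,UG}$ is the trace of the second projection (one must check it is a clique, which takes a short argument). Your worry about pullbacks in the total space is well placed and is precisely where stability of the family enters: $\Sigma_{x:D}UG(x)$ has componentwise meets because $UG(x\wedge y)=UG(x)\cap UG(y)$, which is exactly stability of $G$; continuity of $G$ gives directed joins. So the obstacle you flagged is real, but it is resolved by stability of the \emph{family}, not by any structure of $FD$ and the fibrewise coherence spaces, which is where your justification pointed.
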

\begin{proof}Note that $\Stable$ is a category with terminal object the one point domain and that $\Dcat(D)$ is a symmetric monoidal category. We define change of base in $\Dcat$ for morphisms $D'\ra{f}D$ in $\Stable$: $\Dcat(f)(G'):=f;G'$ and, for $\sigma\in\Dcat(D)(G',G)$, $\Dcat(f)(\sigma):=F(f);\sigma,$ where we see $F(f)$ as a clique in $FD'\multimap FD$. This gives a clique in $\cpi{x:D'}(G'\multimap G)( f(x))$. $\Dcat(f)$ is easily seen to \mccorrect{strictly preserve $I$ and $\otimes$} as precomposition is compatible with the \mccorrect{pointwise defined connectives}.

What remains to be done is define the comprehension. We define this as $U_D(G'):=\mathbf{p}_{D,U{G'}}:=\Sigma_{x:D}UG'(x)\ra{\fst}D$, where the $\Sigma$-type is taken in $\Stable$. That is, we take the set theoretic $\Sigma$-type, equip it with the product order and note that it gives a Scott predomain with pullbacks. The fact that it is a Scott predomain follows from section \ref{sec:lindepscott} as we are taking the $\Sigma$-type of a continuous family of Scott domains. To see that it has pullbacks, note that $UG'(x)$ and $UG'(y)$ are downward closed subsets of $UG'(z)$ if $x,y\leq z$\mccorrect{, such that $UG'(x\wedge y)=UG'(x)\cap UG'(y)$ (because of stability of $G'$)}. That means that we can take component-wise meets.

We define\\
\\
\resizebox{\linewidth}{!}{
$\mathbf{v}_{D,U{G'}}\in \Dcat(\Sigma_{x:D}UG'(x))(I,G'\{\mathbf{p}_{D,U{G'}}\})\cong\cliques(\cpi{(z,s):\Sigma_{x:D}UG'(x)}G'(z))\hspace{50pt}\;
$}\\
\\
as the ``trace'' of the (dependent) projection onto the second component:\\
\\
\resizebox{\linewidth}{!}{\mccorrect{
$\mathbf{v}_{D,U{G'}}:=\{((x,s),v)\; |\; (x,s)\in F(\Sigma_{x:D}UG'(x)),v\in s ,\forall_{(x',s')\leq (x,s)}v\in s'\Rightarrow (x',s')=(x,s)\},
$}}\\
\\
which is a clique.
\begin{claim*}$\mathbf{v}_{D,U{G'}}$ is a clique in $\cpi{(z,s):\Sigma_{x:D}UG'(x)}G'(z)$.
\end{claim*}
\begin{proof} Assume $((x,s),v)\neq ((x',s'),v')$, then
\begin{align*}&((x,s),v)\scoh_{\cpi{(z,s):\Sigma_{x:D}UG'(x)}G'(z)}((x',s'),v')\\
&\equiv (x,s)\coh_{F(\Sigma_{x:D}UG'(x))} (x',s')\Rightarrow v\scoh_{G'(x\vee x')} v'\\
&\equiv \exists_{(x'',s'')\in \Sigma_{x:D}UG'(x)}((x,s),(x',s')\leq_{\Sigma_{x:D}UG'(x)} (x'',s''))\Rightarrow v\scoh_{G'(x'')} v'.
\end{align*}
Assume $v=v'$. Then, the maximality condition on $(x,s)$ gives that $(x,s)=(x,s')$, contradicting our assumption that $((x,s),v)\neq((x',s'),v')$. Therefore, $v\neq v'$. 

Then, $(x,s),(x',s')\leq (x'',s'')$ implies that $v\in s\subseteq s''\supseteq s'\ni v'$, which in turn implies that $v\coh_{G'(x'')}v'$ and, as $v\neq v'$, we conclude that $v\scoh_{G'(x'')}v'$.
\end{proof}

Given $f\in \Stable(D',D)$ and \mccorrect{$\sigma \in \Dcat(D')(I,f;G')=\cliques(\Pi_{F(x:D')}^\multimap G'( f(x)))$}, we define $\langle f,\sigma\rangle \in \Stable(D',\Sigma_{x:D}UG'(x))$ as the function $(f,\fun(\sigma))$ with first component $f$ and second component $\fun(\sigma)$, where (writing $\Pi_{x:D'}UG'(f(x))$ for the set of dependent continuous stable functions from $D'$ to $f';UG$)
$$\fun(\sigma):=\{(x,\bigvee \{a\in UG'(f(x))|\exists y\leq x, (y,a)\in\sigma\}) \;|\; x\in D' \}\in \Pi_{x:D'}UG' (f(x)).
$$

We verify that $(\mathbf{p},\mathbf{v},\langle-,-\rangle)$ gives a representation, demonstrating the comprehension axiom. Clearly, $ \langle f,\sigma\rangle;\mathbf{p}_{D,U{G'}}=\langle f,\sigma\rangle;\fst=f$ and $\mathbf{v}_{D,U{G'}}\{\langle f,\sigma\rangle\}=F(\langle f,\sigma\rangle);\trace(\snd)=F(( f,\fun(\sigma) ));\trace(\snd)=\trace(( f,\fun(\sigma) );\snd)=\trace(\fun(\sigma))=\sigma$. Conversely, it is easily seen that $\langle f,\sigma\rangle $ is uniquely determined by these two equations. Indeed, suppose $t\in\Stable(D',\Sigma_{x:D}UG'(x))$ such that $f=t;\mathbf{p}_{D,UG'}=t;\fst$ and $\sigma=\mathbf{v}_{D,UG'}\{t\}:=Ft;\mathbf{v}_{D,UG'}= Ft;\trace(\snd)=\trace(t;\snd)$. Then, $\langle f,\sigma\rangle =\langle  t;\fst,\trace(t;\snd )\rangle=( t;\fst,\fun(\trace(t;\snd)))=(t;\fst,t;\snd)=t$.
\end{proof}
\begin{theorem}
The model supports $I-,\otimes-,\multimap-,\top-,\&-,0-,\oplus-,\Sigma_!^\otimes-,\Pi_!^\otimes-,$ $!-,$ and $\Id_!^\otimes$-types. Moreover, it is a model of classical linear dependent type theory.
\end{theorem}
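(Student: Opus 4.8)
The plan is to verify, connective by connective, that the coherence space model described in section \ref{sec:depcohsp} supports each of the claimed type formers by invoking the characterisations of theorem \ref{thm:semtype} (and theorem \ref{thm:inttyp} where relevant). Since we have already exhibited the strict indexed symmetric monoidal category with comprehension $\Stable^{op}\ra{\Dcat}\mathsf{SMCCat}$, the work is entirely in checking that the various adjunctions, closure properties and Beck-Chevalley/Frobenius conditions hold. For the multiplicative-additive propositional fragment ($I$, $\otimes$, $\multimap$, $\top$, $\&$, $0$, $\oplus$) this reduces to the well-known fact that $\Coh$ is a $*$-autonomous category with finite products and distributive coproducts, applied fibrewise: the connectives are defined pointwise in each $\Dcat(D)$ by the usual coherence space operations, so monoidal closure, products and distributive coproducts are inherited pointwise, and because change of base $\Dcat(f)$ acts by precomposition it strictly preserves all of these pointwise-defined structures, giving the required factorisation through $\mathsf{SMCCat}$, $\mathsf{SMcCat}$ and $\mathsf{dSMcCCat}$.

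For the genuinely dependent connectives I would argue as follows. For $\Sigma_!^\otimes$- and $\Pi_!^\multimap$-types, I would show that $-\{\proj{D}{UG'}\}$ has left and right adjoints given by $\csigma{x:D}{}$ and $\cpi{x:D}{}$, the two coherence spaces on the common underlying set $\{(x,u)\mid x\in FD, u\in G'(x)\}$ defined in section \ref{sec:depcohsp} with their respective coherence relations; the adjunction isomorphisms follow by the same token-chasing that establishes the dependent product/sum adjunctions, and the explicit formula $\cliques(\cpi{x:D}(G'\multimap G)(x))$ for the fibre homsets was chosen precisely to make these adjunctions transparent. Beck-Chevalley for $\mathbf{p}$-squares holds because the $\csigma{}{}$ and $\cpi{}{}$ constructions are given by genuine fibrewise reindexing along $f$, and Frobenius reciprocity for $\csigma{}{}$ follows from the distributivity of $\oplus$ over $\otimes$ in $\Coh$, applied fibrewise. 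For $!$-types, by theorem \ref{thm:semtype}(6) (or by Lawvere comprehension, theorem \ref{altcompr}, once we have $\Sigma_!^\otimes$ with Frobenius) it suffices to observe that the comprehension functor $U_D$ has a strong monoidal left adjoint compatible with substitution; concretely $!G'$ is the fibrewise $!$ of the coherence space model, with $F_D$ adjoining a new token as in the simply typed coherence space semantics, and strong monoidality is the fibrewise Seely isomorphism. For $\Id_!^\otimes$-types I would exhibit the left adjoint to $-\{\mathsf{diag}_{D,UG'}\}$: by Frobenius it is enough to give $\Id^\otimes_{!G'}(I)$, which one takes to be the fibrewise identity-type construction — over $(x,s,s')\in \Sigma_{z:D}UG'(z)\times UG'(z)$ it is $I$ when $s=s'$ and $0$ otherwise — with the Beck-Chevalley condition for $\mathsf{diag}$-squares again following from reindexing being genuine precomposition. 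Finally, the model is classical because $((-)^\bot)^\bot\cong\id$ holds fibrewise in $\Coh$, so $\Dcat(D)$ is $*$-autonomous in each context; equivalently $\mathcal{U}$ is closed under the pointwise linear negation.

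The main obstacle I expect is not any single adjunction but the careful verification that the coherence relations on $\csigma{x:D}G'(x)$ and $\cpi{x:D}G'(x)$ — which involve the joins $s\vee t$ in the stable-order domain $FD$ — interact correctly with stability of the indexing functions and with the $\approx$-style identification of cliques, so that composition in the fibres is well-defined and the adjunction bijections really are natural in the base. This is exactly the kind of delicate bookkeeping that is routine in the non-dependent coherence space model but becomes subtle once the coherence on tokens depends on the base point; the cleanest route is to follow the pattern of section \ref{sec:lindepscott} and reduce each check to a statement about the underlying continuous families of Scott domains, for which stability guarantees that fibres over $x\wedge y$ are the intersections of fibres over $x$ and $y$, and then layer the coherence structure on top pointwise. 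I would present the propositional fragment briefly (it is fibrewise folklore), spend most of the detail on $\Sigma_!^\otimes$, $\Pi_!^\multimap$ and $\Id_!^\otimes$, and dispatch $!$-types by citing theorem \ref{altcompr}.
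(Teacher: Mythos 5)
Your overall route coincides with the paper's for almost every connective: the propositional fragment is handled pointwise with change of base acting by precomposition, $\Sigma_!^\otimes$ and $\Pi_!^\multimap$ are the fibrewise $\csigma{s:UG'(x)}{}$ and $\cpi{s:UG'(x)}{}$ constructions with the adjunction bijections checked by token-chasing, $!$-types are dispatched via $\csigma{-}{I}$ (exactly what the paper does, citing theorem \ref{thm:!fromsigma}), and classicality is the pointwise observation that $\bot=I$ is dualising. Two small inaccuracies along the way: Frobenius for $\csigma{}{}$ is not an instance of distributivity of $\oplus$ over $\otimes$ (the coherence relation on $\csigma{s:UG'(x)}G(x,s)$ relates tokens across distinct fibres when $s\coh s'$, so it is not a coproduct of coherence spaces); it is instead a one-line computation directly from the definition. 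And $F_D$ does not "adjoin a new token" — that is the pointed-sets model; here $F_D(\proj{D}{UG'})=G';!$ is the fibrewise finite-cliques construction. Neither of these affects the argument.

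The genuine gap is the $\Id_!^\otimes$-type. Your proposed family — $I$ over $(x,s,s)$ and $0$ over $(x,s,s')$ with $s\neq s'$ — is not an object of $\Dcat(\Sigma_{x:D}UG'(x).UG'(x))$, because objects of the fibres must be \emph{stable continuous} functions into the ordered universe $\mathcal{U}$. Taking $s\lneq s'$ gives $(x,s,s)\leq(x,s,s')$ in the predomain but $I\not\unlhd 0$ (since $\{*\}\not\subseteq\emptyset$), so the proposed family is not even monotone. This is precisely why the base $\Stable$ is harder than the discrete base $\Set$ of the families model, where your definition would be fine. The paper instead defines
$$\Id^\otimes_{!G'}(a,a')(x):=\{b\in !G'(x)\;|\;b\leq \fun(a)(x)\;\wedge\;b\leq\fun(a')(x)\}\;\unlhd\;!G'(x),$$
which is continuous and stable because it is a subfamily of the continuous family $!G'$ cut out by a continuous, stable bound; and it then verifies only the formation, introduction, elimination and $\beta$-rules directly (so the model supports \emph{intensional} $\Id_!^\otimes$-types — there is no claim of a genuine left adjoint to $-\{\mathsf{diag}\}$ with an $\eta$-law, and your plan to exhibit such an adjoint with Beck-Chevalley for $\mathsf{diag}$-squares would not go through for this reason either). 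To repair your proof you need to replace the discrete identity family by a monotone, stable one of this kind and downgrade the $\Id$-clause from "left adjoint plus Frobenius" to a direct verification of the intensional rules.
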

\begin{proof}
For $I-,\otimes-,\multimap-,\top-,\&-,0-$ and $\oplus-$-types the verifications are trivial as the type formers are defined pointwise. It is clear that $I$ and $\otimes$ give a symmetric monoidal structure on $\Dcat(D)$. It then follows that $\multimap$ gives internal homs, from the facts that our operations are defined pointwise and that $\multimap$ gives internal homs in $\Coh$:
$\Dcat(D)(G'\otimes G,H)\cong \cliques(\Pi_{x:D}((G'\otimes G)\multimap H))(x))=\cliques(\Pi_{x:D}((G'(x)\otimes G(x))\multimap H(x))\cong \cliques(\Pi_{x:D}(G'(x)\multimap (G(x)\multimap H(x)))=\cliques(\Pi_{x:D}((G'\multimap (G\multimap H))(x))=\Dcat(D)(G',G\multimap H)$.

We have to show that $\top$ and $\&$ give finite products on $\Dcat(D)$. Let $G'\in\ob (\Dcat(D))$. Then, we have a unique $!_{G'}\in\Dcat(D)(G',\top)=\cliques(\Pi_{x:D}(G'\multimap \top)(x))=\cliques(\Pi_{x:D}\emptyset)=\cliques(\emptyset)=\{\emptyset\}$. Let $G',G\in\ob (\Dcat(D))$. Then, we have   projections \mccorrect{$(G'\& G\ra{\fst} G')=\{(x,f,f)\; |\;x \textnormal{ is minimal such that }f\in G'(x)\subseteq G'\& G(x) \}$ and $(G'\& G\ra{\snd} G)=\{(x,g,g)\; |x\textnormal{ is minimal such that }\;g\in G(x)\subseteq G'\& G(x) \}$}. Given $H\ra{f}G'$ and $H\ra{g}G$, we define $(H\ra{\langle f,g\rangle} G'\& G):= f\cup g$. This is a clique in $\Pi_{x:D}(H\multimap G'\& G)(x)$, as $(x,h,e)\scoh (x',h',e')=x\coh_{FD} x'\Rightarrow (h\coh_{H(x)} h'\Rightarrow e\scoh_{G'(x)\& G(x)} e')$. Now, we have three cases: if both $e$ and $e'$ are in $G'(x)$, the fact that $f$ is a clique takes care of the coherence and, similarly, if both $e$ and $e'$ are in $G(x)$, $g$ does this. Finally, if $e\in G'(x)$ and $e'\in G(x)$ (or vice versa), the definition of coherence in $G'(x)\& G(x)$ makes sure that $e\coh_{G'(x)\&  G(x)} e'$, so $(x,h,e)\coh (x',h',e')$.

We verify the rules for $\Sigma_!^\otimes$-types.
\begin{claim*} We have a left adjoint
$$\Dcat(\Sigma_{x:D}UG'(x))\ra{\csigma{UG'}{}}\Dcat(D) $$
to the change of base functor
$$\Dcat(D)\ra{-\{\mathbf{p}_{D,U{G'}}\}}\Dcat(\Sigma_{x:D}UG'(x)).$$
Moreover, this satisfies Frobenius reciprocity,
$$\csigma{U{G'}}(\mathbf{p}_{D,U{G'}};G\otimes H)\cong G\otimes \csigma{U{G'}}H,$$
and the left Beck-Chevalley condition.
\end{claim*}
\begin{proof}We define, on objects,
$$\csigma{U{G'}}(G)(x):= \csigma{s:UG'(x)}G(x,s)$$
and, on morphisms,
\begin{mccorrection}
\begin{align*}
&\csigma{U{G'}}(G\ra{\sigma}H):=\\
&\{(x,(s,g),(s,h))\in \Pi_{F(x\in D)}^\multimap\left(\Sigma_{F(s\in UG'(x))}^\otimes G(x,s)\right)\multimap\left( \Sigma_{F(s'\in UG'(x))}^\otimes H(x,s')\right) \\
&\; |\; ((x,s),(g,h))\in\sigma\}.
\end{align*}
\end{mccorrection}

We verify that this, indeed, defines a clique in
$$\cpi{x:D}(\csigma{s:UG'(x)}G(x,s))\multimap (\csigma{s:UG'(x)}H(x,s)):$$
\begin{align*}
&(x,(s,g),(s,h))\scoh(x',(s',g'),(s',h'))\\
& \equiv x\coh x'\Rightarrow ((s,g),(s,h))\scoh((s',g'),(s',h'))\\
&\equiv x\coh x'\Rightarrow ((s,g)\coh (s',g')\Rightarrow (s,h)\scoh(s',h'))\\
&\equiv x\coh x'\Rightarrow ( (s\coh s'\wedge g\coh g')\Rightarrow  (s,h)\scoh(s',h'))\\
&\equiv x\sincoh x' \vee s\sincoh s' \vee g\sincoh g'\vee (s,h)\scoh (s',h').
\end{align*}
Now, as $\sigma$ is a clique in $\cpi{(x,s):\Sigma_{x:D}UG'(x)}G(x,s)\multimap H(x,s)$, we have that
$$x\sincoh x'\vee s\sincoh s' \vee g\sincoh g'\vee h\scoh h'. $$
Suppose that not $x\sincoh x'\vee s\sincoh s' \vee g\sincoh g'$ (then, in particular, $s\coh s'$). We have to show that $h\scoh h'\Rightarrow (s,h)\scoh (s',h')$. This clearly holds as $s\coh s'$. We conclude that $\csigma{UG'}(\sigma)$ is a clique.

$\csigma{UG'}$ clearly respects identities and composition so we conclude it is a well-defined functor.

We verify that adjointness condition
$$\Dcat(\Sigma_{x:D}UG'(x))(G, \mathbf{p}_{D,UG'};H)\cong \Dcat(D)(\csigma{UG'} G,H). $$
The LHS is equal to 
$$\cliques(\cpi{(x,s):\Sigma_{x:D}UG'(x)}(G(x,s)\multimap H(x))),$$
while RHS is equal to
$$\cliques(\cpi{x:D}((\csigma{s:UG'(x)}(G(x,s)))\multimap H(x))).$$
Now,
\begin{align*}&\cpi{(x,s):\Sigma_{x:D}UG'(x)}(G(x,s)\multimap H(x))\\
&=\{((x,s),(g,h))\;| \; x\in FD, s\in UG'(x), g\in G(x,s), h\in H(x)\}\\
&\cong \{(x,((s,g),h)) \; |\; x\in FD, s\in UG'(x),g\in G(x,s), h\in H(x)\}\\
&=\cpi{x:D}((\csigma{s:UG'(x)}G(x,s))\multimap H(x)).
\end{align*}
Moreover, $((x,s),(g,h))$,$((x',s'),(g',h'))\in \cpi{(x,s):\Sigma_{x:D}UG'(x)}(G(x,s)\multimap H(x)))$ are related via $\scoh$ iff any of the following equivalent conditions hold
\begin{align*}
(x\coh x'\wedge s\coh s')\Rightarrow (g,h)\scoh (g',h')&\equiv x\sincoh x'\vee s\sincoh s'\vee (g\coh g'\Rightarrow h\scoh h')\\
&\equiv x\sincoh x'\vee s\sincoh s' \vee g\sincoh g'\vee h\scoh h',
\end{align*}
while $(x,((s,g),h)\scoh (x',((s',g'),h'))$ in $\cpi{x:D}(\csigma{s:UG'(x)}(G(x,s))\multimap H(x))$  iff any of the following equivalent conditions hold
\begin{align*}
x\coh x'\Rightarrow ((s,g),h)\scoh ((s',g'),h') & \equiv x\coh x'\Rightarrow ((s,g)\coh (s',g')\Rightarrow h\scoh h')\\
&\equiv x\coh x' \Rightarrow ((s\coh s'\wedge g\coh g')\Rightarrow h\scoh h')\\
&\equiv x\sincoh x'\vee s\sincoh s'\vee g\sincoh g'\vee h\scoh h'.
\end{align*}
We see that the conditions on both sides coincide.

We can therefore take the canonical bijection between both sets of vertices to induce an isomorphism of coherence spaces, hence a bijection of cliques.

Furthermore, it is immediately obvious from the definitions that Frobenius reciprocity holds:
\begin{align*}
\csigma{UG'}(G\{\mathbf{p}_{D,UG'}\}\otimes H)&=x\mapsto \csigma{s:UG'(x)}G(x)\otimes H(x,s)\\
&\cong x\mapsto G(x)\otimes \csigma{s:UG'(x)}H(x,s)\\
&=G\otimes\csigma{UG'}H.
\end{align*}
Finally, the Beck-Chevalley condition trivially holds, as the change of base functors act by precomposition.\end{proof}
We verify the rules for $\Pi$-types.
\begin{claim*} We have a right adjoint
$$\Dcat(\Sigma_{x:D}UG'(x))\ra{\cpi{UG'}{}}\Dcat(D) $$
to the change of base functor
$$\Dcat(D)\ra{-\{ \mathbf{p}_{D,U{G'}}\}}\Dcat(\Sigma_{x:D}UG'(x)),$$
satisfying the right Beck-Chevalley condition.\end{claim*}
\begin{proof}We define, on objects,
$$\cpi{U{G'}}(G)(x):= \cpi{s:UG'(x)}G(x,s)$$
and, on morphisms,
\begin{mccorrection}
\begin{align*}
&\cpi{U{G'}}(G\ra{\sigma}H):=\\
& \{(x,(s,g),(s,h))\in \Pi_{F(x\in D)}^\multimap\left(\Pi_{F(s\in UG'(x))}^\multimap UG(x,s)\right)\multimap \left(\Pi_{F(s'\in UG'(x))}^\multimap UH(x,s')\right)\\
& \; |\; ((x,s),(g,h))\in\sigma\}.
\end{align*}
\end{mccorrection}
We verify that this, indeed, defines a clique in $$\cpi{x:D}(\cpi{s:UG'(x)}G(x,s))\multimap (\cpi{s:UG'(x)}H(x,s)):$$
\begin{align*}
&(x,(s,g),(s,h))\scoh(x',(s',g'),(s',h')) \\& \equiv x\coh x'\Rightarrow ((s,g),(s,h))\scoh((s',g'),(s',h'))\\
&\equiv x\coh x'\Rightarrow ((s,g)\coh (s',g')\Rightarrow (s,h)\scoh(s',h'))\\
&\equiv x\coh x'\Rightarrow ( (s,g)\coh (s',g')\Rightarrow  (s\coh s'\Rightarrow h\scoh h'))\\
&\equiv x\sincoh x' \vee (s,g)\sincoh (s',g')\vee s\sincoh s'\vee h\scoh h'.
\end{align*}
Now, as $\sigma$ is a clique in $\cpi{(x,s):\Sigma_{x:D}UG'(x)}G(x,s)\multimap H(x,s)$, we have that
$$x\sincoh x'\vee s\sincoh s' \vee g\sincoh g'\vee h\scoh h'. $$
Suppose that not $x\sincoh x'\vee s\sincoh s' \vee h\scoh h'$ (then, in particular, $s\coh s'$). We have to show that $(s\coh s'\wedge g\sincoh g')\Rightarrow (s,g)\sincoh (s',g')$. For this, note that $g\sincoh g'$ implies that $g\neq g'$ hence $(s,g)\neq (s',g')$, so an equivalent thing to prove would be that $(s\coh s'\wedge g\sincoh g')\Rightarrow \neg ((s,g)\scoh (s',g'))$, which is $(s\coh s'\wedge \neg(g\coh g'))\Rightarrow \neg (s\coh s'\Rightarrow g\scoh g')$ by definition of $\scoh$ on $\cpi{-}{}$-types, which clearly holds. We conclude that $\cpi{UG'}(\sigma)$ is a clique.

$\cpi{UG'}{}$ clearly respects identities and composition so we conclude it is a well-defined functor.

We verify \mccorrect{the} adjointness condition
$$\Dcat(\Sigma_{x:D}UG'(x))( \mathbf{p}_{D,UG'};G,H)\cong \Dcat(D)(G,\cpi{UG'}H). $$
The LHS is equal to 
$$\cliques(\cpi{(x,s):\Sigma_{x:D}UG'(x)}(G(x)\multimap H(x,s))),$$
while RHS is equal to
$$\cliques(\cpi{x:D}((G(x))\multimap \cpi{s:UG'(x)}H(x,s))).$$

Now,
\begin{align*}&\cpi{(x,s):\Sigma_{x:D}UG'(x)}(G(x)\multimap H(x,s))\\
&=\{((x,s),(g,h))\;| \; x\in FD, s\in UG'(x), g\in G(x), h\in H(x,s)\}\\
&\cong \{(x,(g,(s,h))) \; |\; x\in FD, s\in UG'(x),g\in G(x), h\in H(x,s)\}\\
&=\cpi{x:D}((G(x))\multimap \cpi{s:UG'(x)}H(x,s)).
\end{align*}
Moreover, $((x,s),(g,h))$,$((x',s'),(g',h'))\in \cpi{(x,s):\Sigma_{x:D}UG'(x)}(G(x)\multimap H(x,s)))$ are related via $\scoh$ iff any of the following equivalent conditions hold
\begin{align*}
(x\coh x'\wedge s\coh s')\Rightarrow (g,h)\scoh (g',h')&\equiv x\sincoh x'\vee s\sincoh s'\vee (g\coh g'\Rightarrow h\scoh h')\\
&\equiv x\sincoh x'\vee s\sincoh s' \vee g\sincoh g'\vee h\scoh h',
\end{align*}
while in $\cpi{(x,s):\Sigma_{x:D}UG'(x)}(G(x)\multimap \cpi{s:UG'(x)}H(x,s))$ we have that $(x,(g,(s,h))\scoh (x',(g',(s',h')))$   iff any of the following equivalent conditions hold
\begin{align*}
x\coh x'\Rightarrow (g,(s,h))\scoh (g',(s',h')) & \equiv x\coh x'\Rightarrow (g\coh ,g')\Rightarrow (s,h)\scoh (s',h'))\\
&\equiv x\coh x' \Rightarrow (g\coh ,g')\Rightarrow (s\coh s' \Rightarrow h\scoh h')\\
&\equiv x\sincoh x'\vee s\sincoh s'\vee g\sincoh g'\vee h\scoh h'.
\end{align*}
We see that the conditions on both sides coincide. We can therefore take the canonical bijection between both sets of vertices to induce an isomorphism of coherence spaces, hence a bijection of cliques.

Finally, the Beck-Chevalley condition trivially holds, as the change of base functors act by precomposition.
\end{proof}

We verify the rules for $!$-types. Seeing that we already have $\csigma{-}{}$-types and $I$-types, we know we can construct $!$-types as $\csigma{-}I$. We  also give a direct proof, however, as it may provide more insight in the definition of the exponential.
\begin{claim*} The comprehension functors $U_D$ have a strong monoidal left adjoint $F_D$.\end{claim*}
\begin{proof}
We define $F_D(\mathbf{p}_{D,UG'}):=G';U;F=G';!$. Note that is well-defined as we can construct $ G';U$ from $\mathbf{p}_{D,UG'}$ as $x\mapsto \mathbf{p}_{D,UG'}^{-1}(x)$. Moreover, $ G';!$ is a type family, as, obviously, $\fincliques(\bigcup_i A_i)=\bigcup_i \fincliques(A_i)$\mccorrect{, for a directed family $(A_i)_i$,} and $\fincliques(A\cap B)=\fincliques(A)\cap \fincliques(B)$, where we write $\fincliques(A):=!A$ to emphasise that we are taking the coherence space of finite cliques.

This definition extends to \mccorrect{morphisms} between objects of the form $\mathbf{p}_{D, UG'}$. Indeed, we note that a morphism $\mathbf{p}_{D,UG'}\ra{f}\mathbf{p}_{D,UG}$ restricts to stable functions $UG'(x)\ra{f_x}UG(x)$ for all $x\in D$. We define $F_D(f)\in\cliques(\cpi{x:D}(!G'(x)\multimap !G(x)))$ as $\{(x,(s,t))\;|\; x\in D, (s,t)\in F(f_x)\; \forall_{x'\leq x} (s,t)\in f_{x'}\Rightarrow x'=x\}$. This is a clique as
\begin{align*}
(x,(s,t))\scoh (x',(s',t')) &= x\coh x'\Rightarrow (s,t)\scoh (s',t')\\
&=x\coh x'\Rightarrow (s,t)\neq (s',t')\txt{(as $f_x$ is a clique)}.
\end{align*}

We finally take the unique strong monoidal extension to obtain a (strong monoidal) functor from $\Ccat(D)$.

The condition that $F_D$ is compatible with change of base (precomposition) follows immediately because of the pointwise definition of $F_D$.
\end{proof}

Finally, we verify that we have $\Id^\otimes_!$-types.
\begin{claim*}Our model supports (intensional) $\Id^\otimes_!$-types.
\end{claim*}
\begin{proof}
We verify the formation, introduction, elimination and $\beta$-rules for $\Id^\otimes_!$-types.
\begin{enumerate}
\item[$\Id_!^\otimes$-\textsf{F}] Let $a,a':I\ra{}G'\in\Dcat(D)$. Then, we define a type family over $D$:
$$\Id_{!G'}^\otimes(a,a')(x):=\{b\in !G'(x)\; | \; b\leq \fun(a)(x),\fun(a')(x)\}\subseteq !G'(x),$$
with the induced coherence relation.

The fact that this is a continuous function $D\ra{\Id^\otimes_{!G'}(a,a')}\mathcal{U}$ is a direct consequence of its definition as a subfamily of a continuous family $!G'$ with a continuous bound ($\fun(a)(x)\wedge \fun(a)(x')$). The exact same argument gives stability:
\begin{align*}
&\Id^\otimes_{!G'}(a,a')(x_1)\wedge \Id^\otimes_{!G'}(a,a')(x_2)\\
&=\{b\in !G'(x_1)\; | \; b\leq \fun(a)(x_1),\fun(a')(x_1)\}\cap\\
&\;\;\;\;\; \{b\in !G'(x_2)\; | \; b\leq \fun(a)(x_2),\fun(a')(x_2)\}\\
&=\{b\in !G'(x_1)\cap !G'(x_2)\;|\; b\leq \fun(a)(x_1)\wedge\fun(a)(x_2)\wedge\\
&\qquad\qquad \qquad\qquad\quad\qquad\qquad \fun(a')(x_1)\wedge \fun(a')(x_2)\}\\
&=\{b\in !G'(x_1\wedge x_2)\;|\; b\leq \fun(a)(x_1\wedge x_2)\wedge \fun(a')(x_1\wedge x_2)\}\\
&= \Id^\otimes_{!G'}(a,a')(x_1\wedge x_2).
\end{align*}

\item[$\Id_!^\otimes$-\textsf{I}] For $a:I\ra{}G'\in\Dcat(D)$, we define, for $x\in D$,
$$\refl{a}:=\{(x,b)\; | \; x\in D, b\in \Id^\otimes_{!G'}(a,a)(x),\forall_{x'\leq x}b\in \Id^\otimes_{!G'}(a,a)(x')\Rightarrow x'=x\}.$$
This is easily verified to be a clique in $\cpi{x:D}\Id^\otimes_{G'}(a,a)(x)$, as $a(x)$ is a clique in $G'(x)$, and hence a morphism $I\ra{\refl{a}}\Id^\otimes_{!G'}(a,a)\in\Dcat(D)$.

\item[$\Id_!^\otimes$-\textsf{E}] Suppose we're given
\begin{itemize}
\item $G'\in \ob  \Dcat(D)$)
\item $C\in \ob \Dcat(\Sigma_{(y,x,x'):\Sigma_{y:D})UG'(y)\times UG'(y)}U\Id^\otimes_{!G'}(x,x')(y)$
\item $c\in \Dcat(\Sigma_{y:D}UG')(\Xi,C\{\langle \id_D,\id_{G'},\id_{G'},\refl{\id_{G'}}\rangle\})$
\item $a,a'\in \Dcat(D)(I,G')$
\item $p\in \Dcat(D)(I,\Id^\otimes_{G'}(a,a'))$.
\end{itemize}

We construct
$$(\mathrm{let}\; (a,a',p)\;\mathrm{be}\;(\id_{G'},\id_{G'},\refl{\id_{G'}})\;\mathrm{in}\; c)\in\Dcat(D)(\Xi,C\{\langle \id_C,a,a',p\rangle\}).$$
(as a dependent stable function $\in\Pi_{x:D}U(\Xi\multimap C\{a,a',p\})$) by defining\\
\\
\resizebox{\linewidth}{!}{
$\fun(\mathrm{let}\; (a,a',p)\;\mathrm{be}\;(\id_{G'},\id_{G'},\refl{\id_{G'}})\;\mathrm{in}\; c)(y)(\xi):=\fun(d)(y,\bigcup \fun(p)(y))(\xi).
$}

\item[$\Id_!^\otimes$-$\beta$] We calculate
\begin{align*}
&\fun(\mathrm{let}\; (a,a,\refl{a})\;\mathrm{be}\;(\id_{G'},\id_{G'},\refl{\id_{G'}})\;\mathrm{in}\; c)(y)(\xi)\\&=\fun(c)(y,\bigcup \fun(\refl{a})(y))(\xi)\\&=\fun(c)(y,\fun(a)(y))(\xi)\\
&=\fun(c\{\langle \id_D,a\rangle\})(y)(\xi).
\end{align*}
\end{enumerate}
\end{proof}
Finally, we note that $1=\bot$ is a dualising object: $(-)\multimap \bot=(-)^\bot$ is an involution, as this is the case pointwise. This means we have a model of classical linear dependent type theory.
\end{proof}

While we can, in fact, define additive $\Id$-types: $\Id^{\&}_{G'}(a,a')(x):=\fun(a)(x)\cap\fun(a')(x)\subseteq G'(x)$, the interpretation of $\Sigma^{\&}$ in the model turns out to be problematic.
\begin{theorem}[Absence of $\Sigma^{\&}$-Types] \label{thm:cohnosigma} The model does not support $\Sigma^{\&}$-types.
\end{theorem}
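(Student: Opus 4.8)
The plan is to show that if the coherence space model supported $\Sigma^{\&}$-types, then for any family of coherence spaces it would have to contain an object whose cliques correspond to the sum of the cliques of two related families, and to exhibit a concrete family of coherence spaces for which no such object exists. Recall from the remark after Theorem \ref{thm:inttyp} that, in any model of dDILL with $\top$- and $\&$-types, supporting $\Sigma^{\&}$-types is equivalent to having, for all $G'\in\ob(\Dcat(D))$ and $G\in\ob(\Dcat(\Sigma_{x:D}UG'(x)))$, an object $\Sigma^{\&}_{G'}G\in\ob(\Dcat(D))$ with $U\Sigma^{\&}_{G'}G\cong \Sigma_{UG'}UG$, i.e. a representing object for the presheaf on the slice that sends $f$ to $\Dcat(D')(I, \ldots)$ summed appropriately. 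Unwinding the comprehension in our model, $U_D$ acts by $G'\mapsto (\Sigma_{x:D}UG'(x)\ra{\fst}D)$, so the requirement becomes: there is a family $\Sigma^{\&}_{G'}G$ over $D$ whose total space of cliques over a point $x$ is in bijection (naturally, compatibly with change of base and with the coherence/clique structure) with $\Sigma_{s\in UG'(x)}U G(x,s)$, the disjoint union over cliques $s$ of $G'(x)$ of the cliques of $G(x,s)$.

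First I would reduce to the simplest possible base: take $D$ to be the one-point Scott predomain, so $\Dcat(D)=\Coh$, and pick $G'$ to be a fixed small coherence space $X$ and $G$ a family over $UX$ (the Scott domain of cliques of $X$, ordered by inclusion). Supporting $\Sigma^{\&}$ would then demand a single coherence space $Y$ with $\cliques(Y)\cong \Sigma_{s\in UX}\cliques(G(s))$, where moreover this bijection must be compatible with the ambient structure: in particular, it must respect the way cliques are built from tokens and the induced order (inclusion) on cliques, since all the universal properties in the model are stated at the level of these structured homsets, not bare sets. The key point is that $\cliques(Y)$, for any coherence space $Y$, is a \emph{coherent} domain under inclusion — it is bounded-complete and any two compatible cliques have a union which is again a clique — whereas a disjoint union $\Sigma_{s\in UX}\cliques(G(s))$ over a non-discrete index poset $UX$ is generally \emph{not} of this form.

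Concretely, I would take $X$ to be a two-token coherence space $\{a_1,a_2\}$ with $a_1\coh_X a_2$, so that $UX$ has cliques $\emptyset\subseteq\{a_1\},\{a_2\}\subseteq\{a_1,a_2\}$ (a non-trivial diamond-shaped poset), and take $G$ to be, say, a non-trivial but compatible assignment — e.g. $G(\emptyset)$, $G(\{a_1\})$, $G(\{a_2\})$, $G(\{a_1,a_2\})$ forming a continuous family of coherence spaces — chosen so that the emptiest clique $\langle \emptyset,\bot\rangle$ in $\Sigma_{s\in UX}\cliques(G(s))$ sits below the two incomparable elements $\langle \{a_1\},\bot\rangle$ and $\langle \{a_2\},\bot\rangle$, which however have \emph{no} upper bound in $\Sigma_{s\in UX}\cliques(G(s))$ because there is no coproduct injection from $G(\{a_1,a_2\})$ identifying them (the two summands $\{a_1\}$ and $\{a_2\}$ are distinct even though $\{a_1\}\vee\{a_2\}=\{a_1,a_2\}$ in $UX$). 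This shows $\Sigma_{s\in UX}\cliques(G(s))$ fails to be bounded-complete, hence cannot be $\cliques(Y)$ for any coherence space $Y$, and in particular cannot be $U_D\Sigma^{\&}_{G'}G$ for any object of $\Dcat(D)$. I would then remark that this is exactly the Cheshire-cat phenomenon flagged in the remark after Theorem \ref{thm:inttyp}: the projection $\fst$ lets the ``cat'' $s$ change (here, from $\{a_1\}$ to $\{a_1,a_2\}$ or from $\{a_2\}$ to $\{a_1,a_2\}$) while the ``grin'' recorded in the second component becomes incoherent with its twin.

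The main obstacle I anticipate is making precise which \emph{structured} bijection is actually demanded by ``$\Dcat$ supports $\Sigma^{\&}$-types'', so that the bounded-completeness obstruction genuinely applies — one must be careful that the claimed object $\Sigma^{\&}_{G'}G$ is required to live in $\Dcat(D)$ and satisfy the full universal property from the remark (including naturality in the slice and in $G$), and then check that any such object's comprehension really is forced to be $\cliques$ of a coherence space with the inclusion order. Once that bookkeeping is in place, the counterexample itself is a short finite calculation: exhibit the incomparable pair without an upper bound in the candidate domain and invoke bounded-completeness of $\cliques(Y)$.
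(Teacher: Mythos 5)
Your setup is right: supporting $\Sigma^{\&}$-types over the one-point base would require, for each coherence space $X$ and each continuous stable family $G$ over $UX$, a coherence space $Y$ with $UY\cong \Sigma_{s\in UX}UG(s)$ as objects of $\Stable$. The gap is in the obstruction you then invoke. In this model $\Sigma_{s\in UX}UG(s)$ is not a coproduct of the fibres: it is the dependent sum of a continuous family of domains over the domain $UX$, ordered by $\langle s,t\rangle\le\langle s',t'\rangle$ iff $s\subseteq s'$ and the embedding $UG(s)\to UG(s')$ carries $t$ below $t'$. Embeddings in embedding--projection pairs preserve $\bot$, so $\langle\{a_1,a_2\},\emptyset\rangle$ \emph{is} an upper bound of your pair $\langle\{a_1\},\emptyset\rangle$ and $\langle\{a_2\},\emptyset\rangle$; you cannot simultaneously use the cross-fibre order to make $\langle\emptyset,\emptyset\rangle$ a lower bound and discard it to rule out upper bounds. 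More globally, bounded-completeness can never fail here: the comprehension construction for this model already proves that $\Sigma_{x:D}UG'(x)$ is a Scott predomain with pullbacks (it has to be, for $\proj{D}{UG'}$ to exist in $\Bcat$), so no choice of $G$ will yield your counterexample.

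The obstruction therefore has to be finer than bounded-completeness: it must be a property that separates posets of the form $\cliques(Y)$ from general Scott domains. The relevant one is closure under subsets — every subset of a clique is a clique — which forces $|UY|\ge 1+(\textnormal{number of tokens})+(\textnormal{number of two-element cliques})$ and, in particular, rules out a three-element chain as a poset of cliques (a two-element clique forces both of its singletons to appear as intermediate elements). The paper's proof produces exactly such a chain by taking $A=I$ and $B(\emptyset)=\top$, $B(\{0\})=I$, so that $\Sigma_{UA}UB=\{\langle\emptyset,\emptyset\rangle\le\langle\{0\},\emptyset\rangle\le\langle\{0\},\{0\}\rangle\}$ — a perfectly good Scott domain that is not $UY$ for any $Y$. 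Your approach can be salvaged by replacing ``bounded-complete'' with ``downward-closed set system (qualitative domain)'' as the invariant of $\cliques(Y)$ and using a chain of length three as the witness; the diamond-shaped example you chose does not work, since the four-element diamond is literally $UX$ for your two-token coherent space $X$.
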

\begin{proof}
Let $A$ be the coherence space $I$. Then, $UA=\{\emptyset\leq \{0\}\}$. Let $B$ be the stable continuous family of coherence spaces indexed by $UA$ where $B(\emptyset):=\top$ and $B(\{0\}):=I$. In that case, we note that $\Sigma_{UA}UB=\{\langle \emptyset,\emptyset\rangle\leq \langle \{0\},\emptyset\rangle\leq \langle \{0\},\{0\}\rangle\} $. Now, we claim that there is no coherence space $\Sigma_A^{\&}B$ such that $U\Sigma_A^{\&}B\cong \Sigma_{UA}UB$. To see this, note that $UC $ always has strictly more elements than the sum of the number of edges and vertices in $C$. Seeing that $\Sigma_{UA}UB$ has 3 elements, that would leave only three possibilities for $\Sigma_A^{\&}B$: $\top$, $I$ and $I\oplus I$. However, we have $U\top=\{\emptyset\}$, $UI=\{\emptyset\leq \{0\}\}$ and $U(I\oplus I)=\{\emptyset\leq \{0\},\{1\}\}$, none of which is isomorphic to $\Sigma_{UA}UB$. We conclude that no suitable $\Sigma_A^{\&}B$ exists.
\end{proof}
We see that the category $\Coh_!$ of coherent qualitative domains and stable functions is too restrictive to admit the interpretation of $\Sigma$-types. To interpret those, we  have to pass to a larger category of domains, like the category $\Stable$ of all Scott predomains with pullbacks. There, however, we face the usual problem that we cannot interpret $\Pi$-types (or even simple function types; this was the raison d'\^etre for dI-domains). We ask the reader to compare this to our discussion in section \ref{sec:depgirard} about finding a sweet spot between the co-Kleisli category and co-Eilenberg-Moore category where we can interpret both $\Sigma$- and $\Pi$-types. As is shown in \cite{Bertrand:1994:NSF:174776.174781}, dI-domains are such a sweet spot. In future work, we plan to demonstrate how these arise as the co-Kleisli category of another model of linear logic, a certain finitary variation on the linear information systems of \cite{bucciarelli2010linear}.


\begin{savequote}[8cm]
It may be that all games are silly. But then, so are humans.
	\qauthor{--- Roib\'eard \'O Floinn}
\end{savequote}

\chapter{\label{ch:5}Games for Dependent Types}{\DTT} can be seen as the extension of the simple $\lambda$-calculus along the Curry-Howard correspondence from a proof calculus for (intuitionistic) propositional logic to one for predicate logic. It forms the basis of many proof assistants, like NuPRL, LEGO and Coq, and is increasingly being considered as an expressive type system for programming, as implemented in e.g. ATS, Cayenne, Epigram, Agda and Idris \cite{altenkirch2005dependent} and with even Haskell  approaching its expressive power with the addition of GADTs \cite{mcbride2002faking}.

\mccorrect{A} recent source of enthusiasm in this field is homotopy type theory (\HoTT), which refers to an interpretation of {\DTT} into abstract homotopy theory \cite{awodey2009homotopy} or, conversely, an extension of {\DTT} that is sufficient to reproduce significant results of homotopy theory \cite{hottbook}. In practice, the latter means {\DTT} with $\Sigma$-, $\Pi$-, $\Id$-types (corresponding to existential and universal quantifiers and identity predicates, respectively, through the Curry-Howard correspondence), a universe (roughly, a type of types) satisfying the \emph{univalence axiom}, and certain higher inductive types (playing the r\^ole of ground types whose towers of iterated identity types behave like the homotopy types of certain spaces). The univalence axiom is an extensionality principle which implies the axiom of function extensionality \cite{hottbook}.

Game semantics provides a unified framework for intensional, computational semantics of various type theories, ranging from pure logics~\cite{abramsky1994gamesll} to programming languages~\cite{hyland2000full,nickau1994hereditarily,abramsky2000full,abramsky2005game} with a variety of effects (e.g. non-local control \cite{laird1997full}, state \cite{abramsky1996linearity,abramsky1998fully,murawski2011game}, non-determinism \cite{harmer1999fully}, probability \cite{danos2002probabilistic}, dynamically generated local names \cite{abramsky2004nominal}) and evaluation strategies~\cite{abramsky1998cbvgames}.

A game semantics for {\DTT}  has,  surprisingly, so far been  absent, perhaps because of the naturally effectful character of game semantics. We hope to fill this gap in the present chapter.  
Our hope is that such a  semantics  will provide an alternative analysis of the implications of the subtle shades of intensionality that arise in the analysis of \DTT~\cite{streicher1993investigations,hofmann1997syntax}. 
Moreover, the game semantics of {\DTT} is based on very different, one might say orthogonal intuitions to those of the homotopical models: temporal  rather than spatial, and directly reflecting the structure of computational processes. One goal, to which we hope this work  will be a stepping stone, is a game semantics of \HoTT{}\mccorrect{ }doing justice to both the spatial and temporal aspects of identity types. Indeed, such an investigation might even lead to a computational interpretation of the univalence axiom which has long been missing, although a significant step in this direction was recently taken by the constructive cubical sets model of \HoTT~\cite{bezem2014model}.  Finally, a game semantics for {\DTT} should hopefully shed light on how dependent types can interact with effects.

We interpret dependent types as families of games indexed by strategies. We adapt the viewpoint of  the game semantics of system F of \cite{abramsky2005game}
to describe the $\Pi$-type, capturing the intuitive idea that the specialisation of a term at type $\Pi_{x:A} B$ to a specific instance $B[a/x]$  is the responsibility solely of the context that provides the argument $a$ of type $A$; in contrast, any valid term of $\Pi_{x:A} B$ has to operate within the constraints enforced by the context.  Our definition draws its power from the fact that in a game semantics, these constraints are enforced not only on completed computations, but also on incomplete ones that arise when a term interacts with its context.   The temporal character of game semantics results in a model with strikingly different properties from existing models like the domain semantics \cite{palmgren1990domain}.

In this chapter, we describe a game theoretic  model of {\DTT} with $1$-, $\Sigma$-, $\Pi$- and intensional $\Id$-types, where (lists of dependent) (call-by-name) AJM-games interpret types and (lists of) deterministic history-free well-bracketed winning strategies on games of dependent functions interpret terms. We next specialize to the semantic type hierarchy formed by the $1$-, $\Sigma$-, $\Pi$-, and $\Id$-constructions and substitution over finite dependent games. This gives a model of {\DTT} which additionally supports finite inductive type families. Our model has the following key properties.
\begin{itemize}
\item The place of the $\Id$-types in the intensionality spectrum (in either model) compares as follows with the domain semantics with totality and with \HoTT.\\
\\
\resizebox{\linewidth}{!}{
\begin{tabular}{l||c|c|c}
 & \;\; Domains\;\; & \;\;\HoTT\;\; &\;\; Games\;\;\\
 \hline
Failure of Equality Reflection  & \cmark & \cmark &  \cmark\\
Streicher \cite{streicher1993investigations} Intensionality Criteria $(I1)$ and $(I2)$\;& \cmark & \cmark & \cmark\\
Streicher Intensionality Criterion $(I3)$ & \xmark & \xmark & \cmark\\
Failure of Function Extensionality (\textsf{FunExt}) &\xmark & \xmark & \cmark\\
Failure of Uniqueness of Identity Proofs (\textsf{UIP})\;\; & \xmark & \cmark & \xmark
\end{tabular}}\\
\item We show that the smaller model faithfully interprets $\DTTGame$. Moreover, it is fully complete at the types $A$ which do not involve $\Id$ in their construction or which involve one strictly positive $\Id$-type as a subformula, if we add the \textsf{Ty-Ext} rule for types $x:A\vdash B\type$. Full completeness for the full type hierarchy remains to be investigated but seems plausible.   In contrast, the domain theoretic model of \cite{palmgren1990domain} is not (fully) complete or faithful.
\item It can be extended from a model of pure type theory to, additionally,  interpreting various effects when we drop some of the conditions on strategies. 
\end{itemize}

In section \ref{sec:depgame}, we introduce a notion of dependent game and dependently typed strategy, together with a semantic equivalent $\smiley(-)$ of $(-)^T$, the syntactic translation of section \ref{sec:trans} from $\DTTGame$ to $\STTGame$: a translation to simply typed game semantics. Although this almost gives a model of dependent type theory, we show that we cannot interpret $\Sigma$-types (or comprehension). Adding $\Sigma$-types formally, we next construct an interpretation of {\DTT} in sections \ref{sec:ctxt}, \ref{sec:semtype} and \ref{sec:grndtype}, in the form of a category with families with $\Sigma$-, $\Pi$- and $\Id$-types and finite inductive type families. Section \ref{sec:semtype} further characterises various intensionality properties of the $\Id$-types. Soundness and faithfulness of the interpretation of {\DTT} are finally proved in section \ref{sec:compl}, as the interpretation factors faithfully over the faithful sound games interpretation of \STT, as well as full completeness results which are obtained by a dependently typed modification of the definability proofs of \cite{abramsky2000full,Abramsky00axiomsfor}. Finally, in section \ref{sec:depgameeff} we lift the various conditions on strategies which ensure purity of the computations they model and we draw lessons on the interaction between dependent types and effects.

\begin{remark}[Related Publications]
This chapter is based on \cite{abramsky2015games,vakar2016gamsem}. We have changed the interpretation of $\Id$-types to make them compatible with effects. To give a uniform treatment for all classes of strategies, we have chosen to define a dependent game in the pure setting only on winning strategies. We have also slightly changed the equational theory $\DTTGame$ which lets us simplify the completeness proof considerably. We believe the current presentation to be both simpler and more robust with respect to extensions to broader classes of types and terms.  After we presented our game semantics for dependent types in \cite{abramsky2015games}, \cite{yamada2016game} provided an alternative game semantics for dependent type theory, while with very different motivations.  Where our work is motivated by precisely characterising effectful (CBN) type theory (e.g. through completeness results) with the purpose of understanding dependently typed effectful programming, \cite{yamada2016game}  seems to be interested exclusively in modelling pure type theory and providing a constructive foundation of mathematics. 
\end{remark}
\section{An Indexed Category of Dependent Games}\label{sec:depgame}
Section \ref{sec:backgame} sketched how $\Gamecat_!$ models simple cartesian type theory. In this chapter, we extend this to a model of dependent type theory. In this section, we first show how to equip $\Gamecat_!$ with a notion of dependent type and we show how this leads to an indexed ccc $\DGame_!$ of dependent games and dependently typed strategies.

We define a poset $\Gamecat_\trianglelefteq$ of games with
\begin{align*}
A\trianglelefteq B &:=(M_A= M_B)\;\wedge\;(\lambda_A=\lambda_B)\;\wedge\; (\J_A=\J_B)\;\wedge\;(P_A\subseteq P_B)\;\wedge\\
 & \qquad   (W_A=W_B\cap P_A^\infty)\;\wedge\;\forall_{s,t\in P_B}(s\approx_A t\quad \Leftrightarrow \quad s\in P_A\;\wedge \; s\approx_B t).
\end{align*}
Given a game $C$, we define the complete lattice $\Sub(C)$ as the poset of its $\trianglelefteq$-subgames.  We note that, for $A,B\in\Sub(C)$, $A\trianglelefteq B \Leftrightarrow P_A\subseteq P_B$. We make the following simple observation that we shall refer to later.
\begin{theorem}\label{thm:subfunctor} We have a functor 
$\Gamecat_!\ra{\Sub}\mathsf{CjsLat}$
to the category $\mathsf{CjsLat}$ of complete lattices and join-preserving functions.\end{theorem}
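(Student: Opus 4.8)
The plan is to construct the functor $\mathsf{Sub}$ explicitly on objects and morphisms and then verify functoriality and join-preservation, all by essentially elementary manipulations of plays. On objects, I would set $\mathsf{Sub}(C)$ to be the poset of $\trianglelefteq$-subgames of $C$, as in the paragraph immediately preceding the statement; the content to check here is that this is a complete lattice, which follows because a $\trianglelefteq$-subgame $A$ of $C$ is determined by its set of plays $P_A\subseteq P_C$ (the data $M_A,\lambda_A,\J_A$ are forced to agree with those of $C$, and $\approx_A$, $W_A$ are forced by the restriction clauses), and the condition ``$P_A$ is non-empty, prefix-closed, and closed under the stated conditions on plays'' is closed under arbitrary intersections and directed unions, so arbitrary meets and joins exist; joins are computed by taking unions of play-sets (then closing off if necessary, but in fact unions of subgames of a fixed $C$ that satisfy alternation, linearity, justification and prefix-closure already satisfy those conditions, so the join is literally the union).

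On morphisms, given a strategy $\sigma\in\Gamecat_!(C,D)=\str(!C\multimap D)$, I would define $\mathsf{Sub}(\sigma)\colon\mathsf{Sub}(C)\to\mathsf{Sub}(D)$ to send a subgame $A\trianglelefteq C$ to the subgame $\mathsf{Sub}(\sigma)(A)$ of $D$ whose plays are those $t\in P_D$ such that every ``$\sigma$-realiser'' of $t$ restricted to $C$ lies in $P_A$ — concretely, $t$ is a play such that for every interaction sequence $s$ with $s\in\sigma$ (suitably prefix-closed) and $s\upharpoonright_D\leq t$ we have $s\upharpoonright_{!C}\upharpoonright_i\in P_A$ for all thread indices $i$. (The precise bookkeeping with the $!$ and with prefixes needs care, and I expect the statement is really being used later in the specific situation where $\sigma$ is a ``display''-like strategy, so the general definition only needs to be good enough to make the target a $\trianglelefteq$-subgame of $D$ and to be functorial.) I would then check $\mathsf{Sub}(\sigma)(A)\trianglelefteq D$ by verifying the defining clauses of $\trianglelefteq$: equality of $M$, $\lambda$, $\J$ is inherited from $D$; prefix-closure of the play-set follows from prefix-closure of $\sigma$ and of $P_A$; the $\approx$- and $W$-restriction clauses follow from representation independence of $\sigma$ and the $\approx_C$-closure of $W_C$ together with the corresponding properties of $A\trianglelefteq C$.

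The two functoriality equations — $\mathsf{Sub}(\mathrm{id}_C)=\mathrm{id}_{\mathsf{Sub}(C)}$ and $\mathsf{Sub}(\sigma;\tau)=\mathsf{Sub}(\sigma);\mathsf{Sub}(\tau)$ — would be proved by unwinding the definitions: for identities the copycat strategy forces each thread of the $!C$-component to be $\approx_C$-equal to the $C$-component, so the realisers of a play $t$ are exactly (threads $\approx_C$-equal to) $t$ itself, giving $\mathsf{Sub}(\mathrm{id}_C)(A)=A$; for composition one uses that interaction sequences witnessing $\sigma;\tau$ are obtained from interaction sequences for $\sigma$ and $\tau$ by parallel composition plus hiding, so a realiser of $t\in P_{\mathsf{Sub}(\sigma;\tau)(A)}$ in $C$ decomposes through a realiser in the intermediate game $D$, which by definition lies in $P_{\mathsf{Sub}(\sigma)(A)}$, whence $t\in P_{\mathsf{Sub}(\tau)(\mathsf{Sub}(\sigma)(A))}$, and conversely. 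Finally, join-preservation: since joins in $\mathsf{Sub}(C)$ are unions of play-sets, and the ``every realiser lies in $P_A$'' condition is monotone and preserved under the relevant unions — realisers of a fixed play $t$ form a set that does not grow as $A$ grows, so $t$ is admitted by $\bigvee_j A_j$ iff it is admitted by some $A_j$ — we get $\mathsf{Sub}(\sigma)(\bigvee_j A_j)=\bigvee_j\mathsf{Sub}(\sigma)(A_j)$; the nontrivial inclusion is ``$\subseteq$'', which is exactly the observation that a single play has a fixed, small set of realiser-threads so that if all of them are covered by the union they are covered by a single member (using that the realiser set is finite, or at least that covering is ``pointwise'').

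The main obstacle I anticipate is pinning down the right definition of $\mathsf{Sub}(\sigma)$ on morphisms so that it is simultaneously (i) well-defined, i.e. lands in $\trianglelefteq$-subgames, (ii) strictly functorial on the nose (composition of strategies involves hiding, and one must be sure the induced operation on subgames composes without spurious mismatches), and (iii) join-preserving; getting all three at once is where the careful play-level bookkeeping with the exponential $!$ and with prefix-closure is needed, and it is the step I would spend the most effort on. The rest is routine diagram- and definition-chasing.
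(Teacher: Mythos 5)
Your handling of the objects matches the paper's: a $\trianglelefteq$-subgame of $C$ is determined by a non-empty, prefix-closed, $\approx_C$-closed subset of $P_C$, so $\Sub(C)$ is a complete lattice with joins computed as unions. The gap is in the action on morphisms. You define $\Sub(\sigma)(A)$ by a \emph{universal} condition ("every $\sigma$-realiser of $t$ has all its $!C$-threads in $P_A$"), whereas the paper defines it \emph{existentially}: $\Sub(f)(A') = \{s\in P_B\;|\; \exists_{t\in f}\, s\leq t\upharpoonright_B\wedge \forall_i\, t\upharpoonright_{!A}\upharpoonright_i \in A'\}$, i.e.\ $s$ is admitted as soon as \emph{some} play of $f$ produces it on the $B$-side while keeping every argument-thread inside $A'$. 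These are genuinely different operations, and the universal one does not satisfy the theorem. Most immediately, any play $t\in P_D$ that $\sigma$ never reaches is vacuously admitted into $\Sub(\sigma)(A)$ for every $A$, since its only realiser is (a prefix of) $\epsilon$, all of whose threads are $\epsilon\in P_A$; for the constant-$\mathsf{tt}$ strategy, $*\,\mathsf{ff}$ then lies in $\Sub(\sigma)(\{\epsilon\})$, so $\Sub(\sigma)$ does not even send the bottom subgame to the bottom subgame and cannot be a morphism of complete join-semilattices.

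The step you flag as delicate — join preservation — is in fact where the universal definition breaks irreparably. Your reduction needs: if all realiser-threads of $t$ lie in $\bigcup_j P_{A_j}$ then they all lie in a single $P_{A_j}$. Finiteness of the thread set only gives this for \emph{directed} families (and the realiser set is generally infinite anyway, since infinitely many plays of $\sigma$ can restrict to a prefix of $t$): for incomparable $A_1,A_2$ and a play with one realiser-thread in $P_{A_1}\setminus P_{A_2}$ and another in $P_{A_2}\setminus P_{A_1}$, the play is admitted by $A_1\vee A_2$ but by neither $A_1$ nor $A_2$. Functoriality for composition is also doubtful, because the nested universal quantification ranges over $\sigma$-realisers of intermediate threads that never occur in any hidden interaction witnessing the composite. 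Switching to the paper's existential definition repairs the bottom-element problem and makes monotonicity and functoriality go through via the parallel-composition-plus-hiding decomposition; it also lets one re-choose the witnessing play $t\in f$ when distributing over a union. Be warned, though, that even with the existential definition the clause "$\Sub(f)$ clearly preserves unions" silently requires a single witness whose threads all lie in one $A_j$ — immediate for directed unions, but needing care (and failing for, e.g., a history-free exclusive-or strategy against the two one-point subgames of $\mathbb{B}_*$) for arbitrary binary joins — so if you rework this you should state precisely which joins are preserved and check that those suffice for the theorem's later uses (which are all along isomorphisms).
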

\begin{proof} An element of $\Sub(C)$ is precisely specified by a $\approx_C$-closed prefix-closed subset of $P_C$, so we can compute joins and meets simply by unions and intersections. Given $A\ra{f} B \in \Gamecat_!$ and $A'\trianglelefteq A$, we define
\begin{mccorrection}
$$\Sub(f)(A') := \{s\in P_B\;|\; \exists_{t\in f}s\leq t\upharpoonright_B\wedge \forall_i t\upharpoonright_{!A}\upharpoonright_i \in A'\}.$$\end{mccorrection} The result is clearly prefix-closed and closed under $\approx_B$, as $f$ is closed under $\approx_{A\Rightarrow B}$. $\Sub(f)$ clearly preserves unions.
\end{proof}
 This allows us to define a dependent game as follows, where $\tot{B}$ can be seen as the semantic counterpart to the syntactic translation $B^T$ of section \ref{sec:trans}.

\begin{definition}[Dependent game]
For a game $A$, we define the set $\ob(\DGame_!(A))$ of \emph{games with dependency on $A$} as the \mccorrect{set} of pairs of a game $\tot{B}$ (without dependency) and a function $\str(A)\ra{B}\Sub(\tot{B})$.\end{definition}

We note that $\ob(\DGame_!(I))$ is the set of pairs $(A(\bot),\tot{ {A}})$ where $A(\bot)\trianglelefteq \tot{A}$, in which $\ob(\Gamecat_!)$ embeds as the proper subset of diagonal elements $(A,A)$. As the definability results of section \ref{sec:compl} illustrate, we need the generality of $\ob(\DGame_!(I))$ to properly capture the notion of closed typed in {$\DTTGame$}. Therefore, we define, more generally, for a pair $(A,\tot{ A})\in\ob(\DGame_!(I))$, $\ob(\DGame_!(A(\bot),\tot{ A})):=\ob(\DGame_!(\tot{ A}))$. As an example, let us write $x:\mathsf{mm}\vdash \mathsf{dd-mm}(x)$ for the (finite inductive) type family encoding the calendar of the year 1984 in dd-mm format. For instance, $\mathsf{dd-mm}(\textnormal{02})$ has constructors 01-02,$\ldots$,29-02. In this case, we note that for the purposes of the type theory the closed type $\mathsf{dd-mm}(\textnormal{02})$ will behave differently  from the closed (inductive) type $\{\textnormal{01-02},\ldots,\textnormal{29-02}\}$. Indeed, when eliminating from the former, our case analysis contains (redundant) additional information on how to handle the all other days of the year as well. This example shows that for a substituted type like $\mathsf{dd-mm}(\textnormal{02})$ the type theory still remembers information about the whole type family $\mathsf{dd-mm}$ (like the constructors outside the particular fibre under consideration), hence our interpretation of closed types as pairs $A(\bot)\trianglelefteq \tot{A}$ of games rather than as single games. From now on, we write $A$ for the pair $(A(\bot),\tot{ A})\in \ob(\DGame_!(I))$ and, more generally and slightly ambiguously, $B$ for the pair $(B,\tot{ B})\in\ob(\DGame_!(A))$.

Writing $s\mapsto \overline{s}$ for the function from  $P_{!\tot{ A}}$ to the power set $\mathcal{P}P_{\tot{ A}}$, inductively defined on the empty play, Opponent moves and Player moves, respectively, as  $\epsilon\mapsto \emptyset,\;\; s(i,a)\mapsto~\overline{s}, \quad s(i,a)(i,b)\mapsto \overline{s(i,a)}\cup \{t\;|\; \exists_{s'\in\overline{s}}t\approx_{\tot{ A}} s'ab \}$,  we define the $\Pi$-game as follows.

\begin{definition}[$\Pi$-Game]
Given $A\in\ob(\DGame_!(I))$, $B\in\ob(\DGame_!(A ))$, we define $\Pi_{A}B\in\ob(\DGame_!(I ))$ with $\tot{ \Pi_{A}B}:=\tot{A}\Rightarrow\tot{B}$ and $(\Pi_{A}B)(\bot)$ carved out in $\tot{\Pi_{A}B}$ as follows\\
\resizebox{\linewidth}{!}{\parbox{1.02\linewidth}{
\begin{align*}
P_{(\Pi_{A} B)(\bot)}:=&\{\epsilon\}\;\bigcup \\
&\{sa \in P^\odd_{\tot{A}\Rightarrow \tot{B}}\;|\; s\in P_{(\Pi_{A} B)(\bot)}^\even\;\wedge\;  \exists_{\overline{sa\upharpoonright_{!\tot{ A}}}\subseteq \tau\in \str(A(\bot))} sa\in P_{A(\bot)\Rightarrow B(\tau)}\;\}\;\bigcup\\
&\{sab \in P^\even_{\tot{A}\Rightarrow \tot{B}}\;|\; sa \in P_{(\Pi_{A} B)(\bot)}^\odd\;\wedge\; \\
& \qquad\qquad\qquad\qquad  \quad\;\forall_{ \overline{sab\upharpoonright_{!\tot{ A}}}\subseteq \tau\in \str(A(\bot))}  sa\in P_{A(\bot)\Rightarrow B(\tau)}\Rightarrow sab\in P_{A(\bot)\Rightarrow B(\tau)}\;\}.
\end{align*}
}}
\end{definition}
We note that we can make $\DGame_!(A )$ into a ccc\footnote{Perhaps a more insightful way to think of this is as $\DGame_!(A)$ being obtained as a co-Kleisli category for a linear exponential comonad $!$ on a symmetric monoidal closed category $\DGame(A)$. Here, $\DGame(A)$ has the same objects as $\DGame_!(A)$ on which we define operations $I$, $\otimes$, $\multimap$ pointwise, while also performing the operation on $\tot{B}$, and $\tot{!B}:=!\tot{B}$ while $(!B)(\sigma):=\{s\in P_{!(B(\sigma))}\;|\; \exists_{\tau\in\str(B(\sigma))}\overline{s}\subseteq \tau\;\}$. We define $\DGame(A)(B,C):=\str(\Osat(\Pi_A(B\multimap C))$ with the obvious identity morphisms and composition. In fact, along similar lines, the games model of DTT that we present in this chapter can easily be modified to give a model of dDILL.} by defining $I$ and $\&$ pointwise on dependent games $B$, while also performing the operation on $\tot{ B}$, and by defining $P_{(B\Rightarrow C)(\sigma)}:=\{s\in P_{B(\sigma)\Rightarrow C(\sigma)}\;|\; \exists_{\tau\in\str(B(\sigma))}\overline{s\upharpoonright_{B(\sigma)}}\subseteq\tau\;\}$ and $\tot{ B\Rightarrow C}:=\tot{ B}\Rightarrow \tot{C}$. This lets us define $\DGame_!( A)( B,C):= \str(\Osat(\Pi_{A}(B\Rightarrow C)))$ with the obvious identity morphisms and composition, which we discuss later. Here, $\ob(\DGame_!(I))\ra{\Osat}\ob(\Gamecat_!)$, sends $(A(\bot),\tot{ A})$ to the game in which Opponent can play freely in $\tot{A}$ and Player has to respect the rules of the more restrictive game $A(\bot)$ as long as Opponent does:\\
\resizebox{\linewidth}{!}{\parbox{1.05\linewidth}{
\begin{align*}P_{\Osat(A(\bot),\tot{ A})}:=&\{\epsilon\}\;\bigcup\\
& \{sa \in P_{\tot{ A}}^{\odd}\;|\; s\in P^{\even}_{\Osat(A(\bot),\tot{ A})}\;\}\;\bigcup\;\\
&\{sab\in P^{\even}_{\tot{ A}}\;|\; sa\in P^{\odd}_{\Osat(A(\bot),\tot{ A})}\;\wedge\; (sa\in P_{A(\bot)}\Rightarrow sab\in P_{A(\bot)})\}.
\end{align*}}\hspace{20pt}\;}
\begin{remark}Note that, explicitly, the \emph{game of dependent functions from $A$ to $B$}, $\Osat(\Pi_{A} B)$, is carved out in $\tot{ A}\Rightarrow \tot{ B}$, as\\
\resizebox{\linewidth}{!}{\parbox{\linewidth}{
\begin{align*}
P_{\Osat(\Pi_{A} B)}:=&\{\epsilon\}\;\bigcup\\
&\{sa\in P^\odd_{\tot{A}\Rightarrow \tot{B}}\;|\; s\in P_{\Osat(\Pi_{A} B)}^\even \;\}\;\bigcup\\
&\{sab\in P^\even_{\tot{A}\Rightarrow \tot{B}}\;|\; sa \in P_{\Osat(\Pi_{A} B)}^\odd\;\wedge \;\\
& \qquad\qquad\qquad\qquad  \quad\;\forall_{ \overline{sab\upharpoonright_{!\tot{ A}}}\subseteq \tau\in \str(A(\bot))} sa\in P_{A(\bot)\Rightarrow B(\tau)}\Rightarrow sab\in P_{A(\bot)\Rightarrow B(\tau)}\;\}.
\end{align*}
}}\\
Indeed, this follows as $\overline{sab\upharpoonright_{!\tot{ A}}}=\overline{sa\upharpoonright_{!\tot{ A}}}$. An explicit proof is given for the more general claim of theorem \ref{thm:reppi}.

Recall that we would like $\tot{ -}$ to define a faithful functor to the world of simply typed games, being the semantic equivalent of $(-)^T$. It is for this reason that the game of dependent functions from $A$ to $B$ is saturated under all $O$-moves in $\tot{ A}\Rightarrow \tot{ B}$. We present $\Osat$ as a separate operation as this presentation will simplify the treatment of higher-order dependent functions in section \ref{sec:ctxt}.
\end{remark}

Following the mantra of game semantics for quantifiers \cite{abramsky2005game}, in $\Osat(\Pi_{A} B)$, Opponent can choose a strategy $\tau$ on $A(\bot)$ while Player has to play in a way that is compatible with all choices of $\tau$ that have not yet been excluded. Similarly to the approach taken in the game semantics for polymorphism \cite{abramsky2005game}, we do not specify all of $\tau$ in one go, as this would violate ``Scott's axiom'' of continuity of computation. Instead, $\tau$ is gradually revealed, explicitly so by playing in $!\tot{ A}$ and implicitly by playing in $\tot{ B}$. That is, unless Opponent behaves naughtily, in the sense that there is no strategy $\tau$ on $A(\bot)$ which is consistent with her behaviour \mccorrect{such that} $s\upharpoonright_{\tot{ B}}$ obeys the rules of $B(\tau)$. In case of such a naughty Opponent, any further play in $\tot{ A}\Rightarrow \tot{ B}$ is permitted.

\begin{remark}In particular, $\DGame_!(I)$ is a ccc which has $\Gamecat_!$ as a proper full subcategory. Note that the morphisms from $A$ to $B$ consist of the strategies on $\tot{ A}\Rightarrow \tot{ B}$ for which Player plays along the rules of $A(\bot)\Rightarrow  B(\bot)$ as long as Opponent does so and as long as there is a strategy on $A(\bot)$ which is consistent with her play.
\end{remark}

For a function $Y\ra{X}\Set$ to the class $\Set$ of sets, we define $\tot{X_*}:=(\bigcup_{y\in Y} X(y))_*$ and $X_*(y):=X(y)_*$.  For an example of non-constant type dependency, write  $\mathsf{days}(n):=\{m\;|\; \textnormal{there are $>m$ days in the year $n$}\}$\mccorrect{. Then }${\mathsf{days}_*}(n):={\mathsf{days}(n)_*}$ \mccorrect{is} a game depending on $\NInd$ (with ${\mathsf{days}_*}(n)={\mathbb{N}_{<365}{}_*}\textnormal{ or }{\mathbb{N}_{<366}{}_*}$ ). Note that this will not correspond to a finite inductive type family as the fibres of the type are not disjoint. Then, figure \ref{fig:depstr} gives four examples of valid dependently typed strategies.\begin{figure}[!tb]
\centering\resizebox{\linewidth}{!}{
$\begin{array}{c|c|c|c||c}
\begin{array}{ccc}
\NInd &\quad & {\mathsf{days}_*}\\
\hline
 & & *\\
 & & 364\\
 & & \\
 & & \\
 & & \\
 & & \\
\end{array}
\hspace{10pt}&\hspace{10pt}
\begin{array}{ccc}
!\NInd &\quad & {\mathsf{days}_*}\\
\hline
 & & *\\
(i,*) & & \\
(i,1984) & & \\
 & & 365 \\
 & & \\
 & & \\
\end{array}
\hspace{10pt}& \hspace{10pt}
\begin{array}{ccc}
!\NInd &\quad & {\mathsf{days}_*}\\
\hline
 & & *\\
(i,*) & & \\
(i,1985) & & \\
(i+1,*) & & \\
(i+1,1986) & & \\
 & & 365 \\
\end{array}
\hspace{10pt}& \hspace{10pt}
\begin{array}{ccccc}
!\NInd &\quad & !{\mathsf{days}_*} & \quad & {\mathsf{days}_*}\\
\hline
 & & & & *\\
 & &(i,*) &  & \\
 & &(i,m) &  & \\
 & & & & m\\
 & & & &\\
 & & & &
\end{array}
 \hspace{10pt}&
\begin{array}{c}
\vspace{4pt}\\
O\\
P\\
O\\
P\\
O\\
P
\end{array}
\end{array}
$}
\caption{\label{fig:depstr} Three plays in $\Osat(\Pi_{\NInd}{\mathsf{days}_*})$ and one in $\Osat(\Pi_{\NInd}({\mathsf{days}_*}\Rightarrow {\mathsf{days}_*}))$. The first as all years have $> 364$ days, the second as $1984$ was a leap year, the third as Player can play any move in $\tot{{\mathsf{days}}_*} ={\mathbb{N}_{<366}{}_*}$ after Opponent has not played along a (history-free) strategy on ${\mathbb{N}}_*$ and the fourth as Opponent makes the move $m$ first, after which Player can safely copy it. In the paired moves, Player chooses an (irrelevant)  index $i$. For an interpretation of the plays in $\Osat(\Pi_{\NInd}\mathsf{days}_*)$, imagine them as a dialogue between a departmental education manager (Opponent) and an academic (Player) where Player gets to choose for every year the date that she promises to have marked the students' end-of-year exam. A cheeky academic might try to suggest that she'll return the marked exams every year on the 366th day of the year without asking the manager for which year he wants to know the date. Clearly the manager should not accept this. This would correspond to a play $*365$, which is illegal as, by making the move $365$, Player would exclude certain fibres (the non-leap years), which is a privilege only Opponent has. }
\end{figure}
 The fourth example is especially important, as it generalises to a (derelicted) $B$-copycat on $\Osat(\Pi_{A}(B\Rightarrow B))$ for arbitrary $B$, denoted $\diagv{[A]}{[B]}$ in section \ref{sec:ctxt}. This~motivates why Opponent can narrow down the fibre of $B$ freely, while Player can only play without narrowing down the fibre further. To see that Player should not be able to narrow down the fibre of $B$, note that we do not want $f:=\{\epsilon, *365\}$ to define a strategy on $\Osat(\Pi_{\NInd}{\mathsf{days}_*})$, as $1983;f=\{\epsilon,*365\}\notin\str({\mathsf{days}_*(1983)})$.

\mccorrect{We now obtain the following result, whose proof we omit, as we shall prove a more general result in theorem} \ref{thm:cwf}.
\begin{theorem}We obtain a strict indexed ccc\footnote{That is, a functor from $\DGame_!(I)^{op}$ to the $1$-category $\mathsf{CCCat}$ of cartesian closed categories and \mccorrect{strict} cartesian closed functors.}
\begin{diagram}
\DGame_!(I)^{op} & \rTo^{(\DGame_!,-\{-\})} & \mathsf{CCCat}
\end{diagram}
 of dependent games, if we define
\begin{itemize}
\item fibrewise objects $\ob(\DGame_!(A)):=\{\str(\tot{ A})\ra{B}\Sub(\tot{ B})\;|\; \tot{ B}\in \ob(\Gamecat_!)\;\}$;
\item fibrewise hom-sets $\DGame_!(A)(B,C):=\str(\Osat(\Pi_{A}(B\Rightarrow C)))$;
\item fibrewise identities $\der_B:= \{s\in P^\even_{\Osat(\Pi_{A}(B\Rightarrow B))}\;|\;\forall_{s'\in P_{\Osat(\Pi_{A}(B\Rightarrow B))}^\even}s'\leq s\Rightarrow \exists_i  s'\upharpoonright_{!\tot{ B}}\upharpoonright_i\approx_{\tot{ B}} s '\upharpoonright_{\tot{ B}}\}$;
\item for $B\ra{\tau}C\ra{\tau'}D\in\DGame_!(A)$, we define the fibrewise composition $B\ra{\tau^\dagger;_A \tau'}D\in\DGame_!(A)$ as $\tau^\dagger;_A \tau':=\mathsf{diag}^\dagger_{A};(\tau^\dagger \otimes \tau') ; \mathsf{comp}_{\tot{ B},\tot{ C},\tot{ D}}$;
\item  given $f\in \Gamecat_!(A',A)$, we define the change of base functor $-\{f\}$: $B\{f\}\in \ob(\DGame_!(A'))$ where $B\{f\}(\sigma):=B(!(\sigma);f)$ and $\tot{ B\{f\}}:=\tot{ B}$ and $\tau\{f\}:=f^\dagger ; \tau$.
\end{itemize}
\end{theorem}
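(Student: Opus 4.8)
The plan is to verify, one bullet at a time, that the data listed in the statement assemble into a strict indexed ccc. The strategy is to reduce as much as possible to already-established facts about $\Gamecat_!$ (it is a ccc with the $!$-comonad, Seely isomorphisms, etc.), to theorem \ref{thm:subfunctor} ($\Sub$ is a functor to $\mathsf{CjsLat}$), and to routine but careful reasoning about which moves are legal in the games $\Osat(\Pi_A B)$, $\Pi_A B$ and $\Osat$. First I would record the explicit description of $P_{\Osat(\Pi_A B)}$ as carved out in $\tot A \Rightarrow \tot B$ (the remark after the $\Pi$-game definition, whose proof is deferred to theorem \ref{thm:reppi}); this is the workhorse and every subsequent check refers back to it. Crucially, one uses that $\overline{sab\upharpoonright_{!\tot A}} = \overline{sa\upharpoonright_{!\tot A}}$ — Player's move $b$ in $\tot B$ does not reveal more of Opponent's strategy $\tau$ on $A(\bot)$ — so the ``naughty Opponent'' clause and the fibre constraint only need to be tracked at odd positions.

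Next I would treat the fibrewise structure. For each fixed $A$ and the identity $\der_B$: one shows $\der_B$ is a well-defined strategy on $\Osat(\Pi_A(B\Rightarrow B))$ (it is essentially a derelicted copycat on $\tot B$ that is visibly $O$-saturated — the fourth example in figure \ref{fig:depstr} generalises), and that it satisfies the four conditions on strategies; then that it is neutral for $\dagger$-composition, which reduces to the corresponding fact in $\Gamecat_!$ after one checks the composite lands in $\Osat(\Pi_A(-\Rightarrow -))$ rather than merely in $\tot B \Rightarrow \tot D$. Associativity and functoriality of $;_A$ likewise reduce to the ccc structure of $\Gamecat_!$ together with the observation that the $O$-saturation and fibre constraints are preserved under composition: if $\tau: B\Rightarrow C$ and $\tau':C\Rightarrow D$ respect $A(\bot)$ as long as Opponent does, so does their interaction, because any naughtiness of the outer Opponent in $\tot A\Rightarrow\tot D$ immediately licenses arbitrary play. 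The cartesian closed structure on each fibre $\DGame_!(A)$ is obtained from the pointwise operations $I$, $\&$ on dependent games and the fibred exponential $P_{(B\Rightarrow C)(\sigma)}$ as indicated; one must check the universal properties, which again follow fibrewise-pointwise from those in $\Gamecat_!$ once one knows $I$, $\&$, $\Rightarrow$ restrict correctly to the subgame structure, i.e. that $\Sub$ sends them to the expected lattice operations (unions/intersections of play-sets).

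Then I would verify that $-\{f\}$ is well-defined and functorial in $f$, and that it is a \emph{strict} cartesian closed functor. Well-definedness: $B\{f\}(\sigma) := B(!(\sigma);f)$ does land in $\Sub(\tot B)$ because $!(\sigma);f \in \str(\tot A)$ when $\sigma\in\str(\tot{A'})$; $\tau\{f\} := f^\dagger;\tau$ is a strategy on $\Osat(\Pi_{A'}(B\{f\}\Rightarrow C\{f\}))$ — here one checks that precomposing the $!\tot{A'}\Rightarrow\tot{B}$-component with $f^\dagger$ does not introduce illegal Player moves, using that $f$ is itself $O$-saturated into $A(\bot)$. Functoriality, $-\{\id\} = \id$ and $-\{f;g\} = -\{g\}\{f\}$ (on objects and on morphisms), follows from $!(\sigma);\id = \sigma$, the comonad laws for $!$ giving $(f;g)^\dagger = g^\dagger ; ?$ — more precisely from the standard co-Kleisli composition identity $f^\dagger;g^\dagger = (f^\dagger;g)^\dagger$ in $\Gamecat_!$ — together with $B(!(!(\sigma);f);g) = B(!(\sigma);f;g)$. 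Strictness (preservation of $I$, $\&$, $\Rightarrow$ on the nose, not just up to iso) is forced by the pointwise definitions: $(B\& C)\{f\}(\sigma) = (B\&C)(!(\sigma);f) = B(!(\sigma);f)\& C(!(\sigma);f) = (B\{f\}\&C\{f\})(\sigma)$, and similarly for $I$ and $\Rightarrow$, and the same on $\tot{(-)}$ since $\tot{B\{f\}} = \tot B$.

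The main obstacle I anticipate is the bookkeeping around $\Osat$ and the fibre constraints when composing morphisms and when pulling back along $f$: one must be sure that the defining play-sets of $\Osat(\Pi_A(B\Rightarrow C))$ are genuinely closed under the composition and change-of-base operations, and in particular that the deferred lemma \ref{thm:reppi} (representability / the explicit form of $P_{\Osat(\Pi_A B)}$) is available and strong enough. Everything else is a transport of known ccc facts about $\Gamecat_!$ through the pointwise-on-fibres and $\tot{(-)}$-faithful structure; the genuinely new content is confined to showing that the $O$-saturation and ``consistent-with-some-$\tau$'' side-conditions are stable under all the categorical operations, which is where I would spend the care. Since the statement asks only for the construction (the proof being omitted in the text, with a forward reference to theorem \ref{thm:cwf}), it would suffice to carry out these checks at the level of detail above and invoke theorem \ref{thm:cwf} for the remaining routine verifications.
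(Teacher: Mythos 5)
Your plan matches what the paper itself does: the proof of this theorem is omitted in favour of the more general Theorem \ref{thm:cwf}, whose proof is exactly the direct verification you outline — the explicit description of $\Osat(\Pi_{A}(-))$ from Theorem \ref{thm:reppi} together with the observation $\overline{sab\upharpoonright_{!\tot{A}}}=\overline{sa\upharpoonright_{!\tot{A}}}$, the claim that the derelicted copycat is a legal dependently typed strategy, and the claim that composition and change of base preserve the $O$-saturation and fibre constraints. The one place your sketch is noticeably thinner than the paper's argument is the composition claim, where Theorem \ref{thm:cwf} spends most of its effort checking that the hidden Player moves of $\tau$ in $\tot{C}$ remain consistent with a strategy on the correct fibre (so that $\tau'$ sees a well-behaved Opponent and its own moves stay legal); you correctly flag this as the point requiring care, so the proposal is sound.
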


Seeing that $\DGame_!(I)$ additionally has a terminal object $I$ to interpret the empty context, we are well on our way to producing a  model of dependent type theory: we only need to interpret context extension. This takes the form of the \mccorrect{full and faithful} comprehension axiom for $\DGame_!$, which states that for each $A\in\ob(\DGame_!(I))$ and $B\in\ob(\DGame_!(A))$ the following presheaf is representable $$x\mapsto\DGame_!(\mathsf{dom}(x))(I,B\{x\}):(\DGame_!(I)/A)^{op}\ra{}\Set$$ 
\mccorrect{and that this induces a bijection $\DGame_!(A)(B,C)\cong \DGame_!(I)/A(\proj{A}{B},\proj{A}{C})$.}
Unfortunately, this fails, as $\DGame_!(I)$ does not yield a sound interpretation of dependent contexts. Essentially, the problem is that we do not have \emph{additive $\Sigma$-types}, appropriate generalisations $\Sigma_A^{\&} B$ of $\&$ to interpret dependent context extension in $\DGame_!(I)$ (\mccorrect{c.f theorem} \ref{thm:cohnosigma}). 

\begin{theorem}$\DGame_!$ does not satisfy the \mccorrect{full and faithful} comprehension axiom.
\end{theorem}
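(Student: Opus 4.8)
The plan is to exhibit a concrete counterexample showing that the relevant presheaf fails to be representable in $\DGame_!(I)$, mirroring the structure of the proof of theorem \ref{thm:cohnosigma} (absence of $\Sigma^{\&}$-types in the coherence space model). The key observation is that if $\DGame_!$ satisfied the full and faithful comprehension axiom, then for $A\in\ob(\DGame_!(I))$ and $B\in\ob(\DGame_!(A))$ there would be a representing object $\proj{A}{B}:A.B\to A$ in $\DGame_!(I)$ together with a universal element, and the characterising bijection $\DGame_!(A)(B,C)\cong \DGame_!(I)/A(\proj{A}{B},\proj{A}{C})$; in particular, taking $C=I$ and unwinding, the fibre $\DGame_!(A)(I,B)$ of closed sections would have to be in natural bijection with $\DGame_!(I)(A,A.B)$ sitting over $\id_A$, and more to the point $A.B$ would need to be a dependent game whose "total strategies into it over $A$" match the sections of $B$. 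The obstruction is precisely the same counting/structural mismatch as in theorem \ref{thm:cohnosigma}: there is no game that can serve as the additive $\Sigma$-type $\Sigma_A^{\&}B$.

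Concretely, I would take $A$ to be a small flat game, e.g. $A = \mathbb{B}_* = \{\ttt,\fff\}_*$ (or even $I$ itself suitably embedded as $(A(\bot),\tot{A})$ with $A(\bot)\trianglelefteq\tot{A}$ a proper inclusion), and $B\in\ob(\DGame_!(A))$ a dependent game whose fibres over the two strategies $\ttt$ and $\fff$ are $B(\ttt) = I$ (the trivial one-move game or even the empty game) and $B(\fff) = $ some nontrivial finite flat game, with $\tot{B}$ their join in $\Sub$. Then I would compute the set $\DGame_!(A)(I,B)$ of closed sections: a section must, for each choice of fibre made by Opponent, play appropriately, so this set is essentially a dependent product over the fibres of the sets of strategies, which has a certain finite cardinality. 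I would then argue that no dependent game $A.B$ in $\DGame_!(I)$ can represent the functor $x\mapsto \DGame_!(\mathsf{dom}(x))(I,B\{x\})$: the candidate representing objects are severely constrained (as in theorem \ref{thm:cohnosigma}, where the possibilities were reduced to $\top$, $I$, $I\oplus I$), and for each candidate the induced projection $\proj{A}{B}$ and universal element fail either to be a well-defined dependently typed strategy or to induce the required natural bijection. The cleanest route is to show the naturality square forces $A.B$ to simultaneously have the "Player may not narrow the fibre" behaviour of $\Osat$-style dependent games and the "Opponent freely chooses a fibre" behaviour needed for $A$ to project off it, and these are incompatible — exactly the Cheshire cat phenomenon discussed in the remark after theorem \ref{thm:inttyp}, where we want both a cat ($A$) and a grin ($B$) but the additive packaging cannot deliver both.

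The main obstacle will be bookkeeping the temporal subtleties: unlike in the coherence space model, where the argument was a clean counting argument on underlying sets ($|UC|$ versus edges plus vertices), here the relevant invariant is the set of valid plays together with the $\approx$-structure and the saturation conditions, and I need to rule out \emph{all} dependent games $A.B$, not just finitely many. I expect to handle this by first showing that if $A.B$ represents the comprehension then the projection $\proj{A}{B}$ must restrict fibrewise to a projection $A.B \to A$ of simple games, which (by the structure of $\Gamecat_!$-projections and the fact that in $\DGame_!(I)$ Player controls the fibre-restriction in $\tot{A}\Rightarrow\tot{A.B}$ but cannot narrow fibres) forces the fibres of $A.B$ to be "glued" versions of $A$ and $B$ that are too coarse; then I would derive a contradiction by exhibiting a section of $B$ (or a naughty Opponent in $A.B$) with no counterpart on the other side of the purported bijection. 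A secondary point to be careful about is which notion of comprehension is being refuted: I would make sure the counterexample already breaks representability (not merely full-and-faithfulness), since the remark surrounding theorem \ref{thm:inttyp} and the structure of $\DGame_!(A)$ strongly suggest the failure is at the level of $\Sigma_A^{\&}$-objects existing at all, just as in theorem \ref{thm:cohnosigma}.
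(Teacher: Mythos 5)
Your overall strategy is the right one and is essentially the paper's: pick a concrete $A=\BInd$ and a dependent game $B$ over it with distinguishable fibres, and show that no candidate comprehension object $A.B$ can induce the required bijections. (The paper uses $\just_*$ with $\just_*(\ttt)=\{\ttt\}_*$ and $\just_*(\fff)=\{\fff\}_*$; your asymmetric choice of fibres $I$ and a nontrivial flat game would likely also work, though the symmetric choice is what later lets the paper cut the case analysis down "by symmetry".) However, the step you yourself flag as "the main obstacle" — ruling out \emph{all} possible representing objects, not finitely many — is exactly the step that carries the proof, and your proposal only gestures at it. The paper resolves it with two concrete devices you do not supply: (i) instantiating the comprehension bijection at test families $C$ with \emph{constant} fibre $C(\tau)=I$ and $\tot{C}=D$ for arbitrary $D$, which by the Yoneda lemma forces $\tot{\Sigma^{\&}_{\BInd}\just_*}\cong\BInd\&\BInd$ on the nose; and (ii) the join-preserving functor $\Sub$ of theorem 5.1, which turns that isomorphism of total games into an isomorphism of subgame lattices and hence reduces the $\bot$-fibre of the candidate to one of nine explicit options ($I\& I$, $I\&\emptyset_*$, \dots, $\BInd\&\BInd$). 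Only then does a second test family ($C=\just_*$) kill each option by exhibiting a strategy (e.g.\ the negation copycat between the two copies of $\just_*$) present on one side of the purported bijection but not the other. Without something playing the role of (i) and (ii), your case analysis has no finite list of candidates to run over.

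A second, smaller issue: you say you would arrange the counterexample to break representability alone, "not merely full-and-faithfulness". But the paper's chain of isomorphisms explicitly uses \emph{both}: representability gives $\DGame_!(I)/\BInd(\proj{\BInd}{\just_*},\proj{\BInd}{C})\cong\DGame_!(\Sigma^{\&}_{\BInd}\just_*)(I,C\{\proj{\BInd}{\just_*}\})$, while full-and-faithfulness is what identifies $\DGame_!(\BInd)(\just_*,C)$ with the slice hom-set in the first place. Refusing to use full-and-faithfulness proves a stronger statement than the theorem claims, and it deprives you of the reduction to a comparison of two sets of the computable form $\str(\Osat(\Pi_{(-)}(-)))$; you would instead have to analyse morphisms of $\DGame_!(I)$ into an arbitrary candidate $A.B$, which is considerably harder and is not what the paper does. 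As written, the proposal is a credible plan with the right counterexample but a genuine gap where the quantification over all candidate objects must be discharged.
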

\begin{proof}
Let us write $\mathbb{B}:=\{\ttt,\fff\}$. Then, $\BInd$ is the usual flat game of Booleans. We can define a dependent game $\just_*$ over $\BInd$, where $\tot{ {\just}_*}:=\BInd$, ${\just}_*(\fff)={\{\fff\}}_*$ and  ${\just}_*(\ttt)={\{\ttt\}}_*$.

Then, note that the comprehension axiom (supposing that it holds) implies that, for any $C\in\ob(\DGame_!(\BInd))$,
\begin{mccorrection}
\begin{align*}
\str(\Osat(\Pi_{\BInd}({\just}_* \Rightarrow C)))&=\DGame_!(\BInd)({\just}_*,C)\\
&\cong \DGame_!(I)/\BInd(\proj{\BInd}{{\just}_*},\proj{\BInd}{C})\\
&\cong\DGame_!(\Sigma_{\BInd}^{\&}{\just}_*)(I,C\{\proj{\BInd}{{\just}_*}\})\\
&= \str(\Osat(\Pi_{\Sigma_{\BInd}^{\&}{\just}_*}C)),
\end{align*}
\end{mccorrection}
\mccorrect{where the second isomorphism is the full and faithfulness of the comprehension functor and the third isomorphism is the comprehension axiom (representability condition)} and where $\Sigma_{\BInd}^{\&}{\just}_*\ra{\proj{\BInd}{{\just}_*}}\BInd$ is the representing object above for $A=\BInd$ and $B={\just}_*$.

Now, taking $C(\tau)=I$ for all $\tau$ and $\tot{C}=D$ for some game $D$, implies that $\tot{ \Sigma_{\BInd}^{\&} {\just}_*}\cong\BInd\&\BInd$. Indeed, we have a natural bijection $\Gamecat_!(\tot{\Sigma_{\BInd}^{\&}{\just}_*},D)=\str(\tot{\Sigma_{\BInd}^{\&}{\just}_*}\Rightarrow D)=\str(\Osat(\Pi_{\Sigma_{\BInd}^{\&}{\just}_*}C))=\str(\Osat(\Pi_{\BInd}{\just}_*\Rightarrow C))=\str(\BInd\Rightarrow\BInd\Rightarrow D)=\Gamecat_!(\BInd,\BInd\Rightarrow D)\cong \Gamecat_!(\BInd\&\BInd,D)$, which according to the Yoneda lemma is induced by an isomorphism $\tot{ \Sigma_{\BInd}^{\&} {\just}_*}\cong\BInd\&\BInd$ in $\Gamecat_!$.

According to theorem \ref{thm:subfunctor} this induces an isomorphism $\Sub(\tot{ \Sigma_{\BInd}^{\&} {\just}_*})\cong\Sub(\BInd\&\BInd)$. Therefore, symmetry of $\just$ in $\ttt$ and $\fff$ implies that there are only nine options for $(\Sigma_{\BInd}^{\&} {\just}_*)(\bot)$: $I\& I$, $I\&{\emptyset}_*$, ${\emptyset}_*\& I$, ${\emptyset}_*\&{\emptyset}_*$,  $I\&\BInd$, $\BInd\& I$, ${\emptyset}_*\&\BInd$, $\BInd\& {\emptyset}_*$  and $\BInd\&\BInd$.

We take $C={\just}_*$ in the bijection implied by the comprehension axiom above, to obtain $\str(\Osat(\Pi_{\BInd}({\just}_* \Rightarrow {\just}_* )))\cong \str(\Osat(\Pi_{\Sigma_{\BInd}^{\&}{\just}_*}{\just}_* ))$. We see that none of the nine options is satisfactory. Indeed, $I\& I$, $I\&{\emptyset}_*$, ${\emptyset}_*\& I$, ${\emptyset}_*\&{\emptyset}_*$,  $\BInd\& I$ and $\BInd\& {\emptyset}_*$ would imply that the negation between the two copies of ${\just}_*$ is a member of the right hand side, but not the left hand side, which is a contradiction. Similarly, $I\& I$, $I\&{\emptyset}_*$, ${\emptyset}_*\& I$, ${\emptyset}_*\&{\emptyset}_*$,  $I\&\BInd$ and ${\emptyset}_*\&\BInd$ would imply that the negation between $\BInd$ and the second copy of ${\just}_*$ is a member of the right hand side, but not the left hand side, which is a contradiction. The last case of $\BInd\&\BInd$ also leads to a contradiction as it would restrict members of the right hand side to output $\ttt$ in response to having been supplied with arguments $\ttt$ and $\fff$ to the function upon request, while members of the left hand side would also be free to answer~$\fff$.
\end{proof}
This is a common problem we discussed in section \ref{sec:depgirard}. It also occurred for coherence space semantics, which is not surprising if we view game semantics as coherence space semantics extended in time. While we had a good candidate category $\Stable$ of $!$-coalgebras to extend $\Coh_!$, such an obvious candidate is not available for games. Section \ref{sec:depgirard} suggested that such a suitable category of $!$-coalgebras may be \emph{constructed} either by inductively closing the co-Kleisli category under $\Sigma$-types or by coinductively restricting the co-Eilenberg-Moore category to be closed under $\Pi$-types. As it is easier to get an explicit description of the former category,  we construct a category of \emph{context games} by formally closing $\Gamecat_!$ under a notion of $\Sigma$-type. It is on this category that we  base our model of dependent type theory.

\section{A Category with Families of Context Games}\label{sec:ctxt}

All is not lost, however. In fact, we have almost translated the syntax of dependently typed equational logic into the world of games and strategies. The remaining generalisation, necessitated by the lack of additive $\Sigma$-types, is to dependent games depending on multiple (mutually dependent) games. We can produce a categorical model of {$\DTTGame$} out of the resulting structure by applying a so-called \emph{category of contexts ($\Ctxt$) construction}, which is precisely how one builds a categorical model from the syntax of dependent type theory \cite{hofmann1997syntax,pitts1995categorical}. This construction can be seen as a way of making our indexed category satisfy the comprehension axiom, extending its base category by (inductively) adjoining (strong) $\Sigma$-types formally, analogous to the $\mathsf{Fam}$-construction of \cite{abramsky1998cbvgames} which adds formal coproducts. \mccorrect{We encourage the reader to view this closure in the light of section} \ref{sec:depgirard}: \mccorrect{as inductively closing the co-Kleisli category under $\Sigma$-types.}

The problem which needs to be addressed is how to interpret dependent types and dependent functions of more identifiers. This is done through a notion of context game and a generalisation of the $\Pi$-game construction from the previous section.
\begin{definition}[Context Game]
We inductively define a \emph{context game} to be a (finite) list $[X_i]_{1\leq i \leq n}$ where $X_{i}$ is a 
\emph{game with dependency on $[X_j]_{j< i}$}, i.e. a function $\str(\tot{ X_1})\times\cdots\times\str(\tot{ X_{i-1}})\cong \str(\tot{ X_1}\&\cdots\&\tot{ X_{i-1}})\ra{X_i}\Sub(\tot{ X_i})$ for some game $\tot{ X_i}$.
\end{definition}
To keep notation light, we sometimes abuse notation and write $A$ for the context game $[A]$ of length $1$.
\begin{definition}[Dependent $\Pi$-game]For a game $X_{n+1}$ depending on $[X_i]_{i\leq n}$, we define the game $\Pi_{X_n}X_{n+1}$ depending on $[X_i]_{i\leq n-1}$ by $\tot{ \Pi_{X_n}X_{n+1}}:=\tot{ X_n}\Rightarrow \tot{ X_{n+1}}$ from which $(\Pi_{X_n}X_{n+1})(\sigma_1,\ldots,\sigma_{n-1})$ is carved out as\newline\resizebox{\linewidth}{!}{ \parbox{\linewidth}{
\begin{align*}
P_{(\Pi_{X_n}X_{n+1})(\sigma_1,\ldots,\sigma_{n-1})}=&\{\epsilon\}\;\bigcup\\
&\{sa\;|\; s\in P_{(\Pi_{X_n}X_{n+1})(\sigma_1,\ldots,\sigma_{n-1})}^\even \;\wedge\; \\
&\qquad\;\;\exists_{\overline{sa\upharpoonright_{!\tot{ X_n}}}\subseteq\tau\in\str(X_n(\sigma_1,\ldots,\sigma_{n-1}))}sa\in P_{X_n(\sigma_1,\ldots,\sigma_{n-1})\Rightarrow X_{n+1}(\sigma_1,\ldots,\sigma_{n-1},\tau)}\;\}\;\bigcup\\
&\{sab\;|\; sa \in P_{(\Pi_{X_n}X_{n+1})(\sigma_1,\ldots,\sigma_{n-1})}^\odd\wedge \\
&\qquad\;\;\;\forall_{\overline{sab\upharpoonright_{!\tot{ X_n}}}\subseteq\tau\in\str(X_n(\sigma_1,\ldots,\sigma_{n-1}))}sa\in P_{X_n(\sigma_1,\ldots,\sigma_{n-1})\Rightarrow X_{n+1}(\sigma_1,\ldots,\sigma_{n-1},\tau)}\Rightarrow\\
&\qquad\;\;\; sab\in P_{X_n(\sigma_1,\ldots,\sigma_{n-1})\Rightarrow X_{n+1}(\sigma_1,\ldots,\sigma_{n-1},\tau)}\;\}.
\end{align*}}}
\end{definition}
The following explicit characterisation of \emph{the game $\Osat(\Pi_{X_1}\cdots\Pi_{X_n}X_{n+1})$ of dependent functions of multiple arguments} will be useful later. Indeed, its strategies will represent dependent functions from $[X_i]_{1\leq i\leq n}$ to $X_{n+1}$.
\begin{theorem}\label{thm:reppi}
Explicitly, $(\Pi_{X_k}\cdots\Pi_{X_n}X_{n+1})(\sigma_1,\ldots,\sigma_{k-1})$ can be inductively defined as the following subset of the plays of $\tot{X_k}\Rightarrow\cdots\Rightarrow \tot{X_{n+1}}$: \\
\resizebox{\linewidth}{!}{\parbox{\linewidth}{
\begin{align*}
&\{\epsilon\}\;\bigcup\\
&\{sa\;|\; s\in P_{(\Pi_{X_k}\cdots\Pi_{X_n}X_{n+1})(\sigma_1,\ldots,\sigma_{k-1})}^{\even}\;\wedge\;\\
&\qquad\;\;\; \exists_{\overline{sa\upharpoonright_{!\tot{ X_k}}}\subseteq\sigma_k\in\str(X_k(\sigma_1,\ldots,\sigma_{k-1}))}\cdots \exists_{\overline{sa\upharpoonright_{!\tot{ X_n}}}\subseteq\sigma_n\in\str(X_n(\sigma_1,\ldots,\sigma_{n-1}))}\\
&\qquad\;\;\; sa\in P_{X_k(\sigma_1,\ldots,\sigma_{k-1})\Rightarrow\cdots\Rightarrow  X_{n+1}(\sigma_1,\ldots,\sigma_{n})}\;\}\;\bigcup \\
&\{sab\;|\; sa \in P_{(\Pi_{X_k}\cdots\Pi_{X_n}X_{n+1})(\sigma_1,\ldots,\sigma_{k-1})}^\odd\wedge\\
&\qquad\;\;\; \forall_{\overline{sab\upharpoonright_{!\tot{ X_k}}}\subseteq \sigma_k\in\str(X_k(\sigma_1,\ldots,\sigma_{k-1}))}\cdots\forall_{\overline{sab\upharpoonright_{!\tot{ X_n}}}\subseteq \sigma_n\in \str(X_n(\sigma_1,\ldots,\sigma_{n-1}))} \\
&\qquad\;\;\; sa\in P_{X_k(\sigma_1,\ldots,\sigma_{k-1})\Rightarrow\cdots\Rightarrow  X_{n+1}(\sigma_1,\ldots,\sigma_{n})}\Rightarrow sab\in P_{X_k(\sigma_1,\ldots,\sigma_{k-1})\Rightarrow\cdots\Rightarrow  X_{n+1}(\sigma_1,\ldots,\sigma_{n})}\;\}.\qquad\;
\end{align*}}\hspace{30pt}\;}
As a consequence, the game of dependent functions $\Osat(\Pi_{X_1}\cdots\Pi_{X_n}X_{n+1})$ is carved out in $\tot{ X_1}\Rightarrow\cdots\Rightarrow\tot{ X_n}\Rightarrow \tot{ X_{n+1}}$ as the set of plays\newline
\resizebox{\linewidth}{!}{ \parbox{\linewidth}{
\begin{align*}
&\{\epsilon\}\;\bigcup\\
&\{sa\;|\; s\in P_{\Osat(\Pi_{X_1}\cdots\Pi_{X_n}X_{n+1})}^\even \;\}\;\bigcup\\
&\{sab\;|\; sa \in P_{\Osat(\Pi_{X_1}\cdots\Pi_{X_n}X_{n+1})}^\odd\wedge\\
&\qquad\;\;\; \forall_{\overline{sab\upharpoonright_{!\tot{ X_1}}}\subseteq \tau_1\in\str(X_1())}\cdots\forall_{\overline{sab\upharpoonright_{!\tot{ X_n}}}\subseteq \tau_n\in \str(X_n(\tau_1,\ldots,\tau_{n-1}))} \\
&\qquad\;\;\; sa\in P_{X_1()\Rightarrow\cdots \Rightarrow X_{n}(\tau_1,\ldots,\tau_{n-1})\Rightarrow X_{n+1}(\tau_1,\ldots,\tau_n)}\Rightarrow sab\in P_{X_1()\Rightarrow\cdots \Rightarrow X_{n}(\tau_1,\ldots,\tau_{n-1})\Rightarrow X_{n+1}(\tau_1,\ldots,\tau_n)}\;\}.
\end{align*}}}\newline
That is, the set of plays where Opponent can do whatever she pleases, while Player can only move without further determining the fibre of any of $X_1,\ldots, X_{n+1}$ as long as Opponent plays along compatible strategies $\sigma_1,\ldots,\sigma_n$ on $\tot{ X_1},\ldots,\tot{ X_n}$, in the sense that they extend to\\
\resizebox{\linewidth}{!}{$\langle \tau_1,\ldots,\tau_n\rangle \in \Sigma(\str(X_1),\ldots,\str(X_n)):=\{\langle \tau_1,\ldots,\tau_n\rangle\;|\; \tau_1\in\str(X_1())\wedge\;\cdots\wedge\tau_n\in\str(X_n(\tau_1,\ldots,\tau_{n-1}))\}$}\\ such that the current play obeys the rules of $X_1()\Rightarrow\cdots\Rightarrow X_n(\tau_1,\ldots,\tau_{n-1})\Rightarrow X_{n+1}(\tau_1,\ldots,\tau_n)$. 
\end{theorem}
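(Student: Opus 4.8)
The statement is an explicit unwinding of the inductive definition of $(\Pi_{X_k}\cdots\Pi_{X_n}X_{n+1})(\sigma_1,\ldots,\sigma_{k-1})$, so the proof is by downward induction on $k$ (from $k = n$ down to $k = 1$), with the final two displayed claims being special cases of the first (taking $k=1$, noting $\Osat(\Pi_{X_1}\cdots\Pi_{X_n}X_{n+1})$ saturates Opponent's moves, and observing that $\Sigma(\str(X_1),\ldots,\str(X_n))$ is exactly the set of compatible tuples of strategies). First I would set up the base case $k = n$: here $(\Pi_{X_n}X_{n+1})(\sigma_1,\ldots,\sigma_{n-1})$ is literally the definition of the dependent $\Pi$-game, and the claimed formula with the single existential/universal quantifier over $\sigma_n$ coincides with it verbatim, using the key observation (already exploited in the remark after the $\Pi$-game definition) that $\overline{sab\upharpoonright_{!\tot{X_n}}} = \overline{sa\upharpoonright_{!\tot{X_n}}}$, since a Player move $b$ in $\tot{X_{n+1}}$ contributes nothing to the projection onto $!\tot{X_n}$, and an Opponent move into $!\tot{X_n}$ has already been accounted for in $sa$.

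For the inductive step, assume the explicit formula holds for $(\Pi_{X_{k+1}}\cdots\Pi_{X_n}X_{n+1})(\sigma_1,\ldots,\sigma_k)$ for all compatible $\sigma_1,\ldots,\sigma_k$. Unfolding the definition of $(\Pi_{X_k}\cdots\Pi_{X_n}X_{n+1})(\sigma_1,\ldots,\sigma_{k-1})$ as $(\Pi_{X_k}(\Pi_{X_{k+1}}\cdots\Pi_{X_n}X_{n+1}))(\sigma_1,\ldots,\sigma_{k-1})$, the $\Pi$-game definition introduces a single quantifier over $\sigma_k \in \str(X_k(\sigma_1,\ldots,\sigma_{k-1}))$ with the condition ``$sa$ (resp. $sab$) obeys the rules of $X_k(\sigma_1,\ldots,\sigma_{k-1}) \Rightarrow Y(\sigma_1,\ldots,\sigma_k)$'', where $Y := \Pi_{X_{k+1}}\cdots\Pi_{X_n}X_{n+1}$. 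I would then substitute the inductive description of $P_{Y(\sigma_1,\ldots,\sigma_k)}$ into this condition. The play $sa$ belongs to $P_{X_k(\sigma_1,\ldots,\sigma_{k-1})\Rightarrow Y(\sigma_1,\ldots,\sigma_k)}$ iff $sa\upharpoonright_{X_k}$ is a legal play of $X_k(\sigma_1,\ldots,\sigma_{k-1})$ and $sa\upharpoonright_{\tot{X_{k+1}}\Rightarrow\cdots\Rightarrow\tot{X_{n+1}}}$ is a legal play of $Y(\sigma_1,\ldots,\sigma_k)$; expanding the latter via the inductive hypothesis produces the nested existentials over $\sigma_{k+1},\ldots,\sigma_n$. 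Combining the outer existential over $\sigma_k$ with these nested existentials yields exactly the claimed formula for $sa$; the dual argument with universals handles the $sab$ clause. The one point requiring care is the interaction of the justification/switching conditions across the three tensor components when restricting plays, and the fact that legality in $X_k(\sigma_1,\ldots,\sigma_{k-1}) \Rightarrow Y(\sigma_1,\ldots,\sigma_k)$ decomposes as the conjunction of legality in the two halves plus the (automatically satisfied) switching constraint — this is where the bookkeeping is heaviest, but it is the standard behaviour of $\Rightarrow$ on plays and requires no new ideas.

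For the two consequences, I would note that $\Osat(\Pi_{X_1}\cdots\Pi_{X_n}X_{n+1})$ is by definition obtained from $(\Pi_{X_1}\cdots\Pi_{X_n}X_{n+1})(\,)$ by the $\Osat$ construction, which makes all Opponent extensions legal and leaves Player's legality constraint unchanged (again using $\overline{sab\upharpoonright_{!\tot{X_i}}} = \overline{sa\upharpoonright_{!\tot{X_i}}}$ for Player moves $b$). Applying the explicit formula at $k=1$ and then $\Osat$, the existentials in the $sa$-clause collapse (any Opponent move is allowed), leaving precisely the stated set: $\{\epsilon\}$, all Player-to-move positions, and the Player moves $sab$ such that for every tuple $\langle\tau_1,\ldots,\tau_n\rangle$ of compatible strategies consistent with $sab$ the play obeys the rules of $X_1()\Rightarrow\cdots\Rightarrow X_{n+1}(\tau_1,\ldots,\tau_n)$. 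Finally, I would observe that the set of tuples $\langle\tau_1,\ldots,\tau_n\rangle$ quantified over is by definition $\Sigma(\str(X_1),\ldots,\str(X_n))$, giving the informal reading verbatim. The main obstacle is simply keeping the restriction-of-plays and justifier bookkeeping straight through the induction; conceptually there is nothing subtle, but the notation is dense.
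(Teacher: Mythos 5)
Your proposal is correct in substance and reaches the same endpoint, but it organises the induction differently from the paper: you peel off one $\Pi$ at a time by downward induction on $k$, whereas the paper first writes out the fully unfolded definition of $(\Pi_{X_k}\cdots\Pi_{X_n}X_{n+1})(\sigma_1,\ldots,\sigma_{k-1})$ (all levels at once, as a nested incremental condition) and then does induction on the length of the play, massaging that formula step by step into the flat form of the statement. Your decomposition is arguably cleaner to state, but it concentrates all the difficulty into the sentence ``combining the outer existential over $\sigma_k$ with these nested existentials yields exactly the claimed formula; the dual argument with universals handles the $sab$ clause'', and that is precisely where the real work lies — not, as you suggest, in the justification/switching bookkeeping. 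In the Player clause you must pass from ``$\forall\sigma_k\,(\,sa\in P_{X_k\Rightarrow Y(\sigma_k)}\Rightarrow sab\in P_{X_k\Rightarrow Y(\sigma_k)}\,)$'', whose hypothesis itself hides existentials over $\sigma_{k+1},\ldots,\sigma_n$ (coming from the Opponent clause of the inner game) and whose conclusion hides universals, to the single prenex block $\forall\sigma_k\cdots\forall\sigma_n$ in the statement. The paper makes the needed observation explicit: since $\overline{sab\upharpoonright_{!\tot{X_i}}}=\overline{sa\upharpoonright_{!\tot{X_i}}}$ for a Player move $b$, whenever the hypothesis of the implication is satisfiable the domains of all the universal quantifiers are non-empty (the witnesses guaranteed for $sa$ by the preceding Opponent clause still qualify for $sab$), so the quantifiers bind nothing in the hypothesis and can be moved to the front. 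Your proof should spell out this prenexing step; with it included, both the main characterisation and the two consequences (which you derive exactly as the paper does, collapsing the Opponent existentials under $\Osat$) go through.
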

\begin{proof}We first note that the second claim follows straightforwardly from the first. Clearly, the proposed description of Opponent moves in the second claim is correct as, by definition of $\Osat$, Opponent is free to move in $\tot{X_1}\Rightarrow\cdots\Rightarrow \tot{X_n}\Rightarrow \tot{X_{n+1}}$ in $\Osat(\Pi_{X_1}\cdots\Pi_{X_n}X_{n+1})$. For Player moves, note that $\overline{sab\upharpoonright_{!\tot{X_i}}}=\overline{sa\upharpoonright_{!\tot{X_i}}}$ for all $1\leq i \leq n$. Therefore, assuming the first claim holds, it follows that we are in one of two cases:
\begin{itemize}
\item Opponent has been naughty and has broken the rules of $\Pi_{X_1}\cdots\Pi_{X_n}X_{n+1}$. In this case, there are no $\overline{sab\upharpoonright_{!\tot{X_1}}}\subseteq\tau_1\in\str(X_1()),\ldots,\overline{sab\upharpoonright_{!\tot{X_n}}}\subseteq\tau_n\in\str(X_n(\tau_1,\ldots,\tau_{n-1}))$ such that $sa\in X_1()\Rightarrow\cdots\Rightarrow X_{n+1}(\tau_1,\ldots,\tau_n)$. In this case, Player is allowed to do whatever she wants according to our proposed description of the second claim as the hypotheses of the implication defining the incremental condition on Player moves are false. This matches, of course,  the definition of $\Osat(\Pi_{X_1}\cdots\Pi_{X_n}X_{n+1})$ from the description of $\Pi_{X_1}\cdots\Pi_{X_n}X_{n+1}$ of the first claim. 
\item Opponent has been nice and has followed the rules of $\Pi_{X_1}\cdots\Pi_{X_n}X_{n+1}$ (i.e. there are in fact such $\tau_1,\ldots,\tau_n$). In this case, Player has to keep obeying the rules of $\Pi_{X_1}\cdots\Pi_{X_n}X_{n+1}$ as well according to the definition of $\Osat(\Pi_{X_1}\cdots\Pi_{X_n}X_{n+1})$. This matches our proposed description of the second claim.
\end{itemize}
For the first claim, the idea is that Opponent has to play precisely such that there is \emph{some} compatible assignment of strategies $\sigma_k,\ldots,\sigma_n$ on $X_k,\ldots,X_n$ while Player has to play such that she does not exclude \emph{any} such compatible assignment of  strategies. Formally, we prove by induction that the proposed description of plays in $(\Pi_{X_k}\cdots\Pi_{X_n}X_{n+1})(\sigma_1,\ldots,\sigma_{k-1})$ coincides with its definition
\\
\\
\resizebox{\linewidth}{!}{
$
\begin{alignedat}{2}
&\{\epsilon\}\;\bigcup\\
&\{sa\;|\; s\in P_{(\Pi_{X_k}\cdots\Pi_{X_n}X_{n+1})(\sigma_1,\ldots,\sigma_{k-1})}^{\even}\; \wedge\;  \exists_{\overline{sa\upharpoonright_{!\tot{ X_k}}}\subseteq\sigma_k\in\str(X_k(\sigma_1,\ldots,\sigma_{k-1}))}  sa\upharpoonright_{!\tot{X_{k+1}},\ldots,\tot{X_{n+1}}}\in P_{(\Pi_{X_{k+1}}\cdots\Pi_{X_n}X_{n+1})(\sigma_1,\ldots,\sigma_{k})}\\
& \wedge\; \exists_{\overline{sa\upharpoonright_{!\tot{ X_{k+1}}}}\subseteq\sigma_{k+1}\in\str(X_{k+1}(\sigma_1,\ldots,\sigma_{k}))}  sa\upharpoonright_{!\tot{X_{k+2}},\ldots,\tot{X_{n+1}}}\in P_{(\Pi_{X_{k+2}}\cdots\Pi_{X_n}X_{n+1})(\sigma_1,\ldots,\sigma_{k+1})} \;\wedge\;\cdots \;\\
& \wedge\;  \exists_{\overline{sa\upharpoonright_{!\tot{ X_{n}}}}\subseteq\sigma_n\in\str(X_{k+1}(\sigma_1,\ldots,\sigma_{n-1}))} sa\upharpoonright_{\tot{X_{n+1}}}\in P_{X_{n+1}(\sigma_1,\ldots,\sigma_n)}\;\}\;\bigcup \\
&\{sab\;|\; sa \in P_{(\Pi_{X_k}\cdots\Pi_{X_n}X_{n+1})(\sigma_1,\ldots,\sigma_{k-1})}^\odd\wedge( \forall_{\overline{sab\upharpoonright_{!\tot{ X_k}}}\subseteq \sigma_k\in\str(X_k(\sigma_1,\ldots,\sigma_{k-1}))}sa\in P_{X_k(\sigma_1,\ldots,\sigma_{k-1})\Rightarrow (\Pi_{X_{k+1}}\cdots\Pi_{X_n}X_{n+1})(\sigma_1,\ldots,\sigma_k)}\;\\
&\Rightarrow \;
sab\upharpoonright_{!\tot{X_k}}\in P_{X_k(\sigma_1,\ldots,\sigma_{k-1})})\;\wedge\;(\forall_{\overline{sab\upharpoonright_{!\tot{ X_{k+1}}}}\subseteq \sigma_{k+1}\in\str(X_{k+1}(\sigma_1,\ldots,\sigma_{k}))} sa\upharpoonright_{!\tot{X_{k+1}},\ldots,\tot{X_{n+1}}}\in P_{X_{k+1}(\sigma_1,\ldots,\sigma_{k})\Rightarrow (\Pi_{X_{k+2}}\cdots\Pi_{X_n}X_{n+1})(\sigma_1,\ldots,\sigma_{k+1})}\;\\
&\Rightarrow \;
sab\upharpoonright_{!\tot{X_{k+1}}}\in P_{X_{k+1}(\sigma_1,\ldots,\sigma_{k})})\;\wedge\;
\cdots\;\wedge\;\ (\forall_{\overline{sab\upharpoonright_{!\tot{ X_n}}}\subseteq \sigma_n\in \str(X_n(\sigma_1,\ldots,\sigma_{n-1}))} \\
& sa\upharpoonright_{!\tot{X_{n}},\tot{X_{n+1}}}\in P_{(\Pi_{X_n}X_{n+1})(\sigma_1,\ldots,\sigma_{n-1})}\Rightarrow sab\upharpoonright_{!\tot{X_{n}}}\in P_{X_n(\sigma_1,\ldots,\sigma_{n-1})}\;\wedge\;sab\upharpoonright_{\tot{X_{n+1}}}\in P_{X_{n+1}(\sigma_1,\ldots,\sigma_{n})})\;\}.
\end{alignedat}
$}\\
\\
We note that the proposed description is valid for $\epsilon$. Let us suppose it is valid for $s$. Note that all conjuncts involving $s$ (rather than $sa$) in the incremental condition on Opponent moves then hold by induction. Rearranging the incremental condition on Opponent moves now gives us a description in which we have obtained the required incremental condition on Opponent moves, but not yet on Player moves -- in particular, our proposed description is now valid for $sa$:
\\
\resizebox{\linewidth}{!}{\parbox{\linewidth}{
\begin{align*}
&\{\epsilon\}\;\bigcup\\
&\{sa\;|\; s\in P_{(\Pi_{X_k}\cdots\Pi_{X_n}X_{n+1})(\sigma_1,\ldots,\sigma_{k-1})}^{\even}\;\wedge\; \exists_{\overline{sa\upharpoonright_{!\tot{ X_k}}}\subseteq\sigma_k\in\str(X_k(\sigma_1,\ldots,\sigma_{k-1}))}\cdots \exists_{\overline{sa\upharpoonright_{!\tot{ X_n}}}\subseteq\sigma_n\in\str(X_n(\sigma_1,\ldots,\sigma_{n-1}))}\\
& sa\in P_{X_k(\sigma_1,\ldots,\sigma_{k-1})\Rightarrow\cdots\Rightarrow  X_{n+1}(\sigma_1,\ldots,\sigma_{n})}\;\}\;\bigcup \\
&\{sab\;|\; sa \in P_{(\Pi_{X_k}\cdots\Pi_{X_n}X_{n+1})(\sigma_1,\ldots,\sigma_{k-1})}^\odd\wedge( \forall_{\overline{sab\upharpoonright_{!\tot{ X_k}}}\subseteq \sigma_k\in\str(X_k(\sigma_1,\ldots,\sigma_{k-1}))}sa\in P_{X_k(\sigma_1,\ldots,\sigma_{k-1})\Rightarrow (\Pi_{X_{k+1}}\cdots\Pi_{X_n}X_{n+1})(\sigma_1,\ldots,\sigma_k)}\;\\
&\Rightarrow \;
sab\upharpoonright_{!\tot{X_k}}\in P_{X_k(\sigma_1,\ldots,\sigma_{k-1})})\;\wedge\;(\forall_{\overline{sab\upharpoonright_{!\tot{ X_{k+1}}}}\subseteq \sigma_{k+1}\in\str(X_{k+1}(\sigma_1,\ldots,\sigma_{k}))} sa\upharpoonright_{!\tot{X_{k+1}},\ldots,\tot{X_{n+1}}}\in P_{X_{k+1}(\sigma_1,\ldots,\sigma_{k})\Rightarrow (\Pi_{X_{k+2}}\cdots\Pi_{X_n}X_{n+1})(\sigma_1,\ldots,\sigma_{k+1})}\;\\
&\Rightarrow \;
sab\upharpoonright_{!\tot{X_{k+1}}}\in P_{X_{k+1}(\sigma_1,\ldots,\sigma_{k})})\;\wedge\;
\cdots\;\wedge\;\ (\forall_{\overline{sab\upharpoonright_{!\tot{ X_n}}}\subseteq \sigma_n\in \str(X_n(\sigma_1,\ldots,\sigma_{n-1}))} \\
& sa\upharpoonright_{!\tot{X_{n}},\tot{X_{n+1}}}\in P_{(\Pi_{X_n}X_{n+1})(\sigma_1,\ldots,\sigma_{n-1})}\Rightarrow sab\upharpoonright_{!\tot{X_{n}}}\in P_{X_n(\sigma_1,\ldots,\sigma_{n-1})}\;\wedge\;sab\upharpoonright_{\tot{X_{n+1}}}\in P_{X_{n+1}(\sigma_1,\ldots,\sigma_{n})})\;\}.
\end{align*}}}\\
\\
Next, noting that our proposed description holds for $sa$, we note that the conjuncts $$sa\upharpoonright_{!\tot{X_m},\ldots,\tot{ X_{n+1}}}\in P_{X_m(\sigma_1,\ldots,\sigma_{m-1})\Rightarrow (\Pi_{X_{m+1}}\cdots\Pi_{X_n}X_{n+1})(\sigma_1,\ldots,\sigma_{m})}$$ in the incremental condition on $P$-moves can be replaced by the conditions\\
\resizebox{\linewidth}{!}{
$\exists_{\overline{sa\upharpoonright_{!\tot{ X_{m+1}}}}\subseteq\sigma_{m+1}\in\str(X_{m+1}(\sigma_1,\ldots,\sigma_{m}))}\cdots \exists_{\overline{sa\upharpoonright_{!\tot{ X_n}}}\subseteq\sigma_n\in\str(X_n(\sigma_1,\ldots,\sigma_{n-1}))} sa\upharpoonright_{!\tot{X_m},\ldots, \tot{X_{n+1}}}\in P_{X_m(\sigma_1,\ldots,\sigma_{m-1})\Rightarrow\cdots\Rightarrow  X_{n+1}(\sigma_1,\ldots,\sigma_{n})}.$}\\
\mccorrect{(Seeing that Opponent chooses the fibre, provided that our description holds.)}

Now, again noting that $\overline{sa\upharpoonright_{!\tot{X_l}}}=\overline{sab\upharpoonright_{!\tot{X_l}}}$, this means that all universal quantifiers in the incremental condition on $P$-moves range over a non-empty domain, so we might as well move them to the front of our formula, seeing that they do not bind any more identifiers:
\\
\resizebox{\linewidth}{!}{\parbox{\linewidth}{
\begin{align*}
&\{\epsilon\}\;\bigcup\\
&\{sa\;|\; s\in P_{(\Pi_{X_k}\cdots\Pi_{X_n}X_{n+1})(\sigma_1,\ldots,\sigma_{k-1})}^{\even}\;\wedge\; \exists_{\overline{sa\upharpoonright_{!\tot{ X_k}}}\subseteq\sigma_k\in\str(X_k(\sigma_1,\ldots,\sigma_{k-1}))}\cdots \exists_{\overline{sa\upharpoonright_{!\tot{ X_n}}}\subseteq\sigma_n\in\str(X_n(\sigma_1,\ldots,\sigma_{n-1}))}\\
& sa\in P_{X_k(\sigma_1,\ldots,\sigma_{k-1})\Rightarrow\cdots\Rightarrow  X_{n+1}(\sigma_1,\ldots,\sigma_{n})}\;\}\;\bigcup \\
&\{sab\;|\; sa \in P_{(\Pi_{X_k}\cdots\Pi_{X_n}X_{n+1})(\sigma_1,\ldots,\sigma_{k-1})}^\odd\wedge \forall_{\overline{sab\upharpoonright_{!\tot{ X_k}}}\subseteq \sigma_k\in\str(X_k(\sigma_1,\ldots,\sigma_{k-1}))}\cdots \forall_{\overline{sab\upharpoonright_{!\tot{ X_n}}}\subseteq \sigma_n\in \str(X_n(\sigma_1,\ldots,\sigma_{n-1}))} \\
&  sa\in P_{X_k(\sigma_1,\ldots,\sigma_{k-1})\Rightarrow\cdots\Rightarrow  X_{n+1}(\sigma_1,\ldots,\sigma_{n})}\Rightarrow
sab\upharpoonright_{!\tot{X_k}}\in P_{X_k(\sigma_1,\ldots,\sigma_{k-1})}\;\wedge\;
\cdots\;\wedge\;\ \\
&  sab\upharpoonright_{!\tot{X_{n}}}\in P_{X_n(\sigma_1,\ldots,\sigma_{n-1})}\;\wedge\;sab\upharpoonright_{\tot{X_{n+1}}}\in P_{X_{n+1}(\sigma_1,\ldots,\sigma_{n})}\;\},
\end{align*}}}\\
which clearly carves out the same plays in $P_{\tot{X_1}\Rightarrow\cdots\Rightarrow\tot{X_n}\Rightarrow \tot{X_{n+1}}}$ as  our proposed description.
\end{proof}
\begin{remark}[Logical Predicates/Realizability?] Note that these games of dependent functions lead to quite a non-trivial notion of dependently typed strategy. Indeed, we can send a game $B$ with dependency on $A$ to a function $\str(\tot{A})\ra{}\mathcal{P}\str(\tot{B})$ which assigns a set of consistent strategies $\sigma\mapsto \mathsf{cstr}(B):=\str({B(\sigma)})\subseteq\str(\tot{B})$. One might wonder if this description is enough to recover our model from and if the model can be recast into a realizability style model \cite{coquand1998realizability}. In particular, this would mean that we send a pair $(A(\bot),\tot{A})$ to the pair $(\tot{A}, \mathsf{cstr}(A)\subseteq \str(\tot{A}))$. In a realizability model, one would expect this class $\mathsf{cstr}(A)$ of consistent strategies to behave as a logical predicate. In particular, given $\just\{\ttt\}=(\{\ttt\}_*,\BInd)$, if $\mathsf{cstr}$ were a logical predicate, we would have that $\mathsf{cstr}((\just\{\ttt\}\Rightarrow\BInd)\Rightarrow\BInd)=\mathsf{cstr}((\BInd\Rightarrow\BInd)\Rightarrow\BInd)=\str((\BInd\Rightarrow\BInd)\Rightarrow\BInd)$ as $\mathsf{cstr}(\BInd)=\str(\BInd)$. However, we have that $\mathsf{cstr}((\just\{\ttt\}\Rightarrow\BInd)\Rightarrow\BInd):=\str(\Osat((\just_*(\ttt)\Rightarrow\BInd)\Rightarrow \BInd))\subsetneq \str((\BInd\Rightarrow\BInd)\Rightarrow\BInd)$. Indeed, a consistent strategy on $(\just\{\ttt\}\Rightarrow\BInd)\Rightarrow\BInd$ cannot play $\fff$ in $\just\{\ttt\}\trianglelefteq \BInd$. We see that our notion of consistent strategy does not behave as a logical predicate. We get a more non-trivial notion of higher-order dependent function. The extra requirement that Player is constrained by type dependency in positive occurring types just as she is in strictly positively occurring ones is important to get an exact match with the syntax of dependent type theory.
\end{remark}

For illustration, define  a game ${\mathsf{RA}_*}$ depending on the context game $[\NInd,{\mathsf{days}_*}]$ by 
\begin{align*}
\mathsf{RA}(n,m)&:=\{\textnormal{Rick Astley lyrics from songs released before day $m$ of year $n$}\}
\end{align*} Then, the two strategies of figure \ref{fig:rickastley} illustrate that a dependent function may query its arguments in unexpected order or may not query some at all.
\begin{figure}[!tb]
\centering\resizebox{\linewidth}{!}{
$
\begin{array}{c|c||c}
\begin{array}{ccc}
!\NInd & !{\mathsf{days}_*} &{\mathsf{RA}_*} 
\\
\hline
&&*\\
&(i,*)&\\
&(i,m>206) & \\
(j,*) && \\
(j,1987) &&\\
 & & \textnormal{Never Gonna Give You Up}
\end{array}
&
\begin{array}{ccc}
!\NInd & {!\mathsf{days}_*}& {\mathsf{RA}_*} 
\\
\hline
&&*\\
(i,*) && \\
(i,n>1987) &&\\
 & & \textnormal{Never Gonna Let You Down}\\
 & & \\
 &&
\end{array}
&
\begin{array}{c}
\vspace{5pt}\\
O\\
P\\
O\\
P\\
O\\
P
\end{array}
\end{array}
$
}
\caption{\label{fig:rickastley} Two examples of (partial) strategies on the game $\Osat(\Pi_{\NInd}\Pi_{{\mathsf{days}_*}}{\mathsf{RA}_*})$, defining dependent functions of two arguments. Note that these lyrics come from a song released on day 207 of the year 1987, so Player does not constrain the fibre anywhere.}
\end{figure}
To illustrate the subtle nature of higher-order dependent functions with an example, define the game ${\mathsf{holidays}}_*$ depending on the context game $[\NInd,{\mathsf{days}}_*]$ by
\begin{align*}
\mathsf{holidays}(n,m)&:=\{\textnormal{holidays that are celebrated on day $m$ of year $n$}\}
\end{align*}

Figure \ref{fig:pirates} illustrates how Player is in charge of providing certain arguments (the positive ones) of dependent games and can therefore choose the fibre in some cases. (Opponent controls the negative arguments to dependent games.) In the figure below, Player controls the arguments of type ${\mathsf{days}_*}$ and ${\mathsf{holidays}_*}$, while Opponent is in charge of the type of years $\NInd$. We stress again that although Player has to play in accordance with any choice of year that Opponent could make, the converse is not true: Opponent can do what she likes and does not have to respect Player's choices of day and holiday.

\begin{figure}[!tb]
\centering\resizebox{\linewidth}{!}{
$
\begin{array}{c|c||c}
\begin{array}{ccccc}
!\NInd & !!{\mathsf{days}_*} & !!{\mathsf{holidays}_*} & !\BInd &\BInd \\
\hline
&&&&*\\
&&&(0,*)&\\
&&(0,(0,*))&&\\
&&(0,(0,\textnormal{International Talk Like a Pirate Day})) &&\\
&&&(0,\fff)&\\
&&&&\ttt\\
&&&&\\
&&&&\\
&&&&\\
&&&&
\end{array}
&\begin{array}{ccccc}
!\NInd & !!{\mathsf{days}_*} & !!{\mathsf{holidays}_*} & !\BInd &\BInd \\
\hline
&&&&*\\
&&&(0,*)&\\
&&(0,(0,*))&&\\
&&(0,(0,\textnormal{Holi})) &&\\
&(0,(0,*))& &&\\
(0,*)&&&&\\
(0,2015) &&&&\\
&(0,(0,65))&&&\\
&&&(0,\ttt) &\\
&&&&\ttt
\end{array}
&
\begin{array}{c}
\vspace{5pt}\\
O\\
P\\
O\\
P\\
O\\
P\\
O\\
P\\
O\\
P
\end{array}
\end{array}
$
}
\caption{\label{fig:pirates} Two plays in $\Osat(\Pi_{\NInd}\Pi_{\Pi_{{\mathsf{days}_*}}\Pi_{{\mathsf{holidays}}_*}\BInd}\BInd)$. For an interpretation, imagine Player is a PhD-student who is trying to decide if he is going on holidays and ends up asking his supervisor (Opponent) if she's okay with him doing so. The first play can be read as the dialogue where the supervisor asks if the student is planning to take any holidays, the student asks if he's allowed to, the supervisor wants to know what the occasion is, the student admits that his best excuse for wanting time off is International Talk Like a Pirate Day, the supervisor tells the student that he can't have time off and, finally, the student tells his supervisor that he's taking time off anyway for this important occasion. Here, Player can choose the holiday 'International Talk Like a Pirate Day' as it is celebrated each year, meaning that Player does not restrict the year we may be talking about (which, as a negative argument, belongs to Opponent). Note that by choosing this particular holiday, Player automatically fixes the day the holiday falls on, which is fine as the subgame ${\mathsf{days}_*}$ occurs positively in the total game we are playing in, meaning that Player is in charge of determining the corresponding argument. The second play corresponds to a dialogue with a more sensible student who uses the more respectable excuse of celebrating Holi to get time off from work. Here, Player has to let Opponent determine the year first, before she can answer with a date for Holi, as the date of Holi on the Gregorian calendar varies (while it is celebrated every year).}
\end{figure}
We define a category $\Ctxt(\DGame_!)$ with objects context games $[A_i]_{1\leq i\leq n}$ and morphisms which are defined inductively as (dependent) lists $[\sigma_i]_{1\leq i\leq n}$ of  strategies on appropriate games of dependent functions. To keep notation light, we sometimes abuse notation and write $\sigma$ for the context morphism $[\sigma]$ of length $1$.

We show that $\Ctxt(\DGame_!)$ has the structure of a category with families (CwF) (see definition \ref{def:cwf}), a canonical notion of model of dependently typed equational logic. This gives a more concise presentation of the resulting strict indexed category with comprehension, where we also add formal $\Sigma$-types in the fibres.

\begin{theorem}\label{thm:cwf}We have a CwF $(\Ctxt(\DGame_{!}),\Ty,\Tm,\proj{}{},\diagv{}{}, -.-, \langle - ,-\rangle)$.
\end{theorem}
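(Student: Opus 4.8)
The plan is to verify the CwF structure on $\Ctxt(\DGame_!)$ by unwinding the definitions and checking the equations of definition \ref{def:cwf} one by one. First I would spell out the data. The category $\Ctxt(\DGame_!)$ has context games $[A_i]_{1\leq i \leq n}$ as objects and dependent lists of strategies as morphisms, with the empty context game $[]$ as terminal object (there is a unique empty list into it). For an object $\Gamma = [A_i]_{1\leq i\leq n}$, I would set $\Ty(\Gamma)$ to be the set of games with dependency on $\Gamma$, i.e.\ functions $\str(\tot{A_1}\&\cdots\&\tot{A_n})\ra{B}\Sub(\tot{B})$, and $\Tm(\Gamma, B) := \str(\Osat(\Pi_{A_1}\cdots\Pi_{A_n}B))$, using the explicit description of the game of dependent functions of multiple arguments from theorem \ref{thm:reppi}. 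Context comprehension $\Gamma.B$ is simply the concatenated context game $[A_1,\ldots,A_n,B]$, with $\proj{\Gamma}{B}$ the list of derelicted copycats projecting away the last component, and $\diagv{\Gamma}{B}\in\Tm(\Gamma.B, B\{\proj{\Gamma}{B}\})$ the derelicted $B$-copycat strategy (the generalisation of the fourth example strategy in figure \ref{fig:depstr}). For $f\in\Ctxt(\DGame_!)(\Gamma',\Gamma)$ and $B\in\Ty(\Gamma)$, the substitution $B\{f\}$ and $t\{f\}$ are defined by the indexed-category action, plugging the component strategies of $f$ into the appropriate $\Pi$-games by composition; this extends the change-of-base functor $-\{f\}$ of the previous section from single games to context games.

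Next I would check the functoriality laws (Ty-Id), (Ty-Comp), (Tm-Id), (Tm-Comp). These follow from the fact that substitution is defined via composition of strategies together with the dereliction/digging coherences of the comonad $!$, exactly as for the indexed ccc $\DGame_!$: the key identities $\der^\dagger = \id$, $(f^\dagger;g)^\dagger = f^\dagger;g^\dagger$ (up to the Seely isomorphisms) are the standard co-Kleisli calculations, and they lift verbatim to the context game setting because the $\Osat(\Pi_{X_1}\cdots\Pi_{X_n}-)$ construction interacts with substitution componentwise. Then I would verify the comprehension equations (Cons-L), (Cons-R), (Cons-Id), (Cons-Nat). Given $f\in\Ctxt(\DGame_!)(\Gamma',\Gamma)$ and $t\in\Tm(\Gamma', B\{f\})$, the pairing $\langle f, t\rangle$ is the concatenated list $f$ followed by $t$ (viewed as a strategy on the appropriate game of dependent functions); (Cons-L) and (Cons-R) are then immediate from the definition of composition of context morphisms as list concatenation-plus-substitution, and (Cons-Id) says that pairing the projection with the generic variable copycat recovers the identity, which is the statement that the derelicted copycat on $\tot{A_1}\&\cdots\&\tot{B}$ decomposes as expected — a copycat bookkeeping argument. (Cons-Nat) is again associativity of substitution, built into the indexed structure. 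Finally, I would note that the CwF has a unit: $I$ (the one-move-free game viewed as a constant dependent game) lies in every $\Ty(\Gamma)$, is stable under substitution since $\tot{I} = I$ and $I(\sigma) = I$ pointwise, and $\Tm(\Gamma, I) = \str(\Osat(\Pi_{A_1}\cdots\Pi_{A_n}I)) \cong \{\epsilon\}$ since $I$ has no moves, so the strategy is forced.

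The main obstacle I expect is checking well-definedness of the various operations, and in particular that $\langle f, t\rangle$ and composition of context morphisms actually land in the right sets of strategies — that is, that the pairing of a valid context morphism with a valid dependent term again satisfies the subtle incremental play conditions of theorem \ref{thm:reppi}, where Player must respect type dependency not only in strictly positive positions but in all positive positions. This requires a careful induction on the length of the context, using the explicit characterisation of $\Osat(\Pi_{X_1}\cdots\Pi_{X_n}X_{n+1})$ from theorem \ref{thm:reppi} and the fact (already established there) that Opponent controls the fibre choices while Player may never narrow a fibre. The second delicate point is verifying that the copycat strategies $\proj{\Gamma}{B}$ and $\diagv{\Gamma}{B}$ are themselves legal plays in the relevant games of dependent functions — this is where the generalised fourth example of figure \ref{fig:depstr} does the work, and one must check that the copycat never violates the incremental Player condition, which holds precisely because copycatting only echoes moves Opponent has already made, hence never determines a fibre beyond what Opponent has already committed to. Once these well-definedness facts are in hand, all the CwF equations are routine copycat and composition calculations of the kind already used implicitly for $\DGame_!$, so I would present those only in outline.
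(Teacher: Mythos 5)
Your plan follows the paper's proof in all essentials: the same data (context games, dependent lists of strategies, concatenation for comprehension, derelicted copycats for projections and the generic term), and you correctly isolate the two points where the real work lies — that the copycats are legal dependently typed strategies, and that composition and pairing of context morphisms again satisfy the incremental Player conditions of theorem \ref{thm:reppi}. Your one-line justification for the copycat claim (Player only echoes moves Opponent has already made, so never narrows a fibre) is exactly the paper's argument. For the composition claim you identify the right tool but do not carry it out; be aware that the paper's argument is not a bare induction on context length but a reduction: one first shows that precomposing a context morphism $[f_j]_j$ with a \emph{closed} consistent tuple $[\sigma_i']_i$ yields a consistent tuple (the case $[X_i]_i=[]$ of the claim), and then uses this to verify the Player condition for $\langle f_1,\ldots,f_m\rangle^\dagger;g_k$ at an arbitrary move, quantifying over all strategies $\sigma_i'$ compatible with the current position. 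You would need to supply that argument to have a complete proof.

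There is one substantive discrepancy. You take $\Ty(\Gamma)$ to be the set of \emph{single} games with dependency on $\Gamma$, whereas the paper takes $\Ty([X_i]_i)$ to be the set of \emph{context games} $[Y_j]_j$ with dependency on $[X_i]_i$, i.e.\ finite lists of mutually dependent games, with comprehension given by concatenation of lists. Your smaller $\Ty$ still yields a CwF (every context game is reachable by iterated one-step comprehension), so the literal statement survives; but the entire point of the $\Ctxt(-)$ construction is to repair the failure of additive $\Sigma$-types in $\DGame_!$, and the paper's interpretation of $\Sigma_{[Y_j]_j}[Z_k]_k$ as the concatenated list $[Y_j]_j.[Z_k]_k$ requires types to be lists. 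With single-game types you would be forced back to seeking a single game $\Sigma_A^{\&}B$ with $U\Sigma_A^{\&}B\cong\Sigma_{UA}UB$, which the preceding no-go theorem shows does not exist. So you should enlarge $\Ty$ to lists before proceeding to the type formers.
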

\begin{proof}We define the required structures. All equations follow straightforwardly from the definitions and the two claims stated.\quad\\

\noindent\fbox{$\ob(\mathcal{C})$, $\Ty$, $-.-$, $\cdot$}\vspace{5pt}\\
We define a category $\mathcal{C}:=\Ctxt(\DGame_!)$ with  context games as objects. We define $\Ty([X_i]_i)$ as the set of \emph{context games with dependency on $[X_i]_i$}: $[Y_j]_j\in\Ty([X_i]_i)$ iff $[X_i]_i.[Y_j]_j:=[X_1,\ldots,X_n,Y_1,\ldots,Y_m]$ is a context game, while $\cdot:=[]$ is the terminal object.

\vspace{5pt}\noindent\fbox{$\mathsf{mor}(\mathcal{C})$, $-\{-\}_{\Ty}$}\vspace{5pt}\\
\begin{mccorrection}
Next, let $[X_i]_{i\leq n},[Z_k]_{k\leq n'}\in\ob(\mathcal{C})$, $[Y_j]_{j\leq m}\in\Ty([X_i]_{i\leq n}$ and let $$\tot{Z_1}\&\cdots\&\tot{Z_{n'}}\ra{f}\tot{X_1}\&\cdots\&\tot{X_n}$$ be a morphism in $\Gamecat_!$. Then, we define $[Y_j]_{j\leq m}\{f\}\in\Ty([Z_k]_{k\leq n'})$ by
\begin{align*}
\tot{Y_j\{f\}}&:=\tot{Y_j}\\
Y_j\{f\}(\sigma_1,\ldots,\sigma_{n'},\tau_1,\ldots,\tau_{j-1})&:=Y_j(\langle \sigma_1,\ldots,\sigma_{n'}\rangle^\dagger;f, \tau_1,\ldots,\tau_{j-1}).
\end{align*}
This, in turn, lets us define $\mathsf{mor}(\mathcal{C})$:\\
\resizebox{\linewidth}{!}{\parbox{\linewidth}{
\begin{align*}
&\Ctxt(\DGame_!)([X_i]_{i\leq n},[Y_j]_{j\leq m}) :=\left\{\;[f_j]_{j\leq m}\;|\; f_j\in\str(\Osat(\Pi_{X_1}\ldots\Pi_{X_n}Y_j\{\langle f_1,\ldots, f_{j-1}\rangle\}))\;\right\},
\end{align*}
}}\\
noting that $f_j$ can, in particular, be interpreted as a morphism
$$
\tot{X_1}\&\cdots\&\tot{X_n}\ra{f_j} \tot{Y_j} \in \Gamecat_!.
$$
\end{mccorrection}

\vspace{5pt}\noindent\fbox{$\id$, $\proj{}{}$, $\Tm$, $\diagv{}{}$, $\langle\cdot,\cdot\rangle$, (Cons-Id)}\vspace{5pt}\\
The identities are defined as lists of derelicted copycats. Let us define a strategy $\der_{[X_j]_j,X_i}$ which plays the derelicted copycat on all of $\tot{ X_i}$: $\der_{[X_j]_j,X_i}:=\{s\in P_{\Osat(\Pi_{X_1}\ldots\Pi_{X_{n}}X_i)}\;|\; 
\forall_{s'\in P^{\even}_{\Osat(\Pi_{X_1}\ldots\Pi_{X_{n}}X_i)}}s'\leq s\Rightarrow \exists_k s\upharpoonright_{!\tot{X_i}}\upharpoonright_k\approx_{\tot{ X_i}} s\upharpoonright_{\tot{X_i}} \}$. We then define $\id_{[X_i]_i}:=[\der_{[X_j]_j,X_i}]_i$ and $\proj{[X_i]_i}{[Y_j]_j}:=[\der_{[X_i]_i.[Y_j]_j,X_k}]_k$. Let us define\\
\\
\resizebox{\linewidth}{!}{$\Tm([X_i]_{i\leq n},[Y_j]_{j\leq m}):=\left\{\;[f_j]_{j\leq m}\;\big|\;[\der_{[X_i]_i,X_1},\ldots,\der_{[X_i]_i,X_n},f_1,\ldots,f_m] \in\Ctxt(\DGame_!)([X_i]_i,[X_i]_i.[Y_j]_j)\;\right\}.$}\\
\\
Then, we can define $\diagv{[X_i]_i}{[Y_j]_j}:=[\der_{[X_i]_i.[Y_j]_j,Y_k}]_k$. Note that these are well-defined because of the following claim.

\begin{claim*}$\der_{[X_j]_j,X_i}\in\str(\Osat(\Pi_{X_1}\cdots\Pi_{X_n}X_i\{[\der_{[X_j]_j,X_k}]_{k\leq i-1}\}))$.
\end{claim*}
\begin{proof}Note that Opponent makes every move first \mccorrect{in $X_i$}, so Player can copy it freely without restricting the fibre of $X_i$ further.\end{proof}

We define  $\langle [f_j]_{j\leq m},[g_k]_{k\leq l}\rangle :=[f_1,\ldots,f_m,g_1,\ldots,g_l]$, after which (Cons-Id) follows trivially.

\vspace{5pt}\noindent\fbox{Composition, $-\{-\}_{\Tm}$, (Cons-Nat)}\vspace{5pt}\\
We define the composition of $[X_i]_{i\leq n}\ra{[f_j]_j}[Y_j]_{j\leq m}\ra{[g_k]_k}[Z_k]_{k}$ in $\Ctxt(\DGame_!)$ by
$$[f_j]_j;[g_k]_k:=[\langle f_1,\ldots,f_m\rangle^\dagger ; g_k]_{k},$$
using the usual (co-Kleisli) composition of  strategies on $\tot{ X_1}\Rightarrow \cdots\Rightarrow \tot{ X_n}\Rightarrow (\tot{ Y_1}\&\cdots\&\tot{ Y_m})$ and $\tot{ Y_1}\Rightarrow \cdots\Rightarrow \tot{ Y_m}\Rightarrow \tot{Z_k}$. We note that we can assign to this composition a more precise dependent function type.

\begin{claim*}
The composition $[f_j]_j;[g_k]_k$ above does, in fact, define a morphism in $\Ctxt(\DGame_!)([X_i]_{i\leq n},[Z_k]_k)$.
\end{claim*}
\begin{proof}We need to verify that $\langle f_1,\ldots,f_m\rangle^\dagger ; g_k$ defines a winning strategy on $\Osat(\Pi_{X_1}\cdots\Pi_{X_n}W\{[f_j]_j\})$, where we write $W:=Z_k\{[g_{k'}]_{k'<k}\}$. The winning part of the claim follows trivially from the usual fact that winning strategies compose. What is to be verified is the claim that $\langle f_1,\ldots,f_m\rangle^\dagger; g_k$ is a strategy on $\Osat(\Pi_{X_1}\cdots\Pi_{X_n}W\{[f_j]_j\})$.

Recall that, by assumption, $g_k$ is a strategy on $\Osat(\Pi_{Y_1}\cdots\Pi_{Y_m}W)$. Suppose $\langle f_1,\ldots, f_m\rangle^\dagger;g_k$ wants to respond with a move $b$ in some $X_i$ \mccorrect{or $W$} after a play $sa$. Recall that by theorem \ref{thm:reppi}, we need to verify that for all $\overline{sab\upharpoonright_{!\tot{ X_1}}}\subseteq \sigma_1'\in\str(X_1()),\ldots,\overline{sab\upharpoonright_{!\tot{ X_n}}}\subseteq \sigma_n'\in \str(X_n(\sigma_1',\ldots,\sigma_{n-1}'))$, we have that
$$  sab\upharpoonright_{!\tot{ X_1}}\in P_{!X_1()} \wedge \cdots \wedge sab\upharpoonright_{!\tot{ X_n}}\in P_{!X_n(\sigma_1',\ldots,\sigma_{n-1}')}\wedge sab\upharpoonright_{\tot{ W}}\in P_{W\{[f_j]_j\}(\sigma_1',\ldots,\sigma_{n}')},
$$
provided that already
$$  sa\upharpoonright_{!\tot{ X_1}}\in P_{!X_1()} \wedge \cdots \wedge sa\upharpoonright_{!\tot{ X_n}}\in P_{!X_n(\sigma_1',\ldots,\sigma_{n-1}')}\wedge sa\upharpoonright_{\tot{ W}}\in P_{W\{[f_j]_j\}(\sigma_1',\ldots,\sigma_{n}')}.
$$
Here, all but the last conjunct follow from the fact that the $f_k$ are strategies on $\Osat(\Pi_{X_1}\cdots\Pi_{X_{n}}Y_k\{[f_{k'}]_{k'<k}\})$. \mccorrect{(Seeing that $g_k$ will not break the dependency in $[Y_k]_k$ as a strategy on $\Osat(\Pi_{Y_1} \cdots \Pi_{Y_m}Z_k\{[g_{k'}]_{k'<k}\})$.)}

What remains to be checked, therefore, is that
$$
sab\upharpoonright_{\tot{ W}}\in P_{W\{[f_j]_j\}(\sigma_1',\ldots,\sigma_{n}')},$$
or equivalently,
$$sab\upharpoonright_{\tot{ W}}\in P_{W(\langle \sigma_1',\ldots,\sigma_{n}'\rangle^\dagger; f_1,\ldots, \langle \sigma_1',\ldots,\sigma_{n}'\rangle^\dagger; f_m)}.$$
This follows immediately from the fact that $g_k$ is a strategy on $\Osat(\Pi_{Y_1}\cdots \Pi_{Y_m}W)$ if we can show that $[\sigma_i']_{i\leq n};[f_j]_{j\leq m}\in \Ctxt(\DGame_!)([],[Y_j]_{j\leq m})$, which is a special case of our claim when $[X_i]_i=[]$.

That is, we need to demonstrate that $\langle \sigma_1',\ldots,\sigma_n'\rangle^\dagger;f_j$ defines a strategy on\linebreak $\Osat(Y_j\{[\sigma_i']_{i\leq n};[f_j]_j\})$. (Again, it follows trivially that it will be a winning strategy.) We verify that for any play $sab$ in $\langle \sigma_1',\ldots,\sigma_n'\rangle^\dagger || f_j$, where $b$ is a Player move in $\tot{ Y_j}$, we have in fact that it is a move in $\Osat(Y_j\{[\sigma_i']_{i\leq n};[f_j]_j\})$. This follows from the fact that $f_j$ is a strategy on $\Osat(\Pi_{X_1}\cdots \Pi_{X_n}Y_j\{[f_{j'}]_{j'<j}\})$, which according to theorem \ref{thm:reppi} means, in particular, that for all\newline $\overline{sab\upharpoonright_{!\tot{ X_1}}}\subseteq \sigma_1'\in\str(X_1()),\ldots\,\overline{sab\upharpoonright_{!\tot{ X_n}}}\subseteq \sigma_n'\in \str(X_n(\sigma_1',\ldots,\sigma_{n-1}'))$, we have that
$$  sab\upharpoonright_{!\tot{ X_1}}\in P_{!X_1()} \wedge \cdots \wedge sab\upharpoonright_{!\tot{ X_n}}\in P_{!X_n(\sigma_1',\ldots,\sigma_{n-1}')}\wedge sab\upharpoonright_{\tot{ Y_j}}\in P_{Y_j\{[f_{j'}]_{j'<j}\}(\sigma_1',\ldots,\sigma_{n}')},
$$
provided that already
$$  sa\upharpoonright_{!\tot{ X_1}}\in P_{!X_1()} \wedge \cdots \wedge sa\upharpoonright_{!\tot{ X_n}}\in P_{!X_n(\sigma_1',\ldots,\sigma_{n-1}')}\wedge sa\upharpoonright_{\tot{ Y_j}}\in P_{Y_j\{[f_{j'}]_{j'<j}\}(\sigma_1',\ldots,\sigma_{n}')}.
$$
The last conjunct is what we are looking for, or rather its reformulation $sab\upharpoonright_{\tot{ Y_j}}\in P_{Y_j\{[\sigma_i']_i;[f_j]_j\}}$.

\end{proof}

Note that for $[X_i]_i\ra{[f_j]_j}[Y_j]_j$ and $[Y_j]_j\ra{\langle [g_k]_k,[h_l]_l\rangle }[Z_k]_k.[W_l]_l$, (Cons-Nat) holds in the sense that
$$[f_j]_j;\langle [[g_k]_k,[h_l]_l\rangle=\langle [f_j]_j;[g_k]_k,[h_l]_l\{[f_j]_j\}\rangle ,
$$if we define $-\{-\}_{\Tm}$ by
$$[h_l]_l\{[f_j]_j\}:=[\langle f_1,\ldots, f_m\rangle^\dagger ; h_l]_l,
$$
which then automatically type checks because of (Cons-Nat).

\vspace{5pt}\noindent\fbox{Id. Law, Assoc., (Ty-Id), (Tm-Id), (Ty-Comp), (Tm-Comp), (Cons-L), (Cons-R)}\vspace{5pt}\\
All these identities are direct consequences of the identity and associativity laws of the usual composition of strategies in $\Gamecat_!$. 
\end{proof}

\begin{remark}Note that, in $\Ctxt(\DGame_!)$, $[A,B]\cong [A\& B]$ if $A$ and $B$ are games (without mutual dependency) and $[]\cong [I]$.
\end{remark}

\section{Semantic Type Formers $1$, $\Sigma$, $\Pi$ and $\Id$}\label{sec:semtype}
We show that our CwF supports $1$-, $\Sigma$-, $\Pi$-, and $\Id$-types. We leave the verification of all term equations (which are inherited from their simply typed equivalents) to section \ref{sec:compl}. As for type equations, we can note that all type formers are preserved by substitution. We characterise some of the properties of the $\Id$-types, marking their place in the intensionality spectrum.

\paragraph*{$1$-types}
$1$-types are interpreted by the context game of length $0$. $\langle\rangle$ is interpreted by the list of strategies of length $0$.

\paragraph*{$\Sigma$-Types}
$\Sigma$-types are (formally) defined by concatenation of lists. For  $[Z_k]_{k\leq l}\in\Ty([X_i]_{i\leq n}.[Y_j]_{j\leq m})$, we define a $\Sigma$-type $$\Sigma_{[Y_j]_j}[Z_k]_k:=[Y_j]_j.[Z_k]_k\in\Ty([X_i]_{i\leq n}).$$
We can interpret $\langle -,-\rangle$ by concatenation $[\sigma_1,\ldots,\sigma_m,\tau_1,\ldots,\tau_l]$ of lists of strategies $[\sigma_j]_j$ and $[\tau_k]_k$, while we interpret $\mathsf{fst}$ as $[\der_{[X_i]_i.[Y_j]_j.[Z_k]_k,Y_{j'}}]_{j'}$ and $\mathsf{snd}$ as $[\der_{[X_i]_i.[Y_j]_j.[Z_k]_k,Z_{k'}}]_{k'}$.

\paragraph*{$\Pi$-Types}
We have already seen $\Pi$-types $$\Pi_{[Y_j]_{j\leq m}}[Z]:=[\Pi_{Y_1}\cdots\Pi_{Y_m}Z]\in \Ty([X_i]_{i\leq n})$$ of dependent games $[Z]\in\Ty([X_i]_{i\leq n}.[Y_j]_{j\leq m})$. They let us define $\lambda$-abstraction and evaluation as on the usual simply typed function game $\tot{Y_1}\Rightarrow\cdots\Rightarrow\tot{Y_m}\Rightarrow\tot{Z}$. 
What remains to be defined are $\Pi$-types $\Pi_{[Y_j]_j}[Z_k]_k$ of general dependent context games $[Z_k]_k\in\Ty([X_i]_{i\leq n}.[Y_j]_{j\leq m})$. These can be reduced to the former, as $\Sigma_{f:\Pi_{x:A}B}\Pi_{x:A}C[f(x)/y]$ satisfies the rules for $\Pi_{x:A}\Sigma_{y:B}C $. Conclusion: our CwF supports $\Pi$-types.

\begin{corollary}This means that $\Ctxt(\DGame_!)$ is in particular a ccc.\end{corollary}

\paragraph*{$\Id$-Types}
We turn to identity types next, which are essentially defined as one would expect from their definition as an inductive family. Interestingly, due to the intensional nature of function types in game semantics, these identity types acquire a very intensional character as well, refuting $\mathsf{FunExt}$.

For $[Y_j]_j\in\Ty([X_i]_i)$, define $\Id_{[Y_j]_j}\in\Ty([X_i]_i.[Y_j]_j.[Y_{j'}]_{j'})$:
$$\Id_{[Y_j]_j}([\sigma_i]_i,[\tau_j]_j,[\tau_j']_{j}):=[\begin{array}{ll}
\{\reflind\}_* & \mathsf{if}\; [\tau_j]_j = [\tau'_j]_j\\
\emptyset_* &\mathsf{else}
\end{array}].$$
Here, $\tot{\Id_{[Y_j]_j}}:=\{\reflind\}_*$. Note that, by definition, the plays of $\Id_{[Y_j]_j}([\sigma_i]_i,[\tau_j]_j,[\tau_j']_{j}) $ are closed under all Opponent moves in $\tot{ \Id_{[Y_j]_j}}$. Note that this means that $\Osat(\Id_{[Y_j]_j}([\sigma_i]_i,[\tau_j]_j,[\tau'_j]_j))=\Id_{[Y_j]_j}([\sigma_i]_i,[\tau_j]_j,[\tau'_j]_j)$.

$\Id$-I is interpreted by the non-strict strategy
\begin{mccorrection} \begin{align*}\refl{[f_j]_j}:=[\{\epsilon,*,*\reflind\}]&\in\Tm([X_i]_i,\Id_{[Y_j]_j}\{\langle[\der_{X_i}]_i,[f_j]_j,[f_j]_j\rangle\})\\
&=\{ [\sigma] \; |\; \sigma\in \str\{\reflind\}_*\}.\end{align*}
\end{mccorrection}
For the (strong) $\Id$-E rule, suppose we are given
\begin{itemize}
\item $[Z_k]_k\in\Ty([X_i]_i.[Y_j]_j.[Y_j]_j.\Id_{[Y_j]_j})$;
\item $[f_k]_k\in\Tm([X_i]_i.[Y_j]_j,[Z_k]_k\{\langle \der_{[X_i]_i},\der_{[Y_j]_j},\der_{[Y_j]_j},\refl{\der_{[Y_j]_j}}\rangle\})$.
\end{itemize}
Then, we produce $$[f'_k]_k\in\Tm([X_i]_i.[Y^{(1)}_j]_j.[Y^{(2)}_j]_j.\Id_{[Y_j]_j},[Z_k]_k).$$
Here, $f'_k$ is the strategy which responds to the initial move in $\tot{Z_k}$ by opening $\tot{\Id_{[Y_j]_j}}$, encoding the initial move in the index, and if Opponent responds $\reflind$ continues playing $f_k$ using the left hand side copy of $Y_j$. (Hence, $[f'_k]_k$ does not ever visit $[Y_j^{(2)}]_j$.) Note that such $f'_k$ are well-defined strategies, as
\begin{itemize}
\item all fibres of $\Id$-types contain the intial $O$-move, allowing $f_k'$ to always play it;
\item the moment that Opponent plays $\reflind$, she excludes the fibres for which the two arguments of type $[Y_j]_j$ are not equal as we have a bijection
\begin{diagram}\Ctxt(\DGame_!)([],[X_i]_i.[Y_j]_j)& \cong & \Ctxt(\DGame_!)([],[X_i]_i.[Y_j]_j.[Y_j]_j.\Id_{[Y_j]_j})\\
\langle [\sigma_i]_i,[\tau_j]_j\rangle & \rMapsto & \langle [\sigma_i]_i,[\tau_j]_j,[\tau_j]_j,[\reflind]\rangle.\end{diagram}
Hence we can continue playing $f_k$ from that point.
\end{itemize}
Noting that Player always has a response in the initial protocol and next follows $f_k$, it follows that $f'_k$ are winning iff $f_k$ are.

\begin{remark} There is an alternative, more extensional, definition of $\Id$-types which is tempting and which gives $\Id$-types satisfying the principle of function extensionality. One reason we have chosen to work with these $\Id$-types instead is that the other definition does not generalise to a situation where we are working with non-winning or non-deterministic strategies.

The idea is to restrict the model to dependent games which send applicatively equivalent strategies\footnote{That is, strategies which cannot be distinguished through their interaction with applicative contexts of ground type.} to equal subgames. In that case, we can define the identity type 
$$\Id_{[Y_j]_j}([\sigma_i]_i,[\tau_j]_j,[\tau_j']_{j}):=[\Id_{Y_j}]_j([\sigma_i]_i,[\tau_j]_j,[\tau_j']_{j}):=[\tau_j^{app}\cap \tau'_j{}^{app}]_j.$$
Here, $\tot{ \Id_{Y_j}}:=\tot{ Y_j}$ and we write $\phi^{app}$ for the closure of a set of plays $\phi$ under applicative equivalence. Then, $\Id$-I is interpreted by $\refl{[f_j]_j}:=[f_j]_j\in\Tm([X_i]_i,\Id_{[Y_j]_j}\{\langle[\der_{X_i}]_i,[f_j]_j,[f_j]_j\rangle\})=\{[g_j]_j\;|\; g_j\in \str(\Osat(\Pi_{[X_i]_i}f_j^{app}\{[g_k]_{k<j}\}))\}$, where we interpret the \mccorrect{non-deterministic} strategy \mccorrect{$f_j^{app}\{[g_k]_{k<j}\}$} as a game (which contains all Opponent moves) depending on $[X_i]_i$, noting that for any $[\sigma_i]_i\in\str(\tot{X_1})\times\cdots\times\str(\tot{X_n})$ we have that \mccorrect{$f_j^{app}\{[g_k]_{k<j}\}\{[\sigma_i]_i\}$} defines a \mccorrect{non-deterministic} strategy on $\tot{Y_j}$ and hence a subgame of $\tot{Y_j}$. To interpret the $\Id-E$-rule, we define $f'_k$ as the strategy $f_k$ where we identify $Y_j$ with $\Id_{Y_j}$. (Hence, $[f'_k]_k$ does not ever visit $[Y_j^{(1)}]_j$ or $[Y_j^{(2)}]_j$.) Note that such $f'_k$ are well-defined strategies, as long as we are only imposing the type dependency condition for a class of maximal strategies (like winning deterministic strategies).\end{remark}

\begin{remark}[Interpretation of $\substsf$] Note that $\Gamma,x:A,x':A,p:\Id_{A}(x,x')\vdash \substsf(p,-):B\Rightarrow B[x'/x]$ gets interpreted as an initial protocol querying the identity type, followed by a simple copycat between the two copies of $\sem{B}$ after Opponent plays $\reflind$.
\end{remark}

In addition to being non-extensional (i.e. refuting the principle of equality reflection), the intensionality of these identity types can be characterised as follows.

\begin{theorem}\label{thm:idprops} Streicher's Criteria of Intensionality are \mccorrect{satisfied}, i.e.
\begin{enumerate}
\item[(I1)] there exist $\vdash A\;\type$ such that $x,y:A,z:\Id_A(x,y)\not\vdash x\equiv y:A$;
\item[(I2)] there exist $\vdash A\;\type$ and $x:A\vdash B\;\type$ such that  $x,y:A, z:\Id_A(x,y)\not\vdash B\equiv B[y/x]\;\type$;
\item[(I3)] for all $\vdash A\;\type$, $\vdash p:\Id_A(t,s)$ implies $\vdash t\equiv s:A$.
\end{enumerate}
\end{theorem}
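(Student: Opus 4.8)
The three criteria (I1)--(I3) are properties of the equational theory $\DTTGame$ together with its game-theoretic model, so the plan is to establish each one by exhibiting an appropriate (finite) dependent game, or, in the case of (I3), by a soundness argument against the model. For (I3) I would argue semantically: the faithful interpretation $\sem{-}$ into $\Ctxt(\DGame_!)$ (which factors through the faithful games interpretation of $\STTGame$ via $\smiley(-)$, established in section~\ref{sec:compl}) sends a term $\vdash p:\Id_A(t,s)$ to a strategy on $\sem{\Id_A}(\langle\rangle,\sem{t},\sem{s})$. By the definition of the semantic $\Id$-type, this fibre is $\{\reflind\}_*$ if $\sem{t}=\sem{s}$ and $\emptyset_*$ otherwise; since a winning (hence non-empty, total) strategy exists, the fibre must be non-empty, so $\sem{t}=\sem{s}$ in $\Ctxt(\DGame_!)$, and faithfulness of the interpretation then yields $\vdash t\equiv s:A$ in $\DTTGame$. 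This is exactly the argument that already appears for the special case in the remark refuting equality reflection, so it should go through essentially verbatim.

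For (I1) I would take $A$ to be a simple finite game, e.g. the two-element type, $A=\{\ttt,\fff\}$, with $x=\ttt$, $y=\fff$ (or, to keep things as in the equality-reflection remark, $A=\{a\}$ with $x=\mathsf{case}_{\{a\},\{a\}}(w,a)$ for a fresh $w:\{a\}$ and $y=a$). The point is that $\DTTGame$-equality of terms is, via $\DTTGame$-\textsf{Eq}, exactly $\STTGame$-equality of the translated terms, which in turn is equality in the fully and faithfully complete games model of $\STTGame$ (theorem~\ref{thm:sttcompl}); so it suffices to check that the two relevant strategies on the finite flat game differ, which is immediate. Meanwhile a proof $z:\Id_A(x,y)$ can certainly be assumed to exist in context without forcing $x\equiv y$, precisely because our $\Id$-types are intensional (their elimination rule only identifies $x$ and $y$ up to a propositional, not judgemental, equality). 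So (I1) reduces to the observation that $x:\{a\}\vdash \mathsf{case}_{\{a\},\{a\}}(x,a)\neq a:\{a\}$ in the model while $x:\{a\}\vdash \mathsf{case}_{\{a\},\Id_{\{a\}}(\ldots)}(x,\refl{a})$ inhabits the identity type --- again essentially the content of the equality-reflection remark.

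For (I2) I would build on (I1) by exhibiting a genuinely non-constant type family that distinguishes the two terms $x,y:A$. Taking $A$ as above and choosing $x:A\vdash B\;\type$ to be a finite inductive type family whose fibres over the two chosen closed terms of $A$ are non-isomorphic games (for instance, over a two-element base, a family whose two fibres are flat games of different finite cardinality, mirroring the $\mathsf{dd\text{-}mm}$ example), one gets that $x,y:A,z:\Id_A(x,y)\vdash B\equiv B[y/x]\;\type$ would imply, by substituting the chosen closed terms, an equality of the two non-isomorphic fibre games in $\Ctxt(\DGame_!)$ --- impossible, since distinct finite games are not judgementally equal in the faithful model. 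The slightly delicate point here is to make sure the chosen $B$ is actually definable in $\DTTGame$ (a finite inductive type family as in figure~\ref{fig:case1} suffices) and that substitution of the closed terms into $B$ really produces the intended fibre games; this follows from the way finite inductive type families are interpreted in section~\ref{sec:grndtype}.

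\textbf{Main obstacle.} The essential subtlety is entirely in (I2): one must cleanly separate judgemental type equality (which is decided by the faithful interpretation into finite games, where non-isomorphic games are distinct) from propositional equality (which an assumed inhabitant $z:\Id_A(x,y)$ only provides), and verify that the finite inductive type family chosen for $B$ genuinely has non-isomorphic fibres over the two base points and is expressible within the fragment for which faithfulness has been proved. Once the right finite example is fixed, the actual verifications are routine unwindings of the definitions of the $\Sigma$-, $\Id$-, and inductive-family interpretations together with faithfulness; criteria (I1) and (I3) are then immediate corollaries of the equality-reflection discussion already present in the text.
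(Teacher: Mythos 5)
Your (I3) is essentially the paper's argument (no winning strategy exists on the fibre $\emptyset_*$, so the semantic interpretations of $t$ and $s$ coincide, and faithfulness transfers this to a judgemental equality), and your (I1) takes a genuinely different but valid route: instead of showing directly, as the paper does, that the two projection strategies interpreting $x$ and $y$ in the context $x,y:A,z:\Id_A(x,y)$ are distinct (the paper separates them by evaluating at the tuple $\langle[\bot],[\ttt],[\bot]\rangle$), you reduce (I1) to the failure of equality reflection from the remark in section 2.1, since derivability of $x\equiv y$ in that context would yield reflection by substitution. Note, however, that your first suggested instantiation $x=\ttt$, $y=\fff$ cannot feed that substitution: the context also demands a term for $z:\Id_A(\ttt,\fff)$, and no closed inhabitant exists; only your parenthetical example ($f=\mathsf{case}_{\{a\},\{a\}}(x,a)$, $g=a$) works.

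The genuine gap is in (I2), and it is exactly the subtlety the theorem is meant to expose. You propose to refute $x,y:A,z:\Id_A(x,y)\vdash B\equiv B[y/x]\;\type$ by ``substituting the chosen closed terms'' of $A$, over which $B$ has non-isomorphic fibres. That substitution cannot be performed: there is no closed term of type $\Id_A(\ttt,\fff)$ to substitute for $z$ (by your own (I3), any such term would force $\ttt\equiv\fff$). Worse, on every \emph{consistent} instantiation of the context the two types necessarily agree — precisely because (I3) holds, any admissible substitution sends $x$ and $y$ to judgementally equal terms, whence $B\equiv B[y/x]$ by congruence — so no argument that only inspects legitimate substitution instances can separate them. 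The paper's proof instead evaluates the two semantic type families (the substitutions of $\sem{B}$ along the two projections) at the \emph{inconsistent} tuple of strategies $(\fff,\ttt,\reflind)$, a point of the domain of a dependent game that corresponds to no syntactic substitution, where they yield $[I]$ and $[\BInd]$ respectively; the proof explicitly identifies the source of intensionality as the ``redundant information'' dependent games carry on inconsistent tuples. Your ``main obstacle'' paragraph locates the difficulty in the definability of $B$ and the computation of its fibres, but the real obstacle is this consistency issue. Without appealing to inconsistent tuples — or an equivalent device, such as substituting the open reflection-counterexample terms $\mathsf{case}_{\{a\},\{a\}}(x,a)$ and $a$ together with their identity proof and then separating the resulting types at the partial strategy $\bot$ — your (I2) argument does not go through. (Your choice of $B$ as a finite inductive family with fibres of different cardinality is fine; it is the separation step that must be repaired.)
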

\begin{proof}
\begin{enumerate}
\item[(I1)] Let us write \mccorrect{$\mathbf{p}_{[\BInd^{(i)}]}$} for $\mathbf{p}_{[\BInd^{(1)},\BInd^{(2)},\Id_{\BInd}],[\BInd^{(i)}]}$ and $\sem{-}$ for the interpretation functor from the syntax of {$\DTTGame$}$-$ into $\Ctxt(\DGame_!)$. (I1) relies on the interpretation of terms carrying intensionality. Take $\sem{A}:=[\BInd]$. Then, we have to show that $\mathbf{p}_{[\BInd^{(1)}]}\neq \mathbf{p}_{[\BInd^{(2)}]}\in\Tm([\BInd].[\BInd].\Id_{[\BInd]},[\BInd])$. We note that $\mathbf{p}_{[\BInd^{(1)}]}\{\langle [\bot],[\ttt],[\bot]\rangle\}=[\bot]$ while $\mathbf{p}_{[\BInd^{(2)}]}\{\langle [\bot],[\ttt],[\bot]\rangle\}=[\ttt]$, which shows that $(I1)$ holds.
\item[(I2)] This property replies on semantic types having intensional features. In this case, our source of intensionality is that dependent games contain redundant information on their value for inconsistent tuples of strategies.  For instance, take $\sem{A}:=[\BInd]$ and $\sem{B}:=(\fff\mapsto [I], \ttt\mapsto [\BInd])$. Then, we have to show that $\sem{B}\{\mathbf{p}_{[\B^{(1)}]}\} \neq \sem{B}\{\mathbf{p}_{[\BInd^{(2)}]}\}\in\mathsf{Ty}([\BInd].[\B].\Id_{[\B]})$. Now, $\sem{B}\{\mathbf{p}_{[\B^{(1)}]}\}( \fff,\ttt,\reflind) =\sem{B}(\fff)=[I]$ while $\sem{B}\{\mathbf{p}_{[\BInd^{(2)}]}\}(\fff,\ttt,\reflind)=\sem{B}(\ttt)=[\BInd]$, so we conclude that $(I2)$ holds.
\item[(I3)] Given $[\sigma_i]_i,[\tau_i]_i\in\Tm([],[X_i]_i)$ and\\
\\
\resizebox{\linewidth}{!}{
\parbox{\linewidth}{
\begin{align*}
[p_i]_i\in\Tm([],\Id_{[X_i]_i}([\sigma_i]_i,[\tau_i]_i))&:=\{[q]\;|\; q\in \str\left(\Osat\left(\begin{array}{ll}
\{\reflind\}_* & \mathsf{if}\; [\sigma_i]_i=[\tau_i]_i\\
\emptyset_* &\mathsf{else}
\end{array}\right)\right)\}\\
&\cong \left\{\begin{array}{ll}
\str(\{\reflind\}_*) & \mathsf{if}\; [\sigma_i]_i=[\tau_i]_i\\
\str(\emptyset_*) &\mathsf{else}
\end{array}\right.\\
&\cong \left\{\begin{array}{ll}
\{\reflind\} & \mathsf{if}\; [\sigma_i]_i=[\tau_i]_i\\
\emptyset &\mathsf{else}
\end{array}\right. ,
\end{align*}
}}\\
\\
it clearly follows that $[\sigma_i]_i=[\tau_i]_i$.
\end{enumerate}
\end{proof}
Similar proofs also suffice to establish (I1) and (I2) for the domain model of {$\DTTGame$}. (I3) relies on a crucial difference between the domain and games models: our identity types compare strategies in intension rather than in extension. For similar reasons, \textsf{FunExt} is seen to fail in the games model.

The principle $\mathsf{FunExt}$ of function extensionality intuitively states that, from the point of view of the $\Id$-types, functions are extensional objects: black boxes which merely send inputs to outputs without any internal temporal structure. It is refuted in our model.

\begin{theorem}\label{thm:id} $\mathsf{FunExt}$ is refuted: for $\vdash f,g:\Pi_{x:A}B$, we do not generally have $z:\Pi_{x:A}\Id_B(f(x),g(x))\vdash\mathsf{FunExt}_{f,g}:\Id_{\Pi_{x:A}B}(f,g). $
\end{theorem}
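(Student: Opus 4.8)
The plan is to exhibit a concrete counterexample. The cleanest choice is to take $\vdash A:=\sem{\BInd\multimap\BInd}$ (or more precisely the closed type whose game-theoretic denotation is $\BInd_*\Rightarrow\BInd_*$, built from the finite flat games by $1$, $\times$, $\Rightarrow$) and $x:A\vdash B:=1$ (the context game of length $0$), so that $\Pi_{x:A}B$ also denotes (up to the isomorphism $[I]\cong[\;]$) the game $\BInd_*\Rightarrow\BInd_*$ — no, wait: for the statement to bite I want $f,g:\Pi_{x:A}B$ to be \emph{distinct as strategies but applicatively indistinguishable through a pointwise-identity argument of type $\Pi_{x:A}\Id_B(f(x),g(x))$}. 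Since $B=1$, every $\Id_B(f(x),g(x))$ is inhabited by $\refl{\langle\rangle}$ regardless of whether $f$ and $g$ agree, so $z:\Pi_{x:A}\Id_B(f(x),g(x))$ is always inhabited; meanwhile $\Id_{\Pi_{x:A}B}(f,g)$ is, by the definition of $\Id$-types in section~\ref{sec:semtype}, the flat game $\{\reflind\}_*$ when $\sem{f}=\sem{g}$ as context morphisms and $\emptyset_*$ otherwise. Hence it suffices to produce two terms $\vdash f,g:\Pi_{x:A}1$ whose denotations are \emph{different} strategies. With $A$ a ground-ish type like $\BInd_*\Rightarrow\BInd_*$ this is easy: take $f:=\lambda_{x:A}\langle\rangle$ and $g:=\lambda_{x:A}\langle\rangle$ — these are syntactically equal, so that does not work either.

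So the genuine counterexample must make $f$ and $g$ \emph{extensionally} equal but \emph{intensionally} (as strategies) distinct. The standard witnesses are $f:=\lambda_{x:\BInd_*}x$ and $g:=\lambda_{x:\BInd_*}(\mathsf{case}\ x\ \mathsf{of}\ \ttt\to\ttt\mid\fff\to\fff)$ in a context where $B$ depends trivially, i.e. take $A:=\sem{\BInd_*}$, $x:A\vdash B:=\sem{\BInd_*}$ (a constant family). Then $\sem{f}$ is the copycat strategy $\der_{\BInd_*}$ and $\sem{g}$ is the strategy which first interrogates its argument with the initial question, receives the answer, and only then answers — these are equal as $\approx$-closed strategies on $\BInd_*\Rightarrow\BInd_*$ actually, since a history-free strategy on this game is determined by a partial function on moves and both send $*\mapsto *$, $\ttt\mapsto\ttt$, $\fff\mapsto\fff$. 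This shows the subtlety: on first-order finite types, extensional equality \emph{is} intensional equality. The counterexample therefore needs a \emph{higher-order} type $A$, where distinct definable strategies can have the same applicative behaviour at ground type only after quotienting — but in our model strategies are \emph{not} so quotiented. Concretely I would take $A:=\sem{(\BInd_*\Rightarrow\BInd_*)\Rightarrow\BInd_*}$ and $B:=\sem{\BInd_*}$ constant, and let $f,g:\Pi_{x:A}B = \sem{((\BInd_*\Rightarrow\BInd_*)\Rightarrow\BInd_*)\Rightarrow\BInd_*}$ be two strategies that evaluate their functional argument at $\ttt$-then-$\fff$ versus $\fff$-then-$\ttt$ and combine the results by a commutative boolean operation (say, "return $\ttt$ iff both answers are $\ttt$"): these have identical input-output behaviour on \emph{every} closed argument (because the operation is symmetric and both branches are always evaluated), hence $z:\Pi_{x:A}\Id_B(f(x),g(x))$ is inhabited by $\lambda_{x:A}\refl{f(x)}$ using that $\sem{f(x)}=\sem{g(x)}$ for all closed $x$; but $\sem{f}\neq\sem{g}$ as strategies on the function game, since they differ on the order of interrogation of the $!$-threads, so $\Id_{\Pi_{x:A}B}(f,g)$ denotes $\emptyset_*$ and has no closed inhabitant.

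The proof then assembles as follows. First, fix the syntactic terms $f,g$ implementing the two evaluation orders and check they are well-typed closed terms of $\Pi_{x:A}B$; this is routine in $\DTTGame$. Second, show $\vdash f(a)\equiv g(a):B$ for every closed $\vdash a:A$ — here I would invoke the faithful translation $(-)^T:\DTTGame\to\STTGame$ of Corollary~\ref{thm:trans} and the full and faithful games interpretation of $\STTGame$ (Theorem~\ref{thm:sttcompl}) to reduce the claim to the equality of the two composite strategies after plugging in $a$, which holds because the symmetric combining operation washes out the order; alternatively argue directly that the equational theory of $\STTGame$ (the $\PCF$ commutative conversions, in particular the $\{a_i\mid i\}$-$\mathsf{Comm}$-rules) already proves $f^T(a^T)=g^T(a^T)$. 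By the $\Pi$- and $\Id$-introduction rules this gives $z:\Pi_{x:A}\Id_B(f(x),g(x))\vdash \lambda_{x:A}\refl{f(x)}:\Pi_{x:A}\Id_B(f(x),g(x))$, so the hypothesis type is inhabited. Third — and this is the heart — show $\sem{f}\neq\sem{g}$ in $\Ctxt(\DGame_!)$ by exhibiting a finite play lying in one strategy but not the other (the play in which the functional argument is first queried with $\ttt$), and conclude from the definition of $\Id_{\Pi_{x:A}B}$ that $\sem{\Id_{\Pi_{x:A}B}(f,g)}=\emptyset_*$, whence $\Tm([\;],\sem{\Id_{\Pi_{x:A}B}(f,g)})=\str(\emptyset_*)\cong\emptyset$; by soundness (section~\ref{sec:compl}) no closed term of $\Id_{\Pi_{x:A}B}(f,g)$ exists, and hence neither does a term $z:\Pi_{x:A}\Id_B(f(x),g(x))\vdash\mathsf{FunExt}_{f,g}:\Id_{\Pi_{x:A}B}(f,g)$, since substituting the inhabitant of the hypothesis type just produced would yield one. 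The main obstacle is the second step: verifying that the two evaluation orders really do produce extensionally equal terms in $\DTTGame$ — this needs care because $\DTTGame$ retains only a restricted $\eta'$ for inductive types plus the $\PCF$ commutative conversions, so one must check that exactly these conversions (and not full $\eta$) suffice to identify $f(a)$ and $g(a)$ for all closed $a$; routing this through the faithful translation to $\STTGame$ and the known completeness of the $\STTGame$ games model (Theorem~\ref{thm:sttcompl}) is the safest way to discharge it.
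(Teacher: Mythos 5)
Your overall strategy is the paper's: produce closed $f,g$ that are distinct as strategies yet satisfy $\sem{f}\{x\}=\sem{g}\{x\}$ pointwise, so that the hypothesis game $\Osat(\Pi_{A}\Id_B(f(x),g(x)))$ is inhabited (by the non-strict strategy answering $\reflind$) while $\sem{\Id_{\Pi_{x:A}B}(f,g)}=\emptyset_*$ has no winning strategy. Two things go wrong in the execution, though. First, your claim that ``on first-order finite types, extensional equality is intensional equality,'' which forces you into the higher-order order-of-evaluation example, is false: the paper's counterexample is exactly first-order, with $A=B=\BInd$, $f$ the strict strategy that examines its argument once and returns $\ttt$, and $g$ the strategy that examines it twice before returning $\ttt$. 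Intensionality at first order is carried by \emph{how many} threads of $!\BInd$ a strategy opens, not by the induced partial function on the moves of a single thread, so your test (both send $*\mapsto *$, $\ttt\mapsto\ttt$, $\fff\mapsto\fff$) does not detect it. Your higher-order example does also work (modulo the typing slip: as written $A=(\BInd\Rightarrow\BInd)\Rightarrow\BInd$ but you then apply the argument to booleans), but it is needless machinery.

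Second, your inhabitation of the hypothesis type by the \emph{syntactic} term $\lambda_{x:A}\refl{f(x)}$ does not typecheck as stated: $\Id$-I gives $\refl{f(x)}:\Id_B(f(x),f(x))$, and converting this to $\Id_B(f(x),g(x))$ requires the judgemental equality $x:A\vdash f(x)= g(x):B$ \emph{in the open context}, which cannot hold — by faithfulness of the interpretation it would force $\sem{f}=\sem{g}$, contradicting the whole point of the example. (It can be patched with \textsf{Ty-Ext} as in the proof of Theorem \ref{thm:complid}, but that is extra machinery and extra hypotheses.) The paper sidesteps this entirely: the theorem is a statement about the model, so it suffices to exhibit the inhabiting \emph{strategy} $\reflind\in\str(\Osat(\Pi_{\BInd}\{\reflind\}_*))=\Tm([\BInd],\Id_{[\BInd]}(\sem{f},\sem{g}))$ directly, which only needs the semantic fact $\sem{f}\{x\}=\sem{g}\{x\}$ for all strategies $x$ — checked by inspection, with no detour through definability, the translation $(-)^T$, or the equational theory of $\STTGame$.
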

\begin{proof}For our counter example, we let $\sem{A}=\sem{B}=[\BInd]$.

Let $f$ be the usual strict strategy that outputs $\mathsf{tt}$ (and examines its argument once) and let $g$ by the strategy which always outputs $\ttt$ but first examines its input twice.  Noting that always $\sem{f}\{x\}=\sem{g}\{x\}$ for all strategies $x$ on $\BInd$, we have an inhabitant $\reflind\in\str(\Osat(\Pi_{\BInd}\{\reflind\}_*))=\Tm([\BInd],\Id_{[\BInd]}(\sem{f},\sem{g}))$. However, as not $\sem{f}=\sem{g}$, we do not have an inhabitant of $$\Tm([],[\Id_{\Pi_{[\BInd]}[\BInd]}](\sem{f},\sem{g}))=\Tm([],[\emptyset_*])=\str({\emptyset}_*)=\emptyset.$$
\end{proof}
On the other hand, it turns out that we do have the principle of uniqueness of identity proofs \textsf{UIP}, as the strict strategy which first examines the first copy of $\sem{\Id_A}$ and then the second, before replying $\reflind$ in $\sem{\Id_{\Id_{A}}}$. We choose this more complicated witness rather than a non-strict one as this will generalise to settings where we consider a broader class of strategies. This principle intuitively says that types have trivial (discrete) spatial structure, from the point of view of the $\Id$-types.
\begin{theorem}We have $x,y:A,p,q:\Id_A(x,y)\vdash \mathsf{UIP}_A: \Id_{\Id_A(x,y)}(p,q)$.
\end{theorem}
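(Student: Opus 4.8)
The plan is to exhibit an explicit term witnessing $\mathsf{UIP}_A$ and to check that it type checks and denotes a winning deterministic history-free well-bracketed strategy. Working in the CwF $\Ctxt(\DGame_!)$, recall that for $[Y_j]_j\in\Ty([X_i]_i)$ the type $\Id_{[Y_j]_j}$ has total game $\{\reflind\}_*$ and fibres $\Id_{[Y_j]_j}([\sigma_i]_i,[\tau_j]_j,[\tau'_j]_j)$ equal to $\{\reflind\}_*$ when $[\tau_j]_j=[\tau'_j]_j$ and $\emptyset_*$ otherwise. So, setting $\sem{A}=[Y_j]_j$, the type $\Id_A(x,y)$ in context $[X_i]_i.[Y_j]_j.[Y_j]_j$ is a dependent game whose total game is the flat game $\{\reflind\}_*$; consequently $\Id_{\Id_A(x,y)}(p,q)$ has total game $\{\reflind\}_*$ as well. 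First I would define the candidate witness $\mathsf{UIP}_A$ to be the strategy on $\Osat(\Pi_{[X_i]_i}\Pi_{[Y_j]_j}\Pi_{[Y_j]_j}\Pi_{\Id_{[Y_j]_j}}\Pi_{\Id_{[Y_j]_j}}\,\Id_{\Id_{[Y_j]_j}})$ which, in response to the initial $O$-move in the (flat) game $\Id_{\Id_{[Y_j]_j}}$, opens the first copy of $\Id_{[Y_j]_j}$ (playing its initial move $*$), on receiving the $O$-answer $\reflind$ there opens the second copy of $\Id_{[Y_j]_j}$, and on receiving $\reflind$ there responds $\reflind$ in $\Id_{\Id_{[Y_j]_j}}$. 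This is exactly the strict ``examine both identity-proof copies, then answer $\reflind$'' strategy; I choose this rather than a non-strict immediate-$\reflind$ witness so the construction survives the passage to broader classes of strategies discussed in Section~\ref{sec:depgameeff}.

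The key steps, in order, are: (i) verify that this partial function on moves is causally consistent, deterministic, history-free (it is induced by a finite partial function on moves) and well-bracketed (answers are played to the pending question, since the protocol is strictly nested $*$–$\reflind$–$*$–$\reflind$–$\reflind$); (ii) verify that it is a legal strategy on the game of dependent functions, i.e. that Player never narrows a fibre illegally. The latter is where the type-dependency condition of Theorem~\ref{thm:reppi} is used: each move the strategy makes in an $\Id$-game is the initial move $*$ (which lies in every fibre, including $\emptyset_*$-fibres) or the answer $\reflind$, and by the time Player plays $\reflind$ in $\Id_{\Id_{[Y_j]_j}}$, Opponent's two earlier $\reflind$-moves in the copies of $\Id_{[Y_j]_j}$ have already committed her to a tuple of strategies for which $p=q$ (via the bijection $\Ctxt(\DGame_!)([],[X_i]_i.[Y_j]_j)\cong\Ctxt(\DGame_!)([],[X_i]_i.[Y_j]_j.[Y_j]_j.\Id_{[Y_j]_j})$ already used for $\Id$-E), so the fibre of $\Id_{\Id_{[Y_j]_j}}$ is $\{\reflind\}_*$ and answering $\reflind$ is legal; (iii) check it is winning — Player always has the prescribed response, and no infinite plays arise in this finite flat-game interaction — so it denotes a genuine term; (iv) finally, note that on input the same tuple of strategies it yields $\reflind$, which is forced, so the required $\beta$/$\eta$ equations with the $\Id$-E/$\Id$-I rules hold by the same argument used throughout Section~\ref{sec:semtype}, and these are anyway inherited from the simply typed game semantics via the faithful translation $\smiley(-)$ of Section~\ref{sec:compl}.

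I do not expect a serious obstacle here; the only point requiring genuine care is step (ii), confirming that the intermediate $*$-moves Player plays in the $\Id$-copies are always legal even along Opponent behaviours that have not yet committed to $p=q$ (or even to consistent tuples at all). This is handled by the general fact, already used in the construction of the $\Id$-E eliminator, that every fibre of an $\Id$-type contains the initial $O$-move $*$, so opening an $\Id$-subgame never constrains a fibre; Player only ever risks narrowing a fibre by answering $\reflind$, and that answer is played solely after Opponent's own $\reflind$-moves have collapsed the relevant fibre to $\{\reflind\}_*$. So the verification reduces to carefully tracking which fibres are live after each move, which is routine given Theorem~\ref{thm:reppi} and the explicit description of $\Id$-games.
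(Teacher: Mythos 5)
Your proposal is correct and follows essentially the same route as the paper's (much terser) proof: the witness is exactly the strict strategy that opens the two copies of $\sem{\Id_A}$ in turn before answering $\reflind$ in $\sem{\Id_{\Id_A}}$, legality resting on the facts that the initial move $*$ lies in every fibre and that Opponent's $\reflind$-answers force the fibre of $\Id_{\Id_A(x,y)}(p,q)$ to be $\{\reflind\}_*$. Your additional checks (determinism, history-freeness, well-bracketing, winningness) are routine and consistent with what the paper leaves implicit.
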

\begin{proof}Player can open a copy of $\sem{\Id_A}$ as the initial move $*$ is in each fibre. After Player opens a copy of $\sem{\Id_A}$, Opponent can only reply $\reflind$. Player can then play $\reflind$ in $\sem{\Id_{\Id_A}}$ as it is in all possible fibres.
\end{proof}

\section{Ground Types: Finite Dependent Games}\label{sec:grndtype}
In this section, we show how we can additionally give finite inductive type families an interpretation if we restrict to a full subcategory $\Ctxt(\DGame_!)^{\fin1\Sigma\Pi\Id}$ of $\Ctxt(\DGame_!)$. Indeed, we use the full subcategory $\Ctxt(\DGame_!)^{\fin1\Sigma\Pi\Id}$ on the hierarchy of context games generated by the semantic constructions  interpreting $1$-, $\Sigma$-, $\Pi$- and $\Id$-types and substitution, starting from finite dependent games (as defined below). These finite dependent games will play the r\^ole of semantic ground types to build a type hierarchy for which we prove completeness results in the next section.

We consider the interpretation of {$\DTTGame$}$-$ in $\Ctxt(\DGame_!)^{\fin1\Sigma\Pi\Id}$ and will denote the interpretation functor by $\sem{-}$.

\begin{theorem}[Finite Dependent Game] \label{thm:fin} A finite inductive type family $B:=(a_i\mapsto_i \{b_{i,j}\;|\;j\})(x)$ in context $x:A$, where $B[a_i/x]$ is generated by $\{b_{ij}\;|\; 1\leq j\leq m_i\}$, has an interpretation in $\Ctxt(\DGame_!)^{\fin1\Sigma\Pi\Id}$ as a \emph{finite dependent game}:
\begin{diagram}
\sem{B}:
\sem{a_i}& \rMapsto & {\{b_{i,j}\;|\; j\}_*} & &
\elsetext & \rMapsto & {\emptyset}_*
\end{diagram}
and
$$\tot{ \sem{B}}={\{b_{i,j}|i,j\}_*}.
$$
\end{theorem}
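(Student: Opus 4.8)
The plan is to verify that the proposed assignment $\sem{B}$ is a well-defined object of $\ob(\DGame_!(\sem{A}))$, i.e. that it genuinely defines a game with dependency on $\sem{A}$ (where $\sem{A}$ lies in $\Ctxt(\DGame_!)^{\fin1\Sigma\Pi\Id}$ since $A$ is built from the allowed constructions), and then to check that under this interpretation the $F$-, $I$-, $E$- (or $E'$-) and $\beta$-rules of the finite inductive type family $(a_i\mapsto_i\{b_{i,j}\;|\;j\})(x)$, as listed in figure \ref{fig:case1}, are soundly modelled. First I would observe that $\tot{\sem{B}} = {\{b_{i,j}\;|\;i,j\}_*}$ is a genuine flat game (all the $b_{i,j}$ being distinct constants, so the move set is well-defined), and that each of ${\{b_{i,j}\;|\;j\}_*}$ and ${\emptyset}_*$ is a $\trianglelefteq$-subgame of ${\{b_{i,j}\;|\;i,j\}_*}$: they share the same moves, labelling and justification structure, with $P_{{\{b_{i,j}\;|\;j\}_*}} = \{\epsilon, *\}\cup\{*b_{i,j}\;|\;j\}$ prefix- and $\approx$-closed inside $P_{{\{b_{i,j}\;|\;i,j\}_*}}$, and similarly $P_{{\emptyset}_*} = \{\epsilon,*\}$. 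The key point is that $\sem{B}$ must be a well-defined \emph{function} $\str(\tot{\sem{A}})\to\Sub(\tot{\sem{B}})$: since $A$ is an inductive type family itself (or, more precisely, since the $a_i$ are distinct closed terms), the fibres $\sem{A}(\sigma)$ are flat games with disjoint constructor sets, so a strategy $\sigma\in\str(\tot{\sem{A}})$ that answers some $a_i$ is ``at $a_i$'' unambiguously, and $\sigma = \sem{a_i}$ exactly when $\sigma$ plays $a_i$; any $\sigma$ not of this form (including $\bot$, or one answering a constant outside $\{a_1,\dots,a_n\}$) is sent to ${\emptyset}_*$. This makes $\sem{B}$ well-defined and visibly functorial under the base change maps along the constructors $\sem{a_i}$.

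Next I would interpret the term formers. The introduction rule $(a_i\mapsto_i\{b_{i,j}\;|\;j\})(x)\textsf{-I}_{i,j}$ asks for $\Gamma\vdash b_{i,j}: (a_i\mapsto_i\{b_{i,j}\;|\;j\})(a_i)$; semantically $\sem{B}\{\sem{a_i}\} = \sem{B}(\sem{a_i}; -) = {\{b_{i,j}\;|\;j\}_*}$, and $b_{i,j}$ is interpreted as the history-free winning strategy answering $b_{i,j}$ to the initial move $*$, which is a legal (indeed winning) strategy on ${\{b_{i,j}\;|\;j\}_*}$. For the elimination rule, I would mimic the interpretation of $\mathsf{case}$ for finite inductive types in $\Gamecat_!^{\mathrm{fin}1\times\Rightarrow}$ given at the end of section \ref{sec:backgame}: $\mathsf{case}_{B,C}(b,\{c_{i,j}\}_{i,j})$ is interpreted by the strategy which, on the initial move of $\tot{\sem{C}}$, opens the argument of type $\tot{\sem{B}} = {\{b_{i,j}\;|\;i,j\}_*}$ by playing $*$ there, waits for Opponent's answer $b_{i,j}$, and then continues as $\sem{c_{i,j}}$ using the appropriate left-hand copies of the context; the fact that Opponent answering $b_{i,j}$ restricts the fibre of $\sem{B}$ to the one over $\sem{a_i}$ (by the disjointness of constructor sets, exactly as in the $\Id$-E argument of section \ref{sec:semtype}) is what makes this well-typed, since from that point $c_{i,j}$ is typed in the fibre over $a_i$. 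The $E'$-variant is then handled using theorem \ref{thm:trans}, which already establishes that $\mathsf{case}$ and $\mathsf{case}^{p,q}$ are interdefinable in {$\DTTGame$}, so it suffices to give the interpretation for one of them. The $\beta$-rule $\mathsf{case}_{B,C}(b_{i,j},\{c_{i,j}\}) = c_{i,j}$ then follows because the composite of the ``answer $b_{i,j}$'' strategy with the case strategy hides the interaction in $\tot{\sem{B}}$ and leaves exactly $\sem{c_{i,j}}$, just as in the simply typed case.

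The main obstacle I anticipate is not the verification of any single rule but rather checking \emph{closure}: that $\Ctxt(\DGame_!)^{\fin1\Sigma\Pi\Id}$, defined as the full subcategory on the hierarchy generated from finite dependent games by $1,\Sigma,\Pi,\Id$ and substitution, actually contains all the context games needed to state and interpret the $F$-, $I$-, $E$-rules coherently — in particular that the base types $A$ over which finite inductive families are declared are themselves in this hierarchy (they are, being finite inductive families over simpler types, ultimately over $1$), and that the dependent games $C$ one eliminates into stay in the hierarchy. A secondary subtlety is confirming that $\sem{B}$ respects substitution in the precise sense demanded by the CwF structure of theorem \ref{thm:cwf}, i.e. $\sem{B}\{f\}(\sigma) = \sem{B}(f^\dagger;\sigma)$ agrees with the syntactic $B[a/x]$; this is a routine but slightly fiddly unwinding of the definitions of change of base and of the finite flat games involved, relying on the fact that $\tot{\sem{B}}$ is literally constant (equal to ${\{b_{i,j}\;|\;i,j\}_*}$) so substitution only permutes which fibre is selected. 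I expect all of this to go through by the same pattern of argument already used for $1$-, $\Sigma$-, $\Pi$- and $\Id$-types in section \ref{sec:semtype}, with the completeness/faithfulness consequences deferred to section \ref{sec:compl}.
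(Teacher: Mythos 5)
Your plan matches the paper's proof in all essentials: the constructors are interpreted as the strategies answering $b_{i,j}$ to $*$, the eliminator is interpreted via the simply typed $\mathsf{case}$ strategy (with the reduction to the base case of a flat $C$ handled by the commutative conversions), and well-typedness rests on exactly the observation you make — that Opponent's answer $b_{i,j}$ in $\tot{\sem{B}}$, together with the disjointness of the fibres of $\sem{B}$ and the type of $\sem{b}$, pins down the fibre over $\sem{a_i}$ before any $\sem{c_{i,j}}$ is invoked. The only cosmetic difference is that the paper interprets $\mathsf{case}^{p,q}$ directly (since it is needed later) rather than deriving it from $\mathsf{case}$ via their interdefinability, which changes nothing.
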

\begin{proof}To be explicit, we interpret the $\case^{p,q}$-constructs rather than the (equivalent) $\case$-constructs, as we shall be using the former later.

The interpretation of the I-rules is clear: $\sem{b_{i,j}}$ is the unique strategy on $\sem{B[a_i/x]}$ that replies to $*$ with the move $b_{i,j}$.

We inductively construct $\sem{\case^{p,q}_{B[a/x],C}(b,\{c_{i,j}\}_{i,j})}:\sem{\cdot}\ra{}\sem{\Pi_{A'}C[b/y]}$, with structural induction on $C$ (apart from the case of $C=1$, which is trivial). We consider the (more general) base case of arbitrary $\sem{C}$ that assign to each $\sigma\in\str(\tot{ \sem{A'}}\&\tot{ \sem{B}})$ a finite inductive game with initial move $*$. After that, the case constructs for more general $C$ are obtained from the commutative conversions \mccorrect{for $\Sigma$- and $\Pi$-types induced from those} of figure \ref{fig:stt}.  Note that substitutions and $\Id$-types are already dealt with because we have been considering the more general base case where some of the constructors of $\sem{C}$ can coincide, while \mccorrect{substitution} commutes with $\Pi$ and $\Sigma$.

Let us consider our base case. We define $\sem{\case^{p,q}_{B[a/x],C}(b,\{c_{i,j}\}_{i,j})}$ by noting that 
$$\sem{\case_{ {B^T}, {C^T}}}(\sem{b^T},\{\sem{c_{i,j}^T}(\der_{A'},\reflind,\reflind)\}_{i,j})$$
in fact defines a (winning) strategy on $\sem{\Pi_{A'}C[b/y]}$, where $(-)^T$ is the syntactic translation from section \ref{sec:DTT}.

We verify that this yields a strategy $\sem{\case^{p,q}_{B[a/x],C}(b,\{c_{i,j}\}_{i,j})}$ on $\sem{\Pi_{A'}C[b/y]}$ (which clearly automatically is winning, as usual, as we never restrict $O$-moves in games of dependent functions). Let us write $\sem{A'}=[X_i]_i$, so $\sem{\case^{p,q}_{B[a/x],C}(b,\{c_{i,j}\}_{i,j})}$ will be a strategy on $\Osat(\Pi_{X_1}\cdots\Pi_{X_n}\sem{C[b/y]})$. Let $sab\in\sem{\case^{p,q}_{B[a/x],C}(b,\{c_{i,j}\}_{i,j})}$. Then, we verify  that for all $\overline{sab\upharpoonright_{!\tot{ X_1}}}\subseteq \sigma_1'\in\str(X_1()),\ldots\,\overline{sab\upharpoonright_{!\tot{ X_n}}}\subseteq \sigma_n'\in \str(X_n(\sigma_1',\ldots,\sigma_{n-1}'))$, we have that $(*)$
$$  sab\upharpoonright_{!\tot{ X_1}}\in P_{!X_1()} \wedge \cdots \wedge sab\upharpoonright_{!\tot{ X_n}}\in P_{!X_n(\sigma_1',\ldots,\sigma_{n-1}')}\wedge sab\upharpoonright_{\tot{ \sem{C}}}\in P_{\sem{C[b/y]}(\sigma_1',\ldots,\sigma_{n}')},
$$
provided that already
$$ sa\upharpoonright_{!\tot{ X_1}}\in P_{!X_1()} \wedge \cdots \wedge sa\upharpoonright_{!\tot{ X_n}}\in P_{!X_n(\sigma_1',\ldots,\sigma_{n-1}')}\wedge sa\upharpoonright_{\tot{ \sem{C}}}\in P_{\sem{C[b/y]}(\sigma_1',\ldots,\sigma_{n}')}.$$
Because of the type $\sem{\Pi_{A'}B[a/x]}$ of $\sem{b}$, all Player moves of $\sem{\case^{p,q}_{B[a/x],C}(b,\{c_{i,j}\}_{i,j})}$ respect the type $\sem{\Pi_{A'}C[b/y]}$ at least until some $\sem{c_{i,j}}$ is called. Now, the crux is that $\sem{c_{i,j}}$ is only ever called after $\sem{b}$ has already replied with the move $b_{i,j}$. This means that for any $[\sigma'_i]_i$ we are considering, we have that $[\sigma'_i]_i;\sem{b}=\sem{b_{i,j}}$. Moreover, because of the type of $b$, we have that $\sem{b_{i,j}}=[\sigma'_i]_i;\sem{b}=\sem{b}\{[\sigma'_i]_i\}$ is a winning strategy on $\sem{B}\{\sem{a}\{[\sigma'_i]_i\}\}$, while $\sem{a}\{[\sigma'_i]_i\}$ is a winning strategy on $\sem{A}$. Therefore, because of the definition of $\sem{B}$, we conclude that $[\sigma_i']_i;\sem{a}=\sem{a_i}$, as the fibres of $B$ are disjoint. The upshot is that the semantic type  $\sem{\Pi_{x':A'}\Pi_{p_{i,j}:\Id_A(a_i,a)}\Pi_{q_{i,j}:\Id_{B[a/x]}(\substsf(p_{i,j},b_{i,j}),b)}C[b/y]}$ of $\sem{c_{i,j}}$ now gives us that the continuation of the play along  $\sem{c_{i,j}}(\der_{A'},\reflind,\reflind)$ still respects our condition $(*)$. 
\end{proof}

We have obtained the following.
\begin{corollary}\label{cor:snddttm}
{$\DTTGame$}$-$ has a sound interpretation   in $\Ctxt(\DGame_!)^{\fin1\Sigma\Pi\Id}$.
\end{corollary}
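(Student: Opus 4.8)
The statement to prove is Corollary \ref{cor:snddttm}, namely that {$\DTTGame$}$-$ has a sound interpretation in $\Ctxt(\DGame_!)^{\fin1\Sigma\Pi\Id}$. The plan is to assemble this from the pieces that have already been established in the preceding sections rather than to do anything new from scratch. First I would recall that Theorem \ref{thm:cwf} already gives us that $\Ctxt(\DGame_!)$ is a category with families, and that Section \ref{sec:semtype} shows this CwF supports $1$-, $\Sigma$-, $\Pi$- and (intensional) $\Id$-types; by restricting to the full subcategory $\Ctxt(\DGame_!)^{\fin1\Sigma\Pi\Id}$ generated by the finite dependent games of Theorem \ref{thm:fin} under exactly the $1$-, $\Sigma$-, $\Pi$-, $\Id$- and substitution constructions, this subcategory inherits all of that structure (it is closed under the relevant operations by construction). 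So the model has all the type and term formers that appear in the $F$-, $I$- and $E$-rules of {$\DTTGame$}$-$: $\Sigma$-, $\Pi$- and $\Id$-types from Section \ref{sec:semtype}, and the finite inductive type families with their $F$-, $I$- and $E$-rules ($\case^{p,q}$ in particular) from Theorem \ref{thm:fin}.

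Next, the actual content of "sound interpretation" is that the interpretation functor $\sem{-}$ is well-defined on derivations: every derivable judgement of {$\DTTGame$}$-$ (a context, a type, a term, or one of the three equality judgements) gets assigned the appropriate semantic datum, compatibly with the structural rules. I would proceed by induction on derivations. The structural rules (weakening, substitution, the context-formation rules, conversion, congruence) are handled by the CwF axioms of Theorem \ref{thm:cwf}: weakening and substitution are the $-\{-\}$ operations on $\Ty$ and $\Tm$, context extension is $-.-$, and the equality-judgement rules (reflexivity, symmetry, transitivity, conversion) are automatic because judgemental equality is interpreted as genuine equality of semantic objects. For each logical rule, I would check that the semantic operation constructed in Section \ref{sec:semtype} or Theorem \ref{thm:fin} validates the corresponding $F$-, $I$- and $E$-rule and commutes with substitution; the key point here — noted already in Section \ref{sec:semtype} — is that all the type formers ($1$, $\Sigma$, $\Pi$, $\Id$, and the finite dependent games) are preserved (up to the relevant canonical isomorphism) under the change-of-base functors $-\{f\}$, which is precisely what makes the interpretation stable under substitution. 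Since {$\DTTGame$}$-$ has no term equations beyond those inherited from the structural core (the $\beta$-, $\eta$-rules and commutative conversions are only added in the passage to {$\DTTGame$}, not to {$\DTTGame$}$-$), there is relatively little to check on the equality side at this stage.

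The one place requiring genuine care — and the main obstacle — is verifying that the interpretations of the elimination rules, especially $\Id$-E and the $\case^{p,q}$-eliminator for finite inductive families, actually land in the claimed subcategory and type. For $\Id$-E this was already argued in Section \ref{sec:semtype}: the witness $f'_k$ is well-defined because every fibre of an $\Id$-type contains the initial move and because Opponent playing $\reflind$ forces the two $[Y_j]_j$-arguments to be equal, via the bijection $\Ctxt(\DGame_!)([],[X_i]_i.[Y_j]_j)\cong \Ctxt(\DGame_!)([],[X_i]_i.[Y_j]_j.[Y_j]_j.\Id_{[Y_j]_j})$. For the finite inductive families, Theorem \ref{thm:fin} already carries out the delicate verification that $\sem{\case^{p,q}}$ is a well-typed (winning) strategy — crucially, the argument that $\sem{c_{i,j}}$ is only ever invoked after $\sem{b}$ has replied $b_{i,j}$, forcing $[\sigma'_i]_i;\sem{a}=\sem{a_i}$ by disjointness of the fibres, and hence that the continuation respects the dependent function type. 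So, concretely, the proof of the corollary is essentially a matter of stringing together Theorem \ref{thm:cwf}, Section \ref{sec:semtype}, and Theorem \ref{thm:fin}, observing that the finite subcategory is closed under the constructions used, and checking the (mostly routine, given those inputs) induction on derivations; I would spell out the structural-rule cases briefly and defer to the cited results for the logical rules.
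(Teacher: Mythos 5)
Your proposal is correct and matches the paper's approach: the corollary is stated there without a separate proof, as the immediate consequence of Theorem \ref{thm:cwf}, the type-former constructions of section \ref{sec:semtype}, and Theorem \ref{thm:fin}, which is exactly the assembly you carry out. Your elaboration of the routine induction on derivations and the observation that {$\DTTGame$}$-$ carries no equations beyond the structural core simply makes explicit what the paper leaves implicit.
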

We turn to the issue of soundness of the interpretation of {$\DTTGame$} in the next section.

\section{Soundness, Faithfulness and Completeness}\label{sec:compl}
In this section, we show that the interpretation of $\DTTGame$ in $\Ctxt(\DGame_!)^{\fin1\Sigma\Pi\Id}$ is sound and faithful and, if we limit $\Id$-types to only occur strictly positively and at most once, that it is, additionally, fully complete. The proof of soundness and faithfulness follows from the fact that our game semantics for {$\DTTGame$} factors faithfully over the usual game semantics for simple type theory. The proof of definability proceeds in five steps:
\begin{enumerate}
\item interpreting a dependently typed strategy $f$ on a larger (simply typed) game;
\item for a strict $f$, performing the decomposition of \cite{abramsky2000full} in the simply typed world, as usual, to obtain simply typed strategies $g^j$ and $h_y$ that are called in the execution of $f$;
\item noting that these $g^j$ and $h_y$ can actually be assigned a more precise dependent type, the trick being that we accumulate appropriate negatively occurring $\Id$-types as the decomposition proceeds inductively;
\item observing that the iterated decomposition of strict strategies strictly decreases a positive integer norm and therefore eventually terminates after finitely many steps, producing only non-strict strategies;
\item for a non-strict $f$, noting that $f$ is directly definable using the constructors $b_{i,j}$ for finite type families and \textsf{Ty-Ext}.
\end{enumerate}

\subsection{Soundness and Faithfulness}
We first prove faithfulness of the interpretation of {$\DTTGame$} in our model.

\begin{theorem}[Soundness and Faithfulness]\label{thm:faith}
The interpretation $\sem{-}$ of {$\DTTGame$} in $\Ctxt(\DGame_!)^{\fin1\Sigma\Pi\Id}$ is sound and faithful. The rule \textsf{Ty-Ext} is sound for all type families $x:A\vdash B$ over a type $A$ for which definability holds.
\end{theorem}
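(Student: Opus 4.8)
The plan is to reduce everything to the corresponding statement for the simple type theory {$\STTGame$} interpreted in $\Gamecat_!^{\mathrm{fin}1\times\Rightarrow}$, for which we already have theorem \ref{thm:sttcompl} (full and faithful completeness). The key observation is that our construction of $\Ctxt(\DGame_!)^{\fin1\Sigma\Pi\Id}$ was engineered so that the operation $\tot{-}$, which strips the dependency from a game by remembering only the ambient simply typed game, is functorial: every dependent game $B$ over a context game $[X_i]_i$ has a simply typed shadow $\tot{\sem{B}}$, and every dependently typed strategy (morphism of $\Ctxt(\DGame_!)$) is in particular a strategy on the corresponding game of $\Gamecat_!$. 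First I would make this precise by exhibiting a functor $\smiley(-)\colon \Ctxt(\DGame_!)^{\fin1\Sigma\Pi\Id}\to \Gamecat_!^{\mathrm{fin}1\times\Rightarrow}$ sending $[X_i]_i$ to $\tot{X_1}\&\cdots\&\tot{X_n}$ and a context morphism $[f_j]_j$ to the tuple of underlying simply typed strategies $[f_j]_j$ (recalling from the proof of theorem \ref{thm:cwf} that each $f_j$ is literally a strategy on $\tot{X_1}\Rightarrow\cdots\Rightarrow\tot{X_n}\Rightarrow\tot{Y_j}$). Functoriality is immediate since the composition in $\Ctxt(\DGame_!)$ is defined by the usual co-Kleisli composition of strategies in $\Gamecat_!$, and identities are the derelicted copycats which forget to the copycats of $\Gamecat_!$.

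Next I would check that $\smiley(-)$ commutes, on the nose or up to the canonical isomorphisms $[A,B]\cong[A\&B]$ and $[]\cong[I]$, with the interpretation functors: for every judgement of {$\DTTGame$}, applying $\smiley(-)$ to its interpretation in $\Ctxt(\DGame_!)^{\fin1\Sigma\Pi\Id}$ yields the interpretation in $\Gamecat_!^{\mathrm{fin}1\times\Rightarrow}$ of its image under the syntactic translation $(-)^T$ of section \ref{sec:trans}. This is a structural induction on the derivation, using the explicit descriptions of the semantic type and term formers in sections \ref{sec:semtype}--\ref{sec:grndtype}: $\tot{\Sigma_{[Y_j]_j}[Z_k]_k}$ is a product of the $\tot{Y_j}$'s and $\tot{Z_k}$'s matching $(\Sigma_{x:A}B)^T=A^T\times B^T$; $\tot{\Pi_{[Y_j]_j}[Z]}=\tot{Y_1}\Rightarrow\cdots\Rightarrow\tot{Z}$ matching $(\Pi_{x:A}B)^T=A^T\Rightarrow B^T$; $\tot{\Id_{[Y_j]_j}}=\{\reflind\}_*$ matching $(\Id_A(a,a'))^T=\{\reflind\}$; finite dependent games forget to the flat game $\{b_{i,j}\mid i,j\}_*$ matching the translation of a finite inductive family to $\{b_{i,j}\mid i,j\}$; and one checks directly from their definitions that the underlying strategies of $\lambda$-abstraction, application, pairing, projections, $\refl{-}$, the $\Id$-eliminator and the $\mathsf{case}^{p,q}$-eliminator are exactly (the $\approx$-closures of) the simply typed strategies interpreting the corresponding {$\STTGame$}-constructs — indeed, our interpretation of the finite inductive eliminator in theorem \ref{thm:fin} was \emph{defined} via the simply typed $\mathsf{case}$-construct, and similarly the $\Id$-eliminator $f'_k$ forgets to the simply typed $\mathsf{case}$ on $\{\reflind\}$ with which $(-)^T$ translates it.

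With the commuting triangle $\smiley\circ\sem{-}_{\DTTGame}=\sem{-}_{\STTGame}\circ(-)^T$ in hand, soundness and faithfulness fall out formally. For soundness: suppose $\Gamma\vdash_{\DTTGame}a=b:A$. By the rule $\DTTGame$-\textsf{Eq} (figure \ref{fig:dtteqfinal}) the equational theory of {$\DTTGame$} is generated over {$\DTTGame$}$-$ precisely by pulling back that of {$\STTGame$} along $(-)^T$, together with $\beta$-conversion for $\Id$-types; by corollary \ref{cor:snddttm} the {$\DTTGame$}$-$-rules are already interpreted soundly, the $\Id$-$\beta$-rule $\lbi{\refl{z}}{\refl{z}}{d}=d$ holds by the explicit description of the $\Id$-eliminator (which after Opponent plays $\reflind$ simply continues along $f_k$), and any {$\STTGame$}-equation $a^T=b^T$ holds in $\Gamecat_!^{\mathrm{fin}1\times\Rightarrow}$ by theorem \ref{thm:sttcompl}, hence $\smiley\sem{a}=\sem{a^T}=\sem{b^T}=\smiley\sem{b}$; since $\smiley$ is faithful on the relevant hom-sets (a dependently typed strategy is determined by its underlying set of plays, which is a strategy on the forgetful game) we get $\sem{a}=\sem{b}$. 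For faithfulness of $\sem{-}$: if $\sem{a}=\sem{b}$ then $\sem{a^T}=\smiley\sem{a}=\smiley\sem{b}=\sem{b^T}$, so by faithfulness of the {$\STTGame$}-interpretation (theorem \ref{thm:sttcompl}, which is full \emph{and} faithful) we get $a^T=b^T$ in {$\STTGame$}, whence $a=b$ in {$\DTTGame$} by $\DTTGame$-\textsf{Eq}; the faithfulness of $\smiley$ on hom-sets also transfers, and faithfulness on the level of types follows by corollary \ref{thm:trans}. Finally, soundness of \textsf{Ty-Ext} for a type family $x:A\vdash B$ over a definable $A$: the rule equates $\Pi_{x_1:A_1}\cdots\Pi_{x_n:A_n}B_1$ with $\Pi_{x_1:A_1}\cdots\Pi_{x_n:A_n}B_2$ when $B_1^T=B_2^T$ and all closed instances agree; by definability every element of $\str(\tot{A_1})\times\cdots$ arising as $(\sigma_1,\ldots)$ with $\sigma_i$ definable is of the form $\sem{t_i}$, so the two hypotheses say exactly that $\tot{\sem{B_1}}=\tot{\sem{B_2}}$ and $\sem{B_1}$ and $\sem{B_2}$ agree on all definable tuples of strategies — and since a dependent game in $\Ctxt(\DGame_!)^{\fin1\Sigma\Pi\Id}$ over a definable context is determined by its underlying game together with its values on definable strategy-tuples (this is where definability of $A$ is used), we conclude $\sem{B_1}=\sem{B_2}$ and hence equality of the two $\Pi$-games. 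The main obstacle is the structural-induction bookkeeping in establishing the commuting triangle, particularly matching up the interpretation of the finite-inductive-family eliminator $\mathsf{case}^{p,q}$ and the $\Id$-eliminator with their $(-)^T$-translations, since these are the constructs whose semantic definitions involve the auxiliary $\Id$-types and the $\mathsf{Ty\text{-}Ext}$-style reasoning; everything else is routine once $\smiley(-)$ is set up correctly.
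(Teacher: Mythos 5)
Your proof follows essentially the same route as the paper: the forgetful functor $\smiley(-)=\tot{-}$ into $\Gamecat_!^{\mathrm{fin}1\times\Rightarrow}$, the commuting square with the syntactic translation $(-)^T$, faithfulness of $\smiley(-)$ to transfer soundness and faithfulness back from theorem \ref{thm:sttcompl}, and, for \textsf{Ty-Ext}, the observation that $\Pi$-games depend only on $\tot{B}$ and on the values of $B$ at consistent (hence, by definability of $A$, definable) tuples. The one imprecision is your intermediate claim $\sem{B_1}=\sem{B_2}$ in the \textsf{Ty-Ext} step --- the hypotheses only force agreement on consistent tuples, which need not determine the dependent games as functions on arbitrary strategy tuples, but agreement there is exactly what the definition of the $\Pi$-game consumes, so the intended conclusion still follows.
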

\begin{proof}
We note that we have the following commutative diagram of (non-dashed) functors, where, in the light of corollary \ref{cor:snddttm}, soundness amounts to arguing that our interpretation of {$\DTTGame$}$-$ factors over {$\DTTGame$} (denoting the factorisation with the dashed functor)
\begin{diagram}{\DTTGame-} &  & &\\
& {\DTTGame} \rdOnto(1,1) \rdTo(1,3)_{(-)^T} \rdTo(3,1)^{\sem{-}} & \rDotsto_{\sem{-}} & \Ctxt(\DGame_!)^{\fin1\Sigma\Pi\Id}\\
& \dInto_{(-)^T} & & \dInto_{\tot{ -}}\\
& {\STTGame} & \rInto{\sem{-}}& \Gamecat_!^{\fin1\times\Rightarrow}.
\end{diagram}
Here, the top and bottom sides of the outer quadrangle, respectively are the interpretation functor of {$\DTTGame$}$-$ in our model, which exists according to corollary \ref{cor:snddttm}, and the usual interpretation of simple type theory with finite ground types (or, a total finitary {\PCF}, if you will) in the cartesian category of games and (winning) strategies of \cite{abramsky2000full}. Recall that the latter is (full and) faithful according to theorem \ref{thm:sttcompl}. The left side of the inner rectangle is the faithful (non-full) functor defined in section \ref{sec:trans}. Note that faithfulness of the interpretation of {$\DTTGame$} automatically follows from the faithfulness of these two functors, if we can prove soundness. Finally, the right side of either quadrangle is the semantic equivalent of this syntactic translation, which we define next.

We have an inductively defined translation $\Ctxt(\DGame_!)\ra{\tot{-}}\Gamecat_!$:
\begin{align*}
\tot{ [A_i]_{1\leq i\leq m}}&:=\bigwith_{1\leq i\leq m} \tot{ A_i}\\
\tot{[]}&:= I.
\end{align*}
Note that this also satisfies
\begin{mccorrection}
\begin{align*}
\tot{\Pi_{A_1}\cdots\Pi_{A_n}B}&=\tot{ A_1}\Rightarrow\cdots\Rightarrow \tot{ A_n}\Rightarrow\tot{ B}\\
\tot{\Id_C}&=\{\reflind\}_*.
\end{align*}
\end{mccorrection}

$\smiley$ automatically extends to a faithful (non-full) functor by interpreting the winning dependently typed strategies on $A$ as simply typed strategies on $\tot{ A}$, which are obviously also winning as we never restrict Opponent moves in our games of dependent functions. Faithfulness of this functor together with commutativity of the outer quadrangle gives us that the dashed arrow is a (unique) well-defined functor, i.e. we have a sound interpretation of {$\DTTGame$} in $\Ctxt(\DGame_!)^{\fin1\Sigma\Pi\Id}$.

We also note that \textsf{Ty-Ext} has a sound interpretation in our model for types $B$ depending on a type $A=[A_i]_i$ for which all morphisms are definable. Indeed, it follows that $\Pi_{A}B_1 = \Pi_{A}B_2$ if $\tot{B_1}=\tot{B_2}$ and for all $[]\ra{t} [A_i]_{1\leq i\leq n}$ we have that $B_1(t)=B_2(t)$. The reason is that the definition of $\Pi$-games only relies on $\tot{B}$ and the evaluation of $B_i$ on these consistent tuples $t$. If all such $t$ are definable, \textsf{Ty-Ext} follows.
\end{proof}

\subsection{Full Completeness}
Next, we first prove two technical lemmas, which encompass steps 2. and 3. and, respectively, step 4. in the definability proof. We use the notation $[\Id]_{[A_i]_i}([a_i]_i,[a_i']_i):=[\Id_{A_i}(a_i,a'_i)]_i$.

\begin{lemma}[Decomposition] \label{lem:decomp} Let us suppose we have a context game $[A_i]_{i\leq n}$ in $\Ctxt(\DGame_!)^{\fin1\Sigma\Pi\Id}$ with $A_i=\Pi_{B^{i,1}}\ldots\Pi_{B^{i,q_i}}{Y^i_*}=\Pi_{[{B^{i,j}}]_{j}}{Y^i_*}\{c^i\}$ where ${Y^i_*}$ is a finite inductive dependent game depending on the context game $[C_l^i]_l$ and $[A_k]_{k<i}.[B^{i,j}]_j\ra{c^i}[C_l^i]_l$. Let us say ${Y^i_*}$ has constructors $y$ in fibre ${Y^i_*}\{[c_y^i]_i\}$.\\
\\
Then, it follows that, when given a strategy $f$ that does not visit $[\Id]_{[D_k]_k}$, $$f\in\str(\Osat(\Pi_{[A_i]_i}\Pi_{[\Id]_{[D_k]_k}([d_k^0]_k,[d_k]_k)}{X_*})),$$
with $\str(\tot{A_1}\&\cdots\&\tot{A_n})\ra{{X}}\mathcal{P}(\tot{X})$ a function, where $\tot{X}$ is some finite set, and context morphisms $[d_k]_k,[d_k^0]_k : [A_i]_i\ra{}[D_k]_k$, we can decompose it (uniquely) as follows:
\begin{itemize}
\item if $f$ is non-strict, then $f=[A_i]_i\ra{}[]\ra{x}[\tot{ {X_*}}]$ for some $x\in \bigcup\funim( X)$ such that $x\in {X_*}([\tau_i]_i)$ for all $[]\ra{[\tau_i]_i}[A_i]_i$ such that $[d_k]_k\{[\tau_i]_i\}=[d_k^0]_k\{[\tau_i]_i\}$;
\item if $f$ is strict, then $f=\mathbf{C}_i(g^1,\ldots,g^{q_i},(h_y\;|\; y\in \bigcup \funim(Y^i)))$ where $\mathbf{C}_i$ embodies a $\case$-construct that we shall define in the proof,
\end{itemize}
where $$g^j\in\str(\Osat(\Pi_{[A_i]_i}\Pi_{[\Id]_{[D^k]_k}([d_k^0]_k,[d_k]_k)}B^{i,j}\{\langle [\der_{A_l}]_{l<i},[g^{j'}]_{j'<j}\rangle\}))$$ and $$h_y\in\str(\Osat(\Pi_{[A_i]_i}\Pi_{[\Id]_{[D^k]_k.[C_l^i]_l.[{Y^i_*}]}(\langle[d_k^0]_k,[c_y^i],[y]\rangle,\langle[d_k]_k,[\widetilde  c^i ],[\phi]\rangle)}{X_*})),$$ where $\widetilde c^i:=\langle [\der_{A_{i'}}]_{i'<i},[g_j]_j\rangle;c^i$ and $\phi:=\lambda_{[\tau_k]_k}\tau_i\{[g^j]_j\{[\tau_k]_k\}\}$ (and we write $\funim(Y^i)$ for the image of $Y^i$ and $\lambda_{[\tau_k]_k}$ for the obvious semantic $\lambda$-abstraction). Here, neither $g^j$ nor $h_y$ visits the $\Id$-type.
\end{lemma}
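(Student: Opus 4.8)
The statement is really a piece of bookkeeping on top of the classical decomposition lemma for simply typed games (\cite{abramsky2000full}, \cite{Abramsky00axiomsfor}): we take the usual decomposition of a simply typed strategy on $\tot{\Pi_{[A_i]_i}\Pi_{[\Id]_{[D_k]_k}([d_k^0]_k,[d_k]_k)}{X_*}}$ and argue that each component strategy can be assigned the finer dependent type claimed. The plan is to proceed by the obvious case split on whether the first $P$-move of $f$ is an answer in the flat game $\tot{{X_*}}$ (the non-strict case) or a question opening one of the $A_i$-components (the strict case).

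First I would handle the non-strict case. If $f$ is non-strict it answers the initial $O$-move immediately with some token $x$, so as a simply typed strategy $f$ factors as $[A_i]_i\to [\;]\xrightarrow{x}[\tot{{X_*}}]$; what must be checked is that $x$ respects the dependency, i.e. $x\in{X_*}([\tau_i]_i)$ whenever $[\tau_i]_i$ together with the trivial identity witnesses forms a consistent tuple, which forces $[d_k]_k\{[\tau_i]_i\}=[d_k^0]_k\{[\tau_i]_i\}$. This is exactly the content of theorem \ref{thm:reppi}: since $f$ is a strategy on the game of dependent functions, its $P$-move $x$ cannot exclude any fibre that Opponent's play (which has been empty apart from the initial move and whatever was played in the $\Id$-components, which $f$ does not visit) has left open; unwinding the definition of $P_{\Osat(\Pi_{[A_i]_i}\Pi_{[\Id]_{\ldots}}{X_*})}$ gives precisely the stated condition on $x$.

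For the strict case I would run the standard decomposition: $f$'s first $P$-move opens $\tot{A_i}$ for some $i$ by asking for its head occurrence, and by history-freeness and well-bracketing the subsequent play splits into the interaction in the arguments $B^{i,1},\ldots,B^{i,q_i}$ of $A_i$ (yielding strategies $g^1,\ldots,g^{q_i}$) and, once $A_i$ has answered with a constructor $y$ of ${Y^i_*}$, a continuation strategy $h_y$; the combinator $\mathbf{C}_i$ is the evident case-construct that reassembles these, playing the copycat in $[A_i]_i$ until the answer $y$ arrives and then branching to $h_y$. The non-routine part, and the heart of the lemma, is checking that these projected strategies carry the refined dependent types: for $g^j$ this is essentially theorem \ref{thm:reppi} again (the argument $g^j$ to $c^i$ and to $B^{i,j}$ cannot narrow any fibre since it is played in a negative position of the game of dependent functions, and the accumulated identity hypotheses $[\Id]_{[D_k]_k}([d_k^0]_k,[d_k]_k)$ are carried along unchanged because $f$ never visits the $\Id$-type, so neither does $g^j$). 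For $h_y$ the key move is the enrichment of the context by the identity types $[\Id]_{[D^k]_k.[C_l^i]_l.[{Y^i_*}]}(\langle[d_k^0]_k,[c_y^i],[y]\rangle,\langle[d_k]_k,[\widetilde c^i],[\phi]\rangle)$: once $A_i$ has answered with the specific constructor $y$, the fibre of ${Y^i_*}$ is pinned down, so Opponent's having played $y$ is semantically the same data as supplying a proof that the relevant index equals $c_y^i$ and the output equals $y$ — this is the bijection between consistent tuples over an extended context used already in the interpretation of $\Id$-E in section \ref{sec:semtype}. I would spell out that $\widetilde c^i=\langle[\der_{A_{i'}}]_{i'<i},[g_j]_j\rangle;c^i$ and $\phi=\lambda_{[\tau_k]_k}\tau_i\{[g^j]_j\{[\tau_k]_k\}\}$ are precisely the "actual" index and output that arise from feeding the decomposed arguments into $c^i$ and into the original $f$, so that the augmented $\Id$-hypothesis is exactly the one whose presence lets $h_y$ type-check as a strategy on the game of dependent functions into ${X_*}$.

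The main obstacle I anticipate is the third bullet of the informal five-step outline made precise here: verifying, against the explicit description of $P_{\Osat(\Pi_{X_1}\cdots\Pi_{X_n}X_{n+1})}$ in theorem \ref{thm:reppi}, that the accumulated negatively-occurring $\Id$-types are exactly strong enough that every $P$-move of $h_y$ that was legal for $f$ remains legal — i.e. that no fibre which $h_y$'s context leaves open is one that $f$ would have excluded, and conversely. This is a somewhat delicate chase through which consistent tuples $[\tau_i]_i$ (now also involving the identity witnesses) induce which tuples of strategies on the $B^{i,j}$ and on ${Y^i_*}$, using that $g^j$ and $h_y$ inherit causal consistency, determinacy, well-bracketing, history-freeness and the winning condition from $f$ by the usual arguments, so that they are genuine morphisms in $\Ctxt(\DGame_!)^{\fin1\Sigma\Pi\Id}$. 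Uniqueness of the decomposition is inherited directly from uniqueness in the simply typed decomposition lemma, since $\tot{-}$ is faithful (theorem \ref{thm:faith}).
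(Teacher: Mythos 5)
Your proposal follows essentially the same route as the paper's proof: apply the simply-typed AJM decomposition to $\tot{f}$ (the $\bot$ case being excluded by winningness), then re-type the components $x$, $g^j$ and $h_y$ against the explicit characterisation of plays in theorem \ref{thm:reppi}, accumulating the negatively occurring $\Id$-types for $h_y$, and inherit uniqueness from faithfulness of $\tot{-}$. The only substantive points your sketch leaves implicit are the device the paper uses to type the $g^j$ (complete induction on $j$, extending any play of $g^j$ to a play of $f$ by interleaving finite plays of the earlier $g^{j'}$ that fully reveal $\langle\tau\rangle ; g^{j'}$), and a small slip in your parenthetical justification — the $B^{i,j}$ occur \emph{positively} in the total game, which is exactly why Player's moves there are the fibre-constrained ones — but neither affects the viability of the plan.
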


\begin{proof}
Note that we can consider $\tot{ f}$ as a strategy on $\tot{  A_1}\Rightarrow\cdots\Rightarrow \tot{A_n}\Rightarrow\tot{   {X_*}}$ as $f$ does not visit the $\Id$-type. The decomposition lemma \cite{abramsky2000full,Abramsky00axiomsfor} for the game semantics of (finitary) \PCF{}\mccorrect{ }now gives us three cases:
\begin{itemize}
\item $\tot{ f}=\bot$
\item $\tot{ f}=\bigwith_i \tot{ {A_i}}{}\ra{}I\ra{x}\tot{ {X_*}}$ for some $x\in \bigcup \funim(X)$;
\item $\tot{ f}= \mathbf{C}_i'(g'^1,\ldots,g'^{q_i},(h_y'\;|\; y\in \bigcup \funim(Y^i))),
$ for a (unique) $1\leq i\leq n$ and (unique) $g'^j \in\str( \tot{A_1}\Rightarrow\cdots\Rightarrow\tot{A_n}\Rightarrow \tot{ B^{i,j}})$ and $h_y '\in\str(\tot{A_1}\Rightarrow \cdots \Rightarrow\tot{A_n}\Rightarrow\tot{ {X_*}{}})$, where (writing $\pi^i$ for the derelicted projection to the $i$-th component, $\mathsf{ev}$ for the obvious evaluation morphism, and denoting the semantic case construct with $\sem{\case}$)
$$\mathbf{C}_i'(g'^1 ,\ldots,g'^{q_i} ,(h_y'\;|\; y\in \bigcup \funim(Y^i))):=$$
\resizebox{.95\textwidth}{!}{
\begin{diagram}
   &     &!\bigwith_i \tot{A_i}      &           &         &     \rTo^{\id_{!\bigwith_i \tot{A_i}}}  &                      & &!\bigwith_i{\tot{ A_i}} & &\\
!\bigwith_i{\tot{ A_i}} &\rTo^{\mathsf{diag}_{\bigwith_i{\tot{ A_i}}}^\dagger} &\bigotimes &         &!\bigwith_i{\tot{ A_i}}       &     \rTo^{\langle g'^1,\ldots,g'^{q_i}\rangle^\dagger}  &!\bigwith_{j} \tot{  B^{i,j}} &    & \bigotimes &\rTo^{\sem{\case}_{\tot{ {Y^i_*}},\tot{ {X_*}}}(-,[{h_y'}]_y)} &\tot{ {X_*}}.\\
   &     &!\bigwith_i{\tot{ A_i}}      &\rTo^{\mathsf{diag}_{\bigwith_i{\tot{ A_i}}}^\dagger}       & \bigotimes &    &          \bigotimes  & \rTo^{\mathsf{ev}}& \tot{ {Y^i_*}} & &\\
   &     &        &           &!\bigwith_i{\tot{ A_i}}       &      \rTo^{\pi^i} &\tot{ A_i}                   & & & &
\end{diagram}}
\end{itemize}
Note that the first case cannot occur as $f$ is winning.\\
\\
For the second case, due to the restriction on $P$-moves in $\Pi$-games and the interpretation of $\Id$-types, a non-strict $f$ needs to respond to $*$ with a move in $\bigcup\funim( X)$ such that $x\in {X_*}([\tau_i]_i)$ for all $[]\ra{[\tau_i]_i}[A_i]_i$ such that $[d_k]_k\{[\tau_i]_i\}=[d_k^0]_k\{[\tau_i]_i\}$.\\
\\
For the third case, note the following.
\begin{itemize}
\item $g'^j=\tot{ g^j}$ for (unique) $$g^j\in \str(\Osat(\Pi_{[{A}_i]_i}\Pi_{[\Id]_{[D^k]_k}([d_k^0]_k,[d_k]_k)}B^{i,j}\{\langle [\der_{A_l}]_{l<i},[g^{j'}]_{j'<j}\rangle\})).$$ This will follow once we show that $$((g^j)^\dagger)^\dagger\in \str(\Osat(\Pi_{[{A}_i]_i}\Pi_{[\Id]_{[D^k]_k}([d_k^0]_k,[d_k]_k)}!!B^{i,j}\{\langle [\der_{A_l}]_{l<i},[g^{j'}]_{j'<j}\rangle\})),.$$ The argument will proceed by complete induction on $j$. Assume the claim holds for $g^k$ with $k< j$. We  show it also holds for $g^j$.

We need to show that for $s^{j}=s'ab\in ((g^j)^\dagger)^\dagger$, for any $\overline{s'ab\upharpoonright_{!\tot{ A_1}}}\subseteq\tau_1\in\str(A_1()),\ldots,\overline{s'ab\upharpoonright_{!\tot{ A_n}}}\subseteq\tau_n\in\str(A_n(\tau_1,\ldots,\tau_{n-1}))$ s.t. $[
\tau_i]_i;[d^0_k]_k=[\tau_i]_i;[d_k]$, $s'a\in P_{A_1()\Rightarrow \cdots\Rightarrow A_n(\tau_1,\ldots,\tau_{n-1})\Rightarrow !!B^{i,j}(\tau_1,\ldots,\tau_{i-1},\langle \tau_1,\ldots,\tau_n\rangle;g^1,\ldots,\langle \tau_1,\ldots,\tau_n\rangle;g^{j-1})}$ implies that $s'ab\in P_{A_1()\Rightarrow \cdots\Rightarrow A_n(\tau_1,\ldots,\tau_{n-1})\Rightarrow !!B^{i,j}(\tau_1,\ldots,\tau_{i-1},\langle \tau_1,\ldots,\tau_n\rangle;g^1,\ldots,\langle \tau_1,\ldots,\tau_n\rangle;g^{j-1})}$.

Let us assume that the hypothesis of this implication is true. Now, note that $s^j\in ((g^{j})^\dagger)^\dagger$ extends to $tab=*_{{X_*}}(0,*)_{!{Y^i_*}}s^1\cdots s^{j-1}s^j\in f$ for any $s^k\in ((g^k)^\dagger)^\dagger$, for $1\leq k\leq j-1$. We can choose $s^k\in \langle \tau_1,\ldots,\tau_n\rangle||((g^k)^\dagger)^\dagger $ such that $\bigcup\overline{\overline{s^k\upharpoonright_{!!\tot{ B^{i,k}}}}}=\langle \tau_1,\ldots,\tau_n\rangle;g^k$. (We write $\overline{\overline{s}}$ to indicate we apply $\overline{(-)}$ first to $s$ and then again to each member of the resulting set of plays.) Note that we can do this as $\langle \tau_1,\ldots,\tau_n\rangle;g^k $ is finite as a partial function on moves. In fact, $$s^k\in P_{A_1()\Rightarrow \cdots\Rightarrow A_n(\tau_1,\ldots,\tau_{n-1})\Rightarrow !! B^{i,k}(\tau_1,\ldots,\tau_{i-1},\langle \tau_1,\ldots,\tau_n\rangle;g^1,\ldots,\langle \tau_1,\ldots,\tau_n\rangle;g^{k-1})},$$ as a consequence of our induction hypothesis.

Then, as $tab\in f$ is a play in $$\Osat(\Pi_{A_1}\cdots\Pi_{A_{i-1}}\Pi_{(\Pi_{B^{i,1}}\cdots \Pi_{B^{i,q_i}}{Y^i_*})}\Pi_{A_{i+1}}\cdots\Pi_{A_n}\Pi_{[\Id]_{[D^k]_k}([d_k^0]_k,[d_k]_k)}{X_*}),$$ we have that for all $$\overline{tab\upharpoonright_{!\tot{ A_1}}}\subseteq\tau_1'\in\str(A_1()),\ldots,\overline{tab\upharpoonright_{!\tot{ A_n}}}\subseteq\tau_n'\in\str(A_n(\tau_1',\ldots,\tau_{n-1}'))$$ s.t. $[
\tau_i']_i;[d^0_k]_k=[\tau_i']_i;[d_k]$, $ta\in P_{A_1()\Rightarrow \cdots\Rightarrow A_n(\tau_1',\ldots,\tau_{n-1}')\Rightarrow {X}_*(\tau_1',\ldots,\tau_n')}$ implies that also $tab\in P_{A_1()\Rightarrow \cdots\Rightarrow A_n(\tau_1',\ldots,\tau_{n-1}')\Rightarrow {X}_*(\tau_1',\ldots,\tau_n')}$. Note that by construction of $tab$, $[\tau_i]_i$ is one such $[\tau'_i]_i$ and is in fact the only one we are interested in, so we simply write $[\tau_i]_i$ for both. Note that the hypothesis of the implication under consideration actually holds by our assumptions about $s'a$ and $s^k$.  Therefore, its conclusion $tab\in P_{A_1()\Rightarrow \cdots\Rightarrow A_n(\tau_1,\ldots,\tau_{n-1})\Rightarrow {X}_*(\tau_1,\ldots,\tau_n)}$ follows.

Now, it follows immediately from the restriction on plays $tab$ in
$$\Osat(\Pi_{A_1}\cdots\Pi_{A_{i-1}}\Pi_{(\Pi_{B^{i,1}}\cdots \Pi_{B^{i,q_i}}{Y^i_*})}\Pi_{A_{i+1}}\cdots\Pi_{A_n}\Pi_{[\Id]_{[D^k]_k}([d_k^0]_k,[d_k]_k)}{X_*})$$
that if $((g^j)^\dagger)^\dagger$ makes the move $b$ in $!\tot{ A_l}$, then it also satisfies the rules of $\Osat(\Pi_{[{A}_i]_i}\Pi_{\Id_{[D^k]_k}([d_k^0]_k,[d_k]_k)}!!B^{i,j}\{\langle [\der_{A_l}]_{l<i},[g^{j'}]_{j'<j}\rangle\})$. The interesting case is when $b$ is a move in $!!B^{i,j}$. Let us presume that Opponent has not been naughty. \mccorrect{(Otherwise, anything goes.)} To deal with this case, we note that the restriction on plays $tab$ in $$\Osat(\Pi_{A_1}\cdots\Pi_{A_{i-1}}\Pi_{(\Pi_{B^{i,1}}\cdots \Pi_{B^{i,q_i}}{Y^i_*})}\Pi_{A_{i+1}}\cdots\Pi_{A_n}\Pi_{[\Id]_{[D^k]_k}([d_k^0]_k,[d_k]_k)}{X_*})$$ combined with the definition of $(\Pi_{B^{i,1}}\cdots\Pi_{B^{i,q_i}}{Y^j}_*)(\tau_1,\ldots,\tau_{i-1})$ \mccorrect{gives us} that there exist $\bigcup\overline{\overline{tab\upharpoonright_{!!\tot{ B^{i,1}}}}}\subseteq\sigma^1\in\str(B^{i,1}(\tau_1,\ldots,\tau_{i-1})),\ldots,\bigcup \overline{\overline{tab^\upharpoonright_{!!\tot{ B^{i,q_i}}}}}\subseteq\sigma^{q_i}\in\str(B^{i,q_i}(\tau_1,\ldots,\tau_{i-1},\sigma^1,\ldots,\sigma^{q_i-1}))$, such that $$tab\upharpoonright_{!\tot{ A_i}}\in P_{ !(B^{i,1}(\tau_1,\ldots,\tau_{i-1})\Rightarrow\cdots\Rightarrow B^{i,q_i}(\tau_1,\ldots,\tau_{i-1},\sigma^1,\ldots,\sigma^{q_i-1})\Rightarrow {Y^i}_*(\tau_1,\ldots,\tau_{i-1},\sigma^1,\ldots,\sigma^{q_i}) )}$$ so, in particular, $tab\upharpoonright_{!\tot{ A_i}}\upharpoonright_{!!\tot{ B^{i,{j}}}}\in P_{ !!B^{i,j}(\tau_1,\ldots,\tau_{i-1},\sigma^1,\ldots,\sigma^{j-1})}$.

To complete the argument, we note that by construction of $t$, we have that $\bigcup\overline{\overline{tab\upharpoonright_{!!\tot{ B^{i,k}}}}}=  \langle \tau_1,\ldots,\tau_n\rangle ;g^k$, for $1\leq k\leq j-1$. We conclude that $s'ab\upharpoonright_{!!B^{i,j}}\in  P_{!!B^{i,j}(\tau_1,\ldots,\tau_{i-1},\langle \tau_1,\ldots,\tau_n\rangle;g^1,\ldots,\langle \tau_1,\ldots,\tau_n\rangle;g^{j-1})}$.

\item $h'_y=\tot{ h_y}$ for (unique) $$h_y\in\str(\Osat(\Pi_{{A}_1}\cdots\Pi_{A_n}\Pi_{[\Id]_{[D^k]_k.[C_l^i]_l.[{Y^i_*}]}(\langle[d_k^0]_k,[c_y^i],[y]\rangle,\langle[d_k]_k,[\widetilde  c^i],[\phi]\rangle)}{X}_*)).$$ Indeed, note that $*s\in h_y$ iff $*(0,*)t(0,y)s\in f$ for some $*ty\in \phi\in \str(\Osat(\Pi_{{A}_1}\cdots\Pi_{A_n}\Pi_{[\Id]_{[D^k]_k}([d_k^0],[d_k]_k)}{Y^i_*}\{\langle [\der_{A_l}]_{l<i},[g^{j}]_{j}\rangle\}))$. It then follows that $*s\in \Osat(\Pi_{A_1}\cdots\Pi_{A_n}\Pi_{[\Id]_{[D^k]_k.[C_l^i]_l.[{Y^i_*}]}(\langle[d_k^0]_k,[c_y^i],[y]\rangle,\langle[d_k]_k,[\widetilde c^i],[\phi]\rangle)}{X}_*)$ by the following observation. Observing that $\phi\{[\overline{t\upharpoonright_{!A_i}}]_i\}=y$, note that, for winning $[\tau_i]_i\geq [\overline{*s\upharpoonright_{!A_i}}]_i$, we also have that $[\tau_i]_i\geq [\overline{*(0,*)t(0,y)s\upharpoonright_{!A_i}}]_i$ for some $*ty\in\phi$ iff $\phi\{[\tau_i]_i\}\geq y$ i.e. $\phi\{[\tau_i]_i\}=y$ as $y$ is a maximal strategy on $\tot{ {Y_*}}$. (Indeed, we can take $*ty\in \langle \tau_1,\ldots,\tau_n\rangle||\phi$.) It automatically then follows that also $\widetilde{c^i}\{[\tau_i]_i\}=c_y^i$\mccorrect{, by the type of $\langle [\widetilde{c^i}],[\phi]\rangle$ and the disjointness of fibres}.

\item We can now note that $f=\mathbf{C}_i(g^1,\ldots,g^{q_i},(h_y\;|\; y\in \bigcup \funim(Y^i)))$, where $\mathbf{C}_i$ is defined exactly as $\mathbf{C}_i'$ but using instead the dependently typed substitution and the dependently typed construct $\sem{\case^{p,q}}_{{Y^i_*}\{\langle [\der_{A_l}]_{l<i},[g^j]_{j\leq q_i}\rangle\},{X_*}}$. (That is, $\mathbf{C}_i$ and $\mathbf{C'}_i$ are the same, except for typing.) Note that $\mathsf{ev}(\pi^i,\langle g^1,\ldots,g^{q_i}\rangle)$ and $h_y$ feed into this case construct.

\item Finally, to see that $g^j$ and $h_y$ define winning strategies, we note that their infinite plays are Player-wins as they arise as labelled subtrees of $f$ which is winning. We need to verify that they are total. This also follows immediately from the totality of $f$ together with the fact that Opponent moves are, by definition, not restricted in ($O$-saturated) games of dependent functions. Indeed, if $s\in g^j$ and $sa$ is a valid extension of the play, then $*(0,*)sa$ is a valid extension of $*(0,*)s\in f$ to which $f$ hence $g^j$ has a response $b$. Similarly, if $*s\in h_y$ and $*sa$ is a valid extension of the play, then \mccorrect{there exists a suitable $t$ such that} $*(0,*)t(0,y)sa$ is a valid extension of $*(0,*)t(0,y)s\in f$ to which $f$ has a response $b$, being a total strategy. Therefore, $*sab\in h_y$.
\end{itemize}

\end{proof}

\begin{lemma}[Norm for dependent strategies] \label{lem:norm} Let $[A_i]_i\ra{[d_k]}[D_k]_k$ and ${X_*}$ as in the previous lemma. Let us write $E:=\Pi_{[A_i]_i}\Pi_{[\Id]_{[D^k]_k}([d_k^0],[d_k]_k)}{X_*}$. Then, we have a norm $||-||_E:\str(\Osat(E))\ra{}\mathbb{N}$ (we sometimes leave out the subscript $E$) for any such $E$  such that $f=\mathbf{C}_i(g^1,\ldots,g^{q_i},(h_y\;|\; y\in\bigcup\funim(Y^i)))$ implies that
$$||g^i||,||h_y||< ||f||.$$
\end{lemma}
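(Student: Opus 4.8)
The plan is to define a strong normalisation style norm on dependently typed strategies that tracks the "size" of the computation a strategy can perform, essentially by counting the total number of moves (or threads) that can possibly be visited, and then to observe that the decomposition of Lemma~\ref{lem:decomp} strictly decreases it. The cleanest route is to piggyback on the simply typed situation: a dependently typed strategy $f\in\str(\Osat(E))$ gives rise to a simply typed winning strategy $\tot{f}$ on $\tot{E}=\tot{A_1}\Rightarrow\cdots\Rightarrow\tot{A_n}\Rightarrow\{\reflind\}_*\Rightarrow\cdots\Rightarrow\{\reflind\}_*\Rightarrow\tot{X_*}$, and for the hierarchy of \emph{finite} games built from finite flat games by $1$, $\times$ and $\Rightarrow$ one can assign to each winning history-free strategy a finite ordinal/integer size. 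Concretely, I would set $\|f\|_E$ to be the (finite) supremum over all finite plays $s\in\tot{f}$ of the length of $s$, i.e. the depth of the (necessarily well-founded, since $f$ is winning and the games are finite) tree of plays of $\tot{f}$; since infinite plays are always Opponent's responsibility and the underlying finite games have no infinite $P$-branches within a single thread, this is a well-defined natural number. Alternatively, and perhaps more robustly for the inductive argument, I would define $\|f\|_E$ by structural recursion following the shape of the decomposition itself — $\|f\|:=0$ if $f$ is non-strict, and $\|f\|:=1+\max(\|g^1\|,\dots,\|g^{q_i}\|,\max_{y}\|h_y\|)$ if $f$ is strict with decomposition $\mathbf{C}_i(g^1,\dots,g^{q_i},(h_y\mid y))$ — and then separately check this recursion is well-founded (terminates) precisely because $f$ is winning.

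First I would fix, for each $E$ of the stated form, the statement that $\tot{f}$ lies in $\Gamecat_!^{\fin1\times\Rightarrow}$ (the $\Id$-games $\{\reflind\}_*$ are finite flat games, and all the $A_i$, being built by $1,\Sigma,\Pi,\Id$ from finite dependent games, have $\tot{A_i}$ in the finite $1\times\Rightarrow$-hierarchy), so that the standard decomposition of \cite{abramsky2000full,Abramsky00axiomsfor} applies and, crucially, the iterated decomposition of a winning strategy terminates — this is exactly the fact cited in the proof of Theorem~\ref{thm:sttcompl}. I would then \emph{define} $\|f\|_E$ as the number of steps needed for the iterated decomposition of $\tot{f}$ (equivalently $f$, since $\tot{-}$ is faithful and the two decompositions match up by Lemma~\ref{lem:decomp}) to bottom out entirely in non-strict leaves; finiteness of this number is precisely the termination statement just recalled. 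With this definition the inequality $\|g^i\|,\|h_y\|<\|f\|$ for $f=\mathbf{C}_i(g^1,\dots,g^{q_i},(h_y\mid y))$ is immediate: the decomposition tree of each $g^i$ and each $h_y$ is a proper subtree of that of $f$ (it is obtained by removing at least the root decomposition step and the initial protocol of querying the $\Id$-types), so its height is strictly smaller.

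The one point that needs genuine care — and what I expect to be the main obstacle — is the interaction between the \emph{dependent} type $E$ and the simply typed norm: I must make sure that the decomposition in Lemma~\ref{lem:decomp} really does correspond, step for step, to the simply typed decomposition of $\tot{f}$, so that "height of decomposition tree" is both well-defined and additive along the decomposition. The subtlety is that Lemma~\ref{lem:decomp} accumulates extra negatively occurring $\Id$-types in the types of the $h_y$ (the $\Pi_{[\Id]_{[D^k]_k.[C_l^i]_l.[Y^i_*]}(\cdots)}$ factor), so the target contexts $E$ grow rather than shrink as we recurse; I need the norm to be insensitive to this growth, which is exactly why counting decomposition steps (rather than, say, the number of moves in the context game) is the right choice — adding $\Id$-type factors, whose fibres are all either $\{\reflind\}_*$ or $\emptyset_*$, contributes no branching and hence no extra decomposition depth. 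I would verify this by checking that $\tot{\mathbf{C}_i(g^1,\dots,g^{q_i},(h_y\mid y))}=\mathbf{C}_i'(\tot{g^1},\dots,\tot{g^{q_i}},(\tot{h_y}\mid y))$ (this is essentially spelled out in the proof of Lemma~\ref{lem:decomp}: $\mathbf{C}_i$ and $\mathbf{C}_i'$ differ only in typing, and the initial $\Id$-protocol of $f$ is a copycat segment consuming one move that is stripped in passing to $\tot{f}$), so the simply typed decomposition of $\tot{f}$ has the decompositions of $\tot{g^i},\tot{h_y}$ as its immediate children, and conclude by induction on the (finite, by termination) height.
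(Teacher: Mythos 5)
There is a genuine gap, and it sits exactly where you locate ``the main obstacle'' --- except that the obstacle is not the accumulation of $\Id$-types (which, as you correctly note, is harmless), but the well-definedness of the norm itself. Your primary definition, the supremum of the lengths of the finite plays of $\tot{f}$, is not a natural number in general: at any type of order $\geq 2$ Opponent may open unboundedly many threads of a negatively occurring $!$-game, so even a winning strategy such as the interpretation of $\lambda_{f}f(\ttt)$ on $(\BInd\Rightarrow\BInd)\Rightarrow\BInd$ has finite plays of unbounded length. The clause ``no infinite $P$-branches within a single thread'' identifies the right phenomenon, but your definition does not restrict attention to a single thread. The paper's proof supplies precisely this missing step: it composes with a generalized dereliction $\sem{T}\ra{\mathsf{gder}_{\sem{LT}}}\sem{LT}$ into the linearised type $LT$ obtained by deleting every positive occurrence of $!$, proves $W_{\sem{LT}}=\emptyset$ by a mutual induction on classes $\allwin$ and $\nowin$ of formulas, concludes that $\sigma;\mathsf{gder}_{\sem{LT}}$ has only finitely many plays, all finite, and takes the \emph{sum} of their lengths as the norm; strictness of the inequality then follows because plays of $g^j$ and $h_y$ properly extend to plays of $f$.

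Your fallback definitions do not repair this, because they are circular. Defining $\|f\|$ by recursion on the decomposition ($1+\max$ of the children) and ``separately checking the recursion is well-founded because $f$ is winning'', or defining $\|f\|$ as the number of steps until the iterated decomposition bottoms out, both presuppose that the iterated decomposition terminates --- which is exactly what this norm is introduced to prove (it is step 4 of the definability argument and is what drives the induction in Lemma \ref{lem:define}). Nor can that termination be imported from Theorem \ref{thm:sttcompl} or from \cite{abramsky2000full}: the standard decomposition arguments there terminate because they are applied to \emph{compact} strategies in a partial setting, and the paper's own proof of Theorem \ref{thm:sttcompl} explicitly defers the termination of the iterated decomposition of \emph{winning} strategies back to the reasons given in this lemma. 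So the linearisation step and the $\allwin$/$\nowin$ analysis are not an optional refinement; they are the mathematical content of the lemma.
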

\begin{proof}
We define a norm $||{-}||_{\tot{E}}:\str(\tot{E})\ra{}\mathbb{N}$ for games $\tot{E}$ of the $I\&\Rightarrow$-hierarchy over finite flat games and extend this to a norm on $\str(\Osat(E))$ by precomposition with the injection $\str(\Osat(E))\ra{\tot{-}}\str(\tot{E})$. The idea behind this norm is that winning strategies on games of the $I\&\Rightarrow$-hierarchy over finite flat games are finite objects in the sense that they only contain finitely many finite plays if we do not allow Opponent to open multiple threads of the same game -- remember that infinite plays in winning strategies are always due to Opponent opening an infinite number of threads of the same game.

Inductively, if $T$ is a type of {$\STTGame$} (i.e. formed from finite ground types $G$ by the grammar $T\; ::=\; G\;|\; \top\;|\; \&\;|\; \Rightarrow$), we define a type $LT$ of intuitionistic linear logic over finite types (i.e. formed from finite ground types $G$ by the grammar $LT \;::=\; G\;|\; !LT\;|\; LT \multimap LT \;|\; LT\otimes LT \;|\; LT\& LT\; |\; I \;|\; \top$, where we note that in our interpretation $\sem{\top}=\sem{I}$ and where we identify the cartesian type $A\Rightarrow B$ with the linear type $!A\multimap B$) by removing each positive occurrence of $!$ in $T$ or, equivalently, replacing each even-depth occurrence of $\Rightarrow$ with $\multimap$. Essentially, $\sem{LT}$ is obtained from the game $\sem{T}$ by not allowing Opponent to open more than one thread of any game. Note that we have a canonical winning strategy representing a generalised dereliction $\sem{T}\ra{\mathsf{gder}_{\sem{LT}}}\sem{LT}$ which is defined in the obvious way from dereliction maps on subtypes using the functoriality of $\top$, $\&$, $\Rightarrow$ and $\multimap$.

Now, if we can show that $W_{\sem{LT}}=\emptyset$, it follows that the norm $||\sigma||_{\sem{T}}:=\Sigma_{s\in \sigma;\mathsf{gder}_{\sem{LT}}/\approx_{\sem{LT}}}\mathsf{length}(s)$ is well-defined for $\sigma\in\str(\sem{T})$. (Here, we mean some skeleton for $\sigma;\mathsf{gder}_{\sem{LT}}$ when we write $\sigma;\mathsf{gder}_{\sem{LT}}/\approx_{\sem{LT}}$, ) Indeed, there are only finitely many Opponents for $\sem{LT}$ as Opponent can only make a choice between finitely many alternatives for each connective in formula $LT$, of which there are finitely many. Moreover, interactions with Player never become unboundedly long because $W_{\sem{LT}}=\emptyset$.

We show that $W_{\sem{LT}}=\emptyset$. Define classes of formulas $\allwin,\nowin$ by mutual induction as follows.  In this definition, we use $G$ to stand for any game all of whose maximal positions are of length 2, $F^A$ (respectively, $F^N$) and their subscripted versions to range over $\allwin$ (respectively, $\nowin$) games.  
\begin{align*}
\allwin  ::  & \ \ \ \  G \mid F^A_1 \lltensor F^A_2 \mid \  F^A_1 \llwith F^A_2 \mid !F^A \mid F^N \linimpl F^A \\
\nowin  :: & \ \ \ \  G  \mid F^N_1 \lltensor F^N_2 \mid \  F^N_1 \llwith F^N_2  \mid F^A \linimpl F^N.
\end{align*}
It follows from a simple inductive argument that 
\begin{itemize}
\item for all  $\allwin$ games $F^A$, $\win{F^A} = \Inf{F^A}$;
\item for all $\nowin$ formulas $F^N$, $\win{F^N} = \emptyset$.
\end{itemize}
Now, to conclude that $W_{\sem{LT}}=\emptyset$, we observe that $LT\in \nowin$, as all occurrences of $!$ are negative.

Finally, if $f=\mathbf{C}_i(g^1,\ldots,g^{q_i},(h_y\;|\; y\in\bigcup\funim(Y^i)))$, then, plays of $g^j$ and $h_y$ properly extend to plays of $f$ as discussed in the previous proof. Therefore, it follows that $||g^i||,||h_y||<||f||$.
\end{proof}

Now, we combine steps 1.-4. to reduce the definability of strict strategies to that of non-strict ones.

\begin{lemma}[Defining Strict Strategies from Non-Strict Ones] \label{lem:define} All morphisms in $\Ctxt(\DGame_!)^{\fin1\Sigma\Pi}$ are definable in {$\DTTGame$} if we assume that the non-strict ones are, where we write $\Ctxt(\DGame_!)^{\fin1\Sigma\Pi}$ for the full subcategory of $\Ctxt(\DGame_!)$ on the objects formed by the interpretation of types of $\DTTGame$ formed without $\Id$-constructors.
\end{lemma}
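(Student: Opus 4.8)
The proof is by well-founded induction on the norm $||-||$ of Lemma~\ref{lem:norm}, using the decomposition of Lemma~\ref{lem:decomp} to peel off one layer of a strict strategy at a time. First I would reduce a general morphism to a single strategy of the shape handled by Lemma~\ref{lem:decomp}. A morphism $[A_i]_i\ra{[f_j]_j}[B_j]_j$ in $\Ctxt(\DGame_!)^{\fin1\Sigma\Pi}$ is by definition a (dependent) list of strategies, so it suffices to define each component. Next, normalising the type hierarchy --- using that $\Sigma$ is interpreted by list concatenation, that $\Pi$-types absorb into games of dependent functions (currying, plus the isomorphism $\Pi_{x:A}\Sigma_{y:B}C\cong\Sigma_{f:\Pi_{x:A}B}\Pi_{x:A}C[f(x)/y]$ used to reduce general $\Pi$-types to the primitive ones), and that substitution commutes with all connectives --- every object of the $\fin1\Sigma\Pi$ hierarchy is (canonically isomorphic to) a context game $[A_i]_{i\leq n}$ with each $A_i=\Pi_{[B^{i,j}]_j}Y^i_*\{c^i\}$ of exactly the form required as input to Lemma~\ref{lem:decomp}, and each component of a morphism is a strategy on $\Osat(\Pi_{[A_i]_i}\Pi_{[B^{i,j}]_j}Y_*)$ for such an $[A_i]_i$ and a finite dependent game $Y_*$. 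Since the objects carry no $\Id$-constructors, such a strategy trivially does not visit any $\Id$-type, so Lemma~\ref{lem:decomp} applies.

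I would then prove, by induction on $||f||_E$, the statement: for every game $E=\Pi_{[A_i]_i}\Pi_{[\Id]_{[D_k]_k}([d_k^0]_k,[d_k]_k)}X_*$ with $[A_i]_i$ in the (normalised) $\fin1\Sigma\Pi$ hierarchy and $X_*$ finite, every $f\in\str(\Osat(E))$ which does not visit the $[\Id]$-block is definable in $\DTTGame$ (assuming, as in the statement, that the non-strict such strategies are). The base/non-strict case is exactly the hypothesis: a non-strict $f$ is a constant $x\in\bigcup\funim(X)$, built syntactically from the constructors $b_{i,j}$ of the finite type family, and its definability (using \textsf{Ty-Ext} where $X$ is non-constant) is assumed. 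For the inductive/strict case, Lemma~\ref{lem:decomp} gives $f=\mathbf{C}_i(g^1,\ldots,g^{q_i},(h_y\mid y\in\bigcup\funim(Y^i)))$, where each $g^j$ is a strategy on $\Osat(\Pi_{[A_i]_i}\Pi_{[\Id]_{[D_k]_k}([d_k^0]_k,[d_k]_k)}B^{i,j}\{\ldots\})$ and each $h_y$ on $\Osat(\Pi_{[A_i]_i}\Pi_{[\Id]_{[D_k]_k.[C_l^i]_l.[Y^i_*]}(\ldots)}X_*)$, neither visiting its $\Id$-block; by Lemma~\ref{lem:norm}, $||g^j||,||h_y||<||f||$. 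After re-currying the $B^{i,j}$ into primitive $\Pi$-types and permuting the accumulated $\Id$-blocks (which depend only on $[A_i]_i$, resp.\ on $[A_i]_i$ and the case variables) past the arguments of $B^{i,j}$ so that they sit in the designated accumulator position, each $g^j$ and $h_y$ falls under the induction hypothesis, hence is definable; concretely, there are terms of $\DTTGame$ whose interpretations are the $g^j$ and $h_y$.

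Finally, $f$ itself is then definable: the semantic operation $\mathbf{C}_i$ was defined in the proof of Lemma~\ref{lem:decomp} purely from the dependently-typed substitution, evaluation, derelicted projections, and the construct $\sem{\case^{p,q}}_{Y^i_*\{\langle[\der]_{},[g^j]_j\rangle\},X_*}$ --- all of which are interpretations of syntactic constructs of $\DTTGame$ (substitution, application, identifier projection, and $\case^{p,q}$ for finite inductive families together with the needed $\lambda$-abstractions). So $f=\sem{\case^{p,q}_{\ldots}(b,\{c_{i,j}\}_{i,j})}$ for suitable syntactic $b$ (an identifier, picking out $A_i$) and $c_{i,j}$ built from the terms denoting the $g^j$ and $h_y$, and $f$ is definable. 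Reassembling the component strategies into a list yields definability of the original morphism.

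\textbf{Main obstacle.} The routine-but-delicate part is the bookkeeping in the inductive step: verifying that after the decomposition the strategies $g^j$ and $h_y$ --- once their codomains $B^{i,j}$ are re-normalised into primitive $\Pi$-types and the accumulated $\Id$-blocks are shuffled into a canonical position via the standard $\Pi$-type isomorphisms --- genuinely lie in the class over which the induction runs (the right normalised shape of $[A_i]_i$ and the base finite dependent game, with the $\Id$-blocks still unvisited), and that the norm of Lemma~\ref{lem:norm} is insensitive to these reshufflings so that it still strictly decreases. Everything else is a matter of translating the semantic combinators of $\mathbf{C}_i$ back into $\DTTGame$ syntax, which is purely formal given that $\sem{-}$ is defined compositionally on all the relevant term formers.
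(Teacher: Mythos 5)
Your proposal is correct and follows essentially the same route as the paper: complete induction on the norm of Lemma~\ref{lem:norm}, reduction of context morphisms to single strategies on games of dependent functions (handling $\Sigma$ by tupling and general $\Pi$ via the isomorphism $\Pi_{x:A}\Sigma_{y:B}C\cong \Sigma_{f:\Pi_{x:A}B}\Pi_{x:A}C[f(x)/y]$), then Lemma~\ref{lem:decomp} to peel off one layer of a strict strategy while accumulating negatively occurring $\Id$-types, with the non-strict case discharged by hypothesis and the strict case reassembled syntactically through $\case^{p,q}$. The bookkeeping you flag as the main obstacle is exactly what the precise typing in the statement of Lemma~\ref{lem:decomp} is designed to absorb, so no further work is needed there.
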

\begin{proof} Let $T$ be a type of {$\DTTGame$} with $\Pi$, $\Sigma$, $1$ and finite inductive type families and let $f\in\Ctxt(\DGame_!)([],\sem{T})$. If $T=\Sigma_{x_1:T^1}\ldots\Sigma_{x_{n-1}:T^{n-1}}T^n$ (including the case of $T=1$ if $n=0$), then, we know that both in the syntax and semantics $f$ decomposes as $\langle f_1,\ldots, f_n\rangle$. The interesting remaining case to deal with therefore is definability for $T=\Pi_{x:T'}S[q/x']$ where $x':Q\vdash S\;\type$ and $x:T'\vdash q:Q$, i.e. for ${T}=\Pi_{{x:T'}}{S[q/x']}$ where $S$ is a finite inductive type family. (In that case $\sem{S[q/x']}={X_*}$ has finite inductive games as fibres.)

From here, the argument to show that $f\in \Ctxt(\DGame_!)([],\sem{T})=\linebreak\str(\Osat(\sem{T}))$ is definable in {$\DTTGame$} will proceed by complete induction on $|| f||$, which  terminates according to lemma \ref{lem:norm}. For the sake of our inductive argument, let us consider the more general case of $f\in\str(\Osat(\sem{\Pi_{x:T'}\Pi_{[\Id]_{D}(d,d^0)}S[q/x']}))$ which does not visit the $\Id$-type.  Note that we may assume WLOG that $T'=\Sigma_{T^1}\ldots \Sigma_{T^{n-1}}T^n$ with ${T^i}=\Pi_{{T'^1}}\cdots\Pi_{{T'^{q_i}}}{U[v/x'']}$, where $x'':V\vdash U\type$ and $x_1:T^1,\ldots,x_{i-1}:T^{i-1},x_1':T^1{}',\ldots,x_{q_i}':T^{q_i} {}'\vdash v:V$ and where $\sem{U[v/x'']}={Y^i_*}$. This is where we invoke  lemma \ref{lem:decomp}.

If $f$ is strict, then $f$ can be expressed as\\
\\
\resizebox{\linewidth}{!}{$\begin{array}{c}\mathbf{C}_i(g^1,\ldots,g^{q_i},(h_y\;|\; y\in \bigcup\funim(Y^i)))\vspace{5pt}=\\
\sem{\lambda_{x:T'}\case^{p,q}_{U[v/x''][\mathsf{fst}(x)/x_1,\ldots,{(i-1)}-\mathsf{th}(x)/x_{i-1},G^1x/x_1',\ldots,G^{q_i}x/x_{q_i}'],S[q/x']}(x_i(G^1x)\cdots (G^{q_i}x),\{H_y x\}_y) },
\end{array}$}
\\\\ where by the induction hypothesis $g^i=\sem{G^i}$ and $h_y=\sem{H_y}$.

If $f$ is non-strict, it is definable by assumption.

We conclude that $f$ is definable in {$\DTTGame$}.
\end{proof}

Next, we complete the definability proof by showing how to define non-strict strategies from the syntax of $\DTTGame$ using the extensionality of types.
\begin{theorem}[Full Completeness at $\Id$-free type hierarchy] \label{thm:define} All morphisms in $\Ctxt(\DGame_!)^{\fin1\Sigma\Pi}$ are definable in $\DTTGame$, where we write $\Ctxt(\DGame_!)^{\fin1\Sigma\Pi}$ for the full subcategory of $\Ctxt(\DGame_!)$ on the objects formed by the interpretation of types of $\DTTGame$ formed without $\Id$-constructors.
\end{theorem}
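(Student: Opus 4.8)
The plan is to reduce the full statement (definability of all morphisms in $\Ctxt(\DGame_!)^{\fin1\Sigma\Pi}$) to the already-established lemma \ref{lem:define}, which says that strict strategies are definable provided non-strict ones are. So the remaining task is to show that every non-strict winning dependently typed strategy $f$ on (the $O$-saturation of) a type game built from finite inductive games by $1$, $\Sigma$, $\Pi$ and substitution, but no $\Id$-constructors, is the denotation of some term of $\DTTGame$. By the same reduction used in lemma \ref{lem:define}, it suffices to treat the case $f\in\str(\Osat(\sem{\Pi_{x:T'}S[q/x']}))$ where $S$ is a finite inductive type family, so $\sem{S[q/x']}={X_*}$ has finite inductive games as fibres; the $\Sigma$-case splits $f$ componentwise and $1$ is trivial.

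First I would recall, from lemma \ref{lem:decomp} (second bullet), exactly what non-strictness forces: a non-strict $f$ must respond to the opening move $*$ with a constructor $x\in\bigcup\funim(X)$, and moreover $x$ must lie in $X_*([\tau_i]_i)$ for \emph{every} consistent tuple $[]\ra{[\tau_i]_i}[A_i]_i$ (here $[A_i]_i=\sem{T'}$). In other words, $f$ is the ``constant'' strategy answering $x$, but $x$ is only required to be a uniform choice across the fibres that can actually be realised by a genuine term of $T'$. Syntactically, the obvious candidate is $\sem{\lambda_{x:T'}c}$ where $c$ is the constructor term $b_{i,j}$ of $S$ corresponding to $x$; the content to verify is that this $\lambda$-abstraction is well-typed, i.e. that $x:T'\vdash b_{i,j}:S[q/x']$ holds in $\DTTGame$. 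This is where the rule \textsf{Ty-Ext} enters: non-strictness of $f$ tells us precisely that, after translating to $\STTGame$, $b_{i,j}$ lands in the image of the fibre for every closed instantiation of $T'$, and that $\tot{S[q/x']}$ is the fixed simply typed ground game; by \textsf{Ty-Ext} (whose soundness was established in theorem \ref{thm:faith} for type families over domains for which definability holds — and $T'$ is such a type by the induction hypothesis of the outer induction on type structure) this is enough to conclude $x:T'\vdash S[q/x']= \{b_{i,j}\;|\;\textrm{(relevant $i,j$)}\}$, hence that $b_{i,j}$ typechecks at $S[q/x']$. Then $\sem{\lambda_{x:T'}b_{i,j}}$ is exactly the non-strict $f$, since both answer $*$ with the constructor $x$ and never inspect their argument.

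The induction is organised exactly as in lemma \ref{lem:define}: complete induction on the norm $\|f\|$ of lemma \ref{lem:norm}, nested inside a structural induction on the type $T$. In the strict case lemma \ref{lem:define} already does the work, expressing $f$ via a $\case^{p,q}$-construct whose subterms $g^j$, $h_y$ have strictly smaller norm and hence are definable by the inner induction hypothesis; in the non-strict case we apply the argument of the previous paragraph, which uses no induction hypothesis at all. Combining the two cases closes the induction and proves the theorem.

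The main obstacle I anticipate is the careful bookkeeping around \textsf{Ty-Ext} and the role of the outer induction: one must ensure that when invoking \textsf{Ty-Ext} for the type family $x':Q\vdash S$ (or rather for $x:T'\vdash S[q/x']$), definability is already available for the domain type $T'$, so that the side condition ``definability holds for $A$'' in theorem \ref{thm:faith}'s soundness clause for \textsf{Ty-Ext} is met; this is why the structural induction on types must be on the outside. A secondary subtlety is checking that the uniformity condition extracted from lemma \ref{lem:decomp} (membership of $x$ in the fibre over \emph{every} consistent tuple, not merely every closed term) matches the hypothesis format of \textsf{Ty-Ext} (quantification over closed terms $t_1,\dots,t_n$) — but this is precisely the content of the faithfulness square in theorem \ref{thm:faith}, since consistent tuples of strategies that are definable are exactly the denotations of closed terms, so no gap arises. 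I do not expect genuine mathematical difficulty beyond this; the substance was already carried by lemmas \ref{lem:decomp}, \ref{lem:norm} and \ref{lem:define}.
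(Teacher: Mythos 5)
Your overall strategy is the paper's: reduce to the non-strict case via lemma \ref{lem:define}, observe that a non-strict $f$ answers $*$ with a constructor lying in every realisable fibre, and use \textsf{Ty-Ext} plus \textsf{Ty-Conv} (with definability of the domain supplied by the outer structural induction on types, exactly as you organise it) to typecheck that constructor at the open type $S[q/x']$. There is, however, one step that would fail as written: you apply \textsf{Ty-Ext} to conclude $x:T'\vdash S[q/x']=\{b_{i,j}\;|\;\textnormal{relevant }i,j\}$, comparing the substituted inductive family with the bare finite inductive type on the relevant constructors. In this setting those are genuinely different types --- recall the calendar example, where $\mathsf{dd-mm}(\textnormal{02})$ is not the same type as $\{\textnormal{01-02},\ldots,\textnormal{29-02}\}$: the former's eliminator still carries branches for all other fibres, and $\tot{\sem{S[q/x']}}$ is the flat game on \emph{all} constructors of the family. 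Concretely, the side condition $B_1^T=B_2^T$ of \textsf{Ty-Ext} is violated, since $S[q/x']^T=S^T$ is the finite type on all $b_{i,j}$ while your right-hand side translates to the smaller type; moreover \textsf{Ty-Ext} as formulated concludes an equality of \emph{closed $\Pi$-types}, not an open type equality. The repair is what the paper does: because the fibres of $S$ are disjoint and constructors are interpreted faithfully, the uniform answer $a$ equals $\sem{s_0}$ for a constructor $\vdash s_0:S[q_0/x']$ with $q_0:=q[t/x]$ independent of the closed term $t$; one then applies \textsf{Ty-Ext} to the pair $\Pi_{x:T'}S[q/x']$ and $\Pi_{x:T'}S[q_0/x']$ --- same simply-typed translation $S^T$, same closed instances --- and \textsf{Ty-Conv} retypes $s_0$ at $S[q/x']$ in the open context.

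A second, smaller inaccuracy: the reduction of lemma \ref{lem:define} does not leave you with non-strict strategies on $\Osat(\sem{\Pi_{x:T'}S[q/x']})$ but on games whose domain carries the negatively occurring $\Id$-types accumulated by the decomposition (the $h_y$'s), so the non-strict case must be stated for $\Pi_{x':T'}\Pi_{\vec{p}:\vec{\Id}_{\vec{D}}(\vec{d^0},\vec{d})}S[q/x']$. The argument goes through unchanged --- closed witnesses of those $\Id$-types are canonical, so definability of the extended domain follows from definability of $T'$ --- but the quantification over closed terms in the \textsf{Ty-Ext} hypothesis must range over them as well.
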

\begin{proof}To show definability, by lemma \ref{lem:define}, all that remains to be done is demonstrate definability for non-strict $f$.

If $f$ is non-strict, we know from lemma \ref{lem:decomp} that $f$ answers with some move $a$ s.t. for all $\vdash t:T'$ and $\vdash  \overrightarrow{k}: \overrightarrow{\Id}_{ \overrightarrow{D}[t/x']}( \overrightarrow{d^0}[t/x'], \overrightarrow{d}[t/x'])$, $[]\ra{a}\sem{S[q[t/x]/x']}$. (Where, to keep notation light, we write $ \overrightarrow{D}$ for the list of types $D^1,\ldots, D^m$ and similarly for terms.) Now, as $S$ is a finite inductive type family, we know that $a=\sem{s_0}$ for some $\vdash s_0:S[q_0/x']$ where we write $q_0:=q[t/x]$ (noting that $q[t/x]$ is independent of $t$ as the fibres of $S$ are disjoint and the interpretations of constructors of finite inductive types is faithful).

Now, in particular, $S[q[t/x]/x']=S[q_0/x']$. Moreover, clearly, $S[q[t/x]/x']^T=S^T=S[q_0/x']^T$. Therefore, by \textsf{Ty-Ext} (which, by theorem \ref{thm:faith}, we know to hold for $S[q/x']$ and $S[q_0/x']$, as, by induction\footnote{Indeed, definability is only non-trivial for function types constructors. By induction, we have already established definability for $T'$. It then follows trivially for $\Sigma_{x':T'}\overrightarrow{\Id}_{ \overrightarrow{D}}( \overrightarrow{d^0}, \overrightarrow{d})$, as all closed witnesses of $\Sigma$- and $\Id$-types are canonical, both in syntax and semantics.}, we may assume that we have already established definability for $\Sigma_{x':T'}\overrightarrow{\Id}_{ \overrightarrow{D}}( \overrightarrow{d^0}, \overrightarrow{d})$), it follows that $\vdash \Pi_{x':T'}\Pi_{  \overrightarrow{p}:\overrightarrow{\Id}_{ \overrightarrow{D}}( \overrightarrow{d^0}, \overrightarrow{d})}S[q/x'] =\Pi_{x':T'}\Pi_{  \overrightarrow{p}:\Id_{ \overrightarrow{D}}( \overrightarrow{d^0}, \overrightarrow{d})}S[q_0/x'] $. Therefore, by \textsf{Ty-Conv}, we have $x':T',   \overrightarrow{p}:\Id_{ \overrightarrow{D}}( \overrightarrow{d^0}, \overrightarrow{d})\vdash s_0:S[q/x'] $ which is interpreted as $f$.
\end{proof}

We have obtained the following, combining theorems \ref{thm:faith} and \ref{thm:define}.
\begin{corollary}[Full and Faithful Completeness at $\Id$-free type hierarchy] All morphisms in $\Ctxt(\DGame_!)^{\fin1\Sigma\Pi}$ are faithfully definable in {$\DTTGame$}.
\end{corollary}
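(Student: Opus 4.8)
The statement to prove is the ``Full and Faithful Completeness at $\Id$-free type hierarchy'' corollary, asserting that every morphism in $\Ctxt(\DGame_!)^{\fin1\Sigma\Pi}$ is \emph{faithfully definable} in {$\DTTGame$}, i.e.\ that the interpretation functor $\sem{-}:\DTTGame\to \Ctxt(\DGame_!)^{\fin1\Sigma\Pi}$ restricted to the $\Id$-free hierarchy is full and faithful (and an equivalence onto that full subcategory). The plan is simply to \emph{combine the two main results that precede it}, namely theorem~\ref{thm:faith} (soundness and faithfulness of $\sem{-}$, established by factoring through the full-and-faithful simply typed games model of theorem~\ref{thm:sttcompl}) and theorem~\ref{thm:define} (full completeness, i.e.\ definability of all morphisms in $\Ctxt(\DGame_!)^{\fin1\Sigma\Pi}$). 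Fullness is exactly the content of theorem~\ref{thm:define}, and faithfulness is exactly the content of theorem~\ref{thm:faith}; together these say that the hom-set maps $\DTTGame(\Gamma,\Delta)\to\Ctxt(\DGame_!)(\sem\Gamma,\sem\Delta)$ are bijections for contexts in the $\Id$-free hierarchy.

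\textbf{Key steps.} First I would note that objects (contexts) of $\Ctxt(\DGame_!)^{\fin1\Sigma\Pi}$ are, by definition of that full subcategory, precisely the interpretations of $\Id$-free contexts of $\DTTGame$, so $\sem{-}$ is (essentially) surjective on this subcategory by fiat. Second, for faithfulness: given two derivable terms $\Gamma\vdash a,b:A$ with $\sem a=\sem b$, I would invoke theorem~\ref{thm:faith}, whose proof exhibits the commuting triangle with the faithful functor $(-)^T:\DTTGame\to\STTGame$ of corollary~\ref{thm:trans} and the faithful simply typed games interpretation $\STTGame\hookrightarrow\Gamecat_!$ of theorem~\ref{thm:sttcompl}; since $\sem a = \sem b$ implies $\tot{\sem a} = \tot{\sem b}$, i.e.\ $\sem{a^T}=\sem{b^T}$ in $\Gamecat_!$, faithfulness of the simply typed model gives $a^T = b^T$ in $\STTGame$, and then faithfulness of $(-)^T$ gives $a = b$ in $\DTTGame$; the same argument works at the level of context morphisms since these are lists of terms. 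Third, for fullness: given any morphism $f$ in $\Ctxt(\DGame_!)^{\fin1\Sigma\Pi}$, theorem~\ref{thm:define} produces a term of $\DTTGame$ whose interpretation is $f$ — this is where the real work lies, but it has already been carried out. Finally, I would remark that fullness plus faithfulness plus essential surjectivity yield that $\sem{-}$ is an equivalence of categories onto $\Ctxt(\DGame_!)^{\fin1\Sigma\Pi}$.

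\textbf{Main obstacle.} Since both constituent theorems are already established in the excerpt, the ``proof'' here is genuinely just the conjunction, so there is no real obstacle at this stage — the one point requiring a sentence of care is checking that faithfulness for context morphisms (not just single terms) follows from faithfulness for terms, which is immediate because a context morphism into $[X_i]_i$ is an inductively-built list $[f_i]_i$ of terms of dependent-function type and $\sem{-}$ acts componentwise, so equality of the interpretations forces equality of each component and hence of the lists. All the genuine difficulty — the decomposition lemma~\ref{lem:decomp}, the termination norm lemma~\ref{lem:norm}, the reduction of strict to non-strict strategies in lemma~\ref{lem:define}, and the use of \textsf{Ty-Ext} to define non-strict strategies in theorem~\ref{thm:define} — is upstream of this corollary and may be assumed.
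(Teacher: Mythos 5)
Your proposal is correct and matches the paper exactly: the corollary is stated in the paper with the one-line justification "combining theorems \ref{thm:faith} and \ref{thm:define}", which is precisely the conjunction of faithfulness and definability that you give. The additional remarks on essential surjectivity and componentwise faithfulness for context morphisms are harmless elaborations of the same argument.
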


Next, we show that full (and faithful) completeness still holds if we allow one strictly positive occurrence\footnote{Recall that we say that a subformula $B$ occurs strictly positively in a type $A$ if it does not appear as the antecedent of any function types. In particular, in the case of {$\DTTGame$}, we say that $B$ occurs strictly positively in $A$ if it does not occur as the left hand side argument of a $\Pi$-type constructor.} of an $\Id$-type. This shows, in particular, that the notion of propositional identity coincides in syntax and semantics for open terms of the $\Id$-free type hierarchy.

\begin{theorem}[Full and Faithful Completeness for strictly positive $\Id$-types]\label{thm:complid} All morphisms in $\Ctxt(\DGame_!)([],\sem{\Pi_{x:A}\Id_B(f,g)})$ for $x:A\vdash f,g:B$ are faithfully definable in {$\DTTGame$}, if $\vdash A\;\type$ and $x:A\vdash B\;\type$ are types built without $\Id$-constructors.
\end{theorem}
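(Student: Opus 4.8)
The plan is to reduce the claim to the already-established full completeness result at the $\Id$-free type hierarchy (Theorem \ref{thm:define}) together with the concrete description of the interpretation of $\Id$-types. A morphism $f \in \Ctxt(\DGame_!)([],\sem{\Pi_{x:A}\Id_B(f,g)})$ is a strategy on the game $\Osat(\Pi_{\sem{A}}\sem{\Id_B(f,g)})$. By the definition of $\sem{\Id_B}$ in section \ref{sec:semtype}, the game $\sem{\Id_B(f,g)}$ has $\tot{\sem{\Id_B}} = \{\reflind\}_*$, so $\tot{\Osat(\Pi_{\sem{A}}\sem{\Id_B(f,g)})} = \tot{\sem{A}} \Rightarrow \{\reflind\}_*$, a flat-fibred game. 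First I would use that $f$ does not visit the source games in a way that can restrict any fibre (the only Player answer available in any fibre of $\sem{\Id_B}$ is $\reflind$), so $f$ is either the strategy that never answers in $\{\reflind\}_*$ — impossible, since $f$ is winning — or it answers $\reflind$ to the initial move. Hence, up to the behaviour of $f$ in $\tot{\sem{A}}$, $f$ is forced.

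Next I would invoke the constraint imposed by type dependency in $\Osat(\Pi_{\sem{A}}\sem{\Id_B(f,g)})$: by Theorem \ref{thm:reppi}, for Player to legally answer $\reflind$ after a play $sa$ that obeys the rules, we must have $sa \in P_{\sem{A}() \Rightarrow \sem{\Id_B(f,g)}(\sigma)}$ for all consistent $\sigma \in \str(\sem{A})$, and $\sem{\Id_B(f,g)}(\sigma) = \{\reflind\}_*$ exactly when $f\{\sigma\} = g\{\sigma\}$ and $\emptyset_*$ otherwise. So a winning strategy $f$ on this game exists if and only if $\sem{f}\{\sigma\} = \sem{g}\{\sigma\}$ for every $\sigma \in \str(\sem{A})$, i.e. iff $\sem{\Pi_{x:A}B}$-morphisms $\sem{f}$ and $\sem{g}$ have the same value on all points of $\sem{A}$. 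Now, because $A$ and $B$ are built without $\Id$-constructors, Theorem \ref{thm:define} applies: all points $\sigma \in \str(\sem{A})$ are definable, as are $\sem{f}$ and $\sem{g}$, say by $x:A\vdash f', g':B$ with $\sem{f'} = \sem{f}$, $\sem{g'} = \sem{g}$. The condition ``$\sem{f}\{\sigma\} = \sem{g}\{\sigma\}$ for all points $\sigma$'' together with $\sem{f}$ and $\sem{g}$ (hence $f'^T = f^T$, $g'^T = g^T$) agreeing under the translation $(-)^T$ is precisely the hypothesis of the \textsf{Ty-Ext} rule applied to the type family $x:A, y:B \vdash \Id_B(f', g')\;\type$ over the type $A$ (for which definability holds, so \textsf{Ty-Ext} is sound by Theorem \ref{thm:faith}). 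This yields $\vdash \Pi_{x:A}\Id_B(f',g') = \Pi_{x:A}\Id_B(f',f')\;\type$, and since $x:A \vdash \reflind(f') : \Id_B(f',f')$ is definable, \textsf{Ty-Conv} gives us $x:A \vdash \reflind(f') : \Id_B(f',g')$, whose interpretation is the forced strategy $f$. Faithfulness follows as before from the faithfulness of $(-)^T$ and of the simply typed games interpretation, exactly as in the proof of Theorem \ref{thm:faith}.

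The main obstacle I anticipate is verifying carefully that the only winning strategy on $\Osat(\Pi_{\sem{A}}\sem{\Id_B(f,g)})$ that exists at all is the forced $\reflind$-answering one, and that its existence is genuinely equivalent to fibrewise equality of $\sem{f}$ and $\sem{g}$ — this requires unwinding the nested quantifiers in Theorem \ref{thm:reppi} for a strictly positive $\Id$-occurrence and checking that no ``naughty Opponent'' escape hatch allows a spurious strategy; that is, one must check that the hypotheses of the implications defining legal Player moves are actually satisfiable for the relevant $\sigma$, which is where the $\Id$-free (hence definability-friendly) restriction on $A$ and $B$ is essential. The remaining bookkeeping — matching the \textsf{Ty-Ext} hypothesis on the nose and invoking \textsf{Ty-Conv} — is routine given the machinery already set up in Theorems \ref{thm:faith} and \ref{thm:define}.
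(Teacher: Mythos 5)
Your overall route is the paper's: evaluate the given strategy at (definable) points of $\sem{A}$, observe that the fibre of $\sem{\Id_B(f,g)}$ is $\emptyset_*$ unless $\sem{f}$ and $\sem{g}$ agree there (this is exactly criterion (I3) of theorem \ref{thm:idprops}), convert this into the judgemental equalities needed for \textsf{Ty-Ext}, and finish with \textsf{Ty-Conv}. However, there is a genuine gap in your final step. The theorem asks for definability of \emph{all} morphisms in $\Ctxt(\DGame_!)([],\sem{\Pi_{x:A}\Id_B(f,g)})$, whereas your argument produces a defining term only for the single non-strict strategy $\sem{\lambda_x\refl{f}}$, which you call ``the forced strategy''. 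But the strategy is not forced: Player may legally interrogate $!\tot{\sem{A}}$ before answering $\reflind$ (doing so never restricts a fibre, since only Opponent narrows fibres of $A$), and because the model is intensional these interrogating strategies are genuinely distinct morphisms. For instance, with $A=B=\mathbb{B}$ and $f=g=x$, both $\sem{\lambda_x\refl{x}}$ and the strict strategy that first asks for its argument and only then answers $\reflind$ (the denotation of a $\mathsf{case}^{p,q}$ over $\mathbb{B}$ with branches $\refl{\ttt}$ and $\refl{\fff}$) are winning strategies on $\Osat(\Pi_{\BInd}\{\reflind\}_*)$, and only the first is $\sem{\lambda_x\refl{x}}$. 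Your own phrase ``up to the behaviour of $f$ in $\tot{\sem{A}}$'' concedes exactly the degree of freedom your construction fails to cover, and your anticipated ``main obstacle'' (proving the strategy unique) is in fact unprovable.

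The repair is the closing move you stop just short of: having used \textsf{Ty-Ext} and \textsf{Ty-Conv} to replace the target type by $\Pi_{x:A}\Id_B(f,f)$, use the type isomorphism $\Pi_{x:A}\Id_B(f,f)\cong\Pi_{x:A}\{\reflind\}$ to transport the problem to a type in the $\Id$-free hierarchy, where theorem \ref{thm:define} already gives definability of \emph{every} winning strategy, strict ones included, via the decomposition lemma. Two smaller points: since $f$ and $g$ are given as syntactic terms in the statement, your detour through definable representatives $f'$, $g'$ is unnecessary; and the passage from ``$\sem{f}\{\sigma\}=\sem{g}\{\sigma\}$ for all $\sigma$'' to the judgemental equalities $\vdash f[a/x]= g[a/x]:B[a/x]$ required as the hypothesis of \textsf{Ty-Ext} needs an explicit appeal to faithfulness of the interpretation at the $\Id$-free hierarchy, which should be made at that point rather than only in the concluding sentence.
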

\begin{proof}Faithfulness has already been argued in theorem \ref{thm:faith}.

Given an inhabitant ${p}$ of $\Ctxt(\DGame_!)(\sem{1},\sem{\Pi_{x:A}\Id_{B}(f,g)})$, for any $\cdot\vdash a:A$, evaluating $p$ at $\sem{a}$ gives ${p}\{\sem{a}\}\in \Ctxt(\DGame_!)(\sem{1},\sem{\Id_{B}(f[a/x],g[a/x])})$, which by (I3) of theorem \ref{thm:idprops} implies that $\sem{f[a/x]}=\sem{f}\{\sem{a}\}=\sem{g}\{\sem{a}\}=\sem{g[a/x]}$. Seeing that our model is faithful at the $1\Sigma\Pi$-hierarchy over finite inductive type families, we conclude that $\cdot \vdash f[a/x]\equiv g[a/x]:B$ for all $\cdot\vdash a:A$.

Noting that $\vdash \Id_B(f,g)^T=\{\reflind\}=\Id_B(f,f)^T$, we conclude, by \textsf{Ty-Ext}, that $\vdash \Pi_{x:A}\Id_B(f,g)=\Pi_{x:A}\Id_B(f,f)$. \textsf{Ty-Conv} now reduces full completeness for the former type to full completeness for the latter. We further note that we have an isomorphism of types $\vdash \Pi_{x:A}\Id(f,f)\cong \Pi_{x:A}\{\reflind\}$. For this last type, full completeness has already been established in theorem \ref{thm:define}. Our claim therefore follows.\end{proof}

\begin{remark}We would like to point out to the reader the phenomenon that (full) completeness at types involving positively occurring $\Id$-type constructors crucially relies on faithfulness of the model. This is illustrated here for the case of one strictly positively occurring $\Id$-type.  The question rises what the status is of full completeness results for general types of {$\DTTGame$}, in which $\Id$-types are also allowed to occur negatively and (non-strictly) positively. It seems very believable that our completeness proof could be adapted to this setting as well as the more general setting of arbitrary inductive families \cite{dybjer1991inductive,dybjer1994inductive} of which $\Id$-types are a special case. Indeed, the idea will just be to perform the decomposition in the simply typed translation, accumulating $\Id$-types as we progress in the decomposition, after which definability of non-strict strategies should follow by using constructors for inductive families together with \textsf{Ty-Ext}. We note, in particular, that $\lbi{p}{\refl{x}}{d}$ is entirely analogous to a simple $\mathsf{case}$ construct and $\refl{x}$ simply behaves as a constructor $\reflind$ for an inductive type. 
\end{remark}

\section{Dependent Games for Effects}\label{sec:depgameeff}
The whole previous development was carefully set up to be robust under failure of any combination of the conditions on strategies of being winning, history-free, well-bracketed or deterministic. All stated definitions would stay the same, where one should just interpret the word strategy and the set $\str(A)$ differently. All results would remain true with the exception of completeness results. In particular, we get sound faithful interpretations of $\DTTGame$. As we shall see, the completeness properties will be more subtle and require further study.

\begin{remark}[\mccorrect{Types as Homomorphisms?}]
The reader may wonder if we should not require that types are continuous or at the very least monotone. Indeed, the intuition may be that types should arise from strategies into some universe. However, we believe this point of view is not the most productive. Indeed, as we shall see in chapter \ref{ch:3}, types are best thought of as  functions on \mccorrect{values (like thunks of computations)} rather than some sort of homomorphism into which we can effectfully substitute computations. It should be noted that even if we work with a type universe, it is not clear that the codes (which strategies on the universe represent) correspond to types in a monotone way. However, universes $\mathcal{U}$ should be thought of as value types (as we have a family $\mathsf{El}$ depending on them), hence cannot be expected to exist in the same way in a type theory with unrestricted effects. For instance, we should not expect $\diverge_{\mathcal{U}}$ to code for a type.

As a concrete example, we note that, in Martin-L\"of's partial type theory \cite{palmgren1990domain}, essentially Martin-L\"of type theory with fixpoint combinators for all type families, types are not monotone. Indeed, $\Id_{\B\Rightarrow \B}(\lambda_{x:\B}\mathsf{case}(x,\ttt,{\diverge}),\lambda_{x:\B}\mathsf{case}(x,\ttt,\fff))$ has ${\diverge}$ as only inhabitant while $\Id_{\B\Rightarrow \B}(\lambda_{x:\B}\mathsf{case}(x,\ttt,{\diverge}\hspace{-3pt}),\lambda_{x:\B}\mathsf{case}(x,\ttt,{\diverge}\hspace{-3pt}))$ has two distinct inhabitants ${\diverge}$ and $\refl{\lambda_{x:\B}\mathsf{case}(x,\ttt,{\diverge})}$. At the same time, $\lambda_{x:\B}\mathsf{case}(x,\ttt,{\diverge})\leq \lambda_{x:\B}\mathsf{case}(x,\ttt,\fff)$. Type monotonicity is seen to be restored, for instance, if we impose the axiom of function extensionality on the $\Id$-types, but it is not a feature of bare intensional type theory with recursion.
\end{remark}

In the simply typed world, the idea is precisely (in the sense that we get full abstraction results) that dropping winning conditions allows us to interpret fixpoint combinators \cite{abramsky2000full}, dropping history-freeness allows us to interpret local references of ground type \cite{abramsky1996linearity}, dropping determinism allows us to interpret erratic non-deterministic choice primitives \cite{harmer1999fully} and switching from well-bracketed to weakly well-bracketed strategies allows us to interpret the universal control operator $\mathsf{call/cc}$ \cite{laird1997full}. We make some observations on the extent to which our context games with morphisms composed of the suitable strategies give a sound interpretation of various primitives for effects and we briefly discuss the status of definability results.

\subsection{Recursion}
We note that we can interpret fixpoint combinators in the world of partial (i.e. non-winning) strategies for all context games of length $1$ exactly as in \cite{abramsky2000full}. The only difficulty is posed by $\Sigma$-types. We can obtain an interpretation of fixpoint combinators from a general construction (see \cite{abramsky2000full}) if we can show that $\Ctxt(\Gamecat_!)$ is a rational category. This is not at all guaranteed if we impose no conditions on dependent games and define them simply as unconstrained functions on strategies. Indeed, while our homsets are always pointed partial orders, we cannot always take the colimit of chains $f^{(k)}$ of repeated function applications (starting from $\bot$).

To achieve this, it would be sufficient to demand that dependent games are continuous functions. However, this is easily seen to be inconsistent with our interpretation of $\Id$-types.
A weaker and still sufficient condition is to demand that for a dependent game $B$, for each infinite ascending chain $[\sigma_i^0]_i,[\sigma_i^1]_i,\ldots$ in $\str(\tot{A_1}\&\cdots\&\tot{A_n})$ and for every $s\in P_{\tot{B}}$, if there exists some integer $N$\footnote{\begin{mccorrection}
Indeed, given $f=\langle f_1,\ldots,f_n\rangle:\Sigma(A_1,\ldots,A_n)\ra{}\Sigma(A_1,\ldots,A_n)$, define the increasing $\omega$-chain $\sigma^N:=f^N(\bot)$. We want that $\bigcup_{N\in\N} f_k(\sigma^N)\in\str(A_k\{[f_{k'}]_{k'<k}\}(\bigcup_{N\in\N}\sigma^N))$. We have that $f_k(\sigma^N)\in\str(A_k\{[f_{k'}]_{k'<k}\}(\sigma^N))$, so we precisely need the extra condition that $B$ does not shirk in the limit $N\rightarrow \infty$.
\end{mccorrection}} such that $s\in P_{B(\sigma_1^k,\ldots,\sigma_n^k)}$ for all $k\geq N$, then $s\in P_{B(\bigcup_k \sigma_1^k,\ldots,\bigcup_k\sigma_n^k)}$. This condition is easily seen to be preserved by all our type formers. We obtain a model of $\DTTGame$ extended with fixpoint combinators at all types.

Further, we suspect\footnote{The details remain to be verified.} that for compact elements (noting that these have a finite norm \cite{abramsky2000full}), our definability proof of theorem \ref{thm:define}, can be adapted to this setting by making $h_y$ explore the newly accumulated $\Id$-type. The decomposition lemma then leaves us to define a strategy which visits all $\Id$-types from right to left and then gives a non-strict reply. This can then be defined using constructors, $\substsf$-operators (to visit all $\Id$-types) and \textsf{Ty-Ext}.

\subsection{Local Ground References}
Local references of ground type (\mccorrect{for instance, of integer type}) can be added to $\DTTGame$ in exactly the same way as they are to a simply typed language. Indeed, the new terms for handling state have types only involving (closed) inductive types $X$ and reference types $\Ref(X)$. This is described very clearly in \cite{abramsky1996linearity} where a slightly different definition of $!$ and $\multimap$ are used (threads are not distinguished by labelling moves with a thread number), meaning that justifiers are not uniquely determined by the type structure and have to be specified as part of the play. All results of \cite{abramsky1996linearity} transfer to our setting, as long as we take care to include the obvious thread labelling in our interpretation of the terms for manipulating state. In particular, we get a sound interpretation of $\DTTGame$ extended with local ground type references.

Definability\footnote{The usual definability proof relies crucially on having partial strategies. It is not clear if it can be made to work in the world of winning strategies.} (and with that full abstraction) in the simply typed world, depends on the result that there is a universal history-sensitive (well-bracketed deterministic) strategy $\cell_X$ on $!\Ref(X)$, for any ground type $X$, such that any history-sensitive strategy $\sigma$ on a game $A$ in the simply typed hierarchy over ground types factors as $\cell_X$ for a suitably large $X$ (\mccorrect{this needs to be countably infinite, at least,} to encode the whole history of the play in $A$) followed by a history-free strategy $\tau$ on $\Ref(X)\Rightarrow A$ (which keeps updating $\Ref(X)$ to hold the current history of the play in $A$ and then responds as $\sigma$ would). We note that we should impose a visibility condition on strategies (a condition to exclude the use of higher-order references, which follows automatically from the restriction on plays for our type hierarchy) if we want the same factorisation result for more general games $A$. The factorisation theorem (and with that the definability result) of \cite{abramsky1996linearity} does not have an obvious generalisation to all types of $\Ctxt(\Gamecat_!)$. Indeed, context games of length greater than $1$ ($\Sigma$-types) can pose problems. If we apply the simply typed factorisation construction to a consistent tuple $[\sigma_i]_i$ of strategies, there is no guarantee that the resulting tuple $[\tau_i]_i$ is still consistent. Indeed, we only know that its interaction with $\cell_X$ would be consistent.

\subsection{Finite Non-Determinism}
Lifting the determinacy condition for strategies gives us a model of dependent type theory with non-deterministic features. In fact, following \cite{harmer1999fully}, we can clearly interpret the non-deterministic choice primitive $\vdash \Or_A : A\Rightarrow A\Rightarrow A$ for any type $A$ which gets interpreted as a game (rather than proper context game), by using the (well-bracketed history-free winning) strategy which plays a copycat between both $A^{(3)}$ and, non-deterministically, both $A^{(1)}$ and $A^{(2)}$. It is not clear that we can interpret this primitive for context games (or types containing $\Sigma$-constructors), however. Indeed, to get the correct simply typed translation, we should define $\Or_{[A_j]}$ by $[f_j]_j\;\Or\; [g_j]_j:=[f_j\;\Or\; g_j]_j$. However, we can easily see that this is well-defined for all context morphisms if and only if the dependent games $A_j$ are monotone functions of their inputs. (Indeed, $f_j\;\Or\;g_j$ represents the union of $f_j$ and $g_j$.) This is a condition that is  incompatible with the current interpretation of $\Id$-types. Hence, we are faced with a choice: interpreting $\Id$-types or $\Or$-primitives at $\Sigma$-types.

Definability (and, with that, full abstraction with respect to ``may-observational-equivalence'') in the simply typed world depends on the result that there is a universal non-deterministic strategy $\oracle$ on $\NInd\Rightarrow\NInd$ (which is strict and responds to a number $n$, non-deterministically, with some number between $0$ and $n$) such that any finitely non-deterministic strategy $\sigma$ on $A$ factors as $\oracle$, followed by some deterministic strategy \mccorrect{$\mathsf{det}(\sigma)$ (it responds with the number of different moves that $\sigma$ could make, in the left most copy of $\NInd$, and makes the $n$-th move of $\sigma$ in $A$ in response to $n$; it is winning, well-bracketed and history-free if $\sigma$ was such)} on $(\NInd\Rightarrow\NInd)\Rightarrow A$. However, it is easily seen\footnote{
\begin{mccorrection}
Indeed, take $\mathsf{contradict}$ be a game depending on $\BInd$ such that $\mathsf{contradict}:\BInd\mapsto \BInd$ and $\mathsf{contradict}: \mathsf{else}\mapsto \emptyset_*$. We have a strategy $\sigma=\langle\BInd, \ttt\rangle$ on $\Sigma(\BInd,\mathsf{contradict})$. Then $(\lambda_x 1);\mathsf{det}(\sigma)=\langle \ttt,\ttt\rangle$, which is not a strategy on $\Sigma(\BInd, \mathsf{contradict})$.
\end{mccorrection}
} that the simply typed factorisation, when performed on arbitrary context morphisms $[]\ra{[\sigma_i]_i}[A_i]_i$ (rather than just individual strategies, or context morphisms of length $1$), can result in inconsistent lists of strategies $[\tau_i]_i$, not defining a context morphism $[\NInd\Rightarrow \NInd]\ra{} [A_i]_i$.

\subsection{Control Operators}
The interpretation of control operators in game semantics seems to be more fragile than that of the other effects we have considered (partiality, local ground references, non-determinism). That is, \cite{laird1997full} shows that $\Gamecat_!$, for weakly well-bracketed strategies, interprets (with a history-free winning deterministic strategy violating the well-bracketing condition) the universal control operator\footnote{This is also known as Peirce's law in logic, which is easily seen to be equivalent to the principle of double negation elimination (take $Y_*$ to be a false formula). This law (with its computation rules) is the defining feature of constructive classical logic.} $\mathsf{call/cc}_{X_*,Y_*}:((X_*\Rightarrow Y_*)\Rightarrow X_*)\Rightarrow X_*$ (which plays a copycat between $X_*^{(1)}$ and $X_*^{(3)}$ and between $X_*^{(2)}$ and $X_*^{(3)}$ and which opens $X_*^{(1)}$ in response to the initial question in $Y_*$) for flat games $X_*$ and $Y_*$. Next, it shows that any strategy $\sigma$ on a game $A$ \emph{in the simply typed hierarchy over ground types} (rather than a general game) can be factored as $\mathsf{call/cc}_{X_*,X_*};\tau$ for (appropriate $X_*$ and) a well-bracketed strategy $\tau$  on $(((X_*\Rightarrow Y_*)\Rightarrow X_*)\Rightarrow X_*)\Rightarrow A$.

We note that we can define 
\begin{align*}
\mathsf{call/cc}_{I, C}()&:=\langle\rangle\\\mathsf{call/cc}_{A\& B, C}(f,g)&:=\langle \mathsf{call/cc}_{A,C}(\lambda_\phi f(\lambda_x \phi(\fst(x))), \mathsf{call/cc}_{B,C}(\lambda_\phi f(\lambda_x \phi(\fst(x)))\rangle\\
\mathsf{call/cc}_{A\Rightarrow B, C}(f)&:=\mathsf{call/cc}_{B,C}(\lambda_\phi t(\lambda_y \phi(y(x))(x))).
\end{align*} That is, adding the control operator $\mathsf{call/cc}$ at ground types to an intuitionistic type theory with $1,\times,\Rightarrow$-types gives us the control operator at all types, making it into a constructive classical type theory. In particular, we can interpret these control operators in our game semantics.

Now, it is well-known that constructive classical dependent type theory is degenerate in the sense that it identifies all terms \mccorrect{(propositionally)} \cite{herbelin2005degeneracy}. The situation in our model is that the obvious candidate for $\mathsf{call/cc}$ on $\Sigma$-types (based on the simply typed translation) does not define a context morphism. Indeed, in particular, for the type $A=\Sigma_{x:\mathbb{B}}\Id_{\mathbb{B}}(\ttt,x)$ used by Herbelin to derive his degeneracy result, our candidate for $\mathsf{call}/\mathsf{cc}$ (for $C=\Id_\B(\ttt,\fff)$ an inconsistent proposition) does not type check. Indeed, such an appropriate term $\mathsf{call}/\mathsf{cc}_A $ of type $((A\Rightarrow \Id_{\mathbb{B}}(\ttt,\fff))\Rightarrow A)\Rightarrow A$ would decompose as $\langle\mathsf{call}/\mathsf{cc}_A^1,\mathsf{call}/\mathsf{cc}_A^2\rangle$, where $\vdash \mathsf{call}/\mathsf{cc}_A^1:((\Sigma_{x:\mathbb{B}}\Id_{\mathbb{B}}(\ttt,x)\Rightarrow \Id_{\mathbb{B}}(\ttt,\fff))\Rightarrow \Sigma_{x:\mathbb{B}}\Id_{\mathbb{B}}(\ttt,x))\Rightarrow\mathbb{B}$ and $\vdash \mathsf{call}/\mathsf{cc}_A^2:\Pi_{t:(\Sigma_{x:\mathbb{B}}\Id_{\mathbb{B}}(\ttt,x)\Rightarrow \Id_{\mathbb{B}}(\ttt,\fff))\Rightarrow \Sigma_{x:\mathbb{B}}\Id_{\mathbb{B}}(\ttt,x)}\Id_{\mathbb{B}}(\ttt,\mathsf{call}/\mathsf{cc}_A^1(t))$. The equivalent of the usual interpretation of $\mathsf{call}/\mathsf{cc}$ of \cite{laird1997full}, which plays a copycat back and forth between $A^{(3)}$ and $A^{(2)}$, which plays the initial move in $A^{(1)}$ if $\Id_{\mathbb{B}}(\ttt,\fff)$ is opened by Opponent and which copies back Opponent's response in $A^{(1)}$ to $A^{(3)}$, is seen not to yield a sound interpretation of $\mathsf{call}/\mathsf{cc}_A$ in our model. Indeed, $\sem{\mathsf{call}/\mathsf{cc}^2_A}$ violates the rules of the game $$\Osat(\sem{\Pi_{t:(\Sigma_{x:\mathbb{B}}\Id_{\mathbb{B}}(\ttt,x)\Rightarrow \Id_{\mathbb{B}}(\ttt,\fff))\Rightarrow \Sigma_{x:\mathbb{B}}\Id_{\mathbb{B}}(\ttt,x)}\Id_{\mathbb{B}}(\ttt,\mathsf{call}/\mathsf{cc}_A^1(t))}),$$ in particular, its play $* (0,*)(0,(0,*))(0,(0,(0,*)))(0,(0,(0,\reflind)))\reflind$ does so with its last move. Indeed, this Player move excludes the value $$\tau = \sem{\lambda_{k:A\Rightarrow\Id_\mathbb{B}(\ttt,\fff)}\langle \fff,k(\langle \ttt,\refl{\ttt}\rangle )\rangle}$$ for $\sem{t}$, while only Opponent is allowed to restrict the fibre.

Moreover, the factorisation construction of \cite{laird1997full} does not give valid well-bracketed context morphisms when applied naively in $\Ctxt(\Gamecat_!)$, so it is not clear how a definability result could be established for this model.

\subsection{Lessons for Combining Dependent Types and Effects}
The models of dependent type theory presented in this section can be seen as assigning dependent types to effectful programs under a CBN equational theory. We mean that in the following way. The strategies we have considered are known to correspond very closely to programs with recursion, local ground references, finite non-determinism and control operators with CBN evaluation \cite{abramsky2000full,
abramsky1996linearity,
harmer1999fully,
laird1999semantic}. Rather than the usual simple types, modelled in usual game semantics, we have considered more precise types for these programs here, modelled by dependent games. While what we arrive at  clearly represents game theoretic model of a CBN dependent type theory with some effects, it should be further examined how freely these effects are allowed to occur.

One thing that the semantics suggests is the interpretation of fixpoint combinators and non-deterministic choice may be difficult at (projection) $\Sigma$-types, unless types behave as suitable homomorphisms. This, however, seems to cause a tension with the interpretation of $\Id$-types. There is a real question, therefore, whether types should act as homomorphisms. Even stronger is the result that we saw that the interpretation of universal control operators at $\Sigma$-types can fail in this model and, generally, is known to cause degeneracy. In many cases, we see that the interpretation of effects at (projection) $\Sigma$-types can be problematic. That does not mean, however, that we should not include primitives for effects (like fixpoint combinators) at other types.

A more general, but related question that this semantics raises is whether one should aim for dependent type theory with unrestricted effects in the first place as this seems to lead to many technical challenges.  We address this question further in the next chapter. In hindsight, based on \mccorrect{lessons} learnt both in this chapter and in the other chapters of this thesis, we now believe the answer should be no. In most cases, it appears both safer and more useful to us to restrict the use of effects with the type system. A prime aim for future work should, therefore, in our opinion, be the development of a game semantics for dependently typed CBPV.
\begin{savequote}[8cm]
The impossible often has a kind of integrity which the merely improbable lacks.
  \qauthor{--- Douglas Adams}
\end{savequote}

\chapter{\label{ch:3}Dependently Typed Call-by-Push-Value (dCBPV)}
Dependent types \cite{hofmann1997syntax} are slowly being taken up by the functional programming community and are in the transition from a quirky academic hobby into a practical approach to building certified software. Purely functional dependently typed languages like Coq \cite{Coq:manual} and Agda \cite{norell2007towards} have existed for a long time. If the technology is to become more widely used in practice, however, it is crucial that dependent types can be smoothly combined with the wide range of effects that programmers make use of in their day to day work, like non-termination and recursion, mutable state, input and output, non-determinism, probability and non-local~control.

Although some languages exist which combine dependent types and effects, like Cayenne \cite{augustsson1998cayenne}, $\Pi\Sigma$ \cite{altenkirch2010pisigma}, Zombie \cite{casinghino2014combining}, Idris \cite{brady2013idris}, Dependent ML \cite{xi1999dependent} and F$\star$ \cite{swamy2015dependent}, there have always been some strict limitations. For instance, the first four only combine dependent types with unrestricted recursion (although Idris has good support for emulating other effects), Dependent ML constrains types to depend only on static natural numbers and F$\star$ does not allow types to depend on effectful terms at all (including non-termination). Somewhat different is Hoare Type Theory (HTT) \cite{nanevski2006polymorphism}, which defines a programming language for writing effectful programs as well as a separation logic encoded in a system of dependent types for reasoning about these programs. We note that the programming fragment is not merely an extension of the logical one, which would be the elegant solution suggested by the Curry-Howard~correspondence.

The sentiment of most papers discussing the marriage of these ideas seems to be that dependent types and effects form a difficult though not impossible combination. However, as far as we are aware, treatment has so far been on a case-by-case basis and no general theoretical analysis has been given which discusses, on a conceptual level, the possibilities, difficulties and impossibilities of combining general computational effects and dependent~types.

In a somewhat different vein, there has long been an interest in combining linearity and dependent types. This combination was first studied from the point of view of syntax in Cervesato and Pfenning's LLF \cite{cervesato1996linear}. To this, the author added a semantic perspective, as described in chapter \ref{ch:4}, which has  proved important e.g. in the development of the game semantics for dependent types of chapter \ref{ch:5}. One aspect that this abstract semantics as well as the study of particular models highlight is -- more so than in the simply typed case -- the added  insight and flexibility obtained by decomposing the $!$-comonad into an adjunction\footnote{Indeed, connectives seem to be most naturally formulated on either the linear or cartesian side: $\Sigma$- and $\Id$-constructors operate on cartesian types while $\Pi$-constructors operate on linear types.}. This corresponds to working with dependently typed version of Benton's LNL calculus \cite{benton1995mixed} rather than Barber and Plotkin's DILL \cite{barber1996dual}, as was done in \cite{krishnaswami2015integrating}.

Similarly, it has proved problematic to give a dependently typed version of Moggi's monadic metalanguage \cite{moggi1991notions}. We hope that this chapter illustrates that also in this case a decomposed adjunction perspective, like that of CBPV \cite{levy2012call}, is more flexible than a monadic perspective. (Recall that if we decompose both linear logic and the monadic metalanguage into an adjunction, we can see the former to be a restricted case of the latter which only describes (certain) commutative effects.) In particular, it turns out that the distinction that CBPV makes between \emph{dynamic computations} and \emph{static values} (including thunks of computations) is crucial. 

In this chapter, we show that the analysis of dDILL of chapter \ref{ch:4} generalises straightforwardly to general (non-commutative) effects to give a dependently typed CBPV calculus that we call dCBPV-, which allows types to depend on values (including thunks of computations) but which lacks a Kleisli extension (or sequencing) principle for dependent functions. This calculus is closely related to Harper and Licata's dependently typed polarized intuitionistic logic \cite{licata2009positively}. Its categorical semantics is obtained from that (see section \ref{sec:deplnl}) for the dependent LNL calculus, by relaxing a condition on the adjunction which would normally imply, among other things, the commutativity of the effects described (and by dropping the symmetric monoidal closed structure on $\Dcat$).

It straightforwardly generalises Levy's adjunction models for CBPV \cite{levy2005adjunction} (from locally indexed categories to more general comprehension categories \cite{jacobs1993comprehension}) and, in a way, simplifies Moggi's strong monad models for the monadic metalanguage \cite{moggi1991notions}, as was already anticipated by Plotkin in the late 80s: in a dependently typed setting the monad strength follows straightforwardly from the natural demand that its adjunction is compatible with substitution and, similarly, the distributivity of coproducts follows from their compatibility with substitution. In fact, we believe the categorical semantics of simply typed CBPV is most naturally understood as a special case of a that of dCBPV-. Particular examples of models are given by models of the dependent LNL calculus and by Eilenberg-Moore adjunctions for strict indexed monads on models of pure DTT. The small-step operational semantics for CBPV of \cite{levy2012call} transfers to dCBPV- without any difficulties with the expected subject reduction and (depending on the effects considered) strong normalization and determinacy results.

When formulating  candidate CBV and CBN translations of DTT into dCBPV-, it becomes apparent that the latter is only well-defined if we work with the weak (non-dependent) elimination rules for positive connectives, while the former is ill-defined altogether. To obtain a CBV translation and the CBN translation in all its glory, we have to add a principle of Kleisli extensions (or sequencing) for dependent functions to dCBPV-. Such a principle also seems appealing from the point of view of compositionality of the system. We call the resulting calculus dCBPV+, to which we can easily extend our categorical and operational semantics. Normalization and determinacy results for the operational semantics of the pure calculus remain the same. However, depending on the effects we consider, subject reduction may fail. We analyse on a case-by-case basis the principle of dependent Kleisli extensions in dCBPV- models of a range of effects. We see that it is not always valid, depending on the effects under consideration. These technical challenges make it questionable if the extra expressive power of dCBPV+ is worth the extra complications. Therefore, as an alternative, we discuss the possibility of extending dCBPV- with some extra connectives to a dependently typed eriched effect calculus (EEC) \cite{egger2009enriching}. This increases its expressive power in a slightly different way, but we argue that, similarly to dependent Kleisli extensions, it also restores compositionality, in a sense that we make precise.

On the one hand, we hope this analysis gives a helpful theoretical framework in which we can study various combinations of dependent types and effects from an algebraic, denotational and operational point of view. It gives a robust motivation for the equations we should expect to hold in both CBV and CBN versions of effectful DTT, through their translations into dCBPV, and it guides us in modelling dependent types in effectful settings like game semantics.

On the other, noting that not all effects correspond to sound logical principles, an~ expressive system like CBPV or a monadic language, with fine control over where effects occur,~is~an excellent combination with dependent types as it allows us to  use the language both for writing effectful programs and pure logical proofs about these programs. Similar to HTT in aim, but different in  implementation, we hope that dCBPV can be expanded in future to an elegant language, serving both for writing effectful programs and for reasoning about~them.

In section \ref{sec:deptypesandeffects}, we explain why the combination of effects and dependent types is not straightforward. Next, in sections \ref{sec:depcbpvwoklext} and \ref{sec:depcbpvklei}, we study the syntax, categorical semantics, a range of concrete models and operational semantics for, respectively, dCBPV- and dCBPV+. After that, we discuss dependent projection products (additive $\Sigma$-types) in section \ref{sec:depprojprod} and show that we encounter technical challenges, similar to those caused by dependent Kleisli extensions. In section \ref{sec:depklextbugorfeature}, we discuss the pros and cons of dependent Kleisli extensions, to introduce the dependently typed enriched effect calculus in section \ref{sec:moreconn} as a better behaved extension of dCBPV-. We end on a brief comparison with HTT in section \ref{sec:httcomparison}.

\begin{remark}[Related Publications] This chapter is largely based on \cite{vakar2015framework,vakar2016effectful}. In these preprints we incorrectly conjectured that subject reduction of dCBPV+ could be restored through appropriate subtyping conditions. While necessary, we have since realised that such subtyping conditions are likely not to be sufficient. Independently from our work \cite{vakar2015framework} on dependently typed CBPV, \cite{ahman2016depty} arrived at a syntax very similar to dCBPV- and an equivalent categorical semantics, presented in terms of fibrations rather than indexed categories. Where we additionally give a study of operational semantics, dCBPV+ and CBV and CBN translations, concrete models coming from indexed monads and dLNL and more connectives, \cite{ahman2016depty} gives a detailed exposition of algebraic effects in dCBPV-.
\end{remark}

\section{Dependent Types and Effects?}\label{sec:deptypesandeffects}
We believe that it is clear that the combination of dependent types and effects is important, being both of a fundamental theoretical interest and of a very practical interest in verifying real world code. Why is the combination not straightforward, however?

A first obstacle that we discussed in chapter \ref{ch:1} is that dependent types are largely useful for verification purposes. Therefore, it is important to be able to guarantee the logical consistency of a dependently typed language. At the same time, effects tend to introduce inconsistency. This is easily addressed by encapsulating effects in (strong) monads, to be thought of as logical modalities on the type system. This suggests we should pursue a dependently typed version of Moggi's monadic metalanguage rather than a dependent type theory with unrestricted effects.

A second conundrum that we face in formulating such a (monadic)  dependently typed language is related to sequencing of computations. A \emph{pure dependent type theory} involves the crucial dependent composition operation 
$$f:A\Rightarrow B, g:\Pi_{y:B} C\vdash f;g :\Pi_{x:A} C[f(x)/y],$$
which can be interpreted as saying that if we can prove that a predicate is universally true and we provide a witness, we can derive that the predicate holds for the witness. Similarly, a \emph{monadic effectful simple type theory} gets much of its power from the sequencing operation 
$$f:A\Rightarrow TB, g:B\Rightarrow TC \vdash f;g^* :A\Rightarrow TC,$$
which lets us first perform one effectful computation $f$ and then take its result as an input when we perform $g$ next. One would expect an effectful dependent type theory to combine both substitution operations. In particular, we would perhaps expect to be able to substitute effectful computations into dependent functions like
$$f:A\Rightarrow TB, g:\Pi_{y:B} TC\vdash f;g^* :\Pi_{x:A} TC[f(x)/y].$$
What could this mean? $TC$ is a type depending on $y:B$, while we are trying to substitute a value $f(x):TB$ in it. Semantically, this principle would correspond to having a Kleisli extension principle for dependent functions, which has so far -- as far as we are aware -- only been considered in \cite{hottbook} for a limited class of modalities $T$ and where $TC$ is a predicate on $TB$ which can be restricted to $B$ along $B\ra{\eta_B}TB$.

A third, closely related challenge is how we should interpret a type $B[M/x]$ into which we have substituted an effectful computation $M$. That is, what does type checking $N:B[M/x]$ constitute? We seem to have at least two choices. Do we first evaluate $M$ (as a dynamic computation) and substitute the result  $V$ into the type and type check $N:B[V/x]$? Or do we consider $B$ as expressing a property of the effectful computation $M$ and not just its outcome, meaning that we normalise the thunk of $M$ (as a static value) and type check \mccorrect{$N:B[\nf{(\thunk M)}/x]$}? We see that the dual r\^oles of $M$ as a dynamic computation and $\thunk M$ as a static value are crucial. This suggests that a CBPV perspective which distinguishes between values and computations is more suitable for understanding effectful dependent type theory than a monadic language without such an explicit distinction.

In formulating a dependently typed version of CBPV, we need to decide whether types can depend on identifiers of computation types or on those of value types (or on both). This precisely corresponds to the two interpretations of type checking $N:B[M/x]$ described above: type checking $N:B[M/{\nil}]$ or $N:B[\thunk M/x]$. We believe that type dependency on computations is problematic in a similar way that type dependency on identifiers of linear type is. (See chapter \ref{ch:4}.) Indeed, the dynamic nature of computations means that their evaluation might be non-deterministic or non-terminating or might even produce other side effects like write to disk. This means that type dependency on computations would mean throwing overboard the idea of types as providing static guarantees; it would erase the significance of the phase distinction in typed programming. One could argue that what we are left with could also be achieved by writing an untyped program with some suitably placed $\mathtt{Assert}$ statements.  Non-determinism (and reading state) would mean that type checking no longer provides guarantees. Indeed, our program may have passed the type check by chance as it happened to have chosen a safe trace. Non-termination would clearly make type checking undecidable. Other side effects like writing to disk could change type checking from a harmless procedure to something to think about carefully. We hope to have convinced the reader that we should focus on types depending solely on values.

Having addressed the first and third concerns though, the second concern still remains and we shall answer it in the course of this chapter. Indeed, part of the original motivation for CBPV was that both CBV and CBN versions of effectful type theories can be encoded in it. For that purpose, it is crucial to have sequencing operations for computations. Accordingly, we shall see that we need a dependent Kleisli extension like sequencing principle for dependently typed computations to obtain CBV and CBN translations from dependent type theory with unrestricted effects. We study both a calculus, dCBPV-, without and one, dCBPV+, with such a dependent effectful sequencing principle and discuss their virtues and vices.

\section{dCBPV without Dependent Kleisli Extensions (dCBPV-)}\label{sec:depcbpvwoklext}
	In this section, we show how the results of section \ref{sec:backcbpv} have an elegant dependently typed generalization, by allowing types to depend on values. We first consider a system in which we only allow sequencing $\toin{M}{x}{N}$ of a dependent function $N$ if the result type of $N$ does not depend on the identifier $x$ that the result of $M$ is bound to.
	
\subsection{Syntax}
The syntax of CBPV generalises straightforwardly to dependent types. As anticipated already by Levy \cite{levy2012call}, we only need to take care in the rule for $\toin{M}{x}{N}$. He suggested that the return type $\ct{B}$ of $N$ should not depend on $x$ in this rule. We shall apply this restriction as well for the moment. We call the resulting system \emph{dependently typed call-by-push-value without dependent Kleisli extensions}, or dCBPV-. We shall later revisit this assumption and study a system dCBPV+ in which we do allow such Kleisli extensions for dependent functions.

We distinguish between the following objects: contexts $\Gamma;\Delta$, where $\Gamma$ is a (cartesian) region consisting of identifier declarations of value types and $\Delta$ is a (linear) region for identifier declarations of computation type and where we write $\Gamma$ as a shorthand for $\Gamma;\cdot$, value types $A$, computation types $\ct{B}$, values $V$, computations $M$ and stacks $K$. The type theory talks about these objects using the judgements of figure \ref{fig:judgements}.

To derive these judgements, we have, to start with, rules, which we shall not list, which state that all judgemental equalities are equivalence relations and that all term, type and context constructors as well as substitutions respect judgemental equality. In similar vein, we have conversion rules which state that we may swap contexts and types for judgementally equal ones in all judgements. Additionally, we demand the obvious (admissible) substitution rules for both kinds of identifiers in all judgements as well as weakening rules for identifiers of value type (but not computation type). To form contexts, we have the rules of figure~\ref{fig:ctxtrules}.

\begin{figure}[!tb]
\fbox{\resizebox{\textwidth}{!}{
\begin{tabular}{ll}
\textbf{Judgement} & \textbf{Intended meaning}\vspace{2pt}\\
$\vdash \Gamma;\Delta \;\ctxt$ & $\Gamma;\Delta$ is a valid context\\
$\Gamma \vdash A\;\vtype$ &  $A$ is a value type in context $\Gamma$\\
$\Gamma \vdash \ct{B}\;\ctype$ &  $\ct{B}$ is a computation type in context $\Gamma$\\
$\Gamma\vdash V:A$ & $V$ is a value of type $A$ in context $\Gamma$\\
$\Gamma;\Delta\vdash K:\ct{B}$ & $K$ is a computation/stack of type $\ct{B}$ in context $\Gamma;\Delta$\\
$\vdash \Gamma;\Delta = \Gamma';\Delta'$\hspace{40pt} & $\Gamma;\Delta$ and $\Gamma';\Delta'$ are judgementally equal contexts\\
$\Gamma\vdash A=A'$ & $A$ and $A'$ are judgementally equal value types in context $\Gamma$\\
$\Gamma\vdash \ct{B}= \ct{B'}$ & $\ct{B}$ and $\ct{B'}$ are judgementally equal computation types in context $\Gamma$\\
$\Gamma\vdash V= V':A$ & $V$ and $V'$ are judgementally equal values of type $A$ in context $\Gamma$\\
$\Gamma;\Delta\vdash K= K':\ct{B}$ & $K$ and $K'$ are judgementally equal computations/stacks of type $\ct{B}$ in context $\Gamma;\Delta$
\end{tabular}}}
\normalsize
\caption{\label{fig:judgements} Judgements of dependently typed CBPV.}
\end{figure}
\begin{figure}[!tb]
\fbox{
\resizebox{\linewidth}{!}{
\begin{tabular}{ll}
\AxiomC{}
\UnaryInfC{$\cdot;\cdot \ctxt$}
\DisplayProof
& \\
& \\
\AxiomC{$\vdash\Gamma;\Delta\ctxt$}
\AxiomC{$\Gamma\vdash A\;\vtype$}
\BinaryInfC{$\vdash \Gamma,x:A;\Delta \ctxt$}
\DisplayProof\hspace{100pt}\;

&
\AxiomC{$\vdash \Gamma;\cdot\ctxt$}
\AxiomC{$\Gamma \vdash \ct{B}\ctype$}
\BinaryInfC{$\vdash \Gamma;\ct{B}\ctxt$}
\DisplayProof\hspace{100pt}\;
\end{tabular}
}
}
\caption{\label{fig:ctxtrules} Rules for forming contexts, where $x$ and $\nil$ are assumed to be fresh identifiers.}
\end{figure}
To form types, we have the rules of figure \ref{fig:depcbpvtypes}.
\begin{figure}[!tb]
\fbox{
\resizebox{\linewidth}{!}{
\begin{tabular}{ll}
\AxiomC{$\Gamma,x:A,\Gamma'\vdash A'\vtype$}
\AxiomC{$\Gamma\vdash V:A$}
\BinaryInfC{$\Gamma,\Gamma'[V/x]\vdash A'[V/x]\vtype$}
\DisplayProof\hspace{60pt}\;
&
\AxiomC{$\Gamma,x:A,\Gamma'\vdash \ct{B}\ctype$}
\AxiomC{$\Gamma\vdash V:A$}
\BinaryInfC{$\Gamma,\Gamma'[V/x]\vdash \ct{B}[V/x]\ctype$}
\DisplayProof\hspace{60pt}\;\\
&\\
\AxiomC{$\Gamma\vdash \ct{B}\ctype$}
\UnaryInfC{$\Gamma\vdash U\ct{B}\vtype$}
\DisplayProof
& 
\AxiomC{$\Gamma\vdash A\vtype$}
\UnaryInfC{$\Gamma\vdash FA\ctype$}
\DisplayProof
\\
& \\
\AxiomC{$\{\Gamma\vdash A_i\vtype\}_{1\leq i \leq n}$}
\UnaryInfC{$\Gamma\vdash \Sigma_{1\leq i\leq n}A_i\vtype$}
\DisplayProof

&
\AxiomC{$\{\Gamma\vdash \ct{B_i}\ctype\}_{1\leq i \leq n}$}
\UnaryInfC{$\Gamma\vdash \Pi_{1\leq i\leq n}\ct{B_i}\ctype$}
\DisplayProof\\
&\\
\AxiomC{$\Gamma,x:A\vdash A'\vtype$}
\UnaryInfC{$\Gamma\vdash \Sigma_{x:A}A'\vtype$}
\DisplayProof

&
\AxiomC{$\Gamma,x:A\vdash \ct{B}\ctype$}
\UnaryInfC{$\Gamma\vdash \cpi{x:A}{\ct{B}}\ctype$}
\DisplayProof\\
&\\
\AxiomC{$\vdash \Gamma\ctxt$}
\UnaryInfC{$\Gamma\vdash 1\vtype$}
\DisplayProof
&\\
&\\
\AxiomC{$\Gamma\vdash V:A$}
\AxiomC{$\Gamma\vdash V':A$}
\BinaryInfC{$\Gamma\vdash \Id_A(V,V')\vtype$}
\DisplayProof

\end{tabular}
}
}
\caption{\label{fig:depcbpvtypes} Rules for type formation.}
\end{figure}
For these types, we consider the values, computations and stacks formed using the rules of figure \ref{fig:vcdepterms}.
\begin{figure}[!tb]
\centering
\fbox{\resizebox{\linewidth}{!}{
\begin{tabular}{ll}
\AxiomC{}
\UnaryInfC{$\Gamma,x:A,\Gamma'\vdash x:A$}
\DisplayProof\hspace{50pt} & \AxiomC{$\Gamma\vdash V:A$}
\AxiomC{$\Gamma,x:A,\Gamma'\vdash^{} W:{A'}$}
\BinaryInfC{$\Gamma,\Gamma'[V/x]\vdash^{} \lbi{x}{V}{W} :{A'[V/x]}$}
\DisplayProof\\
&\\
& \AxiomC{$\Gamma\vdash V:A$}
\AxiomC{$\Gamma,x:A,\Gamma';\Delta\vdash^{} K:\ct{B}$}
\BinaryInfC{$\Gamma,\Gamma'[V/x];\Delta[V/x]\vdash^{} \lbi{x}{V}{K} :\ct{B}[V/x]$}
\DisplayProof
\\
&\\
\AxiomC{}
\UnaryInfC{$\Gamma;\nil:\ct{B}\vdash \nil:\ct{B}$}
\DisplayProof &
\AxiomC{$\Gamma;\Delta\vdash^{} K:\ct{B}$}
\AxiomC{$\Gamma;\nil:\ct{B}\vdash L:\ct{C}$}
\BinaryInfC{$\Gamma;\Delta\vdash^{} \lbi{\nil}{K}{L}:\ct{B}$}
\DisplayProof\\
&\\
\AxiomC{$\Gamma\vdash V:A$}
\UnaryInfC{$\Gamma;\cdot\vdash \return\; V:FA$}
\DisplayProof &
\AxiomC{$\Gamma;\Delta\vdash K:FA$}
\AxiomC{$\Gamma,x:A,\Gamma';\cdot\vdash N:\ct{B}$}
\AxiomC{$\vdash \Gamma,\Gamma';\nil:\ct{B}\ctxt$}
\TrinaryInfC{$\Gamma ,\Gamma';\Delta\vdash \toin{K}{x}{N}:\ct{B}$}
\DisplayProof\\
&\\
\AxiomC{$\Gamma;\cdot\vdash M:\ct{B}$}
\UnaryInfC{$\Gamma\vdash \thunk M:U\ct{B}$}
\DisplayProof
&
\AxiomC{$\Gamma\vdash V: U\ct{B}$}
\UnaryInfC{$\Gamma;\cdot\vdash \force V: \ct{B}$}
\DisplayProof\\
&\\
\AxiomC{$\Gamma\vdash V_i: A_i$}
\UnaryInfC{$\Gamma\vdash \langle i,V_i\rangle : \Sigma_{1\leq i\leq n}A_i$}
\DisplayProof
&
\AxiomC{$\Gamma\vdash V: \Sigma_{1\leq i\leq n}A_i$}
\AxiomC{$\{\Gamma,x:A_i\vdash W_i : A'[\langle i,x\rangle /z]\}_{1\leq i\leq n}$}
\BinaryInfC{$\Gamma\vdash \sipm{V}{i}{x}{W_i} : A'[V/z]$}
\DisplayProof\\
&\\

 &
\AxiomC{$\Gamma\vdash V: \Sigma_{1\leq i\leq n}A_i$}
\AxiomC{$\{\Gamma,x:A_i;\Delta[\langle i,x\rangle /z]\vdash K_i : \ct{B}[\langle i,x\rangle /z]\}_{1\leq i\leq n}$}
\BinaryInfC{$\Gamma;\Delta[V/z]\vdash \sipm{V}{i}{x}{K_i} : \ct{B}[V /z]$}
\DisplayProof
\\
&\\
\AxiomC{}
\UnaryInfC{$\Gamma\vdash\langle\rangle :1$}
\DisplayProof&
\AxiomC{$\Gamma\vdash V:1$}
\AxiomC{$\Gamma\vdash W:A'[\langle \rangle /z]$}
\BinaryInfC{$\Gamma\vdash \upm{V}{W}:A'[V /z]$}
\DisplayProof
\\
&\\
&\AxiomC{$\Gamma\vdash V:1$}
\AxiomC{$\Gamma;\Delta[\langle\rangle/z]\vdash K:\ct{B}[\langle\rangle /z]$}
\BinaryInfC{$\Gamma;\Delta[V/z]\vdash \upm{V}{K}:\ct{B}[V/z]$}
\DisplayProof\\
&\\
\AxiomC{$\Gamma\vdash V_1:A_1$}
\AxiomC{$\Gamma\vdash V_2:A_2[V_1/x]$}
\BinaryInfC{$\Gamma\vdash \langle V_1,V_2\rangle :\Sigma_{x:A_1} A_2$}
\DisplayProof
&
\AxiomC{$\Gamma\vdash V: \Sigma_{x:A_1} A_2$}
\AxiomC{$\Gamma,x:A_1,y:A_2\vdash W:A'[\langle x,y\rangle /z]$}
\BinaryInfC{$\Gamma\vdash \sipm{V}{x}{y}{W}:A'[V /z]$}
\DisplayProof\\
&\\
&
\AxiomC{$\Gamma\vdash V: \Sigma_{x:A_1} A_2$}
\AxiomC{$\Gamma,x:A_1,y:A_2;\Delta[\langle x,y\rangle/z]\vdash K:\ct{B}[\langle x,y\rangle /z]$}
\BinaryInfC{$\Gamma;\Delta[V/z]\vdash \sipm{V}{x}{y}{K}:\ct{B}[V /z]$}
\DisplayProof
\\
&\\
\AxiomC{$\Gamma\vdash V:A$}
\UnaryInfC{$\Gamma\vdash \refl{V}:\Id_A(V,V)$}
\DisplayProof &
\AxiomC{$\Gamma\vdash V:\Id_A(V_1,V_2)$}
\AxiomC{$ \Gamma,x:A\vdash W :A'[x/x',\refl{x}/p]$}
\BinaryInfC{$\Gamma\vdash \idpm{V}{x}{W}:A'[V_1/x,V_2/x',V/p]$}
\DisplayProof \\
&\\
&
\AxiomC{$\Gamma\vdash V:\Id_A(V_1,V_2)$}
\noLine
\UnaryInfC{$ \Gamma,x:A;\Delta[x,x/x',\refl{x}/p]\vdash K :\ct{B}[x/x',\refl{x}/p]$}
\UnaryInfC{$\Gamma;\Delta[V_1/x,V_2/x',V/p]\vdash \idpm{V}{x}{K}:\ct{B}[V_1/x,V_2/x',V/p]$}
\DisplayProof \\
&\\
\AxiomC{$\{\Gamma;\Delta\vdash K_i :\ct{B_i}\}_{1\leq i\leq n}$}
\UnaryInfC{$\Gamma;\Delta\vdash \lambda_i K_i : \Pi_{1\leq i\leq n}\ct{B_i}$}
\DisplayProof
&
\AxiomC{$\Gamma;\Delta\vdash K: \Pi_{1\leq i\leq n}\ct{B_i}$}
\UnaryInfC{$\Gamma;\Delta\vdash i\textquoteleft K : \ct{B_i}$}
\DisplayProof\\
&\\
\AxiomC{$\Gamma,x:A;\Delta\vdash K:\ct{B}$}
\UnaryInfC{$\Gamma;\Delta\vdash \lambda_xK:\cpi{x:A}{\ct{B}}$}
\DisplayProof
&
\AxiomC{$\Gamma\vdash V:A$}
\AxiomC{$\Gamma;\Delta\vdash K:\cpi{x:A}{\ct{B}}$}
\BinaryInfC{$\Gamma;\Delta\vdash V\textquoteleft K : \ct{B}[V/x]$}
\DisplayProof
\end{tabular}
}
}
\caption{\label{fig:vcdepterms} \mccorrect{Values and computations of dCBPV-.} To aid legibility, we have left implicit one of the obvious assumptions $\Gamma,z:A''\vdash A'\vtype$, $\Gamma,z:A''\vdash \ct{B}\ctype$, $\Gamma,x,x':A, p:\Id_A(x,x')\vdash A'\vtype$ and $\Gamma,x,x':A, p:\Id_A(x,x')\vdash \ct{B}\ctype$, in each of the rules for forming pattern matching eliminators $\mathsf{pm}\;V\;\mathsf{as}\;R\;\mathsf{in}\; S$ for values $V$ of type $A''$.}
\end{figure}

We generate judgemental equalities for values and computations through the rules of figure \ref{fig:vceqs} and \ref{fig:vcdepeqs}. Note that we are using extensional $\Id$-types, in the sense of $\Id$-types with an $\eta$-rule. This is only done for the aesthetics of the categorical semantics. They may not be suitable for an implementation, however, as they can (in the presence of $\Pi$-types, c.f. section \ref{sec:moreconn}) make type checking undecidable for the usual reasons \cite{hofmann1997syntax}. The syntax and semantics can just as easily be adapted to intensional $\Id$-types, which are the obvious choice for an implementation.

\begin{figure}[!tb]
\fbox{
\parbox{\linewidth}
{\resizebox{1.02\linewidth}{!}{
\begin{tabular}{ll}
\hspace{-6pt}$\idpm{\refl{V}}{x}{R} = R[V/x]$ \hspace{0pt}& $R[V_1/x,V_2/y,V/z] \stackrel{\#w}{=} \idpm{V}{w}{R[w/x,w/y,\refl{w}/z]}$
\end{tabular}
}}
}
\caption{\label{fig:vcdepeqs} Equations for terms involving reflexivity witnesses. Again, these rules should be read as equations of typed terms in context: they are assumed to hold if we can derive that both sides of the equation are terms of the same type in the same context.}
\end{figure}

\begin{figure}[!tb]
\fbox{
\resizebox{\linewidth}{!}{
\begin{tabular}{l|l||l|l}
\textbf{CBV type}  & \textbf{CBPV type} & \textbf{CBV term } & \textbf{CBPV term}\\
\hline
$\Gamma\vdash A[M/x]\type$ & $\vect{UF}\Gamma^v\vdash A^v[(\thunk M^v)^*/x]\vtype$ & $x_1:A_1,\ldots,x_m:A_m$ & $x_1:A_1^v,\ldots,x_m:A_m^v[\ldots \tr x_{i}/z_{i}\ldots] $\\
 & & $\vdash M:A$ & $;\cdot\vdash M^v:F(A^v[\tr x_1/z_1,\ldots,\tr x_n/z_n])$\\
 && $x$& $\return x$\\
  &&$\lbi{x}{M}{N}$ & $\toin{M^v}{x}{N^v}$ \\
$\Sigma_{1\leq i\leq n }A_i$ & $\Sigma_{1\leq i\leq n }A_i^v$& $\langle i,M\rangle $&$\toin{M^v}{x}{\return \langle i, x\rangle }$ \\
 &&$\sipm{M}{i}{x}{N_i}$& $\toin{M^v}{z}{(\sipm{z}{i}{x}{N_i^v})}$\\
$\Pi_{1\leq i\leq n}A_i $ & $U\Pi_{1\leq i \leq n} FA_i^v$ &$\lambda_iM_i$ &$\return \thunk (\lambda_i  M_i^v)$\\
&&$i\textquoteleft  N $&$\toin{N^v}{z}{(i\textquoteleft \force z)}$\\
$\Pi_{x:A} A'$ & $U(\cpi{x:A^v}{F A'^v[\tr x/z]})$ & $\lambda_x M$&$\return \thunk \lambda_x M^v$\\
&&$M\textquoteleft N$ &$\toin{M^v}{x}{(\toin{N^v}{z}{(x\textquoteleft \force z)})}$\\
$1$ & $1$ & $\langle\rangle$ & $ \return \langle\rangle$  \\
&&$\upm{M}{N}$&$\toin{M^v}{z}{(\upm{z}{N^v})}$\\
$\Sigma_{x:A}  A'$ & $\Sigma_{x:A^v} A'^v[\tr x/z]$ & $ \langle M, N\rangle $  & $\toin{M^v}{x}{(\toin{N^v}{y}{\return \langle x,y\rangle})}$\\
&&$\sipm{M}{x}{y}{N}$&$\toin{M^v}{z}{(\sipm{z}{x}{y}{N^v})}$\\
$\Id_A(M,N)$& {$\Id_{ UF A^v}( \thunk M^v$,} &$\refl{M}$& $\toin{M^v}{z}{\return \refl{\tr z}} $ \\
&{$ \thunk N^v)$}&$\idpm{M}{x}{N}$& $\toin{M^v}{z}{(\idpm{z}{y}{}}$\\
&&& $(\toin{\force y}{x}{N^v}))$
\end{tabular}}
}
\caption{\label{fig:depcbvtrans} A translation of dependently typed CBV into dCBPV. We write $\tr$ as an abbreviation for $\thunk\return$, $\vect{UF}\Gamma:=z_1:UF A_1,\ldots,z_n:UFA_n$ for a context $\Gamma=x_1:A_1,\ldots,x_n:A_n$ and $V^*$ for $\thunk(\toin{\force z_1}{x_1}{\ldots \toin{\force z_n }{x_n}{\force V}})$.}
\end{figure}
\begin{figure}[!tb]
\fbox{
\resizebox{\linewidth}{!}{
\begin{tabular}{l|l||l|l}
\textbf{CBN type}  & \textbf{CBPV type} & \textbf{CBN term } & \textbf{CBPV term }\\
\hline
${\Gamma}\vdash\ct{B}[M/x]\type$ &  $\vect{U}{\Gamma^n}\vdash\ct{B}^n[\thunk M/x]\ctype$& $x_1:\ct{B}_1,\ldots,x_m:\ct{B}_m\vdash M:\ct{B}$& $x_1:U\ct{B}_1^n,\ldots,x_m:U\ct{B}_m^n;\cdot\vdash M^n:\ct{B}^n$ \\
 && $x$& $\force x$\\
  & & $\lbi{x}{M}{N}$ & $\lbi{x}{(\thunk M^n)}{N^n}$ \\
$\Sigma_{1\leq i\leq n }\ct{B}_i$ & $F\Sigma_{1\leq i\leq n }U\ct{B}_i^n$& $\langle i,M\rangle $&$\return \langle i,\thunk M^n\rangle $ \\
 & &$\sipm{M}{i}{x}{N_i}$&$\toin{M^n}{z}{(\sipm{z}{i}{x}{N_i^n})}$ \\
$\Pi_{1\leq i\leq n}\ct{B}_i $ & $\Pi_{1\leq i \leq n} \ct{B}_i^n$ & $\lambda_iM_i$& $\lambda_iM_i^n$\\
 && $i\textquoteleft M$ & $i\textquoteleft M^n$\\
$\Pi_{x:\ct{B}} \ct{B'}$ & $\cpi{x:U\ct{B}^n}{\ct{B'}^n}$ & $\lambda_x M $& $\lambda_xM^n$\\
 &&$N\textquoteleft M$ & $(\thunk N^n) \textquoteleft M^n$ \\
$1$ & $F1$ & $\langle\rangle$ & $\return \langle\rangle$  \\
&&$\upm{M}{N}$&$\toin{M^n}{z}{(\upm{z}{N^n})}$\\
$\Sigma_{x:\ct{B}} \ct{B'}$ & $F(\Sigma_{x:U\ct{B}^n}  U\ct{B'}^n)$ & $\langle M, N\rangle $  & $\return \langle \thunk M^n,\thunk N^n\rangle$\\
&& $\sipm{M}{x}{y}{N}$& $\toin{M^n}{z}{(\sipm{z}{x}{y}{N^n})}$ \\
$\Id_{\ct{B}}(M,M')$&$F(\Id_{U\ct{B}}(\thunk M^n$ & $\refl{M}$& $\return \refl{\thunk M^n}$\\
&$,\thunk M'{}^n))$&$\idpm{M}{x}{N}$& $\toin{M^n}{z}{(\idpm{z}{x}{N^n})}$
\end{tabular}
}
}
\caption{\label{fig:depcbntrans} A translation of dependently typed CBN into dCBPV. We write $\vect{U}\Gamma:=x_1:U A_1,\ldots,x_n:UA_n$ for a context $\Gamma=x_1:A_1,\ldots,x_n:A_n$.}
\end{figure}

Figures \ref{fig:depcbvtrans} and \ref{fig:depcbntrans} indicate the natural candidate CBV and CBN translations of DTT into dCBPV, where we interpret $\Sigma$-types as having a pattern matching eliminator, as opposed to projection eliminators\footnote{To give the translations of projection $\Sigma$-types, we would need dependent connectives generalising $\Pi_{1\leq i\leq n}$ on computation types. These are the equivalents of additive $\Sigma$-types and they are similarly problematic. We discuss them in section \ref{sec:depprojprod}.}.

However, it turns out that without dependent Kleisli extensions, the CBV translation is not well-defined as it results in untypable terms. The CBN translation is, but only if we restrict to the weak (non-dependent) elimination rules for $\Sigma_{1\leq i\leq n}$-, $1$-, $\Sigma$- and $\Id$-types, meaning that the type we are eliminating into does not depend on the type being eliminated from. For an alternative to the CBV translation, we would expect the CBV translation to factorise as a translation into a dependently typed equivalent of Moggi's' monadic metalanguage, followed by a translation from this monadic language into dCBPV. It is, in fact, the former that is ill-defined if we do not have a principle of Kleisli extensions in our monadic language (or, correspondingly, in dCBPV). What we can define  is a translation from a dependently typed monadic language (without dependent Kleisli extensions) into dCBPV-. In this case, we can use the strong (dependent) elimination rules for all positive connectives.

By analogy with the simply typed scenario, it seems very likely that one would be able to state soundness and completeness results for these translations, if one used the canonical equational theories for CBV and CBN dependent type theory. As we are not aware of any such equational theories being described in literature, one could imagine \emph{defining} the CBV and CBN equational theory on dependent type theories through their translations into CBPV.

\subsection{Categorical Semantics}
We have now reached the point in the story that was our initial motivation to study dependently typed CBPV: its very natural categorical semantics. Note that we have the following elegant generalization of our reformulated notion of categorical model for simple CBPV of section \ref{sec:scbpvsem}.
\begin{definition}[dCBPV- Model] By a categorical model of dCBPV-, we shall mean the following data.
\begin{itemize}
\item an indexed category $\Bcat^{op}\ra{\Ccat}\Cat$ of \emph{values} with full and faithful democratic comprehension (including an indexed terminal object $1$);
\item an indexed category $\Bcat^{op}\ra{\Dcat}\Cat$ of \emph{computations and stacks};
\item strong $0,+$-types in $\Ccat$ such that, additionally, the following induced maps are bijections:
$$\Dcat(C.\Sigma_{1\leq i\leq n} C_i )(\ct{D},\ct{D'})\ra{}\Pi_{1\leq i \leq n}\Dcat(C.C_i)(\ct{D}\{\proj{C}{\langle i,\id_{C_i}\rangle }\},\ct{D'}\{\proj{C}{\langle i,\id_{C_i}\rangle }\});$$
\item an indexed adjunction \mbox{
\begin{diagram}
\Dcat & \pile{\lTo^F\\\bot\\\rTo_U} &\Ccat;
\end{diagram}}
\item $\cpi{-}{}$-types in $\Dcat$ in the sense of having right adjoint functors $-\Dcat(\proj{A}{B})\dashv \cpi{B}{}$ satisfying the right Beck-Chevalley condition for $\mathbf{p}$-squares;
\item finite indexed products $(\top,\&)$ in $\Dcat$;
\item strong $\Sigma$-types in $\Ccat$;
\item strong extensional $\Id$-types in $\Ccat$\footnote{In case we work with intensional $\Id$-types, we should add the additional condition, which corresponds to pattern matching for stacks, that says that the canonical map $\Dcat(A.A'.A'.\Id_{A'})(\ct{B},\ct{B'})\ra{}\Dcat(A.A')(\ct{B}\{\langle \diag{A}{A'},\refl{A'}\rangle\},\ct{B'}\{\langle \diag{A}{A'},\refl{A'}\rangle\})$ is a retraction. This map is automatically an isomorphism in our case of extensional $\Id$-types.}.
\end{itemize}
\end{definition}
Again, this semantics is sound and complete.
\begin{theorem}[dCBPV- Semantics]\label{thm:dcbpv-soundcomplete} We have a sound interpretation of dCBPV- in a dCBPV- model:\\
\\
\resizebox{\linewidth}{!}{$
\begin{array}{ll}
\sem{\cdot}  = \cdot &\sem{\Gamma;\cdot}= F1\\
\sem{\Gamma,x:A} = \sem{\Gamma}.\sem{A} & \sem{\Gamma;{\nil}:\ct{B}}=\sem{\ct{B}}\\
\sem{\Gamma\vdash A}=\Ccat(\sem{\Gamma})(1,\sem{A})&
\sem{\Gamma;\Delta\vdash \ct{C}}  = \Dcat(\sem{\Gamma})(\sem{\Delta},\sem{\ct{C}})\\
\sem{A[V/x]}  = \sem{A}\{\qu{\langle \id_{\sem{\Gamma}},\sem{V}\rangle}{\sem{\Gamma'}}\} &
\sem{\ct{B}[V/x]} = \sem{\ct{B}}\{\qu{\langle \id_{\sem{\Gamma}},\sem{V}\rangle}{\sem{\Gamma'}} \}\\
\sem{U\ct{B}} = U\sem{\ct{B}} & \sem{FA}=F\sem{A}
\\
\sem{\Sigma_{1\leq i\leq n}A_i}=(\cdot(\sem{A_1}+\sem{A_2})+\cdots)+\sem{A_n}) &
\sem{\Pi_{1\leq i\leq n}\ct{B}_i}=(\cdot(\sem{\ct{B}_1}\&\sem{\ct{B}_2})\&\cdots)\&\sem{\ct{B}_n})\\
\sem{\Sigma_{x:A} A'} = \Sigma_{\sem{A}} \sem{A'} & \sem{\cpi{x:A}{\ct{B}}} = \cpi{\sem{A}}{\sem{\ct{B}}}\\
\sem{1}=1 &\\
\sem{\Id_A(V,V')}  = \Id_{\sem{A}}\{\langle\langle  \id_{\sem{\Gamma}} , \sem{V}\rangle ,\sem{V'}\rangle \},
\end{array}$}
\\
\\
together with the obvious interpretation of terms. The interpretation in such categories is complete in the sense that an equality of terms of types holds in all interpretations iff it is provable in the syntax of dCBPV-. In fact, the interpretation defines a 1-1 correspondence between categorical models and syntactic theories in dCBPV- which satisfy mutual soundness and completeness results.
\end{theorem}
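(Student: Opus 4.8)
\textbf{Proof plan for Theorem \ref{thm:dcbpv-soundcomplete}.}

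The plan is to follow the standard recipe for soundness and completeness of a categorical semantics of a type theory, proceeding in the following stages. First I would verify \emph{soundness}: that the displayed interpretation is well-defined and respects all the rules of dCBPV-. The key observation is that every piece of the interpretation is an instance of structure we have already characterised semantically. Substitution in types and terms is interpreted by change of base along the morphisms $\qu{\langle \id,\sem{V}\rangle}{}$ (and composites thereof); functoriality of $\Ccat$ and $\Dcat$ together with the Beck--Chevalley conditions for $\cpi{}{}$, $\Sigma$ and $\Id$ (and the stability under change of base of $0,+$-types, $1$-types, products $(\top,\&)$, $U$ and $F$) then give exactly the commutation of the type and term formers with substitution that the syntax demands. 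The $\beta$- and $\eta$-laws of figures \ref{fig:vceqs} and \ref{fig:vcdepeqs} reduce, connective by connective, to the adjunction isomorphisms and universal properties stipulated in the definition of a dCBPV- model: $F\dashv U$ for the $\mathsf{return}/\mathsf{to}$ and $\mathsf{force}/\mathsf{thunk}$ equations, the adjunctions $\Sigma_A\dashv-\{\proj{}{A}\}$ and $-\{\proj{}{A}\}\dashv\cpi{A}{}$ for the $\Sigma$- and $\cpi{}{}$-rules, the product structure for the $\Pi_{1\leq i\leq n}$-rules, the coproduct (mono-ness of copairing on the homsets $\Dcat(C.\Sigma_i C_i)(\ct D,\ct{D'})$) for the $\Sigma_{1\leq i\leq n}$-rules, and the left adjoint $\Id_A\dashv-\{\diag{}{A}\}$ for the $\Id$-rules. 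This is a long but entirely routine induction on derivations; I would present the representative cases ($FA$, $\Sigma_{x:A}A'$, $\cpi{x:A}{\ct B}$, $\Id_A$) and leave the rest to the reader, noting that democracy of the comprehension on $\Ccat$ is what makes $\sem{\Gamma,x:A}=\sem\Gamma.\sem A$ legitimate and that full-and-faithfulness of the comprehension is what lets us identify $\sem{\Gamma\vdash A}=\Ccat(\sem\Gamma)(1,\sem A)$ with $\Tm$-sets in the corresponding CwF.

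Second I would establish \emph{completeness} via the usual syntactic-category construction: from a theory $\mathbb{T}$ in dCBPV- build a ``classifying'' dCBPV- model $\widetilde{\mathbb{T}}$ whose base $\Bcat^{\mathbb{T}}$ has contexts $\Gamma;\cdot$ as objects modulo judgemental equality and context morphisms (lists of values, compatible with substitution, exactly as in the Co-Soundness proof for dDILL) as arrows; $\Ccat^{\mathbb{T}}(\Gamma)$ has value types as objects and values $\Gamma.A\vdash W:B\{\proj{\Gamma}{A}\}$ as morphisms; $\Dcat^{\mathbb{T}}(\Gamma)$ has computation types as objects and stacks $\Gamma;\nil:\ct B\vdash K:\ct C$ as morphisms, with $F\dashv U$, $\Sigma$, $\cpi{}{}$, $(\top,\&)$, $0,+$ and $\Id$ all read off the corresponding term and type formers. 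One then checks, connective by connective (invoking the semantic characterisations of Theorems and the structural-rule verifications already done for dDILL, which transfer essentially verbatim since the cartesian fragment is the same), that $\widetilde{\mathbb{T}}$ is indeed a dCBPV- model, and that $\sem{-}^{\widetilde{\mathbb{T}}}$ is the identity on syntax up to the canonical iso; hence any equation valid in all models, in particular in $\widetilde{\mathbb{T}}$, is derivable. Conversely, soundness gives that derivable equations hold in all models, so we get the stated biconditional.

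Third, for the \emph{1-1 correspondence} claim, I would make precise that ``syntactic theories'' means sets of dCBPV- judgements closed under the inference rules, and organise both theories and models into categories (theories with translations, models with structure-preserving indexed functors preserving all the relevant adjoints and comprehension on the nose). The soundness construction gives a functor $\mathbb{T}\mapsto\widetilde{\mathbb{T}}$ and the internal-language construction gives a functor back; one shows these are mutually inverse up to equivalence by the same bookkeeping as in the dDILL case — the composite ``model $\to$ theory $\to$ model'' is the identity precisely because we demanded full-and-faithful \emph{democratic} comprehension on $\Ccat$ (this is what rules out the failure of co-completeness observed for dDILL), and the composite ``theory $\to$ model $\to$ theory'' is the identity by construction.

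The main obstacle I anticipate is not any single step but the sheer combinatorial bulk of the case analysis for the positive connectives in the presence of the weak (non-dependent) elimination rule for $\toin{K}{x}{N}$: one must be careful that the interpretation of $\mathsf{to}$, which crucially uses that $\ct B$ does \emph{not} depend on $x$, matches the Kleisli-extension-for-$F$ arising from $F\dashv U$, and that the stack (linear) versions of the pattern-matching eliminators correspond exactly to the stipulated isomorphisms on $\Dcat$-homsets (e.g. $\Dcat(C.\Sigma_i C_i)(\ct D,\ct{D'})\cong\prod_i\Dcat(C.C_i)(\dots)$, and the analogous condition for $\Id$ in the intensional variant). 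Getting the bookkeeping of dependent contexts $\Delta[V/z]$ versus the indexed-category change-of-base right in these rules — so that soundness genuinely goes through and is not merely plausible — is where the real work lies; everything else is an adaptation of the dDILL development and of Levy's simply typed CBPV semantics.
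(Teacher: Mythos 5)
Your proposal is correct and follows essentially the same route as the paper, whose own proof is only a sketch stating that the argument "goes almost entirely along the lines of the soundness and completeness proofs for linear dependent type theory" of chapter \ref{ch:4}, with completeness via a syntactic category. Your elaboration — soundness by induction on derivations using the stipulated adjunctions and Beck--Chevalley conditions, completeness via the classifying model, and democracy of the comprehension securing the 1-1 correspondence — is exactly the intended argument, worked out in more detail than the paper provides.
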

\begin{proof}[Proof (sketch)] The proof goes almost entirely along the lines of the soundness and completeness proofs for linear dependent type theory in chapter \ref{ch:4}. Nothing surprising happens in the soundness proof. For the completeness result, we build a syntactic category.\end{proof}
Performing the CBN translation in the semantics, this leads to an induced notion of model for CBN dependent type theory.
\begin{theorem}[Dependent CBN Semantics {1}] The (semantic equivalent of the) CBN translation of DTT with $\Sigma_{1\leq i\leq n}$-, $1$-, $\Sigma$-, $\Id$-, $\Pi_{1\leq i\leq n}$-, $\Pi$-types, where we use the weak (non-dependent) elimination rules for all positive connectives, into dCBPV-, lets us construct a categorical model of CBN dependent type theory with the connectives above out of any model of dCBPV- by taking the co-Kleisli (indexed) category for $!:=FU$. The interpretation of CBN dependent type theory is sound and complete for the equational theory induced from dCBPV-:\\
\resizebox{\linewidth}{!}{
$
\sem{\ct{B_1},\cdots,\ct{B_n}\vdash \ct{B}}=\Dcat(U\sem{\ct{B_1}}.\cdots.U\sem{\ct{B_n}})(F1,\sem{\ct{B}})\cong\Dcat_{!}(U\sem{\ct{B_1}}.\cdots.U\sem{\ct{B_n}})(\top,\sem{\ct{B}}).$
}
\end{theorem}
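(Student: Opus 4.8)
The plan is to prove the statement as a corollary of the completeness of dCBPV- (theorem \ref{thm:dcbpv-soundcomplete}) together with the CBN translation $(-)^n$ of figure \ref{fig:depcbntrans}. I would proceed in three stages: first establish that the co-Kleisli category $\Dcat_!$ of the indexed comonad $!=FU$ is a well-defined indexed category with enough structure; second, identify its categorical structure with the connectives of CBN dependent type theory with the weak elimination rules; and third, transport soundness and completeness along the translation.

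First I would verify that the fibrewise co-Kleisli construction $\Dcat_!$ is a strict indexed category with full and faithful democratic comprehension supporting the relevant type formers. The comprehension follows from the computation $\Dcat_!(\Gamma')(\top, A\{f\})\cong\Dcat(\Gamma')(I,A\{f\})\cong\Bcat/\Gamma(f,\proj{\Gamma}{A})$ exactly as in the analogous passage in section \ref{sec:depgirard}, and I would lift this to full and faithfulness via $\Dcat_!(\Gamma')(A,B)\cong\Dcat(\Gamma)(!A,B)\cong\Dcat(\Gamma.A)(I,B\{\proj{\Gamma}{A}\})\cong\Bcat/\Gamma(\proj{\Gamma}{A},\proj{\Gamma}{B})$, using that $F$ is a left adjoint hence preserves the comprehension structure as in theorem \ref{thm:inttyp}. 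For the type formers: $\Pi$-types in $\Dcat_!$ are the $\cpi{-}{}$-types of $\Dcat$ precomposed with $U$, i.e. $\Pi$ of $A$ and $\ct{B}$ is $U\cpi{A}{\ct{B}}$, which is exactly what the CBN translation of $\Pi$ in figure \ref{fig:depcbntrans} dictates; finite products $(\top,\&)$ in $\Dcat$ give finite products in $\Dcat_!$ via the Seely isomorphisms $!\top\cong I$, $!(A\&B)\cong !A\otimes !B$ (or rather their cartesian-side incarnations — here $\Dcat$ need not be monoidal, so I would instead argue directly that right adjoints and the comonad structure combine to give fibrewise products); and the positive connectives $1$, $\Sigma_{1\leq i\leq n}$, $\Sigma$, $\Id$ are obtained by noting that the CBN translation sends them through $F$ applied to the corresponding value-type constructors, so their weak elimination rules in $\Dcat_!$ follow from the adjunction $F\dashv U$ together with the structure of $\Ccat$ (strong $\Sigma$, strong extensional $\Id$, strong $0,+$, indexed terminal object). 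The restriction to weak elimination is precisely what makes this go through, since a dependent eliminator on the computation side would require the dependent Kleisli extension that dCBPV- lacks.

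Second, I would make the soundness and completeness claim precise by observing that the CBN translation $(-)^n$ is a syntactic translation that is compositional and faithful on term and type constructors, so it induces a functor from the syntactic category of CBN DTT (with the listed connectives and weak elimination) to the syntactic category of dCBPV-, and dually the interpretation $\sem{-}$ of theorem \ref{thm:dcbpv-soundcomplete} composes with this to give an interpretation of CBN DTT into $\Dcat_!$. Soundness then says: if an equation is provable in CBN DTT, its translation is provable in dCBPV-, hence holds in every dCBPV- model, hence the original equation holds in every $\Dcat_!$. Completeness is the converse and is obtained, as usual, by taking the term model: the syntactic dCBPV- theory generated by the image of $(-)^n$ yields, via the 1-1 correspondence of theorem \ref{thm:dcbpv-soundcomplete}, a dCBPV- model whose co-Kleisli category's interpretation of CBN DTT reflects exactly provable equality. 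The displayed isomorphism $\Dcat(U\sem{\ct{B_1}}.\cdots.U\sem{\ct{B_n}})(F1,\sem{\ct{B}})\cong\Dcat_!(U\sem{\ct{B_1}}.\cdots.U\sem{\ct{B_n}})(\top,\sem{\ct{B}})$ is then just the unfolding of the co-Kleisli hom-set together with $F1\cong\;!1$ read in the appropriate fibre.

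The main obstacle I anticipate is the second bullet of the translated structure: verifying that the weak (non-dependent) elimination rules for the positive connectives genuinely transport to $\Dcat_!$, and in particular checking the commutative-conversion-style equations that the CBN equational theory inherits from dCBPV-. One must confirm that the translation clauses for $\sipm{}{}{}{}$, $\upm{}{}$ and $\idpm{}{}{}$ in figure \ref{fig:depcbntrans} — each of which threads the eliminand through a $\toin{M^n}{z}{(-)}$ — validate exactly the weak $\beta$- and $\eta$-laws and no more, so that the induced CBN theory is neither too strong (which would break soundness) nor under-specified. This is a bookkeeping-heavy but conceptually routine check, essentially the dependent refinement of Levy's simply typed CBN soundness argument; I would isolate it as a lemma stating that $(-)^n$ is a faithful translation of CBN DTT with weak positive elimination into dCBPV-, proved by induction on derivations, and then the theorem follows formally. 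A secondary subtlety is ensuring democracy of the comprehension on $\Dcat_!$ survives, which I would handle by noting that $\Bcat$ is already the base of the democratic $\Ccat$ and that $\Dcat_!$ shares this base with the same comprehension objects $\proj{\Gamma}{A}$.
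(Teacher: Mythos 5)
Your overall route coincides with the paper's: the theorem is obtained by combining the structural analysis of $\Dcat_!$ (which the paper carries out in the immediately following ``Dependent CBN Categories'' theorem) with the compositional, faithful CBN translation $(-)^n$ and the soundness/completeness of dCBPV-. However, one step of your proposal is genuinely wrong. You claim that democracy of the comprehension ``survives'' on $\Dcat_!$ because it shares the base $\Bcat$ and ``the same comprehension objects'' with $\Ccat$. This fails: the comprehension objects of $\Dcat_!$ over the terminal context are exactly the $\proj{\cdot}{U\ct{B}}$ for $\ct{B}\in\ob(\Dcat(\cdot))$, i.e.\ only those objects of $\Bcat\cong\Ccat(\cdot)$ lying in the image of $U$, and nothing forces $U$ to be essentially surjective. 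The paper explicitly records the comprehension on $\Dcat_!$ as \emph{possibly undemocratic} for precisely this reason. Since democracy is not part of the notion of CBN model being constructed, dropping the claim costs you nothing, but as stated your argument asserts something false in general and would need to be deleted rather than repaired.

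Two smaller corrections. The $\Pi$-type of $A$ and $\ct{B}$ in $\Dcat_!$ is $\cpi{UA}{\ct{B}}$, an object of the relevant fibre of $\Dcat$, not $U\cpi{A}{\ct{B}}$ (the latter is a value type, hence not an object of $\Dcat_!$ at all); the $U$ acts on the domain, as in the translation clause $\cpi{x:U\ct{B}^n}{\ct{B'}^n}$ and as in the hom-set calculation $\Dcat_!(\Gamma.UA)(B\{\proj{\Gamma}{UA}\},C)\cong\Dcat(\Gamma)(FUB,\cpi{UA}{C})=\Dcat_!(\Gamma)(B,\cpi{UA}{C})$. Similarly, the displayed isomorphism rests on $!\top=FU\top\cong F1$ (because the right adjoint $U$ preserves the terminal object), not on ``$F1\cong\;{!}1$'', which does not typecheck since $!$ acts on computation types. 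The remainder of your argument --- fibrewise products in $\Dcat$ giving products in any co-Kleisli category, the positive connectives handled via the unit and counit of $F\dashv U$ with the $\beta$-law coming from a triangle identity and the general $\eta$-law failing, and soundness/completeness transported along the faithful translation --- matches the paper's treatment.
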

We note that this co-Kleisli category, our notion of a model of CBN dependent type theory, is very close to the usual notion of a model of pure DTT. (We have seen this in chapter \ref{ch:5}, in the context of CBN game semantics!) We note that even if we \mccorrect{start} with extensional $\Id$-types in dCBPV-, we may obtain intensional $\Id$-types in dependent CBN.
\begin{theorem}[Dependent CBN Categories] The co-Kleisli category $\Dcat_{!}$ is an indexed category with full and faithful (possibly undemocratic) comprehension with fibred finite products $\Pi_{1\leq i\leq n}$ as well as $\cpi{-}{}$-types. It supports weak $\Sigma_{1\leq i\leq n}$-, $\Sigma$- and $\Id$-types (non-dependent elimination rules, failure of the general $\eta$-rules).
\end{theorem}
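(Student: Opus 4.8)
The statement asserts that the co-Kleisli category $\Dcat_!$ of the comonad $! = FU$ arising from a dCBPV- model is an indexed category with full and faithful comprehension, carrying fibred finite products, $\cpi{-}{}$-types, and weak $\Sigma_{1\leq i\leq n}$-, $\Sigma$- and $\Id$-types. The plan is to transport each structure from the dCBPV- model along the co-Kleisli construction, exactly as the CBN translation of figure \ref{fig:depcbntrans} dictates, and then verify the relevant adjunctions and Beck--Chevalley conditions fibre-by-fibre. Concretely, I would first recall that $\Dcat_!(\Gamma)(B,C) := \Dcat(\Gamma)(!B, C)$ with identities given by the counit $\der_B$ and composition $g^\dagger; f$, as in the simply typed case of section \ref{sec:backcbpv}; since $!$ and $F \dashv U$ are indexed (compatible with change of base $-\{f\}$ by definition of a dCBPV- model), $\Dcat_!$ is automatically a strict indexed category over $\Bcat$.

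Next I would establish comprehension. Following the computation already displayed in section \ref{sec:depgirard} (in the discussion of the first Girard translation), one has $\Dcat_!(\Gamma')(\top, A\{f\}) \cong \Dcat(\Gamma)(I, A\{f\}) \cong \Bcat/\Gamma(f, \proj{\Gamma}{A})$ using the fibred terminal $\top$ of $\Dcat$ and the representability from the comprehension on $\Ccat$ (via the adjunction $F \dashv U$, which gives $\Dcat(\Gamma)(I, A\{f\}) \cong \Ccat(\Gamma)(1, UA\{f\})$); here I would take $\proj{\Gamma}{A}$ in $\Dcat_!$ to be the comprehension projection $\proj{\Gamma}{UA}$ of the $\Ccat$-model. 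For full and faithfulness of the comprehension functor, the key identity is
\begin{align*}
\Dcat_!(\Gamma)(A,B) = \Dcat(\Gamma)(!A, B) &\cong \Dcat(\Gamma.A)(I, B\{\proj{\Gamma}{A}\})\\
&\cong \Bcat/\Gamma(\proj{\Gamma}{A},\proj{\Gamma}{B}),
\end{align*}
where the first iso is the hom-set bijection $\Dcat(\Gamma)(!A,B)\cong\Dcat(\Gamma.A)(I,B\{\proj{\Gamma}{A}\})$ established in the course of proving theorem \ref{thm:semtype}(6) (it is exactly how $!$-types are characterised via the comprehension functor $U_\Gamma$ having a strong monoidal left adjoint, read off in the special case of a democratic comprehension on $\Ccat$). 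This also shows $\Dcat_!$ need not inherit \emph{democracy} — hence the parenthetical "possibly undemocratic" — since there is no reason a general computation type $!B$ is represented by a context of $\Ccat$-types.

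For the type formers I would argue as follows. Fibred finite products $\Pi_{1\leq i\leq n}$ in $\Dcat$ remain products in $\Dcat_!$ because $\Dcat_!(\Gamma)(A, \Pi_i B_i) = \Dcat(\Gamma)(!A, \Pi_i B_i) \cong \prod_i \Dcat(\Gamma)(!A, B_i) = \prod_i \Dcat_!(\Gamma)(A, B_i)$, and this is stable under change of base since products in $\Dcat$ are; the same one-line hom-set manipulation shows that the $\cpi{-}{}$-types of $\Dcat$, being right adjoints $\Dcat(\proj{\Gamma}{A})\dashv\cpi{A}{}$, give right adjoints in $\Dcat_!$ along the corresponding projections, with the Beck--Chevalley condition for $\mathbf{p}$-squares inherited verbatim from $\Dcat$. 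For the weak positive connectives, I would use the CBN translation: $\Sigma_{1\leq i\leq n}$-types are interpreted by $F\Sigma_i U(-)$, $\Sigma$-types by $F\Sigma_{U(-)}U(-)$, and $\Id$-types by $F\Id_{U(-)}(-)$, using the strong $0,+$-types, strong $\Sigma$-types, and strong $\Id$-types of the $\Ccat$-component of the dCBPV- model together with $F$; one checks the introduction/elimination/$\beta$-rules translate correctly, which is precisely the content of soundness of the CBN translation (theorem \ref{thm:dcbpv-soundcomplete} applied to the translated equational theory). The weakness — i.e. only the non-dependent eliminators, and failure of the general $\eta$-laws — is forced because the $F$-wrapped sums do not admit dependent elimination in an effectful model; this is the CBN analogue of the observation made for game semantics in section \ref{sec:semtype} and for the monadic setting in section \ref{sec:backcbpv}, and I would simply point to a concrete $F$ (e.g. the lift monad or a printing monad) to witness the failure of $\eta$.

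\textbf{Main obstacle.} The routine parts are the hom-set bijections and Beck--Chevalley transport; the delicate point is pinning down exactly \emph{which} rules survive. The hard part will be verifying carefully that the CBN-translated $\Sigma$- and $\Id$-types genuinely satisfy their weak elimination and $\beta$-rules in $\Dcat_!$ (as opposed to only up to the extra bijectivity condition on stacks over sums that is part of the dCBPV- model), and conversely exhibiting clean counterexamples showing that the strong eliminators and the $\eta$-laws fail — so that the adjective "weak" in the statement is sharp rather than merely an artefact of the translation. I expect this to require a small amount of explicit computation with the co-Kleisli composition $g^\dagger; f$ and the unit/counit of $F\dashv U$, rather than abstract nonsense.
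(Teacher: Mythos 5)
Your proposal follows essentially the same route as the paper's proof: comprehension for $\Dcat_!$ is given by $\proj{\Gamma}{UB}$ via the chain $\Dcat_!(\Gamma')(\top,B\{f\})\cong\Ccat(\Gamma')(1,(UB)\{f\})\cong\Bcat/\Gamma(f,\proj{\Gamma}{UB})$, full and faithfulness reduces to that of $\Ccat$'s comprehension through $\Dcat_!(\Gamma)(B,B')\cong\Ccat(\Gamma)(UB,UB')$, products and $\cpi{-}{}$-types transfer by exactly the one-line hom-set manipulations you describe, and the weak positive connectives are interpreted as $F\Sigma_{1\leq i\leq n}U(-)$, $F\Sigma_{U(-)}U(-)$ and $F\Id_{U(-)}$, with the unit and counit of $F\dashv U$ supplying the back-and-forth maps and a triangle identity the $\beta$-law. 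The only slips are cosmetic: the unit object you call $I$ should be $F1\cong FU\top$ in the dCBPV- setting, and the reason democracy may fail is that not every object of $\Bcat$ need arise as $\cdot.UB$ for a computation type $B$ (the image of $U$ may be a proper subcategory of $\Ccat(\cdot)\cong\Bcat$), rather than any failure of $!B$ itself to be represented.
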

\begin{proof}
$!$ being an indexed comonad, it follows that $\Dcat_!$ is an indexed category. $\Dcat_!$ satisfies the comprehension axiom in the sense that we have homset isomorphism
\begin{align*}
\Dcat_!(\Gamma')(\top,B\{f\})&=\Dcat(\Gamma')(FU\top,B\{f\})\\
&\cong\Dcat(\Gamma')(F1,B\{f\})\\
&\cong \Ccat(\Gamma')(1,U(B\{f\}))\\
&= \Ccat(\Gamma')(1,U(B)\{f\})\\
&\cong \Bcat/\Gamma(f,\proj{\Gamma}{UB}).
\end{align*}
As the comprehension functor $\Dcat_!(\Gamma)(B,B')\cong \Ccat(\Gamma)(UB,UB')\ra{}\Bcat/\Gamma(\proj{\Gamma}{UB},\proj{\Gamma}{UB'})$ is a special case of the comprehension functor for $\Ccat$, we know it to be full and faithful. Note that the comprehension may be undemocratic as $\Dcat_!(\cdot)$ is equivalent to the full subcategory of $\Ccat$ on the objects in the image of $U$, which may give a proper subcategory of $\Ccat(\cdot)\cong \Bcat$. 

We know from the simply typed case that fibre-wise products in $\Dcat$ give rise to products in $\Dcat_!$. These are stable under change of base\mccorrect{,} by assumption.

Note that $\Pi$-types directly follow as a special case of $\cpi{-}{}$-types in $\Dcat$:
\begin{align*}
\Dcat_!(\Gamma.UA)(B\{\proj{\Gamma}{UA}\},C)&=\Dcat(\Gamma.UA)(FU(B\{\proj{\Gamma}{UA}\}),C)\\
&=\Dcat(\Gamma.UA)((FUB)\{\proj{\Gamma}{UA}\},C)\\
&\cong\Dcat(\Gamma)(FUB,\cpi{UA}{C})\\
&=\Dcat_!(\Gamma)(B,\cpi{UA}{C}).
\end{align*}
For $\Sigma$-types, we note that we have maps back and forth, given by the unit and counit of the adjunction between $F$ and $U$ which satisfy a $\beta$-law given by one of the triangle identities for the adjunction:
\begin{align*}
\Dcat_!(\Gamma.UA)(B,C\{\proj{\Gamma}{UA}\})&=\Dcat(\Gamma.UA)(FU(B),C\{\proj{\Gamma}{UA}\})\\
&\cong\Ccat(\Gamma.UA)(UB,U(C\{\proj{\Gamma}{UA}\}))\\
&=\Ccat(\Gamma.UA)(UB,(UC)\{\proj{\Gamma}{UA}\})\\
&\cong\Ccat(\Gamma)(\Sigma_{UA}UB,UC)\\
&\cong\Dcat(\Gamma)(F\Sigma_{UA}UB,C)\\
&\leftrightarrows\Dcat(\Gamma)(FUF\Sigma_{UA}UB,C)\\
&=\Dcat_!(\Gamma)(F\Sigma_{UA}UB,C).
\end{align*}
The same argument gives us the corresponding statement for $\Sigma_{1\leq i\leq n}$- and $\Id$-types, using their definition as left adjoint functors.
\end{proof}
We postpone the categorical discussion of models for dependently typed CBV until we add dependent Kleisli extensions to dCBPV- in section \ref{sec:depcbpvklei}. For now, we would just like to point out that $\Ccat$ equipped with the indexed monad $T:=UF$ defines what should be regarded as a model of a dependently typed equivalent of Moggi's monadic metalanguage, without dependent Kleisli extensions.

\begin{theorem}[Dependent monadic metalanguage models]
Given a model $\Ccat\leftrightarrows \Dcat$ of dCBPV-, $T:=UF$ defines an indexed monad on $\Ccat$, which has a generalized notion of strength $\Sigma_ATB\ra{s_{A,B}}T\Sigma_AB$.
\end{theorem}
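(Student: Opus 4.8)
The statement asks for two things: that $T := UF$ carries the structure of an indexed monad on $\Ccat$, and that it admits the generalized strength $\Sigma_A TB \to T\Sigma_A B$. The first part is essentially formal and follows the simply typed template: an indexed adjunction $F \dashv U : \Ccat \leftrightarrows \Dcat$ composes to an indexed monad $T = UF$, with unit $\eta$ the unit of the adjunction, multiplication $\mu := U\epsilon F$ built from the counit $\epsilon$ of the adjunction, and the monad laws following from the triangle identities. The only thing beyond the simply typed case to check is that all this structure is compatible with change of base, i.e. that $\eta$, $\mu$ are indexed natural transformations — but this is immediate from the fact that $F$, $U$ and the (co)unit are all indexed, which is part of the definition of an indexed adjunction. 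So I would dispatch this part in one or two sentences, citing the simply typed fact (e.g. Plotkin's observation recalled in the footnote after the definition of the simple CBPV model) and noting that indexing is preserved throughout.

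The substantive part is the generalized strength. The plan is to construct $s_{A,B} : \Sigma_A TB \to T\Sigma_A B$ — equivalently a morphism $\Sigma_A UF B \to UF\Sigma_A B$ in $\Ccat(\Gamma)$ for $B \in \Ccat(\Gamma.A)$ — by transposing across the various adjunctions available. First I would apply the adjunction $\Sigma_A \dashv -\{\proj{\Gamma}{A}\}$ (strong $\Sigma$-types in $\Ccat$, part of the dCBPV- model): it suffices to give a map $UFB \to (UF\Sigma_A B)\{\proj{\Gamma}{A}\}$ in $\Ccat(\Gamma.A)$. Since $U$ and $F$ are indexed, $(UF\Sigma_A B)\{\proj{\Gamma}{A}\} = UF((\Sigma_A B)\{\proj{\Gamma}{A}\})$, so by functoriality of $UF$ it is enough to produce the unit-like map $B \to (\Sigma_A B)\{\proj{\Gamma}{A}\}$ in $\Ccat(\Gamma.A)$, which is precisely the unit of the adjunction $\Sigma_A \dashv -\{\proj{\Gamma}{A}\}$ (transposing $\id_{\Sigma_A B}$). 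Applying $UF$ to this unit and then transposing back along $\Sigma_A \dashv -\{\proj{\Gamma}{A}\}$ yields $s_{A,B}$. An alternative route, which I would mention as it makes the connection to the simply typed strength transparent, is to observe that in a dCBPV- model one has $\Sigma_A UF B \cong \Sigma_A$ applied pointwise and that the strength is the composite of $\Sigma_A$ applied to $\eta$ with a Beck–Chevalley comparison; but the transposition argument above is cleaner.

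After constructing $s$, I would record the coherence it satisfies — naturality in $A$ and $B$ (both following from naturality of the $\Sigma$-unit and of $UF$ on morphisms, plus Beck–Chevalley for $\Sigma$-types under the relevant pullback squares), and compatibility with $\eta$ and $\mu$ of $T$ (the generalized analogues of the two unit/associativity strength axioms), which reduce to the triangle identities of $F \dashv U$ together with the triangle identities of $\Sigma_A \dashv -\{\proj{\Gamma}{A}\}$. I would not grind through these diagram chases; the point is that each reduces to a standard adjunction identity. The \textbf{main obstacle} is bookkeeping rather than conceptual: one must be careful that the various reindexings $-\{\proj{\Gamma}{A}\}$, $-\{\qu{f}{A}\}$ commute appropriately with $F$, $U$, $\Sigma$ — i.e. that the Beck–Chevalley condition for $\Sigma$-types (available by hypothesis) and the indexedness of the adjunction interact correctly so that $s$ is stable under change of base and natural. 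This is exactly the kind of strictness/coherence check that, as remarked elsewhere in the thesis, is routine but tedious in the indexed (rather than fibrational) formulation, and I would state it as such and leave the full verification to the reader.
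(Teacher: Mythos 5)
Your construction of the generalized strength is exactly the paper's: transpose $\id_{\Sigma_A B}$ along $\Sigma_A \dashv -\{\proj{\Gamma}{A}\}$ to get the unit $B \to (\Sigma_A B)\{\proj{\Gamma}{A}\}$, apply $T$, use that $T$ commutes with reindexing because the adjunction is indexed, and transpose back; the indexed-monad part is likewise dispatched in both by observing that an indexed adjunction composes to an indexed monad. The proposal is correct and takes essentially the same approach as the paper (the paper does not even record the coherence checks you flag, so you are if anything slightly more thorough).
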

\begin{proof} As $F\dashv U$ is an indexed adjunction, $T$ is an indexed monad. We note that, starting from $\id_{\Sigma_AB}$, we can obtain a generalised notion of strength for $T$:
\begin{align*}
\Ccat(\Gamma)(\Sigma_A B ,\Sigma_A B) &\cong \Ccat(\Gamma.A)(B,(\Sigma_AB)\{\proj{\Gamma}{A}\})\\
&\ra{T} \Ccat(\Gamma.A)(TB,T(\Sigma_A B)\{\proj{\Gamma}{A}\})\\
&= \Ccat(\Gamma.A)(TB,(T\Sigma_A B)\{\proj{\Gamma}{A}\})\\
&\cong\Ccat(\Gamma)(\Sigma_ATB,T\Sigma_AB).
\end{align*}
In particular (for the case where $\Gamma=\cdot$, using full and faithful comprehension), we get $\Gamma.TA\ra{}T(\Gamma.A)\in \Bcat$.\end{proof}
\begin{remark}Note that we cannot\mccorrect{,} in general\mccorrect{,} define a costrength $\Sigma_{TA} B\ra{}T\Sigma_A B\{\proj{\Gamma}{\eta_A}\}$ or, therefore, a pairing $\Sigma_{TA}TB\ra{} T\Sigma_A B\{\proj{\Gamma}{\eta_A}\}$. This asymmetry does not occur in the simply typed setting. It can be mended by the addition of Kleisli extensions for dependent functions.\end{remark}

In the simply typed setting, one can factor the CBV translation from the $\lambda$-calculus into CBPV through the monadic metalanguage. While the translation from the dependently typed monadic metalanguage with dependent Kleisli extensions in dCBPV- works fine, we cannot define the obvious CBV translation from dependent type theory into the dependently typed monadic metalanguage, unless we have dependent Kleisli extensions.

\subsection{Some Basic Models}
We can first note that any model of pure dependent type theory is, by using the identity adjunction, in particular, a model of dependently typed CBPV\mccorrect{,} which shows consistency of the calculus.
\begin{theorem}
\mccorrect{dCBPV- is consistent both in the sense that not all terms are identified and in the sense that not all types are inhabited.}
\end{theorem}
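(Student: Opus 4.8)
The plan is to exhibit a single concrete model of dCBPV- in which the equational theory is nondegenerate and in which some type is uninhabited, and then appeal to the soundness direction of Theorem \ref{thm:dcbpv-soundcomplete}. The most economical choice is to take \emph{any} nontrivial model $\Ccat$ of pure dependent type theory with the connectives required (for instance, the syntactic category of pure DTT, or a set-theoretic model such as $\Ccat = \self(\Set)$, which by the results recalled earlier supports $0,+,1,\Sigma,\Pi,\Id$-types, $\Sigma$- and $\Id$-types being strong and extensional) and equip it with the identity indexed adjunction $F = U = \id_{\Ccat}$, so that $\Dcat := \Ccat$. One checks that all the data demanded in the definition of a dCBPV- model are present: the comprehension is full and faithful and democratic, the $0,+$-types are strong (and the bijectivity condition on $\Dcat$-homs into coproduct contexts holds because it is exactly the strength condition in $\Ccat$), the adjunction is trivially indexed, $\cpi{-}{}$-types are just $\Pi$-types, finite indexed products $(\top,\&)$ are the finite products $(1,\times)$ of $\Ccat$, and strong $\Sigma$- and extensional $\Id$-types come straight from $\Ccat$.

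Next I would extract nondegeneracy. By Theorem \ref{thm:dcbpv-soundcomplete} the interpretation is sound, so to show ``not all terms are identified'' it suffices to name two closed values (or computations) of the same type whose denotations differ in the chosen model. In $\Ccat = \self(\Set)$ this is immediate: take the value type $1+1$ (i.e. $\Sigma_{1\leq i\leq 2}1$) and the two values $\langle 1,\langle\rangle\rangle$ and $\langle 2,\langle\rangle\rangle$; their denotations are the two distinct global elements of the two-element set, so $\cdot\vdash \langle 1,\langle\rangle\rangle = \langle 2,\langle\rangle\rangle : 1+1$ is \emph{not} derivable. Equivalently, if one prefers a syntax-free statement, one observes that the evident map from the syntactic category of dCBPV- to this model is not constant on a single homset. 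For ``not all types are inhabited'', I would take the value type $0$ (the empty value type, $\Sigma_{1\leq i\leq 0}$, available since we have $0,+$-types): its denotation in $\self(\Set)$ is $\Ccat(\cdot)(1,\emptyset) = \Set(\{*\},\emptyset) = \emptyset$, so there is no closed value $\cdot\vdash V:0$; by soundness, no such term exists in the syntax.

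There is essentially no hard step here — the content is bookkeeping: verifying that the identity adjunction on a model of pure DTT really satisfies every clause of the dCBPV- model definition, in particular the two ``non-obvious'' clauses, namely that the induced maps $\Dcat(C.\Sigma_{1\leq i\leq n}C_i)(\ct{D},\ct{D'})\to\prod_i\Dcat(C.C_i)(\ct{D}\{\ldots\},\ct{D'}\{\ldots\})$ are bijections and that the $\Id$-types are \emph{strong extensional}; the first reduces to the strength/distributivity of the strong $0,+$-types already assumed in $\Ccat$, and the second is exactly the hypothesis on $\Ccat$. If one is fussy about having an \emph{a priori} nontrivial $\Ccat$ (rather than invoking the syntactic category of pure DTT, whose nontriviality is itself a theorem), one simply uses $\self(\Set)$, or indeed any of the concrete models of dDILL constructed in Section \ref{sec:concretemodels} specialised along $!=\id$, all of which are manifestly nondegenerate. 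I would also remark, for emphasis, that the subsequent theorems in this section (models from the dependent LNL calculus, and Eilenberg--Moore models for indexed monads) supply a whole spectrum of genuinely effectful witnesses, so consistency is robust and not an artefact of the degenerate identity-adjunction model; but for the bare consistency claim the identity adjunction suffices.
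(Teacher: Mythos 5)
Your overall strategy is exactly the paper's: the paper disposes of this theorem by remarking that the identity adjunction on any model of pure DTT is a dCBPV- model, and then reading off nondegeneracy from a nontrivial such model (the families-of-sets model). So the route is the same; you have simply filled in the bookkeeping the paper leaves implicit.

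There is, however, one concrete misstep in your instantiation. You propose $\Ccat = \self(\Set)$ and assert that the strong extensional $\Id$-types clause ``is exactly the hypothesis on $\Ccat$.'' But the paper's own theorem on $\self(\Ccat)$ states that it supports $0$-, $+$-, $\Sigma$- and $\Pi$-types and then explicitly notes that it ``does not usually support extensional $\Id$-types as these correspond to objects $1/A$ such that $1/A\times A\cong 1$.'' Since the definition of a dCBPV- model demands strong extensional $\Id$-types in $\Ccat$, the verification you describe would fail for $\self(\Set)$. The repair is immediate: use the families model $\Fam(\Set)$ (the model the paper actually has in mind, and uses throughout its concrete-models discussion), where $\Id_A(x,y)$ is interpreted as a singleton or the empty family and all required connectives, including strong extensional $\Id$-types, are present. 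With that substitution your two witnesses — the distinct global elements $\langle 1,\langle\rangle\rangle$ and $\langle 2,\langle\rangle\rangle$ of $1+1$ for nondegeneracy of the equational theory, and the empty type $0$ for non-inhabitation — go through by soundness exactly as you say.
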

More interestingly, any model of the dependent LNL calculus supporting the appropriate connectives (see chapter \ref{ch:4}) gives rise to a model of dependently typed CBPV without dependent Kleisli extensions, modelling commutative effects.
\begin{theorem}The notion of model given by section \ref{sec:deplnl} for the dLNL calculus of \cite{krishnaswami2015integrating} with the additional connective of finite disjunctions for cartesian types (indexed finite distributive coproducts in $\Ccat$) is precisely a dCBPV- model such that we have symmetric monoidal closed structures $(I,\otimes,\multimap)$ on the fibres of $\Dcat$, stable under change of base, ($\Dcat$ is an indexed symmetric monoidal closed category) s.t. $F$ consists of strong symmetric monoidal functors (sending nullary and binary products in $\Ccat$ to $I$ and $\otimes$ in $\Dcat$) and which supports $\csigma{-}{}$-types (see section \ref{sec:moreconn}).
\end{theorem}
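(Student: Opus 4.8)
The statement asserts a precise dictionary between the two notions of model. The plan is to prove it in two directions: (i) that a dLNL model (in the sense of section \ref{sec:deplnl}) augmented with indexed finite distributive coproducts in $\Ccat$ yields a dCBPV- model of the stated shape, and (ii) that conversely a dCBPV- model with the stated extra structure is exactly such a dLNL model. Since both directions are essentially bookkeeping — matching up the lists of data and conditions in the two definitions — the ``proof'' is really the observation that the definitions coincide once one reads off the correspondence. I would organise it as an annotated comparison of the two lists of axioms rather than a chain of lemmas.

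First I would recall the data of a dLNL model from section \ref{sec:deplnl}: an indexed category $\Ccat$ with full and faithful comprehension and an indexed terminal object, with strong $\Sigma$-, strong extensional $\Id$- and $\Pi$-types; an indexed symmetric monoidal closed category $\Dcat$ with indexed finite products $(\top,\&)$; a linear/non-linear indexed adjunction $F\dashv U:\Ccat\leftrightarrows\Dcat$ (equivalently just an indexed functor $U$, since then $FA:=\csigma{A}{I}$); and $\csigma{-}{}$- and $\cpi{-}{}$-types in $\Dcat$ satisfying the appropriate Beck--Chevalley and (for $\csigma{}{}$) Frobenius conditions. Then I would recall the data of a dCBPV- model from theorem \ref{thm:dcbpv-soundcomplete}: $\Ccat$ with full and faithful democratic comprehension and indexed terminal object; $\Dcat$; strong $0,+$-types in $\Ccat$ such that the canonical maps $\Dcat(C.\Sigma_{1\leq i\leq n}C_i)(\ct{D},\ct{D'})\ra{}\Pi_{1\leq i\leq n}\Dcat(C.C_i)(\ct{D}\{\proj{C}{\langle i,\id_{C_i}\rangle}\},\ct{D'}\{\proj{C}{\langle i,\id_{C_i}\rangle}\})$ are bijections; an indexed adjunction $F\dashv U$; $\cpi{-}{}$-types, finite products $(\top,\&)$ in $\Dcat$; strong $\Sigma$- and strong extensional $\Id$-types in $\Ccat$. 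The correspondence is then visible componentwise, and I would present a short table-like paragraph matching each line.

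The genuine content lies in three matchings, which I would isolate as the substantive steps. (1) \emph{Democracy.} The dLNL definition demands only full and faithful comprehension on $\Ccat$, whereas a dCBPV- model asks for \emph{democratic} full and faithful comprehension; but in the presence of strong $\Sigma$- and $1$-types (which both sides have) democracy is automatic — every object of the base arises, up to iso, from the empty context by adjoining types — so the two coincide. (2) \emph{Distributive coproducts versus the stack-bijection condition.} I would invoke theorem \ref{thm:commtolinear}/the semantic discussion around theorem \ref{thm:lliscomm}: indexed finite distributive coproducts in $\Ccat$, together with $F$ strong monoidal and the $\csigma{}{}$-types, force $\Sigma_{1\leq i\leq n}$-types in $\Ccat$ to be \emph{strong} (i.e. the canonical maps $\Ccat(C.\Sigma_i C_i)(C',C'')\ra{}\Pi_i\Ccat(C.C_i)(\cdots)$ are bijections), and applying $U_\Gamma\dashv F_\Gamma$ fibrewise (using that $F$ sends finite coproducts in $\Ccat$ to the monoidal structure, hence via closure distributes over $\otimes$) transports this to exactly the stated bijection condition $\Dcat(C.\Sigma_i C_i)(\ct D,\ct D')\cong\Pi_i\Dcat(C.C_i)(\cdots)$; conversely that bijection, pulled back along $F$, recovers distributivity of the coproducts. (3) \emph{The adjunction versus the extra $\Dcat$-structure.} A dCBPV- model has a bare indexed adjunction and no $\otimes,\multimap$ on $\Dcat$; the dLNL model adds the symmetric monoidal closed structure on $\Dcat$ stable under change of base, strong monoidality of $F$, and $\csigma{-}{}$-types. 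These are precisely the ``extra'' items listed in the theorem statement, so here the matching is a definitional identity: the dCBPV- model \emph{with} those extras \emph{is} the dLNL model.

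The main obstacle — such as it is — will be step (2): carefully checking that strong monoidality of $F$ plus $\csigma{}{}$-types (with Frobenius) genuinely lets one pass the strong-coproduct/distributivity condition back and forth between $\Ccat$ and $\Dcat$ along the adjunction, i.e. that the fibrewise computation $\Dcat(\Gamma)(F\Sigma_{U A}\Delta,\ct C)\cong\Ccat(\Gamma)(\Sigma_{UA}\Delta,U\ct C)\cong\cdots$ (analogous to the displayed calculation in the proof of theorem \ref{thm:commtolinear} and in theorem \ref{thm:inttyp}) really does identify the two conditions, stability under change of base included. Everything else — the $\Sigma$-, $\Pi$-, $\Id$-types in $\Ccat$, the $\cpi{}{}$-types and finite products in $\Dcat$, the comprehension — appears verbatim on both sides, so for those I would simply assert the identification and move on. I would close by noting that the correspondence is evidently functorial/an isomorphism of the respective notions of model, so that soundness and completeness for one transfers to the other; in particular dLNL models, via this identification, yield models of dCBPV- (and hence, by the CBN translation, of CBN dependent type theory).
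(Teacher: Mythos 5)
The paper in fact gives no proof of this theorem: it is stated as a purely definitional observation, so your strategy of unfolding the two definitions and matching them component by component is exactly the intended one, and your isolation of the three points where the matching is not literal is more explicit than anything in the text. However, two of the sub-claims you lean on are false as stated, and one of them is a step that would fail.

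The failing step is your claim (1) that democracy is automatic in the presence of strong $\Sigma$- and $1$-types. Democracy asks that the functor $\Ccat(\cdot)\ra{}\Bcat/\cdot\simeq\Bcat$, $A\mapsto\cdot.A$, be an equivalence; full and faithfulness is already part of the comprehension hypothesis, so the entire content of democracy is essential surjectivity, i.e.\ that every object of $\Bcat$ is isomorphic to one of the form $\cdot.A$. Strong $\Sigma$- and $1$-types only collapse iterated comprehensions $\cdot.A_1.\cdots.A_n$ to single ones; they say nothing about objects of $\Bcat$ that are not comprehensions in the first place. The mismatch is real (the dLNL model of section \ref{sec:deplnl} does not require democracy, the dCBPV- model does) and has to be resolved by reading democracy into both sides, or out of both, rather than derived. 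Secondly, in your step (2) the parenthetical ``$F$ sends finite coproducts in $\Ccat$ to the monoidal structure'' conflates products with coproducts: strong monoidality of $F$ says $F1\cong I$ and $F(A\times B)\cong FA\otimes FB$. The way to obtain the bijection $\Dcat(C.\Sigma_{1\leq i\leq n}C_i)(\ct{D},\ct{D'})\cong\Pi_{1\leq i\leq n}\Dcat(C.C_i)(\cdots)$ from the dLNL data is rather to internalise the hom using the change-of-base-stable closed structure and transport it along the adjunction, $\Dcat(\Gamma)(\ct{D},\ct{D'})\cong\Dcat(\Gamma)(I,\ct{D}\multimap\ct{D'})\cong\Ccat(\Gamma)(1,U(\ct{D}\multimap\ct{D'}))$, and then apply the strong elimination of $+$-types in $\Ccat$ at the object $U(\ct{D}\multimap\ct{D'})$. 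Finally, a smaller point you pass over when asserting that the remaining items ``appear verbatim on both sides'': $\Pi$-types in $\Ccat$ are part of the dLNL model but not of the dCBPV- model even with the listed extras, and they are not derivable from $\cpi{-}{}$-types (theorem \ref{thm:inttyp} only yields $\Pi_{UB}UC$, not $\Pi_AB$ for arbitrary value types); this imprecision is, however, already present in the theorem statement itself rather than introduced by you.
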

As in the simply typed setting, models of pure DTT on which we have an indexed monad are again a source of examples of dCBPV- models. This shows that dCBPV- is compatible with a wide range of effects.
\begin{theorem}\label{thm:dcbpv-modelsfrommonads}
Let $\Bcat^{op}\ra{\Ccat}\Cat$ be a model of pure DTT (with all type formers discussed) on which we have an indexed monad $T$. Then, the indexed Eilenberg-Moore adjunction $\Ccat\leftrightarrows \Ccat^T$ gives a model of dCBPV-.
\end{theorem}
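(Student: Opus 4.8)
The strategy is to verify, one by one, that the Eilenberg--Moore adjunction $F \dashv U : \Ccat^T \leftrightarrows \Ccat$ (taken fibrewise, i.e. as an indexed adjunction, which exists since $T$ is an indexed monad) satisfies all the clauses in the definition of a dCBPV- model. Many of these are inherited directly from the given structure on $\Ccat$ (which is already a model of pure DTT with $0,+,1,\Sigma,\Id,\Pi$-types and full and faithful democratic comprehension), so the only real work is to transport the relevant structure along the Eilenberg--Moore adjunction and to check the Beck--Chevalley and distributivity conditions that interact with the indexing.

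\textbf{Step 1: the value category and the adjunction.} We take the value category to be $\Ccat$ itself, with its given full and faithful democratic comprehension and indexed terminal object; nothing needs to be done here. We take the computation category to be the indexed Eilenberg--Moore category $\Ccat^T$, with $\Ccat^T(\Gamma)$ the category of $T_\Gamma$-algebras in $\Ccat(\Gamma)$, and change of base defined by applying $\Ccat(f)$ to the carrier and the structure map (this is well-defined and functorial because $T$ is an indexed monad, so $\Ccat(f)$ commutes with $T$). The indexed adjunction $F \dashv U$ is the fibrewise Eilenberg--Moore adjunction; it is an indexed adjunction precisely because $T$ is indexed, i.e. $F$ and $U$ commute with change of base up to the canonical coherence isomorphisms. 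This is the analogue, in the dependently typed setting, of the simply typed theorem that any bicartesian closed category with a strong monad gives a CBPV model; as remarked in the excerpt, the strength here is automatic from compatibility with substitution.

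\textbf{Step 2: the type formers on the computation side.} For $\top$ and $\&$ (finite indexed products in $\Dcat = \Ccat^T$): I would invoke the well-known fact (cited as $\mccorrect{}$ in the proof of the simply typed version, Theorem on monad models) that the forgetful functor $\Ccat^T(\Gamma) \to \Ccat(\Gamma)$ creates finite products (indeed all limits), so finite products exist in each fibre and are computed on carriers; stability under change of base then follows from stability of finite products in $\Ccat$ together with $U$ commuting with change of base. For $\cpi{A}{}$-types: the key observation (again from the simply typed Kleisli-exponential argument of Moggi) is that $A \Rightarrow U k$ inherits a $T$-algebra structure from $k$ whenever $A \in \Ccat(\Gamma.A')$-style data is available; concretely, for $A \in \Ccat(\Gamma)$ and $k \in \Ccat^T(\Gamma.A)$, one defines $\cpi{A}{k}$ to have carrier $\Pi_A(Uk)$ (using that $\Ccat$ has $\Pi$-types) with the induced algebra structure, and checks the adjunction $-\{\proj{\Gamma}{A}\} \dashv \cpi{A}{}$ by transporting the $\Pi$-adjunction in $\Ccat$ through the Eilenberg--Moore adjunction. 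The right Beck--Chevalley condition for $\mathbf{p}$-squares then reduces to the right Beck--Chevalley condition for $\Pi$-types in $\Ccat$, which holds by hypothesis, plus naturality of the induced algebra structures.

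\textbf{Step 3: the type formers on the value side and the interaction clause.} The strong $\Sigma$-, $\Id$-, $1$- and $0,+$-types in $\Ccat$ are given by hypothesis, so the only remaining things to check are the two interaction conditions: (i) that the canonical maps $\Dcat(C.\Sigma_{1\leq i\leq n}C_i)(\ct D,\ct D') \to \prod_i \Dcat(C.C_i)(\ct D\{\proj{C}{\langle i,\id\rangle}\},\ct D'\{\proj{C}{\langle i,\id\rangle}\})$ are bijections (this is the ``strong $0,+$'' clause for $\Dcat$), and (ii) that $F$ sends these strong coproducts appropriately. For (i): since hom-sets in $\Ccat^T$ are computed as equalizers of hom-sets in $\Ccat$ (an algebra map is a $\Ccat$-map commuting with structure), and since the corresponding bijection holds in $\Ccat$ by the strong $0,+$-types hypothesis, the bijection in $\Ccat^T$ follows by a diagram chase using that change of base commutes with $T$ and hence with the structure maps. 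This is the step I expect to be the main obstacle: it requires carefully unwinding what a stack/morphism in the Eilenberg--Moore fibre over a coproduct context looks like and checking that the case-splitting bijection survives the passage to algebra homomorphisms; the argument is essentially that $\Ccat^T(\Gamma) \hookrightarrow \Ccat(\Gamma)$ is a full reflective-type inclusion on the relevant hom-sets, so it reflects the bijection, but one has to be precise about the coherences. The remaining clauses (consistency, the fact that $\Ccat^T$ genuinely receives an indexed category structure, and soundness of the interpretation) then follow formally, and by Theorem~\ref{thm:dcbpv-soundcomplete} (soundness and completeness of dCBPV-) this data indeed interprets dCBPV-. I would close by remarking, as the excerpt does for the simply typed case, that this explains why the strength and the distributivity of coproducts, which must be postulated in Moggi's monadic models, here come for free from compatibility with substitution.
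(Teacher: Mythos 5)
Your proposal is correct and follows essentially the same route as the paper: products of algebras are created by the forgetful functor, and $\cpi{A}{k}$ is the $\Pi$-type of the carrier $\Pi_A(Uk)$ equipped with the algebra structure induced from $k$ (the paper constructs this structure map explicitly by chasing the identity on $\Pi_A B$ through the $\Pi$-adjunction, applying $T$, and postcomposing with $k$), defined precisely so that the $\Pi$-adjunction in $\Ccat$ restricts to algebra homomorphisms. You are in fact somewhat more thorough than the paper's proof, which silently omits the verification of the strong $0,+$ interaction clause and the indexed-adjunction coherences that you spell out in Steps 1 and 3.
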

\begin{proof} A product of algebras is just the product of their carriers equipped with the obvious algebra structure. Indeed, it is a basic result in category theory that the forgetful functor from the Eilenberg-Moore category creates limits. Given an object $TB\ra{k}B$ of $\Ccat^T(\Gamma.A)$, we note that we also obtain a canonical $T$-algebra structure on $\Pi$-types of carriers (starting from the identity on $\Pi_A B$):
\begin{align*}
\Ccat(\Gamma)(\Pi_AB,\Pi_A B)&\cong \Ccat(\Gamma.A)((\Pi_AB)\{\proj{\Gamma}{A}\},B)\\
&\ra{T}  \Ccat(\Gamma.A)(T((\Pi_AB)\{\proj{\Gamma}{A}\}),TB)\\
&\cong  \Ccat(\Gamma.A)((T\Pi_AB)\{\proj{\Gamma}{A}\},TB)\\
&\ra{-;k}  \Ccat(\Gamma.A)((T\Pi_AB)\{\proj{\Gamma}{A}\},B)\\
&\cong \Ccat(\Gamma)(T\Pi_AB,\Pi_AB).
\end{align*}
We leave the verification of the $T$-algebra axioms to the reader. We define the result to be $\cpi{A}{k}$. Note that it is precisely defined so that, for an algebra $TC\ra{l}C$, the isomorphism $\Ccat(\Gamma.A)(C\{\proj{\Gamma}{A}\},B)\cong\Ccat(\Gamma)(C,\Pi_A B)$ restricts to $\Ccat^T(\Gamma.A)(l\{\proj{\Gamma}{A}\},k)\cong\Ccat^T(\Gamma)(l,\cpi{A}{k})$.
\end{proof}

A concrete example to which we can apply the previous theorem is obtained for any monad $T$ on $\Set$. Indeed, we can lift $T$ (point-wise) to an indexed monad on the usual families of sets model $\Fam(\Set)$ of pure DTT\footnote{Recall that $\Fam(\Set)$ is defined as the restriction to $\Set\subseteq \Cat$ of the ($\Cat$-enriched) hom-functor into $\Set$:
$\Set^{op}\subseteq \Cat^{op}\ra{\Cat(-,\Set)}\Cat$.}. In a different vein, given a model $\Ccat$ of pure DTT, the usual exception ($(-)+E$), global state ($S\Rightarrow(-\times S)$), reader ($S\Rightarrow (-)$), writer ($(-)\times M$) and continuation monads ($((-)\Rightarrow R)\Rightarrow R$), which we form using objects of $\Ccat(\cdot)$, and\mccorrect{, if we are dealing with a higher-order logic,} power set monad $\mathcal{P}(-)$ give rise to indexed monads, hence we obtain models of dCBPV-. More exotic examples are the many indexed monads that arise from homotopy type theory, like truncation modalities or cohesion (shape and sharp) modalities \cite{hottbook,shulman2015brouwer,schreiber2014quantum}. A caveat there is that the identity types in the model are intensional and that many equations are often only assumed up to propositional rather than judgemental equality.

\subsection{Operational Semantics and Effects}\label{sec:depop} 
We define an operational semantics for dCBPV-. It is a basic result in dependent type theory that the (parallel nested) $\beta$-reductions for values are strongly normalizing \cite{martin1998intuitionistic} (according to a variation on Tait's logical relations argument). Let us write again, $\nf{V}$ for the normal form of a value $V$ and $\nnf{V}$ to make explicit that $V$ is not in normal form. We again define a configuration to be a pair $M,K$ of a dCBPV- computation $\Gamma;\cdot\vdash M:\ct{B}$ and a stack $\Gamma;{\nil}:\ct{B}\vdash K:\ct{C}$. The CK-machine that evaluates our computations is again just that of figure \ref{fig:ckmachine} where we add the extra transitions and terminal configuration of figure \ref{fig:ckid}.
\begin{figure}[!tb]
\fbox{
\resizebox{\linewidth}{!}{
\begin{tabular}{l}
\textbf{Transitions}\\
\begin{tabular}{lllllll}
 $  \idpm{\refl{\nnf{V}}}{x}{}{M}      $ &,& $K$\hspace{40pt}\;& $\leadsto$ \hspace{40pt}\;& $\idpm{\refl{\nf{V}}}{x}{}{M}        $ &,& $K$ \\
 $  \idpm{\refl{\nf{V}}}{x}{}{M}      $ &,& $K$\hspace{40pt}\;& $\leadsto$ \hspace{40pt}\;& $  M[\nf{V}/x]        $ &,& $K$ \\
\end{tabular}\\
\\
\textbf{Terminal Configuration}\\
\begin{tabular}{lll}
$\idpm{\nf{V}^{x'}}{x}{M}$ &,& $K$
\end{tabular}
\end{tabular}
}
}
\caption{\label{fig:ckid} The additional transition and terminal configuration that specify the operational behaviour of identity witnesses.}
\end{figure}
As before, we can add the effects of figure \ref{fig:effects} together with their operational semantics of figures \ref{fig:opsemdivs} and \ref{fig:opsemprint} and equations of figure \ref{fig:effeqn}. We get the same determinism, strong normalization and subject reduction results as in the simply typed case.

\begin{theorem}[Determinism, Strong Normalization and Subject Reduction]\label{thm:subjreddcbpv-} Every transition respects the type of the configuration. No transition occurs precisely if we are in a terminal configuration. In absence of erratic choice, at most one transition applies to each configuration. In absence of divergence and recursion, every configuration reduces to a terminal configuration in a finite number of steps.
\end{theorem}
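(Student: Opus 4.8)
The plan is to mirror the proof of the corresponding simply typed result (Theorem on Determinism, Strong Normalization and Subject Reduction for simple CBPV, and its effectful extension), noting that the only genuinely new phenomenon in dCBPV- is the presence of dependent types, which affects only the subject reduction claim. First I would observe that the CK-machine for dCBPV- is, as a bare transition system on untyped configurations, literally the same as that of simple CBPV (figures \ref{fig:ckmachine}, \ref{fig:opsemdivs}, \ref{fig:opsemprint}) together with the two new transitions and terminal configuration for identity witnesses in figure \ref{fig:ckid}. Hence the purely syntactic parts of the statement — ``no transition occurs precisely if we are in a terminal configuration'', ``in absence of erratic choice at most one transition applies'' (determinacy), and ``in absence of divergence and recursion every configuration reduces to a terminal configuration in finitely many steps'' (strong normalization) — follow by exactly the same case analysis on the shape of the computation $M$ and stack $K$ as in the simply typed case, with the single extra case of $\idpm{-}{x}{-}$ handled like the other $\mathsf{pm}$-constructs: the distinction between values in normal form ($\nf{V}$) and values not in normal form ($\nnf{V}$) ensures determinacy even with the value-normalisation transitions, and the strong normalization of the $\beta$-reductions for values (Tait-style logical relations, \cite{martin1998intuitionistic}) ensures the value-normalisation sub-steps terminate. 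The termination argument in absence of divergence/recursion is then the usual combination of strong normalization of value reductions with the observation that the remaining transitions strictly decrease the sum of sizes of $M$ and $K$; this is unaffected by type dependency.

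The main work, and the main obstacle, is \textbf{subject reduction}: every transition must take a well-typed configuration of type $\ct{C}$ to a well-typed configuration of type $\ct{C}$. Here is where dependency could in principle bite, because when a transition fires a $\beta$-rule like $\toin{(\return \nf{V})}{x}{N} \leadsto N[\nf{V}/x]$ or $\idpm{\refl{\nf{V}}}{x}{M} \leadsto M[\nf{V}/x]$, the type of the contractum is obtained by a substitution, and we must check this substituted type is judgementally equal to the type of the redex. The crucial point — and this is precisely why dCBPV- is designed with the restriction that the return type of $\toin{K}{x}{N}$ may not depend on $x$ — is that each left-to-right $\beta$-rule of figure \ref{fig:vceqs} together with the new rule of figure \ref{fig:ckid} is an instance of a judgemental equality of dCBPV- (the equations hold as typed equations in context), so both sides are well-typed of the same type in the same context by construction. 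So I would prove subject reduction by: (i) checking that each individual transition, viewed as rewriting a subterm, is a directed instance of one of the equations of figures \ref{fig:vceqs}, \ref{fig:vcdepeqs}, \ref{fig:effeqn}, hence type-preserving on that subterm; (ii) invoking a congruence/replacement lemma — if $\Gamma;\Delta \vdash M : \ct{B}$ and $M'$ results from $M$ by replacing a subterm by a judgementally equal one of the same type, then $\Gamma;\Delta\vdash M':\ct{B}$ — which is standard and follows from the Tm-Conv and congruence rules, together with the decomposition of a configuration's type via the stack; and (iii) treating the configuration-level transitions (pushing onto and popping off the stack $K$, e.g. $\toin{M}{x}{N},K \leadsto M, \toin{[\cdot]}{x}{N}::K$) by checking the type of the configuration $M,K$ is preserved, which amounts to the associativity-of-sequencing equations and the abbreviations defining $V::K$, $j::K$, $\toin{[\cdot]}{x}{N}::K$. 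For the effectful transitions ($\mu_z M \leadsto M[\thunk\mu_z M/z]$, $\nondet{i}{M_i}\leadsto M_j$, $\diverge\leadsto\diverge$, printing, state), type preservation is immediate from the equations of figures \ref{fig:effeqn}, the algebraicity equation $K[\op{M}/\nil] = \op{K[M/\nil]}$, and the fact that the added effect constants are typed at an arbitrary $\ct{B}$ in a fixed context, so none of them interacts with type dependency.

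Finally I would assemble the pieces: the ``no transition iff terminal'' and determinacy statements are a finite case check; strong normalization in absence of divergence and recursion is the termination measure argument; and subject reduction is (i)–(iii) above. I expect the proof text itself to be short — essentially ``the only modification from \cite{levy2012call,levy2006call} is the presence of complex values and identity witnesses, which are handled as in the proof of Theorem \ref{thm:subjreddcbpv-}'s simply typed analogue, and the $\toin{}{}{}$-restriction guarantees that all $\beta$-rules remain typed equations'' — with the one genuinely new remark being that type dependency forces the non-dependent elimination discipline for $F$ in the sequencing rule, without which the $N[\nf{V}/x]$ in the first $\toin{}{}{}$-transition would fail to be well-typed at the expected type. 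The hard part is really just being careful that the stack $K$ can carry a type depending on the value of $M$ only through the allowed channels, so that popping and pushing preserve the configuration type; once the configuration-typing judgement is set up to respect exactly the dependency restrictions of the syntax, subject reduction becomes routine.
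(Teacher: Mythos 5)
Your proposal is correct and follows essentially the same route as the paper: reduce everything to the simply typed case by observing that the transitions are defined on untyped terms (so determinacy and normalization carry over verbatim), and handle subject reduction by noting that types depend only on values, so the only new issue is value normalization, which is covered because $\Gamma\vdash\nnf{V}=\nf{V}:A$ and hence $\ct{B}[\nnf{V}/x]=\ct{B}[\nf{V}/x]$. The paper's proof is terser than yours but makes exactly these moves, including your key observation that the non-dependent typing of $\toin{K}{x}{N}$ is what keeps the sequencing transitions type-preserving.
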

\begin{proof}[Proof] The important observation will be that types only depend on values. Therefore, the only real difference in this proof from the simply typed case are the rules involving the reduction of values. 

We recall from \cite{martin1998intuitionistic} that value types are closed under the untyped $\beta$-reductions for values. (This result applies as values form a conventional cartesian dependent type theory.) This implies that $\Gamma\vdash\nnf{V}=\nf{V}:A$ for any $\Gamma\vdash \nnf{V}:A$. Therefore, it follows that $\Gamma\vdash \ct{B}[\nnf{V}/x]= \ct{B}[\nf{V}/x]$ for any $\Gamma, x:A\vdash \ct{B}{\ctype}$. 
It follows that the rules involving value normalization also satisfy subject reduction.

As all transitions are defined on untyped terms, determinism and strong normalization results are no different from the simply typed case.\end{proof}

\begin{remark}[Type Checking]
While the operational semantics discussed here is very relevant as it describes the execution of a program of dCBPV-, one could argue that a type checker is as important an operational aspect to the implementation of a dependent type theory. We leave the description of a type checking algorithm to future work. We note that the core step in the implementation of a type checker is a normalization algorithm for directed versions (from left to right) of the equations for values of figures \ref{fig:vceqs} and \ref{fig:vcdepeqs} (with congruence laws) and perhaps some equations for values induced from computation equations of figure \ref{fig:effeqn} and from the specific equational theories for the effects under consideration, as this would give us a normalization procedure for types. One might be able to construct such an algorithm using normalization by evaluation by combining the techniques of \cite{abel2007normalization} and \cite{ahman2013normalization}. Our hope is that this will lead to a proof of decidable type checking of the system at least in absence of the $\eta$-law for $\Id$-types\mccorrect{.} We note that the complexity of a type checking algorithm can vary widely depending on which equations we include for specific effects. The idea is that one only includes a basic set of program equations as judgemental equalities to be able to decide type checking and one postulates other equations as propositional equalities, which can be used for manual or tactic-assisted reasoning about effectful programs.
\end{remark}

\section{dCBPV with Dependent Kleisli Extensions (dCBPV+)}\label{sec:depcbpvklei}
While the system dCBPV- is very clean in its syntax, operational semantics, categorical semantics and admits plenty of concrete models, it may be a bit of a disappointment to the reader who was expecting to see a proper combination of effects and dependent types, rather than a system that keeps both features side by side without them interacting meaningfully\footnote{As we shall discuss later, this is not entirely fair on dCBPV-, as it does allow us to form types (predicates) depending on thunks of effectful computations.}. In particular, one might find it unsatisfactory that the CBV translation from dependent type theory into dCBPV- fails and that the CBN translation only goes through to a limited extent.

To address these issues, we introduce a more expressive system in this section which we call dCBPV+ and which extends dCBPV- with Kleisli extensions for dependent functions. \mccorrect{We shall later discuss, in section} \ref{sec:depklextbugorfeature}, \mccorrect{whether such dependent Kleisli extensions are desirable.}

\subsection{Syntax}
We have seen the need to add dependent Kleisli extensions in the form of the rule shown in figure \ref{fig:depklext} if we want to obtain a dependently typed equivalent of the CBV translation into CBPV or if we want to model dependent elimination rules for the positive connectives in the CBN translation. We use the name dCBPV+ to explicitly refer to the resulting system of the rules of dCBPV- (figures \ref{fig:ctxtrules}, \ref{fig:depcbpvtypes}, \ref{fig:vcdepterms}, \ref{fig:vceqs} and \ref{fig:vcdepeqs}) and dependent Kleisli extensions (figure \ref{fig:depklext}).
\begin{figure}[!tb]\fbox{\resizebox{\linewidth}{!}{\begin{tabular}{l}
\AxiomC{$\Gamma,z:UF A,\Gamma'\vdash \ct{B}\ctype$}
\AxiomC{$\Gamma;\cdot\vdash M:FA$}
\AxiomC{$\Gamma,x:A,\Gamma'[\tr x/z];\cdot\vdash N:\ct{B}[\tr x/z]$}
\TrinaryInfC{$\Gamma,\Gamma'[\thunk M/z];\cdot\vdash \toin{M}{x}{N} : \ct{B}[\thunk M/z]$.}
\DisplayProof\hspace{40pt}\;\end{tabular}
}}
\caption{\label{fig:depklext} The rule for dependent Kleisli extensions in dCBPV. As before, we write $\tr$ as an abbreviation for $\thunk \return$.}
\end{figure}

We note that, in the presence of this extra rule, the translations of figures \ref{fig:depcbvtrans} and \ref{fig:depcbntrans} are finally well-defined. We would like to highlight the fact that a type $x_1:A_1,\ldots,x_n:A_n\vdash A\type$ gets translated into a type $z_1:UFA_1,\ldots, z_n:UFA_n\vdash A^v \vtype$ by the CBV translation. Briefly, this is necessitated by the CBV translation of substitution of terms in types. For example, to substitute a term $x:B\vdash M:A$ into $x:A\vdash C\type$ in the CBV translation, we have to be able to substitute $(x:B)^v;\cdot\vdash M^v:FA$ (or equivalently $(x:B)^v\vdash \thunk M^v:UFA$) into $(x:A)^v\vdash C^v\vtype$. This forces us to define the CBV translation $(x:A)^v$ of an identifier declaration in the context \mccorrect{of a type well-formedness judgement} as $z:UFA$ if we are to use the usual type substitution of CBPV (after taking the Kleisli extension of $\thunk M^v$).

We would like to say that the CBV and CBN translations are sound and complete. However, as no notion of a CBV or CBN equational theory has been formulated for dependent type theory, as far as we are aware, we take the equational theories induced by these translations as their definitions. Unsurprisingly, $\Sigma$-types behave equationally exactly like $\times$-types and $\cpi{-}{}$-types do as $\functype$-types. The interesting connective to study is the $\Id$-type. 
\begin{theorem}
Figures \ref{fig:depcbvtrans} and \ref{fig:depcbntrans} define CBV and CBN translations of dependent type theory with $\Sigma_{1\leq i\leq n}$-, $1$-, $\Sigma$-, $\Id$-, $\Pi_{1\leq i\leq n}$- and $\Pi$-types (with dependent elimination rules for all positive connectives) into dCBPV+. In fact, they allow us to transfer an arbitrary theory in CBV or CBN dependent type theory to one on dCBPV+ such that we again get well-defined CBV and CBN translations. As expected, CBN\mccorrect{ }$\Id$-types (even extensional ones) satisfy the $\beta$-law but may not satisfy the $\eta$-law. More surprising, perhaps, is that the same is true for CBV $\Id$-types. 
\end{theorem}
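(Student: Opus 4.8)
The statement packages four claims: (a) the schemata of figures~\ref{fig:depcbvtrans} and~\ref{fig:depcbntrans} take every well-formed judgement of DTT with $\Sigma_{1\leq i\leq n}$-, $1$-, $\Sigma$-, $\Id$-, $\Pi_{1\leq i\leq n}$- and $\Pi$-types (with dependent eliminators) to a derivable judgement of dCBPV+; (b) this is stable under adjoining arbitrary axioms on either side; (c) the $\beta$-rule for $\Id$-types is validated by both translations; and (d) the $\eta$-rule for $\Id$-types is not. My plan is to handle (a) by a simultaneous induction on derivations. The single non-bookkeeping ingredient is a substitution lemma. For CBN one proves $(B[M/x])^n = B^n[\thunk M^n/x]$ and $(N[M/x])^n = N^n[\thunk M^n/x]$ by induction, observing that a CBN context $x_1:\ct B_1,\dots,x_m:\ct B_m$ becomes $x_1:U\ct B_1^n,\dots,x_m:U\ct B_m^n$. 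For CBV one proves the analogue, which forces both the type-level context convention $\vect{UF}\Gamma$ and the Kleisli gadget $V^{*}=\thunk(\toin{\force z_1}{x_1}{\cdots\toin{\force z_n}{x_n}{\force V}})$ of figure~\ref{fig:depcbvtrans}; this is exactly the point where the dependent Kleisli extension of figure~\ref{fig:depklext} is indispensable, since substituting a possibly-effectful term into a dependent type must again yield a well-formed type, and that rule is what makes the relevant instance type-check. Once the substitution lemma is available, each typing rule of the source unfolds to a derivable dCBPV+ judgement; the positive connectives with dependent eliminators are treated exactly as in the semantics underlying Theorem~\ref{thm:dcbpv-soundcomplete}, by sequencing the scrutinee and then pattern-matching.

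For (b), one translates each source axiom under $(-)^v$ (resp.\ $(-)^n$) and closes under the dCBPV+ rules. Compositionality of the translations together with their compatibility with substitution (the same lemma) makes any derivation of a source equation map step-by-step into the enlarged dCBPV+ theory, and the source fragment stays closed under its own structural rules, so the translations remain well defined. This also exhibits the induced CBV/CBN equational theories as --- by definition --- the reflections of dCBPV+'s theory along the translations, whence $\Sigma$-types inherit the equational behaviour of value $\times$-types and $\cpi{-}{}$-types that of $\functype$-types.

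Claim (c) is a direct equational calculation in dCBPV+. In CBN, $(\idpm{\refl{M}}{x}{N})^n = \toin{(\return\,\refl{\thunk M^n})}{z}{(\idpm{z}{x}{N^n})}$, which is $\idpm{\refl{\thunk M^n}}{x}{N^n}$ by $F$-$\beta$, then $N^n[\thunk M^n/x]$ by $\Id$-$\beta$, then $(N[M/x])^n$ by the substitution lemma; and $(N[M/x])^n$ is the CBN translation of the right-hand side of the source $\beta$-rule. In CBV one unfolds $(\refl{M})^v = \toin{M^v}{z}{\return\,\refl{\tr z}}$ and $(\idpm{-}{x}{-})^v$, uses associativity of sequencing (a consequence of $F$-$\beta$/$F$-$\eta$) to commute the two $\toin{}{}{}$'s, and then applies $F$-$\beta$, $\Id$-$\beta$ and $\force\thunk K = K$ (together with $\tr z = \thunk\return z$) to collapse the term to $\toin{M^v}{x}{N^v} = (\lbi{x}{M}{N})^v$, the CBV translation of $N[M/x]$. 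Type-correctness of the intermediate terms is checked along the way with the substitution lemma.

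The real content is (d), the failure of the $\eta$-rule $c = \idpm{p}{z}{c[z/x,z/x',\refl{z}/p]}$; here I would produce explicit counterexamples and verify the separation via the operational semantics (Theorem~\ref{thm:subjreddcbpv-}), taking dCBPV+ extended with recursion/divergence. For CBN the mechanism is the familiar one that positive connectives lose $\eta$ under call-by-name: $\Id_{\ct B}(M,N)^n = F(\Id_{U\ct B}(\thunk M^n,\thunk N^n))$ is an $F$-type, so the $\eta$-expanded term sequences its $\Id$-proof before returning, and instantiating that proof with $\diverge$ separates the two sides --- and this persists even with extensional $\Id$-types in dCBPV+, because the effect is carried by the $F$-layer rather than the $\Id$-layer. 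For CBV, the point --- and the reason it is ``more surprising'', given that call-by-value usually keeps $\eta$ for positive connectives --- is that $\Id_A(M,N)^v = \Id_{UFA^v}(\thunk M^v,\thunk N^v)$ is an identity type not at the value type $A^v$ but at the computation-thunk type $UFA^v$: the translation of $\idpm{p}{z}{-}$ binds the pattern variable over $UFA^v$ and immediately $\force$s it, so with $M = N = \diverge$ one has $p = \refl{\thunk\diverge}$ in dCBPV+ by extensional $\Id$-$\eta$, and the $\eta$-expanded term reduces to $\toin{\diverge}{z}{c^v}=\diverge$ while $c^v$ (for $c$ a terminating term independent of $p$) does not. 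I expect the main obstacle to be making these separations fully rigorous: one must (i) confirm that the source $c$ and $p$ can be chosen so that both sides of the translated $\eta$-equation are closed, well-typed dCBPV+ computations of the same type, and (ii) appeal to a soundness statement (either the operational semantics of Theorem~\ref{thm:subjreddcbpv-} or a concrete effectful model from the dCBPV- models, via the adjunction) to certify that the observable behaviours genuinely differ and hence that the equation is underivable --- rather than merely appealing to informal reduction.
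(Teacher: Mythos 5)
Your overall decomposition and parts (a)--(c) track the paper's proof closely (the paper is simply terser: well-definedness is dispatched as ``easily seen'' once dependent Kleisli extensions are present, the CBN $\Id$ case is delegated to the preceding dCBPV$-$ discussion, and the explicit $\beta$-calculation is done for CBV rather than CBN). Where you genuinely depart from the paper is the CBV $\eta$-failure. You refute $\eta$ by instantiating it directly with an effectful scrutinee ($M=N=\diverge$, so $(\refl{M})^v$ diverges) and separating the two translated sides in a model. The paper instead argues via equality reflection: the $\eta$-law for $\Id$-types entails the reflection rule (from $P:\Id_A(M,N)$ infer $M=N$), and in CBV with divergence the hypothesis of reflection is \emph{always} satisfiable, so the source theory would collapse entirely; this is then refuted because $x:A^v,y:A^v;\cdot\vdash\return x$ and $\return y$ are distinct in the families-of-domains model. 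The reflection route buys two things your counterexample does not. First, it is insensitive to how one formalises the $\eta$-rule in the source CBV language: your instance with scrutinee $\refl{\diverge}$ refutes only the ``raw substitution'' form $c[p/w]=\idpm{p}{z}{c[\dots]}$, and that form already fails for every positive connective in CBV for the trivial reason that the scrutinee's effects are sequenced on one side and not the other; for a formulation restricted to variable or value scrutinees (the form under which CBV $\Sigma$-types \emph{do} keep $\eta$), the extensional $\Id$-$\eta$ of the target dCBPV+ can actually validate the instance, and your separation evaporates. Second, the reflection argument isolates why the failure is specific to, and surprising for, $\Id$-types: it is not the generic effectful-scrutinee phenomenon but the fact that divergence inhabits every identity type, making reflection catastrophic.

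Two smaller points. Your step (d)(ii) is right to insist that underivability needs a semantic witness: the operational semantics and subject reduction (Theorem~\ref{thm:subjreddcbpv-}) alone do not establish that two computations are judgementally distinct; the paper explicitly falls back on the domain model of section~\ref{sec:dcbpvplusmod}, and you should do the same rather than leave the operational route as an apparent alternative. And in (a), your substitution lemma for CBV should be stated at the level of the type-translation convention $\vect{UF}\Gamma^v$ together with $(\thunk M^v)^*$, exactly as in figure~\ref{fig:depcbvtrans}; this is indeed the unique place where figure~\ref{fig:depklext} is used, so your identification of where dCBPV+ (as opposed to dCBPV$-$) is needed agrees with the paper.
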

\begin{proof}It is easily seen that dependent Kleisli extensions make the translations of figures \ref{fig:depcbvtrans} and \ref{fig:depcbntrans} well-defined.

In the previous section, we have already seen the statement about CBN\mccorrect{ }$\Id$-types in case we use a non-dependent elimination rule. The case with a dependent elimination rules works similarly.

The interesting case here are the $\Id$-types in the CBV translation. For the $\beta$-rule, note that
\begin{align*}
&(\idpm{\refl{M}}{x}{N})^v=\\
&\toin{(\toin{(M^v)}{z}{\return\refl{\tr z}})}{\zeta}{\idpm{\zeta}{y}{\toin{(\force y)}{x}{N^v}}}=\\
&\toin{M^v}{z}{\idpm{\refl{\tr z}}{y}{\toin{(\force y)}{x}{N^v}}}=\\
&\toin{M^v}{z}{\toin{(\force \thunk \return z)}{x}{N^v}}=\\
&\toin{M^v}{z}{\toin{(\return z)}{x}{N^v}}=\\
&\toin{M^v}{x}{N^v}=\\
&(\lbi{x}{M}{N})^v.
\end{align*}
To see that the $\eta$-rule may fail, consider dCBPV+ with divergence. We note that in case the $\eta$-law held for $\Id$-types in CBV type theory, it would imply the following principle of reflection \cite{jacobs1999categorical}:\vspace{7pt}\\
\AxiomC{$x_1:A_1,\ldots,x_n:A_n\vdash P:\Id_A(M,N)$}
\UnaryInfC{$x_1:A_1,\ldots,x_n:A_n\vdash M=N:A,$}
\DisplayProof\vspace{7pt}\\
which, in dCBPV+ translates to the rule that
\vspace{7pt}\\
\AxiomC{$x_1:A_1^v,\ldots,x_n:A_n^v[\ldots\tr x_i/z_i\ldots];\cdot\vdash P:F\Id_{A^v}(\thunk M^v{}^*,\thunk N^v{}^*)$}
\UnaryInfC{$x_1:A_1^v,\ldots,x_n:A_n^v[\ldots\tr x_i/z_i\ldots];\cdot\vdash M^v=N^v:FA^v,$}
\DisplayProof\vspace{7pt}
\\
In particular, presence of divergence would make CBV type theory identify all terms in that case. In particular, this would mean that the terms $x:A^v, y:A^v;\cdot\vdash \return x:FA^v$ and $x:A^v,y:A^v;\cdot\vdash\return y:FA^v$ are judgementally equal in dCBPV+ with divergence, which they clearly are not. For a formal proof that they are not, we note that we can see this in the families of domains model given in section \ref{sec:dcbpvplusmod}. For a more syntactic intuition, we note that $\Id$-$\eta$ is less harmful in CBPV with effects than it is in CBV or CBN with effects due to the strict distinction between values and computations, as the obvious reflection rule it implies is the following which does not identify all terms in the presence of divergence, as it does not trivially let us satisfy the hypothesis of the rule, in the way it did in CBV type theory, given that divergence is a computation and not a value.
\vspace{7pt}
\\
\AxiomC{$\Gamma\vdash V:\Id_A(V_1,V_2)$}
\UnaryInfC{$\Gamma\vdash V_1= V_2:A.$}
\DisplayProof
\end{proof}

\subsection{Categorical Semantics}
To formulate a categorical semantics of dCBPV+, we need a dependently typed generalization of the notion of Kleisli triple. A similar notion of dependently typed Kleisli extension has been proposed before in \cite{hottbook} (section 7.7), be it for a more limited class of modalities. In practice, we shall see that, for a given indexed adjunction, dependent Kleisli extensions may not exist.

\begin{definition}[dCBPV+ Model] By a dCBPV+ model, we shall mean a dCBPV- model $F\dashv U:\Ccat\leftrightarrows \Dcat$ equipped with \emph{dependent Kleisli extensions}. That is, maps $$\Ccat(\Gamma.A.\Gamma'\{\proj{\Gamma}{\eta_A }\})(1,UB\{\qu{\proj{\Gamma}{\eta_A}}{\Gamma'}\})\ra{(-)^*}\Ccat(\Gamma.UFA.\Gamma')(1,UB),$$ where $\eta$ is the unit of the adjunction $F\dashv U$, such that the following laws hold for members of the same homset:
\begin{itemize}
\item unitality: $b^*\{\qu{\proj{\Gamma}{\eta_A}}{\Gamma'}\}=b$;
\item composition: $b^*\{\qu{\langle \id_\Gamma,a^*\rangle}{\Gamma'} \}=(b^*\{\qu{\langle \id_\Gamma,a\rangle}{\Gamma'} \})^*$;
\item agreement with the usual non-dependent Kleisli extension $(-)^\star$ for the adjunction $F\dashv U$:
\begin{diagram}
\Ccat(\Gamma)(A,UB) & \rTo^{\cong}& \Ccat(\Gamma.A)(1,UB\{\proj{\Gamma}{A}\})&\;=\;& \Ccat(\Gamma.A)(1,UB\{\proj{\Gamma}{UFA}\}\{\proj{\Gamma}{\eta_A}\})\\ 
\dTo^{(-)^\star} & && &\dTo^{(-)^*} \\
\Ccat(\Gamma)(UFA,UB)&&\rTo^{\cong}&&\Ccat(\Gamma.UFA)(1,UB\{\proj{\Gamma}{UFA}\}).
\end{diagram}
\end{itemize}
\end{definition}
\begin{remark}
Note that it is enough to just specify the dependent Kleisli extensions of the form
$$\Ccat(\Gamma.A.\Gamma'\{\proj{\Gamma}{\eta_A}\})(1,UFA'\{\qu{\proj{\Gamma}{\eta_A}}{\Gamma'}\})\ra{(-)^*}\Ccat(\Gamma.UFA.\Gamma')(1,UFA').$$
Then, we can define, more generally, $f^*:=\lambda_{x:\Gamma.UFA.\Gamma'}(f;\eta_{UB})^*\{x\}; U\epsilon_{B(x)}$, where $\eta$ is the counit of the adjunction $F\dashv U$.
\end{remark}
\begin{remark}[Dependent Costrength]
Note that dependent Kleisli extensions allow us, in particular, to define the \emph{dependent costrength}  $s_{A,B}'$ for the monad $T:=UF$ that we were missing (starting from the identity on $F\Sigma_A (B\{\proj{\Gamma}{\eta_A}\})$):
\begin{align*}
&\quad\Dcat(\Gamma)(F\Sigma_A (B\{\proj{\Gamma}{\eta_A} \}),F\Sigma_A (B\{\proj{\Gamma}{\eta_A} \}))\\
&\cong \Ccat(\Gamma)(\Sigma_A B\{\proj{\Gamma}{\eta_A} \},T \Sigma_A (B\{\proj{\Gamma}{\eta_A} \}))\\
&\cong \Ccat(\Gamma.A.B\{\proj{\Gamma}{\eta_A} \})(1,T\Sigma_A (B\{\proj{\Gamma}{\eta_A} \})\{\proj{\Gamma}{A}\}\{\proj{\Gamma.A}{B\{\proj{\Gamma}{\eta_A} \}}\})\\
&\cong \Ccat(\Gamma.A.B\{\proj{\Gamma}{\eta_A} \})(1,T\Sigma_A (B\{\proj{\Gamma}{\eta_A} \})\{\proj{\Gamma}{TA}\}\{\proj{\Gamma.TA}{B}\}\{\qu{\proj{\Gamma}{\eta_A}}{B}\})\\
&\ra{(-)^*} \Ccat(\Gamma.TA.B)(1,T\Sigma_A (B\{\proj{\Gamma}{\eta_A} \})\{\proj{\Gamma}{TA}\}\{\proj{\Gamma.TA}{B}\})\\
&\cong \Ccat(\Gamma)(\Sigma_{TA}B,T\Sigma_A (B\{\proj{\Gamma}{\eta_A}\})).
\end{align*}
As a consequence, we are able to define both a left and a right pairing (which will in general not coincide for non-commutative effects):
\begin{diagram}
\Sigma_{TA} {TB} & \rTo^{s'} & T\Sigma_{A} TB\{\proj{\Gamma}{\eta_A}\} &\rTo^{Ts} & T^2 \Sigma_AB\{\proj{\Gamma}{\eta_A}\} & \rTo^{\mu} & T\Sigma_AB\{\proj{\Gamma}{\eta_A}\}\\
\Sigma_{TA} {TB} & \rTo^{s} & T\Sigma_{TA} B&\rTo^{Ts'} & T^2 \Sigma_AB\{\proj{\Gamma}{\eta_A}\} & \rTo^{\mu} & T\Sigma_AB\{\proj{\Gamma}{\eta_A}\}.
\end{diagram}
\end{remark}
\begin{theorem}[dCBPV+ Semantics] We have a sound interpretation of dCBPV+ in a dCBPV+ model. The interpretation in such categories is complete in the sense that an equality of values or computations holds in all interpretations iff it is provable in the syntax of dCBPV+. In fact, the interpretation defines a 1-1 correspondence between categorical models and syntactic theories in dCBPV+ which satisfy mutual soundness and completeness results.
\end{theorem}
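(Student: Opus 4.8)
The plan is to mirror the soundness/completeness argument for dCBPV- (Theorem \ref{thm:dcbpv-soundcomplete}) with the one extra ingredient of dependent Kleisli extensions, and this is essentially a routine extension of the earlier argument. For soundness, I would interpret the extra typing rule of figure \ref{fig:depklext} using the dependent Kleisli extension operation $(-)^*$ of a dCBPV+ model, reading the judgement $\Gamma,\Gamma'[\thunk M/z];\cdot\vdash \toin{M}{x}{N}:\ct{B}[\thunk M/z]$ as the interpretation $\sem{N}^*\{\qu{\langle \id_{\sem{\Gamma}},\sem{M}\rangle}{\sem{\Gamma'}}\}$ where $\sem{N}\in\Ccat(\sem{\Gamma}.\sem{A}.\sem{\Gamma'}\{\proj{\sem{\Gamma}}{\eta_{\sem{A}}}\})(1,U\sem{\ct{B}}\{\qu{\proj{\sem{\Gamma}}{\eta_{\sem{A}}}}{\sem{\Gamma'}}\})$ and $\sem{M}$ is (the transpose of) the interpretation of the computation $M$ as a global element of $F\sem{A}$. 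One then checks, rule by rule, that the equational theory is validated: the $\beta$- and $\eta$-laws of figure \ref{fig:vceqs} involving $\toin{(\return V)}{x}{M}$ and $L[K/{\nil}]$ follow from the unitality and composition laws of $(-)^*$ together with its compatibility with the ordinary Kleisli extension $(-)^\star$ (the commuting square in the definition of dCBPV+ model), exactly as the corresponding facts for the non-dependent $\mathsf{to}$-construct followed from the adjunction $F\dashv U$ in the dCBPV- case. All other rules are inherited verbatim from the dCBPV- soundness proof, since the underlying structure of a dCBPV+ model is a dCBPV- model.

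For completeness I would carry out the standard term-model (syntactic category) construction. Starting from a dCBPV+ theory $\mathbb{T}$, let $\Bcat^{\mathbb{T}}$ have as objects contexts of value types modulo judgemental equality, with morphisms context morphisms (lists of values), exactly as in the dCBPV- completeness proof; $\Ccat^{\mathbb{T}}(\Gamma)$ has value types as objects and $\Ccat^{\mathbb{T}}(\Gamma)(A,A'):=\{V\mid \Gamma,x:A\vdash V:A'\}/{=}$ as hom-sets; $\Dcat^{\mathbb{T}}(\Gamma)$ has computation types as objects with $\Dcat^{\mathbb{T}}(\Gamma)(\ct{B},\ct{C}):=\{K\mid \Gamma;{\nil}:\ct{B}\vdash K:\ct{C}\}/{=}$. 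The comprehension, $F\dashv U$, products, $\Pi$-, $\csigma{-}{}$ (i.e. $\cpi{-}{}$)-, $\Sigma$-, $\Id$- and $0,+$-structure are all built exactly as in the proof of Theorem \ref{thm:dcbpv-soundcomplete}. The new work is to equip this with dependent Kleisli extensions: given a term representing an element of $\Ccat^{\mathbb{T}}(\Gamma.A.\Gamma'\{\proj{\Gamma}{\eta_A}\})(1,UF A'\{\qu{\proj{\Gamma}{\eta_A}}{\Gamma'}\})$, which syntactically is a computation $\Gamma,x:A,\Gamma'[\tr x/z];\cdot\vdash N:FA'$, define its dependent Kleisli extension to be the term $\Gamma,z:UFA,\Gamma';\cdot\vdash \toin{(\force z)}{x}{N}:FA'$ provided by figure \ref{fig:depklext} (taking $M:=\force z$ there). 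The unitality, composition and $(-)^\star$-compatibility axioms then translate directly into the $\beta$- and $\eta$-equations of the $\mathsf{to}$-construct, which hold by hypothesis in $\mathbb{T}$. One then verifies that $\sem{-}$ applied to $\mathbb{T}$ gives back $\mathbb{T}$ up to the usual categorical equivalence and that the construction is inverse (up to equivalence) to passing from a model to its internal language, yielding the claimed bijection between models and theories with the mutual soundness/completeness properties.

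The only point requiring genuine care — and the place I would expect to spend the most effort — is checking that the syntactic dependent Kleisli extension $N\mapsto \toin{(\force z)}{x}{N}$ is \emph{well-defined on equivalence classes} and that its three axioms really do follow from the dCBPV+ equational theory; in particular the composition law $b^*\{\qu{\langle \id_\Gamma, a^*\rangle}{\Gamma'}\}=(b^*\{\qu{\langle \id_\Gamma,a\rangle}{\Gamma'}\})^*$ unwinds, under the translation, to a nested $\mathsf{to}$-associativity identity whose validity relies on the interplay between $\thunk$, $\force$, $\return$ and the type-level substitution $\ct{B}[\thunk M/z]$, and one must confirm the types line up (each side must be shown to be a well-typed term of the common type, which is where subtleties about $\qu{\proj{\Gamma}{\eta_A}}{\Gamma'}$-substitution live). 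I do not anticipate any essentially new obstacle beyond bookkeeping: the structural core of both soundness and completeness is identical to the dCBPV- case already established, and the Kleisli-extension axioms were deliberately designed to be the semantic transcription of the $\beta\eta$-theory of the new rule. A secondary, purely cosmetic issue is that the interpretation clauses for all old connectives are literally those of Theorem \ref{thm:dcbpv-soundcomplete}, so I would simply cite that theorem and present only the incremental clause and the incremental verifications.
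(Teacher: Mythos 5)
Your proposal is correct and follows essentially the same route as the paper: the paper likewise reduces everything to the dCBPV- soundness/completeness theorem, interprets the rule of figure \ref{fig:depklext} by dependent Kleisli extensions combined with composition, and conversely recovers the semantic $(-)^*$ in the term model by instantiating that rule with $\force x$ (your $\force z$) for $M$. The paper states this in two sentences; your additional bookkeeping about well-definedness and the unit/composition/compatibility axioms is exactly the detail it leaves implicit.
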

\begin{proof}
This follows from theorem \ref{thm:dcbpv-soundcomplete} together with the observation that we can interpret the rule of figure \ref{fig:depklext} by dependent Kleisli extensions combined with composition. Conversely, we can apply the rule of figure \ref{fig:depklext} with $\Gamma,x:UFA;\cdot \vdash \force x:FA$ for $M$ to derive the rule for dependent Kleisli extensions.
\end{proof}

\begin{theorem}[Dependent CBN Semantics {2}] The (semantic equivalent of the) CBN translation of DTT with $\Sigma_{1\leq i\leq n}$-, $1$-, $\Sigma$-, $\Id$-, $\Pi_{1\leq i\leq n}$-, $\Pi$-types, where we use the strong (dependent) elimination rules for all positive connectives, into dCBPV+, lets us construct a categorical model of CBN dependent type theory with the connectives above out of any model of dCBPV+ by taking the co-Kleisli category for $!=FU$. The interpretation of CBN dependent type theory is sound and complete for the equational theory induced from dCBPV+:\\
\resizebox{\linewidth}{!}{
$
\sem{\ct{B_1},\cdots,\ct{B_n}\vdash \ct{B}}=\Dcat(U\sem{\ct{B_1}}.\cdots.U\sem{\ct{B_n}})(F1,\sem{\ct{B}})\cong\Dcat_{!}(U\sem{\ct{B_1}}.\cdots.U\sem{\ct{B_n}})(\top,\sem{\ct{B}}).
$
}
\end{theorem}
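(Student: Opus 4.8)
The plan is to mirror the structure of the earlier ``Dependent CBN Semantics \{1\}'' result, but now using the dependent Kleisli extensions available in a dCBPV+ model to justify the \emph{strong} (dependent) elimination rules for the positive connectives. First I would set up the semantic translation functor: given a dCBPV+ model $F\dashv U:\Ccat\leftrightarrows \Dcat$, I take the fibrewise co-Kleisli category $\Dcat_!$ for the indexed comonad $!=FU$, and verify once more (as in the proof of ``Dependent CBN Categories'') that it is an indexed category with full and faithful comprehension, fibred finite products $\Pi_{1\leq i\leq n}$, and $\cpi{-}{}$-types, by exactly the chains of natural isomorphisms already recorded there. The displayed isomorphism $\sem{\ct{B_1},\cdots,\ct{B_n}\vdash \ct{B}}=\Dcat(U\sem{\ct{B_1}}.\cdots.U\sem{\ct{B_n}})(F1,\sem{\ct{B}})\cong\Dcat_{!}(U\sem{\ct{B_1}}.\cdots.U\sem{\ct{B_n}})(\top,\sem{\ct{B}})$ is then immediate from $FU\top\cong F1$ and the co-Kleisli hom-set definition, as before.

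The genuinely new content is upgrading the positive connectives $\Sigma_{1\leq i\leq n}$, $1$, $\Sigma$ and $\Id$ from weak to strong elimination in $\Dcat_!$. In the dCBPV- argument these connectives only supported non-dependent elimination because the reindexing of the target along the unit $\eta$ could not be undone; the dependent Kleisli extension $(-)^*$ is precisely the operation that reverses this reindexing. Concretely, for $\Sigma$-types I would exhibit the required bijection
\[
\Dcat_!\bigl(\Gamma.(F\Sigma_{UA}U B)\bigr)(C',C'')\;\cong\;\Dcat_!(\Gamma.UA.B)\bigl(C'\{\ldots\},C''\{\ldots\}\bigr)
\]
by composing the co-Kleisli/adjunction isomorphisms that already give the weak rule with a further bijection supplied by $(-)^*$ and its unitality and composition laws. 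The cleanest route is to note that the CBN translation of figure \ref{fig:depcbntrans} sends a positive connective $\Sigma_{x:\ct{B}}\ct{B'}$ to $F(\Sigma_{x:U\ct{B}^n}U\ct{B'}^n)$, so its elimination rule in the source corresponds in dCBPV+ to a $\mathsf{to}\;\mathsf{in}$ over an $F$-type whose result type is allowed to depend on the bound identifier — exactly the rule of figure \ref{fig:depklext}. Thus the strong $\Sigma$-, $\Sigma_{1\leq i\leq n}$-, $1$- and $\Id$-elimination rules in $\Dcat_!$ are interpreted by the dependent Kleisli extension applied to the weak eliminators, and the $\beta$-laws follow from the unitality law of $(-)^*$ while the commutative conversions / coherence with substitution follow from its composition law and the requirement $b^*\{\qu{\proj{\Gamma}{\eta_A}}{\Gamma'}\}=b$; the $\eta$-laws are, as the preceding theorems note, not expected to hold for the positive connectives under CBN with effects, so nothing more is required there.

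Soundness and completeness of the induced interpretation of CBN dependent type theory for the equational theory induced from dCBPV+ then reduces, as stated, to the soundness/completeness of dCBPV+ itself together with the fact that the CBN translation of figures \ref{fig:depcbntrans} (extended with the dependent elimination rules, which are now well-defined by the remarks following figure \ref{fig:depklext}) is compositional and faithful on term and type constructors: every derivable judgement and judgemental equality of CBN DTT maps to a derivable one of dCBPV+, and conversely an equation that fails syntactically in CBN DTT translates to one that fails in dCBPV+, hence, by completeness of dCBPV+, fails in some model, hence fails in the corresponding co-Kleisli model. I expect the main obstacle to be the bookkeeping in verifying that the strong elimination rules interpreted via $(-)^*$ really do satisfy \emph{all} the required commutative conversions and substitution-stability (Beck--Chevalley-style) conditions in $\Dcat_!$ — in particular checking that dependent Kleisli extension commutes with the change-of-base functors in the way the syntax of the CBN-translated elimination rules demands — rather than any single deep idea; this is where the composition law and the coherence-with-$(-)^\star$ axiom of a dCBPV+ model have to be deployed carefully, and it is essentially routine but lengthy, so I would only sketch a representative case (say $\Sigma$-types) and leave the others to the reader by analogy, exactly as the excerpt does for the weak-elimination version.
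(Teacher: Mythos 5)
Your proposal is correct and follows essentially the route the paper intends: the paper in fact states this theorem without an explicit proof, leaving it to follow from the preceding theorem that dependent Kleisli extensions make the syntactic CBN translation of figure \ref{fig:depcbntrans} (with dependent eliminators) well-defined, composed with the soundness/completeness of the categorical semantics of dCBPV+ and the co-Kleisli analysis already carried out for the dCBPV- case. Your identification of the dependent sequencing rule of figure \ref{fig:depklext} as exactly what interprets the strong eliminators of the translated positive connectives, with the $\beta$-laws coming from unitality of $(-)^*$ and no $\eta$-laws required, is precisely the content the paper leaves implicit.
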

\begin{theorem}[Dependent CBV Semantics] The (semantic equivalent of the) CBV translation of DTT with $\Sigma_{1\leq i\leq n}$-, $1$-, $\Sigma$-, $\Id$-, $\Pi_{1\leq i\leq n}$-, $\Pi$-types, where we use the strong (dependent) elimination rules for all positive connectives, into dCBPV+, lets us construct a categorical model of CBV dependent type theory with the connectives above out of any model of dCBPV+ by taking the Kleisli category for $T=UF$. The interpretation of CBN dependent type theory is sound and complete for the equational theory induced from dCBPV+:
\begin{align*}
\sem{A_1,\cdots,A_n\vdash A}&=\Dcat(\sem{A_1}. \cdots.\sem{A_n}\{\eta_{\sem{A_1},\ldots,\sem{A_{n}}}\})(F1,F\sem{A}\{\eta_{\sem{A_1},\ldots,\sem{A_{n}}} \})\\
&\cong\Ccat_T(\sem{A_1}. \cdots.\sem{A_n}\{\eta_{\sem{A_1},\ldots,\sem{A_{n}}}\})(1,\sem{A}\{\eta_{\sem{A_1},\ldots,\sem{A_{n}}} \}).
\end{align*}
Here, $\eta_{\sem{A_1},\ldots,\sem{A_{n}}}$ is inductively defined by $$\eta_{\sem{A_1},\ldots,\sem{A_{k}}}:=\qu{\eta_{\sem{A_1},\ldots,\sem{A_{k-1}}}}{\sem{A_k}};\proj{\sem{UFA_1}.\cdots.\sem{UFA_{k-1}}}{\eta_{\sem{A_k}}}.$$
\end{theorem}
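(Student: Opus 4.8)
The statement to prove is the Dependent CBV Semantics theorem: that the CBV translation of DTT with $\Sigma_{1\leq i\leq n}$-, $1$-, $\Sigma$-, $\Id$-, $\Pi_{1\leq i\leq n}$-, $\Pi$-types (with strong elimination for positive connectives) into dCBPV+ yields, for any dCBPV+ model, a categorical model of CBV dependent type theory via the Kleisli category for $T = UF$, with the displayed soundness and completeness property.

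\textbf{Overall approach.} The plan is to piggyback entirely on the already-established soundness and completeness of dCBPV+ (theorem ``dCBPV+ Semantics'') together with the syntactic CBV translation, exactly as was done for the simply typed CBV semantics in section \ref{sec:scbpvsem} and for the CBN semantics in the two ``Dependent CBN Semantics'' theorems. Concretely, I would proceed in three stages: (1) make precise what a ``categorical model of CBV dependent type theory'' means — namely, I would \emph{define} it via the CBV translation, as the excerpt explicitly licenses (``as no notion of a CBV or CBN equational theory has been formulated for dependent type theory $\ldots$ we take the equational theories induced by these translations as their definitions''); (2) check that the Kleisli category $\Ccat_T$ of the indexed monad $T = UF$, with the homset description given in the statement, does carry the requisite structure, by transporting it across the translation; (3) derive soundness and completeness of the induced interpretation of CBV DTT from that of dCBPV+.

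\textbf{Key steps in order.} First I would recall from the ``dCBPV+ Semantics'' theorem that interpretation in dCBPV+ models is sound and complete and sets up a bijective correspondence between models and theories. Second, I would spell out the semantic counterpart of the CBV translation of figure \ref{fig:depcbvtrans}: a type $x_1:A_1,\ldots,x_n:A_n\vdash A\type$ maps to $z_1:UFA_1,\ldots,z_n:UFA_n \vdash A^v\vtype$, i.e.\ to an object of $\Ccat(\sem{A_1}.\cdots.\sem{A_n}\{\eta\})$ after change of base along the iterated unit $\eta_{\sem{A_1},\ldots,\sem{A_n}}$ defined in the statement; and a term to an element of $\Dcat(\ldots)(F1, F\sem{A}\{\eta\}) \cong \Ccat_T(\ldots)(1, \sem{A}\{\eta\})$, this last isomorphism being the defining adjunction $F\dashv U$ restricted appropriately. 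Third, I would verify that each CBV connective is interpreted by the corresponding structure in $\Ccat_T$: $\Sigma^{\mathrm{CBV}}$ and $\Pi^{\mathrm{CBV}}$ behave equationally like $\times$- and $\functype$-types (as noted in the excerpt), $\Id^{\mathrm{CBV}}$ satisfies $\beta$ but not necessarily $\eta$ (this is exactly the content of the ``Figures \ref{fig:depcbvtrans} and \ref{fig:depcbntrans} define CBV and CBN translations'' theorem, which I may assume), and substitution of a CBV term into a CBV type is interpreted by the dependent Kleisli extension $(-)^*$ composed with change of base — this is precisely why dCBPV+ rather than dCBPV- is needed, and why the contexts get a $UF$ applied. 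Fourth, soundness: any equation of CBV terms derivable in CBV DTT translates, by definition of the CBV equational theory, to a derivable equation in dCBPV+, hence holds in every dCBPV+ model, hence holds under the induced $\Ccat_T$ interpretation; and conversely for completeness — an equation holding in all $\Ccat_T$ interpretations, when pulled back, holds in all dCBPV+ interpretations (by taking, for a putative counterexample dCBPV+ model, its associated $\Ccat_T$), hence is provable in dCBPV+, hence is in the CBV equational theory. This gives the stated soundness-and-completeness, and indeed the stronger $1$-$1$ correspondence claim by the same bookkeeping used for dCBPV+.

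\textbf{Main obstacle.} The genuinely delicate point is the bookkeeping around the iterated unit $\eta_{\sem{A_1},\ldots,\sem{A_n}}$ and the change-of-base coherence: one must check that the semantic CBV translation of \emph{substitution} — which on the syntactic side is the rule of figure \ref{fig:depklext} applied with $\thunk M^v$ — agrees with composition in $\Ccat_T$, i.e.\ with the dependent Kleisli extension $(-)^*$ followed by $-\{\langle\id,a^*\rangle\}$, and that the unitality and composition laws of $(-)^*$ (from the dCBPV+ model definition) are exactly what make this associative and unital. The $\mathbf{p}$-square/Beck--Chevalley conditions and the ``composition'' law $b^*\{\qu{\langle\id_\Gamma,a^*\rangle}{\Gamma'}\}=(b^*\{\qu{\langle\id_\Gamma,a\rangle}{\Gamma'}\})^*$ are precisely the hypotheses that discharge this; so the obstacle is more a matter of carefully threading these coherences through the $n$-fold context than of proving anything new. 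A secondary subtlety worth flagging — but which the excerpt has already settled in the preceding theorem — is that $\Id^{\mathrm{CBV}}$-types only satisfy $\beta$, so one must be careful not to claim the $\eta$-rule in the induced CBV type theory; I would simply cite that result and restrict the induced equational theory accordingly.
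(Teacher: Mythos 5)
Your proposal is correct and follows essentially the same route the paper intends: the theorem is stated there without an explicit proof, being treated as an immediate consequence of the dCBPV+ soundness/completeness theorem together with the well-definedness of the CBV translation (with the CBV equational theory defined by that translation), which is exactly the argument you reconstruct. Your identification of the delicate points — the iterated-unit bookkeeping, the role of the unitality/composition laws for dependent Kleisli extensions in making Kleisli composition well-behaved, and the failure of $\Id$-$\eta$ — matches the paper's own discussion in the surrounding text.
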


\begin{remark}
We have finally arrived at a notion of a model for CBV dependent type theory. It seems much less straightforward than the corresponding notion of a model for CBN dependent type theory as a particular kind of model of pure dependent type theory in which the $\eta$-laws for positive connectives may fail. Then again, a similar phenomenon is already seen in the simply typed case.
\end{remark}

\subsection{Some Basic Models and Non-Models}\label{sec:dcbpvplusmod}
\begin{mccorrection}
As for dCBPV-, we can note that the identity adjunction on any model of pure DTT (in particular, the families of sets model) gives a model of dCBPV+, which demonstrates consistency.
\begin{theorem}[Consistency] dCBPV+ is consistent both in the sense that not all terms are identified and in the sense that not all types are inhabited.
\end{theorem}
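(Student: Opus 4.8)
The plan is to flesh out the hint just given: the identity indexed adjunction on a suitable model of pure dependent type theory is already a model of dCBPV+, and a nondegenerate one, so consistency follows from soundness. Concretely, I would start from a model $\Bcat^{op}\ra{\Ccat}\Cat$ of pure dependent type theory supporting $1$-, strong $\Sigma$-, $\Pi$-, strong extensional $\Id$- and strong $0,+$-types with full and faithful democratic comprehension, the canonical choice being the families-of-sets model $\Fam(\Set)$ (working over a fixed Grothendieck universe to avoid size issues), which was already observed to model pure DTT. I would then take $\Dcat := \Ccat$ and $F = U = \id$ and verify that $(\id\dashv\id : \Ccat\leftrightarrows\Ccat)$ satisfies every clause of the definition of a dCBPV- model: the comprehension and indexed terminal object are those of $\Ccat$; the adjunction is trivially indexed; the $\cpi{-}{}$-types are the $\Pi$-types of $\Ccat$, with the Beck--Chevalley condition for $\mathbf{p}$-squares inherited; the finite indexed products $(\top,\&)$ come from the indexed terminal object and binary products built from strong $\Sigma$-types; strong $\Sigma$- and strong extensional $\Id$-types are those of $\Ccat$; and the required bijectivity of the canonical maps $\Dcat(C.\Sigma_{1\leq i\leq n}C_i)(\ct D,\ct{D'})\ra{}\Pi_{1\leq i\leq n}\Dcat(C.C_i)(\ct D\{\text{-}\},\ct{D'}\{\text{-}\})$ is exactly the strong $0,+$-type condition on $\Ccat$.

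Next I would supply the dependent Kleisli extensions. Since $F = U = \id$, the monad $T = UF$ is the identity, the unit $\eta_A$ is $\id_A \colon A \to A$ (so $UFA = A$), and therefore $\proj{\Gamma}{\eta_A}$ is an isomorphism (canonically $\Gamma.A.1 \cong \Gamma.A$). Hence the domain and codomain of the operation
$$\Ccat(\Gamma.A.\Gamma'\{\proj{\Gamma}{\eta_A}\})(1, UB\{\qu{\proj{\Gamma}{\eta_A}}{\Gamma'}\}) \longrightarrow \Ccat(\Gamma.UFA.\Gamma')(1, UB)$$
are canonically isomorphic, and I would define $(-)^*$ to be this canonical isomorphism. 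Unitality ($b^*\{\qu{\proj{\Gamma}{\eta_A}}{\Gamma'}\} = b$), the composition law, and agreement with the non-dependent Kleisli extension $(-)^\star$ for $\id\dashv\id$ then all hold on the nose, because every morphism involved is an identity or a canonical coherence isomorphism. So $\id\dashv\id$ on $\Fam(\Set)$ is a genuine dCBPV+ model.

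Finally I would invoke soundness of the interpretation of dCBPV+ in dCBPV+ models, so that any judgemental equality provable in dCBPV+ holds in this model and any inhabited closed type has a global element there. It then suffices to exhibit two witnesses of nondegeneracy in $\Fam(\Set)$. For the first half of consistency, take $x\colon A, y\colon A;\cdot \vdash \return x : FA$ and $x\colon A, y\colon A;\cdot \vdash \return y : FA$ with $A$ interpreted by a two-element set; these interpret to the two distinct projections $\sem{A}\times\sem{A}\to\sem{A}$, so $\return x = \return y : FA$ is not provable. For the second half, take the nullary sum $\cdot\vdash 0 := \Sigma_{1\leq i\leq 0}A_i$, interpreted by the constantly-empty family, which has no global element, so $0$ is inhabited by no closed term.

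I do not foresee a serious obstacle here; the only work is the (somewhat lengthy) bookkeeping of checking each clause of the definition of a dCBPV+ model against the identity adjunction, with the one genuinely load-bearing observation being that $\proj{\Gamma}{\eta_A}$ is an isomorphism when $F = U = \id$, which both forces the dependent Kleisli structure to exist and makes its three axioms trivial. Everything else is inherited directly from the ambient model of pure dependent type theory.
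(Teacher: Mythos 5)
Your proposal is correct and takes essentially the same route as the paper: the paper's own (one-line) argument is exactly that the identity adjunction on a model of pure DTT, in particular the families-of-sets model, is a dCBPV+ model because the identity monad trivially admits dependent Kleisli extensions, and consistency then follows from the nondegeneracy of that model. Your write-up merely supplies the routine clause-by-clause verification and the explicit nondegeneracy witnesses that the paper leaves implicit.
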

 Indeed, the identity monad on any model of DTT trivially admits dependent Kleisli extensions.
\end{mccorrection}

However, as we shall see, it is not the case that any model of dCBPV- extends to a model of dCBPV+. In particular, not every indexed monad on a model of pure DTT admits dependent Kleisli extensions. As it turns out, the existence of dependent Kleisli extensions needs to be assessed on a case-by-case basis. As we shall see, in the case of various set-theoretic models, dependent Kleisli extensions naturally lead to certain subtyping conditions as a necessary requirement which can't always be satisfied.
Therefore, we treat some dCBPV- models for common effects and discuss the (im)possibility of dependent Kleisli extensions.

\subsubsection{A Non-Model and A Model: Writing}
We let $\Bcat$ be $\mathsf{Set}$ and $\Ccat$ be $\Fam(\mathsf{Set})$. Let $M$ be a non-trivial monoid, for instance a monoid of strings of ASCII characters. Then, we let $\Dcat$ be the Eilenberg-Moore category for the indexed monad $-\times M$. Now, we note that dependent Kleisli extensions do not have a sound interpretation in this model of dCBPV-. Indeed, it would amount to giving appropriate maps
\begin{diagram}
\Fam(\mathsf{Set})(\Gamma.A)(1,B\{\langle \id_\Gamma,\id_A,1_M\rangle \}\times M)&\rTo^{(-)^*}&\Fam(\mathsf{Set})(\Gamma.(A\times M))(1,B\times M)\\
\Pi_{\langle c,a\rangle \in \Gamma.A}B(c,a,1_M)\times M&\rTo^{(-)^*}&\Pi_{\langle c,a,m\rangle \in \Gamma.(A\times M)}B(c,a,m)\times M\\
f=\langle f_B,f_M\rangle & \rMapsto &\lambda_{c,a,m} \langle ?,f_M(c,a)*m\rangle .
\end{diagram}
We see that this is not always possible. For instance, let $\Gamma=1=A$ and let $B$ be a predicate that expresses that $m=1_M$ (a predicate which says that no printing happens). In that case, any $f^*$ cannot be a total function as it cannot send, for instance, $(*,*,\textnormal{\texttt{hello world}})$ anywhere.

We would like to stress that this does not show that dependent Kleisli extensions are incompatible with printing. Indeed, it only shows that this particular model of printing does not admit dependent Kleisli extensions. One could conceive of, for example, a model of printing where types depending on $TA$ are not allowed to refer to what is being printed, in which case we could define $f^*(c,a,m):=\langle f_B(c,a),f_M(c,a)*m\rangle $. More generally, such a definition could work if, for all $m\in M$, $B(c,a,1_M)\subseteq B(c,a,m)$.

A concrete instantiation of this idea can be given by considering the setoid model of dependent type theory instead \cite{streicher1993investigations}, which has as objects sets with an equivalence relation, as morphisms functions which send equivalent elements to equivalent elements and as dependent types equivalence respecting families. Note that any monoid $M$ in $\mathsf{Set}$ can be equipped with the codiscrete equivalence relation which identifies all elements to give a monoid internal to the category of setoids. This, in turn, defines an indexed monad $-\times M$ on the setoid model of type theory, which lets us model printing. Note that in this case, predicates cannot distinguish between functions with the same input output behaviour but different printing behaviour. The result is a model of dCBPV+ which models printing (which happens to have intensional $\Id$-types). \mccorrect{However, note that from the point of view of the identity types, $M$ only appears to have one element (although the judgemental equality can distinguish between the elements of $M$).}

\subsubsection{A Non-Model: Reading}
We let $\Bcat$ be $\mathsf{Set}$ and $\Ccat$ be $\Fam(\mathsf{Set})$. Let $S$ be some non-trivial set (that is, not $0$ or $1$), which we think of as a set of states for a storage cell. Then, we let $\Dcat$ be the Eilenberg-Moore category for the indexed monad $(-)^S$. Now, we note that dependent Kleisli extensions do not have a sound interpretation in this model of dCBPV-. Indeed, it would amount to giving appropriate maps
\begin{diagram}
\Fam(\Set)(\Gamma.A)(1,B\{\lambda_s\langle \id_\Gamma,\id_A\rangle \}^S)&\rTo^{(-)^*}&\Fam(\Set)(\Gamma.(A^S))(1,B^S)\\
\Pi_{\langle c,a\rangle \in \Gamma.A}B(s\mapsto \langle c,a\rangle)^S&\rTo^{(-)^*}&\Pi_{(s\mapsto\langle c,a_s\rangle )\in \Gamma.(A^S)}B(s\mapsto \langle c,a_s\rangle)^S\\
f & \rMapsto &\lambda_{s\mapsto \langle c,a_s\rangle}\lambda_{s'} ?.
\end{diagram}
We see that this is not always possible. For instance, let $\Gamma=1$ and $A=2$ and let $B$ be a predicate that expresses that $s\mapsto \langle *,a_s\rangle  $ is constant. In that case, any $f^*$ cannot be a total function as it cannot send a non-constant $s\mapsto \langle *,a_s\rangle$ anywhere.

If we want to define, as usual, $f^*(s\mapsto \langle c,a_s\rangle )(s'):=f(c,a_{s'})(s')$, we require that for all fixed $s'\in S$, $B(s\mapsto \langle c,a_{s'}\rangle)\subseteq B(s\mapsto \langle c,a_s\rangle )$, which is easily seen to be equivalent to $B$ being constant on $A^S$.

\subsubsection{A Non-Model: Global State}
Similarly, for global state, we let $\Bcat$ be $\Set$ and $\Ccat$ be $\Fam(\Set)$ and we take $T:= (-\times S)^S$, where $S$ is a non-trivial set, and let $\Dcat$ be the Eilenberg-Moore category for $T$. Then, dependent Kleisli extensions would amount to appropriate maps\\
\\
\begin{mccorrection}
\resizebox{\linewidth}{!}{
\mbox{\begin{diagram}
\Fam(\Set)(\Gamma.A)(1,(B\{\lambda_s\langle \id_\Gamma,\id_A\rangle \}\times S)^S)&\rTo^{(-)^*}&\Fam(\Set)(\Gamma.((A\times S)^S))(1,(B\times S)^S)\\
\Pi_{\langle c,a\rangle \in \Gamma.A}(B(s\mapsto \langle c,a,s\rangle)\times S)^S&\rTo^{(-)^*}&\Pi_{(s\mapsto\langle c,a_s,t_s\rangle )\in \Gamma.((A\times S)^S)}(B(s\mapsto \langle c,a_s,t_s\rangle)\times S)^S\\
f & \rMapsto &\lambda_{s\mapsto \langle c,a_s,t_s\rangle}\lambda_{s'} ?.
\end{diagram}}}\end{mccorrection}\quad\\
\\
Now, $B$ could express the property that $a_s=a$ ($a_s$ is independent of $s$) and $t_s=s$. In that case, no such dependent Kleisli extension exists.

One could imagine a different model of global state, however, in which, for every fixed $s'\in S$, $B(s\mapsto \langle c,a_{s'},s\rangle )\subseteq B(s\mapsto \langle c,a_s,t_s\rangle)$. In that case, one could define as one normally (for non-dependent Kleisli extensions) would $f^*(s\mapsto \langle c,a_s,t_s\rangle )(s'):=f(c,a_{s'})(s')$. At present, it is not clear to the author if non-trivial models along these lines exist.

\subsubsection{A Model: Exceptions or Divergence}
We consider a model for exceptions or divergence, where we use the monad $T=E+(-)$  on $\Fam(\Set)$, for some fixed set $E$ whose elements we think of as either exceptions or, perhaps, in the case of $E=1$, divergence. We let $\Bcat$ be $\Set$ and $\Ccat$ be $\Fam(\Set)$ and we take for $\Dcat$ the Eilenberg-Moore category for $T$. In this case, we in fact have maps
\begin{diagram}
\Fam(\Set)(\Gamma.A)(1,E+B\{\langle \id_\Gamma,\inr\rangle\})&\rTo^{(-)^*}&\Fam(\Set)(\Gamma.(E+A))(1,E+B)\\
\Pi_{\langle c,a\rangle \in \Gamma.A}E+B(c,\inr\;a)&\rTo^{(-)^*}&\Pi_{\langle c,t\rangle \in \Gamma.(E+A)}E+B(c,t)\\
f & \rMapsto & \lambda_{c}[\inl ,f(c,-)].
\end{diagram}
These are easily seen to give a sound interpretation of dependent Kleisli extensions. They indeed model the propagation of exceptions one would expect.

\subsubsection{A Dubious Model: Erratic Choice}
We consider a model for erratic choice, where we use the powerset monad $T=\p$  on $\Fam(\Set)$. We let $\Bcat$ be $\Set$ and $\Ccat$ be $\Fam(\Set)$ and we take for $\Dcat$ the Eilenberg-Moore category for $T$. Dependent Kleisli extensions would amount to appropriate maps
\begin{diagram}
\Fam(\Set)(\Gamma.A)(1,\p B\{\langle \id_\Gamma,x\mapsto \{x\} \rangle\})&\rTo^{(-)^*}&\Fam(\Set)(\Gamma.(\p A))(1,\p B)\\
\Pi_{\langle c,a\rangle \in \Gamma.A}\p B(c,\{a\})&\rTo^{(-)^*}&\Pi_{\langle c,t\rangle \in \Gamma.(\p A)}\p B(c,t)\\
f & \rMapsto & \lambda_{c,t}?.
\end{diagram}
We can, in principle, define $f^*(c,t):=(\bigcup_{a\in t} f(c,a))\cap B(c,t)$ to obtain a dependent Kleisli extension. However, this model might not correspond to the expected operational semantics. It would be preferable to consider, instead, a model of type theory $\Ccat$ in which it is always the case that $\bigcup_{a\in t} B(c,\{a\})\subseteq  B(c,t)$, in which case we can just define $f^*(c,t):=\bigcup_{a\in t}f(c,a)$ (cf. reader monad). At present, it is not clear to the author how a model with such properties can be constructed.

\subsubsection{A Puzzle: Control Operators}
We consider a dCBPV- model for control operators, where we use a continuation monad $T=R^{(R^-)}$  on $\Fam(\Set)$, for some non-trivial set $R$. We let $\Bcat$ be $\Set$ and $\Ccat$ be $\Fam(\Set)$ and we take for $\Dcat$ the Eilenberg-Moore category for $T$. Dependent Kleisli extensions would amount to appropriate maps
\begin{diagram}
\Fam(\Set)(\Gamma.A)(1, (R^{(R^{B\{\langle \id_\Gamma,x\mapsto \mathsf{ev}_x \rangle\}})}) &\rTo^{(-)^*}&\Fam(\Set)(\Gamma.( R^{(R^A)}))(1,R^{(R^B)})\\
\Pi_{\langle c,a\rangle \in \Gamma.A}(R^{(R^{B(c,\mathsf{ev}_a)})})&\rTo^{(-)^*}&\Pi_{\langle c,t\rangle \in \Gamma.(R^{(R^A)})}(R^{(R^{B(c,t)})})\\
f & \rMapsto & \lambda_{c,t}?.
\end{diagram}
In order to match the expected operational semantics, it is tempting to try to define, just as in the simply typed case, $f^*(c,t)(k):=t(\lambda_a f(c,a)(k))$. However, this is only well-defined if we have $\forall_{a\in A(c)}R^{B(c,t)}\subseteq R^{B(c,\mathsf{ev}_a)}$. In particular, we would have that \mccorrect{$B(c,\mathsf{ev}_a)=B(c,\mathsf{ev}_{a'})$} for all $a,a'\in A(c)$. This suggests a kind of incompatibility between control operators and dependent Kleisli extensions. We would like to further investigate the combination of dCBPV with control operators in future work, especially given the correspondence with classical logic. In the light of \cite{herbelin2005degeneracy}, we already know that the combination of dependent types and control operators can easily lead to degeneracy of the system (in the sense that all programs get equated \mccorrect{propositionally}).

\subsubsection{A Model: Recursion}
Note that the model of the dependent LNL calculus of section \ref{sec:lindepscott} in particular gives us a model of dCBPV-. The model clearly supports recursion, as we can define our usual fixpoint combinators. This model is easily seen further to support dependent Kleisli extensions: similar to our previous model of divergence, for a dependent function $f$, we define the Kleisli extension $f^*$ as sending the new bottom element to bottom and otherwise acting as $f$.

\subsection{Operational Semantics and Effects}
Using the CK-machine, we can again define an operational semantics for dCBPV+.

The definition of the operational semantics does not change in the presence of dependent Kleisli extensions and is exactly as that described in section \ref{sec:depop}. In particular, figures \ref{fig:ckmachine} and \ref{fig:ckid} define a CK-machine on which we evaluate the computations of pure dCBPV+. As before, we can add the effects of figure \ref{fig:effects} together with their operational semantics of figures \ref{fig:opsemdivs} and \ref{fig:opsemprint} and equations of figure \ref{fig:effeqn}.  We still have the same determinacy and strong normalization results as before, as the essentially untyped proofs remain valid.
\begin{theorem}[Determinacy, Strong Normalization] No transition occurs precisely if we are in a terminal configuration. In absence of erratic choice, at most one transition applies to each configuration. In absence of divergence and recursion, every configuration reduces to a terminal configuration in a finite number of steps.
\end{theorem}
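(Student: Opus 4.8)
The claim is that for dCBPV+ (the system obtained from dCBPV- by adding the rule for dependent Kleisli extensions of figure \ref{fig:depklext}), the CK-machine transitions enjoy determinacy, and — in the absence of divergence and recursion — strong normalization. The plan is to reduce both properties to the corresponding results already established for dCBPV- in theorem \ref{thm:subjreddcbpv-}, by exploiting the crucial observation, already used there, that the CK-machine operates on \emph{untyped} terms: the transition relation of figures \ref{fig:ckmachine}, \ref{fig:ckid}, \ref{fig:opsemdivs} and \ref{fig:opsemprint} never inspects a type annotation, only the outermost term (and, for $\mathsf{pm}$- and $\return$-constructs, the normal form of a value subterm).

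First I would address determinacy. The argument is that the new typing rule of figure \ref{fig:depklext} introduces no new term former: the computation $\toin{M}{x}{N}$ produced by dependent Kleisli extension is syntactically identical to the ordinary sequencing computation of dCBPV-, only its typing derivation differs. Hence the set of \emph{untyped} computations of dCBPV+ coincides with that of dCBPV-, and the CK-machine transition relation on configurations is literally the same relation as in the dCBPV- case. Therefore the statements ``no transition applies precisely at terminal configurations'' and ``at most one transition applies to each configuration in the absence of erratic choice'' transfer verbatim from theorem \ref{thm:subjreddcbpv-}; one only needs to note that the case analysis on the shape of $M$ and on whether value subterms are in normal form (handled by the distinction $\nnf{V}$ versus $\nf{V}$) is exhaustive and mutually exclusive, exactly as before, and that adding erratic choice is the unique source of branching, again as in the simply typed and dCBPV- analyses.

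Next I would handle strong normalization in the absence of divergence and recursion. Again, because dCBPV+ contributes no new untyped computation terms and no new transitions, any infinite CK-machine run in dCBPV+ is an infinite CK-machine run in dCBPV-, so the result follows from theorem \ref{thm:subjreddcbpv-}. If one prefers a self-contained argument rather than a citation, the standard route is: (i) strong normalization of the parallel nested $\beta$-reductions on values, which holds since values form an ordinary cartesian dependent type theory (Tait-style logical relations, as in \cite{martin1998intuitionistic}); and (ii) a logical-relations / reducibility argument on computations showing that the remaining transitions — $\beta$-reductions for $U/F$, $\mathsf{pm}$-constructs, projection products, function application, and the $\Id$-elimination transition of figure \ref{fig:ckid} — terminate, with the absence of $\diverge$ and $\mu_z$ removing the only non-well-founded transitions of figures \ref{fig:opsemdivs} and \ref{fig:opsemprint}. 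Since this reducibility argument for dCBPV- is itself essentially untyped (it tracks only term structure), it extends to dCBPV+ unchanged.

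The one point genuinely worth flagging — and the reason this theorem is stated \emph{without} a subject reduction clause, in contrast to theorem \ref{thm:subjreddcbpv-} — is that subject reduction can \emph{fail} in dCBPV+, so the expected main obstacle is not present here precisely because we have dropped the hardest conjunct. Concretely, the transitions involving normalization of a value subterm ($\return \nnf{V}\leadsto \return\nf{V}$, and likewise for $\idpm{\refl{\nnf{V}}}{x}{M}$ and the $\mathsf{pm}$-rules) rely, for type preservation, on $\Gamma\vdash \nnf{V}=\nf{V}:A$ lifting to equality of the dependent types into which $V$ has been substituted. In dCBPV-, $V$ is substituted into types only \emph{via values}, so this lifting is immediate; but the rule of figure \ref{fig:depklext} substitutes a \emph{thunk of a computation} $\thunk M$ into a type, and the operational reduction of $M$ inside that thunk need not be reflected in the (value-)normal form used by the type, so the subject of the configuration can genuinely change type along a transition. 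I would therefore state and prove only determinacy and strong normalization, deliberately omitting the subject reduction clause, and I would add a short remark pointing out why it fails (mirroring the earlier-noted failure of subject reduction for dCBPV+ that the introduction of this chapter already anticipates), so that the contrast with theorem \ref{thm:subjreddcbpv-} is transparent to the reader.
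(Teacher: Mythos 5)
Your proposal is correct and takes essentially the same route as the paper, which dispatches the theorem with the single observation that ``the essentially untyped proofs remain valid'': since the dependent Kleisli extension rule of figure \ref{fig:depklext} adds only a typing derivation for the already-existing term former $\toin{M}{x}{N}$, the untyped transition relation is unchanged and determinacy and strong normalization transfer from the dCBPV- case. Your closing remark on why subject reduction is deliberately omitted also matches the paper's subsequent discussion of limited subject reduction for dCBPV+.
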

However, the results of section \ref{sec:dcbpvplusmod} are reflected at the level of the operational semantics. While for some effects like divergence, exceptions and recursion, subject reduction can be established, certain subtyping conditions are necessary to obtain subject reduction in the presence of printing, global state and erratic choice. It is at present not clear if a these conditions are compatible with, for instance, $\cpi{-}{}$-types.
\begin{theorem}[Limited Subject Reduction]\label{thm:subjreddcbpv+} In absence of printing, global state and erratic choice, if the sequence of reductions of a well-typed computation $M$ passes through a well-typed configuration $M,K$ and \mccorrect{later} another configuration $M',K$, then the latter configuration is also well-typed and has the same type as the former. \end{theorem}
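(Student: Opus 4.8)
The plan is to mimic the proof of theorem \ref{thm:subjreddcbpv-} (subject reduction for dCBPV-), reusing the crucial observation that types only ever depend on \emph{values}, so that value normalization preserves types. First I would recall the structure of that earlier argument: every reduction rule of the CK-machine is defined on untyped terms, so determinacy and strong normalization are unaffected; the only transitions that could threaten subject reduction are (i) those that normalize a value occurring as an argument of a computation term-former, and (ii) those that fire a $\beta$-rule substituting a value into a computation or a type. For (i), we invoke the standard fact \cite{martin1998intuitionistic} that value types are closed under the untyped $\beta$-reductions for values, whence $\Gamma\vdash\nnf{V}=\nf{V}:A$, and therefore $\Gamma\vdash\ct{B}[\nnf{V}/x]=\ct{B}[\nf{V}/x]$ for any $\Gamma,x:A\vdash\ct{B}\ctype$; this handles exactly as in theorem \ref{thm:subjreddcbpv-}. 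For (ii), each $\beta$-reduction is a directed version of an equation from figures \ref{fig:vceqs} and \ref{fig:vcdepeqs}, which are sound in any dCBPV+ model, so the two sides are judgementally equal and hence have the same type.

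The genuinely new ingredient, and the reason the statement is only a \emph{limited} subject reduction result, is the transition for dependent Kleisli extensions, i.e. the term former $\toin{M}{x}{N}$ of figure \ref{fig:depklext} together with the CK-machine transition $\toin{M}{x}{N},K\leadsto M,\toin{[\cdot]}{x}{N}::K$ and its companion $\return\nf{V},\toin{[\cdot]}{x}{N}::K\leadsto N[\nf{V}/x],K$. Here the subtlety is that the result type of $\toin{M}{x}{N}$ is $\ct{B}[\thunk M/z]$, which mentions the thunk of the \emph{current} computation $M$; after $M$ takes a step to some $M'$, we have in general only that $\thunk M$ and $\thunk M'$ are \emph{observationally} related, not judgementally equal, precisely because effectful computations are dynamic and their reductions break equations (e.g. $\nondet{}{\return\ttt,\return\fff}\leadsto\return\fff$). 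So I would carefully track, through the $K$-component of the configuration, what the ambient type of the configuration is, and check reduction-by-reduction that it is preserved. For the ``pure-modulo-subtyping'' effects --- divergence, exceptions, recursion --- one shows directly (appealing to the concrete models of section \ref{sec:dcbpvplusmod}, or by a syntactic argument) that the relevant instances of $\thunk M = \thunk M'$ do hold judgementally, or that $\ct{B}$ does not in fact depend on the distinction; the dependent Kleisli extension in those models is defined so that it propagates the effect in a type-respecting way (send bottom/exceptions to bottom/exceptions, otherwise act as the underlying dependent function). This is where the hypothesis ``in absence of printing, global state and erratic choice'' is used: for those three effects the sub-examples of section \ref{sec:dcbpvplusmod} exhibit type families that genuinely distinguish computations from their reducts, so subject reduction fails, and we must exclude them.

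The precise shape of the argument I would write is an induction on the length of the reduction sequence from $M,K$ to $M',K$, using as the induction hypothesis that each intermediate configuration is well-typed of the same type. The step case splits on which transition of figure \ref{fig:ckmachine}, \ref{fig:ckid} (plus figures \ref{fig:opsemdivs}, \ref{fig:opsemprint} restricted to divergence, exceptions and recursion) fires. All the cases not involving $\toin{[\cdot]}{x}{N}::K$ stacks are verbatim from theorem \ref{thm:subjreddcbpv-}. For the $\toin{[\cdot]}{x}{N}::K$ cases one uses: that pushing $\toin{[\cdot]}{x}{N}$ onto the stack does not change the type of the configuration (it is the stack-level analogue of the typing rule for $\toin{M}{x}{N}$, read from figure \ref{fig:depklext}); that when $\return\nf{V}$ meets such a stack frame, $\thunk M$ has by that point reduced to the canonical form $\tr\nf{V}=\thunk\return\nf{V}$, so $\ct{B}[\thunk M/z]$ has become $\ct{B}[\tr\nf{V}/z]$, which is exactly the type of $N[\nf{V}/x]$ by the substitution rule; and that for the divergence/exception transitions the configuration type is trivially preserved since the computation stays inside $\ct{B}$. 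The main obstacle, then, is not any single calculation but the bookkeeping: correctly formalising ``the type of a configuration $M,K$'' in the dependently typed setting (where $K$'s type may itself mention a thunk of $M$), and verifying that this assignment is stable under exactly the transitions allowed by the three excluded effects --- and pinpointing why it fails for the excluded ones, so that the hypothesis of the theorem is seen to be sharp.
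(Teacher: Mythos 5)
You have correctly isolated the key observation — that excluding printing, global state and erratic choice is needed precisely so that every machine transition applied to $M$ is also a judgemental equality, whence $M = \return\nf{V}$ and $\ct{B}[\thunk M/z] = \ct{B}[\thunk\return\nf{V}/z]$ — but the proof skeleton you wrap around it does not work. Your induction hypothesis, that \emph{every} intermediate configuration is well-typed of the same type, is false: the stack $\toin{[\cdot]}{x}{N}::K$ created by the push transition is \emph{untypable} whenever the result type of $N$ depends on $x$, because the stack-typing rule for sequencing in figure \ref{fig:vcdepterms} forbids exactly that dependency, and the dependent Kleisli rule of figure \ref{fig:depklext} only applies with an empty stoup. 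So there is no ``stack-level analogue of the typing rule for $\toin{M}{x}{N}$'' that would let you claim the push preserves the type of the configuration; the configuration $M,\toin{[\cdot]}{x}{N}::K$ is simply ill-typed. This is precisely why the theorem is stated only for two configurations sharing the \emph{same} stack $K$.

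The paper's proof therefore does not attempt to type the intermediate states at all: it pairs each push with its matching pop and argues that the two ill-typed steps cancel. Concretely, inversion on $\Gamma;\cdot\vdash\toin{M}{x}{N}:\ct{B}[\thunk M/z]$ gives $\Gamma,x:A;\cdot\vdash N:\ct{B}[\thunk\return x/z]$, hence $\Gamma;\cdot\vdash N[\nf{V}/x]:\ct{B}[\thunk\return\nf{V}/z]$, and this type coincides with $\ct{B}[\thunk M/z]$ because all reductions taking $M$ to $\return\nf{V}$ are judgemental equalities for the permitted effects. Your phrasing that ``$\ct{B}[\thunk M/z]$ has become $\ct{B}[\tr\nf{V}/z]$'' obscures this point: the type must not change along the reduction (that would itself be a failure of subject reduction); rather the two types are judgementally equal from the outset. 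To salvage your global induction you would first have to design a generalised typing judgement for configurations whose stack types mention the running computation — something you gesture at in your final sentence but do not supply — and that is substantially harder than the pairing argument the paper actually uses.
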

\begin{proof}[Proof] It is easy to see that all transitions preserve the type of a configuration as for dCBPV-, with the exception of the transitions for $\toin{M}{x}{N}$ and $\return \nf{V}$. Both involve a stack $\toin{[\cdot]}{x}{N}::K$ which is untypable when $x$ is free in the type of $N$. The crux, however, is that these transitions always occur in pairs and, in this case, two wrongs make a right. Say that we are evaluating a well-typed computation and that the former transition occurs from $\toin{M}{x}{N},K$ to \mccorrect{$M, \toin{[\cdot ]}{x}{N}::K$} where $\Gamma;\cdot\vdash \toin{M}{x}{N}:\ct{B}[\thunk M/z]$. After that, several other transitions may occur, but if we return to another configuration $M',K$, we know that the last transition that occurred was that from $\return \nf{V},\toin{[\cdot]}{x}{N}::K\leadsto N[\nf{V}/x],K$ (and so, $M'=N[\nf{V}/x]$).

Our claim is that also $\Gamma;\cdot\vdash N[\nf{V}/x]: \ct{B}[\thunk M/z]$ (if so, then the theorem follows). The important thing to notice is that, from inversion on $\Gamma;\cdot\vdash \toin{M}{x}{N}:\ct{B}[\thunk M/z]$, we have that $\Gamma,x:A;\cdot \vdash N:\ct{B}[\thunk \return x/z]$. Therefore, it follows that $\Gamma;\cdot\vdash N[V/x]:\ct{B}[\thunk\return V/z]$. Our claim follows by noting that $\return V= M$ as all reductions that could have been applied to $M$ are also equalities (seeing that the only effects we allow are recursion, divergence and errors, all of whose transitions are equations, as are $\beta$-reductions).\end{proof}
The proof above shows why subject reduction may fail in the presence of printing, global state and erratic choice: their transitions $M,K\leadsto M',K$ of figures \ref{fig:opsemdivs} and \ref{fig:opsemprint} on computations are not contained in the judgemental equalities we consider (see figure \ref{fig:effeqn}) in the sense that not $M=M'$. They represent real dynamics. In this sense, they differ from the other transitions we have considered. In fact, it is not reasonable to demand such an equality. In particular, in the case of reading global state and erratic choice, that would lead to all computations of the same type being equated.

\begin{remark}
On closer inspection, however, it seems that what we really needed to establish subject reduction was an inclusion of computation types, whenever $M,K,m,s\leadsto M',K,m',s'$,\\
\\
\AxiomC{$\Gamma;\Delta\vdash K:\ct{B}[\thunk M'/z]$}
\UnaryInfC{$\Gamma;\Delta\vdash K:\ct{B}[\thunk M/z].$}
\DisplayProof
\\\\
The idea is that the type of a computation becomes more determined in the computation progresses. We list concrete instantiations of this rule  in figure \ref{fig:dcbpvplussub}.
\begin{figure}[!tb]
\fbox{
\resizebox{\linewidth}{!}{
\begin{tabular}{ll}
\AxiomC{$\Gamma;\Delta\vdash K:\ct{B}[\thunk M/z]$}
\UnaryInfC{$\Gamma;\Delta\vdash K:\ct{B}[\thunk (\print{m}{M})/z]$}
\DisplayProof\hspace{40pt}\;
&
\AxiomC{$\Gamma;\Delta\vdash K:\ct{B}[\thunk M_{i'}/z]$}
\UnaryInfC{$\Gamma;\Delta\vdash K:\ct{B}[\thunk (\nondet{i}{M_i})/z]$}
\DisplayProof\hspace{40pt}\;\\
&\\
\AxiomC{$\Gamma;\Delta\vdash K:\ct{B}[\thunk M)/z]$}
\UnaryInfC{$\Gamma;\Delta\vdash K:\ct{B}[\thunk (\writecell{s}{M})/z]$}
\DisplayProof
&
\AxiomC{$\Gamma;\Delta\vdash K:\ct{B}[\thunk M_{s'}/z]$}
\UnaryInfC{$\Gamma;\Delta\vdash K:\ct{B}[\thunk (\readcell{s}{M_s})/z]$}
\DisplayProof
\end{tabular}
}
}
\caption{\label{fig:dcbpvplussub} Extra rules that are necessary in dCBPV+ to establish subject reduction in the presence of printing, global state and erratic choice.}
\end{figure}
It is clear that admissability of these rules is a necessary condition to  establish subject reduction property for dCBPV+ (compare this to the results in section \ref{sec:dcbpvplusmod}!). What is less clear, is if as adding them to the type system is sufficient, as this complicates the usual subject reduction proof, which relies on inversion on the typing rules.
\end{remark}

\section{Dependent Projection Products?}\label{sec:depprojprod}
It was somewhat surprising, perhaps, that while dependent pattern matching products arise so naturally in CBPV, dependent projection products seem less natural. The reader should compare this to the status of additive $\Sigma$-types, their cousins in linear logic, which often fail to be supported in natural models. In principle, we could include the system of rules of figure \ref{fig:addsigma} in dCBPV to replace $\Pi_{1\leq i\leq n}$-types.
\begin{figure}
[!tb]
\fbox{
\resizebox{\linewidth}{!}{
\begin{tabular}{ll}
\AxiomC{$\vdash \Gamma,z_1:U\ct{B_1},\ldots,z_n:U\ct{B_n}\ctxt$}
\UnaryInfC{$\Gamma\vdash \Pi_{1\leq i\leq n}^{z_1,\ldots,z_n} \ct{B_i}\ctype$}
\DisplayProof\hspace{30pt}\;
&
\AxiomC{$\{\Gamma;\cdot\vdash M_i:\ct{B_i}[\thunk M_1/z_1,\ldots,\thunk M_{i-1}/z_{i-1}]\}_{1\leq i\leq n}$}
\UnaryInfC{$\Gamma ;\cdot\vdash \lambda_i M_i : \Pi_{1\leq i\leq n}^{z_1,\ldots,z_n}\ct{B_i}$}
\DisplayProof\hspace{30pt}\;\\
&\\
&
\AxiomC{$\Gamma;\cdot\vdash M:\Pi_{1\leq i \leq n}^{z_1,\ldots,z_n} \ct{B_i}$}
\UnaryInfC{$\Gamma;\cdot\vdash i\textquoteleft M: \ct{B_i}[\thunk 1\textquoteleft M/z_1,\ldots,\thunk (i-1)\textquoteleft M/z_{i-1}]$}
\DisplayProof
\end{tabular}
}
}
\caption{\label{fig:addsigma} Rules for dependent projection products. We also demand the obvious $\beta$- and $\eta$-laws.}
\end{figure}
This allows us to define the appropriate CBV and CBN translations for dependent projection products in dCBPV, exactly as one defines the translation for simple projection products. This translation re-enforces the idea that the CBV translation of a type $x_1:A_1,\ldots,x_n:A_n\vdash A\type$ should be $z_1:UFA_1^v,\ldots,z_n:UFA_n^v\vdash A^v\vtype $. We note that we have CBV and CBN translations of dependent projection products (which have a dependent/strong elimination principle) even in dCBPV-. Moreover, we can use the usual operational semantics of computations of type $\Pi_{1\leq i\leq n}\ct{B_i}$ for these types.

Although we can formulate a sound and complete categorical semantics for dependent projection products (we demand strong $n$-ary $\Sigma$-types in $\Dcat$ in the sense of objects $\Pi_{1\leq i\leq n}^{dep}\ct{B_i}$ such that $\proj{\Gamma}{U\Pi_{1\leq i\leq n}^{dep}\ct{B_i}}=\proj{\Gamma.U\ct{B_1}.\cdots.U\ct{B_{n-1}}}{U\ct{B_n}};\ldots;\proj{\Gamma}{U\ct{B_1}}$), many models fail to support these connectives in practice. In particular, they are hard to obtain in models of linear logic, where they would give additive $\Sigma$-types in the sense of objects $\Sigma_A^{\&}B$ such that $!\Sigma_A^{\&}B\cong \Sigma_{!A}^\otimes !B$, and are difficult to give a satisfactory interpretation in models of the monadic metalanguage, where they would correspond to the construction of a $T$-algebra structure on $\Sigma_{Uk} {Ul}$, given $l\in \Ccat^T(\Gamma.Uk)$.

A related phenomenon is that subject reduction for dependent projection products can be problematic to establish (for the obvious operational semantics on untyped terms which is identical to that for $\Pi_{1\leq i\leq n}$-types). We encourage the reader to think of dependent projection products in a similar way to dependent Kleisli extensions, as their problems with subject reduction have a similar origin. That is, they lead to types depending on (thunks of) computations which might not be static objects during reduction (in the sense that some reductions might not be equalities in the presence of some effects). In that case, we are faced with a choice, either subject reduction fails or we have to make types into dynamic objects as well, meaning that they no longer provide the static guarantees which are their primary raison d'\^etre.

\begin{theorem}[Limited Subject Reduction]
Let us consider dCBPV- with dependent projection products. In absence of printing, global state and erratic choice, if the sequence of reductions of a well-typed computation $M$ passes through a well-typed configuration $M,K$ and \mccorrect{later} another configuration $M',K$, then the latter configuration is also well-typed and has the same type as the former. 
\end{theorem}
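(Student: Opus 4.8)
The proof will be a minor variation of that of Theorem \ref{thm:subjreddcbpv+} (the Limited Subject Reduction theorem for dCBPV+), adapted to the transitions for dependent projection products in place of those for dependent Kleisli extensions. As in that proof, the point is that almost all transitions of the CK-machine preserve the type of a configuration by a routine inversion-on-typing-rules argument, exactly as in dCBPV-; the only transitions that need care are the ones producing a stack that mentions an identifier free in a type, namely the pair $i\textquoteleft M, K \leadsto M, i::K$ followed (eventually) by $\lambda_j M_j, i::K \leadsto M_i, K$. First I would isolate these two transitions and observe, as in the dCBPV+ proof, that they come in matched pairs: if we start from a well-typed configuration $i\textquoteleft M, K$ and later return to a configuration of the form $M', K$ on the same stack $K$, the last transition must have been $\lambda_j M_j, i::K \leadsto M_i, K$, so that $M' = M_i$.

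The key step is then to check that $M_i$ has the type that $K$ expects. From inversion on $\Gamma;\cdot \vdash i\textquoteleft M : \ct{B_i}[\thunk 1\textquoteleft M/z_1,\ldots,\thunk (i-1)\textquoteleft M/z_{i-1}]$ and on $\Gamma;\cdot \vdash \lambda_j M_j : \Pi_{1\leq j\leq n}^{z_1,\ldots,z_n}\ct{B_j}$, we get that $\Gamma;\cdot \vdash M_j : \ct{B_j}[\thunk M_1/z_1,\ldots,\thunk M_{j-1}/z_{j-1}]$ for each $j$, and in particular for $j=i$. So $M_i$ has type $\ct{B_i}[\thunk M_1/z_1,\ldots,\thunk M_{i-1}/z_{i-1}]$, whereas $K$ expects the type $\ct{B_i}[\thunk 1\textquoteleft M/z_1,\ldots,\thunk (i-1)\textquoteleft M/z_{i-1}]$. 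The remaining task is to argue these two types are judgementally equal: by the $\beta$-law for dependent projection products we have $j\textquoteleft \lambda_j M_j = M_j$, and since $M$ reduces to $\lambda_j M_j$ along the machine run, and all effects under consideration (recursion, divergence, errors, together with the $\beta$-reductions and the equations of figure \ref{fig:effeqn}) have transitions that are contained in the judgemental equality, we obtain $M = \lambda_j M_j$ and hence $j\textquoteleft M = M_j$, so $\thunk j\textquoteleft M = \thunk M_j$ for each $j<i$, giving the required type equality via \textsf{Tm-Conv}/\textsf{Ty-Conv}.

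The one genuinely delicate point, exactly parallel to the dCBPV+ situation, is the claim that whenever the machine moves from $M$ to $\lambda_j M_j$ along the run we actually have $M = \lambda_j M_j$ as a judgemental equality. This is precisely where the hypothesis excluding printing, global state and erratic choice is used: for those effects the relevant transitions $M,K,m,s \leadsto M',K,m',s'$ of figures \ref{fig:opsemdivs} and \ref{fig:opsemprint} are \emph{not} contained in the judgemental equality of figure \ref{fig:effeqn} — they represent genuine dynamics — so $M$ and $M'$ need not be equal, and the type that $K$ expects can fail to match the type of the continuation. In the absence of those three effects, every transition that can have been applied to $M$ is an equality, so the argument goes through. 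I would close by remarking (as the paper already does after Theorem \ref{thm:subjreddcbpv+}) that what is really needed is the admissibility of the type-inclusion rules of the form ``if $M,K,m,s\leadsto M',K,m',s'$ then $\Gamma;\Delta\vdash K:\ct{B}[\thunk M'/z]$ implies $\Gamma;\Delta\vdash K:\ct{B}[\thunk M/z]$'', and that this can fail for the three excluded effects, mirroring the non-existence of dependent Kleisli extensions for the corresponding set-theoretic models in section \ref{sec:dcbpvplusmod}.
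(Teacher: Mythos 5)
Your proof is correct and follows essentially the same route as the paper's: reduce to the matched pair of transitions $i\textquoteleft M, K \leadsto M, i::K$ and $\lambda_j M_j, i::K \leadsto M_i, K$, then show $i\textquoteleft M = M_i$ using the $\beta$-law together with the fact that, in the absence of printing, global state and erratic choice, every intervening transition is a judgemental equality. You actually spell out the last step (why $M = \lambda_j M_j$ and hence $i\textquoteleft M = M_i$) in slightly more detail than the paper does, which is fine.
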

\begin{proof}
The proof is very similar to that of theorem \ref{thm:subjreddcbpv+}. Indeed, starting from a well-typed configuration $\Gamma;\cdot \vdash j\textquoteleft M:\ct{B}_j[\thunk 1\textquoteleft M/z_1,\ldots,\thunk (j-1)\textquoteleft M/z_{j-1}] ,\linebreak \Gamma;{\nil}:\ct{B}_j[\thunk 1\textquoteleft M/z_1,\ldots,\thunk (j-1)\textquoteleft M/z_{j-1}] \vdash K:\ct{C}$, we transition into $M, j::K$, where $j::K:= \lbi{\nil_1}{j\textquoteleft \nil_2}{K}$ is an untypable stack. Eventually, if have transitioned into a configuration $\lambda_i M_i, j::K$, we next transition into $M_j,K$, where we can derive by inversion on $\lambda_i M_i$ that $\Gamma;\cdot\vdash M_i:\ct{B}_i[\thunk M_1/z_1,\ldots, \thunk M_{i-1}/z_{i-1}]$. We  have now arrived at a well-typed configuration again if we can show that $\ct{B}_j[\thunk M_1/z_1,\ldots, \thunk M_{j-1}/z_{j-1}]=\ct{B}_j[\thunk 1\textquoteleft M/z_1,\ldots,\thunk (j-1)\textquoteleft M/z_{j-1}]$. This follows if we can show that $M_i = i\textquoteleft M$ for all $1\leq i\leq j-1$, which we know to be generally \mccorrect{true} in absence of of printing, global state and erratic choice and false otherwise. 
\end{proof}

One could add similar negative versions of the other positive connectives like identity types (which we have called additive identity types in the context of linear logic). Their categorical semantics would correspond to having computation type formers $R(\ct{B_1},\ldots,\ct{B_n})$ that $U$ maps to $R'(U\ct{B_1},\ldots, U\ct{B_n})$ where $R'$ is the corresponding positive type former. In the obvious operational semantics, destructors push to the stack and constructors pop the stack and substitute.

Let us briefly consider some specific models. We have already seen in section \ref{sec:lindepscott} that the domain model of dCBPV+ supports additive $\Sigma$-types as well. Similarly, the error monad admits a satisfactory definition of dependent projection products. We define the algebra structure $\Sigma_k l$, as expected, by $(\Sigma_k l)(e):=\langle k(e),l(k(e))(e)\rangle$.

Note that for the writer monad, we cannot use the expected generalisation of the product algebra structure on $\Sigma_{Uk} Ul$. Instead, we can use the trivial algebra structure. (Note that dependent projection products are only a generalisation of the product in a weak sense. In particular, they are far from unique.) That may not be what we are hoping for though, as the individual algebra structures $k$ and $l$ on $Uk$ and $Ul$ are ignored in the construction. (The product action may not respect the fibres of the $\Sigma$-type if $l$ is not invariant under $k$.) Similarly, for the reader and global state monads, we can equip $\Sigma_{Uk} Ul$ with an algebra structure by evaluating at an arbitrary state. Again, similarly, note that an algebra for the powerset monad is a join-semi-lattice. Therefore, assuming the axiom of choice (or rather, its equivalent, the well-ordering principle), we can define dependent projection products by equipping $\Sigma_{Uk} Ul$ with some well-order. However, this is not the algebra structure we would be expecting (the product order), as this may fail to be a join-semi-lattice (take, for instance, $k=\{0\leq 1\}=l(0)$ and $l(1)=\{0\}$; then $\langle 0,1\rangle$ and $\langle 1,0\rangle $ do not have a join).


We have already seen in section \ref{sec:depcohsp} that additive $\Sigma$-types are not always supported in models of linear dependent type theory.

\mccorrect{
As projection products are more natural than pattern matching products in CBN (or, at least,
more customary), we see that the CBN-translation into dCBPV runs into similar problems as the CBV. Where the latter requires us to extend dCBPV- with dependent Kleisli extensions, the former at least strongly suggests adding dependent projection products to dCBPV-}\footnote{
\mccorrect{
For instance, we require these if we want the co-Kleisli category for $!=FU$ to give a model of DTT indexed over $\Dcat(\cdot)_!$ (recall that this category is equivalent to full category of $\Bcat$ on the objects in image of $U$), rather than merely over $\Bcat$.
}
}. \mccorrect{Both these extensions lead to similar challenges with constructing concrete models and with establishing subject reduction.}

\section{Dependent Kleisli Extensions: a Bug or a Feature?}
\label{sec:depklextbugorfeature}
\subsection{Unrestricted Effects and Dependent Types?}
In sections \ref{sec:depcbpvwoklext} and \ref{sec:depcbpvklei}, we introduced two systems which combine dependent types with computational effects: dCBPV- and dCBPV+. Recall that the latter extends the former with a rule for Kleisli extensions of dependent functions.

One motivation for studying dCBPV+ is the possible criticism that can be made that in dCBPV- dependent types and effects sit side-by-side and do not interact meaningfully. (More about that later.) Another is the observation that we need dependent Kleisli extensions to obtain a well-defined CBV or CBN translation of dependent type theory with unrestricted effects (which are not encapsulated by the type system) into dCBPV. This may be interesting as real world languages like Agda and Idris include unrestricted recursion and only exclude non-terminating terms at a later stage through a termination check which is separate from type checking \cite{bove2009dependent,norell2007towards}. Moreover, models of dependent type theory in categories of domains and games\mccorrect{ }naturally model unrestricted effects. Therefore, we believed it to be important, from a theoretical point of view at least, to study the system dCBPV+.

It should be clear to the reader, however, that the expressive power of dCBPV+ comes at a cost of simplicity and, in particular, in many cases, subject reduction. The question is if dCBPV- or dCBPV+ is more suitable for practical implementations. We would like to argue that dCBPV+ does not add much practical value over dCBPV- as a programming language.

Indeed, let us return to the primary practical motivation for wanting to combine dependent types and effects: having a single elegant language in which we can both write practical software and perform its verification. For these purposes, as argued in section \ref{sec:effmodalformulae}, it is crucial that we constrain where effects are allowed to occur using the type system, for instance using modalities, as effects usually do not correspond to sound logical principles so should be excluded from proofs. For this reason, dependent type theory with unrestricted effects (and with it the corresponding CBV and CBN translations) is not what we are most interested in. Rather, a modal and ideally adjunction language like dCBPV is closer to what we are looking for.

\subsection{Fundamentalist vs Pragmatic Dependent Types}\label{sec:fundamentalist}
Observe that, in practice, dependent types tend to be used in two closely related but slightly different styles\footnote{These can be seen to be closely related to the two traditional schools of thought about types: typing \`a la Church and \`a la Curry \cite{barendregt2013lambda}. Also closely related to the fundamentalist school is the correctness-by-construction programming methodology advocated by Dijkstra and others \cite{kourie2012correctness}.}. On the one hand we have a style of programming where we build up the program immediately from dependently typed building blocks $c:C$, where $C$ may be formed using inductive families and other dependently typed constructs, by writing the code and proving its properties simultaneously, fundamentalist dependently typed programming, if you will. Some examples for  $C$ include a type of lists of a fixed length, sorted lists, heaps or binary search trees, red-black trees, suitably balanced trees and a type of $\lambda$-terms up to $\beta\eta$-equality. On the other hand, we can write simply typed programs $a:A$ first, where $A$ is a datatype formed from simple inductive types and simple connectives, like mere lists or binary trees, and only later prove the required properties $a': A'[a/x]$ (where $A' $ is a proof-relevant predicate, like the BST property, formed using inductive families and other dependently typed constructions). This is a more pragmatic stance where dependent types are simply seen as a tool for expressing appropriate program properties that we want to verify.

The latter style seems to be more popular in practice and more suitable for the creation of large modular code bases. The reason for this is that we often only decide on the properties we need to verify after we have already written code and, in fact, we often need to verify different properties of the same code in different contexts. It should be noted that both points of view are equivalent, that the distinction is mostly a matter of style and that both styles can be combined well. 

To illustrate the distinction, let us consider lists of length $n$. We can either view them directly as single inductive family $x:\N\vdash \mathsf{ListOfLen}(x)\vtype$ (corresponding to the former style) or as a predicate $x:\N, y:\mathsf{List}\vdash \mathsf{has-length}(y,x)\vtype$ from which we form $x:\N\vdash \Sigma_{y:\mathsf{List}}\mathsf{has-length}(y,x)\vtype$ (corresponding to the latter). If we want to write a program that for every $x:\N$ returns a list of length $x$, we write, in the fundamentalist view, directly, proof-carrying code $\vdash f:\Pi_{x:\N}\mathsf{ListOfLen}(x)$, which can be though of as both an algorithm producing a list and a proof that that list has length $x$ in one. Meanwhile, in the more pragmatic view of post hoc verification using dependent types, we first write the algorithm $\vdash g:\N\Rightarrow\mathsf{List}$ and then a separate proof $\vdash  p:\Pi_{x:\N} \mathsf{has-length}(g(x),x)$ about $g$.

The latter point of view generalises without problems to dCBPV-. Indeed, by keeping the (simply typed and effectful) algorithm separate from the (dependently typed and pure) proof, all we need is a sequencing operation from simply typed effectful computations and a regular composition operation for pure dependent functions, both of which are available in dCBPV-. To be precise, we write an effectful simply typed algorithm $\vdash g:\N\functype F\mathsf{List}$ and a separate pure dependently typed proof $\vdash  p:\mathsf{my-favourite-property}(\thunk g)$ about $g$. Here, we would like to point out that $z:U(\N\functype F\mathsf{List})\vdash \mathsf{my-favourite-property}(z)\vtype$. Therefore, to make dCBPV- into a practical system for verification of effectful programs, it is crucial that we extend it with mechanisms for defining interesting (value) types depending on types of thunks of effectful computations. In the case that we are working with printing with values in some monoid $M$ internal to the type theory, a simple example of a property to express using a type family could be that the program does not print anything or that its return value has a specific property. In particular, any type depending on $A\times M$ should give rise to a type depending on $UFA$. The design of good mechanisms for defining types depending on types of thunks of effectful computations is planned to be a central theme in our future work.

The former point of view, however, is more difficult to generalise without dependent Kleisli extensions, it would seem at first sight. Indeed, if our basic building blocks are dependent effectful functions $\vdash f:\cpi{x:\N}F\mathsf{ListOfLen}(x)$, we want to be able to compose them with each other, at the very least. In particular, we want to be able to precompose $f$ with some effectful function $\vdash h:\N\functype F\N $. To do this, however, we precisely need a sequencing principle for dependent effectful functions, or a principle of dependent Kleisli extensions. As it turns out, compositionality in this paradigm can be restored to a satisfying extent by considering dCBPV- with $\csigma{-}{}$-types, a much less intrusive and more well-behaved extension than dependent Kleisli extensions. We discuss this in section \ref{sec:moreconn}.

On the whole, we are inclined to view dependent Kleisli extensions as technical devices that were important to study for theoretical reasons, but which may not be very suitable for practical implementations of dCBPV. The extra complexity they introduce into the implementation of a type checker for may not be justified. Therefore, in the rest of this chapter, we  focus on dCBPV- and  add extra type formers to it to increase its expressive power.
 
\section{Dependent Enriched Effect Calculus and More Connectives}\label{sec:moreconn}
In this section, we show how to increase the power of dCBPV- by extending it with $\Pi$-, $\csigma{-}{}$- and $\homtype$-types. First we motivate why we might want to include them in our calculus. Next, we show that they are unproblematic from the points of view of categorical semantics, concrete models and operational semantics. 

Levy did not include function type formers for value types in his CBPV as he was mainly interested in (the CBV and CBN translations for) effectful programs, for which they are unnecessary. We, however, are also interested in pure proofs of universally quantified formulas. For those purposes, value function types are of crucial importance. This leads us to consider $\Pi$-types. 

As discussed in section \ref{sec:fundamentalist},  it is not as important as one might think to be able to substitute effectful computations into dependent functions. However, it might sometimes still be practically convenient. We would like to suggest that $\csigma{-}{}$-types give an alternative, more lightweight method of achieving this compared to dependent Kleisli extensions.

Recall that, given a dependent function $\Gamma,x:A\vdash M:B$ in pure type theory, we can transform it into a simple function $\Gamma,x:A\vdash \langle x,M\rangle:\Sigma_{x:A}B$ by viewing it as a section of $\Gamma,z:\Sigma_{x:A}B\vdash \fst(z):A$. Precomposition with $\Gamma,y:C\vdash N:A$ then gives $\Gamma,y:C\vdash \langle N,M[N/x]\rangle:\Sigma_{x:A}B$. We can employ a similar trick to get around the effectful composition of certain dependent functions in bare dCBPV- already. Indeed, we can represent any effectful dependent function $\Gamma,x:A;\cdot\vdash M:FA'$ as an effectful simple function $\Gamma,x:A;\cdot\vdash \toin{M}{z}{\return\langle x,z\rangle }:F\Sigma_{x:A} A'$. In this representation, we can use usual simple sequencing of effectful computations to achieve effectful precomposition: given $\Gamma,y:C;\cdot\vdash N:FA$, we have the effectful composition $\Gamma,y:C;\cdot\vdash \toin{N}{z}{\toin{M}{z}{\return\langle x,z\rangle }}:F\Sigma_{x:A} A'$ without using dependent Kleisli extensions.

We are in trouble, however, if $M$ is of the more general form $\Gamma,x:A;\cdot\vdash M:\ct{B}$. In order to repeat the trick above, we introduce the type $\csigma{x:A}{\ct{B}} $ to generalise $F\Sigma_{x:A}A'\cong \csigma{x:A}{FA'} $. This lets us define a simply typed effectful function $\Gamma,x:A;\cdot\vdash \return x\otimes M:\csigma{x:A}\ct{B}$ out of $M$ and therefore a precomposition $\Gamma,y:C;\cdot\vdash \toin{N}{x}{\return x\otimes M}:\csigma{x:A} \ct{B}$. The problem with sequencing an effectful computation $N$ into a dependent function $M$ was, essentially, that we do not know what fibre of the return type $\ct{B}$ the result would land in. Indeed, $N$ may, for instance, exhibit non-determinism or use state. $\csigma{-}{}$ solves this problem by bundling all fibres together and saying that we are not interested in the particular fibre it lands is, as long as there is one.

Finally, to reason about effectful programs and their evaluation, it can be very useful to include a type not just of arbitrary functions, but also a type of homomorphisms or stacks. While it is well-known that the sets of homomorphisms for a commutative monad $T$ on a cartesian closed category admit a natural $T$-algebra structure themselves \cite{kock1971closed} (leading us to models of linear logic), it should be familiar from the theory of monoids that such an algebra structure might not be available for non-commutative monads \cite{foltz1980algebraic}. This shows that, in general, for non-commutative effects, we cannot expect the type of homomorphisms/stacks from $\ct{B}$ to $\ct{C}$ to be a computation type itself. Luckily, we can often interpret it as a value type $\ct{B}\homtype\ct{C}$.

We include type and term forming rules for $\Pi$-, $\csigma{-}{}$- and $\homtype$-types in figure \ref{fig:extratypesrules}, their equations in figure \ref{fig:extratypesequations} and their operational semantics in figure \ref{fig:ckextraconn}. We then see that the results on the categorical semantics, concrete models and operational semantics of dCBPV- smoothly extend to these connectives.

\begin{figure}[!tb]
\fbox{
\resizebox{\linewidth}{!}{
\begin{tabular}{ll}
\AxiomC{$\Gamma,x:A\vdash A'\vtype$}
\UnaryInfC{$\Gamma\vdash \Pi_{x:A}A'\vtype$}\DisplayProof
&\\
&\\
\AxiomC{$\Gamma,x:A\vdash V:A'$}
\UnaryInfC{$\Gamma\vdash \lambda_x V:\Pi_{x:A}A'$}
\DisplayProof
&
\AxiomC{$\Gamma\vdash V:A$}
\AxiomC{$\Gamma;\Delta\vdash W:\Pi_{x:A}A'$}
\BinaryInfC{$\Gamma;\Delta\vdash V\textquoteleft W : A'[V/x]$}
\DisplayProof
\\
&\\
\AxiomC{$\Gamma,x:A\vdash \ct{B}\ctype$}
\UnaryInfC{$\Gamma\vdash \csigma{x:A}{\ct{B}}\ctype$}
\DisplayProof
&
\\
&\\
\AxiomC{$\Gamma\vdash V:A$}
\AxiomC{$\Gamma;\Delta\vdash K:\ct{B}[V/x]$}
\BinaryInfC{$\Gamma;\Delta\vdash \return V\otimes K:\csigma{x:A}{\ct{B}}$}
\DisplayProof\hspace{10pt}\;
&
\AxiomC{$\Gamma,x:A;{\nil}:\ct{B}\vdash K:\ct{C}$}
\AxiomC{$\Gamma \vdash \ct{C}\ctype$}
\AxiomC{$\Gamma;\Delta\vdash L:\csigma{x:A}{\ct{B}}$}
\TrinaryInfC{$\Gamma;\Delta\vdash \toin{L}{\return x\otimes {\nil}}{K}:\ct{C}$}
\DisplayProof\hspace{10pt}\;
\\
&\\
\AxiomC{$\Gamma\vdash \ct{B}\ctype$}
\AxiomC{$\Gamma\vdash \ct{C}\ctype$}
\BinaryInfC{$\Gamma\vdash \ct{B}\homtype \ct{C}\vtype$}
\DisplayProof
&\\
&\\
\AxiomC{$\Gamma;{\nil}:\ct{B}\vdash K:\ct{C}$}
\UnaryInfC{$\Gamma\vdash\lambda_{\nil}K:\ct{B}\homtype\ct{C}$}
\DisplayProof
&
\AxiomC{$\Gamma\vdash V: \ct{B}\homtype\ct{C}$}
\AxiomC{$\Gamma;\Delta\vdash K:\ct{B} $}
\BinaryInfC{$\Gamma;\Delta\vdash K\textquoteleft V:\ct{C}$}
\DisplayProof
\end{tabular}
}
}
\caption{\label{fig:extratypesrules} Rules for forming $\Pi$-, $\csigma{-}{}$- and $\homtype$-types and their terms.}
\end{figure}

\begin{figure}[!tb]
\fbox{
\resizebox{\linewidth}{!}{
\begin{tabular}{ll}
$V\textquoteleft \lambda_x W = W[V/x]$ & $V \stackrel{\#x}{=} \lambda_x x\textquoteleft V$\\
$\toin{\return V \otimes K}{\return x \otimes {\nil}}{L}  = L[V/x,K/{\nil}]$ \;& $ K[L/{\nil_1}] \stackrel{\# x,{\nil_2}}{=} \toin{K}{\return x\otimes{\nil_2}}{L[\return x \otimes {\nil_2}/{\nil_1}]} $\\
$K\textquoteleft \lambda_{\nil}L=L[K/{\nil}]$ & $K \stackrel{\#{\nil}}{=} \lambda_{\nil}{\nil}\textquoteleft K$
\end{tabular}
}
}
\caption{\label{fig:extratypesequations}  Equations we impose for the terms of $\Pi$-, $\csigma{-}{}$- and $\homtype$-types.}
\end{figure}
\begin{figure}[!tb]
\fbox{
\resizebox{\linewidth}{!}{
\begin{tabular}{l}
\textbf{Transitions}\\
\begin{tabular}{lllllll}
 $ \toin{M}{\return x \otimes {\nil}}{L}    $ &,& $K$\hspace{40pt}& $\leadsto$ \hspace{40pt}& $M        $ &,& $\toin{[\cdot]}{\return x \otimes {\nil}}{L}::K$ \\
 $ \return \nnf{V} \otimes M    $ &,& $K$\hspace{40pt}& $\leadsto$ \hspace{40pt}& $  \return \nf{V}\otimes M      $ &,& $K$ \\
  $ \return \nf{V} \otimes M    $ &,& $\toin{[\cdot]}{\return x \otimes {\nil}}{L}::K$\hspace{40pt}& $\leadsto$ \hspace{40pt}& $ L[\nf{V}/x,M /{\nil}]     $ &,& $K$ \\
$ M\textquoteleft \nnf{V} $ &,& $K$\hspace{40pt}& $\leadsto$ \hspace{40pt}& $M\textquoteleft \nf{V}       $ &,& $K$ \\
$ M\textquoteleft \lambda_{\nil} L $ &,& $K$\hspace{40pt}& $\leadsto$ \hspace{40pt}& $L[M/{\nil}]        $ &,& $K$ 
\end{tabular}\\
\\
\textbf{Terminal Configurations}\\
\begin{tabular}{lll}
$\return \nf{V}\otimes M$ &,& ${\nil}$\\
$M\textquoteleft\return \nf{V}^{x'}$ &,& $K$
\end{tabular}
\end{tabular}
}
}
\caption{\label{fig:ckextraconn} The additional transitions and terminal configurations that specify the operational behaviour of terms of $\Pi$-, $\csigma{-}{}$- and $\homtype$-types. Here, we use the abbreviation $\toin{[\cdot]}{\return x \otimes {\nil}}{L}::K$ for $\lbi{\nil_1}{\toin{\nil_2}{\return x\otimes {\nil_3}}{L}}{K}$. Note that the transitions for $\Pi$-types simply are contained in the ($\beta$) normalization rules of values.}
\end{figure}

\begin{theorem}[Categorical Semantics] A dCBPV- model $F\dashv U: \Ccat\leftrightarrows \Dcat$ supports
\begin{itemize}
\item $\Pi$-types iff we have $\Pi$-types in $\Ccat$;
\item $\csigma{-}{}$-types iff we have $\csigma{-}{}$-types in $\Dcat$ in the sense of having left adjoint functors $\csigma{A'}{}\dashv \Dcat(\proj{A}{A'})$ satisfying the left Beck-Chevalley condition for $\mathbf{p}$-squares;
\item $\homtype$-types iff we have objects $B\homtype C$ in $\Ccat$ \mccorrect{that are stable under change of base in the sense that $(B\homtype C)\{f\}\cong B(\{f\} \homtype C\{f\}$} such that we have natural bijections
$$\Dcat(\Gamma)(B,C)\cong \Ccat(\Gamma)(1,B\homtype C).$$
\end{itemize}
This semantics is both sound and complete in the usual sense of categorical semantics, leading to a 1-1 correspondence between models and theories supporting the appropriate connectives.
\end{theorem}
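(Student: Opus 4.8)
The plan is to extend the soundness/completeness argument for dCBPV- (theorem \ref{thm:dcbpv-soundcomplete}) to the three extra connectives, one at a time. As with the base calculus, the heart of the matter is that the term-forming rules and equations of figures \ref{fig:extratypesrules} and \ref{fig:extratypesequations} are designed to express, in the syntax, exactly a universal property on the semantic side; I will make this correspondence precise in the two directions usually called ``soundness'' (every model interprets the calculus) and ``completeness'' (the syntactic model is a model).

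First I would treat soundness. Extend the interpretation function $\sem{-}$ of theorem \ref{thm:dcbpv-soundcomplete} by $\sem{\Pi_{x:A}A'} := \Pi_{\sem{A}}\sem{A'}$ (using the $\Pi$-types in $\Ccat$), $\sem{\csigma{x:A}{\ct B}} := \csigma{\sem{A}}{\sem{\ct B}}$ (using the left adjoints in $\Dcat$) and $\sem{\ct B \homtype \ct C} := \sem{\ct B} \homtype \sem{\ct C}$ (using the posited representing objects in $\Ccat$). Each term former is then interpreted via the unit/counit of the relevant adjunction, or, for $\homtype$, via the natural bijection $\Dcat(\Gamma)(B,C) \cong \Ccat(\Gamma)(1, B\homtype C)$ (sending a stack $K$ to its name $\lambda_{\nil}K$ and a value $V$ to the stack $\nil\textquoteleft V$). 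The $\beta$- and $\eta$-equations then follow from the triangle identities of the adjunctions and the naturality of the bijections; the Beck--Chevalley condition for $\csigma{-}{}$ and the stability of $\Pi$-types and of $B\homtype C$ under change of base are precisely what is needed for the substitution rules (e.g. $(\Pi_{x:A}A')[V/y]$ being interpreted the same as $\Pi_{x:A[V/y]}A'[V/y]$), exactly as in the soundness proof for dCBPV- and for dDILL (theorem \ref{thm:semtype}). The one thing to be careful about is that, unlike in dCBPV- where $\csigma{-}{}$ was only present implicitly, $\csigma{x:A}{\ct B}$ now interacts with the linear context $\Delta$ in its elimination rule, so I would check Frobenius-style compatibility of $\csigma{A'}{}$ with the monoidal/stoup structure; but since the stoup $\Delta$ has length at most one and $\csigma{A'}{}$ is defined as a plain left adjoint to $\Dcat(\proj{A}{A'})$, the elimination rule of figure \ref{fig:extratypesrules} (where the conclusion carries the ambient stoup $\Delta$ unchanged) corresponds to nothing more than the adjunction isomorphism precomposed with the identity on $\Delta$, so no extra condition is needed.

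Next I would do completeness, by constructing the syntactic model $(\Bcat^{\mathbb{T}}, \Ccat^{\mathbb{T}}, \Dcat^{\mathbb{T}}, \ldots)$ exactly as in the co-soundness construction underlying theorem \ref{thm:dcbpv-soundcomplete}: objects of $\Bcat^{\mathbb{T}}$ are contexts of value types modulo judgemental equality, $\Ccat^{\mathbb{T}}(\Gamma)$ has value types as objects and values (of the appropriate weakened type) as morphisms, and $\Dcat^{\mathbb{T}}(\Gamma)$ has computation types as objects and stacks as morphisms. One then verifies that $\Pi_{x:A}A'$ defines a $\Pi$-type in $\Ccat^{\mathbb{T}}$ — i.e. gives the required natural bijection $\Ccat^{\mathbb{T}}(\Gamma.A)(A''\{\proj{\Gamma}{A}\}, A') \cong \Ccat^{\mathbb{T}}(\Gamma)(A'', \Pi_{x:A}A')$ — by sending $W$ to $\lambda_x W$ and $V$ to $x\textquoteleft V$, with the $\beta/\eta$ equations of figure \ref{fig:extratypesequations} witnessing that these are mutually inverse, and naturality following from the substitution/congruence rules; similarly $\csigma{x:A}{\ct B}$ gives the left adjoint to $\Dcat^{\mathbb{T}}(\proj{\Gamma}{A})$ via $\toin{-}{\return x\otimes{\nil}}{-}$ and $\return V\otimes K$, with Beck--Chevalley holding because $\csigma{-}{}$-types commute with substitution syntactically, and $\ct B\homtype\ct C$ represents $\Dcat^{\mathbb{T}}(\Gamma)(\ct B,\ct C)$ via $\lambda_{\nil}(-)$ and $(-)\textquoteleft(-)$. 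Composing the co-soundness construction with the soundness interpretation gives back the original theory up to equivalence (the only non-trivial objects in the syntactic base category are the ones $\Gamma.A$ obtained by adjoining types, because the comprehension is democratic), which establishes the claimed $1$-$1$ correspondence between models and theories, just as in theorem \ref{thm:dcbpv-soundcomplete}.

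The step I expect to be the main obstacle is the precise bookkeeping for $\homtype$-types in the completeness direction: unlike $\Pi$ and $\csigma{-}{}$, which are handled by adjunctions internal to $\Ccat$ or $\Dcat$, the type $\ct B\homtype\ct C$ cross-cuts the two indexed categories (its elements are stacks but it lives among value types), so one must check carefully that the bijection $\Dcat^{\mathbb{T}}(\Gamma)(\ct B,\ct C) \cong \Ccat^{\mathbb{T}}(\Gamma)(1,\ct B\homtype\ct C)$ is natural in $\Gamma$ — equivalently, that it is stable under the change-of-base functors $\Dcat^{\mathbb{T}}(f)$ and $\Ccat^{\mathbb{T}}(f)$ — which amounts to verifying that $\lambda_{\nil}$ and $(-)\textquoteleft(-)$ commute with value substitution, and that the $\beta/\eta$ laws of figure \ref{fig:extratypesequations} give exactly the two composites being identities. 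Everything else is a routine variation on proofs already carried out for dCBPV- and dDILL, so I would present it tersely and refer back to theorems \ref{thm:dcbpv-soundcomplete} and \ref{thm:semtype} for the details.
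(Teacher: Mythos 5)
Your proposal is correct and follows essentially the same route as the paper: the paper compresses the argument by observing that, under the already-established dCBPV- interpretation, each introduction/elimination rule is by definition a natural transformation in the stoup $\Delta$, hence (by Yoneda) determined by a single element which is exactly the transposition data of the adjunction $\csigma{A}{}\dashv \Dcat(\proj{\Gamma}{A})$ (resp.\ the representing bijection for $\homtype$), with the $\beta$/$\eta$ laws saying the two transpositions are mutually inverse and Beck--Chevalley/stability under change of base accounting for compatibility with substitution, while the $\Pi$-type clause is just the standard cartesian result. Your observation that no Frobenius condition is needed for $\csigma{-}{}$ --- because the premise of its elimination rule carries only the stoup $\nil:\ct{B}$ and the ambient $\Delta$ is absorbed by naturality/precomposition --- is exactly right and is precisely where this case is simpler than the dDILL treatment of $\Sigma^\otimes_!$-types.
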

\begin{proof}
\begin{itemize}
\item This is a standard result in the semantics of dependent type theory \cite{jacobs1999categorical}, seeing that the value judgements form an ordinary (cartesian) dependent type theory.
\item This is precisely analogous to the situation in linear dependent type theory of chapter \ref{ch:4}. The introduction rule, by definition, corresponds to a natural transformation
$$\Sigma_{V\in \Ccat(\Gamma)(1,A)}\Dcat(\Gamma)(\Delta,B\{\langle \id_\Gamma, V\rangle\})\ra{}\Dcat(\Gamma)(\Delta,\csigma{A}{B}),$$ 
which can be equivalently represented, by taking $V=\diagv{\Gamma}{A}$ for the first argument and $\id_B$ for the second, as another natural transformation
$$
\Dcat(\Gamma.A)(\Delta,B)\ra{} \Dcat(\Gamma.A)(\Delta,\csigma{A}{B} \{\proj{\Gamma}{A}\}).
$$
This, by naturality in $\Delta$ is precisely determined by the image of $\id_B$ which is an element of
$$\Dcat(\Gamma.A)(B,\csigma{A}{B}\{\proj{\Gamma}{A}\}).$$
By a simple variation on the Yoneda lemma, we see that this is the same as specifying a natural transformation
$$
\Dcat(\Gamma)(\csigma{A}{B},C)\ra{}\Dcat(\Gamma.A)(B,C\{\proj{\Gamma}{A}\}).
$$
(This is one of the two defining natural transformations of the adjunction.)

The elemination rule corresponds by definition to a natural transformation $$\Dcat(\Gamma.A)(B,C\{\proj{\Gamma}{A}\})\times \Dcat(\Gamma)(\Delta,\csigma{A}{B})\ra{}\Dcat(\Gamma)(\Delta,C),$$
 which by naturality in $\Delta$ is equivalent to a natural transformation $$\Dcat(\Gamma.A)(B,C\{\proj{\Gamma}{A}\})\ra{}\Dcat(\Gamma)(\csigma{A}{B},C)$$ (where we have specialised to $\Delta= \csigma{A}{B}$ and have substituted $\id_{\csigma{A}{B}}$ for the second argument). (This is the other defining natural transformation of the adjunction.) The $\beta$- and $\eta$-rules precisely state that both defining natural transformations of the adjunction are inverse. As usual, the Beck-Chevalley condition corresponds to the compatibility of $\cpi{-}{}$-types with substitution.
\item Note that the introduction rule\mccorrect{,} by definition\mccorrect{,} corresponds precisely with the natural transformation from left to right in the categorical semantics. The elimination rule by definition corresponds to a natural transformation $$\Ccat(\Gamma)(1,B\homtype C)\times \Dcat(\Gamma)(\Delta,B)\ra{}\Dcat(\Gamma)(\Delta,C),$$ which by naturality in $\Delta$ is equivalent to a natural transformation $$\Ccat(\Gamma)(1,B\homtype C)\ra{}\Dcat(\Gamma)(B,C)$$ (where we have specialised to $\Delta= B$ and have substituted $\id_B$ for the second argument). The $\beta$- and $\eta$-laws precisely translate to these functions being inverse. Naturality of the bijections corresponds to compatibility of term formers with substitution. \mccorrect{Compatibility of the syntactic type formers with substitution corresponds with stability under change of base in the semantics.}
\end{itemize}
\end{proof}

Let us provide some context for thinking about $\csigma{-}{}$- and $\homtype$-types. As observed by Benton and Wadler \cite{benton1996linear}, linear logic can be seen as the term calculus of stacks for certain commutative effects. The question remained, if more general, possibly non-commutative effects would give rise to a certain kind of generalized, possibly non-commutative linear logic. In particular, the question was if one could define a monoidal-like structure on stacks in a general model of CBPV which generalizes the tensor of linear logic and similarly for the lollipop. A partial positive answer to this was given by the Enriched Effect Calculus (EEC) \cite{egger2009enriching}, telling us that any model of simple CBPV fully and faithfully embeds into a model where we have a binary operation $F(-)\otimes -$ (conventionally, somewhat misleadingly, written $!(-)\otimes -$) which takes a value type and a computation type and produces a computation type and a binary operation $- \homtype -$ (conventionally written $-\multimap -$) which takes two computation types to a value type. Our notation is chosen to be suggestive as these operations do not generalize the plain linear logic operations $\otimes$ and $\multimap$ but rather the composite connectives $F(-)\otimes (-)$ and $U(-\multimap -)$ that one can define in the LNL calculus \cite{benton1995mixed}.

Independently, linear dependent type theory forces a similar operation on us if we wish to extend $-\otimes -$ to a dependent connective \cite{vakar2014syntax}. Because types are only allowed to depend on cartesian assumptions and not linear ones, the best we can do is a multiplicative $\Sigma$-type $\csigma{-}{-}$. Seemingly for two very different reasons, the connective $F(-)\otimes -$ seems to be a preferred over $-\otimes -$, if one wants to generalize. We believe this is not a coincidence as the semantics of simply typed CBPV already forces various notions \mccorrect{from} dependent type theory on us.

In analogy with linear logic, we have the following isomorphisms of types, motivating some of our use of notation.
\begin{theorem}[Type Isomorphisms]
We have type isomorphisms
$$\begin{array}{lll}
 U\cpi{x:A}{\ct{B}} \cong \Pi_{x:A}U\ct{B} \quad\; & FA\homtype \ct{B} \cong U(A\functype\ct{B})\quad\; & F\Sigma_{x:A} A' \cong \csigma{x:A}{FA'}\\
 &  &\csigma{x:1}{\ct{B}}\cong\ct{B}\\
 && \csigma{x:A}{F1}\cong FA.
\end{array}$$
\end{theorem}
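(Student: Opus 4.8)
The statement lists six type isomorphisms in dCBPV- (with the extra connectives). The plan is to establish each one either syntactically, by exhibiting mutually inverse terms and checking the $\beta$- and $\eta$-laws, or semantically, by appealing to the sound and complete categorical semantics from the theorem on the categorical semantics of the extra connectives (together with the semantics of dCBPV- from Theorem~\ref{thm:dcbpv-soundcomplete}). Since we have a 1-1 correspondence between models and theories with mutual soundness and completeness, it suffices to verify each isomorphism in an arbitrary dCBPV- model supporting the relevant connectives; by the Yoneda lemma it is then enough to produce a natural bijection of the appropriate hom-sets, stable under change of base. I would take this semantic route as the default, since the relevant adjunctions are all already in hand, and only fall back on explicit term constructions where they are quick.

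For $U\cpi{x:A}{\ct{B}}\cong \Pi_{x:A}U\ct{B}$: in the semantics $U\cpi{A}{\ct B}$ represents the presheaf $f\mapsto \Dcat(\mathsf{dom}(f))(I, \cpi{A}{\ct B}\{f\})$, and unfolding the adjunction $-\{\proj{\Gamma}{A}\}\dashv \cpi{A}{}$ together with the comprehension for $\Ccat$ gives exactly the universal property of $\Pi_{A}U\ct B$; this is essentially a restatement of the identity $U_{\Gamma}\Pi^{\multimap}_{!B}C\cong \Pi_{U_\Gamma B}U_{\Gamma.B}C$ already proved in Theorem~\ref{thm:inttyp}, specialised to the present setting. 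For $FA\homtype \ct B\cong U(A\functype \ct B)$: the defining bijection of $\homtype$-types reads $\Ccat(\Gamma)(1, FA\homtype \ct B)\cong \Dcat(\Gamma)(FA,\ct B)$, which by $F\dashv U$ equals $\Ccat(\Gamma)(A, U\ct B)\cong \Ccat(\Gamma)(1, \Pi_A (U\ct B)\{\proj{\Gamma}{A}\})\cong \Ccat(\Gamma)(1, U\cpi{A}{\ct B})$, and $\cpi{A}{\ct B}$ with $A$ not genuinely depended upon is precisely $A\functype \ct B$; conclude by Yoneda. For $F\Sigma_{x:A}A'\cong \csigma{x:A}{FA'}$: compute $\Dcat(\Gamma)(F\Sigma_A A',\ct C)\cong \Ccat(\Gamma)(\Sigma_A A', U\ct C)\cong \Ccat(\Gamma.A)(A', (U\ct C)\{\proj{\Gamma}{A}\})\cong \Dcat(\Gamma.A)(FA',\ct C\{\proj{\Gamma}{A}\})\cong \Dcat(\Gamma)(\csigma{A}{FA'},\ct C)$, using strong $\Sigma$-types in $\Ccat$, the adjunction, and the $\csigma{-}{}$-adjunction in turn; Yoneda finishes it. The two degenerate cases $\csigma{x:1}{\ct B}\cong \ct B$ and $\csigma{x:A}{F1}\cong FA$ follow the same pattern: $\csigma{1}{-}$ is left adjoint to $-\{\proj{\Gamma}{1}\}$, and $\Gamma.1\cong\Gamma$ with $\proj{\Gamma}{1}$ an isomorphism, so $\csigma{1}{\ct B}\cong \ct B$; while $\csigma{A}{F1}\cong F\Sigma_{A}1\cong FA$ by the previous isomorphism together with $\Sigma_A 1\cong A$ (the standard fact that $\Sigma_A$ applied to the terminal object is $A$, i.e. $\proj{\Gamma.A}{1}$ is an iso). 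In each case the naturality/stability-under-substitution of the bijections is automatic because every ingredient bijection is natural and the Beck--Chevalley conditions for $\csigma{-}{}$, $\cpi{-}{}$ and $\Sigma$ are assumed.

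I would present the proof as: first isolate the semantic hom-set computations above, one per isomorphism; then invoke Yoneda to upgrade each natural bijection to an isomorphism of representing objects (value types or computation types); then invoke completeness of dCBPV- to transport these semantic isomorphisms back to judgemental type isomorphisms in the syntax, realised by mutually inverse terms. Alternatively, for the reader who prefers syntax, I would note that in each case one can write down the evident pair of terms directly --- e.g. for $FA\homtype \ct B\cong U(A\functype \ct B)$ the maps $\lambda_{z} \thunk \lambda_x (\force x)\textquoteleft \dots$ in one direction and the obvious $\thunk/\force$-wrapped application in the other --- and the required equations are immediate instances of the $\beta\eta$-rules of figures~\ref{fig:vceqs} and \ref{fig:extratypesequations}.

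\textbf{Main obstacle.} The calculations themselves are routine chains of adjunction isomorphisms; the only real care needed is bookkeeping around substitution and weakening --- in particular making sure that the dependency of $A'$ or $\ct B$ on $x:A$ is tracked correctly through the $\{\proj{\Gamma}{A}\}$ reindexings and that the Beck--Chevalley coherences line up --- and, on the syntactic side, writing the inverse terms so that the $\eta$-laws (which for $\homtype$- and $\csigma{-}{}$-types are the genuinely non-trivial direction) actually close the loop. I expect the $F\Sigma_{x:A}A'\cong\csigma{x:A}{FA'}$ case, where three different adjunctions interleave, to require the most careful handling of these reindexings.
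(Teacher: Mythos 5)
Your proposal is correct and follows essentially the same route as the paper, which disposes of all six isomorphisms by appeal to the universal properties (adjunctions) of the connectives in the categorical semantics together with their stability under substitution; you have simply written out the hom-set chains and the Yoneda step that the paper leaves implicit. The individual computations (including the reduction of $\csigma{x:1}{\ct{B}}\cong\ct{B}$ to $\proj{\Gamma}{1}$ being an iso and of $\csigma{x:A}{F1}\cong FA$ to $\Sigma_A 1\cong A$) are all sound.
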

\begin{proof}
These are straightforward consequences of the universal properties in the categorical semantics of the various connectives involved, together with their compatibility with substitution.
\end{proof}

Let us say a few words about the interpretation of these connectives on some concrete classes of models.
\begin{theorem}[Concrete Models] \label{thm:dcbpvmodels} We have the following results on interpreting these connectives in concrete models.
\begin{itemize}
\item An indexed Eilenberg-Moore $\Ccat^T$ category for an indexed monad $T$ on a model $\Bcat^{op}\ra{\Ccat}\Cat$ of pure dependent type theory with $1$-, $\times$-, $0$-, $+$-, $\Sigma$-, $\Id$- and $\Pi$-types gives a model of dCBPV- with $\Pi$-types. If $\Ccat$ has indexed equalisers, we can interpret $\homtype$-types and if $\Ccat^T$ has indexed reflexive coequalisers, then we can interpret $\csigma{-}{}$-types. All these conditions are satisfied for $\Ccat = \Fam(\Set)$ and $T$ any finitary indexed monad like one of the usual reader, writer, state or exceptions  monads.
\item A model for the dependently typed LNL calculus with sum types, in the style of section \ref{sec:deplnl} gives a model of dCBPV- with $\Pi$, $\csigma{-}{}$- and $\homtype$-types. The dependent LNL calculus model of continuous families of predomains and domains is a specific example of this (see section \ref{sec:lindepscott}).
\end{itemize}
\end{theorem}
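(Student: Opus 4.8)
The statement to prove, Theorem \ref{thm:dcbpvmodels}, has two bullets; I would handle them in order, since the first is essentially a dependently-typed reprise of the classical Eilenberg--Moore story and the second is a direct corollary of the semantics for the dependent LNL calculus from Section \ref{sec:deplnl}. For the first bullet, the starting point is Theorem \ref{thm:dcbpv-modelsfrommonads}, which already produces from an indexed monad $T$ on a model $\Ccat$ of pure DTT a dCBPV- model $\Ccat \leftrightarrows \Ccat^T$ supporting $\top$-, $\&$-, $\cpi{-}{}$- and $!$-types, and Theorem \ref{thm:dcbpvmodels}'s additional content is the interpretation of the three new connectives $\Pi$, $\homtype$ and $\csigma{-}{}$ of Section \ref{sec:moreconn}. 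The $\Pi$-types are immediate: by the previous section's characterisation theorem, a dCBPV- model supports $\Pi$-types iff $\Ccat$ does, and $\Ccat$ supports $\Pi$-types by hypothesis, so nothing is required of $T$ at all. The $\homtype$-types require exhibiting, for algebras $k,l \in \Ccat^T(\Gamma)$, an object $k \homtype l$ of $\Ccat(\Gamma)$ with $\Ccat^T(\Gamma)(k,l) \cong \Ccat(\Gamma)(1, k\homtype l)$ stable under change of base; the plan is to copy the construction from the proof of Theorem \ref{thm:commtolinear} verbatim except that we stop at the subobject-of-homomorphisms stage and do not try to equip it with a $T$-algebra structure (which is exactly the point: for non-commutative $T$ we cannot, but we do not need to since $\homtype$-types are \emph{value} types). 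Concretely $k \homtype l$ is the equaliser in $\Ccat(\Gamma)$ of the two maps $Uk \Rightarrow Ul \rightrightarrows TUk \Rightarrow Ul$ expressing the homomorphism law, which exists by the indexed-equalisers hypothesis; compatibility with change of base follows because $\Ccat(f)$ preserves the relevant limits and exponentials (since $\Ccat(f)$ is a cartesian closed functor).

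For the $\csigma{-}{}$-types in the first bullet, the key is the section-preceding characterisation that a dCBPV- model supports $\csigma{-}{}$-types iff $\Dcat = \Ccat^T$ has left adjoints $\csigma{A'}{} \dashv \Ccat^T(\proj{A}{A'})$ to the change-of-base functors along display maps, satisfying the left Beck--Chevalley condition for $\mathbf{p}$-squares. I would construct $\csigma{A'}{l}$, for an algebra $l$ over $\Ccat(\Gamma.A.A')$, exactly as $k \otimes l$ is built in the proof of Theorem \ref{thm:commtolinear}: as a reflexive coequaliser in $\Ccat^T(\Gamma.A)$ of the fork with vertices $F_{\Gamma.A}(\text{--})$ built from the free-algebra presentation of $l$, where the free functor $F$ here is the Eilenberg--Moore free functor; this coequaliser exists by the reflexive-coequalisers hypothesis on $\Ccat^T$. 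The adjointness $\csigma{A'}{} \dashv \Ccat^T(\proj{A}{A'})$ then follows by the usual computation using the universal property of the coequaliser together with the homomorphism laws, and Beck--Chevalley is routine because the change-of-base functors act by the indexed structure and coequalisers are computed pointwise in the fibres (colimits in $\Ccat^T(\Gamma)$ that exist are preserved by $\Ccat^T(f)$ when $\Ccat(f)$ preserves the corresponding colimits in $\Ccat(\Gamma)$ — which holds since $\Ccat(f)$ has a right adjoint by comprehension, or can be checked directly). Finally I would verify the concrete instance $\Ccat = \Fam(\Set)$, $T$ a finitary indexed monad: $\Fam(\Set)$ has all (indexed) limits and colimits computed pointwise in $\Set$, so it has indexed equalisers, and for finitary $T$ the category $\Ccat^T$ of algebras is locally finitely presentable, hence cocomplete, so it certainly has reflexive coequalisers; I would note that the reader, writer, state and exceptions monads are all finitary (presented by finitary algebraic theories) and lift pointwise to indexed monads on $\Fam(\Set)$ by Theorem \ref{thm:dcbpv-modelsfrommonads}'s surrounding discussion.

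For the second bullet, the work is already done: Section \ref{sec:deplnl} gives a sound and complete categorical semantics for the dependent LNL calculus, and the remark immediately following its semantics theorem observes that such a model — once we add indexed finite distributive coproducts in $\Ccat$ to interpret sum types — is precisely a dCBPV- model together with symmetric monoidal closed structure on the fibres of $\Dcat$ (stable under change of base), with $F$ strong symmetric monoidal and with $\csigma{-}{}$-types present. So a dependent LNL model with sum types is in particular a dCBPV- model, it has $\csigma{-}{}$-types by the very definition of a dependent LNL model (it demands $\csigma{-}{}$-types in $\Dcat$ satisfying Beck--Chevalley and Frobenius), it has $\Pi$-types because it demands $\Pi$-types in $\Ccat$, and it has $\homtype$-types because $\Dcat$ is symmetric monoidal closed and $F$ is strong monoidal, so we may take $\ct{B}\homtype\ct{C} := U(\ct{B}\multimap\ct{C})$ and check the required bijection $\Dcat(\Gamma)(\ct{B},\ct{C}) \cong \Dcat(\Gamma)(I,\ct{B}\multimap\ct{C}) \cong \Dcat(\Gamma)(F1,\ct{B}\multimap\ct{C}) \cong \Ccat(\Gamma)(1,U(\ct{B}\multimap\ct{C}))$, using $FI \cong F1$ from strong monoidality and the $F\dashv U$ adjunction, with stability under change of base inherited from that of the monoidal closed structure and of $F\dashv U$. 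The concrete example is then Section \ref{sec:lindepscott}'s model of continuous families of predomains and domains, which is verified there to be a model of the dependent LNL calculus with all the required connectives (including $0,\oplus$-types in $\Dcat$, whence sum types in $\Ccat$ via $U$), so it inherits the dCBPV- interpretation with $\Pi$-, $\csigma{-}{}$- and $\homtype$-types.

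\textbf{Main obstacle.} The only genuinely non-routine point is the construction of $\csigma{-}{}$-types in the Eilenberg--Moore model via reflexive coequalisers and, in particular, checking that these coequalisers are stable under change of base so that the indexed structure and the left Beck--Chevalley condition go through; this is the indexed analogue of the $\otimes$-construction in Theorem \ref{thm:commtolinear} and, while I expect it to work by the same bookkeeping, the preservation of fibrewise reflexive coequalisers by the substitution functors $\Ccat^T(f)$ is the step that needs the most care (it relies on $\Ccat(f)$ preserving the underlying colimits in $\Ccat$, e.g. because finite powers in a ccc preserve reflexive coequalisers, as invoked already after Theorem \ref{thm:commtolinear}). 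Everything else — $\Pi$-types, $\homtype$-types, and the entire second bullet — is a matter of quoting the characterisation theorems of Section \ref{sec:moreconn} and the construction of Theorem \ref{thm:commtolinear}, and of recognising that a dependent LNL model is by fiat already equipped with all the data required.
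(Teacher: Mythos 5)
Your proposal is correct and follows essentially the same route as the paper: $\Pi$-types come for free from $\Ccat$, $\homtype$-types reuse the equaliser of Theorem \ref{thm:commtolinear} while deliberately not equipping it with an algebra structure, $\csigma{-}{}$-types are the Linton-style reflexive coequaliser of the free presentation (with finitary monads preserving such coequalisers for the $\Fam(\Set)$ instance), and the dLNL bullet reduces to setting $\ct{B}\homtype\ct{C}:=U(\ct{B}\multimap\ct{C})$. One small caveat: your claim that the substitution functors preserve the relevant coequalisers because ``$\Ccat(f)$ has a right adjoint by comprehension'' does not hold for general $f$ --- but you rightly single out this Beck--Chevalley check as the delicate step, and the paper itself leaves it implicit.
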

\begin{proof}
\begin{itemize}
\item The interpretation of $\Pi$-types is obvious.

Let us write $F\dashv U$ for the Eilenberg-Moore adjunction inducing $T$. As $T=UF$ is an indexed monad, recall that we have a dependent strength $\Sigma_A UFUk\ra{s_{A,Uk}}UF\Sigma_A Uk$. We note that we have a reflexive fork
\begin{diagram}
F(\Sigma_A Uk) & \rTo^{F\Sigma_{A}\eta_{Uk} } & F(\Sigma_A UFUk) & \pile{\rTo^{F\Sigma_{ A}k}\\\rTo_{F(s_{A,Uk}); \epsilon_{F\Sigma_A Uk}}} & F(\Sigma_A Uk).
\end{diagram}
Now, we can define $\csigma{A}{k}$ as the coequaliser of the reflexive pair. 
Note that a morphism $k\ra{\phi}l$ gives a natural transformation between the coequaliser diagrams for $\csigma{A}{k}$ and $\csigma{A}{l}$, or equivalently, a morphism $\csigma{A}{k}\ra{\csigma{A}{\phi}}\csigma{A}{l}$. This is easily seen to make $\csigma{A}{-}$ into a functor. Let us convey to the reader how we arrived at this definition: noting that $F(\Sigma_{A}A')\cong \csigma{A}{FA'}$ if we can prove that $\csigma{A}{-}\dashv -\{\proj{\Gamma}{A}\}$, we are defining $\csigma{A}k$ above as the coequaliser of
\begin{mccorrection}
\begin{diagram}
 \csigma{A}{FUFUk} & \pile{\rTo^{\csigma{A}{Fk}}\\\rTo_{\csigma{A}{\epsilon_{FUk}}}} & \csigma{A}{FUk},
\end{diagram}
\end{mccorrection}
showing that we are simply computing a $\Bcat$-indexed variation of Linton's construction of $\Set$-indexed coproducts of algebras \cite{linton1969coequalizers}. We now verify that indeed $\csigma{A}{-}\dashv \{\proj{\Gamma}{A}\}$. We can easily\footnote{These bijections can, in order, be motivated by the universal property of the coequaliser defining $\csigma{A}{k}$, the homset bijection of $F\dashv U$, naturality of $\epsilon$ and the observation that $\epsilon_l=l$, the homset bijection of $F\dashv U$, the homset bijection of $\Sigma_A\dashv -\{\proj{\Gamma}{A}\}$, the naturality of $s$ and functoriality of $T$, a triangle identity for $\Sigma_A\dashv -\{\proj{\Gamma}{A}\}$, the naturality of $\snd$, the homset bijection of $\Sigma_A\dashv -\{\proj{\Gamma}{A}\}$, and, finally, the definition of a homorphism from $k$ to $l$.} see that we have natural bijections between the following morphisms 
\\
\\
\resizebox{\linewidth}{!}{
$
\begin{array}{lll}
\phi\in \Ccat(\Gamma)^T(\csigma{A}{k},l)\qquad \;& & \\
\phi'\in \Ccat(\Gamma)^T(F(\Sigma_A Uk),l) &\textnormal{s.t.} & F(\Sigma_A k);\phi' = F(s_{A,Uk});\epsilon_{F\Sigma_A Uk};\phi' \qquad\;\\
\psi\in \Ccat(\Gamma)(\Sigma_A Uk , Ul)  &\textnormal{s.t.} & F(\Sigma_Ak);F(\psi);\epsilon_l = F(s_{A,Uk});\epsilon_{F\Sigma_A Uk};F(\psi);\epsilon_l \\
\psi\in \Ccat(\Gamma)(\Sigma_A Uk , Ul)  &\textnormal{s.t.} & F(\Sigma_Ak;\psi);\epsilon_l = F(s_{A,Uk};T\psi;l);\epsilon_l\\
\psi\in \Ccat(\Gamma)(\Sigma_A Uk , Ul)  &\textnormal{s.t.} & \Sigma_Ak;\psi = s_{A,Uk};T\psi;l \\
\psi'\in \Ccat(\Gamma.A)( Uk , Ul\{\proj{\Gamma}{A}\})  &\textnormal{s.t.} &\Sigma_A k; \Sigma_A\psi';\snd = s_{A,Uk};T(\Sigma_A\psi';\snd);l \\
\psi'\in \Ccat(\Gamma.A)( Uk , Ul\{\proj{\Gamma}{A}\})  &\textnormal{s.t.} &\Sigma_A (k; \psi');\snd = \Sigma_AT\psi';s_{A,Ul\{\proj{\Gamma}{A}\}};T\snd;l \\
\psi'\in \Ccat(\Gamma.A)( Uk , Ul\{\proj{\Gamma}{A}\})  &\textnormal{s.t.} &\Sigma_A (k; \psi');\snd = \Sigma_AT\psi';\snd;l \\
\psi'\in \Ccat(\Gamma.A)( Uk , Ul\{\proj{\Gamma}{A}\})  &\textnormal{s.t.} &\Sigma_A (k; \psi');\snd = \Sigma_A(T(\psi');l\{\proj{\Gamma}{A}\});\snd \\
\psi'\in \Ccat(\Gamma.A)( Uk , Ul\{\proj{\Gamma}{A}\})  &\textnormal{s.t.} & k;\psi' = T(\psi');(l\{\proj{\Gamma}{A}\}) \\
\psi''\in \Ccat(\Gamma.A)^T( k , l\{\proj{\Gamma}{A}\})  & & 
\end{array}
$
}\\
\\
Note that a sufficient condition to have reflexive coequalisers in $\Ccat^T$ is to have them in $\Ccat$ and to have $T$ preserve them. For a cartesian closed category $\Ccat$, a broad class of monads that preserve reflexive coequalisers are those arising from a finitary algebraic theory \cite{johnstone2002sketches} (section D5.3). 

Note that $\homtype$-types can be constructed exactly as in the proof of theorem \ref{thm:commtolinear}. We cannot usually construct an appropriate algebra structure on $k\homtype l$ (unless $T$ is a commutative monad).

\item We interpret $B\homtype C$ as $ U(B\multimap C)$. The rest should be obvious.
\end{itemize}
\end{proof}
As an example, consider the writing monad $-\times M$ on $\Set$ (which, as we have seen, does not admit dependent Kleisli extensions). We can note that its Eilenberg-Moore category is equivalent to the indexed category $\Fam(\Set^M)$ of families of $M$-modules (also known as $M$-sets or sets with an $M$-action). Being a presheaf category (if we consider $M$ as a one-object category), this is a topos and, in particular, we can construct $\csigma{-}{}$-types. If we calculate the coequaliser above (as colimits are computed pointwise), we find that $\csigma{A}{k}(s)$ has as carrier the quotient of $(\Sigma_{A}Uk)(s)\times M$ by the relation $(a,b,m\cdot m')\sim (a, b\cdot m,m')$ and the algebra structure that the quotient induces starting from the free one. Similarly, we have a (non-symmetric) indexed premonoidal  structure  $\otimes$ on $\Fam(\Set^M)$, where the carrier of $(k\otimes l)(s)$ is obtained by the quotient of $(Uk \times Ul)(s)\times M$ by the transitive closure of $(a,b,m\cdot m'\cdot m'')\sim (a\cdot m,b\cdot m', m'')$ and the algebra structure is obtained from the free one under the quotient. We can easily compute that $k\homtype l$ consists of the set of equivariant functions from $k$ to $l$. Note that $k\homtype l$ does not admit an appropriate algebra structure by \cite{foltz1980algebraic}\mccorrect{.} In this example, obviously, we have $\Pi$-types as usual sets of dependent functions.

Next, we consider the operational behaviour of terms of these new types. It turns out to be entirely well-behaved.
\begin{theorem}[Determinism, Strong Normalization, Subject Reduction] $\Pi$-, $\csigma{-}{}$- and $\homtype$-types do not alter any of the determinism, strong normalization or subject reduction results for dCBPV- of theorem \ref{thm:subjreddcbpv-}.
\end{theorem}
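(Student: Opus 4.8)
The plan is to extend the three standard metatheoretic results for dCBPV- --- determinism, strong normalization, and subject reduction (theorem \ref{thm:subjreddcbpv-}) --- to the configurations that can arise once we add $\Pi$-, $\csigma{-}{}$- and $\homtype$-types, using the operational semantics of figure \ref{fig:ckextraconn}. The key observation that makes all three results go through with minimal effort is the same one that was exploited in the proof of theorem \ref{thm:subjreddcbpv-}: the transitions for the new connectives are all instances of $\beta$-reductions (which are contained in judgemental equality) together with value-normalization steps, and crucially \emph{types still depend only on values}, which are statically normalizing. In particular, the transitions for $\Pi$-types are literally part of the parallel nested $\beta$-normalization of values, so they are absorbed by the existing machinery for values without any new argument.

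First I would observe that the new transitions are defined on untyped terms exactly as the old ones, so the determinism argument is unchanged: one inspects the new left-hand sides ($\toin{M}{\return x \otimes {\nil}}{L}$, $\return \nnf{V}\otimes M$, $\return \nf{V}\otimes M$ against the stack shape, $M\textquoteleft \nnf{V}$, $M\textquoteleft \lambda_{\nil} L$) and checks that at most one transition applies and that no transition applies exactly at a terminal configuration. For strong normalization, I would reuse the logical-relations / Tait-style argument for values (which gives strong normalization of the parallel nested $\beta$-reductions for values, c.f.\ the discussion before section \ref{sec:compl} and \cite{martin1998intuitionistic,tait1967intensional}), and note that the $\csigma{-}{}$- and $\homtype$-transitions decrease the same well-founded measure used for $\toin{-}{-}{-}$ and application in dCBPV-: they are directed $\beta$-steps that strictly reduce the size of the computation, interleaved with value-normalization which already terminates. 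In absence of divergence, recursion and erratic choice this gives termination in finitely many steps, and in absence of erratic choice it gives at most one transition per configuration; these are exactly the hypotheses inherited from theorem \ref{thm:subjreddcbpv-}.

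The part that requires genuine (if still routine) care is subject reduction for the $\csigma{-}{}$-transitions, because, just as for $\toin{-}{-}{-}$ in dCBPV-, the intermediate configuration $M, \toin{[\cdot]}{\return x\otimes {\nil}}{L}::K$ passes through a stack $\toin{[\cdot]}{\return x\otimes {\nil}}{L}::K$ that is not by itself typable when $L$'s ambient type mentions $x$. The argument mirrors the ``two wrongs make a right'' reasoning in the proof of theorem \ref{thm:subjreddcbpv-}: whenever evaluation enters such a pushed stack it must eventually leave it via the complementary transition $\return \nf{V}\otimes M, \toin{[\cdot]}{\return x\otimes {\nil}}{L}::K \leadsto L[\nf{V}/x, M/{\nil}], K$, and between these two points only equational steps occur (for dCBPV- the admissible effects are exactly recursion, divergence and errors, all of whose transitions are equalities, together with $\beta$-reductions), so $\return V\otimes M$ equals the original $\csigma{-}{}$-term being scrutinized and hence the types match after substitution. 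One then invokes inversion on the typing rule for $\return V\otimes K$ (figure \ref{fig:extratypesrules}) to see that $L[V/x, M/{\nil}]$ has the required type. The $\homtype$-transitions and $\Pi$-transitions pose no such difficulty: they are plain $\beta$-reductions (respectively value-normalizations), so they preserve types directly by the substitution lemma and by closure of value types under $\beta$ (as in the proof of theorem \ref{thm:subjreddcbpv-}).

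\textbf{Main obstacle.} The only real subtlety --- and the step I expect to take the most care --- is handling the temporarily ill-typed pushed stack $\toin{[\cdot]}{\return x\otimes {\nil}}{L}::K$ for $\csigma{-}{}$, i.e.\ verifying that the enter/exit pair of transitions always brackets correctly and that nothing in between changes the scrutinee's value up to judgemental equality; this is precisely the place where it matters that dCBPV- does not include printing, global state or erratic choice (whose transitions are genuinely dynamic rather than equational), and it is the analogue of the delicate point already isolated in the proof of theorem \ref{thm:subjreddcbpv-}. Everything else is a direct extension of the existing case analysis.
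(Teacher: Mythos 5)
Your treatment of determinism, strong normalization, the $\Pi$-transitions and the $\homtype$-transitions matches the paper's proof. However, your handling of subject reduction for the $\csigma{-}{}$-transitions rests on a misreading of where the difficulty lies, and the route you propose would not establish the theorem at its stated strength. You claim the pushed stack $\toin{[\cdot]}{\return x\otimes {\nil}}{L}::K$ is "not by itself typable when $L$'s ambient type mentions $x$" and that one must therefore run the "two wrongs make a right" bracketing argument, which in turn forces you to assume that only equational transitions occur in between — i.e.\ to exclude printing, global state and erratic choice. But that bracketing argument belongs to theorem \ref{thm:subjreddcbpv+} (dCBPV+ with dependent Kleisli extensions), not to theorem \ref{thm:subjreddcbpv-}: in dCBPV- the subject reduction result is unconditional, holding in the presence of \emph{all} the effects of figure \ref{fig:effects}, and the present theorem asserts that $\csigma{-}{}$- and $\homtype$-types preserve exactly that. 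Under your route, subject reduction for $\csigma{-}{}$ would only be recovered after excluding printing, state and erratic choice, which is strictly weaker than what is claimed.

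The reason no such detour is needed is visible in the elimination rule of figure \ref{fig:extratypesrules}: it carries the side condition $\Gamma\vdash \ct{C}\ctype$, so the result type of $\toin{L}{\return x\otimes {\nil}}{K}$ is formed in $\Gamma$ alone and cannot mention $x$ or ${\nil}$. Consequently the pushed stack $\lbi{\nil_1}{\toin{\nil_2}{\return x\otimes{\nil_3}}{L}}{K}$ \emph{is} genuinely typable, and the paper's proof is a direct one: inversion on the introduction rule gives $\Gamma\vdash \nf{V}:A$ and $\Gamma;\cdot\vdash M:\ct{B}[\nf{V}/x]$, inversion on the elimination rule gives $\Gamma,x:A;{\nil}:\ct{B}\vdash L:\ct{D}$ with $\Gamma\vdash\ct{D}\ctype$, and the substitution lemma immediately yields $\Gamma;\cdot\vdash L[\nf{V}/x,M/{\nil}]:\ct{D}$ — no appeal to the scrutinee remaining judgementally equal to its original form, and hence no restriction on effects. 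The $\homtype$ case is handled the same way. You should replace the bracketing argument with this direct inversion-plus-substitution argument; the scenario you flag as the "main obstacle" simply does not arise for these connectives.
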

\begin{proof}
For $\Pi$-types, we note that we can still rely on the subject reduction and strong normalization proofs for $\beta$-reductions in pure dependent type theory of \cite{martin1998intuitionistic}.

Determinism and strong normalization of reductions for $\csigma{-}{}$- and $\homtype$-types is no different than for the other type formers. We verify subject reduction. In both cases it is clear from the subject reduction results for pure type theory that the transitions involving normalization of values satisfy subject reduction, so we focus on the remaining two transitions.

Let us assume that we start with a well-typed configuration $\Gamma;\cdot \vdash\return \nf{V}\otimes M:\csigma{A}{\ct{B}}\quad , \quad \Gamma;{\nil}:\csigma{A}{\ct{B}}\vdash \toin{[\cdot]}{\return x\otimes{\nil}}{L :: K}:\ct{C}$. Then, on the one hand, by inversion on the introduction rule for $\csigma{-}{}$-types, we have that $\Gamma\vdash \nf{V}:A$ and $\Gamma;\cdot\vdash M:\ct{B}[\nf{V}/x]$ (where $\Gamma,x:A\vdash \ct{B}\ctype$). On the other hand, noting that  $\toin{[\cdot]}{\return x\otimes{\nil}}{L :: K}$ is an abbreviation for $\lbi{\nil_1}{\toin{\nil_2}{\return x\otimes{\nil_3}}{L}}{K}$ by inversion on the rules for $\lbi{\nil_1}{}{}$ and the elimination rule for $\csigma{-}{}$-types, we have that $\Gamma,x:A;{\nil}:\ct{B}\vdash L:\ct{D}$ for some $\Gamma\vdash \ct{D}\ctype$ and that $
\Gamma;{\nil}:\ct{D}\vdash K:\ct{C}$. Therefore, because of the \mccorrect{substitution} property, we have that $\Gamma;\cdot\vdash L[V/x,M/{\nil}]:\ct{D}$. Hence, $L[V/x,M/{\nil}], K$ is a well-typed configuration.

Let us assume that we start with a well-typed configuration $\Gamma;\cdot \vdash M\textquoteleft \lambda_{\nil}L :\ct{B}\quad ,\quad \Gamma;{\nil}:\ct{B}\vdash K:\ct{C}$. Then, inversion on the elimination and introduction rules for $\homtype$-types gives us that $\Gamma;\cdot  M :\ct{D}$ and that $\Gamma;{\nil}:\ct{D}\vdash L:\ct{B}$ for some $\Gamma\vdash \ct{D}\ctype$. Therefore, the substitution property gives us that $\Gamma;\cdot \vdash L[M/{\nil}] :\ct{B}$, which means that $L[M/{\nil}],K$ is a well-typed configuration. 
\end{proof}

\begin{remark}[$\Id_{F(-)}^\otimes$-types] We could have included rules for $\Id^\otimes$-types, similarly to chapter \ref{ch:4}, connectives such that $F\Id_A \cong \Id_{FA}^\otimes$. While such connectives seem interesting from the point of view of linear logic, their use in CBPV is less clear. We are really interested in pure proofs of equality, rather than effectful ones (as, for instance, divergence can trivially inhabit any type $\Id_{FA}^\otimes$), so the use of $\Id_{F(-)}^\otimes$-types from the point of view of program verification is unclear. 
\end{remark}

\begin{remark}[Universes] \label{rmk:universes}
We have so far not considered higher-order quantification, which can be expressed in dependent type theory through universes, or types whose terms are (codes for) types. Universes (\`a la Tarski) arise as a special case of induction-recursion, a generalisation of more traditional, weaker induction schemes \cite{dybjer2000general}. As a rule of thumb, inductive-recursive families are more like an inductive than a coinductive construction, hence one would expect them to arise most naturally as value types. In the particular case of universes, we would expect separate universes $\mathcal{U}_v$ and $\mathcal{U}_c$ (both of which are value types) to classify value and computation types respectively. Like in pure type theory, one could include rules like\vspace{7pt}
\\
\AxiomC{$\vdash\Gamma\ctxt$}
\UnaryInfC{$\Gamma\vdash 1:\mathcal{U}_v$}
\DisplayProof\vspace{7pt}
\\
to build values of the universes which code for types and rules like\vspace{7pt}
\\
\AxiomC{}
\UnaryInfC{$x:\mathcal{U}_v\vdash \mathsf{El}_v(x)\vtype$}
\DisplayProof\vspace{7pt}
\\
\nopagebreak to make types out of codes.

Very recently, \cite{pedrot:hal-01441829} has further pursued a system resembling dCBPV- extended with universes.
\end{remark}

\section{Comparison with HTT}\label{sec:httcomparison}
We make a few observations on the relationship between dCBPV and an existing successful framework for certified effectful programming which is also based on dependent type theory: Hoare Type Theory (HTT) \cite{nanevski2006polymorphism} (implemented in Ynot \cite{nanevski2008ynot}).

Regarding the motivation behind both systems, HTT seems to have been developed from the start with the practical syntactic goal in mind of a language for verifying effectful programs. By contrast, dCBPV arose almost entirely from semantic considerations. In particular, dCBPV was motivated by the study of models of DTT which naturally model effects, like its domain semantics \cite{palmgren1990domain} and game semantics \cite{abramsky2015games}, the question if dDILL \cite{vakar2015syntax} could be interpreted as a DTT with commutative effects and the existing categorical semantics of CBPV which strongly suggests a dependently typed generalization \cite{levy2005adjunction}.

Regarding their implementation, HTT expresses a property $\phi$ of an  effectful program $V$ of type $A$ by saying that $V$ inhabits a type $T_\phi A$, where $T_\phi$ are monads which are indexed by formulae $\phi$ formed using an (external) separation logic. dCBPV sticks closer to the Curry-Howard correspondence in its formulation of properties $\phi$ of an effectful program $M$ of type $FA$: they are types $\phi$ depending on thunks of type $UFA$ and into which we can, in particular, substitute $\thunk M$ to see if we can construct an inhabitant witnessing the truth of $\phi(\thunk M)$.

\begin{savequote}[8cm]
I may not have gone where I intended to go, but I think I have ended up where I intended to be.
  \qauthor{--- Douglas Adams}
\end{savequote}

\chapter{\label{ch:6}Conclusions and Future Work}

\section{Conclusions}
In this thesis, we have examined the relationship between programming languages and formal logic, specifically the combination of computational effects with dependent types. We did this by analysing dependent types from three separate but related points of view on effects:
\begin{enumerate}
\item linear logic, which, as we argued, represents a type system for  computations exhibiting commutative effects;
\item game semantics, a setting to provide, in a unified way, models with strong completeness properties for a wide range of effectful type theories;
\item CBPV, an elegant framework for representing type theories with a wide range of effects with a fine-grained evaluation strategy that encompasses both traditional CBN and CBV.
\end{enumerate}
We believe these three perspectives often complement and sometimes reinforce each other.

Firstly, we constructed a dependently typed version of DILL, and showed it admits an elegant categorical semantics as well as a wide range of concrete models. Secondly, We constructed a CBN game semantics for dependent type theory, validated it by showing that it exhibits the usual completeness properties one expects of a game semantics and showed that it can be generalised to model effectful dependent type theory by relaxing the conditions on strategies. Thirdly, we studied a dependently typed version of CBPV and showed it has a simple categorical semantics, admits classes of models arising from both linear dependent type theory and indexed monads and that it has a well-behaved operational semantics. 

We learned that one principal source of the tension between type dependency and effects is the phenomenon that effectful computations are dynamic (in the sense that their evaluation can break equality) while we use types to provide static guarantees about programs. If types depend on effectful computations, therefore, they are at risk of losing their static nature.

Working in an adjunction language like CBPV (or the LNL calculus, for commutative effects) which distinguishes between (dynamic) computations and their thunks (which are static values), we can use types depending on thunks of effectful programs to express complex properties that we might want to verify for effectful programs.  Analogously, in linear logic, while types depending on linear terms are problematic, (linear) types depending on  cartesian terms are entirely harmless.

If we impose this restriction, one consequence is that we do not have CBV and CBN translations of type theory with unrestricted effects into CBPV (or dependently typed Girard translations, in the case of linear logic). Our view is that this is not at all a problem.  Indeed, dependently typed languages with unrestricted effects are of limited value, anyway, as effects render the language inconsistent as a logic, while the prime purpose of dependent types is to prove properties about programs. Moreover, the substitutio\mccorrect{n} of effectful computations in types introduces various technical challenges, as witnessed by our effectful game semantics for dependent type theory. One effect that could possibly be interesting to include in a dependent type theory in unrestricted fashion is that of non-local control operators because of its close relation to the classical principle of double negation elimination. However, it is already known that a constructive classical dependent type theory (i.e. a dependent type theory with $\mathsf{call/cc}$ at each type) is degenerate in the sense that it equates all programs \cite{herbelin2005degeneracy} \mccorrect{(propositionally)}. Therefore, any language that combines dependent types and effects in a meaningful way needs to have a mechanism for controlling the occurrence of effects. We hope to have demonstrated that modalities on the type system are an excellent tool for this purpose, in particular half-modalities (or adjunctions).

If one wants to obtain full CBV and CBN translations for dependent type theory with unrestricted effects, we showed that one needs to include Kleisli extensions for dependent functions in dCBPV. We have seen that these are not always supported in concrete models and can lead to problems with subject reduction in the operational semantics\footnote{We would like to point out that  dependent types are indeed combined with certain unrestricted effects, like recursion, in practice: Agda and Idris support unrestricted recursion and perform a separate optional termination check. It is no coincidence that recursion is precisely one of the effects for which dependent Kleisli extensions are well-defined. Indeed, its computations are not really dynamic in the sense that their evaluation respects equality. Therefore, they are much easier to combine with type dependency.}. Especially given that a similar effect can be achieved with the entirely unproblematic $\csigma{-}{}$-types, we believe such dependent Kleisli extensions are not a desirable feature of a dependently typed effectful language. Similar technical challenges arise with dependent projection products (or their linear logic equivalent, additive $\Sigma$-types). Indeed, the fact that these connectives were not naturally supported in categories of games and strategies is one of the prime reasons that a game semantics for dependent type theory had so far been absent and, more generally, that models of dependent type theory in computational settings of categories of cofree $!$-coalgebras had been missing.

Summarising, dependent types and computational effects form a delicate though not impossible combination. We hope to have demonstrated that robust systems can be achieved, as long as one is prepared to restrict type dependency to static values and exclude dependency on dynamic computations.

\section{Future Work}
We describe some interesting directions for future research, suggested by the work presented in this thesis.

\subsection{Linear Dependent Functions}
McBride \cite{mcbride2016got} presented a type system in which linear types depend on linear assumptions and with a type $\Pi_{x:A}^\multimap B$ of dependent functions from $A$ to $B$ that use $x$ exactly once (and in which types are allowed to refer to identifiers arbitrarily often). His solution relies on an unorthodox new view on linear logic in which we do not have separate classes of cartesian and linear types, but only one kind of type, achieving the linearity through annotation of identifier declarations with a count. We would like to analyse, using semantic methods, how this system relates to our work.

\subsection{Stable Homotopy as Effectful Homotopy?}
An indexed category of spectra up to homotopy, indexed over topological spaces, has been studied in e.g. \cite{may2006parametrized,ponto2012duality}, as a setting for stable homotopy theory. We can interpret this as a model of the dLNL calculus. It has been shown to admit $I$-, $\otimes$-, $\multimap$-, and $\Sigma_F^\otimes$-types. The natural candidate for a comprehension adjunction, here, is that between the infinite suspension spectrum and the infinite loop space: $F \dashv U \;\; = \;\; \Sigma^\infty\dashv \Omega^\infty$. What is particularly fascinating is that the corresponding monad $T= \Omega^\infty\Sigma^\infty$ seems to be very closely related (if not identical) to an important homotopical construction known as the Goodwillie exponential \cite{arone1995goodwillie}. This raises the  question whether one could phrase stable homotopy theory as an (commutative) effectful version of homotopy type theory and, if so, what the computational interpretation of the Goodwillie calculus in terms of effects should be.

\subsection{Dependently Typed Quantum Programming?}
Another fascinating possibility is that of models related to quantum mechanics. Non-dependent linear type theory has found very interesting interpretations in quantum computation (see e.g. \cite{AbrDun:CQLv2:2004}). The question rises if the extension to dependent linear types has a natural counterpart in physics. In \cite{schreiber2014quantization}, it was recently sketched how linear dependent types can serve as a language to talk about quantum field theory and quantisation. On a related note, one could well imagine using an extension of the dLNL calculus as a type system for a language in which we both have (cartesian) types for classical data and (linear) types for quantum data which may depend on the former. Such a precise type system may be useful for catching  bugs in quantum programs.

\subsection{Extending CBN Game Semantics for Dependent Types}
We see a few interesting directions for continuing the work we started in chapter \ref{ch:5}. One obvious continuation would be to try to extend the (full and faithful) completeness proof to the complete type hierarchy of $\DTTGame$. A next step is to study the interpretation of more general inductive families \cite{dybjer1994inductive,clairambault2009least} and inductive-recursive definitions (of which type universes are an obviously interesting example) \cite{dybjer2000general}. Such a study of universes should also lead to a more intensional notion of a dependent game as a kind of strategy.

On a different note, it would be desirable to find an alternative, less technical presentation of a suitable category of $!$-coalgebras extending $\Ctxt(\Gamecat_!)$ which also models dependent type theory (cf. section \ref{sec:depgirard}).

Finally, note that our games model of dependent types has identity types that are intensional in orthogonal ways compared to the homotopy semantics \cite{awodey2009homotopy}. In our case, all non-trivial propositional identities concern a kind of homotopy in the time direction (applicative equivalence of functions), rather than in the space direction, in the sense that our ground types are discrete  and we  accumulate non-trivial propositional identities if we ascend the function hierarchy. By contrast, in the homotopy semantics of dependent types, all non-trivial propositional identities exist on ground types and we do not acquire any non-trivial identifications of functions (beyond their pointwise identity). We propose to pursue a notion of what one might call a category of \emph{homotopy games}, which should factor over the pullback of the \emph{spatial and temporal extensional collapse} of the two models,\\
\\
\resizebox{\linewidth}{!}{\mbox{
\begin{diagram}
\mathsf{HtpyGame}&  &  &\\
& \;\;\infty-\mathsf{Gpd}\times_{\Set}\mathsf{Game}\SEpbk \rdOnto(3,1) \rdOnto(1,3) \rdDotsonto(1,1) & \rOnto & \mathsf{Game}\hspace{-20pt}\; &A\\
&  \dOnto & & \dOnto^{\parbox{150pt}{\textnormal{collapsing time-like identity,}\\ \textnormal{a.k.a. extensional collapse}}} & \hspace{-30pt}\dMapsto \\
& \infty-\mathsf{Gpd} &\rOnto_{\textnormal{\hspace{45pt} collapsing space-like identity, a.k.a. $0$-truncation}\hspace{-25pt}} & \Set \hspace{-20pt}\;& \hspace{-20pt}\str(A)/\mathsf{app.equiv.}\\
& X & \rMapsto & ||X||_0 .&
\end{diagram}}\hspace{33pt}\;}\\
\\
That is, we are looking for a setting to model DTT which combines the possibility of non-trivial propositional identity on ground types of the ($\infty$-)groupoid model of DTT\mccorrect{ }\cite{awodey2009homotopy} with the failure of function extensionality of the game semantics. We hope this would not only result in a satisfactory game semantics for quotient types and higher inductive types, but would also give deeper insights into the subtle shades of intensionality that arise in dependent type theory, by cleanly separating out the time-like and space-like aspects of propositional identity.   

\subsection{Game Semantics for dCBPV}
The practical challenge of constructing a CBN game semantics for dependent type theory was important for us in developing our understanding of the interaction between effects and type dependency. Moreover, we hope that it can be a useful addition to the large family of game semantics for various logics and programming languages.

However, in hindsight, we believe that dependently typed languages with unrestricted effects such as those modelled by our CBN game semantics are not the most interesting combination of dependent types and effects (though also not uninteresting). What would be really interesting is to construct a game semantics for dCBPV in the style of \cite{levy2005adjunction,levy2012call}. We hope it could both be simpler are more practically relevant. It should be emphasized though that the experience of working with the CBN game semantics of dependent types was necessary for us to reach this insight.

\subsection{Certified Real-World Programming in dCBPV-}
Cervesato and Pfenning pioneered the use of systems combining dependent types and linearity to reason about effectful computations in \cite{cervesato1996linear}. We hope that our system dCBPV- can be a step forward for this purpose, through its generalisation to non-commutative effects and the extra expressive power obtained by distinguishing between values and computations, where the value judgement can be seen as a pure logic that be used for reasoning about (thunks of) effectful computations, defined with the computation judgement. It is particularly salient that this distinction allows us to use $\Id$-types, which were painfully absent from Cervesato and Pfenning's system. 

In particular, we hope that dCBPV- can serve as an alternative to Hoare Type Theory that sticks closer to the elegance of the Curry-Howard correspondence and that it can be extended to a practical language for both writing and verifying real world effectful code. For that purpose, the next step is to add mechanisms for forming types that express delicate properties of thunks of computations exhibiting specific effects, like properties of the state before and after a computation is run. \cite{ahman2017handling}\mccorrect{ recently proposed using effect handlers, which have a semantics as monad algebras $TA\ra{k}A$, as a way of lifting predicates on $A$ to ones on $TA$. We believe this idea sounds very promising and deserves to be explored further.} Another important step would be the implementation of a type checker for the resulting system.


\startappendices
\begin{savequote}[8cm]
Huh?!
  \qauthor{--- Sylvester Stallone}
\end{savequote}

\chapter{Summary for a General Audience}
Over the past decades, we --  both as individuals and as a society as a whole -- have very rapidly become reliant on computer systems, to the point that we trust them with our private data (e.g. mobile phone communications and medical records), our critical resources (e.g.  bank transactions), the smooth running of society (e.g. elections,  financial markets, and classified government documents) and even our lives (e.g. auto-pilots in air planes, self-driving cars, medical devices and missile detection systems on which decisions whether or not to engage in nuclear war are based). While many people will agree that computers have changed our lives for the better, there have been enough incidents to make us question how much trust we should put into computer systems for critical applications: e.g.\footnote{Links to news stories on many fascinating software bugs can be found on \cite{huckle2015collection}.} false alarms in both US (in 1980) and Soviet (in 1983) missile detection systems which could have easily led to nuclear war, Therac-25 medical radiation therapy devices administering deadly doses of radiation, the destruction of the Ariane 5 rocket (costing \$370 million dollar) and regularly uncovered cryptography bugs like Heartbleed which enable hackers to get into critical computer systems (like those of the Democratic National Convention, in the context of the 2016 US election). Moreover, software bugs are unbelievably expensive, annually costing an estimated \$312 billion\footnote{To give a sense of scale, according to a 2015 U.N. report, it would cost around \$267 billion  annually to bring the roughly 800 million people world-wide living in  extreme poverty  up to the World Bank's poverty line immediately \cite{unendpoverty}.}, with software developers spending on average half their time debugging \cite{britton2013reversible}.

These bugs are often not introduced into software due to negligence on behalf of the programmer. Rather, we learn from experience that bugs are almost unavoidable when writing software. The human mind is simply rather ill-suited for writing watertight computer software. Many bugs can be found through testing, but, depending on the application, that may not be enough: real world software, particularly concurrent software, can have a space of possible executions that is simply too large to explore through testing. For instance, new critical bugs are found in web-protocols every week despite extensive testing. For the most critical of applications, only a formal machine-checked proof is enough to guarantee the correctness of a piece of code.

Currently, while very suitable programming languages for writing computer software exist as well as good logical frameworks for writing proofs (formal arguments demonstrating the truth of a proposition), there is no single satisfactory system in which we can both write production code and write and check a proof about the correctness of this code. We believe an important reason for the absence of such a system is a lack of fundamental understanding of how real-world programming languages relate to logical frameworks. The aim of this thesis is to improve on the state of the art of such an understanding and, as a consequence, to work towards the dream of having a single elegant language for writing provably correct production code.  

It is clear that programs written in very simple programming languages, so-called purely functional languages, are effectively the same thing as mathematical proofs. This idea is called the Curry-Howard correspondence. Similar to how we can organise proofs by the proposition (e.g. $A\txt{and} B\txt{implies} C\txt{or}  D$) whose validity they demonstrate, we can classify computer programs according to their so-called type (e.g. the type of programs that take two inputs, one of type $A$, one of type $B$ and produce an output which will either be of type $C$ or type $D$). That is, under the Curry-Howard correspondence, types correspond to propositions in the same way that purely functional programs correspond to proofs. Types can be thought of as expressing properties of programs. Some examples of types are
the type of booleans,
the type of integers,
the type whose elements consist of a pair of a boolean and a string,
the type whose elements are either a boolean or an integer,
the type of programs that take two integers as input and produce five booleans and
the type of programs that take a program from booleans to integers as input and produce an integer as output. 

Real world programming languages and useful logical frameworks are not the same thing, however! On the one hand, there are more good computer programs than acceptable proofs. For instance (this is just one of many examples), a computer program can loop indefinitely (like an operating system), but a circular argument is unacceptable as a proof.  These extra programs are very useful in practical software development. On the other hand, the most useful logics include more propositions than there are types in real world programming languages. Indeed, while these programming languages include so-called simple types, which correspond to propositions formed by the logical connectives ``and'', ``or'' and ``implies'', dependent types are missing, corresponding to the crucial propositions of the form ``$P(x)$ holds for all $x$'' or ``$P(x)$ holds for some $x$''. (An example would be the statement ``all Buddhists are happy''.)  

This thesis examines how this gap between programming languages and logical frameworks can be bridged and how a single language can be designed that can serve for writing both real world code as well as formal, machine-checked proofs that this code has the properties that one desires. Before tackling, head on, the question of how effectful programs (programs that do not correspond to proofs) can be given dependent types, we first study the closely related topics of how so-called linear logic can be extended with dependent types and how a game theoretic interpretation can be given to logical frameworks with dependent types.

Linear logic is a logic in which we keep track of how often each assumption is used in a proof: assumptions cannot be copied or discarded freely. Linear logic proofs are closely related to effectful programs. The intuition is that effectful programs could, for instance, make a random choice, meaning that two executing copies of the same program may later cease to be equal.

A formal logic can be equivalently phrased in terms of game theory by interpreting a proposition as a turn-based two-player game (think of it as the game of formal debates about the proposition, analogous to Socratic dialogues) and a proof of that proposition as a certain kind of winning strategy for that game (if we can win any debate about a proposition, it must be true and vice versa).  The charm of game semantics is that we can weaken the conditions we put on the strategies we consider to obtain various effectful programs. For instance, partial strategies (in which we do not always have a response to everything our opponent says in a debate, meaning that we do not always win) correspond to programs which may loop for ever and never return an output.

This thesis first presents a dependently typed linear logic and game theoretic interpretation for dependent types. This helps us build an understanding that is useful to, next, present an elegant language which can both serve as an effectful programming language for writing software and as a pure logic to prove properties about the software we write in it. We hope that this work, on the one hand, contributes to a better\mccorrect{ }understanding of the foundations of the disciplines of mathematics and computer science and their relationship and, on the other, ultimately will help us to work towards a world in which one can safely rely on critical computer systems, both as individuals and as a society.


\begin{savequote}[8cm]
\end{savequote}

\setlength{\baselineskip}{0pt} 

{\renewcommand*\MakeUppercase[1]{#1}%
\printbibliography[heading=bibintoc,title={\bibtitle}]}

\end{document}